\documentclass[11pt,reqno]{amsart}

\usepackage[margin=1.0in]{geometry}
\usepackage{amsmath}
\usepackage{amsthm, amsfonts,amssymb,euscript}
\usepackage{mathtools}
\usepackage{enumitem}
\usepackage{bm}
\usepackage{amssymb}
\usepackage{graphicx}
\usepackage{caption}
\usepackage{subcaption}
\usepackage{amsfonts}
\usepackage{float}
\usepackage{color}
\usepackage{slashed}
\usepackage[affil-it]{authblk}
\usepackage{mathrsfs}
\usepackage{upgreek}
\usepackage{pifont}
\usepackage{bbm}
\usepackage[colorlinks=true]{hyperref}

\usepackage[usenames,dvipsnames,svgnames,table]{xcolor}
%\pagecolor[rgb]{.93,.91,.84}
\hypersetup{linkcolor=BrickRed, urlcolor=green, citecolor=blue, linktoc=page}

\numberwithin{equation}{section}

\newtheorem*{proposition*}{Proposition}
\newtheorem*{theorem*}{Theorem}
\newtheorem*{conjecture*}{Conjecture}
\newtheorem*{claim*}{Claim}
\newtheorem*{lemma*}{Lemma}
\newtheorem*{corollary*}{Corollary}
\newtheorem{theorem}{Theorem}[section]
\newtheorem{proposition}[theorem]{Proposition}

\newtheorem{lemma}[theorem]{Lemma}
\newtheorem{corollary}[theorem]{Corollary}

\newtheorem*{definition*}{Definition}
\newtheorem{definition}{Definition}[section]
\newtheorem*{assumption*}{Assumption}

\newtheorem*{remark*}{Remark}
\newtheorem{remark}{Remark}[section]

 % "letter-numbered" theorems

\hyphenation{space-time}
\hyphenation{Christo-doulou}
\hyphenation{Schwarz-schild}

\newcommand{\la}{\langle}
\newcommand{\ra}{\rangle}
\newcommand{\R}{\mathbb{R}}
\newcommand{\s}{\mathbb{S}}
\newcommand{\C}{\mathbb{C}}

\newcommand{\Z}{\mathbb{Z}}
\newcommand{\N}{\mathbb{N}}

\newcommand{\lf}{\mathfrak{l}}

\DeclareMathOperator{\supp}{\textnormal{supp}}
\DeclareMathOperator{\re}{\textnormal{Re}}
\DeclareMathOperator{\im}{\textnormal{Im}}
\newcommand{\snabla}{\slashed{\nabla}}

\newcommand{\hpsi}{\widehat{\psi}}
\newcommand{\hphi}{\widehat{\phi}}

\newcommand{\ssubset}{\subset\joinrel\subset}

\setcounter{tocdepth}{1}
\allowdisplaybreaks

\usepackage[foot]{amsaddr}

\begin{document}

\title{Quasinormal modes on Kerr spacetimes}
\author{Dejan Gajic$^1$}
\author{Claude M. Warnick$^2$}
\address{$^1$\small Institut f\"ur Theoretische Physik, Universit\"at Leipzig, Br\"uderstrasse 16, 04103 Leipzig, Deutschland}
\address{$^2$\small Department of Applied Mathematics and Theoretical Physics, University of Cambridge, Wilberforce Road, Cambridge CB3 0WA, United Kingdom}
\email{$^1$dejan.gajic@uni-leipzig.de, $^2$cmw50@cam.ac.uk}
\date{}
\maketitle

\begin{abstract}
We introduce a rigorous framework for defining quasinormal modes on stationary, asymptotically flat spacetimes as isolated eigenvalues of the infinitesimal generator of time translations. We consider time functions corresponding to a foliation of asymptotically hyperboloidal hypersurfaces and restrict to suitable Hilbert spaces of functions. These functions have finite Sobolev regularity in bounded regions, but need to be Gevrey-regular at null infinity. This framework is developed in the context of sub-extremal Kerr spacetimes, but also gives uniform-in-$\Lambda$ resolvent estimates on Kerr--de Sitter spacetimes with a small cosmological constant $\Lambda$. As a corollary, we also construct the meromorphic continuation (in a sector of the complex plane) of the cut-off resolvent in Kerr that is associated to the standard Boyer--Lindquist time function. The framework introduced in this paper bridges different notions of quasinormal modes found in the literature.  As further applications of our methods, we prove stability of quasinormal frequencies in a sector of the complex plane, with respect to suitably small perturbations and establish convergence properties for Kerr--de Sitter quasinormal frequencies when the cosmological constant approaches zero.
\end{abstract}

\tableofcontents

\section{Introduction}
\label{sec:intro}
The quantitative behaviour of gravitational radiation emitted at late times, as black hole solutions to the Einstein equations settle down to a final, stationary state, is intimately tied to the geometry of the stationary spacetime that emerges as the final state. In the limit as time $t\to\infty$, the dynamics are expected to be governed by solutions to wave equations, which decay inverse-polynomially in time, with powers that, in many cases, only depend on the asymptotic flatness of the spacetime and therefore are  not characteristic to the particular stationary spacetime under consideration; see \cite{Price1972,paper2,hintzprice, aagkerr, aagprice, lukoh24} and references therein.\footnote{Extremal black holes form a notable exception. In the case of extremal Kerr black holes, the decay rates can differ significantly from the sub-extremal case, due to the dominant role played by the degenerate event horizon. In this case, the inverse-polynomial decay rates of gravitational radiation could actually be used as a signature of the extremality of the stationary spacetime \cite{gaj22b}. See also \cite{aag18,bgs23, aags23} for more work on observational signatures of extremality. We refer \S \ref{sec:future} for a further discussion on quasinormal modes on extremal Kerr.}

In late, \emph{but finite}, time intervals, however, an important role is instead expected to be played by \emph{quasinormal modes}. These are oscillating and exponentially damped solutions to wave equations that govern the spacetime dynamics, with frequencies that form a discrete set and are in fact characteristic to the particular geometry of the stationary spacetime in question. They form a dispersive analogue to the purely oscillating, vibrational \emph{normal modes} of a vibrating string or drumhead.

In contrast with the decay rates of inverse-polynomial tails, the complex quasinormal frequencies, which determine the temporal behaviour of quasinormal modes, may serve as characteristic signatures of the geometry of the stationary spacetime that can be read off from gravitational radiation emitted to future null infinity. This type of inverse problem is known as ``black hole spectroscopy''. For this reason, quasinormal modes are of particular interest to gravitational wave experiments, where they have been used to estimate the masses and angular momenta of Kerr black hole spacetimes that are expected to arise asymptotically as the final states of various observed black hole mergers; see for example \cite{ligo21}.

While they have played a prominent role in the theoretical and experimental physics literature on black hole dynamics, the status of quasinormal modes on asymptotically flat spacetimes as well-defined mathematical objects is far less developed. 

In this paper, we present a mathematical construction of quasinormal modes in the context of the geometric wave equation on fixed sub-extremal Kerr black hole exteriors \cite{kerr63}:
\begin{equation}
\label{eq:waveeq}
\square_{g_{M,a}}\phi=0,
\end{equation}
with $M$ the mass of the Kerr spacetime and $a$ its specific angular momentum. In fact, the results in the present paper apply uniformly (in $\Lambda$) to Kerr(--de Sitter) spacetimes with cosmological constant $\Lambda\geq 0$. Indeed, we consider also the more general, conformally covariant Klein-Gordon equation:
\begin{equation}
\label{eq:waveeqds}
\square_{g_{M,a,\Lambda}}\phi=\frac{2}{3}\Lambda\phi,
\end{equation}
with $g_{M,a,\Lambda}$ a Kerr--de Sitter metric with $\Lambda\geq 0$, $M^2\Lambda\ll 1$, such that the event horizon is non-degenerate\footnote{This means that the surface gravity associated to the event horizon is strictly positive (but the surface gravity of the cosmological horizon goes to zero as $\Lambda\downarrow 0$).}. We summarize the main results below in two theorems.

The first theorem demonstrates that the spectrum of $\mathcal{A}$, the infinitesimal generator of time translations associated to \eqref{eq:waveeqds} with $\Lambda\geq 0$, consists of isolated eigenvalues, when $\mathcal{A}$ is restricted to appropriate Hilbert space domains, and it moreover includes the poles of meromorphic continuations of cut-off resolvent operators as a subset. This therefore motivates a new definition of quasinormal frequencies as eigenvalues of restrictions of $\mathcal{A}$, extending a definition of quasinormal frequencies as \emph{scattering resonances}, i.e.\  poles of meromorphic continuations of cut-off resolvent operators.
\begin{theorem}[Rough version]
\label{thm:rough1}
Let $\mathcal{A}$ be the infinitesimal generator of time translations corresponding to \eqref{eq:waveeqds} on a Kerr(--de Sitter) metric with a non-degenerate event horizon, with initial data on an asymptotically hyperboloidal, horizon-intersecting hypersurface. Then:
\begin{enumerate}[label=\emph{(\roman*)}]
\item There exists a family of subsets $\Omega_{\alpha}\subset \C$ and a family of restrictions $\mathcal{A}_{\alpha}$ of $\mathcal{A}$ to suitable initial data Hilbert spaces $\mathbf{H}_{\alpha}$, such that $\mathcal{A}_{\alpha}$ has a pure point spectrum of isolated eigenvalues in $\Omega_{\alpha}$ (``\textbf{regularity quasinormal frequencies}'') with corresponding finite-dimensional eigenspaces (spanned by ``\textbf{regularity quasinormal modes}''). The regularity quasinormal frequencies are independent of the precise choice of initial data hypersurface and of the index $\alpha$.
\item The union of all $\Omega_{\alpha}$ satisfies
\begin{equation*}
\Omega:=\bigcup_{\alpha}\Omega_{\alpha}=\left\{{|\rm arg} z|<\frac{2\pi}{3}\right\},
\end{equation*}
and the corresponding union of all eigenvalues, denoted $\mathscr{Q}_{\rm reg}$, has no accumulation points in $\Omega$.
\item The standard cut-off resolvent associated to \eqref{eq:waveeq} with respect to the Boyer--Lindquist time function $t$ can be meromorphically continued to $\Omega$ and its poles are contained in $i\mathscr{Q}_{\rm reg}$.
\end{enumerate}
\end{theorem}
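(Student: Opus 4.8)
The plan is to represent $\mathcal{A}$ concretely and then study its resolvent on the scale $\mathbf{H}_\alpha$. Rewriting \eqref{eq:waveeqds} as a first-order-in-time system for $u=(\phi,\partial_\tau\phi)$, where $\tau$ is the hyperboloidal time function of the chosen foliation (horizon-penetrating and regular up to null infinity $\mathcal{I}^+$), one reads off $\mathcal{A}$ so that the flow is $\partial_\tau u=\mathcal{A}u$ and a quasinormal mode with frequency $\omega$ corresponds to an eigenfunction of $\mathcal{A}$ with eigenvalue $s=-i\omega$. The spaces $\mathbf{H}_\alpha$ look, on any compact subset of the slice, like a fixed-order product Sobolev space, but impose Gevrey regularity (with a weight) as $\mathcal{I}^+$ is approached; the index $\alpha$ tunes the Gevrey exponent/weight and thereby the subsector $\Omega_\alpha$. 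Since the slice is spacelike in the interior but its closure meets the event horizon and $\mathcal{I}^+$ along characteristic (outflow) boundaries, no boundary conditions are imposed there: the horizon condition is built into the horizon-penetrating coordinates, and the usual outgoing/Sommerfeld condition at $\mathcal{I}^+$ is replaced entirely by the requirement that $u\in\mathbf{H}_\alpha$, i.e.\ that $u$ be Gevrey-regular there.

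First I would prove the high-frequency statement that generates the semigroup: for $\re s$ large, $\mathcal{A}-s:\mathbf{D}(\mathcal{A}_\alpha)\to\mathbf{H}_\alpha$ is boundedly invertible. This follows from a degenerate energy estimate adapted to the hyperboloidal slicing; the Kerr-specific wrinkle is that the ergoregion makes the Killing energy indefinite, which I would handle by using that the ergoregion is compact, so that after multiplication by $e^{-\tau\,\re s}$ the indefinite zeroth- and first-order terms are dominated by $\re s$ times the $\mathbf{H}_\alpha$-norm once $\re s\gg 1$ (equivalently, commuting with a red-shift vector field and a Carter-type current). Semigroup generation then follows from Hille--Yosida, or may be quoted from the well-posedness theory for the hyperboloidal initial value problem.

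The technical heart is to show that for every $s\in\Omega_\alpha$ the operator $\mathcal{A}-s$ is Fredholm of index zero on $\mathbf{H}_\alpha$. Two ingredients are needed. First, compactness: the domain $\mathbf{D}(\mathcal{A}_\alpha)$ embeds compactly into $\mathbf{H}_\alpha$, using the finite gain of Sobolev regularity in the interior together with the gain in the Gevrey weight near $\mathcal{I}^+$. Second, the model at $\mathcal{I}^+$: in compactified coordinates $\mathcal{I}^+$ is an irregular singular point of the stationary mode equation, and a formal-series plus Borel--Laplace analysis there shows that among the solution branches exactly one is Gevrey-regular of the class built into $\mathbf{H}_\alpha$, and that this branch — hence the corresponding data at $\mathcal{I}^+$ — depends holomorphically on $s$ precisely for $s$ in the sector $\Omega_\alpha$, the opening of which is fixed by the summability sectors attached to the irregular singularity; taking the union over $\alpha$ produces the half-angle $2\pi/3$. (If one prefers, Carter separability may be invoked first, turning this into the analogous statement for the radial Teukolsky-type ODE coupled to the holomorphic family of spheroidal-type angular operators.) Combining these with the horizon outflow estimate yields a parametrix modulo compact operators, hence semi-Fredholmness; the index is zero by deforming $s$ into the high-$\re s$ region of the previous paragraph, or by a duality argument. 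The \emph{main obstacle} is precisely the second ingredient: making the Gevrey condition at $\mathcal{I}^+$ both sharp and stable under variation of $s$, and computing the resulting sector — the photon-sphere trapping on Kerr, by contrast, does not obstruct Fredholmness, affecting only quantitative resolvent bounds.

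Given Fredholmness of index zero together with invertibility at some point, the analytic Fredholm theorem shows $R_\alpha(s)=(\mathcal{A}-s)^{-1}$ is a meromorphic family of bounded operators on $\Omega_\alpha$ with finite-rank poles; these poles are the regularity quasinormal frequencies and the ranges of the corresponding Laurent coefficients are the finite-dimensional eigenspaces, giving (i). Independence of the hypersurface follows because two admissible hyperboloidal time functions differ by a function of the spatial variables that is smooth up to $\mathcal{I}^+$, so the associated generators are conjugate by an explicit bounded, boundedly invertible weighted-composition operator that preserves each $\mathbf{H}_\alpha$, hence have identical eigenvalues and corresponding modes; independence of $\alpha$ follows since uniqueness of the Gevrey-admissible branch at $\mathcal{I}^+$ forces any eigenfunction in $\mathbf{H}_\alpha$ with frequency $s\in\Omega_\alpha\cap\Omega_{\alpha'}$ to lie in $\mathbf{H}_{\alpha'}$ and be an eigenfunction there as well. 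For (ii) one checks $\bigcup_\alpha\Omega_\alpha=\{|\arg z|<\tfrac{2\pi}{3}\}$ and observes that any compact subset of this sector lies in a single $\Omega_\alpha$, on which eigenvalues are isolated, so $\mathscr{Q}_{\rm reg}$ has no accumulation point in $\Omega$. Finally, for (iii): on the overlap region (bounded away from both the horizon and $\mathcal{I}^+$) the Boyer--Lindquist time $t$ differs from $\tau$ by a bounded function of $r$, so for cut-offs $\chi_1,\chi_2$ supported in a compact $r$-range one obtains an identity of the schematic form $\chi_1 R_{\mathrm{BL}}(\omega)\chi_2=\chi_1\,m_1\,R_\alpha(-i\omega)\,m_2\,\chi_2$ with $m_1,m_2$ explicit (entire-in-$\omega$) multipliers; the meromorphic continuation of $R_\alpha$ to $\Omega$ then continues $\chi_1 R_{\mathrm{BL}}(\omega)\chi_2$, with poles exactly where $-i\omega\in\mathscr{Q}_{\rm reg}$, i.e.\ $\omega\in i\,\mathscr{Q}_{\rm reg}$.
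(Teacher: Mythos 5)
Your overall architecture (reduce to a Fredholm-plus-analytic-Fredholm argument on Gevrey-type spaces, conjugate by an explicit multiplier to get hypersurface independence, and transfer meromorphy of the resolvent to the cut-off Boyer--Lindquist resolvent via bounded multipliers on compact $r$-ranges) matches the paper's at the top level, but the technical heart is a genuine gap. You locate the difficulty correctly --- ``making the Gevrey condition at $\mathcal{I}^+$ both sharp and stable under variation of $s$, and computing the resulting sector'' --- and then you only assert it, via a formal-series/Borel--Laplace branch analysis at the irregular singular point. As stated this is at best a fixed-angular-mode ODE statement, and it gives no uniformity in the angular frequency $\ell$. Without uniformity in $\ell$ you cannot assemble a parametrix for the full (non-separated) operator, you cannot conclude that the error is compact on the global Hilbert space, and you cannot rule out accumulation of eigenvalues as $\ell\to\infty$, which is precisely part of statement (ii). The paper's proof exists to supply exactly this: global weighted estimates for the Laplace-transformed operator, with $\ell$-dependent Gevrey weights (different weights for radial derivative order $n<\lf$ and $n\geq\lf$), summed so that the top-order terms close by exploiting the Newman--Penrose conservation law, uniformly in $\ell$ and in a regularization parameter $\kappa$ (an artificial ``surface gravity at $\mathcal{I}^+$'') whose $\kappa\downarrow 0$ limit produces the inverse. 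The sector $|\arg z|<\tfrac{2\pi}{3}$ then comes out of explicit algebraic compatibility conditions among the parameters in these summed estimates, not from summability sectors of an irregular singularity; indeed the paper stresses that this sector is not expected to be sharp, which sits awkwardly with your claim that the Borel--Laplace sectors ``produce'' it. Your proposed index-zero deformation also needs the invertible regime to be reachable inside the (bounded) region where Fredholmness has been established, which your sketch does not arrange.

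Two further points would break the argument as written. First, the parenthetical fallback ``Carter separability may be invoked first'' is not available: for $\omega\notin\R$ the oblate spheroidal harmonics are not known to form a complete basis of $L^2(\s^2)$, so one cannot reduce the eigenvalue problem for $\mathcal{A}$ to the radial ODE family (this is exactly Remark \ref{rmk:radialode} in the paper, and it is why the paper's Gevrey estimates are set up mode-by-mode only in a far region where the operator is \emph{exactly} the flat one). Second, and related, your asymptotic analysis at $\mathcal{I}^+$ implicitly needs the Kerr operator to converge to the Minkowski model fast enough near infinity; this fails in Boyer--Lindquist-type coordinates and requires the specific ellipsoidal-to-spherical change of variables $(r,\theta)\mapsto(\varrho,\vartheta)$ constructed in the paper, so that in $\{\varrho\geq R_\infty\}$ the Laplace-transformed operator agrees with the flat one modulo terms carrying explicit factors of $Mx$ with Gevrey-controlled coefficients. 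Without introducing such coordinates, the ``perturbation of the model at $\mathcal{I}^+$'' step of your parametrix has error terms that are not small in the relevant Gevrey norms, and the Fredholm claim does not follow.
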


In the second theorem, we establish the stability of the quasinormal spectrum under suitably small perturbations and the existence of convergent sequences of quasinormal frequencies in the limit $\Lambda\downarrow 0$.

\begin{theorem}[Rough version]
\label{thm:rough1b}
The following additional perturbative properties hold for regularity quasinormal frequencies associated to \eqref{eq:waveeqds} with $\Lambda\geq 0$:
\begin{enumerate}[label=\emph{(\roman*)}]
\item When $\Lambda=0$, the quasinormal frequencies in $\mathscr{Q}_{\rm reg}$ are stable under perturbations of the wave operator that are suitably small with respect to the norms on $\mathbf{H}_{\alpha}$.
\item Let $\{\Lambda_n\}$ be a sequence of cosmological constants with $\Lambda_n\downarrow 0$ as $n\to \infty$. For each quasinormal frequency $s_*\in \mathscr{Q}_{\rm reg}$ with respect to $\Lambda=0$, there exists a sequence $\{s_n\}$ of quasinormal frequencies in $\mathscr{Q}_{\rm reg}$ with respect to $\Lambda=\Lambda_n$, with limit $s_*$. Conversely, for each sequence $\{s_n\}$ of quasinormal frequencies in $\mathscr{Q}_{\rm reg}$ with respect to $\Lambda=\Lambda_n$ and with limit $s_*\in \Omega$, $s_*$ must be a quasinormal frequency in $\mathscr{Q}_{\rm reg}$ with respect to $\Lambda=0$. 
\end{enumerate}
\end{theorem}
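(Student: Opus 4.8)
The plan is to obtain both statements from the standard machinery of Riesz spectral projections, the only substantive input being a uniform resolvent bound together with a norm-resolvent convergence statement. By Theorem~\ref{thm:rough1}, for each admissible index $\alpha$ the resolvent $s\mapsto(\mathcal{A}_{\alpha}-s)^{-1}$ is a meromorphic family of bounded operators on $\mathbf{H}_{\alpha}$ whose poles are exactly the regularity quasinormal frequencies in $\Omega_{\alpha}$, each of finite algebraic multiplicity and with no accumulation point in $\Omega$. Fix $s_{*}\in\mathscr{Q}_{\rm reg}$ and a small positively oriented circle $\gamma\subset\Omega_{\alpha}$ enclosing $s_{*}$ and no other point of $\mathscr{Q}_{\rm reg}$; then $C_{\gamma}:=\sup_{s\in\gamma}\|(\mathcal{A}_{\alpha}-s)^{-1}\|_{\mathbf{H}_{\alpha}\to\mathbf{H}_{\alpha}}<\infty$ and the Riesz projection $\Pi:=\frac{1}{2\pi i}\oint_{\gamma}(\mathcal{A}_{\alpha}-s)^{-1}\,ds$ is a finite-rank projection onto the generalized eigenspace of $\mathcal{A}_{\alpha}$ at $s_{*}$.

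For part (i), write $\mathcal{A}_{\alpha}^{P}=\mathcal{A}_{\alpha}+P$ with $P$ bounded on $\mathbf{H}_{\alpha}$ and such that adding $P$ preserves the analytic-Fredholm structure underlying Theorem~\ref{thm:rough1}. If $\|P\|_{\mathbf{H}_{\alpha}\to\mathbf{H}_{\alpha}}<C_{\gamma}^{-1}$, then the Neumann-series identity
\begin{equation*}
(\mathcal{A}_{\alpha}^{P}-s)^{-1}=(\mathcal{A}_{\alpha}-s)^{-1}\bigl(I+P(\mathcal{A}_{\alpha}-s)^{-1}\bigr)^{-1},\qquad s\in\gamma,
\end{equation*}
shows that $(\mathcal{A}_{\alpha}^{P}-s)^{-1}$ is holomorphic and uniformly bounded on $\gamma$, depends continuously there on $P$ in operator norm, and that the corresponding projection $\Pi^{P}:=\frac{1}{2\pi i}\oint_{\gamma}(\mathcal{A}_{\alpha}^{P}-s)^{-1}\,ds$ satisfies $\|\Pi^{P}-\Pi\|\to 0$ as $\|P\|\to 0$. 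Since two projections at norm distance less than $1$ have the same rank, for $\|P\|$ small enough $\mathcal{A}_{\alpha}^{P}$ has, counted with algebraic multiplicity, exactly $\rank\Pi$ eigenvalues inside $\gamma$ (these are genuine eigenvalues, the small bounded perturbation not destroying the meromorphy of the resolvent); in particular at least one lies within the radius of $\gamma$ of $s_{*}$. Letting the radius of $\gamma$ shrink gives the asserted stability of the quasinormal spectrum.

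For part (ii), the operators $\mathcal{A}_{\alpha}^{\Lambda_{n}}$ are a priori attached to $\Lambda$-dependent geometries, so I would first choose the asymptotically hyperboloidal time function and the radial normalization to depend smoothly on $\Lambda$ near $0$, so that after composing with natural identification maps $J_{\Lambda_{n}}\colon\mathbf{H}_{\alpha}\to\mathbf{H}_{\alpha}^{\Lambda_{n}}$ one may view $\{\mathcal{A}_{\alpha}^{\Lambda_{n}}\}$ as a family of operators on the single space $\mathbf{H}_{\alpha}$, with $\mathcal{A}_{\alpha}^{\Lambda_{n}}-\mathcal{A}_{\alpha}^{0}$ a first-order operator whose coefficients, together with the zeroth-order term $\tfrac{2}{3}\Lambda_{n}$, are $O(\Lambda_{n})$ in the relevant operator norm. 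The uniform-in-$\Lambda$ resolvent estimates established earlier in the paper then give, on a fixed compact $K\subset\Omega$ disjoint from $\mathscr{Q}_{\rm reg}$ (for $\Lambda=0$) and for all sufficiently small $\Lambda$, a uniform bound on $(\mathcal{A}_{\alpha}^{\Lambda}-s)^{-1}$ with $\gamma\subset\Omega_{\alpha}^{\Lambda}$; combined with the second resolvent identity and the $O(\Lambda_{n})$ bound, this upgrades to norm-resolvent convergence $(\mathcal{A}_{\alpha}^{\Lambda_{n}}-s)^{-1}\to(\mathcal{A}_{\alpha}^{0}-s)^{-1}$ uniformly for $s\in\gamma$. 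The Riesz projections converge in norm, hence have equal rank for large $n$, so $\mathcal{A}_{\alpha}^{\Lambda_{n}}$ has eigenvalues inside $\gamma$ with total algebraic multiplicity $\rank\Pi\geq 1$; shrinking $\gamma$ produces a sequence $\{s_{n}\}$ of quasinormal frequencies for $\Lambda_{n}$ with $s_{n}\to s_{*}$. Conversely, if $s_{n}\to s_{*}\in\Omega$ with each $s_{n}$ a quasinormal frequency for $\Lambda_{n}$ but $s_{*}\notin\mathscr{Q}_{\rm reg}$ for $\Lambda=0$, then $(\mathcal{A}_{\alpha}^{0}-s)^{-1}$ is holomorphic and bounded on a disc around $s_{*}$, so by the same uniform bound and norm convergence $(\mathcal{A}_{\alpha}^{\Lambda_{n}}-s)^{-1}$ is uniformly bounded on that disc for large $n$, contradicting the pole at $s_{n}\to s_{*}$; hence $s_{*}\in\mathscr{Q}_{\rm reg}$ for $\Lambda=0$.

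The real work is in supplying the two analytic inputs to this machinery: the uniform-in-$\Lambda$ (respectively uniform-in-$P$) resolvent bound on the relevant part of $\Omega$, and the norm-resolvent convergence $\mathcal{A}_{\alpha}^{\Lambda_{n}}\to\mathcal{A}_{\alpha}^{0}$ (respectively $\mathcal{A}_{\alpha}^{P}\to\mathcal{A}_{\alpha}$). The resolvent bound is essentially the main estimate of the paper on the Gevrey-type spaces $\mathbf{H}_{\alpha}$; the delicate feature is that the surface gravity of the cosmological horizon degenerates as $\Lambda\downarrow 0$, so one must check that the Hilbert-space construction and the estimates degrade only in a controlled manner as $\Lambda\to 0$ — which is exactly what the asymptotically hyperboloidal, Gevrey-at-null-infinity framework is built to allow. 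The convergence input requires a $\Lambda$-uniform identification of the function spaces together with continuity in $\Lambda$ near $0$ of all geometric data (horizon radii, metric coefficients, the hyperboloidal time function), and the observation that, under this identification, the difference from $\mathcal{A}_{\alpha}^{0}$ is a genuinely small relatively bounded perturbation. Once these are in hand, the remainder is the routine analytic-perturbation argument above, so I expect the write-up to be devoted almost entirely to these two estimates.
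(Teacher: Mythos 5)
Your strategy is sound and reaches the same conclusions, but by a genuinely different route from the paper. You run the classical Kato-type argument at the level of the generator $\mathcal{A}$: uniform resolvent bounds on a contour, a Neumann series, and norm convergence of Riesz projections, whose rank stability yields both persistence of eigenvalues and absence of spurious ones. The paper instead works at the level of the Laplace-transformed operator $L_s$ (related to $\mathcal{A}$ by the factorization in Proposition \ref{prop:relALs}) and its Fredholm decomposition $L_s=\mathfrak{L}_{s,\lambda}+\epsilon_0 B_s+K_{s,\lambda}$: the ``no spurious poles'' direction is the same Neumann-series observation (Proposition \ref{prop:pertLs}), but persistence of poles is obtained by applying Hurwitz's theorem to the holomorphic compact family $\mathbf{1}-K_{s,\lambda}T_{s,n}$ (Lemma \ref{lm:kappaconvergence}), and the converse direction of (ii) is obtained by extracting a convergent subsequence of normalized kernel elements via the compact embeddings $H^{N_++1}_{\sigma,R_\infty,2}\ssubset H^{N_++1}_{\sigma',R_\infty,2}$ for $|\sigma'|<|\sigma|$, rather than by a resolvent-bound contradiction. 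The two packages are essentially interchangeable here; yours additionally tracks algebraic multiplicities, while the paper's kernel-based version directly yields convergence of the quasinormal modes themselves, which is part of the precise statement.

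The one place where your write-up makes a claim that is false as stated — and which conceals the main difficulty — is the assertion that, after identification of the spaces, $\mathcal{A}_{\alpha}^{\Lambda_n}-\mathcal{A}_{\alpha}^{0}$ is $O(\Lambda_n)$ in the relevant operator norm. The dominant difference between $L_{s,\Lambda}$ and $L_{s,0}$ near null infinity is the term $2\kappa_c\,\partial_x(\chi_{\varrho\geq R_\infty}\,x\,\partial_x(\cdot))$ with $\kappa_c\sim\sqrt{\Lambda}$, so the prefactor is only $O(\sqrt{\Lambda_n})$; more seriously, this is a second-order operator degenerating at $x=0$ which is \emph{not} relatively bounded by the graph norm of $L_{s,0}$ on the Gevrey spaces: bounding $\|x\partial_x^2\hpsi\|_{G_{\sigma,R_\infty}}$ by $\|\hpsi\|_{H^{N_++1}_{\sigma',R_\infty,2}}$ forces $|\sigma'|>|\sigma|$, i.e.\ a strict loss in the Gevrey index. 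The paper circumvents this by building the compensating $\kappa$-weighted quantities into the left-hand side of the uniform estimate (see \eqref{eq:addestposkappa} and \eqref{eq:intromainest2}) and by the commutator identity \eqref{eq:keyeqdiffkappa} in the proof of Proposition \ref{prop:invcalL}; that is precisely the content of the operator-norm convergence $\mathfrak{L}_{s,\lambda,\kappa_n}^{-1}\to\mathfrak{L}_{s,\lambda}^{-1}$. So your two ``analytic inputs'' are correctly identified, but the second one is not a consequence of a small bounded (or naively relatively bounded) difference of generators. A parallel, milder remark applies to (i): the perturbations covered by Theorem \ref{thm:stab} are second-order operators bounded from $H^{N_++1}_{\sigma,R_\infty,2}$ to $H^{N_+}_{\sigma,R_\infty}$ rather than on $\mathbf{H}_{\alpha}$ itself; your Neumann series still applies because $P(\mathcal{A}_{\alpha}-s)^{-1}$ is then bounded, but this should be made explicit.
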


Precise versions of the above statements (including definitions of the Hilbert spaces $\mathbf{H}_{\alpha}$) are given in \S \ref{sec:mainthmprecise}, following the introduction of the necessary notation and geometric and functional analytic concepts in \S\S \ref{sec:geometry}--\ref{sec:infigen}. The main ideas and techniques in the proof of Theorem \ref{thm:rough1} are already outlined in \S \ref{sec:ideastechniques}.

By Theorem \ref{thm:rough1}(iii), the set $i\mathscr{Q}_{\rm reg}$, includes the set of scattering resonances in an appropriate sector of the complex plane, which we will take to be the poles of meromorphically continued cut-off resolvent operators that are defined with respect to ``standard'' time functions with asymptotically flat level sets that intersect the bifurcation sphere; see also \S \ref{sec:scatres}.\footnote{The set $i\mathscr{Q}_{\rm reg}$ also includes poles of meromorphically continued cut-off resolvents with respect to time functions that intersect the future event horizon to the future of the bifurcation sphere. }  In this sense, regularity quasinormal frequencies are more general objects than scattering resonances. In contrast with scattering resonances, regularity quasinormal frequencies are natural objects to consider in the setting of initial data whose support includes the event horizon of the Kerr black hole, or which extend to future null infinity in conformal sense.

The definition of regularity quasinormal modes in the present paper does \emph{not} rely on a decomposition of the corresponding solution to the wave equation in terms of oblate spheroidal harmonics via Carter separation \cite{car68}. Indeed, for frequencies $\omega\in i\mathscr{Q}_{\rm reg}\subset \C\setminus \R$, such a decomposition is \emph{not guaranteed}; see Remark \ref{rmk:radialode} for more details. 

The tools developed in the present paper can be applied to more general wave equations on Kerr, like the Teukolsky equations, and to wave equations on more general stationary, asymptotically flat spacetimes; see \S \ref{sec:future} for a further discussion.

In the final stages of writing this paper, we learned about upcoming related, but independent, work by Stucker \cite{stuc24} on scattering resonances for \eqref{eq:waveeq} that complement the results in the present paper. In \cite{stuc24}, meromorphic continuations are constructed for cut-off resolvents corresponding to time functions with future-horizon intersecting, asymptotically flat level sets by combining the constructions introduced by Vasy in \cite{vasy1} in a spacetime region of bounded radius, together with a general complex scaling argument in the remaining region, analogous to the complex scaling argument introduced in \cite{sjzw91}.

An advantage of a complex scaling-based approach is that it allows for a construction of scattering resonances in the whole region $\C\setminus i(-\infty,0]$, in contrast with Theorem \ref{thm:rough1}(iii), which only provides such a construction in a sector of the complex plane. 

A disadvantage of such an approach is that, in contrast with the methods introduced in the present paper, it relies fundamentally on the conjugation of resolvent operators with cut-off functions, which limits the applicability of resolvent estimates to the setting of compactly supported initial data for \eqref{eq:waveeq} and does not allow for the consideration of the radiation field $r\phi|_{\mathcal{I}^+}$ at future null infinity. 

The method of complex scaling moreover does not directly allow for a uniform analysis of Kerr--de Sitter spacetimes with $\Lambda>0$ (where instead the methods of \cite{vasy1, warn15, peva21} apply) as $\Lambda\downarrow 0$, whereas the methods introduced in the present paper enable a uniform analysis of resolvent operators for $\Lambda\geq 0$, with $M^2\Lambda\ll 1$. This enables the study of the convergence properties of Kerr--de Sitter quasinormal frequencies as $\Lambda\downarrow 0$ in Theorem \ref{thm:rough1b}(ii).

\section{Acknowledgements}
The authors thank Peter Hintz and Thomas Stucker for several insightful conversations. DG acknowledges funding through the ERC Starting Grant 101115568.

\section{Related works on quasinormal modes}
In this section, we give a overview of previous literature on quasinormal modes and provide some context to the results in Theorems \ref{thm:rough1} and \ref{thm:rough1b}.

\subsection{Early numerical work}
The first numerical observation of the dominance of characteristic oscillations (independent of the precise initial data under consideration) in the time evolution of linear wave equations on asymptotically flat black holes was made by Vishveshwara \cite{vish70}. It was observed that at late, but finite time intervals, angular momentum $\ell=2$ solutions to the Regge--Wheeler wave equations on fixed Schwarzschild background spacetimes are dominated by a damped oscillation at a single-frequency, depending only on the Schwarzschild mass $M$.

Further numerical investigations for higher angular momenta $\ell$ were carried out by Press \cite{press71}, who coined the term ``quasinormal mode'' to describe these characteristic oscillations; ``normal mode'' because of the occurrence of characteristic oscillations, analogous to the normal mode solutions of wave equations on compact spatial domains and ``quasi'' because, in contrast with normal modes, these oscillations seemed to be damped and do not fully determine the global temporal behaviour of solutions arising from generic initial data.

An identification of quasinormal modes as a particular class of solutions to the wave equation with exact time dependence $e^{-i\omega t}$, where $\omega\in \C$, was made in a numerical analysis by Chandrasekhar--Detweiler \cite{chadet75}; see also related earlier work by Zerilli \cite{zer70}. Solutions to \eqref{eq:waveeq} on Schwarzschild of the form $\phi(t,r_*,\theta,\varphi)=e^{-i\omega t}\frac{u(r_*)}{r}Y_{\ell m}(\theta,\varphi)$, with $Y_{\ell m}$ spherical harmonics, satisfy the following radial ODE:
\begin{equation}
\label{eq:radialODEschw}
u''+(\omega^2-V_{\ell})u=0,	
\end{equation}
where
\begin{equation*}
	V_{\ell}(r_*)\sim \begin{cases}
		e^{-\frac{r_*}{4M}}\quad r_*\to-\infty,\\
		\frac{\ell(\ell+1)}{r_*^2}\quad r_*\to +\infty.
	\end{cases}
\end{equation*}
By standard asymptotic ODE analysis, it can be seen that $u\sim e^{\pm i \omega r_*}$ as $r_*\to \pm \infty$. If $V$ were compactly supported in $r_*$, then the behaviour $e^{\pm i \omega r_*}$ would in fact be exact, rather than merely asymptotic. In \cite{chadet75}, quasinormal modes were characterized as solutions to \eqref{eq:waveeq} with a radial part $u$ that satisfies \eqref{eq:radialODEschw}, such that $u\sim e^{- i \omega r_*}$ as $r_*\to -\infty$ (``ingoing at the horizon'') and $u\sim e^{+ i \omega r_*}$ as $r_*\to \infty$  (``outgoing at infinity'').``Quasinormal frequencies" would then correspond to the values of $\omega$ which allow for non-trivial solutions $u$ satisfying these boundary conditions to exist.

When $\im \omega<0$, this characterization amounts to demanding exponential growth in $|r_*|$ as $r_*\to \pm \infty$. However, as the relation $u\sim e^{\pm i \omega r_*}$ is merely an asymptotic one, this characterization is ill-defined:\footnote{Note that if $V_{\ell}$ were compactly supported in $r_*$, then the behaviour $e^{\pm i \omega r_*}$ would be exact, so there would be no ambiguity in selecting the desired boundary conditions.} it is unclear how to filter out solutions with exponential decay in $|r_*|$, as they are always dominated by exponential growth. This ambiguity was already observed in \cite{chadet75}.

\subsection{Quasinormal modes and converging power series}
In the numerical scheme of \cite{chadet75}, ingoing/outgoing boundary conditions are implemented by requiring $u$ to take the form of convergent power series towards $r_*=-\infty$ and $r_*=\infty$:
\begin{equation}
\label{eq:chadetQNM}
	u(r_*)= \begin{cases}
 	e^{ i \omega r_*}\sum_{j=0}^{\infty}a_jr^{-j}\quad \textnormal{$r$ large}\\
 	e^{ -i \omega r_*}\sum_{j=0}^{\infty}b_j(r-2M)^j\quad \textnormal{$r-2M$ small}
 \end{cases}
\end{equation}
and then estimating $\omega$ for which these power series (up to some large number of terms) match at an intermediate value of $r$, say $r=3M$.

Consistent with this approach is a characterization of quasinormal modes due to Leaver \cite{leav85}, who defined quasinormal modes in Schwarzschild as solutions to \eqref{eq:waveeq} of the form $\phi(t,r_*,\theta,\varphi)=e^{-i\omega t}\frac{u(r_*)}{r}Y_{\ell m}(\theta,\varphi)$ with a single power series expression for $u$ converging on $[2M,\infty)$:
\begin{equation}
\label{eq:leaverQNM}
u(r)=\left(1-\frac{2M}{r}\right)^{-i2M\omega}r^{i2M\omega}e^{i\omega(r-2M)}\sum_{j=0}^{\infty}c_j\left(1-\frac{2M}{r}\right)^j.
\end{equation}
The requirement that $u$ can indeed be written as the above power series imposes restrictions on the class of solutions $u$ to \eqref{eq:radialODEschw} and implements therefore a notion of ``ingoing at the horizon'' and ``outgoing at infinity. Indeed, by rewriting \eqref{eq:leaverQNM} in terms of the tortoise radial coordinate $r_*$, it is straightforward to show that \eqref{eq:chadetQNM} follows from \eqref{eq:leaverQNM}.

Note that the convergence of the power series at $r=2M$ is consistent with the fact that $r=2M$ is a regular singular point of the ODE \eqref{eq:radialODEschw}, so it admits solutions that are analytic at $r=2M$. The boundary condition ``ingoing at the horizon'' can be interpreted as the requirement that $e^{+i \omega r_*}u$ be \emph{analytic} in $r$ at $r=2M$. In fact, for solutions $u$ with the asymptotic behaviour $e^{+i\omega r_*}$, the expression $e^{+ i \omega r_*}u$ has only finite regularity in $r$ at the event horizon (with the order of regularity depending on $\im \omega$), so the ingoing boundary condition can already be implemented by restricting to $u$ with \emph{smooth} $e^{i \omega r_*}u$ (or $C^k$, with $k$ sufficiently large, depending on $\im \omega$).

The situation at $r=\infty$ is more complicated, because it corresponds to an \emph{irregular} singular point of the ODE \eqref{eq:radialODEschw}. In this case, the power series appearing in the expression for general $u$ is only asymptotic and the rescaled $e^{ -i \omega r_*}u$ will \emph{always} be smooth in $x$ at $x=0$, with $x=\frac{1}{r}$. Demanding the convergence of \eqref{eq:leaverQNM} therefore implies that the following power series around $x=\frac{1}{2M}$ converges in $[0,\frac{1}{2M}]$:
\begin{equation*}
\left(1-2M x\right)^{i4M\omega}(e^{ -i \omega r_*}u)(x)=\sum_{j=0}^{\infty}c_j(-2M)^j\left(x-\frac{1}{2M}\right)^j.
\end{equation*}
This can be thought of as a possible characterization of the ``outgoing at infinity'' boundary condition. Note however that this is a strictly weaker condition than analyticity in $[0,\frac{1}{2M}]$, since that would also require convergence for (suitably small) $x<0$ and not just for $x\geq 0$. In contrast with the boundary condition at the horizon, it is not clear \emph{a priori} if there exist \emph{any} solutions with the desired convergence property at $x=0$.

The ingoing and outgoing boundary conditions of \cite{chadet75,leav85} have a cleaner interpretation when expressed in terms of a time function $\tau$ whose level sets are appropriate asymptotically hyperboloidal or null hypersurfaces intersecting the future event horizon and future null infinity. In that case, a quasinormal mode in the sense of Leaver would correspond to a solution to \eqref{eq:waveeq} on Schwarzschild of the form $\phi(\tau,x=\frac{1}{r},\theta,\varphi)=e^{-i\omega \tau}x\hat{\psi}_{\ell m}(x)Y_{\ell m}(\theta,\varphi)$, such that $\hat{\psi}_{\ell m}$ is smooth (or analytic) at the horizon $x=\frac{1}{2M}$ and such that $\hat{\psi}_{\ell m}$ can be expanded as follows for $x\geq 0$:
\begin{equation}
\label{eq:hypfoliationmode}
\hat{\psi}_{\ell m}(x)=p(x;M,\omega)\sum_{j=0}^{\infty}c_j(-2M)^j\left(x-\frac{1}{2M}\right)^j,
\end{equation}
with $p$ analytic at $x=0$ and smooth at $x=\frac{1}{2M}$, depending on the precise choice of asymptotically hyperboloidal or null foliation.

The ingoing and outgoing boundary conditions can therefore be thought of as \emph{regularity conditions} at the future event horizon and future null infinity imposed on solutions to \eqref{eq:waveeq}. See also \cite{jamash21} for recent numerical methods exploiting the advantages of the asymptotically hyperboloidal setting.

Using the characterization \eqref{eq:leaverQNM}, Leaver developed a numerical scheme for determining quasinormal frequencies in \cite{leav85} (i.e.\ values of $\omega$ for which \eqref{eq:leaverQNM} holds) based around a continued fraction method introduced by Jaff\'e \cite{jaf34} and generalized by Baber--Hass\'e \cite{baha35} in the context of the Schr\"odinger equation for hydrogen-molecule-like ions. This method was also applied to the Kerr setting, where quasinormal modes were taken to be solutions of the form 
\begin{equation}
\label{eq:kerrmode}
\phi(t,r_*,\theta,\varphi)=e^{-i\omega t}\frac{u(r_*)}{r}S_{\ell m}(\theta;a\omega)e^{im\varphi},
\end{equation}
 with $S_{\ell m}(\theta;a\omega)e^{im\varphi}$ oblate spheroidal harmonics and $u$ satisfying an equation analogous to \eqref{eq:leaverQNM}, but with a dependence on the angular eigenvalues associated to $S_{\ell m}(\theta;a\omega)e^{im\varphi}$; see Remark \ref{rmk:radialode} for more details regarding subtleties connected with the assumption \eqref{eq:kerrmode}.

For a comprehensive overview of additional numerical results on quasinormal modes, we refer the reader to the review papers \cite{kosc99,bcs09}.

\subsection{Scattering resonances}
\label{sec:scatres}
An alternative way to characterize quasinormal modes is motivated by the study of scattering resonances. When $\im \omega>0$, we can umambiguously fix ingoing boundary conditions at the horizon and outgoing boundary conditions at null infinity, since in this case, this amounts to selecting solutions that \emph{decay} exponentially in $|r_*|$, rather than those that grow exponentially, as would be the case for $\im \omega<0$. 

Let $\im\omega>0$ and let $u_+(r_*)$ and $u_{\infty}(r_*)$ denote solutions to \eqref{eq:radialODEschw} such that $u_+(r_*)\sim e^{-i\omega r_*}$ as $r_*\to-\infty$ and $u_{\infty}(r_*)\sim e^{+i\omega r_*}$ for $r_*\to \infty$. Let $W$ denote the Wronskian corresponding to the solutions $u_+$ and $u_{\infty}$, which depends only on $\omega$.

By an energy conservation argument, it follows that $W$ does not have any zeroes when $\im \omega>0$ and the resolvent operator;
\begin{equation*}
	R_{\ell}(\omega)=[\partial_{r_*}^2+(\omega^2-V_{\ell})]^{-1}: L^2(\R)\to L^2(\R)
\end{equation*}
can be defined via Green's formula:
\begin{equation}
\label{eq:resolventschwell}
	R_{\ell}(\omega)(f)=W^{-1}\left(u_{\infty}(r_*)\int_{-\infty}^{r_*} u_{+}(y)f(y)\,dy+u_{+}(r_*)\int_{r_*}^{\infty}u_{\infty}(y)f(y)\,dy\right).
\end{equation}

If we assume that we can make sense of the analytic continuation of $W$ to $\C\setminus i(-\infty,0]$, the meromorphic continuation of the \emph{cut-off resolvents} $\chi R_{\ell}(\omega)\chi: L^2(\R)\to L^2(\R)$ to $\im \omega\leq 0$, where $\chi$ are arbitrary smooth compactly supported function, is well-defined. The poles of this meromorphic continuation correspond to zeroes of the analytic continuation of $W$ and are known as \emph{scattering resonances}.

In light of the above, Detweiler suggested an alternative characterization of ``quasinormal frequencies'' in \cite{det77}, by identifying them with scattering resonances, where quasinormal modes are the corresponding resonant states. In contrast with Leaver's characterization, which is based around \eqref{eq:leaverQNM}, the relevance of scattering resonances to the temporal behaviour of solutions to \eqref{eq:waveeq} is clearer. Indeed, by denoting $s=-i\omega$, one can express $\phi_{\ell m}(t,r)$ via the \emph{Bromwich integral}: let $s_0\in (0,\infty)$, then
\begin{equation}
\label{eq:bromwich}
	(\chi\cdot \phi_{\ell m})(t,r)=\frac{1}{2\pi i}\lim_{S\to \infty}\int_{s_0-iS}^{s_0+iS}[(\chi R_{\ell}(is)\chi)(f_{\ell m})](r)\,ds,
\end{equation}
where $f_{\ell m}$ is determined by initial data for \eqref{eq:waveeq} at $t=0$: $f_{\ell m}=-s\phi_{\ell m}|_{t=0}-\partial_t\phi_{\ell m}|_{t=0}$.

The above contour of integration in $\{\re s>0\}$ (i.e.\ $\im \omega >0$) can be deformed to obtain a contour intersecting the region $\{\re s<0\}$ (i.e.\ $\im \omega <0$) that encloses potential poles of $\chi R_{\ell}(\omega)\chi$. By the residue theorem, the poles will therefore contribute to $\chi\phi_{\ell m}(t,r)$ as terms with time dependence $e^{-i\omega t}$, multiplied by a solution to \eqref{eq:radialODEschw} (the resonant state).

Note that from their definitions, it is not immediately clear whether scattering resonances actually correspond to the same values of $\omega\in \C$ as those that give solutions to \eqref{eq:radialODEschw} satisfying Leaver's power series expansion (though there is ample numerical evidence in favour of such a correspondence). See \S \ref{sec:regmodesasympfl} for a further discussion on this matter.

In \cite{ba}, Bachelot--Motet-Bachelot proved that the cut-off resolvents $\chi R_{\ell}(\omega)\chi$ corresponding to the resolvent in \eqref{eq:resolventschwell} can indeed be meromorphically continued to $\C\setminus i(-\infty,0]$ for all fixed angular momenta $\ell$, so the notion of quasinormal frequencies as scattering resonances is mathematically well-defined. Furthermore, the results in \cite{ba} imply that for fixed $\ell$, there are no accumulation points away from $i(-\infty,0]$, so the set of quasinormal frequencies in any sector of the complex plane that excludes $i(-\infty,0]$ is a discrete subset, provided one restricts to fixed $\ell$.

The proof of the meromorphic continuation of the resolvent in \cite{ba} uses the method of \emph{complex scaling}, developed in \cite{agco71,baco71,si73}, which relies on the analyticity of $V$ in $r_*$.

Scattering resonances have a long history in mathematical physics and quantum mechanics. A general theory for defining and studying scattering resonances for the wave equation on Minkowski with compactly supported potentials was introduced by Lax--Phillips \cite{laxphillips}. We refer the interested reader to \cite{dyzw19} for an introduction to scattering resonances as well as compendium of relevant references.

\subsection{Regularity quasinormal modes in asymptotically de Sitter and anti de Sitter spacetimes}
A drawback of the Leaver characterization of quasinormal modes is that it does not offer a clear connection to the temporal evolution of solutions to \eqref{eq:waveeq} arising from prescribed initial data. Furthermore, it is not immediate how to place this method on a solid mathematical footing and derive properties about the quasinormal spectrum like its discreteness or the asymptotic behaviour of quasinormal frequencies with large frequencies $|\re  \omega| \gg M^{-1}$. 

The characterization of quasinormal modes as resonant states of cut-off resolvents, on the other hand, is more amenable to a mathematical study. It also provides a clear link with the time evolution of solutions to \eqref{eq:waveeq}, but only for \emph{compactly supported initial data} that are moreover supported away from the event horizon. Furthermore, due to appearance of the cut-off function $\chi$ on the left-hand side of \eqref{eq:bromwich}, it is not clear how resonant states contribute to the time evolution along future null infinity, the region that is most relevant for modelling gravitational wave observations.

\emph{Regularity quasinormal modes} provide a way to bridge these two different characterizations of quasinormal modes and resolve the above mentioned issues. They were first introduced in \cite{vasy1} in the context of asymptotically hyperbolic manifolds and on a subfamily of Kerr--de Sitter spacetimes. A related characterization was provided by the second author in \cite{warn15} in the context of asymptotically anti de Sitter spacetimes.

The definition of regularity quasinormal modes in asymptotically (anti) de Sitter spacetimes can be motivated by the following two observations:
\begin{itemize}
\item With respect to a time function $\tau$ whose level sets intersect the future event horizon and the future cosmological horizon (in the asymptotically de Sitter case) away from the bifurcation sphere(s), the ingoing (at the event horizon) and outgoing (at the cosmological horizon) boundary conditions can be implemented by requiring the analogues of the functions $\hpsi_{\ell m}$ from \eqref{eq:hypfoliationmode} to be elements of the Sobolev space of functions $H^k$, where $k$ is determined by the subset of the complex plane under consideration and is related to the surface gravities $\kappa_+$ and $\kappa_c$ of the event horizon and cosmological horizon, respectively.
\item It is possible to construct a resolvent operator $R(s):  H^{k}\to H^{k+1}$ with respect to the time function $\tau$ for $\re s>-\frac{\min\{\kappa_+,\kappa_c\}}{2}(1+2k)$, rather than constructing the resolvent operator in $\{\re s>0\}$ and then meromorphically continuing it after composition with cut-off functions to a sector in $\{\re s<0\}$.

In fact, the poles of this resolvent operator correspond to the eigenvalues of the infinitesimal generator of times translations $\mathcal{A}$, with $S(\tau')=e^{s \tau' \mathcal{A}}$ the map that sends initial data $(\phi,\partial_{\tau}\phi)$ at $\tau=0$ to $(\phi,\partial_{\tau}\phi)$ at $\tau=\tau'$.
\end{itemize}
Regularity quasinormal modes in this setting can therefore be \emph{defined} as the eigenfunctions of the infinitesimal generator of time translations $\mathcal{A}$, restricted to $H^{k+1}\times H^k$ with $k\in \N_0$ suitably large. The corresponding eigenvalues $s$ are the regularity quasinormal frequencies. 

It can easily be shown that the set of regularity quasinormal frequencies \emph{includes} the set
\begin{equation*}
	\{-i\omega\:|\: \textnormal{$\omega$ scattering resonance}\};
\end{equation*}
see for example \cite{warn15}. This inclusion can, however, be \emph{strict}. This, for example, is the case in the model problem studied in \S6 of \cite{warn15} and for the wave equation on the de Sitter spacetime \cite{hixi21, joy22}.

Let us also mention the earlier work of S\'a Barreto--Zworski \cite{bazw97}, who constructed and studied the properties of scattering resonances on Schwarzschild--de Sitter at large values of $|\re \omega|$ and Bony--H\"afner \cite{bonyh}, who derived resonance expansions on Schwarzschild--de Sitter. See also the recent work on the distribution of Schwarzschild(--de Sitter) scattering resonances in \cite{hizw24}. The results of \cite{bazw97, bonyh} were generalized to the slowly-rotating Kerr--de Sitter setting by Dyatlov \cite{semyon1, dya12}. 

The construction of regularity quasinormal modes from \cite{vasy1}  has recently also been extended to the full sub-extremal Kerr--de Sitter family of spacetimes in \cite{peva21}. In the asymptotically anti de Sitter setting, it was shown by Gannot \cite{gan14} that scattering resonances on Schwarzschild-anti de Sitter approach the real axis as $|\re \omega|\to\infty$, which is intimately connected to the presence of stable trapping of null geodesics and the sharpness of energy decay estimates with slow, logarithmic decay in Kerr-anti de Sitter; see \cite{gusmu2}.

Recently, the existence of slowly-damped (``zero-damped'') modes in a neighbourhood of $0\in \C$ was proved in \cite{hixi22} in the setting of Schwarzschild--de Sitter with $M^2\Lambda \ll 1 $, and similarly in near-extremal Reissner--Nordstr\"om--de Sitter spacetimes (where the difference between the event and inner horizon radius is very small) in \cite{joy22}. See also \cite{hin24} for analogous results in sub-extremal Kerr--de Sitter spacetimes. These constructions make use of \emph{quasinormal co-modes} or \emph{dual resonant} states; see also Theorem \ref{thm:mainthmA}(iv) for a characterization of such objects in the asymptotically flat setting.

\subsection{Regularity quasinormal modes in asymptotically flat spacetimes}
\label{sec:regmodesasympfl}
In the asymptotically flat setting, a definition of regularity quasinormal modes is more complicated, due to the fact that the outgoing boundary condition at future null infinity cannot be implemented by considering Sobolev Hilbert spaces. Indeed, the convergence of the power series in \eqref{eq:hypfoliationmode} cannot even be guaranteed by restricting to the space of smooth functions.

This difficulty was first overcome in the setting of the geometric wave equation on extremal Reissner--Nordstr\"om spacetimes by the authors in \cite{gajwar19a} and in a one-dimensional toy model setting in \cite{gajwar19b}. The key insight in these papers is the use of $L^2$-based spaces of functions which are $(2,\sigma)$-Gevrey-regular at infinity, with $\sigma>0$ depending on the region of the complex plane under consideration (i.e.\ Gevrey-regular in $x=\frac{1}{r}$ at $x=0$). A function $f:[0,1]\to \R$ can be said to be $(\alpha,\sigma)$-Gevrey-regular if there exists a constant $C_{\sigma}>0$, such that for all $k\in \N_0$:
\begin{equation*}
	\|f^{(k)}\|_{L^{2}([0,1])}\leq C_{\sigma} \sigma^{-k}k!^{\alpha}.
\end{equation*}
If the above holds for all $\sigma\in \R$, the function $f$ is said to be $\alpha$-Gevrey-regular.

The space of $(\alpha,\sigma)$-Gevrey-regular functions with $\alpha> 1$ therefore may be thought of as sitting between the space of analytic functions ($\alpha=1$) and the space of smooth functions in terms of regularity.

In \cite{gajwar19a,gajwar19b} it was moreover shown that the set of regularity quasinormal frequencies includes the set $\{-i\omega\:,\: \textnormal{$\omega$ scattering resonance}\}$ in a sector of the complex plane excluding the half-line $(-\infty,0]$.

Furthermore, \cite{gajwar19b} demonstrated that the convergence of the power series in \eqref{eq:hypfoliationmode} implies Gevrey regularity of $\hpsi_{\ell m}$ and therefore the Leaver quasinormal frequencies are also included when considering regularity quasinormal frequencies.

As a \emph{corollary} of the methods developed \cite{gajwar19a,gajwar19b}, it was moreover shown that, in the one-dimensional toy model setting of \cite{gajwar19b}, the construction of the meromorphic continuation of the standard cut-off resolvent applies even when the potential under consideration is $(2,\sigma)$-Gevrey-regular, rather than analytic. A straightforward implementation of the complex scaling method fails in this case, but it was observed in \cite{gazw21} that $(2,\sigma)$-Gevrey-regular potentials can be split into an analytic part and an exponentially decaying part, the latter which can treated as a small perturbation when studying cut-off resolvents. With this splitting, classical implementations of the complex scaling method therefore also apply to this $(2,\sigma)$-Gevrey-regular toy model, at least in the context of cut-off resolvents.

In comparison with the results in \cite{gajwar19a,gajwar19b}, Theorem \ref{thm:rough1} of the present paper identifies $L^2$-based spaces of functions that are applicable in a much more general setting of asymptotically flat spacetimes, including the sub-extremal Kerr spacetimes, where a suitable notion of Gevrey-like regularity must also be imposed at infinity.

Besides providing a definition of quasinormal modes that is more general and widely applicable than the definition in terms of scattering resonances, Theorem \ref{thm:rough1}(iii) provides the first construction of the meromorphic continuation of the cut-off resolvent on sub-extremal Kerr spacetimes with respect to the standard Boyer--Lindquist time function. Theorem \ref{thm:rough1} moreover provides an extension of the fixed-$\ell$ result in Schwarzschild \cite{ba} in a sector of the complex plane, showing that scattering resonances do not feature any accumulation points in the limit $\ell\to\infty$ (in the above mentioned sector). See also \cite{stuc24} and the discussion at the end of \S \ref{sec:intro}.

As Theorem \ref{thm:rough1} and Theorem \ref{thm:rough1b}(ii) demonstrate, in contrast with methods involving cut offs and complex scaling, regularity quasinormal frequencies (and therefore also scattering resonances) in Kerr and Kerr--de Sitter with suitably small $M^2\Lambda$ can be studied simultaneously and uniformly in $\Lambda$ with the methods of the present paper.

\subsection{Stability of the quasinormal mode spectrum}
A natural question in the context of the eigenvalue problem of $\mathcal{A}$ is the stability of the point spectrum under suitably small perturbations of the corresponding wave operator. Since $\mathcal{A}$ is not self-adjoint, one cannot simply appeal to classical stability results in spectral theory.

Stability properties of quasinormal spectra were investigated numerically by Nollert and Nollert--Price in \cite{nol96, npr99} in the context of (toy models for) perturbed wave operators of the form $\square_{g_{M,0}}+\delta V$, with $V$ a discontinuous potential function. More recently, the quasinormal spectrum (in)stability question was revisited for \eqref{eq:waveeq} on Schwarzschild in \cite{dgm20, jamash21, cheetal22} for different choices of $V$.

Results pertaining to ``instability'' in the above works are in fact all consistent with the \emph{stability} statement in Theorem \ref{thm:rough1b}(i), because \textbf{in all of the above cases, the potentials under consideration do \underline{not} correspond to small perturbations for $0<\delta\ll 1$, as measured by the norms on $\mathbf{H}_{\alpha}$}. Roughly, speaking $\mathbf{H}_{\alpha}$ consists of functions with finite regularity in a bounded $r$ region $r\leq R$, but with $(2,\sigma)$-Gevrey regularity in $x=\frac{1}{r}$ as $r\to \infty$ with $\sigma$ determined by the index $\alpha$. In particular, the rescaled potential function $r^2V$ therefore needs to be small and more regular than smooth in $x$ near $x=0$ for Theorem \ref{thm:rough1b}(i) to apply, but away from $x=0$, far less regularity is required. 

Let us emphasize however, that it remains open how to even \emph{define} quasinormal modes (either in the sense of Leaver, scattering resonances or regularity quasinormal modes), let alone study their stability properties mathematically, when $r^2V$ is not at least $(2,\sigma)$-Gevrey at $x=0$ for some $\sigma> 0$.

Recently, the question of (in)stability and the characterization of ``smallness' have also been addressed in the context of the conformally covariant wave operator in de Sitter by the second author \cite{war24}.

\section{Main ideas and techniques}
\label{sec:ideastechniques}
In this section, we give an outline of the proofs of Theorems \ref{thm:rough1} and \ref{thm:rough1b} and highlight the main new ideas and tools that are introduced in the paper.\\
\paragraph{\textbf{Infinitesimal generators and Laplace transformed operators}}
Let $\phi$ be a solution to the wave equation \eqref{eq:waveeqds} and let $S(\tau)$ denote the following time translation map:
\begin{equation*}
S(\tau'): (\varrho\cdot \phi, \varrho \cdot \partial_{\tau}\phi)|_{\tau=0}\mapsto (\varrho\cdot \phi, \varrho\cdot \partial_{\tau}\phi)|_{\tau=\tau'},
\end{equation*}
where $\varrho$ is an appropriate radial coordinate (see the later paragraphs of this section for a discussion on the choice of radial coordinate).

By standard local-in-time energy estimates, $S(\tau)$ is well-defined for all $\tau\in [0,\infty)$ with initial data that are smooth in a conformal sense, i.e.\ with respect to the differentiable structure $(x=\frac{1}{\varrho}, \vartheta,\varphi_*)$, with $(\vartheta,\varphi_*)$ appropriate angular coordinates. The corresponding \emph{infinitesimal generator of time translations} is defined as the time derivative $\mathcal{A}:=\frac{d}{d\tau}|_{\tau=0} S(\tau)$ and, formally, takes the form:
\begin{equation*}
\mathcal{A}\left( (\varrho\cdot \phi, \varrho \cdot \partial_{\tau}\phi)|_{\tau=0}\right)=(\varrho \cdot \partial_{\tau}\phi, \varrho\cdot \partial_{\tau}^2\phi)|_{\tau=0}.
\end{equation*}

The time function $\tau$ is chosen so that: its level sets $\Sigma_{\tau}$ are spacelike hypersurfaces that intersect $\mathcal{H}^+$ strictly to the future of the bifurcation sphere and: 1) are asymptotically hyperboloidal in the $\Lambda=0$ case, or 2) intersect the cosmological horizon to the future of the bifurcation sphere in the $\Lambda>0$ case. 

The level sets $\Sigma_{\tau}$ capture the radiation of energy through $\mathcal{H}^+$ and future null infinity $\mathcal{I}^+$ or the cosmological horizon $\mathcal{C}^+$, in contrast with the standard Boyer--Linquist time function $t$, whose level sets $\Sigma'_{t}$ are spacelike hypersurfaces that intersect the bifurcation sphere and approach spacelike infinity $i^0$. See Figure \ref{fig:kerrfoli} for an illustration in the $\Lambda=0$ case.

\begin{figure}[H]
	\begin{center}
\includegraphics[scale=0.45]{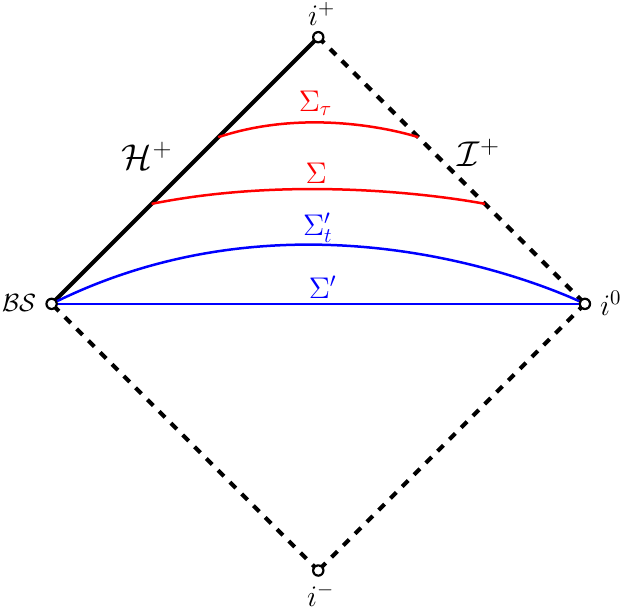}
\end{center}
\caption{A Penrose diagrammatic representation of the Kerr black hole exterior (including the future event horizon $\mathcal{H}^+$), with $\Sigma'=\{t=0\}$ and $\Sigma'_t$ the $t$-level sets, and with $\Sigma=\{\tau=0\}$ and $\Sigma_{\tau}$ the $\tau$-level sets. The point $\mathcal{BS}$ represents the bifurcation sphere and $i^+$, $i^0$, $i^-$ represent future timelike/spacelike/past timelike infinity.}
	\label{fig:kerrfoli}
\end{figure}

In \S \ref{sec:infigen}, we show that $\mathcal{A}$ can be related to the operators $L_{s,\Lambda}$, which is defined as the Laplace transform of the rescaled wave operator $\varrho^3\circ \left(\square_{g_{M,a,\Lambda}}-\frac{2}{3}\Lambda\right)\circ \varrho^{-1}$, i.e.\
\begin{equation*}
L_{s,\Lambda}\hat{\psi}=e^{-s\tau}\varrho^{3}\left(\square_{g_{M,a,\Lambda}}-\frac{2}{3}\Lambda\right)(e^{s\tau}\varrho^{-1} \hat{\psi}),
\end{equation*}
with $\hat{\psi}$ a function on $\Sigma$. We will show that:
\begin{equation*}
\mathcal{A}-s\mathbf{1}=\begin{pmatrix}
0 & \mathbf{1}\\
\mathbf{1} & P-s\mathbf{1}
\end{pmatrix}\begin{pmatrix}
\frac{1}{b} L_{s,\Lambda}& 0\\
0&\mathbf{1}
\end{pmatrix}\begin{pmatrix}
\mathbf{1}& 0\\
-s \mathbf{1} & \mathbf{1}
\end{pmatrix},
\end{equation*}
with $b$ a bounded function that is moreover bounded away from 0 and with $P$ an appropriate first order differential operator.

Formally, the corresponding resolvent operator $(\mathcal{A}-s\mathbf{1})^{-1}$ then takes the form:
\begin{equation*}
(\mathcal{A}-s\mathbf{1})^{-1}=\begin{pmatrix}
\mathbf{1} & 0\\
s \mathbf{1}& \mathbf{1}
\end{pmatrix}\begin{pmatrix}
  L_{s,\Lambda}^{-1}\circ b \mathbf{1}& 0\\
0&\mathbf{1}
\end{pmatrix}\begin{pmatrix}
-(P-s \mathbf{1})& \mathbf{1}\\
\mathbf{1} & 0
\end{pmatrix}.
\end{equation*}
Hence, we can determine the spectrum ${\rm Spect}(\mathcal{A})$ of $\mathcal{A}$ from a suitable characterization of the invertibility properties of the operator $L_{s,\Lambda}$.  In particular, eigenvectors of $\mathcal{A}$ are elements of the form $(\hat{\psi}_s, s\cdot \hat{\psi}_s)$ with $\hat{\psi}_s\in \ker L_{s,\Lambda}$. Theorem \ref{thm:rough1} can therefore be thought of as a \emph{corollary} of the following theorem:

\begin{theorem}[Rough version]
\label{thm:rough2}
Let $\Lambda\geq 0$ and let $L_{s,\Lambda}$ with $s\in \C$ be the Laplace transform of the operator $\varrho^3\circ (\square_{g_{M,a,\Lambda}}-\frac{2}{3}\Lambda)\circ \varrho^{-1}$ with respect to the time function $\tau$ associated to an asymptotically hyperboloidal and future-horizon-intersecting foliation of hypersurfaces. Let $N_+\in \N_0$, $\sigma\in \R$ and $R_{\infty}>r_+$. Denote by ${H}_{\sigma,R_{\infty}}^{N_+}$ and a ${H}_{\sigma,R_{\infty},2}^{N_++1}$ corresponding family of Hilbert spaces of functions on $\Sigma$ (see \S \ref{sec:hilbertspaces} precise definitions). Then, for suitably large $M^{-1}R_{\infty}$:
\begin{enumerate}[label=\emph{(\roman*)}]
\item For $M^2\Lambda$ suitably small, there exists a family of subsets $\Omega_{\sigma,N_+}\subset \C$ with
\begin{equation*}
\Omega:=\bigcup_{\sigma\in \R, N_+\in \N_0}\Omega_{\sigma,N_+}=\left\{{|\rm arg} z|<\frac{2\pi}{3}\right\},
\end{equation*}
such that $L_{s,\Lambda}$ is defined as follows: $L_{s,\Lambda}:  {D}_{\sigma,R_{\infty}}^{N_+}(L_s)\to {H}_{\sigma,R_{\infty}}^{N_+}$, with ${H}_{\sigma,R_{\infty},2}^{N_++1}\subseteq {D}_{\sigma,R_{\infty}}^{N_+}(L_s)\subseteq {H}_{\sigma,R_{\infty}}^{N_+}$ and ${D}_{\sigma,R_{\infty}}^{N_+}(L_s)$ a suitable closed subspace,  admits an inverse $L_{s,\Lambda}^{-1}$ for $s$ in the complement of a discrete subset of $\Omega_{\sigma,N_+}$, such that $s\mapsto L_{s,\Lambda}^{-1}: H_{\sigma,R_{\infty}}^{N_+}\to H_{\sigma,R_{\infty},2}^{N_++1}$ is meromorphic in $\Omega_{\sigma,N_+}$. The set of poles of $s\mapsto L_{s,\Lambda}^{-1}$ does not depend on the precise choice of time function $\tau$ and indices $\sigma$ and $N_+$.
\item The union $\mathscr{Q}_{\rm reg}$ over $\sigma\in \R$ and $N_+\in \N_0$ of all poles of $s\mapsto L_{s,\Lambda}^{-1}$ has no accumulation points in $\Omega$.
\item The cut-off resolvent of \eqref{eq:waveeqds} corresponding to the Boyer--Lindquist time function  $t$ can be meromorphically continued to $\Omega$ and its poles are contained in $i \mathscr{Q}_{\rm reg}$ provided $M^2\Lambda$ is suitably small.
\end{enumerate}
\end{theorem}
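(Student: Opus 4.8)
The plan is to derive everything from a single semi-Fredholm estimate for $L_{s,\Lambda}$ on the scale of spaces $H^{N_+}_{\sigma,R_\infty}$, combined with analytic Fredholm theory; Theorem~\ref{thm:rough1} then follows formally from the block factorisation of $\mathcal A-s\mathbf 1$ recorded in \S\ref{sec:ideastechniques}. Since $e^{-s\tau}\square_{g_{M,a,\Lambda}}e^{s\tau}$ is polynomial of degree two in $s$, the family $s\mapsto L_{s,\Lambda}$ is entire with values in differential operators whose coefficients are conformally smooth up to $\mathcal H^+$ and, crucially, up to $\mathcal I^+$ (respectively $\mathcal C^+$ when $\Lambda>0$). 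For part (i) the strategy is to glue three local estimates, patched with a partition of unity: (a) near the event horizon $\{r=r_+\}$; (b) in the compact region $r_+\le r\le R_\infty$; (c) near null infinity $\{x=0\}$, which for $\Lambda>0$ becomes a neighbourhood of $\mathcal C^+$.

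Near $\mathcal H^+$, a red-shift energy estimate with a vector field transverse to the horizon (equivalently, in Boyer--Lindquist-type coordinates, the analysis of a regular singular point in $r-r_+$) shows that no boundary condition is needed at the horizon once one works at $H^{N_+}$-regularity, and yields $\|\hat\psi\|_{H^{N_++1}}\lesssim\|L_{s,\Lambda}\hat\psi\|_{H^{N_+}}+\|\hat\psi\|_{L^2(r\le R_1)}$ roughly for $\re s>-\kappa_+(N_++\tfrac12)$ (up to a fixed shift from the weights); this is the source of the $\sigma,N_+$-dependence of $\Omega_{\sigma,N_+}$ near the imaginary axis. In the compact region $r_+\le r\le R_\infty$ the hyperboloidal time function is chosen with everywhere-spacelike level sets, so for \emph{each fixed} $s\in\C$ the operator $L_{s,\Lambda}$ is elliptic there and a standard interior elliptic estimate applies with a compact error --- in particular no high-frequency trapping analysis is required at fixed $s$, since we never claim bounds uniform in $|s|$.

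The heart of the matter is the estimate near $\mathcal I^+$. Writing $x=1/\varrho$ with $\varrho$ a radial coordinate adapted to the outgoing null flow, the radial part of $L_{s,\Lambda}$ has at $x=0$ an irregular singular point whose formal power-series solutions diverge at Gevrey rate $(k!)^2$ (equivalently at rate $k!$ in $\varrho$), matching the $(2,\sigma)$-Gevrey spaces $H^{N_++1}_{\sigma,R_\infty,2}$. I would commute $\partial_x^k$ through the equation, track how the $s$-dependent coefficients generate factorial factors, and absorb them into the Gevrey weights via a generating-function/inductive estimate; the set of $s$ for which the resulting sum closes is a (possibly translated) subsector, and a direct computation of its dependence on $\sigma$ and $N_+$ shows that $\bigcup_{\sigma,N_+}\Omega_{\sigma,N_+}=\{|\arg z|<\tfrac{2\pi}{3}\}$, the obstruction being non-summability of the associated Borel-type series once $\arg s$ leaves this sector. (Here $R_\infty$ must be taken large, and $M^2\Lambda$ small, so that on $r\ge R_\infty$ the operator is a small perturbation of its model at $\mathcal I^+$; when $\Lambda>0$ the same region is a neighbourhood of $\mathcal C^+$ and the analysis is of regular-singular type with surface gravity $\kappa_c\downarrow0$, and the point is that this estimate degenerates continuously onto the $\Lambda=0$ Gevrey estimate as $\kappa_c\to0$, which is exactly what makes everything uniform in $\Lambda$.) Patching (a)--(c) and doing the same for the formal adjoint gives, for $s\in\Omega_{\sigma,N_+}$, a global estimate of the form $\|\hat\psi\|_{H^{N_++1}_{\sigma,R_\infty,2}}\lesssim\|L_{s,\Lambda}\hat\psi\|_{H^{N_+}_{\sigma,R_\infty}}+\|\hat\psi\|_{L^2(r\le R_1)}$; hence $L_{s,\Lambda}$ with the natural domain $D^{N_+}_{\sigma,R_\infty}(L_s)$ (the closure of conformally-smooth functions in the graph norm, which one checks satisfies $H^{N_++1}_{\sigma,R_\infty,2}\subseteq D^{N_+}_{\sigma,R_\infty}(L_s)\subseteq H^{N_+}_{\sigma,R_\infty}$) is Fredholm, of index zero since $\Omega_{\sigma,N_+}$ is connected and contains a point with $\re s\gg1$ where a direct coercivity estimate (the $s^2$-term dominating) gives invertibility. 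Analytic Fredholm theory then produces the meromorphic inverse $s\mapsto L_{s,\Lambda}^{-1}$, whose poles are exactly the $s$ where $L_{s,\Lambda}$ fails to be invertible; independence of the pole set from $\tau$, $\sigma$, $N_+$ follows because a solution of $L_{s,\Lambda}\hat\psi=0$ in any of the larger spaces lies automatically in all the smaller ones (propagate regularity out of $\mathcal H^+$ and out of $\mathcal I^+$), and because $L_{s,\Lambda}$ for different $\tau$ are conjugate by the bounded invertible multiplier $e^{s(\tau-\tau')}$ times a power of $\varrho$, whose behaviour at $\mathcal H^+$ and $\mathcal I^+$ is absorbed into the weights.

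Part (ii) is then soft: each $\Omega_{\sigma,N_+}$ carries a discrete pole set, these sets coincide on overlaps by uniqueness of meromorphic continuation, so any compact $K\subset\Omega$ --- covered by finitely many $\Omega_{\sigma,N_+}$ --- meets $\mathscr Q_{\rm reg}$ in a finite set. For part (iii), write $t=\tau+T(r)$ with $T$ the (horizon- and $\mathcal I^+$-singular) height function relating the two foliations; for $\im\omega>0$ the Boyer--Lindquist resolvent $R_{\rm BL}(-is)$ is conjugate to $L_{s,\Lambda}^{-1}$ by multiplication operators built from $e^{\pm sT(r)}$ and powers of $\varrho$, together with the inclusion/restriction maps between the Boyer--Lindquist $L^2$-space and the hyperboloidal Gevrey spaces. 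Conjugating by a fixed cut-off $\chi\in C^\infty_c((r_+,\infty))$ makes $\chi e^{\pm sT(r)}$ a bounded holomorphic multiplier on $\supp\chi$ and makes $\chi f$ Gevrey at $\mathcal I^+$ for every $f$, so $\chi R_{\rm BL}(-is)\chi$ extends (as an operator between suitable local Sobolev spaces, compatibly across the $\Omega_{\sigma,N_+}$) to all of $\Omega$ as a composition of bounded holomorphic operators with $L_{s,\Lambda}^{-1}$; its only poles are therefore among those of $L_{s,\Lambda}^{-1}$, i.e.\ in $i\mathscr Q_{\rm reg}$ under $\omega=is$, and the inclusion may be strict because a pole of $L_{s,\Lambda}^{-1}$ can carry no residue after conjugation by $\chi$ and $e^{-sT}$. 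The main obstacle is step (c): making the Gevrey commutator estimate near $\mathcal I^+$ quantitatively sharp enough that the $\Omega_{\sigma,N_+}$ exhaust exactly $\{|\arg z|<\tfrac{2\pi}{3}\}$, while organising it so that it degenerates uniformly onto the cosmological-horizon estimate as $\Lambda\downarrow0$.
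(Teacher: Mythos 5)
There is a genuine gap at the heart of your step (b). You justify interior elliptic estimates in the region $r_+\le r\le R_\infty$ by the spacelikeness of the hyperboloidal slices, but the spatial principal symbol of $L_{s,\Lambda}$ is $g^{ij}\xi_i\xi_j$, and this is positive definite precisely where the stationary vector field $\partial_\tau$ is timelike, not where the level sets are spacelike. On sub-extremal Kerr with $a\neq 0$ the ergoregion lies outside $\mathcal{H}^+$, and there $L_{s,\Lambda}$ fails to be elliptic for every fixed $s$; this is exactly the obstruction the paper singles out as the reason the Reissner--Nordstr\"om strategy (red-shift near the horizon plus degenerate ellipticity elsewhere) does not extend to Kerr. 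The paper handles it by proving ``zero-frequency'' red-shift estimates on the \emph{whole} region $r\le \tfrac12 R_\infty$ (so that the ergoregion is entirely covered, at the price of gaining only one derivative there), coupling them to weighted elliptic estimates only in the region where $\partial_\tau$ is timelike, and closing the coupling by shifting the angular frequencies by a large parameter $\lambda$ and treating the remainder as the perturbation $\epsilon_0 B_s+K_{s,\lambda}$. None of this is available in your partition-of-unity scheme as stated, so the glued semi-Fredholm estimate you rely on does not follow for $a\neq 0$.

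Two further points. First, an a priori estimate for $L_{s,\Lambda}$ alone does not yield the Fredholm property: you also need solvability, i.e.\ a matching estimate for the adjoint, and with respect to the Gevrey-weighted inner products on $H^{N_+}_{\sigma,R_{\infty}}$ the Hilbert-space adjoint is not a differential operator, so ``doing the same for the formal adjoint'' is not available in these spaces. The paper avoids this by first inverting the regularized operator $\mathfrak{L}_{s,\lambda,\kappa}$ for $\kappa>0$ (where $x=0$ becomes a regular singular point and finite-regularity theory applies), passing to the limit $\kappa\downarrow 0$ via the uniform-in-$\kappa$ estimate, and then applying the analytic Fredholm theorem to $\mathbf{1}-K_{s,\lambda}\circ(\mathfrak{L}_{s,\lambda}+\epsilon_0 B_s)^{-1}$; the same mechanism with $\kappa=2\kappa_c$ is what produces the uniformity in $\Lambda$ that your proposal asserts but does not construct. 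Second, the independence of the pole set of $\sigma$ and $N_+$ requires \emph{improving the Gevrey radius} of kernel elements, which does not follow from a soft propagation-of-regularity remark; the paper proves it by approximating kernel elements by kernels of the $\kappa>0$ operators and invoking Hurwitz's theorem together with the uniform estimates. Your treatment of the horizon threshold $\re s>-(\tfrac12+N_+)\kappa_+$, of the far-region Gevrey commutation giving the sector $|\arg z|<\tfrac{2\pi}{3}$, and of part (iii) via conjugation by $e^{\pm sT(r)}$ and cut-offs is in the same spirit as the paper, but parts (i)--(ii) are not established as proposed.
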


Similarly, Theorem \ref{thm:rough1b} can be thought of as a corollary of the following theorem:
\begin{theorem}[Rough version]
\label{thm:rough2b}
\hspace{1pt}
\begin{enumerate}[label=\emph{(\roman*)}]
\item The poles of $L_{s,0}^{-1}$ are stable under small perturbations $L_s+\delta \widetilde{Q}$, with $ \widetilde{Q}$ a bounded operator with respect to appropriate Hilbert spaces and $\delta>0$ a suitably small constant.
\item Let $\{\Lambda_n\}$ be a sequence of cosmological constants with $\Lambda_n\downarrow 0$ as $n\to \infty$. For each pole $s_*$ of $s\mapsto L_{s,0}^{-1}$, there exists a sequence $\{s_n\}$ of poles of $s\mapsto L_{s,\Lambda_n}^{-1}$ with limit $s_*$. Conversely, for each sequence $\{s_n\}$ of poles of $s\mapsto L_{s,\Lambda_n}^{-1}$ with limit $s_*\in \Omega$, $s_*$ is a pole of $s\mapsto L_{s,0}^{-1}$.
\end{enumerate}
\end{theorem}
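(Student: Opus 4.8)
The plan is to deduce both statements from a robust abstract principle about meromorphic families of Fredholm operators: if $T(s) : X \to Y$ is a holomorphic family of Fredholm operators of index zero on an open connected set $\Omega' \subseteq \Omega$, invertible at some point, then $s \mapsto T(s)^{-1}$ is meromorphic with a discrete pole set, and this pole set is stable under holomorphic perturbations that are small in operator norm \emph{uniformly on compact subsets}. The key input from the (later) main body of the paper is that for each fixed $\sigma, N_+$ the operator $L_{s,\Lambda} : D^{N_+}_{\sigma,R_\infty}(L_s) \to H^{N_+}_{\sigma,R_\infty}$ is Fredholm of index zero and depends holomorphically on $s$ on $\Omega_{\sigma,N_+}$, together with the fact that the inclusion $H^{N_++1}_{\sigma,R_\infty,2} \hookrightarrow H^{N_+}_{\sigma,R_\infty}$ is compact (so that $L_{s,\Lambda} - L_{s',\Lambda}$ and the perturbations $\widetilde Q$ are relatively compact, giving the Fredholm-index-zero property its teeth via analytic Fredholm theory). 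For part (i) I would first fix $\sigma, N_+$ and a compact $K \subset \Omega_{\sigma,N_+}$; since $\widetilde Q$ is bounded $H^{N_++1}_{\sigma,R_\infty,2} \to H^{N_+}_{\sigma,R_\infty}$ and $L_{s,0}^{-1}$ maps into $H^{N_++1}_{\sigma,R_\infty,2}$ with norm bounded uniformly on compact subsets avoiding the (discrete) pole set, a Neumann series argument shows $L_{s,0} + \delta \widetilde Q = (\mathbf{1} + \delta \widetilde Q \, L_{s,0}^{-1}) L_{s,0}$ is invertible for $\delta$ small wherever $L_{s,0}$ is, and a Rouché / winding-number argument for the finite-meromorphic family then shows that near each pole $s_*$ of $L_{s,0}^{-1}$ the perturbed family has poles whose total multiplicity equals that of $s_*$, converging to $s_*$ as $\delta \downarrow 0$. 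Taking the union over $\sigma, N_+$ and invoking Theorem \ref{thm:rough2}(i), which guarantees the pole set is independent of $\sigma, N_+$, yields the statement for $\mathscr{Q}_{\rm reg}$.

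For part (ii) the strategy is the same two-sided Rouché argument, but with the small parameter being $\Lambda_n$ rather than $\delta$. The crucial point is that the operators $L_{s,\Lambda_n}$ and $L_{s,0}$ should differ by an operator that is small in the relevant operator norm, uniformly on compact subsets of $\Omega_{\sigma,N_+}$, as $\Lambda_n \downarrow 0$; this is where the uniform-in-$\Lambda$ structure emphasised throughout the introduction is essential, and it is exactly the content that the main estimates of the paper are designed to provide (the coefficients of $\varrho^3 \circ (\square_{g_{M,a,\Lambda}} - \tfrac{2}{3}\Lambda) \circ \varrho^{-1}$, and of its Laplace transform, depend continuously on $\Lambda$ down to $\Lambda = 0$, with the $\Lambda$-dependent corrections vanishing in the appropriate norm). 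Granting that, the forward direction (each pole $s_*$ of $L_{s,0}^{-1}$ is approached by poles $s_n$ of $L_{s,\Lambda_n}^{-1}$) follows from Rouché applied on a small circle around $s_*$ avoiding all other poles: for $n$ large the perturbation is small enough that the winding number of $\det$-type quantity (or, more precisely, the multiplicity count from the Gohberg--Sigal theory of holomorphic operator-valued functions) is preserved, so at least one pole lies inside. The converse direction (a limit $s_* \in \Omega$ of poles $s_n$ must itself be a pole of $L_{s,0}^{-1}$) follows by contradiction: if $L_{s,0}$ were invertible in a neighbourhood of $s_*$, then by the same Neumann series estimate $L_{s,\Lambda_n}$ would be invertible there for $n$ large, contradicting $s_n \to s_*$.

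The main obstacle I anticipate is \emph{not} the abstract meromorphic-Fredholm machinery, which is standard once the structural facts are in place, but rather establishing the two quantitative inputs with the correct uniformity: (a) that $\|L_{s,\Lambda_n} - L_{s,0}\|$ in the operator norm $H^{N_++1}_{\sigma,R_\infty,2} \to H^{N_+}_{\sigma,R_\infty}$ tends to zero uniformly on compact subsets of $\Omega_{\sigma,N_+}$, which requires controlling the behaviour of the $\Lambda$-dependent metric coefficients near the cosmological horizon (whose surface gravity degenerates as $\Lambda \downarrow 0$) in the Gevrey-type norms at null infinity — the region $r \to \infty$ at $\Lambda = 0$ is, for $\Lambda > 0$, a region of bounded $r$ near $\mathcal{C}^+$, so the two functional-analytic frameworks must be matched carefully; and (b) uniform-in-$\Lambda$ bounds on $L_{s,\Lambda_n}^{-1}$ away from its poles, so that the Neumann series and Rouché arguments can be run with constants independent of $n$. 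These are precisely the ``uniform-in-$\Lambda$ resolvent estimates'' promised in the abstract, so the proof of Theorem \ref{thm:rough2b} should reduce to citing those estimates once proved; the remaining work is the bookkeeping needed to pass from estimates on $L_{s,\Lambda}$ to the meromorphic continuation statement and the pole-counting, together with the observation (again from Theorem \ref{thm:rough2}(i)) that the resulting pole set is intrinsic, so that statements about $\mathscr{Q}_{\rm reg}$ do not depend on the auxiliary indices $\sigma, N_+$ or on the choice of time function $\tau$.
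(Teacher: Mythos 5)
Your argument for part (i) is essentially the paper's: the identity $L_s+\delta\widetilde Q=(\mathbf{1}+\delta\widetilde Q L_s^{-1})L_s$ with a Neumann series away from the poles, plus a pole-persistence argument near each pole (the paper runs this via Hurwitz's theorem applied to the compact family $\mathbf{1}-K_{s,\lambda}T_s$ rather than Gohberg--Sigal, but these are interchangeable here), and the independence of the pole set from $\sigma,N_+$ to globalize. No issues there.

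For part (ii) there is a genuine gap in the quantitative input you propose. You want $\|L_{s,\Lambda_n}-L_{s,0}\|_{H^{N_++1}_{\sigma,R_\infty,2}\to H^{N_+}_{\sigma,R_\infty}}\to 0$ uniformly on compacta, and then to run the same Neumann series and Rouch\'e arguments. But this operator-norm smallness is false: the difference contains the term $2\kappa_c\,\partial_x\bigl(x\,\partial_x(\cdot)\bigr)$ coming from replacing $\partial_x(x^2\partial_x\cdot)$ by $\partial_x(x(x+2\kappa_c)\partial_x\cdot)$ near the conformal boundary. The operator $x\partial_x^2$ is not bounded from $H^{N_++1}_{\sigma,R_\infty,2}$ to $H^{N_+}_{\sigma,R_\infty}$ — the domain norm only controls $x^4|\partial_x^{n+2}f_\ell|^2$ while the target norm of $x\partial_x^2 f$ requires $x^2|\partial_x^{n+2}f_\ell|^2$, a loss of one power of $x$ at $x=0$ — so multiplying an \emph{unbounded} operator by the small constant $\kappa_c$ does not produce a small bounded operator, and $\kappa_c\,\partial_x(x\partial_x\cdot)\circ L_{s,0}^{-1}$ is not even obviously well defined on $H^{N_+}_{\sigma,R_\infty}$. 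This is exactly why the paper does not perturb at the level of the operators: it instead proves the additional $\kappa$-weighted a priori estimate $\kappa\,\|x\partial_x^2\mathfrak{L}_{s,\lambda,\kappa}^{-1}f\|_{G_{\sigma,R_\infty}}\le C\|f\|_{H^{N_+}_{\sigma,R_\infty}}$ (uniformly in $\kappa$), and then establishes convergence of the \emph{inverses} $\mathfrak{L}_{s,\lambda,\kappa_n}^{-1}\to\mathfrak{L}_{s,\lambda,0}^{-1}$ in operator norm on $H^{N_+}_{\sigma,R_\infty}$ via the second resolvent identity combined with a commutator identity that trades $\partial_x\circ x$ past $\mathfrak{L}_{s,\lambda,\kappa_m}$, using one factor of $\kappa_m$ to absorb the unbounded commutator term and the factor $|\kappa_n-\kappa_m|$ for smallness. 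The Hurwitz/Rouch\'e step is then applied to the compact family $K_{s,\lambda}T_{s,n}$, for which convergence of the inverses is exactly what is needed. Your converse direction has the same problem, since it again invokes the Neumann series for $L_{s,\Lambda_n}$ as a perturbation of $L_{s,0}$; the paper instead argues by compactness, extracting from normalized kernel elements $\hpsi_{s_n}\in\ker L_{s_n,\Lambda_n}$ (uniformly bounded in $H^{N_++1}_{\sigma,R_\infty,2}$ by the uniform-in-$\Lambda$ estimates) a subsequence converging in the compactly embedded space $H^{N_++1}_{\sigma',R_\infty,2}$, $|\sigma'|<|\sigma|$, to a nontrivial element of $\ker L_{s_*,0}$. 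You correctly flagged the matching of the two functional frameworks near $\mathcal{C}^+$ versus $\mathcal{I}^+$ as the main obstacle, but the resolution you sketch (operator-norm smallness of the difference) is not available; the resolvent-level convergence with the extra $\kappa$-weighted estimates is the essential new ingredient.
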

We will now sketch the proofs of Theorems \ref{thm:rough2} and \ref{thm:rough2b}. References to detailed versions of the steps below can be found in \S \ref{sec:guide}.
\\
\paragraph{\textbf{Sketch of the proof of Theorem \ref{thm:rough2}(i)}}
The proof of (i) constitutes the main part of Theorem \ref{thm:rough2}. The goal is to show that the following operator is meromorphic in $s$:
\begin{equation*}
L_{s,\Lambda}^{-1}: {H}_{\sigma,R_{\infty}}^{N_+}\to {H}_{\sigma,R_{\infty},2}^{N_++1}\subset H_{\sigma,R_{\infty}}^{N_+},
\end{equation*}
with ${H}_{\sigma,R_{\infty},2}^{N_++1}$ a Hilbert space consisting of functions in ${H}_{\sigma,R_{\infty}}^{N_+}$ whose derivatives up to first order are also contained in ${H}_{\sigma,R_{\infty},2}^{N_++1}$ and whose second-order derivatives, restricted outside a suitable bounded set, with potential additional degenerate factors, are also contained in ${H}_{\sigma,R_{\infty}}^{N_+}$;  see \S \ref{sec:hilbertspaces} for a precise definition. Key to the proof is that ${H}_{\sigma,R_{\infty},2}^{N_++1}$ is \emph{compactly} embedded in ${H}_{\sigma,R_{\infty}}^{N_+}$.

In the discussion below, we will first restrict to the $\Lambda=0$ setting. We denote $L_s=L_{s,0}$. As we will, see a simplified version of $L_{s,\Lambda}$ with $\Lambda>0$ will play an important role, even when investigating $L_s$.

We first show that we can decompose $L_s$ as follows:
\begin{equation}
\label{eq:introsplittingLs}
L_s=\mathfrak{L}_{s,\lambda}+\epsilon_0\cdot B_s+K_{s,\lambda},
\end{equation}
with $\epsilon_0>0$ arbitrarily small and with $\lambda=\mathfrak{l}(\mathfrak{l}+1)$, where $\mathfrak{l}\in \N_0$ is a large parameter that arises from an \emph{angular frequency shift} of lower angular frequencies. Furthermore:
\begin{align*}
\mathfrak{L}_{s,\lambda}^{-1}:&\: {H}_{\sigma,R_{\infty}}^{N_++1}\to {H}_{\sigma,R_{\infty},2}^{N_++1}\ssubset{H}_{\sigma,R_{\infty}}^{N_++1} \quad \textnormal{is bounded for all $s\in \Omega_{\sigma, N_+}$, for suitably large $\mathfrak{l}\in \N_0$},\\
 B_s:&\: {H}_{\sigma,R_{\infty},2}^{N_++1}\to {H}_{\sigma,R_{\infty}}^{N_+}\quad \textnormal{is bounded for all $s\in \Omega_{\sigma, N_+}$},\\
 K_{s,\lambda}:&\: {H}_{\sigma,R_{\infty},2}^{N_++1}\to {H}_{\sigma,R_{\infty},2}^{N_++1}\ssubset{H}_{\sigma,R_{\infty}}^{N_++1}\quad \textnormal{is bounded for all $s\in \Omega_{\sigma, N_+}$ and $\mathfrak{l}\in \N_0$}.
\end{align*}
For $\epsilon_0>0$ suitably small, it follows that $\|\epsilon_0 B_s\circ \mathfrak{L}_{s,\lambda}^{-1}\|<1$, so
\begin{equation*}
(\mathfrak{L}_{s,\lambda}+\epsilon_0\cdot B_s)^{-1}=\mathfrak{L}_{s,\lambda}^{-1}(\mathbf{1}+B_s\mathfrak{L}_{s,\lambda}^{-1})^{-1}
\end{equation*}
is a well-defined and bounded linear operator from ${H}_{\sigma,R_{\infty}}^{N_+}$ to ${H}_{\sigma,R_{\infty},2}^{N_++1}$. Meromorphicity of $s\mapsto L_s^{-1}$ then follows from a direct application of the analytic Fredholm theorem, using that $K_{s,\lambda}\circ \mathfrak{L}_{s,\lambda}^{-1}$ is a compact operator from  ${H}_{\sigma,R_{\infty}}^{N_+}$ to ${H}_{\sigma,R_{\infty}}^{N_+}$.\\

\paragraph{\textbf{Coordinate choices}}
In order to guarantee splitting of $L_s$ in \eqref{eq:introsplittingLs}, we do not work in ingoing Kerr coordinates $(v,r,\theta,\varphi_*)$; see \S \ref{sec:geometry} for more details on these coordinates. Instead, we modify $r$ and $\theta$ in the region $r\geq R_{\infty}$, with $R_{\infty}\gg M$ to obtain a new radial function $\varrho$ and a new angular function $\vartheta$. This change is motivated by the fact that in the case $M=0$, the Kerr spacetime in Boyer--Lindquist coordinates $(t,r,\theta,\varphi)$ is isometric to the Minkowski spacetime in non-standard coordinates, where the surfaces of constant $t$ and $r$ are ellipsoids with eccentricity parameter $a^2$. The functions $\varrho$ and $\vartheta$ are chosen so that in the $M=0$ case the surfaces of constant $t$ and $\varrho$ are round spheres. In other words, we choose coordinates so that as our radial coordinate goes to infinity, the Kerr metric approaches the Minkowski metric in spherical coordinates in a suitably fast way.\footnote{The coordinate transformation $(r,\theta)\mapsto (\varrho,\vartheta)$ in the $M=0$ setting also plays an important role in the strategy for constructing double null coordinates on Kerr spacetime by Pretorius--Israel in \cite{preis98}. In \cite{preis98}, however, the choice of $(\rho,\vartheta)$ is $M$-dependent, so that the spheres of constant $t$ and $\rho$ form the intersection of the null hypersurfaces of a double null foliation. In the present paper, it suffices to take $(\rho,\vartheta)$ to agree with the $M=0$ expressions for suitably large values of $r$.}

In practice, it is often more convenient to work with the inverted coordinate $x=\frac{1}{\varrho}$, which has a finite range on the spacetime.

In the $\Lambda>0$ case, we consider a related change of coordinates. In the $M=0$ case, these correspond to standard coordinates on a static patch of de Sitter. We refer to \S \ref{sec:KdS} for the precise coordinate transformation.\\
\paragraph{\textbf{Inverting $\mathfrak{L}_{s,\lambda}$}}
Property (i) therefore follows from invertibility of $\mathfrak{L}_{s,\lambda}$ for $s\in \Omega_{\sigma,N_+}$. Rather than establishing this directly, we first \emph{regularize} the operator at infinity by considering
\begin{equation*}
\mathfrak{L}_{s,\lambda,\kappa}:=\kappa \partial_{x}(\chi_{\varrho\geq R_{\infty}} x \partial_{x} (\cdot))+\mathfrak{L}_{s,\lambda},
\end{equation*}
with $\kappa>0$ and $\chi_{\varrho\geq R_{\infty}}$ a smooth cut-off function that is equal to 1 in $\varrho\geq R_{\infty}$. The inclusion of the $\kappa$ term can be interpreted in two ways:
\begin{enumerate}
\item (A geometric perspective) Consider the Kerr-de Sitter metric $g_{M,a,\Lambda}$, which features an additional parameter $\Lambda>0$, such that $g_{M,a}=g_{M,a,0}$, and consider the conformally invariant wave operator $\square_{g_{M,a}}-\frac{2\Lambda}{3}\mathbf{1}$.  The level set $\{x=0\}$ then corresponds to the future cosmological horizon $\mathcal{C}^+$. The consideration of $\Lambda>0$ leads to the following perturbation of $L_s$:
\begin{equation*}
2\kappa_c \partial_{x}(\chi_{\varrho\geq R_{\infty}} x \partial_{x} (\cdot))+\ldots
\end{equation*}
with $\kappa_c>0$ the surface gravity of $\mathcal{C}^+$ with $\ldots$ denoting terms that can be treated as small perturbations. Hence, the $\kappa$-term in $\mathfrak{L}_{s,\lambda,\kappa}$ \emph{effectively} adds a non-zero surface gravity to future null infinity $\mathcal{I}^+$ (interpreted as a null hypersurface in a conformal extension of the Kerr spacetime). See \S \ref{sec:KdS} for details.
\item (An ODE perspective) In the region $\varrho \geq R_{\infty}$, $\mathfrak{L}_{s,\lambda,\kappa }$ takes the following form when restricted to \emph{spherical harmonic modes} $\hpsi_{\ell}$, which are projections of $\hpsi$ to the spherical harmonics $Y_{\ell m}$ with angular frequency $\ell$:
\begin{equation*}
(\mathfrak{L}_{s,\lambda,\kappa}\hpsi)_{\ell}=\partial_{x}(x(\kappa+x)\partial_{x}\hpsi_{\ell})+2 s\partial_{x}\hpsi_{\ell}-\mathfrak{l}(\mathfrak{l}+1)\hpsi_{\ell},
\end{equation*}
where $\mathfrak{l}=\ell+l$ for $\ell\leq l$ and $\mathfrak{l}=\ell$ for $\ell>l$. The ODE $(\mathfrak{L}_{s,\lambda,\kappa}\hpsi)_{\ell}=f_{\ell}$ has a regular singular point at $x=0$ if $\re s<0$ and $\kappa>0$, but the singular point becomes irregular as $\kappa=0$. Hence, $\kappa$ has a regularizing effect on the ODE. 

In contrast, $\varrho=r_+$ (the future event horizon $\mathcal{H}^+$) is always a regular singular point, since the surface gravity of the event horizon $\kappa_+$ is always positive on sub-extremal Kerr spacetimes.
\end{enumerate}
The advantage of considering $\mathfrak{L}_{s,\lambda,\kappa}$ is that it is possible to obtain estimates for $\mathfrak{L}_{s,\lambda,\kappa}^{-1}$ involving function spaces with only a finite number of derivatives. Using the existence of $\mathfrak{L}_{s,\lambda,\kappa}^{-1}$ as an operator on appropriate Sobolev spaces, restricted to finite-$m$ azimuthal modes, as derived in \cite{warn15}, it is possible to show that for $\re s>- \frac{\kappa}{2}(1+2N_{\kappa})$,
\begin{equation*}
\mathfrak{L}_{s,\lambda,\kappa}^{-1}: {H}^{N_{\kappa}}(\widehat{\Sigma}) \to {H}^{N_{\kappa}+1}(\widehat{\Sigma}),
\end{equation*}
is well-defined and bounded, where $N_{\kappa}>0$ is suitably large depending on $\kappa$. In particular, $\|\mathfrak{L}_{s,\lambda,\kappa}^{-1}\|_{{H}^{N_{\kappa}}(\widehat{\Sigma}) \to {H}^{N_{\kappa}+1}(\widehat{\Sigma})}$ is finite and \underline{depends on $\kappa$.}

The core of part (i) is to show that for $s\in \Omega_{\sigma, N_+}$, we can in fact estimate \underline{uniformly in $\kappa$}:
\begin{equation}
\label{eq:intromainest}
\|\mathfrak{L}_{s,\lambda,\kappa}^{-1}f\|_{{H}_{\sigma,R_{\infty},2}^{N_++1}}\leq C_{\lambda,\sigma,R_{\infty}} \|f\|_{{H}_{\sigma,R_{\infty}}^{N_+}}.
\end{equation}
Then $\mathfrak{L}_{s,\lambda}^{-1}$ can be constructed as the following limit:
\begin{equation*}
\mathfrak{L}_{s,\lambda}^{-1}:=\lim_{\kappa \downarrow 0}\mathfrak{L}_{s,\lambda,\kappa}^{-1}
\end{equation*}
where the limit is defined with respect to the operator norm.

Most of the analysis in the paper concerns the uniform-in-$\kappa$ estimate \eqref{eq:intromainest}. \\

\paragraph{\textbf{Nature of the relevant norms}}
The ${H}_{\sigma,R_{\infty}}^{N_+}$ norm can roughly be decomposed into four parts, which each have a different character: for $R_{\infty}\gg M$, with $ \slashed{\nabla}_{\s^2}$ denoting the covariant derivative on the unit round sphere $\s^2$ with respect to the angular coordinates $(\vartheta,\varphi_*)$ and $\slashed{\Delta}_{\s^2}$ the corresponding angular Laplacian we have:
\begin{enumerate}[label=(\alph*)]
\item \textbf{Sobolev norms in bounded regions}: $\|\cdot\|_{H^{N_+}(\Sigma \cap\{\varrho\leq 3R_{\infty}\})}+\|\slashed{\nabla}_{\s^2}(\cdot)\|_{H^{N_+}(\Sigma \cap\{\varrho\leq \frac{R_{\infty}}{3}\})}$, 
\item Sobolev norms in far-away, but bounded regions that are \textbf{microlocal in the angular directions}: $\|(\mathbf{1}-\slashed{\Delta}_{\s^2})^{-\frac{w_p}{4}}(\cdot)\|_{H^{N_+}(\Sigma \cap\{ 3R_{\infty}\leq \varrho\leq 4R_{\infty}\})}$, with $w_p$ a smooth cut-off function such that $w_p(\varrho)=0$ for $\varrho\leq 3R_{\infty}$ and $w_p(\rho)=p$ for $\varrho\geq 4R_{\infty}$, where $p>0$ is a suitably large constant.
\item \textbf{1-Gevrey-type boundary norms} at $\varrho=2R_{\infty}$ with infinitely many derivatives: 
\begin{equation*}
\|\cdot\|^2_{B_{R_{\infty}}}=\sum_{\ell,n\in \N_0}\frac{c^n}{\max\{n+1,\ell+1\}^{2n}}\|(\varrho^2\partial_{\varrho})^n(\cdot)|_{\varrho=2R_{\infty}}\|^2_{L^2(\s^2)},
\end{equation*}
for some constant $c>0$.
\item \textbf{$L^2$-based 2-Gevrey-type norms}: for lower radial derivatives
\begin{equation*}
\sum_{\ell\in \N_0}\sum_{n=0}^{\max\{\ell-1,0\}}\frac{(4\sigma)^{2n}(\ell-(n+1))!^2}{(\ell+n+1)!^2} \int_{2R_{\infty}}^{\infty}\|(\varrho^2\partial_{\varrho})^n(\cdot)\|^2_{L^2(\s^2)}(x)\, \varrho^2 d\varrho
\end{equation*}
and for higher radial derivatives
\begin{equation*}
	\sum_{\ell\in \N_0}(\ell+1)^3\sum_{n=\ell}^{\infty}\frac{\sigma^{2n}}{n!^2(n+1)!^2}  \int_{2R_{\infty}}^{\infty}\|(\varrho^2\partial_{\varrho})^n(\cdot)\|^2_{L^2(\s^2)}\, \varrho^2 d\varrho.
	\end{equation*}
\end{enumerate}

Note that (c) and (d) involve so-called ``Gevrey-type norms'' in $0\leq x\leq x_0$, with $x_0=\frac{1}{2R_{\infty}}$. 

For a function $f: [0,x_0]\times \s^2\to \C$, we define the $L^2$-based $(\alpha,b)$-Gevrey norm as follows:
\begin{multline*}
\sqrt{\sum_{n\in\N_0}\sum_{n_1+n_2=n}\frac{b^{2n}}{n!^{2\alpha}}\int_0^{x_0} \|\snabla_{\s^2}^{n_1}\partial_x^{n_2}f\|^2_{L^2(\s^2)}\,dx}\\
\sim \sqrt{\sum_{\ell \in\N_0}\sum_{n\in\N_0}\sum_{m\in \N_0}\frac{(e\cdot b)^{2(n+m)} (\ell(\ell+1))^{m}}{(n+m)^{2\alpha (n+m+1)}}\int_0^{x_0} \| \partial_x^{n}f_{\ell}\|^2_{L^2(\s^2)}\,dx},
\end{multline*}
where the equivalence of norms follows from the orthogonality of the spherical harmonic modes $f_{\ell}$ of $f$ and from an application of Stirling's approximation formula: $n!\sim n^{n+1}e^{-n}$. 

Similarly, we consider $L^{\infty}$-based $(\alpha,b)$-Gevrey norms:
\begin{multline*}
\sqrt{\sum_{n\in\N_0}\sum_{n_1+n_2=n}\frac{b^{2n}}{n!^{2\alpha}} \|\snabla_{\s^2}^{n_1}\partial_x^{n_2}f\|^2_{L^{\infty}([0,x_0]\times \s^2)}}\\
\sim \sqrt{\sum_{\ell \in\N_0}\sum_{n\in\N_0}\sum_{m\in \N_0}\frac{(e\cdot b)^{2(n+m)} (\ell(\ell+1))^{m}}{(n+m)^{2\alpha (n+m+1)}}\| \partial_x^{n}f_{\ell}\|^2_{L^{\infty}([0,x_0]\times \s^2)}}
\end{multline*}

Note that the space of $(1,b)$-Gevrey norms corresponds to the space of analytic functions with analyticity radius depending on the exponent $b$.

It is straightforward to show that the boundary norm in (iii) can be bounded above by $L^{\infty}$-based $(1,b)$-Gevrey norms restricted to an arbitrarily small neighbourhood of $x=x_0$ for a suitable exponent $b$.

In (d), the $n\geq \ell$ part of the norm can immediately be bounded above by an $L^2$-based $(2,b)$-Gevrey norm with suitable exponent $b$. The same holds for the $n\leq \ell-1$ part of the norm, but this is less straightforward. Suppose $n\leq (1-\epsilon)(\ell-1)$, for some $0<\epsilon<1$. Without loss of generality, we can take $\ell\geq 1$. Then, by Stirling's approximation formula:
\begin{equation*}
\frac{(\ell-(n+1))!^2}{(\ell+(n+1))!^2}\lesssim e^{4(n+1)}\epsilon^{2(\ell-(n+1)+1)}\frac{\ell^{2(\ell-(n+1)+1)}}{(2 \ell+1)^{2(\ell+(n+1)+1)}}\leq e^{4(n+1)}\ell^{-4(n+1)}\leq e^{4(n+1)}(n+1)^{-4(n+1)}.
\end{equation*}
Hence, the $n\leq (1-\epsilon)(\ell-1)$ part of the form can also be bounded by an $L^2$-based $(2,b)$-Gevrey norm with suitable exponent $b$. Finally, consider the part where $(1-\epsilon)(\ell-1)\leq n\leq \ell-1$, where we may assume without loss of generality that $n\geq 1$. Then we apply Stirling's approximation formula again to estimate:
\begin{equation*}
\frac{(\ell-(n+1))!^2}{(\ell+(n+1))!^2}\lesssim \frac{1}{(2\ell)!^2}\lesssim e^{-2\ell}(2\ell)^{-4\ell-2}\leq e^{-2n}2^{-4n-2}n^{-4n-2}.
\end{equation*}

\paragraph{\textbf{Deriving \eqref{eq:intromainest}}}
The different norms in (a)--(d) arise from different types of fixed-frequency analogues of energy estimates and elliptic estimates in different ranges of the radial coordinate $\varrho$, which are indicated in Figure \ref{fig:rvalues}. 
\begin{figure}[h!]
	\begin{center}
\includegraphics[scale=0.75]{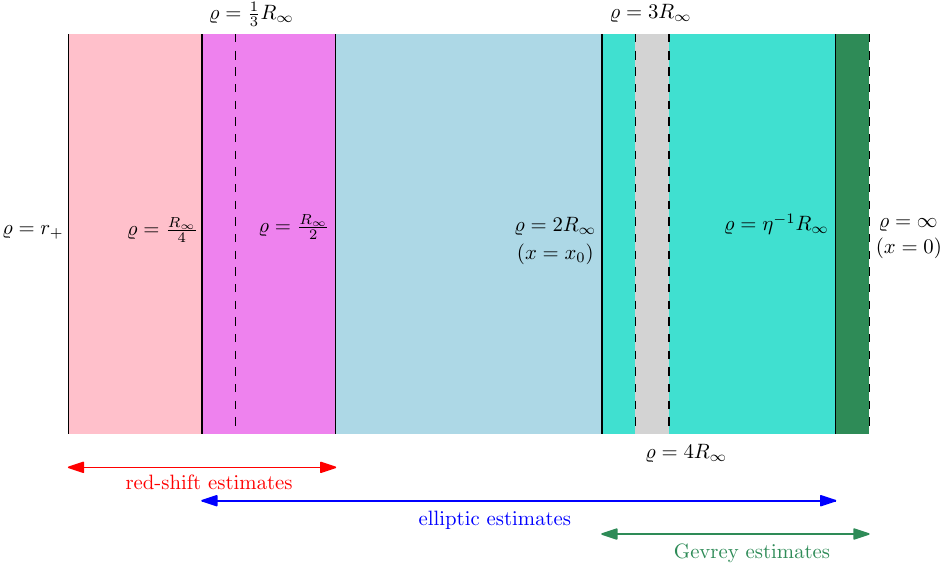}
\end{center}
\caption{The spacetime regions labelled by the radial function $\varrho$ where the different types of estimates are applied.}
	\label{fig:rvalues}
\end{figure}

These are
\begin{itemize}
\item \textbf{Red-shift estimates}: We consider the integral of:
\begin{equation*}
\re(\varrho^{\delta}\overline{\partial_{\varrho}(\varrho^{-1}\hpsi})\cdot\mathfrak{L}_{s,\kappa}\hpsi)
\end{equation*}
for $\delta>0$ suitably small and integrate by parts. These estimates rely crucially on the positivity of $\kappa_+$, the surface gravity of the Kerr event horizon, and they may be thought of as zero-frequency analogues of the physical space red-shift estimates of \cite{redshift}. In contrast with their physical space analogues, it is important that these estimates are valid for large values of $\varrho$ and not just in a neighbourhood of $\varrho=r_+$, with $r_+$ the radius of the event horizon. The \emph{ergoregion} of the black hole, where the vector field $\partial_{\tau}$ fails to be timelike, is entirely included in the region where we apply red-shift estimates. With these estimates, we gain control of \underline{one} extra derivative of $\hpsi$ compared to $\mathfrak{L}_{s,\kappa}\hpsi$.
\item \textbf{Elliptic estimates}: These estimates concern a far-away but bounded region, excluding the ergoregion, where $\partial_{\tau}$ is timelike and the operator $\mathfrak{L}_{s,\lambda,\kappa}$ is therefore elliptic. As a result, we can consider elliptic estimates obtained by splitting
\begin{equation*}
\mathfrak{L}_{s,\lambda,\kappa}\hpsi=\textnormal{ second order derivatives}+\mathfrak{R}_{s,\lambda}
\end{equation*}
and integrating the weighted square
\begin{equation*}
x^{1-\delta}(\lf+1)^{-w_p}|\mathfrak{L}_{s,\lambda,\kappa}\hpsi-\mathfrak{R}_{s,\lambda}\hpsi|^2.
\end{equation*}
For suitably large $\lambda$, we can arrive at an elliptic estimate by integrating by parts, with boundary terms on the right-hand side. With these estimates, we gain control of \underline{two} extra derivatives of $\hpsi$ compared to $\mathfrak{L}_{s,\kappa}\hpsi$. The red-shift and elliptic estimates combined provide control of the norms  in (a) and (b). 
\item \textbf{Higher-order gevrey estimates}: These estimates take place in the region $\varrho\geq 2R_{\infty}$ where $\mathfrak{L}_{s,\lambda,\kappa}$ takes the simple form:
\begin{equation*}
(\mathfrak{L}_{s,\lambda,\kappa}\hpsi)_{\ell}+\mathfrak{l}(\mathfrak{l}+1)\hpsi_{\ell}=\partial_{x}(x(\kappa+x)\partial_{x}\hpsi_{\ell})+2 s\partial_{x}\hpsi_{\ell}.
\end{equation*}
Ignoring the $\kappa$ term, this is simply the Laplace transform of $\varrho^{3}\square (\varrho^{-1} \cdot)$ with $x=\frac{1}{\varrho}$ and $\square$ the standard wave operator with respect to the Minkowski spacetime and standard spherical coordinates $(\varrho,\vartheta,\varphi_*)$, where $\tau$-level sets are outgoing null hypersurfaces.

The above equation can easily be commuted with $\partial_x=-\varrho^2\partial_{\varrho}$ and we integrate:
\begin{equation}
\label{eq:introhoeqL}
\left|[\lf(\lf+1)-n(n+1)]\partial_x^n\hpsi+\partial_x^n (\mathfrak{L}_{s,\lambda,\kappa}\hpsi)_{\ell}\right|^2
\end{equation}
to obtain weighted $L^2$-estimates for derivatives (up to second order) of $\hphi$. In order for these type of estimates to close, we sum them up with the following weights:
\begin{equation*}
	\sum_{n=\lf}^{N_{\infty}}\frac{|2s|^2\tilde{\sigma}^{2n}}{n!^2(n+1)!^2},
\end{equation*}
with $\tilde{\sigma}^2\in (0,\infty)$ restricted to an appropriate range.

 Here it is important to note that for $n=\lf$, the following equation holds:
\begin{equation*}
\partial_{x}(x(\kappa+x)\partial_{x}^{\lf+1}\hpsi_{\ell})+2 \left[s+(n+1) \left(x+\frac{\kappa}{2}\right)\right]\partial_{x}^{\lf+1}\hpsi_{\ell}=(\mathfrak{L}_{s,\lambda,\kappa}\hpsi)_{\ell}.
\end{equation*}
When $\kappa=0$, the physical space analogue of the above equation is:
\begin{equation}
\label{eq:consvlaw}
\partial_{x}(x^2\partial_{x}^{\lf+1}\psi_{\ell})+2 \left[\partial_{\tau}+(n+1) x\right]\partial_{x}^{\lf+1}\psi_{\ell}=0
\end{equation}
with $\psi=\varrho\phi$. Evaluating the above expression at $x=0$ gives $\partial_{\tau}\partial_{x}^{\lf+1}\psi_{\ell}|_{\mathcal{I}^+}=0$, which is a conservation law that was first pointed out by Newman and Penrose \cite{np2}. \emph{The existence of this conservation law in Minkowski is directly related to the fact that the Gevrey estimates close when summing over $n\geq \lf$.} With these estimates, we gain control of \underline{two} extra derivatives of $\hpsi$ compared to $\mathfrak{L}_{s,\kappa}\hpsi$. These estimates are the origin of the restriction to the sector $\{\arg z<\frac{2\pi}{3}\}$.
\item \textbf{Lower-order gevrey estimates}: It remains to control an appropriately weighted sum terms with $n< \lf$. By summing elliptic-type estimates in the region $\varrho\geq 2R_{\infty}$, we obtain control of lower-order derivatives in terms of the higher-order derivatives (which we already control independently!) and boundary terms at $\varrho=2R_{\infty}$. To control the boundary terms, we simply use the equation $\eqref{eq:introhoeqL}$ to express an appropriately weighted sum of higher-order derivatives $\partial_{\varrho}^{k+2}\hpsi$ in terms of $\hpsi$ and $\partial_{\varrho}\hpsi$. In this step, we obtain, in particular, control over the boundary norm in (c) above and we can use this to control the remaining lower-order (in $n$) Gevrey norms in (d). With these estimates, we gain control of \underline{two} extra derivatives of $\hpsi$ compared to $\mathfrak{L}_{s,\kappa}\hpsi$. 
\end{itemize}
The estimates above that take place in different ranges of $\varrho$ are all \emph{coupled}, apart from the higher-order Gevrey estimates, which can be closed independently of all the other estimates. To enable a coupling of the estimates, we need to take $\lambda$ suitably large compared to the other parameters and we need to choose the value $p$ in the weight function $w$ in the elliptic estimates appropriately.

Now consider the more general operator $L_{s,\Lambda}$, with $\Lambda\geq 0$. Let $\kappa_c\geq 0$ be the surface gravity associated to the cosmological horizon $\mathcal{C}^+$. Note that $\kappa_c\lesssim \sqrt{\Lambda}$. Then we show that we can decompose $L_{s,\Lambda}$ as follows:
\begin{equation*}
L_{s,\Lambda}=\mathfrak{L}_{s,\lambda,2\kappa_c}+\epsilon_0\cdot B_{s,\Lambda}+K_{s,\lambda,\Lambda}+ \kappa_c^2 B_{s,\Lambda}^{\mathcal{C}^+}.
\end{equation*}
The operator $\mathfrak{L}_{s,\lambda,2\kappa_c}$ closely resembles the $\kappa$-regularized operator $\mathfrak{L}_{s,\lambda,\kappa}$, with $\kappa=2\kappa_c$, which plays a key role in the proof of Theorem \ref{thm:rough2}(i) for $\Lambda=0$, with only minor differences in the region $r\leq \frac{R_{\infty}}{2}$ that do not affect the relevant estimates. Boundedness of the operator $\kappa_c B_{s,\Lambda}^{\mathcal{C}^+}\circ \mathfrak{L}_{s,\lambda,2\kappa_c}^{-1}: {H}_{\sigma,R_{\infty}}^{N_+}\to {H}_{\sigma,R_{\infty}}^{N_+}$ follows by applying the following $\kappa_c$-dependent generalization of \eqref{eq:intromainest}:
\begin{equation}
\label{eq:intromainest2}
\|\mathfrak{L}_{s,\lambda,\kappa_c}^{-1}f\|_{{H}_{\sigma,R_{\infty},2}^{N_++1}}+\kappa_c \|x\partial_x^2\mathfrak{L}_{s,\lambda,2\kappa_c}^{-1}(f)\|_{G_{\sigma,R_{\infty}}}+\kappa_c \|\slashed{\Delta}_{\s^2 }\mathfrak{L}_{s,\lambda,2\kappa_c}^{-1}(f)\|_{G_{\sigma,R_{\infty}}}\leq C_{\lambda,\sigma,R_{\infty}} \|f\|_{{H}_{\sigma,R_{\infty}}^{N_+}}
\end{equation}
and exploiting the control over additional second-order derivatives on the left-hand side.

Hence, for $M\kappa_c\ll 1$, $\kappa_c^2 B_{s,\Lambda}^{\mathcal{C}^+}\circ \mathfrak{L}_{s,\lambda,2\kappa_c}^{-1}$ can be interpreted as a small bounded operator, just like $\epsilon_0\cdot B_{s,\Lambda}\circ \mathfrak{L}_{s,\lambda,2\kappa_c}^{-1}$. The operator $K_{s,\lambda,\Lambda}\circ \mathfrak{L}_{s,\lambda,2\kappa_c}^{-1}$, on the other hand, is shown to be compact, as in the $\Lambda=0$ case. \\

\paragraph{\textbf{Sketch of the proof of Theorem \ref{thm:rough2}(ii)}}
We would like to take the union of all $s$ for which $\ker L_s\neq \{0\}$ with $L_s$ restricted to a subspace of $H^{N_+}_{\sigma,R_{\infty}}$. A priori, it is not clear how the non-triviality of the kernel of $L_s$ depends on the parameters $\sigma$, $N_+$ and $R_{\infty}$. We show that the kernel of restrictions of $L_s$ to $H^{N_+}_{\sigma,R_{\infty}}$ is \emph{independent} of the choice of parameters. The underlying reason is that we can use the estimates for $\mathfrak{L}_{s,\lambda,\kappa}^{-1}$ to further improve the regularity of elements of $\ker L_s$. In particular, we conclude in this step that elements of the $\ker L_s$ are smooth on $\Sigma$.

Here, we moreover use that any element of $\ker L_s$ can be approximated by elements of the kernel of the regularized operator $L_{s,\kappa}=L_s+\partial_{x}(x(\kappa+x)\chi_{\varrho\geq R_{\infty}}\partial_{x}\hpsi)$ with $\kappa>0$, by an application of Hurwitz's theorem of complex analysis. This is very closely related to the quasinormal frequency convergence property in Theorem \ref{thm:rough2b}(ii).\\
\\
\paragraph{\textbf{Sketch of the proof of Theorem \ref{thm:rough2}(iii)}}
Note that of the norms in (a)--(d) in the sketch of the proof of (i), only (a) is relevant when restricting to functions supported in $\{\varrho\leq \frac{1}{2}R_{\infty}\}$ and $R_{\infty}$ can be chosen arbitrarily large.
In particular, the following operator can easily shown to be well-defined for $s\in \Omega\setminus \mathscr{Q}_{\rm reg}$:
\begin{equation*}
L_s^{-1}: C_c^{\infty}(\Sigma)\to C^{\infty}(\Sigma).
\end{equation*}

Consider $\hphi: \Sigma'\to \C$. Then we define:
\begin{equation*}
A(\omega)\hat{\phi}=e^{i\omega t}\square_{g_{M,a}}(e^{-i\omega t} \hat{\phi}).
\end{equation*}
The inverse $R(\omega)=A(\omega)^{-1}$ on appropriate function spaces is called the \emph{resolvent operator} associated the the wave operator $\square_{g_{M,a}}$ and the Boyer--Lindquist time function $t$.

Let $Q_s$ be the map that takes the trivial extension of a function on $\Sigma'$ to the rest of the spacetime and restricts it to the hypersurface $\Sigma=\{\tau=0\}$. Then we can relate $A(\omega)$ to $L_s$ in the following way:
\begin{equation*}
A(\omega)=Q_s^{-1}\circ (\varrho^{-3}L_s(\varrho \cdot))\circ Q_s
\end{equation*}
with $s=-i\omega$, so the cut-off resolvent, defined as follows: $\chi R(\omega) \chi$ for arbitrary $\chi\in C^{\infty}_c(\Sigma')$, satisfies
\begin{equation*}
\chi R(\omega) \chi=(\chi Q_s^{-1})\circ (\varrho^{-1}L_s(\varrho^{3} \cdot))^{-1}\circ (Q_s\chi).
\end{equation*}
Since $\chi Q_s^{-1}$ and $Q_s\chi$ define a bounded linear operator on Sobolev spaces of arbitrary order, it follows that the right-hand side of the equation above is a well-defined map from $ C^{\infty}(\Sigma)$ to $C^{\infty}(\Sigma)$ that is meromorphic in $s$ for $s\in \Omega$ and hence the cut-off resolvent can be meromorphically continued to the subset $\Omega\subset \C$ with poles contained in $\mathscr{Q}_{\rm reg}$.\\
\\

\paragraph{\textbf{Sketch of the proof of Theorem \ref{thm:rough2b}}}
Stability of the quasinormal spectrum follows from the following two observations: 1) Invertibility of $L_s$ implies invertibility of $L_s+\delta \widetilde{Q}$ for $\delta \|\widetilde{Q}L_s^{-1}\|<1$. Hence, small perturbations of $L_s$ are invertible for $s$ contained in compact subsets of $\Omega_{\sigma,N_+}$ that exclude arbitrarily small disks around the poles of $s\mapsto L_s^{-1}$. The smaller the disks, the larger $\|\widetilde{Q}L_s^{-1}\|$ can be, so $\delta$ needs to be taken appropriately small. 2) $\ker L_s$ can be approximated by elements of modification $\ker (L_s+\delta \widetilde{Q})$ (by an application of Hurwitz's theorem), so arbitrarily small disks around the poles of $s\mapsto L_s^{-1}$ contain poles of $s\mapsto (L_s+\delta \widetilde{Q})^{-1}$ for suitably small $\delta>0$.

Theorem \ref{thm:rough2b}(ii) follows from a similar application of Hurwitz's theorem, with $\kappa_c$ taking the role of the smallness parameter $\delta$ and the factor that the estimates for $L_{s,\Lambda}^{-1}$ are uniform in $\Lambda$.\\
\\

\section{Extensions and further applications}
\label{sec:future}
We describe below future directions of research related to the results and methods derived in the present paper.
\subsection{General asymptotically flat spacetimes and wave equations}
The methods developed in the present paper can be applied to more general settings of wave operators on asymptotically flat spacetimes. They rely on the following key assumptions, which we will only sketch: Let $\Sigma_{\tau}\cong \R \times \s^2$ or $\Sigma_{\tau}\cong \R ^3$ be isometric asymptotically hyperboloidal hypersurfaces that foliate a stationary, asymptotically flat spacetime. Let $\mathcal{L}_s$ denote the Laplace transform of a wave operator with all time derivatives replaced by multiplications with the frequency $s\in \C$. Let $r$ be a radial-like coordinate.
\begin{enumerate}
\item There exists coordinates on $\Sigma=\Sigma_0$ with respect to which the coefficients of $\mathcal{L}_s$ approach the coefficients of the Laplace transform of the wave operator on Minkowski, with respect to an asymptotically hyperboloidal time function. Furthermore, these coefficients are suitably regular (in a Gevrey sense, after suitable rescaling) at infinity.
\item The zero-frequency operator $\mathcal{L}_0$ satisfies the following estimate: there exists $\beta\in \R$, such that for all $k\in \N_0$
\begin{equation*}
	\frac{1}{C_k}\|r^{-1+\delta}\hphi\|^2_{H^{k+1}(\Sigma)}+(k-\beta)\|r^{-1+\delta}\partial_r^{k+1}\hphi\|^2_{L^2(\Sigma)}\leq \|r^{-1+\delta}\mathcal{L}_0\hphi\|^2_{H^{k}(\Sigma)}+\|r^{-1+\delta}\hphi\|^2_{H^{k}(\Sigma)},
\end{equation*}
where $C_k>0$ are uniform constants. The above estimate may be thought of as a ``zero-frequency'' resolvent estimate, incorporating moreover the presence of an ``enhanced red shift effect'' for higher radial derivatives due to the growing factor $k$ on the left-hand side.
\end{enumerate}
One could state the above assumptions precisely and incorporate them into the arguments of the present paper as a ``black box'' to extend the validity of the results to a much more general setting. We do not carry out such a procedure in the present paper.

Analogues of (2) also play an important role in the derivation of sharp decay estimates and late-time (inverse-polynomial) asymptotics for wave equations on asymptotically flat spacetimes; see for example \cite{tataru3, paper2,hintzprice, aagkerr, lukoh24}.

In the case of sub-extremal Reissner--Nordstr\"om spacetimes (which include Schwarzschild spacetimes), (2) can be derived by exploiting the (degenerate) ellipticity of $\mathcal{L}_0$ and combining it with red-shift estimates in a neighbourhood of the event horizon. As already mentioned in \S \ref{sec:ideastechniques}, this argument does not work on sub-extremal Kerr, due to the presence of an ergoregion. Nevertheless, as shown in the present paper, the estimate in (2) does hold. Note also that the necessary convergence to Minkowski in (1) does not hold with respect to Boyer--Lindquist coordinates $(r,\theta,\varphi)$, but does hold with respect to  modified coordinates $(\varrho,\vartheta,\varphi)$.

The Teukolsky wave equations (of arbitrary spin) on sub-extremal Kerr can be shown to satisfy (2). Property (1) can similarly be shown to hold by considering as the main variables appropriate $\varrho$-rescalings and $\varrho^2\partial_{\varrho}$-derivatives of the Teukolsky variables, which are closely related to the conserved Newman--Penrose charges \cite{np2} for the Teukolsky equations on Minkowski; see for example the expressions in \cite{gake22}[\S 3].

\subsection{Higher spacetime dimensions}
The methods in the present paper extend also to wave equations on stationary asymptotically flat spacetimes with spacetime dimensions $n+1$, where $n\geq 3$. Important here is that the wave equation on the $n+1$-dimensional Minkowski spacetime is equivalent to the following equation with respect to standard spherical coordinates: let $\psi=r^{\frac{n-1}{2}}\phi$, then:
	\begin{equation*}
	0=-\partial_t^2\psi+ \partial_r^2\psi-\frac{1}{4r^2}(n-1)(n-3)\psi+r^{-2}\slashed{\Delta}_{\s^{n-1}}\psi.
	\end{equation*}
Using that higher-dimensional spherical harmonics $Y_{\ell m}$ satisfy $\slashed{\Delta}_{\s^{n-1}}Y_{\ell m}=-\ell(\ell+n-2)Y_{\ell m}$, this results in the following equation for $\psi_{\ell}$, the projection of $\psi$ onto the spherical harmonics with angular frequency $\ell$:
\begin{equation}
\label{eq:highdimwefixedl}
	0=-\partial_t^2\psi_{\ell}+ \partial_r^2\psi_{\ell}-\frac{1}{4r^2}[4\ell(\ell+n-2)+(n-1)(n-3)]\psi_{\ell}.
	\end{equation}
	When $n=2k+3$, $k\in \N_0$, \eqref{eq:highdimwefixedl} is equivalent to:
	\begin{equation*}
	0=-\partial_t^2\psi_{\ell}+ \partial_r^2\psi_{\ell}-\frac{1}{r^2}(\ell+k)(\ell+k+1)\psi_{\ell}.
	\end{equation*}
	each $\psi_{\ell}$ satisfies a conservation law after taking appropriately many outgoing null derivatives, analogous to \eqref{eq:consvlaw}. This is directly connected to the validity of the strong Huygens principle. The higher-order Gevrey estimates, which rely on this conservation law, therefore proceed analogously to the $n=3$ case.\footnote{When $n=2k+2$, $k\in \N_0$, the conservation law breaks down (and the strong Huygens principle does not hold). However, in that case, using that the relevant wave operator in the far-away region $\{\varrho \geq 2R_{\infty}\}$ is just the Minkowski operator, one can apply the method of descent, by considering as an auxiliary operator, the Minkowski wave operator in $n+2$ spacetime dimensions, restricted to functions that do not depend on the Cartesian coordinates $x_{n+1}$. Then estimates can be carried out in the far-away region for the rescaled functions $\tilde{r}^{\frac{n}{2}}\phi$, with $\tilde{r}^2=r^2+x_{n+1}^2$ by considering the Laplace transform of the $(n+2)$-dimensional wave operator, with respect to the time function $\tilde{\tau}=t-\tilde{r}$. This can be used to prove the existence of $L_s^{-1}$ on appropriate Hilbert spaces, away from a subset of isolated $s\in \C$.}
		
\subsection{Extending the sector of the complex plane} 
\label{sec:extendingsector}
The construction of regularity quasinormal mode frequencies in the present paper applies only to the sector
\begin{equation*}
	\left \{|{\rm arg} z|<\frac{2\pi}{3}\right\}\subset \C.
\end{equation*}
In contrast, scattering resonances, $i s\in \C$, \underline{are} well-defined in the case of fixed spherical harmonics on Schwarzschild for $s\notin (-\infty,0]$ by \cite{ba}; see also the upcoming work \cite{stuc24} for an analogous statement on sub-extremal Kerr.

The condition $|{\rm arg} z|<\frac{2\pi}{3}$ is not expected to be sharp. Indeed, in the context of a model problem, which includes the wave equation restricted to fixed spherical harmonic modes on Minkowski, a slightly larger sector was obtained in \cite{gajwar19b}. 

It remains an open problem whether the excluded sector around $ \R_{\leq 0}$ can be made \emph{arbitrarily} small in the setting of regularity quasinormal modes, even in the context of the standard wave equation on Minkowski (even though in this case, the \emph{cut-off resolvent} is in fact holomorphic in the whole complex plane).\footnote{In fact, since the condition $|{\rm arg} z|<\frac{2\pi}{3}$ arises purely at the level of estimates for the exact Minkowski wave operator in the far-away region $r\geq 2R_{\infty}$, an improved condition on Minkowski would suffice to obtain an improved condition in Kerr (and more general asymptotically flat spacetimes).}

\subsection{Extremal black hole spacetimes}
In the case of extremal black hole spacetimes, (2) does \underline{not} hold, due to the absence of the horizon red-shift effect. However, modifications exist with additional degeneracies at the event horizon; see \cite{paper4, gajwar19a, gaj22b}. For this reason, regularity quasinormal modes can still be defined on extremal Reissner--Nordstr\"om. We expect that the methods of the present paper can also be extended to extremal Kerr, provided the near-horizon region is treated in a similar way to the far-away region in the sub-extremal case. An important difference with the sub-extremal setting is that the frequencies $\omega=m\upomega_+$,  with $\upomega_+$ the angular velocity of the black hole and $m\in \Z$ the azimuthal number, play the role of $\omega=0$ in the far-away region. The region of the complex plane for which regularity quasinormal frequencies can be defined with the methods of the present paper would therefore have to exclude in addition sectors around the lines $s=(-\infty,0]+ im\upomega_+$ for all $m\in \Z$.

\subsection{Zero-damped frequencies and high frequencies}
With the results in the present paper and the methods developed in \cite{hixi22, joy22, hin24}, the properties of zero-damped quasinormal frequencies, which are expected to accumulate at $\omega=m\upomega_+$ with $m\neq 0$ in the extremal limit of sub-extremal Kerr can in principle be studied.

The behaviour of regularity quasinormal frequencies in sub-extremal Kerr for bounded $|\re s|$ (or bounded $|\im \omega|$) and very large $|\im s|$ (or very large $|\re  \omega|$), which is closely related to the structure of the trapping of null geodesics in the spacetime, would also be an interesting topic for further research in a uniform-in-$\Lambda$ setting. See \cite{bazw97,pra99, dya12,hizw24} for related results in de asymptotically de Sitter and Schwarzschild (fixed spherical harmonic, scattering resonance) setting.
\section{Preliminaries: geometry}
\label{sec:geometry}
We start by defining the main geometric objects related to Kerr spacetimes. We will briefly discuss relevant geometric aspects of Kerr--de Sitter spacetimes in \S \ref{sec:KdS}.
\subsection{Coordinates and vector fields}
Let $(\mathcal{M}_{M,a},g_{M,a})$ denote the the black hole exterior of a Kerr spacetime, including the future event horizon, with mass $M$ and angular momentum/mass $a$. With respect to \emph{ingoing Kerr coordinates} $(v,r,\theta,\varphi_*)$, we can express
\begin{equation*}
\mathcal{M}_{M,a}= \R_v\times [r_+,\infty)_r\times \s^2_{\theta,\varphi},
\end{equation*}
where
\begin{equation}
\label{eq:kerrmetric}
\begin{split}
g_{M,a}=&-\rho^{-2}\left(\Delta-a^2\sin^2\theta\right)dv^2+2dvdr-4Ma r \rho^{-2}  \sin^2\theta dv d\varphi_*-2a\sin^2\theta drd\varphi_*+\rho^2d\theta^2\\
&+\rho^{-2}((r^2+a^2)^2-a^2\Delta \sin^2\theta)\sin^2\theta d\varphi_*^2,
\end{split}
\end{equation}
with
\begin{align*}
\Delta=&\:r^2-2Mr+a^2=(r_+-r)(r-r_-),\\
r_{+,-}=&\:M\pm \sqrt{M^2-a^2},\\
\rho^2=&\: r^2+a^2\cos^2\theta.
\end{align*}

We obtain the \emph{Boyer--Lindquist coordinates} $(t,r,\theta,\varphi_*)$ via the following relations:
\begin{align*}
t=&\:v-r_*,\\
\varphi=&\:\varphi_*+\int_{r}^{\infty} \frac{a}{\Delta}\,dr' \mod 2\pi,\\
\end{align*}
where $r_*: (r_+,\infty)\to \R$ is a solution to
\begin{equation*}
\frac{dr_*}{dr}=\frac{r^2+a^2}{\Delta},
\end{equation*}
defined uniquely up to a constant. Note that the Boyer--Lindquist coordinates are only well defined on $\mathcal{M}_{M,a}\setminus \{r=r_+\}$.

It will be convenient to state moreover the inverse metric $g^{-1}_{M,a}$:
\begin{equation}
\label{eq:invmetric}
\begin{split}
g^{-1}_{M,a}=&\:a^2\rho^{-2}\sin^2\theta \partial_v\otimes \partial_v+\rho^{-2}(r^2+a^2)[ \partial_v\otimes \partial_r+ \partial_r\otimes \partial_v]+\Delta \rho^{-2}\partial_r\otimes \partial_r\\
&+a\rho^{-2} [(\partial_v+\partial_r)\otimes \partial_{\varphi_*}+\partial_{\varphi_*}\otimes (\partial_v+\partial_r)]+\rho^{-2}[\partial_{\theta}\otimes \partial_{\theta}+\sin^{-2}\theta \partial_{\varphi_*}\otimes\partial_{\varphi_*}].
\end{split}
\end{equation}

The \emph{surface gravity $\kappa_+$} associated to the future event horizon $\mathcal{H}^+=\{r=r_+\}$ takes the following expression:
\begin{equation*}
\kappa_+=\frac{r_+-M}{2Mr_+}.
\end{equation*}

We introduce the following vector fields, defined with respect to $(v,r,\theta,\varphi_*)$ coordinates:
\begin{align*}
Y=&\:\partial_r,\\
T=&\:\partial_v,\\
\Phi=&\:\partial_{\varphi_*}.
\end{align*}

We will now introduce the coordinate chart $(\tilde{v},\varrho,\vartheta,\varphi_*)$ which we motivate by first considering the $M=0$ case where the Kerr spacetime is locally isometric to the Minkowski spacetime and the coordinates $(v,r,\theta,\varphi_*)$ correspond to a foliation of the spacetime by axisymmetric ellipsoids with eccentricity parameter $a^2$. 

In the $M=0$ case, the transformation from $(v,r,\theta)$ to $(\tilde{v},\varrho,\vartheta)$ corresponds to passing from ellipsoids with a non-zero eccentricity parameter to round spheres. More precisely, $(r,\theta)$ have the following relation to Cartesian coordinates $(\tilde{x},\tilde{y},\tilde{z})$ on Minkowski:
\begin{align*}
\tilde{z}=&\:r\cos \theta,\\
\tilde{x}^2+\tilde{y}^2=&\:(r^2+a^2)\sin^2\theta.
\end{align*}
Let $(\varrho,\vartheta,\varphi)$ denote standard spherical coordinates on Minkowski, then we also have that
\begin{align*}
\tilde{z}=&\:\varrho\cos \vartheta,\\
\tilde{x}^2+\tilde{y}^2=&\:\varrho^2\sin^2\vartheta.
\end{align*}
From the above, it is straightforward to obtain the following relations in Minkowski:
\begin{align}
\label{eq:coordtrafo1}
\varrho^2=&\:r^2+a^2\sin^2\theta,\\
\label{eq:coordtrafo2}
\sin^2\vartheta=&\:\frac{r^2+a^2}{r^2+a^2\sin^2\theta}\sin^2\theta,\\
\label{eq:coordtrafo3}
r^2=&\:\frac{1}{2}(\varrho^2-a^2)\left(1+\sqrt{1+\frac{4a^2\varrho^2}{(\varrho^2-a^2)^2}\cos^2\vartheta}\right),\\
\label{eq:coordtrafo4}
\sin^2\theta=&\:\frac{1}{2a^2}\left(\varrho^2+a^2-\sqrt{(\varrho^2+a^2)^2-4a^2\varrho^2\sin^2\vartheta}\right)
\end{align}

Consider now the $M\neq 0$ case. In this case, we define $\varrho: [r_+,\infty)\times (0,\pi)\to [r_+,\infty)$ and $\vartheta: [r_+,\infty)\times (0,\pi)\to  (0,\pi)$ as smooth functions satisfying:
\begin{align*}
\varrho(r,\theta)=&\:\sqrt{r^2+\check{a}^2\sin^2\theta},\\
(\sin\vartheta)(r,\theta)=&\:\sqrt{\frac{r^2+\check{a}^2}{r^2+\check{a}^2\sin^2\theta}}\sin \theta,
\end{align*}
with $\check{a}: [r_+,\infty)\to \R$ a smooth function with $|\check{a}|\leq |a|$, such that $\check{a}(r)=a$ for $r\geq \frac{1}{\sqrt{2}}R_{\infty}$ and $\check{a}(r)=0$ for $r\leq \frac{1}{2}R_{\infty}$. Then $r\leq \varrho(r,\theta)\leq r\sqrt{1+a^2r^{-2}}\leq \sqrt{2}r$ and, for $r\leq \frac{1}{2} R_{\infty}$ we have that $\varrho=r$. Hence, in particular, we have that $\check{a}=a$ in the region $\{\varrho\leq \frac{1}{2}R_{\infty}\}$ and $\check{a}=0$ in the region $\{\varrho\geq R_{\infty}\}$.

Note that for $|a|R_{\infty}^{-1}$ suitably small,
$\begin{pmatrix} \partial_r \varrho & \partial_r \vartheta,\\
\partial_{\theta} \varrho & \partial_{\theta} \vartheta
\end{pmatrix}$
is clearly invertible everywhere.

In particular, for $r\geq \frac{1}{\sqrt{2}}R_{\infty}$:
\begin{align*}
\partial_r\vartheta=&\:\frac{a^2}{r(r^2+a^2)}\sin^2\vartheta\cos^2\vartheta,\\
\partial_r\varrho=&\: \frac{r}{\varrho}.
\end{align*}

Furthermore, it follows that we can write for all $r\geq r_+$:
\begin{align*}
r(\varrho,\vartheta)=&\:\varrho-f_r(\varrho,\vartheta),\\
\theta(\varrho,\vartheta)=&\:\vartheta-f_{\theta}(\varrho,\vartheta),
\end{align*}
where $f_r$ and $f_{\theta}$ are smooth everywhere, analytic in $r\geq \frac{1}{\sqrt{2}}R_{\infty}$ with respect to the variable $\frac{1}{r}$ and with:
\begin{align*}
f_r(\varrho,\vartheta)=&\:a^2O_{\infty}\left(\frac{1}{\varrho}\right),\\
f_{\theta}(\varrho,\vartheta)=&\:a^2O_{\infty}\left(\frac{1}{\varrho^2}\right).
\end{align*}

In particular,
\begin{align}
\label{eq:drvarrho}
\partial_r\varrho=&\:\frac{1}{2\varrho}\partial_r(\varrho^2)=\varrho^{-1}(r+\check{a}\partial_r\check{a}\sin^2\theta)=1+a^2\sin^2\vartheta O_{\infty}(\varrho^{-2}),\\
\label{eq:drvartheta}
\partial_r\vartheta=&\:\frac{\partial_r(\sin^2\vartheta)}{2 \sin \vartheta \cos\vartheta}=\frac{\partial_r(\check{a}^2(r^2+\check{a}^2\sin^2\theta)^{-1})\sin^2\theta \cos^2\theta}{2 \sin \vartheta \cos\vartheta}=a^2O_{\infty}(\varrho^{-3})\sin\vartheta \cos\vartheta
\end{align}

We define $\tilde{v}=t+r_*(\varrho)$ so that $\tilde{v}=v+r_*(\varrho)-r_*(r)$. Then $\tilde{v}=v$ in $r\leq \frac{R_{\infty}}{2}$.
 
\subsection{Asymptotically hyperboloidal foliations}
\label{sec:asymphypfol}
We will define a time function $\tau$, such that the level sets of $\tau$, denoted $\Sigma_{\tau}$, are appropriate asymptotically hyperboloidal hypersurfaces. Let
\begin{equation*}
\tau=\tilde{v}-\int_{r_+}^{\varrho} \mathbbm{h}(\varrho')\,d\varrho'-v_0,
\end{equation*}
where $v_0\in \R_{>0}$ and $\mathbbm{h}: [r_+,\infty)\to \R$ is a smooth, non-negative function, such that
\begin{align}
\label{eq:condh1}
2-\mathbbm{h}(1-2M \varrho^{-1})=&\:O(\varrho^{-2}) ,\\
\label{eq:condh2}
\rho^{-2}\left[-\mathbbm{h}(2(r^2+a^2)-\mathbbm{h}\Delta)+a^2\sin^2\theta\right]=g^{-1}_{M,a}(d\tau,d\tau)<&\:0.
\end{align}
By construction, $T(\tau)=1$.

Let $x=\frac{1}{\varrho}$ and $h:[0,\frac{1}{r_+}]_x\to \R$, with 
\begin{equation*}
M^2 h(x)=2x^{-2}-\mathbbm{h}(x^{-1})(x^{-2}-2Mx^{-1}). 
\end{equation*}
By assumption on $\mathbbm{h}$, $h(x)=O(1)$. \textbf{We will assume that $h$ is analytic in an arbitrarily small neighbourhood of $x=0$ and that $h(0)\neq 0$.} By $g^{-1}_{M,a}(d\tau,d\tau)<0$, we then necessarily have that $h(0)>0$.

Note that we can obtain $\tau=t$, by defining instead $\mathbbm{h}: (r_+,\infty)\to \R$ by $\mathbbm{h}(r)=\frac{r^2+a^2}{\Delta}$, which cannot be smoothly extended to $r=r_+$ and does not satisfy \eqref{eq:condh1}.

If $M=0$, then we can simply take $\mathbbm{h}(\varrho)=2$ and $\Sigma_{\tau}$ are outgoing null hypersurfaces.

For the sake of later convenience, we moreover assume that $1\leq \mathbbm{h}\leq \frac{r^2+a^2}{\Delta} $ in the region $\varrho \leq R_{\infty}$, where $R_{\infty}\gg M$ will be defined later.

We can extend the manifolds-with-boundary $\Sigma_{\tau'}=\{\tau'\}\times (0,r_+^{-1}]_x\times \s^2_{\vartheta,\varphi_*}$ as follows:
\begin{equation*}
\widehat{\Sigma}_{\tau'}=\{\tau'\}\times [0,r_+^{-1}]_x\times \s^2_{\vartheta,\varphi_*}.
\end{equation*}
We denote moreover $\mathcal{I}^+=\R_{\tau}\times \{0\}_x\times \s^2_{\vartheta,\varphi_*}$ and also $\Sigma:=\Sigma_0$ and $\widehat{\Sigma}:=\widehat{\Sigma}_0$.

The extended manifold $[0,\infty)_{\tau}\times [0,r_+^{-1}]_x\times \s^2_{\vartheta,\varphi_*}$ can be equipped with an extension of the conformally rescaled metric $r^{-2}g_{M,a}$ and forms a conformal compactification of the domain of dependence $D^+(\Sigma)$. The subset $\mathcal{I}^+$ is called future null infinity. See Figure \ref{fig:kerrfoli} for an illustration of the above geometric notions.

The tuple $(\tau,\varrho, \vartheta, \varphi_*)$ forms a well-defined coordinate chart on the spacetime. With respect to these coordinates, we can express:
\begin{align*}
T=&\:\partial_{\tau} \quad \textnormal{and we define}\\
\Theta:=&\: \partial_{\vartheta},\\
 X:=&\:\partial_{\varrho}-\mathbbm{h}\partial_{\tau}.
 \end{align*}
 Alternatively, we can consider the coordinates $(\tau,x, \vartheta, \varphi_*)$, to obtain the following expression for $X$:
 \begin{equation*}
 -X=x^{2}\partial_x+\frac{2-M^2 x^2h(x)}{1-2M x}\partial_{\tau}.
 \end{equation*}
 By applying \eqref{eq:drvarrho} and \eqref{eq:drvartheta}, we can further express $Y$ in terms of $X$ and $\Theta$:
\begin{equation*}
	Y=(\partial_r\varrho) X+(\partial_r \vartheta) \Theta=(1+a^2\sin^2\vartheta O_{\infty}(\varrho^{-2}))X+a^2O_{\infty}(\varrho^{-3})\sin\vartheta \cos\vartheta \Theta.
\end{equation*}

\section{Preliminaries: differential operators}
In this section, we introduce the main differential operators that are relevant for the analysis in the paper. We will first express the Kerr wave operator $\square_{g_{M,a}}$ with respect to suitable spacetime coordinates, introduced in \S \ref{sec:geometry}, and we will then write down the corresponding rescaled Laplace-transformed operator $L_s$, as well as $\mathfrak{L}_{s,\lambda,\kappa}$, which is a modification of $L_s$ that plays a leading role in the main estimates. In \S \ref{sec:KdS}, we will briefly discuss analogous objects in the context of Kerr--de Sitter spacetimes.
\subsection{Wave operator}
We will first express the wave operator $\square_{g_{M,a}}$ with respect to different choices of vector fields, adapted to the spacetime coordinates under consideration. With respect to the vector fields $(T,Y,\partial_{\theta},\Phi)$, introduced in \S \ref{sec:geometry}, we have that:
\begin{multline}
\label{eq:waveoporigcoord}
\square_{g_{M,a}}\phi=\rho^{-2}Y(\Delta Y\phi)+2a\rho^{-2}Y\Phi \phi+\rho^{-2}\slashed{\Delta}_{\s^2,\theta}\phi+2\rho^{-2}(r^2+a^2) YT\phi+2\rho^{-2}r T\phi\\
+a^2\sin^2\theta \rho^{-2}T^2\phi+2a\rho^{-2} T\Phi\phi.
\end{multline}

We will now express $\square_{g_{M,a}}$ in terms of the vector fields $(T,X,\Theta,\Phi)$, where we recall that in the region $\{\varrho\leq \frac{R_{\infty}}{2}\}$, $X=Y-\mathbbm{h}T$ and $\Theta=\partial_{\theta}$. We also introduce the following spherical differential operators:
\begin{align*}
\slashed{\Delta}_{\s^2,\theta}(\cdot )=&\:\frac{1}{\sin \theta}\partial_{\theta}(\sin\theta \partial_{\theta}(\cdot ))+\frac{1}{\sin^2\theta} \Phi^2,\\
\slashed{\Delta}_{\s^2,\vartheta}(\cdot )=&\:\frac{1}{\sin \vartheta}\partial_{\vartheta}(\sin\vartheta \partial_{\vartheta}(\cdot ))+\frac{1}{\sin^2\vartheta} \Phi^2.
\end{align*} 

We can decompose $f\in L^2(\s^2)$ into azimuthal modes:
\begin{equation*}
	f=\sum_{m\in \Z}f_m,
\end{equation*}
with $\Phi f_m=i mf_m$.

We can also decompose $f\in L^2(\s^2)$ into spherical harmonic modes:
\begin{equation*}
	f(\vartheta,\varphi_*)=\sum_{\ell\in \N_0} f_{\ell}(\vartheta,\varphi_*)=\sum_{m\in \Z}\sum_{\ell\geq |m|}f_{m\ell}Y_{\ell m}(\vartheta,\varphi_*),
\end{equation*}
with $\slashed{\Delta}_{\s^2,\vartheta}f_{\ell}=-\ell(\ell+1)f_{\ell}$.

We will reserve the notation $\slashed{\Delta}_{\s^2}$ for the Laplacian $\slashed{\Delta}_{\s^2,\vartheta}$ and $\snabla_{\s^2}$ for the covariant derivative $\snabla_{\s^2,{\vartheta}}$, both associated to spheres of constant $\tau$ and $\varrho$.

 First, we restrict to the region $\{\varrho\geq R_{\infty}\}$ and use the fact that in this region $\check{a}(r)=0$. In the $M=0$ case, the operator $\square_{g_{M,a}}$ with respect to $(T,X,\partial_{\vartheta},\Phi)$ corresponds simply to the wave operator on Minkowski in outgoing Eddington--Finkelstein coordinates $(\tilde{u},\varrho,\vartheta,\varphi_*)$, with $T=\partial_{\tilde{u}}$, $X=\partial_{\varrho}$ and $\Phi=\partial_{\varphi_*}$.

We can therefore express in $\{\varrho\geq R_{\infty}\}$:
\begin{multline*}
\rho^{-2}Y(r^2 Y\phi)+2a\rho^{-2}Y\Phi \phi+\rho^{-2}\slashed{\Delta}_{\s^2,\theta}\phi+2\rho^{-2}(r^2+a^2) YT\phi+2r T\phi+a^2\sin^2\theta \rho^{-2}T^2\phi+2a\rho^{-2} T\Phi\phi\\
=\square_{g_{0,a}}\phi=\varrho^{-2}X(\varrho^2 X \phi)+\varrho^{-2}\slashed{\Delta}_{\s^2,\vartheta}\phi+2\varrho^{-1}X(\varrho T\phi).
\end{multline*}
Hence, in $\{\varrho\geq R_{\infty}\}$, we have that
\begin{equation}
\label{eq:waveoprgeqRinfty}
\square_{g_{M,a}}\phi=\varrho^{-2}X(\varrho^2X \phi)+\varrho^{-2}\slashed{\Delta}_{\s^2,\vartheta}\phi+2\varrho^{-1}X(\varrho T\phi)-2M\varrho^{-2}Y(r Y\phi) .
\end{equation}

We now turn to the full region $\{r\geq r_+\}$. 
\begin{lemma}
\label{lm:operatorcoeffb}
Define $\psi:=\varrho \cdot \phi$. With respect to the coordinates $(\tau, x=\varrho^{-1},\vartheta,\varphi_*)$, we can write:
\begin{equation}
	\label{eq:waveeqxcoord}
	\varrho^3 \square_{g_{M,a}}\phi=\partial_{x}(x^2\partial_{x}\psi)+2 \partial_{x}\partial_{\tau}\psi+\slashed{\Delta}_{\s^2,\vartheta}\psi+Mx \cdot \mathcal{B}^{\infty}\psi+\mathcal{K}^{\infty}\psi,
\end{equation}
with
\begin{align*}
\mathcal{B}^{\infty}f:=&\:b_{xx}\cdot x^2\partial_x^2f+b_{x\tau }\cdot \partial_x \partial_{\tau} f+b_{\tau \tau}\partial_{\tau}^2f+a^2( xb_{x \vartheta}\cdot x^2\partial_x+x b_{\tau \vartheta}\partial_{\tau})  \sin\vartheta \partial_{\vartheta}f \\
+&\:a^4x^{4}\cdot b_{\vartheta \vartheta}(\sin\vartheta\partial_{\vartheta})^2f+ b_{x} \cdot x \partial_xf+ a^2x^2b_{\vartheta}\cdot \sin \vartheta\partial_{\vartheta} f+b_{\tau}\partial_{\tau}f+ b\cdot  f,\\
\mathcal{K}^{\infty}f:=&\:-2M^2h(0) \partial_{\tau}^2f,
\end{align*}
where the functions $b_{\star \star}, b_{\star},b$, with $\star\in \{x,\tau,\vartheta\}$, are smooth and bounded functions of $(\varrho,\vartheta)$, by \eqref{eq:waveoprgeqRinfty}. We moreover have that $b_{\star \star}, b_{\star},b$ are analytic in $[0,x_0]_x\times \s^2_{\theta,\varphi}$ and there exist constants $C,B_1,B_2>0$, such that
\begin{equation*}
\frac{1}{(n+m)!}(\|\snabla_{\s^2}^m \partial_x^n b\|_{L^{\infty}([0,x_0]\times \s^2)}+\|\snabla_{\s^2}^m \partial_x^n b_{\star}\|_{L^{\infty}([0,x_0]\times \s^2)}+\|\snabla_{\s^2}^m \partial_x^n b_{\star\star}\|_{L^{\infty}([0,x_0]\times \s^2)})\leq C B_1^n (a^2 x_0^2 B_2)^m.
\end{equation*}
In $\{\varrho\leq \frac{R_{\infty}}{2}\}$ we can express:
\begin{equation}
\label{eq:negttcoeff}
(M x b_{\tau\tau}-2M^2h(0)) =-\mathbbm{h}(2(r^2 +a^2)-\Delta \mathbbm{h})+a^2\sin^2\theta,
\end{equation}
which is strictly negative by \eqref{eq:condh2}.
\end{lemma}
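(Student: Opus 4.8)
The plan is to prove Lemma~\ref{lm:operatorcoeffb} by a direct change of coordinates in the wave operator, in three steps: deriving the identity \eqref{eq:waveeqxcoord} together with the stated structure of $\mathcal B^\infty$; establishing the analyticity and the Gevrey-type growth bounds on its coefficients near $x=0$; and identifying the $\partial_\tau^2$-coefficient in $\{\varrho\le\tfrac12 R_\infty\}$ to obtain \eqref{eq:negttcoeff}.

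For the first step I would start in the far region $\{\varrho\ge R_\infty\}$, where \eqref{eq:waveoprgeqRinfty} is available, substitute the vector-field identities of \S\ref{sec:geometry} --- $T=\partial_\tau$, $-X=x^2\partial_x+\mathbbm{h}\,\partial_\tau$ with $\mathbbm{h}(1-2Mx)=2-M^2x^2h(x)$, and $Y=(\partial_r\varrho)X+(\partial_r\vartheta)\Theta$ together with \eqref{eq:drvarrho}--\eqref{eq:drvartheta} --- then set $\phi=\varrho^{-1}\psi$ and multiply by $\varrho^3$. Expanding the ``Minkowski part'' $\varrho^{-2}X(\varrho^2X\phi)+\varrho^{-2}\slashed{\Delta}_{\s^2,\vartheta}\phi+2\varrho^{-1}X(\varrho T\phi)$ yields the three model terms $\partial_x(x^2\partial_x\psi)+2\partial_x\partial_\tau\psi+\slashed{\Delta}_{\s^2,\vartheta}\psi$, plus remainders carrying a factor $\mathbbm{h}-2=Mx\cdot(\text{analytic})$; note, however, that this part alone also produces a \emph{spurious} $\partial_\tau^2$-coefficient $\mathbbm{h}(\mathbbm{h}-2)x^{-2}$, which is only $O(Mx^{-1})$. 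The key is that the correction $-2M\varrho^{-2}Y(rY\phi)$, although \emph{not} small in the far region, contributes --- after inserting $Y=X+a^2(\cdots)X+a^2(\cdots)\Theta$ and $r=\varrho-f_r$ --- exactly the opposite $O(Mx^{-1})$ term in the $\partial_\tau^2$-coefficient, together with $O(M)$ lower-order terms that cancel the matching junk from the Minkowski part; what survives organises into $Mx\,\mathcal B^\infty\psi$ plus the genuinely $x$-independent remainder $\mathcal K^\infty=-2M^2h(0)\,\partial_\tau^2$. The powers of $a$ and the weights $\sin\vartheta$ in $\mathcal B^\infty$ are then read off from \eqref{eq:drvarrho}--\eqref{eq:drvartheta}: the single occurrence of $\Theta$ in $Y$ carries $a^2O_\infty(\varrho^{-3})\sin\vartheta\cos\vartheta$, so $(\partial_r\vartheta)^2\Theta^2$ gives the $a^4x^4b_{\vartheta\vartheta}(\sin\vartheta\partial_\vartheta)^2$ term and the cross terms give the $a^2x(\cdots)\sin\vartheta\partial_\vartheta$ terms. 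In the complementary region $\{r_+\le\varrho\le R_\infty\}$ the coordinate $x=\varrho^{-1}$ is bounded below, so \eqref{eq:waveeqxcoord} becomes the \emph{definition} $Mx\,\mathcal B^\infty+\mathcal K^\infty:=\varrho^3\,\square_{g_{M,a}}\!\circ\varrho^{-1}-\partial_x(x^2\partial_x)-2\partial_x\partial_\tau-\slashed{\Delta}_{\s^2,\vartheta}$, and dividing by $Mx$ exhibits $\mathcal B^\infty$ with smooth bounded coefficients, the bounds following from smoothness of \eqref{eq:kerrmetric} and of the coordinate change of \S\ref{sec:geometry}.

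For the second step, on $[0,x_0]_x\times\s^2$ every coefficient of $\mathcal B^\infty$ is built by finitely many sums, products and compositions from: the function $h$, analytic near $x=0$ by the standing assumption; the factor $(1-2Mx)^{-1}$; and the coordinate-change data $f_r,f_\theta$, which by the explicit formulas \eqref{eq:coordtrafo3}--\eqref{eq:coordtrafo4} are analytic in $x=\varrho^{-1}$ for $\varrho\gtrsim R_\infty$ (the branch points of the square roots lying at distance $\gtrsim R_\infty$ in $\varrho$), and whose $\vartheta$-dependence enters only through $\check a^2\cos^2\vartheta\,\varrho^{-2}$. Classes of functions with bounds of the stated shape --- holomorphic extension to a fixed polydisc with each $\vartheta$-derivative additionally suppressed by a factor $a^2x_0^2$ --- are closed under these algebraic operations with the radii degrading in a controlled way (via Cauchy-product and Fa\`{a} di Bruno estimates), and this propagates the bound $\frac{1}{(n+m)!}\|\snabla_{\s^2}^m\partial_x^n(\cdot)\|_{L^\infty([0,x_0]\times\s^2)}\le CB_1^n(a^2x_0^2B_2)^m$ to $b_{\star\star},b_\star,b$, the factor $(a^2x_0^2B_2)^m$ recording that each angular derivative hitting a coefficient costs $\check a^2\varrho^{-2}\le a^2x_0^2$.

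Finally, in $\{\varrho\le\tfrac12 R_\infty\}$ the coordinates are unmodified ($\check a=0$, so $\varrho=r$ and $\vartheta=\theta$). Since conjugation by the $\tau$-independent factor $\varrho$ leaves $\partial_\tau$ untouched, the $\partial_\tau^2\psi$-coefficient of $\varrho^3\,\square_{g_{M,a}}\!\circ\varrho^{-1}$ equals $\varrho^2$ times the $\partial_\tau^2$-coefficient of $\square_{g_{M,a}}$ in the chart $(\tau,\varrho,\vartheta,\varphi_*)$, i.e. a manifestly positive multiple of $g^{-1}_{M,a}(d\tau,d\tau)$, the left-hand side of \eqref{eq:condh2}. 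Since none of the three model terms nor $\mathcal K^\infty$ contributes a $\partial_\tau^2$, this coefficient is precisely $Mxb_{\tau\tau}-2M^2h(0)$, which gives \eqref{eq:negttcoeff} (up to that positive prefactor) and, by \eqref{eq:condh2}, its strict negativity. I expect the real work to lie in the second step --- propagating the analytic/Gevrey bounds with the \emph{correct} $a$-dependence through all the algebraic combinations --- while within the first step the delicate point is the bookkeeping needed to verify that the cancellation between the Minkowski part and the $-2M\varrho^{-2}Y(rY\phi)$ correction really leaves an $O(Mx)$ remainder rather than $O(M)$ or $O(Mx^{-1})$.
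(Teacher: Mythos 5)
Your proposal follows essentially the same route as the paper's proof: the identity holds by direct expansion in the bounded region, while in $\{\varrho\geq R_{\infty}\}$ the paper likewise expands the Minkowski part via $-X=x^{2}\partial_x+\mathbbm{h}\partial_{\tau}$ and exhibits exactly the cancellation you describe between its $O(Mx^{-1})$ coefficient of $\partial_{\tau}^2\psi$ (your $\mathbbm{h}(\mathbbm{h}-2)x^{-2}$ agrees with the paper's $\frac{4Mx^{-1}-M^2h}{(1-2Mx)^2}(2-M^2x^2h)$) and the one produced by $-2M\varrho^{-2}Y(rY\phi)$, leaving $-2M^2h(0)$, with analyticity and the $a^2x_0^2$ gain per angular derivative read off from \eqref{eq:coordtrafo3} just as you propose. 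Your hedge ``up to that positive prefactor'' in the last step is apt: the $\partial_{\tau}^2\psi$-coefficient is $r^{2}\rho^{-2}$ times the right-hand side of \eqref{eq:negttcoeff}, which does not affect the sign conclusion.
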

\begin{proof}
Note that the form \eqref{eq:waveeqxcoord} and the smoothness property of $b_{\star \star}, b_{\star},b$ hold trivially in $\{\varrho \leq R_{\infty}\}$. Using that in $\{\varrho\leq \frac{R_{\infty}}{2}\}$, we have that $\varrho=r$ and $\vartheta=\theta$, so $Y=X+\mathbbm{h}T$ and the property \eqref{eq:negttcoeff} also follows immediately after expanding the $Y$ derivatives in \eqref{eq:waveoporigcoord} and considering the factor appearing in front of $T^2\phi$.

It remains to verify the stated asymptotic properties of $b_{\star \star}, b_{\star},b$ as $x\to 0$, for which we can restrict to the region $\{\varrho \geq R_{\infty}\}$. We first recall that:
 \begin{equation*}
 -X=x^{2}\partial_x+\frac{2-M^2 x^2h(x)}{1-2M x}\partial_{\tau},
 \end{equation*}
 so
\begin{multline}
\label{eq:waveeqfromXtox}
\varrho X(\varrho^2X \phi)+\varrho \slashed{\Delta}_{\s^2,\vartheta}\phi-2\varrho^2 X(\varrho T\phi)=\varrho^2 X^2(\varrho \phi)+2 \varrho^2 XT(\varrho \phi)+\varrho \slashed{\Delta}_{\s^2,\vartheta}\phi\\
=\left(\partial_{x}+x^{-2}\frac{2-M^2 x^2h(x)}{1-2M x}\partial_{\tau}\right)\left(x^2\partial_{x}\psi+\frac{2-M^2 x^2h(x)}{1-2M x}\partial_{\tau}\psi\right)-2 \left(\partial_{x}+x^{-2}\frac{2-M^2 x^2h(x)}{1-2M x}\partial_{\tau}\right)\partial_{\tau}\psi\\
+\slashed{\Delta}_{\s^2,\vartheta}\psi\\
=\partial_{x}(x^2\partial_{x}\psi)+2 \partial_{x}\partial_{\tau}\psi+\slashed{\Delta}_{\s^2,\vartheta}\psi-2M^2\frac{h(x)}{1-2M x}\partial_{x}\partial_{\tau}\psi\\
+\frac{4Mx^{-1}-M^2h}{(1-2M x)^2}(2-M^2 x^2h(x))\partial_{\tau}^2\psi+\partial_x\left(\frac{2-M^2 x^2h(x)}{1-2M x}\right)\partial_{\tau}\psi.
\end{multline}
Using that:
\begin{equation*}
Y=(1+a^2\sin^2\vartheta O_{\infty}(\varrho^{-2}))X+a^2O_{\infty}(\varrho^{-3})\sin\vartheta \cos\vartheta \partial_{\vartheta},
\end{equation*}
we moreover obtain:
\begin{multline}
\label{eq:doubleYder}
-2M\varrho Y(r Y\phi)\\
=2M x^{-1} \left[(1+a^2\sin^2\vartheta O_{\infty}(x^2))\left(x^{2}\partial_x+\frac{2-M^2 x^2h(x)}{1-2M x}\partial_{\tau}\right)+a^2O_{\infty}(x^3)\sin\vartheta \cos\vartheta\partial_{\vartheta}\right](x\psi)\\
-2M   x^{-1} r\left((1+a^2\sin^2\vartheta O_{\infty}(x^2))\left(x^{2}\partial_x+\frac{2-M^2 x^2h(x)}{1-2M x}\partial_{\tau}\right)+a^2O_{\infty}(x^3)\sin\vartheta \cos\vartheta \partial_{\vartheta}\right)^2(x\psi).
\end{multline}
Using that $r= x^{-1}+ a^2O(x)$, terms involving $\partial_{\tau}^2\psi$ appearing on the very right-hand side combine to obtain:
\begin{equation*}
-2M x^{-1} \frac{2+M^2O_{\infty}(x^2)+a^2O_{\infty}(x^2)}{(1-2M x)^2}(2-M^2 x^2h(x))\partial_{\tau}^2\psi,
\end{equation*}
which will cancel to leading-order in $x$ with the $\partial_{\tau}^2\psi$ term on the very right-hand side of \eqref{eq:waveeqfromXtox}, to obtain a total leading-order term:
\begin{equation*}
-2M^2h(0) \partial_{\tau}^2\psi
\end{equation*}
when we consider, via \eqref{eq:waveoprgeqRinfty}, 
\begin{equation*}
\varrho^3 \square_{g_{M,a}}\phi=\varrho X(\varrho^2X \phi)+\varrho \slashed{\Delta}_{\s^2,\vartheta}\phi-2\varrho^2 X(\varrho T\phi)-2M\varrho Y(r Y\phi)
\end{equation*}
in the region $\{\varrho\geq R_{\infty}\}$.

Similarly, the leading-order term in the factor multiplying $\partial_{\tau}\psi$ in \eqref{eq:doubleYder} is:
\begin{equation*}
-4M\partial_{\tau}\psi,
\end{equation*}
which cancels with the leading-order contribution in the $\partial_{\tau}\psi$ term on the very right-hand side of \eqref{eq:waveeqfromXtox}. 

For the remaining coefficients, the asymptotics in $x$ follow straightforwardly.

The analyticity properties of $b,b_{\star},b_{\star\star}$ in $\{r\geq R_{\infty}\}$ follow from the analyticity of $r(\varrho,\vartheta)$ by \eqref{eq:coordtrafo3} and the $a^2x_0^2$ factor in the $L^{\infty}$ estimate follows from the factor $\frac{4a^2\varrho^2}{(\varrho^2-a^2)^2}$ in front of the $\vartheta$-dependent term in \eqref{eq:coordtrafo3}, which plays the role of a scale factor when taking $\vartheta$-derivatives of $r$
\end{proof}

\subsection{Laplace-transformed operator}
\label{sec:laplacetransfop}
We define the differential operator $L_s$ to be the Laplace transform of the operator
\begin{equation*}
\varrho^3\circ \square_g\circ \varrho^{-1}
\end{equation*}
with respect to the time function $\tau$. More explicitly, for $\hpsi\in C^{\infty}(\widehat{\Sigma})$:
\begin{equation*}
L_s\hpsi=e^{-s\tau}\varrho^3\square_g(e^{s\tau}\varrho^{-1}\hpsi).
\end{equation*}

In view of \eqref{eq:waveeqxcoord}, we can then write:
\begin{equation}
\label{eq:generalexprLs}
	L_s=\partial_{x}(x^2\partial_{x}(\cdot))+2 s\partial_{x}(\cdot)+\slashed{\Delta}_{\s^2,\vartheta}+M x\cdot{B}^{\infty}_s+K^{\infty}_s,
\end{equation}
with $B_s^{\infty}$ and $K_s^{\infty}$ the Laplace transforms of $\mathcal{B}^{\infty}$ and $\mathcal{K}^{\infty}$, defined as follows:
\begin{align*}
B^{\infty}_sf:=&\:b_{xx}\cdot x^2\partial_x^2f+s b_{x\tau }\cdot \partial_x  f+s^2b_{\tau \tau}f+a^2( xb_{x \vartheta}\cdot x^2\partial_x+s x b_{\tau \vartheta})  \sin\vartheta \partial_{\vartheta}f \\
+&\:a^4x^{4}\cdot b_{\vartheta \vartheta}(\sin\vartheta\partial_{\vartheta})^2f+ b_{x} \cdot x \partial_xf+ a^2x^2b_{\vartheta}\cdot \sin \vartheta\partial_{\vartheta} f+sb_{\tau}f+ b\cdot  f,\\
K^{\infty}_sf:=&\:-2M^2h(0) s^2 f.
\end{align*}

\subsection{The Laplace-transformed operator as a perturbation}
The main estimates in this paper will not directly involve $L_s$. Instead we will treat $L_s$ as a \emph{perturbation} of a suitable operator that we will denote $\mathfrak{L}_{s,\lambda}$. Here, $\lambda$ will be a large parameter that can be interpreted as an angular frequency shift. In \S \ref{sec:invtLs}, we will show that $L_s$ can be split up in the following way:
\begin{equation}
\label{eq:Lsaspert}
L_s=\mathfrak{L}_{s,\lambda}+\epsilon_0 \cdot B_{s}+K_{s,\lambda},
\end{equation}
with $B_{s,\lambda}: H^{N_++2}_{\sigma, R_{\infty},2}\to H^{N_+}_{\sigma,R_{\infty}}$ and $K_{s,\lambda}:  H^{N_++1}_{\sigma, R_{\infty},2}\to H^{N_++1}_{\sigma, R_{\infty},2}$ bounded operators, with  $H^{N_++1}_{\sigma, R_{\infty},2}\subset H^{N_+}_{\sigma,R_{\infty}}$ Hilbert spaces that will be introduced in \S \ref{sec:hilbertspaces}. 

We will also show that that the composition $K_{s,\lambda}\circ R$, with $R: H^{N_+}_{\sigma,\delta}\to  H^{N_++1}_{\sigma, R_{\infty},2}$ a suitable bounded operator, is a compact operator and that we can arrange $\epsilon_0$ to be suitably small to obtain $\|\epsilon_0 B_{s} \circ R\|<1$. This splitting of $L_s$ will allow us to obtain \emph{meromorphicity} of the map $s\mapsto L_s^{-1}$ from the invertibility of $\mathfrak{L}_{s,\lambda}$ for $s$ restricted to suitable domains, via the analytic Fredholm theorem.

We will in fact introduce a family of operators $\mathfrak{L}_{s,\lambda, \kappa}$, where $\kappa\in [0,\infty)$, with $\mathfrak{L}_{s,\lambda}:=\mathfrak{L}_{s,\lambda, 0}$. The parameter $\kappa>0$ plays a regularizing role, which can be interpreted as making ``surface gravity of future-null infinity $\mathcal{I}^+$'' positive. These operators can also be thought of as simplifications of the operators that are relevant on Kerr--de Sitter spacetimes.

Using that $\varrho=r$ and $\vartheta=\theta$ in $\{r\leq \frac{1}{2}R_{\infty}\}$ and applying \eqref{eq:waveoporigcoord}, we also have that for $\varrho\leq \frac{1}{2}R_{\infty}$:
\begin{multline*}
r^{-2}\rho^{2}L_s=r\partial_{r}(\Delta \partial_r(r^{-1} \cdot))+2a r \partial_r\partial_{\varphi_*} (r^{-1}\cdot )+\slashed{\Delta}_{\s^2,\theta}(\cdot)-2r[\mathbbm{h}\Delta-(r^2+a^2)] s\partial_r(r^{-1}\cdot)+\left[2r-\frac{d}{dr}(\Delta \mathbbm{h})\right]s (\cdot)\\
+[\mathbbm{h}^2\Delta-2\mathbbm{h}(r^2+a^2)+a^2\sin^2\theta] s^2(\cdot)-2a(\mathbbm{h}-1) s\partial_{\varphi_*}(\cdot).
\end{multline*}

Define
\begin{equation*}
r^{-2}\rho^{2}{K}_{s}^+f:=\left[2r-\frac{d}{dr}(\Delta \mathbbm{h})\right]s (r^{-1}f)+[\mathbbm{h}^2\Delta-2\mathbbm{h}(r^2+a^2)+a^2\sin^2\theta] s^2 f-2a(\mathbbm{h}-1) s\partial_{\varphi_*} f.
\end{equation*}

Then we can express in $\{\varrho \leq \frac{1}{2}R_{\infty}\}$:
\begin{equation*}
	L_s=r^{2}\rho^{-2}\left[r\partial_r(\Delta \partial_r(r^{-1}\cdot))+2ar\partial_r\partial_{\varphi_*}(r^{-1}\cdot)+\slashed{\Delta}_{\s^2}(\cdot)+(4Mr^2-2(\mathbbm{h}-1)r\Delta )s\partial_r(r^{-1}\cdot)\right]+{K}_{s}^+.
\end{equation*}

Let $\lambda\in [0,\infty)$ and define
\begin{equation*}
S_{\lambda}f=2 l \sum_{\ell\in \N_0}\ell f_{\ell},
\end{equation*}
 with $l\in \R_{\geq 0}$ the non-negative solution to $l(l+1)=\lambda$. Then, in $\{r\geq R_{\infty}\}$, we can write
\begin{equation*}
(\slashed{\Delta}_{\s^2}-(\lambda \mathbf{1}+S_{\lambda}))f_{\ell}=-(\ell+l)(\ell+l+1)f_{\ell}.
\end{equation*}
The operator $-(\lambda \mathbf{1}+S_{\lambda})$ therefore has the effect of shifting the angular frequency $\ell$ by $l$. Later, we will need to restrict to $\lambda$ such that $l\in \N_0$ (and we need to take $l$ suitably large). 

Denote with $\pi_{\leq l}$ the projection of $f$ to spherical harmonics with angular momentum $\leq l$. Then
\begin{equation*}
(\slashed{\Delta}_{\s^2}-(\lambda \mathbf{1}+S_{\lambda})\pi_{\leq l})f_{\ell}=-\lf(\lf+1)f_{\ell},
\end{equation*}
with $\lf=\ell+l$ if $\ell\leq l$ and $\lf=\ell$ if $\ell>l$. We therefore have that $\lf\geq l$ for all $\ell\in \N_0$.

Let $\chi_{\varrho\geq R_{\infty}}$ be a smooth cut-off function that is 1 in $\varrho\geq R_{\infty}$ and 0 in $\varrho\leq \frac{1}{2}R_{\infty}$. Let $\chi_{\varrho\leq \frac{1}{2}R_{\infty}}$ be a smooth cut-off function that is 1 in $\varrho\leq \frac{1}{2}R_{\infty}$ and 0 in $\varrho\geq R_{\infty}$

We now define the main differential operator with respect to which we will derive energy estimates and elliptic estimates:
\begin{multline}
	\label{eq:defLcheck}
	\mathfrak{L}_{s,\lambda,\kappa}:=L_s+\kappa \partial_{x}(\chi_{\varrho\geq R_{\infty}} x \partial_{x} (\cdot))-M\chi_{\varrho\geq R_{\infty}} {B}^{\infty}_s\\
	-(\lambda\pi_{\leq l}+\chi_{\varrho\geq R_{\infty}}S_{\lambda} \pi_{\leq l}+{K}^{\infty}_s+\chi_{\varrho\leq \frac{1}{2}R_{\infty}}{K}_{s}^+).
	\end{multline}
In particular, $L_s$ takes the form \eqref{eq:Lsaspert} for $\kappa=0$, if we define:
\begin{align*}
B_s:=&\:M\epsilon_0^{-1} x\chi_{\varrho\geq R_{\infty}} {B}^{\infty}_s,\\
K_{s,\lambda}:=&\: \lambda\pi_{\leq l}+\chi_{\varrho\geq R_{\infty}}S_{\lambda} \pi_{\leq l}+{K}^{\infty}_s+\chi_{\varrho\leq \frac{1}{2}R_{\infty}}{K}_{s}^+.
\end{align*}

Note that $\mathfrak{L}_{s,\lambda}$ takes on a rather simple form when $\varrho\in [r_+,\infty)\setminus (\frac{R_{\infty}}{2},R_{\infty})$:
\begin{align}
\label{eq:operatorbounded}
	r^{-2}\rho^{2}\mathfrak{L}_{s,\lambda,\kappa}=&\:r\partial_r(\Delta \partial_r(r^{-1}\cdot))+2ar\partial_r\partial_{\varphi_*}(r^{-1}\cdot)+\slashed{\Delta}_{\s^2}(\cdot)\\ \nonumber
	&\:+sr(4 Mr-2(1-\mathbbm{h})\Delta)\partial_r(r^{-1}\cdot)-\lambda\pi_{\leq l}\quad \textnormal{for}\quad  r=\varrho\leq \frac{R_{\infty}}{2},\\
	\label{eq:operatorfaraway}
	(\mathfrak{L}_{s,\lambda,\kappa}f)_{\ell}=&\:\partial_{x}(x(\kappa+x)\partial_{x}f_{\ell})+2 s\partial_{x}f_{\ell}-\mathfrak{l}(\mathfrak{l}+1)f_{\ell}\quad \textnormal{for}\quad \frac{1}{x}=\varrho\geq R_{\infty}.
	\end{align}
	In general, we can express:
	\begin{multline}
			\label{eq:operatorgeneral}
	(\mathfrak{L}_{s,\lambda,\kappa}f)=\partial_x(x(x+\kappa \chi_{\varrho\geq R_{\infty}})\partial_x f)+2s\partial_x f+\slashed{\Delta}_{\s^2}f-\lambda \pi_{\leq l}f-\chi_{\varrho\geq R_{\infty}}S_{\lambda}\pi_{\leq l}f+\chi_{\varrho<R_{\infty}}({K}_s^{\infty}+M x{B}_s^{\infty})\\ 
	-\chi_{\varrho\leq \frac{1}{2}R_{\infty}}{K}_s^{+}\quad \textnormal{for all}\quad \varrho\geq r_+.
	\end{multline}
Finally, we define the $\kappa$-regularized operator $L_{s,\kappa}$ as follows:
\begin{equation}
\label{eq:defLskappa}
L_{s,\kappa}:=L_s+\kappa \partial_x(x \chi_{\varrho\geq R_{\infty}})\partial_x f).
\end{equation}
\section{Preliminaries: Hilbert spaces}
\label{sec:hilbertspaces}
In this section, we will define all the Hilbert spaces that are relevant for the present paper.

Let $S\subseteq \Sigma$ be closed. We define $H^k(S)$ as the Hilbert space obtained from the completion of $C^{\infty}(S)$ with respect to the Sobolev inner product:
\begin{equation*}
\la f,g\ra_{H^k(S)}=\sum_{k_1+k_1\leq k}\int_{S}\snabla_{\s^2}^{k_1}\partial_{\varrho}^{k_2}f\cdot  \snabla_{\s^2}^{k_1}\partial_{\varrho}^{k_2}\overline{g}\,\varrho^2d\upsigma d\varrho.
\end{equation*}
Here, $d\upsigma$ denotes the following standard volume form on the unit round sphere: $d\upsigma:=\sin \vartheta d\vartheta d\varphi_*$.

We denote moreover 
\begin{equation*}
	\la f,g\ra_{H^k_{\snabla}(S)}=\la f,g\ra_{H^k(S)}+\la \snabla_{\s^2}f,\snabla_{\s^2}g\ra_{H^k(S)},
\end{equation*}
and we let $H^k_{\snabla}(S)$ denote the completion of $C^{\infty}(S)$ with respect to this inner product.

Consider $\widehat{\Sigma}\cong [0,r_+^{-1}]_x\times \s^2_{\vartheta,\varphi_*}$. Let $\sigma\in \R$ and $\delta>0$. We introduce the following bilinear forms on $C^{\infty}([0,x_0]_x\times \s^2_{\vartheta,\varphi_*})$ where $x_0>0$, $d_0>0$ and $\sigma\in \R$: 
\begin{align*}
	\la f,g\ra_{B_{R_{\infty}}}:=&\:\sum_{\ell\in \N_0}\sum_{n=0}^{\infty}\frac{(d_0x_0)^{2n}}{(\max\{\ell+1,n+1\})^{2n}}\la \partial_x^{n}f_{\ell},\partial_x^{n}g_{\ell}\ra_{L^2(\s^2)}(x_0) ,\\
		\la f,g\ra_{B_{R_{\infty},2}}:=&\:\sum_{\ell\in \N_0}\sum_{n=0}^{\infty}\frac{(d_0x_0)^{2n}}{(\max\{\ell+1,n+1\})^{2n}} \Bigg[(\ell+1)^4\la \partial_x^{n}f_{\ell},\partial_x^{n}g_{\ell}\ra_{L^2(\s^2)}(x_0)\\
		&+(\ell+1)^2\la \partial_x^{n+1}f_{\ell},\partial_x^{n+1}g_{\ell}\ra_{L^2(\s^2)}(x_0)+\la \partial_x^{n+2}f_{\ell},\partial_x^{n+2}g_{\ell}\ra_{L^2(\s^2)}(x_0)\Bigg],\\
	\la f,g\ra_{G_{\sigma,R_{\infty}}}:=&\:\sum_{\ell\in \N_0}\sum_{n=0}^{\max\{\ell-1,0\}}\frac{(4\sigma)^{2n}(\ell-(n+1))!^2}{(\ell+n+1)!^2} \int_0^{x_0}\la \partial_x^{n}f_{\ell}, \partial_x^{n}g_{\ell}\ra_{L^2(\s^2)}(x)\, dx\\
	&+(\ell+1)^3\sum_{n=\ell}^{\infty}\frac{\sigma^{2n}}{n!^2(n+1)!^2} \int_0^{x_0}\la \partial_x^{n}f_{\ell}, \partial_x^{n}g_{\ell}\ra_{L^2(\s^2)}(x)\, dx\\
	\la f,g\ra_{G_{\sigma,R_{\infty},2}}:=&\:\sum_{\ell\in \N_0}\sum_{n=0}^{\max\{\ell-1,0\}}\frac{(4\sigma)^{2n}(\ell-(n+1))!^2}{(\ell+n+1)!^2}\\
	&\:\times \int_0^{x_0}(\ell-n)^2(\ell+n+1)^2\la \partial_x^{n}f_{\ell}, \partial_x^{n}g_{\ell}\ra_{L^2(\s^2)}(x)\\
	&+(1+x^2(\ell+1)^2)\la \partial_x^{n+1}f_{\ell}, \partial_x^{n+1}g_{\ell}\ra_{L^2(\s^2)}(x)+x^4\la \partial_x^{n+2}f_{\ell}, \partial_x^{n+2}g_{\ell}\ra_{L^2(\s^2)}(x)\,dx\\
	&+(\ell+1)^3\sum_{n=\ell}^{\infty}\frac{\sigma^{2n}}{n!^2(n+1)!^2} \int_0^{x_0} (n+1)^4\la \partial_x^{n}f_{\ell}, \partial_x^{n}g_{\ell}\ra_{L^2(\s^2)}(x)\\
	&+(1+(n+1)^2x^2)\la \partial_x^{n+1}f_{\ell}, \partial_x^{n+1}g_{\ell}\ra_{L^2(\s^2)}(x)+x^4\la \partial_x^{n+2}f_{\ell}, \partial_x^{n+2}g_{\ell}\ra_{L^2(\s^2)}(x)\, dx.
\end{align*}
We denote moreover
\begin{align*}
\|f\|_{B_{R_{\infty}}}:=&\:\sqrt{\la f,f\ra_{B_{R_{\infty}}}}\quad \textnormal{and}\quad \|f\|_{B_{R_{\infty},2}}:=\sqrt{\la f,f\ra_{B_{R_{\infty},2}}}\\
\|f\|_{G_{\sigma, R_{\infty}}}:=&\:\sqrt{\la f,f\ra_{G_{\sigma, R_{\infty}}}}\quad \textnormal{and}\quad\|f\|_{G_{\sigma,R_{\infty},2}}:=\sqrt{\la f,f\ra_{G_{\sigma,R_{\infty},2}}}.
\end{align*}
Let $w_p: [0,r_+^{-1}]\to \R$ be a smooth cut-off function that is defined as follows: let $p\in [0,\infty)$, then
\begin{equation}
\label{eq:defw}
w_p(x)=\begin{cases}
 p &x\leq  \frac{1 }{ 4 R_{\infty}},\\
0  & x\geq \frac{1}{3R_{\infty}}.
\end{cases}
\end{equation}

Let $N_+\in \N_0$ and $\sigma\in \R$. We fix $p=4(N_++1)+2$. We define the following inner products on $C^{\infty}(\widehat{\Sigma})$: let $x_0=\frac{1}{2R_{\infty}}$, then
\begin{align*}
	\la f,g\ra_{H^{N_+}_{\sigma,R_{\infty}}}:=&\:\la f,g\ra_{H^{N_+}(\Sigma\cap\{\varrho \leq 3R_{\infty}\})}+\la \snabla_{\s^2}f,\snabla_{\s^2}g\ra_{H^{N_+}(\Sigma\cap\{\varrho\leq \frac{1}{3}R_{\infty}\})}\\
&+\la (\mathbf{1}-\slashed{\Delta}_{\s^2})^{-\frac{w_p}{4}}f,(\mathbf{1}-\slashed{\Delta}_{\s^2})^{-\frac{w_p}{4}}g\ra_{H^{N_+}(\Sigma\cap\{3R_{\infty}\leq \varrho\leq 4R_{\infty}\})}+\la f,g\ra_{B_{R_{\infty}}}+\la f,g\ra_{G_{\sigma,R_{\infty}}},\\
\la f,g\ra_{H^{N_++1}_{\sigma, R_{\infty},2}}:=&\:\sum_{k=0}^1\la \snabla_{\s^2}^{k }f,\snabla_{\s^2}^kg\ra_{H^{N_++1}(\Sigma\cap \{\varrho\leq \frac{1}{3}R_{\infty}\})}+\la f,g\ra_{H^{N_++2}(\Sigma\cap \{\frac{1}{3}R_{\infty}\leq \varrho\leq 3 R_{\infty}\})}\\
&+\la (\mathbf{1}-\slashed{\Delta}_{\s^2})^{-\frac{w_p}{4}}f,(\mathbf{1}-\slashed{\Delta}_{\s^2})^{-\frac{w_p}{4}}g\ra_{H^{N_++2}(\Sigma\cap\{3R_{\infty}\leq \varrho\leq 4R_{\infty}\})}+\la f,g\ra_{B_{R_{\infty},2}}+\la f,g\ra_{G_{\sigma,R_{\infty},2}}.
\end{align*}
We define the Hilbert spaces $H^{N_+}_{\sigma,R_{\infty}}$ and $H^{N_++1}_{\sigma,R_{\infty},2}$ as the completions of $C^{\infty}(\widehat{\Sigma})$ with respect to the inner products $\la f,g\ra_{H^{N_+}_{\sigma,R_{\infty}}}$ and $\la f,g\ra_{H^{N_++1}_{\sigma,R_{\infty},2}}$, respectively, and we denote the corresponding norms by $\|\cdot\|_{H^{N_+}_{\sigma,R_{\infty}}}$ and $\|\cdot\|_{H^{N_++1}_{\sigma,R_{\infty},2}}$, respectively.

We also introduce the subspace $H_{\sigma,R_{\infty},1}^{N_++1}\subset  H_{\sigma,R_{\infty}}^{N_+}$, which is defined as follows:
\begin{equation*}
H_{\sigma,R_{\infty},1}^{N_++1}:=\{f\in H_{\sigma,R_{\infty}}^{N_+}\,|\, \partial_xf,\snabla_{\s^2}f\in  H_{\sigma,R_{\infty}}^{N_+}\},
\end{equation*}
and we denote $\|f\|_{H_{\sigma, R_{\infty},1}^{N_++1}}^2:=\sum_{1\leq k_1+k_2\leq 1}\|\snabla_{\s^2}^{k_1}\partial_x^{k_2}f\|_{H_{\sigma, R_{\infty}}^{N_+}}$.

Let $P: H^1(\widehat{\Sigma})\to L^2(\widehat{\Sigma})$ be the following linear operator:
\begin{equation*}
Pf= \frac{1}{2M^2h(0)-b_{\tau \tau}}\left( (2+M x b_{x\tau})\partial_x+Ma^2 \sin \vartheta x^2 b_{\tau \vartheta}\partial_{\vartheta}+M xb_{\tau}\mathbf{1}\right).
\end{equation*}
The role of the operator $P$ is explained in Theorem \ref{prop:relALs}. We define the following space:
\begin{equation*}
\mathbf{H}_{\sigma,R_{\infty}}^{N_+}= {H}_{\sigma,R_{\infty}}^{N_+}(P)\oplus {H}_{\sigma,R_{\infty}}^{N_+}.
\end{equation*}
where ${H}_{\sigma,R_{\infty}}^{N_+}(P)=\{f\in {H}_{\sigma,R_{\infty}}^{N_+}\,|\, Pf\in {H}_{\sigma,R_{\infty}}^{N_+}\}$ and we equip it with the inner product:
\begin{equation*}
\la (f,g),(\tilde{f},\tilde{g})\ra_{\mathbf{H}_{\sigma,R_{\infty}}^{N_+}}=\la f,\tilde{f}\ra_{H_{\sigma,R_{\infty}}^{N_+}}+\la Pf, P\tilde{f}\ra_{H_{\sigma,R_{\infty}}^{N_+}}+\la g,\tilde{g}\ra_{H_{\sigma,R_{\infty}}^{N_+}}
\end{equation*}
to obtain a Hilbert space of pairs of functions on $\widehat{\Sigma}$. We note the corresponding norm by $\|\cdot\|_{\mathbf{H}_{\sigma,R_{\infty}}^{N_+}}$.
\section{Preliminaries: the infinitesimal generator of time translations}
\label{sec:infigen}
We define the time evolution operators $\mathcal{S}(\tau)$ associated to the wave equation $\square_{g_{M,a}}\phi=0$ and the foliation $\{\widehat{\Sigma}_{\tau}\}_{\tau\in \R}$ as follows: let $N\geq 0$ and $\tau\in [0,\infty)$, then

\begin{align*}
\mathcal{S}(\tau):H^{N+1}(\widehat{\Sigma})\times H^{N}(\widehat{\Sigma}) &\: \to  H^{N+1}(\widehat{\Sigma})\times H^{N}(\widehat{\Sigma}),\\
(\Psi,\Psi')&\: \mapsto ( \psi(\tau,\cdot), \partial_{\tau}\psi(\tau,\cdot)),
\end{align*}
where $\psi=\varrho\cdot  \phi$, with $\phi: \mathcal{R}\to \C$ the solution to $\square_{g_{M,a}}\phi=0$ corresponding to initial data $(\psi|_{\Sigma},\partial_{\tau}\psi|_{\Sigma})=(\Psi,\Psi')$. The map $\mathcal{S}(\tau)$ is well-defined, by local-in-time energy estimates on $\widehat{\Sigma}_{\tau}$; see for example \cite{aagkerr}[Proposition 3.4]. The results in this section also hold with very minimal modifications on Kerr--de Sitter spacetimes for the equation $(\square_{g_{M,a,\Lambda}}-\frac{2}{3}\Lambda)\phi=0$.

From standard, local-in-time energy estimates, it follows moreover that $\{\mathcal{S}(\tau)\}_{\tau\in \R}$ form a $C_0$ (strongly continuous) semigroup, i.e.\
\begin{align*}
\mathcal{S}(0)=&\:{\rm id},\\
\mathcal{S}(\tau_1)\mathcal{S}(\tau_2)=&\:\mathcal{S}_{\tau_1+\tau_2}\quad \textnormal{for all $\tau_1,\tau_2\in \R$,}\\
\lim_{\tau\to 0}\|\mathcal{S}(\tau)(\Psi,\Psi')-(\Psi,\Psi')\|_{H^{N+1}(\widehat{\Sigma})\times H^{N}(\widehat{\Sigma})}=&\: 0\quad \textnormal{for all $(\Psi,\Psi')\in H^{N+1}(\widehat{\Sigma})\times H^{N}(\widehat{\Sigma}) $}.
\end{align*}

We define the \emph{infinitesimal generator} corresponding to the semigroup $\{\mathcal{S}(\tau)\}_{\tau\in \R}$ as follows:
\begin{align*}
\mathcal{A}:&\: C^{\infty}(\widehat{\Sigma})\times C^{\infty}(\widehat{\Sigma})\to H^{N_++1}(\widehat{\Sigma})\times H^{N_+}(\widehat{\Sigma}),\\
\mathcal{A}(\Psi,\Psi')=&\:\lim_{\tau\to 0} \frac{1}{\tau}\left(\mathcal{S}(\tau)(\Psi,\Psi')-(\Psi,\Psi')\right) \textnormal{for all $(\Psi,\Psi')\in C^{\infty}(\widehat{\Sigma})\times C^{\infty}(\widehat{\Sigma}) $,}
\end{align*}
where the limit is taken with respect to the $H^{N_++1}(\widehat{\Sigma})\times H^{N_+}(\widehat{\Sigma})$ norm.

Note that for $(\Psi,\Psi')\in C^{\infty}(\widehat{\Sigma})\times C^{\infty}(\widehat{\Sigma})$, the function $\psi=\varrho\phi$ with $\phi: D^+(\Sigma)\to \C$ the solution to $\square_{g_{M,a}}\phi=0$ corresponding to initial data $(\psi|_{\Sigma},\partial_{\tau}\psi|_{\Sigma})=(\Psi,\Psi')$ satisfies
\begin{equation*}
\mathcal{A}(\Psi,\Psi')=(\partial_{\tau}\psi|_{\Sigma},\partial_{\tau}^{2}\psi|_{\Sigma})\in C^{\infty}(\widehat{\Sigma})\times C^{\infty}(\widehat{\Sigma})=:( C^{\infty}(\widehat{\Sigma}))^2.
\end{equation*}

In the lemma below, we will relate $\mathcal{A}$ to the operator $L_s$ introduced in \S \ref{sec:laplacetransfop}. For this, we introduce the following domain for $L_s$: 
\begin{equation*}
 {\rm dom}\, (L_s):=\{f\in C^{\infty}(\widehat{\Sigma})\cap H^{N_+}_{\sigma,R_{\infty}}\,|\, L_sf\in H^{N_+}_{\sigma,R_{\infty}}\}.
\end{equation*}
We define the Hilbert space $D^{N_+}_{\sigma, R_{\infty}}(L_s)$ as the closure of ${\rm dom}\, (L_s)$ with respect to the graph norm $\|\cdot\|_{H^{N_+}_{\sigma,R_{\infty}}}+\|L_s(\cdot)\|_{H^{N_+}_{\sigma,R_{\infty}}}$. By construction, $L_s$ can be extended as a bounded linear operator to obtain:
\begin{equation*}
L_s: D^{N_+}_{\sigma, R_{\infty}}(L_s)\to H^{N_+}_{\sigma,R_{\infty}}.
\end{equation*}

\begin{proposition}
\label{prop:relALs}
The infinitesimal generator of time translations $\mathcal{A}$ (corresponding to Kerr spacetimes) satisfies the following properties:
\begin{enumerate}[label=\emph{(\roman*)}]
\item 
Let $s\in \C$. Then we can express:
\begin{equation}
\label{eq:relAandLs}
\mathcal{A}-s\mathbf{1}=\begin{pmatrix}
0 & \mathbf{1}\\
\mathbf{1} & P-s\mathbf{1}
\end{pmatrix}\begin{pmatrix}
\frac{1}{2M^2h(0)-Mxb_{\tau \tau}} L_s& 0\\
0&\mathbf{1}
\end{pmatrix}\begin{pmatrix}
\mathbf{1}& 0\\
-s \mathbf{1} & \mathbf{1}
\end{pmatrix}.
\end{equation}
\item Assume that $D^{N_+}_{\sigma, R_{\infty}}(L_s)\subseteq H_{\sigma, R_{\infty},2}^{N_++1}$ and that the inverse of $L_s$, $L_{s}^{-1}: H_{\sigma,R_{\infty}}^{N_+}\to D^{N_+}_{\sigma, R_{\infty}}(L_s)\subseteq H_{\sigma, R_{\infty},2}^{N_++1}$, is a bounded linear operator with respect to the operator norm $\|\cdot\|_{H_{\sigma,R_{\infty}}^{N_+}\to H_{\sigma, R_{\infty},2}^{N_++1}}$. Then the operator
\begin{align*}
\mathcal{R}(s): \mathbf{H}_{\sigma,R_{\infty}}^{N_+} \to &\:\mathbf{H}_{\sigma,R_{\infty}}^{N_+},\\
\mathcal{R}(s)=&\:\begin{pmatrix}
\mathbf{1} & 0\\
s \mathbf{1}& \mathbf{1}
\end{pmatrix}\begin{pmatrix}
 L_s^{-1}\circ (2M^2h(0)-M xb_{\tau \tau})\mathbf{1}& 0\\
0&\mathbf{1}
\end{pmatrix}\begin{pmatrix}
-(P-s \mathbf{1})& \mathbf{1}\\
\mathbf{1} & 0
\end{pmatrix}
\end{align*}
is a bounded linear operator on $\mathbf{H}_{\sigma,R_{\infty}}^{N_+}$, which satisfies $(\mathcal{A}-s\mathbf{1})\mathcal{R}(s)(f,g)=(f,g)$ for all $(f,g)\in  \mathbf{H}_{\sigma,R_{\infty}}^{N_+}\cap ( C^{\infty}(\widehat{\Sigma}))^2$ and $\mathcal{R}(s)(\mathcal{A}-s\mathbf{1})(\Psi,\Psi')=(\Psi,\Psi')$ for all $(\Psi,\Psi')\in  \textnormal{ran}\, \mathcal{R}(s)\cap ( C^{\infty}(\widehat{\Sigma}))^2$.

\item Assume that there exists a $s_0\in \C$, such that $H^{N_+}_{\sigma, R_{\infty}}(L_{s_0})\subseteq H_{\sigma, R_{\infty},2}^{N_++1}$ and $L_{s_0}^{-1}: H_{\sigma,R_{\infty}}^{N_+}\to H^{N_+}_{\sigma, \delta}(L_{s_0})\subseteq H_{\sigma, R_{\infty},2}^{N_++1}$ is a bounded linear operator with respect to the operator norm $$\|\cdot\|_{H_{\sigma,R_{\infty}}^{N_+}\to H_{\sigma, R_{\infty},2}^{N_++1}}.$$ Then the operator $\mathcal{A}$ can be extended as a closed linear operator on the following domain:
\begin{equation*}
\mathcal{A}:  \mathbf{H}_{\sigma,R_{\infty}}^{N_+}\supseteq \textnormal{ran}\, \mathcal{R}(s_0)\to \mathbf{H}_{\sigma,R_{\infty}}^{N_+}
\end{equation*}
and
\begin{equation*}
D^{N_+}_{\sigma,R_{\infty}}(L_s)\oplus {H}_{\sigma,R_{\infty}}^{N_+}(P)=\textnormal{ran}\, \mathcal{R}(s)=\textnormal{ran}\, \mathcal{R}(s_0)
\end{equation*}
for all $s\in \C$ such that $L_{s}^{-1}$ is well-defined.
\end{enumerate}
The above results generalize to or Kerr--de Sitter spacetimes with suitably small $M^2\Lambda$ by replacing the factor $\frac{1}{2M^2h(0)-Mxb_{\tau \tau}}$ with an analogous factor, depending on the choice of future-horizon-intersecting, spacelike foliation. 
\end{proposition}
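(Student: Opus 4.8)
The plan is to extract the block identity (i) directly from the explicit form of the rescaled Kerr wave operator in Lemma~\ref{lm:operatorcoeffb}, and then deduce (ii) and (iii) as essentially formal consequences of that factorisation, the real work being the bookkeeping that every block maps between the (Gevrey-weighted) Hilbert spaces in play. Write $\mathcal{L}:=\varrho^3\circ\square_{g_{M,a}}\circ\varrho^{-1}$, acting on $\psi=\varrho\phi$. Since $\partial_\tau=T$ is Killing, \eqref{eq:waveeqxcoord} shows $\mathcal{L}$ is stationary and second order in $\tau$, $\mathcal{L}=a_2\partial_\tau^2+a_1\partial_\tau+a_0$, with scalar $a_2=Mxb_{\tau\tau}-2M^2h(0)$ (the coefficient in \eqref{eq:negttcoeff}), first-order spatial operator $a_1=(2+Mxb_{x\tau})\partial_x+Ma^2x^2\sin\vartheta\,b_{\tau\vartheta}\partial_\vartheta+Mxb_\tau$ --- precisely the numerator of the operator $P$ of \S\ref{sec:hilbertspaces}, so $P=-a_2^{-1}a_1$ --- and spatial $a_0$ with $L_s=s^2a_2+sa_1+a_0$, in particular $a_0=L_s-sa_1-s^2a_2$. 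For $(\Psi,\Psi')\in(C^\infty(\widehat{\Sigma}))^2$ the corresponding solution satisfies $\mathcal{L}\psi=0$, whence at $\tau=0$
\[
\partial_\tau^2\psi|_\Sigma=-a_2^{-1}(a_1\Psi'+a_0\Psi)=-a_2^{-1}L_s\Psi+P\Psi'-sP\Psi+s^2\Psi .
\]
Since $\mathcal{A}(\Psi,\Psi')=(\Psi',\partial_\tau^2\psi|_\Sigma)$, multiplying out the three blocks of \eqref{eq:relAandLs} on $(\Psi,\Psi')$ reproduces exactly $(\mathcal{A}-s\mathbf{1})(\Psi,\Psi')=(\Psi'-s\Psi,\ \partial_\tau^2\psi|_\Sigma-s\Psi')$; this is (i).

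For (ii), note $\mathcal{R}(s)$ is, block by block, the product in reverse order of the inverses of the three factors of \eqref{eq:relAandLs}: the two shears invert by the obvious sign flips, and the middle block inverts because $L_s^{-1}$ exists by hypothesis and $-a_2=2M^2h(0)-Mxb_{\tau\tau}$ is, by \eqref{eq:condh2}--\eqref{eq:negttcoeff} and Lemma~\ref{lm:operatorcoeffb}, a smooth multiplier that is bounded and bounded away from $0$. Boundedness of $\mathcal{R}(s)$ on $\mathbf{H}_{\sigma,R_\infty}^{N_+}$ is verified factor by factor while tracking targets: the first shear carries $H_{\sigma,R_\infty}^{N_+}(P)\oplus H_{\sigma,R_\infty}^{N_+}$ into $H_{\sigma,R_\infty}^{N_+}\oplus H_{\sigma,R_\infty}^{N_+}(P)$ (the defining property of $H^{N_+}_{\sigma,R_\infty}(P)$ is exactly what makes $Pf\in H^{N_+}_{\sigma,R_\infty}$); the middle block sends this into $D_{\sigma,R_\infty}^{N_+}(L_s)\oplus H_{\sigma,R_\infty}^{N_+}(P)$ using $L_s^{-1}\colon H_{\sigma,R_\infty}^{N_+}\to H_{\sigma,R_\infty,2}^{N_++1}\subseteq H_{\sigma,R_\infty}^{N_+}(P)$, the inclusion because $H_{\sigma,R_\infty,2}^{N_++1}$ controls one derivative and $P$ is first order with bounded coefficients; the last shear preserves $\mathbf{H}_{\sigma,R_\infty}^{N_+}$. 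Finally, for smooth $(f,g)$ the image $\mathcal{R}(s)(f,g)$ is again smooth --- by (degenerate) elliptic regularity for $L_s$ away from $x=0$ together with the Gevrey structure at $x=0$, as in the regularity argument for $\ker L_s$ --- so $\mathcal{A}$ may be applied to it, and then $(\mathcal{A}-s\mathbf{1})\mathcal{R}(s)=\mathbf{1}$ and $\mathcal{R}(s)(\mathcal{A}-s\mathbf{1})=\mathbf{1}$ on the relevant smooth domains follow from (i), each block of $\mathcal{R}(s)$ being a two-sided inverse of the corresponding block of \eqref{eq:relAandLs}.

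For (iii), (ii) shows that $\mathcal{A}-s_0\mathbf{1}$, restricted to $(C^\infty(\widehat{\Sigma}))^2\cap\mathbf{H}_{\sigma,R_\infty}^{N_+}$, admits the bounded two-sided inverse $\mathcal{R}(s_0)$, hence the closed extension $\mathcal{R}(s_0)^{-1}$ with domain $\mathrm{ran}\,\mathcal{R}(s_0)$, and $\mathcal{A}:=\mathcal{R}(s_0)^{-1}+s_0\mathbf{1}$ is the asserted closed operator. The two shear factors of $\mathcal{R}(s)$ are bounded bijections of $\mathbf{H}_{\sigma,R_\infty}^{N_+}$ onto itself, so $\mathrm{ran}\,\mathcal{R}(s)$ equals the range of the middle block, i.e.\ $\mathrm{ran}(L_s^{-1})\oplus H_{\sigma,R_\infty}^{N_+}(P)=D_{\sigma,R_\infty}^{N_+}(L_s)\oplus H_{\sigma,R_\infty}^{N_+}(P)$ (using that $-a_2$ is an invertible multiplier and $L_s^{-1}$ a bijection onto $D_{\sigma,R_\infty}^{N_+}(L_s)$). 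It remains to note that $D_{\sigma,R_\infty}^{N_+}(L_s)$ is independent of $s$: for $f\in D_{\sigma,R_\infty}^{N_+}(L_{s_0})\subseteq H_{\sigma,R_\infty,2}^{N_++1}$ one has $(L_s-L_{s_0})f=(s-s_0)a_1f+(s^2-s_0^2)a_2f\in H_{\sigma,R_\infty}^{N_+}$, because $a_1$ is first order and $f$ has a derivative controlled and $a_2$ is a bounded multiplier, hence $L_sf\in H_{\sigma,R_\infty}^{N_+}$; the reverse inclusion follows symmetrically once $D_{\sigma,R_\infty}^{N_+}(L_s)\subseteq H_{\sigma,R_\infty,2}^{N_++1}$ is available, which is part of the output of the construction of $L_s^{-1}$. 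Hence $\mathrm{ran}\,\mathcal{R}(s)=\mathrm{ran}\,\mathcal{R}(s_0)$.

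The Kerr--de Sitter statement changes nothing structurally: for $M^2\Lambda$ small and a non-degenerate event horizon the local-in-time energy estimates still produce a $C_0$-semigroup, and $\varrho^3\circ(\square_{g_{M,a,\Lambda}}-\tfrac{2}{3}\Lambda)\circ\varrho^{-1}$ is again stationary and second order in $\tau$ with $\partial_\tau^2$-coefficient $\tilde a_2$ bounded and, since the leaves are spacelike so $g_{M,a,\Lambda}^{-1}(d\tau,d\tau)<0$, bounded away from $0$; replacing $a_2$ by $\tilde a_2$ and $P$ by $-\tilde a_2^{-1}\tilde a_1$, the three steps go through verbatim. The one genuinely delicate point is the space-tracking in the second and third steps: one must know that $H_{\sigma,R_\infty,2}^{N_++1}$ embeds continuously in both $H_{\sigma,R_\infty}^{N_+}(P)$ and the natural domain of $a_1$, and that $2M^2h(0)-Mxb_{\tau\tau}$ (resp.\ $\tilde a_2$) acts boundedly and is bounded below on all the weighted spaces --- all consequences of the coefficient bounds and analyticity in Lemma~\ref{lm:operatorcoeffb} together with the standing hypotheses on $h$.
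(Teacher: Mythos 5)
Your proposal is correct, and for parts (i) and (ii) it follows essentially the same route as the paper: you match the $\partial_\tau$-coefficients of $\varrho^3\square_{g_{M,a}}\varrho^{-1}$ against \eqref{eq:generalexprLs}, identify $P=-a_2^{-1}a_1$ with $a_2=Mxb_{\tau\tau}-2M^2h(0)$, use the wave equation to express $\partial_\tau^2\psi|_\Sigma$ in terms of $L_s\Psi$, $P\Psi'$, $P\Psi$, $\Psi$, and then verify the block factorisation and invert it factor by factor, tracking the spaces $H^{N_+}_{\sigma,R_\infty}(P)$ and $D^{N_+}_{\sigma,R_\infty}(L_s)$ exactly as needed.

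The one place you genuinely diverge is the range identity in (iii). The paper proves $\textnormal{ran}\,\mathcal{R}(s)=\textnormal{ran}\,\mathcal{R}(s_0)$ by a resolvent-identity manipulation: given $(\Psi,\Psi')=\mathcal{R}(s)(f,g)$ it sets $\tilde f=f+(s-s_0)\Psi$, $\tilde g=g+(s-s_0)\Psi'$ and checks $\mathcal{R}(s_0)(\tilde f,\tilde g)=(\Psi,\Psi')$, which sidesteps any discussion of whether $D^{N_+}_{\sigma,R_\infty}(L_s)$ depends on $s$. You instead compute $\textnormal{ran}\,\mathcal{R}(s)=D^{N_+}_{\sigma,R_\infty}(L_s)\oplus H^{N_+}_{\sigma,R_\infty}(P)$ directly from the surjectivity of the shears and the middle block, and then show the domain is $s$-independent via $(L_s-L_{s_0})f=(s-s_0)a_1f+(s^2-s_0^2)a_2f$, which is controlled by the graph norm of $L_{s_0}$ once $D^{N_+}_{\sigma,R_\infty}(L_{s_0})\subseteq H^{N_++1}_{\sigma,R_\infty,2}$. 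Both arguments are sound; yours has the advantage of actually proving the first equality $D^{N_+}_{\sigma,R_\infty}(L_s)\oplus H^{N_+}_{\sigma,R_\infty}(P)=\textnormal{ran}\,\mathcal{R}(s)$ asserted in the statement (which the paper leaves implicit in \eqref{eq:rangeexpr}), at the cost of leaning on the invertibility of the multiplier $2M^2h(0)-Mxb_{\tau\tau}$ on the Gevrey-weighted spaces and on the $s$-independence of the graph-norm closure — points you correctly flag and which do follow from Lemma \ref{lm:operatorcoeffb} and Proposition \ref{prop:producttotalnorm}. One small imprecision: the shear $\begin{pmatrix}-(P-s\mathbf{1})&\mathbf{1}\\ \mathbf{1}&0\end{pmatrix}$ is a bijection from $H^{N_+}_{\sigma,R_\infty}(P)\oplus H^{N_+}_{\sigma,R_\infty}$ onto $H^{N_+}_{\sigma,R_\infty}\oplus H^{N_+}_{\sigma,R_\infty}(P)$ (the factors swap), not of $\mathbf{H}^{N_+}_{\sigma,R_\infty}$ onto itself; your subsequent space-tracking is consistent with the correct statement, so this is only a matter of phrasing.
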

\begin{proof}
We will restrict to Kerr in the proof. The Kerr--de Sitter case follows with minimal modifications. We compare the expression for $\varrho^3\square_{g_{M,a}}\phi$ in \eqref{eq:waveeqxcoord} to the expression for $L_s$ in \eqref{eq:generalexprLs} in order to obtain:
\begin{multline*}
\varrho^3\square_{g_{M,a}}\phi= L_s \psi+2\partial_x\partial_{\tau}\psi+M x b_{x\tau }\cdot \partial_x \partial_{\tau} \psi+(Mx b_{\tau \tau}-2M^2h(0))\partial_{\tau}^2\psi+Ma^2x^2\sin\vartheta  b_{\tau \vartheta}\partial_{\tau}\partial_{\vartheta}\psi+Mx b_{\tau}\partial_{\tau}\psi\\
-2s\partial_x\psi-sM x b_{x\tau }\cdot \partial_x  \psi-s^2(Mx b_{\tau \tau}-2M^2h(0))\psi-Ma^2 s x^2\sin\vartheta  b_{\tau \vartheta}\partial_{\vartheta}\psi-M s x b_{\tau}\psi\\
= L_s \psi+(Mx b_{\tau \tau}-2M^2h(0))\partial_{\tau}^2\psi+(2M^2h(0)-Mx b_{\tau \tau})P\partial_{\tau}\psi -s(2M^2h(0)-Mx b_{\tau \tau})P\psi+s^2(2M^2h(0)-Mx b_{\tau \tau})\psi.
\end{multline*}
Using that $\square_{g_{M,a}}\phi=0$, we can therefore express:
\begin{equation*}
\partial_{\tau}^2\psi=(2M^2h(0)-Mx b_{\tau \tau})^{-1}L_s \psi+P\partial_{\tau}\psi-sP\psi+s^2\mathbf{1}
\end{equation*}
and therefore, $\mathcal{A}$ takes the following form
\begin{equation*}
\mathcal{A}(\Psi,\Psi')=\begin{pmatrix} \Psi'\\
(2M^2h(0)-Mx b_{\tau \tau})^{-1}L_s \Psi+(s^2\mathbf{1}-sP)\Psi+P\Psi'
\end{pmatrix}.
\end{equation*}
The expression for $\mathcal{A}-s\mathbf{1}$ in (i) can then be easily verified.

We consider now (ii). We have that
\begin{equation}
\label{eq:rangeexpr}
\mathcal{R}(s)(f,g)=\begin{pmatrix}
-L_s^{-1}\left((2M^2h(0)-M xb_{\tau \tau})((P-s\mathbf{1})f-g)\right),\\
-sL_s^{-1}\left((2M^2h(0)-M xb_{\tau \tau})((P-s\mathbf{1})f-g)\right)+f
\end{pmatrix}.
\end{equation}
We can use that $L_{s}^{-1}: H_{\sigma,R_{\infty}}^{N_+}\to D^{N_+}_{\sigma, R_{\infty}}(L_s)\subseteq H_{\sigma, R_{\infty},2}^{N_++1}$ is a bounded linear operator and $f,Pf,g\in H_{\sigma,R_{\infty}}^{N_+}$ for $(f,g)\in \mathbf{H}_{\sigma,R_{\infty}}^{N_+}$ to ensure that $\mathcal{R}(s)$ is in fact well-defined and bounded on the domain $ \mathbf{H}_{\sigma,R_{\infty}}^{N_+}$ and $\mathcal{R}(s_0)$.

Note moreover that $(f,g)\in  \mathbf{H}_{\sigma,R_{\infty}}^{N_+}\cap ( C^{\infty}(\widehat{\Sigma}))^2$ implies that $\mathcal{R}(f,g)\in ( C^{\infty}(\widehat{\Sigma}))^2$. The property $(\mathcal{A}-s\mathbf{1})\mathcal{R}(s)=\mathcal{R}(s)(\mathcal{A}-s\mathbf{1})$ can then easily be verified by using the block diagonal form of $\mathcal{A}-s\mathbf{1}$ from (i) to formally invert the operator.

We will now prove (iii). First observe that we can extend the domain of $\mathcal{A}$ to $\textnormal{ran}\, \mathcal{R}(s_0)$ as follows. Let $(\Psi,\Psi')\in \textnormal{ran}\, \mathcal{R}(s_0)$. Then there exists an element $(f,g)\in \mathbf{H}_{\sigma,R_{\infty}}^{N_+}$ such that $(\Psi,\Psi')=\mathcal{R}(s_0)(f,g)$. We then define:
\begin{equation*}
	(\mathcal{A}-s\mathbf{1})(\Psi,\Psi'):=(f,g)
\end{equation*}	
By (ii), this extends the operator $\mathcal{A}$ defined on $\textnormal{ran}\, \mathcal{R}(s_0)\cap ( C^{\infty}(\widehat{\Sigma}))^2$ to $\textnormal{ran}\, \mathcal{R}(s_0)$. Closedness of $\mathcal{A}-s\mathbf{1}$ (and hence $\mathcal{A}$) then follows immediately from the boundedness of its inverse $\mathcal{R}(s_0)$.

We will now prove the remaining statement in (iii). Let $(\Psi,\Psi')\in   \textnormal{ran}\, \mathcal{R}(s)$. Then there exist $(f,g)\in \mathbf{H}_{\sigma,R_{\infty}}^{N_+}$ satisfying \eqref{eq:rangeexpr}.
From (i), we have that formally
\begin{equation*}
(\mathcal{A}-s_0)(\Psi,\Psi')=(\mathcal{A}-s)(\Psi,\Psi')+(s-s_0)\begin{pmatrix}\Psi\\ \Psi'\end{pmatrix}=\begin{pmatrix}f+(s-s_0)\Psi\\
g+(s-s_0)\Psi'
\end{pmatrix}.
\end{equation*}
Motivated by the above, using that $(\Psi,\Psi')=\mathcal{R}(s)(f,g)$, we define
\begin{align*}
\tilde{f}:=&\:f+(s-s_0)\Psi,\\
\tilde{g}:=&\:g+(s-s_0)\Psi'.
\end{align*}
It then follows that $(\tilde{f},\tilde{g})\in  \mathbf{H}_{\sigma,R_{\infty}}^{N_+}$. Furthermore, $\mathcal{R}(s_0)(\tilde{f},\tilde{g})=(\Psi,\Psi')$, so $(\Psi,\Psi')\in  \textnormal{ran}\, \mathcal{R}(s_0)$. We conclude that $\textnormal{ran}\, \mathcal{R}(s)\subseteq \textnormal{ran}\, \mathcal{R}(s_0)$. Reversing the roles of $s_0$ and $s$ in the above argument allows us to conclude that $\textnormal{ran}\, \mathcal{R}(s)= \textnormal{ran}\, \mathcal{R}(s_0)$.
\end{proof}
The operator $ \mathcal{R}(s)$ from Proposition \ref{prop:relALs} satisfies the defining properties of the \emph{resolvent operator} corresponding to $\mathcal{A}: \textnormal{ran}\, \mathcal{R}(s_0)\to \mathbf{H}_{\sigma,R_{\infty}}^{N_+}$, where $L_{s_0}^{-1}$ is well-defined. We can denote
\begin{equation*}
(\mathcal{A}-s\mathbf{1})^{-1}:=\mathcal{R}(s).
\end{equation*}

\begin{corollary}
\label{cor:relLA}
Assume that for $s\in O$, with $O\subset \C$ a connected domain, \underline{either} $L_{s}^{-1}: H_{\sigma,R_{\infty}}^{N_+}\to D^{N_+}_{\sigma,R_{\infty}}(L_{s_0})\subseteq H_{\sigma,R_{\infty},2}^{N_++1}$ is bounded with respect to the operator norm $\|\cdot\|_{H_{\sigma,R_{\infty}}^{N_+}\to H_{\sigma, R_{\infty},2}^{N_++1}}$, \underline{or} $\ker L_s\neq \{0\}$. Assume also that there exists a $s_0\in O$ such that $\ker L_{s_0}=\{0\}$. 

Consider $\mathcal{A}:  \textnormal{ran}\, \mathcal{R}(s_0)\to \mathbf{H}_{\sigma,R_{\infty}}^{N_+}$. Then the spectrum ${\rm Spect}\,(\mathcal{A})$ satisfies:
\begin{equation*}
{\rm Spect}\,(\mathcal{A})\cap O={\rm Spect}_{\rm point}\,(\mathcal{A})\cap O=\{s\in O \,|\, \ker L_s\neq \{0\}\}.
\end{equation*}
Furthermore, $(\Psi,\Psi')\in \textnormal{ran}\, \mathcal{R}(s_0)$ is an eigenvector of $\mathcal{A}$ if and only if $(\Psi,\Psi')=(\hpsi_s,s\hpsi_s)$, with $\hpsi_s\in \ker L_s$.
\end{corollary}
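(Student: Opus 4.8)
The plan is to extract everything from the block factorisation \eqref{eq:relAandLs} of Proposition~\ref{prop:relALs}(i), together with the resolvent construction of parts (ii)--(iii) of that proposition. Write $b:=2M^2h(0)-Mxb_{\tau\tau}$ for the coefficient appearing in the middle factor of \eqref{eq:relAandLs}; it is bounded and, since $h(0)>0$ and $\Sigma_{\tau}$ is spacelike (cf.\ \eqref{eq:condh2}), bounded away from zero, so multiplication by $b$ and by $b^{-1}$ are isomorphisms of every Hilbert space below. We regard each $L_s$, $s\in O$, as acting on the single fixed domain $D:=D^{N_+}_{\sigma,R_\infty}(L_{s_0})$, which the hypothesis puts inside $H^{N_++1}_{\sigma,R_\infty,2}$. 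Since $\ker L_{s_0}=\{0\}$, the dichotomy forces $L_{s_0}^{-1}$ to be bounded, so Proposition~\ref{prop:relALs}(ii)--(iii) applies at $s_0$ and realizes $\mathcal{A}$ as the closed operator with domain ${\rm dom}\,\mathcal{A}=\textnormal{ran}\,\mathcal{R}(s_0)=D\oplus H^{N_+}_{\sigma,R_\infty}(P)$; we also use the elementary mapping properties $H^{N_++1}_{\sigma,R_\infty,2}\hookrightarrow H^{N_+}_{\sigma,R_\infty}$ and $P: H^{N_++1}_{\sigma,R_\infty,2}\to H^{N_+}_{\sigma,R_\infty}$, which give $D\subseteq H^{N_+}_{\sigma,R_\infty}(P)$ and hence $(f,sf)\in{\rm dom}\,\mathcal{A}$ for every $f\in D$ and $s\in\C$.

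First I would dispose of the regular points. Fix $s\in O$ with $\ker L_s=\{0\}$. By the dichotomy, $L_s^{-1}: H^{N_+}_{\sigma,R_\infty}\to D\subseteq H^{N_++1}_{\sigma,R_\infty,2}$ is bounded, so Proposition~\ref{prop:relALs}(ii) produces a bounded operator $\mathcal{R}(s)$ on $\mathbf{H}^{N_+}_{\sigma,R_\infty}$ satisfying $(\mathcal{A}-s\mathbf{1})\mathcal{R}(s)=\mathbf{1}$ on $\mathbf{H}^{N_+}_{\sigma,R_\infty}\cap(C^\infty(\widehat{\Sigma}))^2$ and $\mathcal{R}(s)(\mathcal{A}-s\mathbf{1})=\mathbf{1}$ on $\textnormal{ran}\,\mathcal{R}(s)\cap(C^\infty(\widehat{\Sigma}))^2$, while Proposition~\ref{prop:relALs}(iii) gives $\textnormal{ran}\,\mathcal{R}(s)=\textnormal{ran}\,\mathcal{R}(s_0)={\rm dom}\,\mathcal{A}$. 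A routine density argument then closes the gap: since $\mathbf{H}^{N_+}_{\sigma,R_\infty}\cap(C^\infty(\widehat{\Sigma}))^2$ is dense, $\mathcal{R}(s_0)$ and $\mathcal{R}(s)$ carry smooth data to smooth elements, $\mathcal{R}(s)$ is bounded and $\mathcal{A}-s\mathbf{1}$ is closed, the two identities above extend to all of $\mathbf{H}^{N_+}_{\sigma,R_\infty}$ and ${\rm dom}\,\mathcal{A}$ respectively. Hence $\mathcal{R}(s)=(\mathcal{A}-s\mathbf{1})^{-1}$ is an everywhere-defined bounded inverse and $s\notin{\rm Spect}(\mathcal{A})$.

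Next I would treat $s\in O$ with $\ker L_s\neq\{0\}$, showing that for $(\Psi,\Psi')\in{\rm dom}\,\mathcal{A}$ one has $(\mathcal{A}-s\mathbf{1})(\Psi,\Psi')=0$ precisely when $\Psi\in\ker L_s$ and $\Psi'=s\Psi$; as $\ker L_s\neq\{0\}$, this simultaneously shows $s\in{\rm Spect}_{\rm point}(\mathcal{A})$ and that $\ker(\mathcal{A}-s\mathbf{1})=\{(\hpsi_s,s\hpsi_s)\,:\,\hpsi_s\in\ker L_s\}$, i.e.\ the eigenvectors are the $(\hpsi_s,s\hpsi_s)$ with $\hpsi_s\in\ker L_s\setminus\{0\}$. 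For the ``if'' direction, given $\hpsi_s\in\ker L_s\subseteq D$ we have $(\hpsi_s,s\hpsi_s)\in D\oplus H^{N_+}_{\sigma,R_\infty}(P)={\rm dom}\,\mathcal{A}$, and applying the three factors of \eqref{eq:relAandLs} in turn sends $(\hpsi_s,s\hpsi_s)\mapsto(\hpsi_s,0)\mapsto(b^{-1}L_s\hpsi_s,0)=(0,0)\mapsto(0,0)$. For the ``only if'' direction, the rightmost factor of \eqref{eq:relAandLs} is a bounded bijection of $\mathbf{H}^{N_+}_{\sigma,R_\infty}$ sending $(\Psi,\Psi')$ to $(\Psi,\Psi'-s\Psi)$ and the leftmost factor is injective, so $(\mathcal{A}-s\mathbf{1})(\Psi,\Psi')=0$ is equivalent to the middle factor annihilating $(\Psi,\Psi'-s\Psi)$, i.e.\ to $\Psi'=s\Psi$ together with $b^{-1}L_s\Psi=0$; since $b^{-1}$ is invertible and $\Psi\in D$, this is exactly $L_s\Psi=0$ and $\Psi'=s\Psi$.

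Combining the two cases, each $s\in O$ falls into exactly one alternative of the hypothesis and is a regular point of $\mathcal{A}$ in the first and an eigenvalue in the second, whence ${\rm Spect}(\mathcal{A})\cap O={\rm Spect}_{\rm point}(\mathcal{A})\cap O=\{s\in O\,:\,\ker L_s\neq\{0\}\}$ with the eigenvectors as described. The step I expect to be the actual obstacle --- as opposed to any analytic input, which has all been placed in Proposition~\ref{prop:relALs} --- is the domain bookkeeping: one has to check that the factorisation \eqref{eq:relAandLs} may legitimately be applied on ${\rm dom}\,\mathcal{A}$ (so that the first slot of a candidate eigenvector genuinely lies in $D$, where $L_s$ acts), and that ``$\ker L_s$'' is read with respect to the single fixed domain $D$ for all $s\in O$, consistently with Proposition~\ref{prop:relALs}(iii). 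Both are built into the construction of $\mathcal{A}$ and of the family $\{L_s\}$ in \S\ref{sec:laplacetransfop}--\S\ref{sec:infigen}, so no extra work is needed beyond making this explicit.
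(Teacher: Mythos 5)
Your proof is correct and follows essentially the same route as the paper: everything is read off from the block factorisation \eqref{eq:relAandLs} and the resolvent construction of Proposition~\ref{prop:relALs}, with the dichotomy hypothesis converting ``$\ker L_s=\{0\}$'' into boundedness of $\mathcal{R}(s)$ and hence regularity of $s$. You additionally spell out the density upgrade of the resolvent identities and the ``only if'' direction of the eigenvector characterisation (via injectivity of the outer factors), both of which the paper leaves implicit.
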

\begin{proof}
In view of the form of $\mathcal{R}(s)$ from (ii) of Proposition \ref{prop:relALs}, boundedness of $\mathcal{R}(s)$ is equivalent to boundedness of $L_s^{-1}$. Hence, $s\in {\rm Spect}\,(\mathcal{A})\cap O$ if and only if $L_s^{-1}$ is not bounded. By assumption, the latter implies that $\ker L_s\neq \{0\}$. Let $\hpsi_s\in \ker L_s$. Then
\begin{equation*}
\mathcal{A}(\hpsi_s,s\hpsi_s)=s\cdot (\hpsi_s,s\hpsi_s)\in \mathbf{H}_{\sigma,R_{\infty}}^{N_+}.
\end{equation*}
and $(f,g)=(\mathcal{A}-s_0\mathbf{1})(\hpsi_s,s\hpsi_s)=(s-s_0)(\hpsi_s, s\hpsi_s)\in \mathbf{H}_{\sigma,R_{\infty}}^{N_+}$, so $\mathcal{R}(s_0)(f,g)=\begin{pmatrix}\hpsi_s\\s\hpsi_s\end{pmatrix}$ and therefore $(\hpsi,s\hpsi_s)\in \textnormal{ran}\, \mathcal{R}(s_0)$, so $s\in {\rm Spect}_{\rm point}\,(\mathcal{A})\cap O$, with $\mathcal{A}$ defined on the domain $\textnormal{ran}\, \mathcal{R}(s_0)$.
\end{proof}

The \emph{Hermitian adjoint} of $L_s:D^{N_+}_{\sigma, R_{\infty}}(L_s)\supset {\rm dom}\,(L_s)\to H_{\sigma,R_{\infty}}^{N_+}$, with respect to the $H_{\sigma,R_{\infty}}^{N_+}$ inner product is defined as follows: let
\begin{equation*}
D_{\sigma,R_{\infty}}^{N_+}(L_s^*):=\left\{f\in H_{\sigma,R_{\infty}}^{N_+}\,,\, \la f, L_s(\cdot)\ra_{H_{\sigma,R_{\infty}}^{N_+}}: D_{\sigma,R_{\infty}}^{N_+}(L_s)\supset {\rm dom}\,(L_s)\to  \C \: \textnormal{is bounded}\right\}.
\end{equation*}
By the Riesz representation theorem, we have that for all $f\in D_{\sigma,R_{\infty}}^{N_+}(L_s^*)$, there exists a unique $g \in D^{N_+}_{\sigma, R_{\infty}}(L_s)$, such that $\la g, \hpsi\ra_{H_{\sigma,R_{\infty}}^{N_+}}=\la f, L_s \hpsi\ra_{H_{\sigma,R_{\infty}}^{N_+}}$ for all $\hpsi\in {\rm dom}\,(L_s)$. Then we define the Hermitian adjoint of $L_s$ by $L_s^*: D_{\sigma,R_{\infty}}^{N_+}(L_s^*)\to  H_{\sigma,R_{\infty}}^{N_+}$, with $L_s^*f=g$, with $f,g$ as above.

Similarly, we define the Hermitian adjoint of $\mathcal{A}$, with respect to the $\mathbf{H}_{\sigma,R_{\infty}}^{N_+}$ inner product, as follows: let $s_0\in \C$ be such that $L_{s_0}^{-1}$ is well-defined, then
\begin{equation*}
\mathbf{D}_{\sigma,R_{\infty}}^{N_+}(\mathcal{A}^*):=\left\{(f,g)\in \mathbf{H}_{\sigma,R_{\infty}}^{N_+}\,,\, \la f,\mathcal{A}(\cdot)\ra_{\mathbf{H}_{\sigma,R_{\infty}}^{N_+}}:{\rm ran}\,\mathcal{R}(s_0)\to \C\: \textnormal{is bounded}\right\}.
\end{equation*}
By the Riesz representation theorem, we have that for all $(f,g)\in \mathbf{D}_{\sigma,R_{\infty}}^{N_+}(\mathcal{A}^*)$, there exists a unique $(\Phi,\Phi') \in {\rm ran}\,\mathcal{R}(s_0)$, such that $\la (\Phi,\Phi'),(\Psi,\Psi')\ra_{\mathbf{H}_{\sigma,R_{\infty}}^{N_+}}=\la (f,g), \mathcal{A}(\Psi,\Psi')\ra_{\mathbf{H}_{\sigma,R_{\infty}}^{N_+}}$ for all $(\Psi,\Psi')\in  (C^{\infty}(\hat{\Sigma}))^2\cap \mathcal{R}(s_0)$. Then we define the Hermitian adjoint of $\mathcal{A}$ by $\mathcal{A}^*: \mathbf{D}_{\sigma,R_{\infty}}^{N_+}(\mathcal{A}^*)\to \mathbf{H}_{\sigma,R_{\infty}}^{N_+}$, with $\mathcal{A}^*(f,g)=(\Phi,\Phi')$, with $(f,g)$ and $(\Phi,\Phi')$ as above.

Taking the formal adjoint of the right-hand side of \eqref{eq:relAandLs}, we obtain
\begin{equation*}
\mathcal{A}^{\dag}-\overline{s}\mathbf{1}:=\begin{pmatrix}
\mathbf{1}& -\overline{s} \mathbf{1}\\
0 & \mathbf{1}
\end{pmatrix}\begin{pmatrix}
\frac{1}{2M^2h(0)-Mxb_{\tau \tau}} L_s^*& 0\\
0&\mathbf{1}
\end{pmatrix}\begin{pmatrix}
0 & \mathbf{1}\\
\mathbf{1} & P^*-\overline{s}\mathbf{1}
\end{pmatrix}.
\end{equation*}
From the above expression, it follows that $\mathcal{A}^{\dag}(f,g)$ is well-defined and agrees with $\mathcal{A}^{*}(f,g)$ if $g\in D_{\sigma,R_{\infty}}^{N_+}(L_s^*)\cap D_{\sigma,R_{\infty}}^{N_+}(P^*)$ and $f\in H_{\sigma,R_{\infty}}^{N_+}$, with $D_{\sigma,R_{\infty}}^{N_+}(P^*)$ the domain of the Hermitian adjoint $P^*$, with respect to the inner product on $H_{\sigma,R_{\infty}}^{N_+}$.

A direct corollary of the standard orthogonality property between the kernel of a linear operator and the range of its Hermitian adjoint results in the following:
\begin{corollary}
\label{cor:orthogonality}
 Assume additionally that there exists a $s_0\in \C$ such that $\ker L_{s_0}=\{0\}$. We have that
\begin{align*}
\textnormal{ran}(\mathcal{A}^*-\overline{s}\mathbf{1})\perp&\: \ker (\mathcal{A}-s\mathbf{1}),\\
\ker(\mathcal{A}^*-\overline{\tilde{s}}\mathbf{1})\perp&\: \ker (\mathcal{A}-s\mathbf{1}) \quad \textnormal{if $s\neq \tilde{s}$}.
\end{align*}
Let $\hpsi_s\in \ker L_s$ and denote $(\Psi_s,\Psi'_s)=(\hpsi_s,s\hpsi_s)\in \ker (\mathcal{A}-s\mathbf{1})$. Let $g_{\tilde{s}}\in \ker L_{\tilde{s}}^*$ and denote $(\tilde{\Psi}_{\tilde{s}},\tilde{\Psi}'_{\tilde{s}})=(-(P^*-\overline{\tilde{s}})g_{\tilde{s}},g_{\tilde{s}})$. Then $(\tilde{\Psi}_{\tilde{s}},\tilde{\Psi}'_{\tilde{s}})\in \ker (\mathcal{A}^{\dag}-\overline{\tilde{s}}\mathbf{1})$ and
\begin{equation*}
\la (\Psi_s,\Psi'_s), (\tilde{\Psi}_{\tilde{s}},\tilde{\Psi}'_{\tilde{s}})\ra_{\mathbf{H}^{N_+}_{\sigma,R_{\infty}}}=0
\end{equation*}
if $s\neq \tilde{s}$.
\end{corollary}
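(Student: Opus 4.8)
The plan is to deduce everything from the elementary orthogonality fact: for any linear operator $T$ admitting a Hermitian adjoint $T^*$, one has $\textnormal{ran}\,(T^*)\perp \ker T$, since $\la T^*\mathbf{w},\mathbf{v}\ra = \la \mathbf{w},T\mathbf{v}\ra = 0$ whenever $\mathbf{v}\in \ker T\subseteq \textnormal{dom}\,(T)$ and $\mathbf{w}\in \textnormal{dom}\,(T^*)$. Taking $T=\mathcal{A}-s\mathbf{1}$ on the domain $\textnormal{ran}\,\mathcal{R}(s_0)$ (which is available because $\ker L_{s_0}=\{0\}$, by Proposition \ref{prop:relALs} and Corollary \ref{cor:relLA}), with Hermitian adjoint $\mathcal{A}^*-\overline{s}\mathbf{1}$, this at once gives $\textnormal{ran}\,(\mathcal{A}^*-\overline{s}\mathbf{1})\perp \ker(\mathcal{A}-s\mathbf{1})$. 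For the second relation, fix $s\neq\tilde{s}$ and $(f,g)\in \ker(\mathcal{A}^*-\overline{\tilde{s}}\mathbf{1})$. Since $(f,g)\in \textnormal{dom}\,(\mathcal{A}^*)$ we may write $(\mathcal{A}^*-\overline{s}\mathbf{1})(f,g) = (\overline{\tilde{s}}-\overline{s})(f,g)$, hence $(f,g) = (\overline{\tilde{s}}-\overline{s})^{-1}(\mathcal{A}^*-\overline{s}\mathbf{1})(f,g)\in \textnormal{ran}\,(\mathcal{A}^*-\overline{s}\mathbf{1})$, and the first relation yields $(f,g)\perp \ker(\mathcal{A}-s\mathbf{1})$.

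It remains to exhibit the advertised kernel and co-kernel vectors. That $(\Psi_s,\Psi'_s)=(\hpsi_s,s\hpsi_s)\in \ker(\mathcal{A}-s\mathbf{1})$ for $\hpsi_s\in \ker L_s$ is precisely the content of Corollary \ref{cor:relLA} (equivalently, substitute $\hpsi_s$ into the block factorization \eqref{eq:relAandLs} and use $L_s\hpsi_s=0$). For the co-kernel vector $(\tilde{\Psi}_{\tilde{s}},\tilde{\Psi}'_{\tilde{s}})=(-(P^*-\overline{\tilde{s}})g_{\tilde{s}},g_{\tilde{s}})$ with $g_{\tilde{s}}\in \ker L^*_{\tilde{s}}$, apply the block-triangular formula for $\mathcal{A}^{\dag}-\overline{\tilde{s}}\mathbf{1}$: the rightmost factor maps this pair to $(g_{\tilde{s}},0)$, the middle diagonal factor maps $(g_{\tilde{s}},0)$ to $\big((2M^2h(0)-Mxb_{\tau\tau})^{-1}L^*_{\tilde{s}}g_{\tilde{s}},0\big)=(0,0)$ because $g_{\tilde{s}}\in \ker L^*_{\tilde{s}}$, and the leftmost factor fixes $(0,0)$; hence $(\tilde{\Psi}_{\tilde{s}},\tilde{\Psi}'_{\tilde{s}})\in \ker(\mathcal{A}^{\dag}-\overline{\tilde{s}}\mathbf{1})$. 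Provided $g_{\tilde{s}}$ additionally lies in $D^{N_+}_{\sigma,R_{\infty}}(P^*)$ — a regularity property one expects of, and can verify for, elements of $\ker L^*_{\tilde{s}}$ — the discussion preceding the corollary shows $\mathcal{A}^{\dag}$ and $\mathcal{A}^*$ agree on $(\tilde{\Psi}_{\tilde{s}},\tilde{\Psi}'_{\tilde{s}})$, so this vector also lies in $\ker(\mathcal{A}^*-\overline{\tilde{s}}\mathbf{1})$. Combining with the second relation gives $\la (\Psi_s,\Psi'_s),(\tilde{\Psi}_{\tilde{s}},\tilde{\Psi}'_{\tilde{s}})\ra_{\mathbf{H}^{N_+}_{\sigma,R_{\infty}}}=0$ whenever $s\neq \tilde{s}$; alternatively one can argue directly that $(s-\tilde{s})\la (\Psi_s,\Psi'_s),(\tilde{\Psi}_{\tilde{s}},\tilde{\Psi}'_{\tilde{s}})\ra = \la (\mathcal{A}-\tilde{s}\mathbf{1})(\Psi_s,\Psi'_s),(\tilde{\Psi}_{\tilde{s}},\tilde{\Psi}'_{\tilde{s}})\ra = \la (\Psi_s,\Psi'_s),(\mathcal{A}^{\dag}-\overline{\tilde{s}}\mathbf{1})(\tilde{\Psi}_{\tilde{s}},\tilde{\Psi}'_{\tilde{s}})\ra = 0$ and divide by $s-\tilde{s}$.

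All the manipulations are routine; the only genuine care needed is in the domain bookkeeping — namely, that $\mathcal{A}-s\mathbf{1}$ and its Hermitian adjoint are set up on compatible domains so the abstract orthogonality fact applies, and that the explicit co-kernel vector $(-(P^*-\overline{\tilde{s}})g_{\tilde{s}},g_{\tilde{s}})$ lies in the subspace on which the formal adjoint $\mathcal{A}^{\dag}$ genuinely coincides with $\mathcal{A}^*$, which is where the hypotheses on $L_{s_0}^{-1}$ and on the first-order operator $P$ enter. I expect this to be the main (but still minor) obstacle.
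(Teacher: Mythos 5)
Your proposal is correct and takes essentially the same route as the paper: the paper presents this corollary as a direct consequence of the standard orthogonality between $\ker T$ and $\textnormal{ran}\,T^*$ together with the block factorizations of $\mathcal{A}-s\mathbf{1}$ and $\mathcal{A}^{\dag}-\overline{s}\mathbf{1}$ from Proposition \ref{prop:relALs}, which is exactly what you spell out, including the verification that $(-(P^*-\overline{\tilde{s}})g_{\tilde{s}},g_{\tilde{s}})$ is annihilated by the block-triangular formula and the caveat about where $\mathcal{A}^{\dag}$ coincides with $\mathcal{A}^*$ (i.e.\ $g_{\tilde{s}}\in D_{\sigma,R_{\infty}}^{N_+}(L_{\tilde{s}}^*)\cap D_{\sigma,R_{\infty}}^{N_+}(P^*)$), which the paper leaves at the same level of detail.
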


\section{Precise statements of the main theorems}
\label{sec:mainthmprecise}
In this section, we give precise statements of the main theorems of the paper, making use of the preliminaries introduced in \S\S \ref{sec:geometry}--\ref{sec:infigen}, and we give precise definitions of the notions of \emph{regularity quasinormal modes} and \emph{regularity quasinormal frequencies}.
\subsection{Regularity quasinormal frequencies}
The following theorem provides a precise formulation of Theorem \ref{thm:rough1}(i)--(ii) and includes also an additional statement on orthogonality properties of regularity quasinormal modes.
\begin{theorem}
\label{thm:mainthmA}
Let $\sigma\in \R$ and $N_+\in \N_0$. Let 
\begin{align*}
	\Omega_{\sigma}:=&\:\left\{x+iy\in \C\,|\, x\in \R_-,\,y\in \R,\, x^2+y^2<\sigma^2,\,3y^2-5x^2>\sigma^2\right\}\\
	&\cup \left\{	x+iy\in \C\,|\, x\in[0,\infty),\, \frac{\sigma^2}{4}<x^2+y^2<\sigma^2\right\},\\
	\Omega_{\sigma, N_+}:=&\: \Omega_{\sigma}\cap \left\{\re z> -\left(\frac{1}{2}+N_+\right)\kappa_+\right \}
	\end{align*}
	\begin{enumerate}[label=\emph{(\roman*)}]
	\item
Then there exists a suitably large $R_{\infty}>r_+$ such that the following restrictions of the infinitesimal generator of time translations corresponding to $\mathcal{S}(\tau)$ are closed linear operators:
\begin{equation*}
\mathcal{A}: \mathbf{H}^{N_+}_{\sigma,R_{\infty}}\supset H^{N_+}_{\sigma,R_{\infty}}(L_s)\oplus {H}_{\sigma,R_{\infty}}^{N_+}(P)\to \mathbf{H}^{N_+}_{\sigma,R_{\infty}}
\end{equation*}
and 
\begin{equation*}
{\rm Spect}\,(\mathcal{A})\cap \Omega_{\sigma, N_+}={\rm Spect}_{\rm point}\,(\mathcal{A})\cap \Omega_{\sigma, N_+}
\end{equation*}
consists of isolated eigenvalues, which are independent of the precise choice of $\tau$. We define \textbf{regularity quasinormal frequencies} to be the union of eigenvalues:
\begin{equation*}
\mathscr{Q}_{\rm reg}:=\bigcup_{N_+\in \N_0, \sigma\in \R}{\rm Spect}_{\rm point}\,(\mathcal{A})\cap \Omega_{\sigma,N_+}
\end{equation*}
and \textbf{regularity quasinormal modes} to be the corresponding eigenvectors. The sets ${\rm Spect}_{\rm point}\,(\mathcal{A})\cap \Omega_{\sigma,N_+}$ are independent of the precise choice of foliation-defining function $\mathbbm{h}$.
\item The union $\Omega:=\bigcup_{N_+\in \N_0, \sigma\in \R}\Omega_{\sigma, N_+}$ satisfies
\begin{equation*}
\Omega=\left\{z\in \C\setminus\{0\}\,,\, |{\rm arg\, z}|< \frac{2}{3}\pi\right\}.
\end{equation*}
The set of quasinormal frequencies $\mathscr{Q}_{\rm reg}$ has no accumulation points in $\Omega$ and is independent of the choice of $R_{\infty}$.
\item The eigenspaces of $\mathcal{A}: \mathbf{H}^{N_+}_{\sigma,R_{\infty}}\supset H^{N_+}_{\sigma,R_{\infty}}(L_s)\oplus {H}_{\sigma,R_{\infty}}^{N_+}(P)\to  \mathbf{H}^{N_+}_{\sigma,R_{\infty}}$ are finite dimensional.
\item We define \textbf{regularity quasinormal co-modes} to be eigenvectors of $\mathcal{A}^{\dag}$ in $((D_{\sigma,R_{\infty}}^{N_+})(L_s^*)\cap D_{\sigma,R_{\infty}}^{N_+}(P^*))\oplus H^{N_+}_{\sigma,R_{\infty}}$. The following orthogonality relations hold with respect to the inner product on $\mathbf{H}^{N_+}_{\sigma,R_{\infty}}$:
\begin{align*}
\textnormal{ran}(\mathcal{A}^{\dag}-\overline{s}\mathbf{1})\perp&\: \ker (\mathcal{A}-s\mathbf{1}),\\
\ker(\mathcal{A}^{\dag}-\overline{\tilde{s}}\mathbf{1})\perp&\: \ker (\mathcal{A}-s\mathbf{1}) \quad \textnormal{if $s\neq \tilde{s}$}.
\end{align*}
\end{enumerate}
\end{theorem}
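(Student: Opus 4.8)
The plan is to reduce Theorem~\ref{thm:mainthmA} to the analytic--Fredholm machinery that produces Theorem~\ref{thm:rough2} together with the algebraic correspondence between $\mathcal{A}$ and $L_s$ already recorded in Proposition~\ref{prop:relALs} and Corollaries~\ref{cor:relLA} and~\ref{cor:orthogonality}. First I would fix $\sigma\in\R$, $N_+\in\N_0$ and take $R_\infty$ large enough that the main estimate \eqref{eq:intromainest} (and its $\kappa_c$-generalisation \eqref{eq:intromainest2}) holds on $\Omega_{\sigma,N_+}$; this is exactly the content of Theorem~\ref{thm:rough2}(i), which gives that $L_s : D^{N_+}_{\sigma,R_\infty}(L_s)\to H^{N_+}_{\sigma,R_\infty}$ is invertible with bounded inverse $L_s^{-1}:H^{N_+}_{\sigma,R_\infty}\to H^{N_++1}_{\sigma,R_\infty,2}$ for $s$ outside a discrete subset of $\Omega_{\sigma,N_+}$, and that this inverse is meromorphic. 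With that in hand, part~(i) follows: closedness of $\mathcal{A}$ on $H^{N_+}_{\sigma,R_\infty}(L_s)\oplus H^{N_+}_{\sigma,R_\infty}(P)=\textnormal{ran}\,\mathcal{R}(s_0)$ is Proposition~\ref{prop:relALs}(iii), and Corollary~\ref{cor:relLA} identifies ${\rm Spect}(\mathcal{A})\cap\Omega_{\sigma,N_+}$ with $\{s:\ker L_s\neq\{0\}\}$, which by the meromorphy of $L_s^{-1}$ and the analytic Fredholm theorem is a set of isolated points with no accumulation in $\Omega_{\sigma,N_+}$. Independence of the eigenvalues from the choice of $\tau$ (equivalently $\mathbbm h$) is Theorem~\ref{thm:rough2}(i), which states the pole set of $s\mapsto L_{s,\Lambda}^{-1}$ does not depend on $\tau$, $\sigma$, $N_+$; the same sketch (improving regularity of $\ker L_s$ via the $\mathfrak{L}_{s,\lambda,\kappa}^{-1}$ estimates so that kernels coincide across parameter choices) gives independence of the eigenspaces as well.

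Next I would assemble part~(ii). The identity $\Omega=\bigcup_{\sigma,N_+}\Omega_{\sigma,N_+}=\{z\in\C\setminus\{0\}:|\arg z|<\tfrac{2}{3}\pi\}$ is a direct, elementary computation: the region $\{3y^2-5x^2>\sigma^2,\ x<0,\ x^2+y^2<\sigma^2\}$ together with $\{x\ge0,\ \sigma^2/4<x^2+y^2<\sigma^2\}$, unioned over $\sigma\in\R$, fills the open sector of half-angle $\arctan\sqrt{5/3}$ on the left and the full right half-plane, and one checks $\arctan\sqrt{5/3}+\tfrac\pi2=\tfrac{2\pi}{3}$; intersecting with $\{\re z>-(\tfrac12+N_+)\kappa_+\}$ and taking $N_+\to\infty$ removes the last constraint. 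That $\mathscr{Q}_{\rm reg}$ has no accumulation point in $\Omega$ and is $R_\infty$-independent is Theorem~\ref{thm:rough2}(ii) together with the parameter-independence from part~(i). Part~(iii), finite-dimensionality of the eigenspaces, is immediate from the analytic Fredholm theorem: where $L_s^{-1}$ is meromorphic, $L_s$ is Fredholm of index zero (since $L_s=\mathfrak{L}_{s,\lambda}+\epsilon_0 B_s+K_{s,\lambda}$ with $(\mathfrak{L}_{s,\lambda}+\epsilon_0 B_s)^{-1}$ bounded and $K_{s,\lambda}\circ\mathfrak{L}_{s,\lambda}^{-1}$ compact), so $\ker L_s$ is finite-dimensional, and by Corollary~\ref{cor:relLA} the eigenspace of $\mathcal{A}$ at $s$ is exactly $\{(\hpsi_s,s\hpsi_s):\hpsi_s\in\ker L_s\}$, isomorphic to $\ker L_s$.

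Finally, part~(iv) is essentially a restatement of Corollary~\ref{cor:orthogonality}: once one observes that ``regularity quasinormal co-modes'' are by definition the eigenvectors of $\mathcal{A}^{\dag}$ living in $(D_{\sigma,R_\infty}^{N_+}(L_s^*)\cap D_{\sigma,R_\infty}^{N_+}(P^*))\oplus H^{N_+}_{\sigma,R_\infty}$, where $\mathcal{A}^{\dag}$ agrees with the genuine Hermitian adjoint $\mathcal{A}^*$ (as noted after the definition of $\mathcal{A}^{\dag}$), the two orthogonality relations are precisely those in Corollary~\ref{cor:orthogonality}, with the explicit co-mode representation $(\tilde\Psi_{\tilde s},\tilde\Psi'_{\tilde s})=(-(P^*-\overline{\tilde s})g_{\tilde s},g_{\tilde s})$ for $g_{\tilde s}\in\ker L_{\tilde s}^*$. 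The one point that needs care is that Corollary~\ref{cor:orthogonality} is stated under the hypothesis $\ker L_{s_0}=\{0\}$ for some $s_0$, which is supplied by Theorem~\ref{thm:rough2}(i); I would simply note this and invoke the corollary.

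I expect the genuinely hard analytic input — the uniform-in-$\kappa$ resolvent estimate \eqref{eq:intromainest} underlying Theorem~\ref{thm:rough2}(i) — to already be available to us here, so within the scope of \emph{this} proof the main obstacle is purely organisational: verifying that the domains line up correctly (that $\textnormal{ran}\,\mathcal{R}(s_0)$ is genuinely $s_0$-independent and equals the stated $H^{N_+}_{\sigma,R_\infty}(L_s)\oplus H^{N_+}_{\sigma,R_\infty}(P)$, which is Proposition~\ref{prop:relALs}(iii)), that the parameter-independence of the kernels is argued uniformly across $(\sigma,N_+,R_\infty)$, and that the identity for $\Omega$ is checked carefully at the boundary angles. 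None of these is deep, but the domain bookkeeping and the passage between the $\mathcal{A}$-picture and the $L_s$-picture is where errors would most naturally creep in.
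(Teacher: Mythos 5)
Your proposal follows exactly the paper's route: Theorem \ref{thm:mainthmA} is deduced from Theorem \ref{thm:mainthmLs} (the precise form of Theorem \ref{thm:rough2}) together with Corollaries \ref{cor:relLA} and \ref{cor:orthogonality}, with the domain bookkeeping handled by Proposition \ref{prop:relALs} just as you describe. The only slip is the parenthetical claim that the left boundary of $\bigcup_\sigma\Omega_{\sigma}$ is the ray at angle $\tfrac{\pi}{2}+\arctan\sqrt{5/3}$: the correct computation (Lemma \ref{lm:compatibility}; writing $z=re^{i\theta}$, one needs $3\sin^2\theta-5\cos^2\theta>1$, i.e.\ $\cos^2\theta<\tfrac14$) yields the half-angle $\tfrac{2\pi}{3}$ directly, whereas $\arctan\sqrt{5/3}+\tfrac{\pi}{2}\neq\tfrac{2\pi}{3}$ — a harmless error in an aside that does not affect the argument.
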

\begin{proof}
Theorem \ref{thm:mainthmA} follows directly from Theorem \ref{thm:mainthmLs} below, in combination with Corollaries \ref{cor:relLA} and \ref{cor:orthogonality}.
\end{proof}

\begin{figure}[H]
	\begin{center}
\includegraphics[scale=0.64]{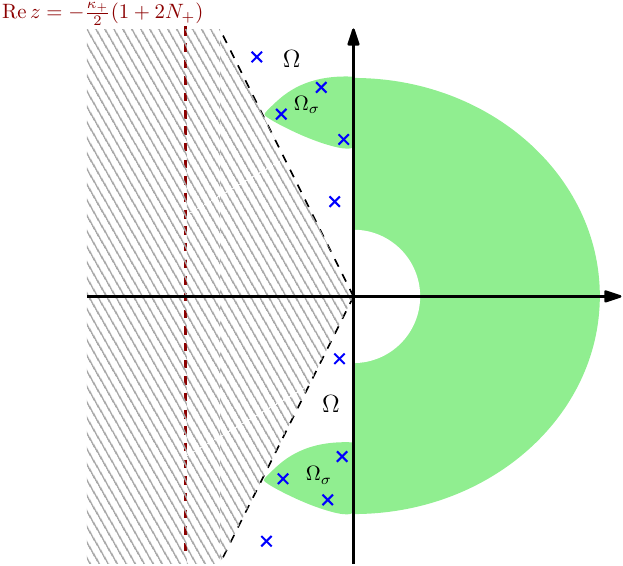}
\end{center}
\caption{A (schematic) pictorial representation of the following subsets of $\C$: $\Omega$ (the sector of the complex plane that is not dashed), $\Omega_{\sigma}$ (the shaded region), and possible regularity quasinormal frequencies represented by crosses.}
	\label{fig:complexplane}
\end{figure}

The proof of Theorem \ref{thm:mainthmA} uses crucially the meromorphicity of the inverse $L_s^{-1}$ of the Laplace transform of the wave operator in Kerr and $L_{s,\Lambda}^{-1}$ in Kerr--de Sitter (see \S \ref{sec:KdS} for a definition of $L_{s,\Lambda}$). Meromorphicity in this context is defined as follows:
\begin{definition}
\label{def:meromfamily}
Let $O\subset \C$ be a connected and open subset, $X,Y$ Banach spaces and $B(X,Y)$ the space of bounded linear operators from $X$ to $Y$. Let $A: O\to B(X,Y)$ be a family of bounded operators. Then $A$ or $\{A(s)\}$ is \emph{meromorphic} if for any $s_0\in O$ there exists a small neighbourhood $U$ of $s_0$, such that for all $s\in U$, we can write $A(s)=A_0(s)+\sum_{k=1}^m (s-s_0)^{-k}A_k$ in $U$, where $A_0(s)$ is holomorphic at $s_0$, i.e.\ $\lim_{s\to s_0}\frac{A_0(s)-A(s_0)}{s-s_0}$ is well-defined with respect to the operator topology on $B(X,Y)$. If $m>0$ for some $s_0\in O$, we refer to $s_0$ as a \emph{pole} of $A$.
\end{definition}

The orthogonality properties of regularity quasinormal modes are related to the cokernel of $L_s$, which is defined as follows:
\begin{definition}
\label{eq:comode}
Let $A: X\to Y$ be a linear operator between Banach spaces $X,Y$. Then the \emph{cokernel of $A$} is defined as the quotient space $\textnormal{coker}\, A=Y/\textnormal{ran}\,A$.
\end{definition}

The following theorem is a precise version of Theorem \ref{thm:rough2}(i)--(ii) and includes an additional statement on the orthogonality relations between the kernel and cokernel of $L_s$. \emph{This theorem constitutes the core of the present paper.}
\begin{theorem}
\label{thm:mainthmLs}
Let $\sigma\in \R$ and $N_+\in \N_0$. 
\begin{enumerate}[label=\emph{(\roman*)}]
\item For suitably large $R_{\infty}>r_+$ and suitably small $M^2\Lambda\geq 0$, the family of operators $\{L_{s,\Lambda}^{-1}: H_{\sigma,R_{\infty}}^{N_+}\to H_{\sigma, R_{\infty},2}^{N_++1}\}$ with $s\in \Omega_{\sigma,N_+}$ is meromorphic. We denote with $\mathscr{Q}_{\sigma,R_{\infty},N_+}$ the set of poles of $\{L_{s,\Lambda}^{-1}\}$, which does not depend on the precise choice of foliation-defining function $\mathbbm{h}$. Furthermore, there exists a $C_0\geq 0$ such that the family of operators is holomorphic when restricted to $s\in \{\re z> C_0\}$. 
\item Let $\widetilde{R}_{\infty}\geq R_{\infty}$, $\widetilde{N}_+\in \N_0$ and $\widetilde{\sigma}\in\R$. Then 
\begin{equation*}
\mathscr{Q}_{\sigma,\widetilde{R}_{\infty},\widetilde{N}_+}\cap \Omega_{\sigma,N_+}\cap \Omega_{\widetilde{\sigma},\widetilde{N}_+}=\mathscr{Q}_{\sigma, R_{\infty},N_+}\cap \Omega_{\sigma,N_+}\cap \Omega_{\widetilde{\sigma},\widetilde{N}_+}
\end{equation*}
 and $\bigcup_{N_+\in \N_0, \sigma\in \R}\mathscr{Q}_{\sigma,R_{\infty},N_+}$ has no accumulation points in $\Omega$.
\item Let $s\in \Omega_{\sigma,N_+}$. Then the space $ \ker L_{s,\Lambda}$ is finite dimensional with $\dim \ker L_{s,\Lambda}=\dim \textnormal{coker}\,  L_{s,\Lambda}$ and $\textnormal{ran}\,L_{s,\Lambda}^*=(\ker L_{s,\Lambda})^{\perp}\subset H_{\sigma, R_{\infty},2}^{N_++1}$, with $L_{s,\Lambda}^*$ the Hermitian adjoint and $\textnormal{coker}\, L_{s,\Lambda}^*$ isomorphic to $\ker L_{s,\Lambda}$. 
	\end{enumerate}
\end{theorem}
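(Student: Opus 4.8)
## Proof proposal for Theorem \ref{thm:mainthmLs}

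\textbf{Overall strategy.} The plan is to prove parts (i)--(iii) in sequence, with part (i) doing essentially all of the analytic work and parts (ii)--(iii) following by soft arguments once (i) is in place. The backbone of (i) is the splitting \eqref{eq:Lsaspert}, $L_{s,\Lambda}=\mathfrak{L}_{s,\lambda,\kappa}+\epsilon_0\cdot B_s+K_{s,\lambda}$ (and its $\Lambda>0$ analogue with the extra $\kappa_c^2 B_{s,\Lambda}^{\mathcal{C}^+}$ term), combined with the \emph{uniform-in-$\kappa$} resolvent estimate \eqref{eq:intromainest}. So the first task is to prove \eqref{eq:intromainest}: for $s\in\Omega_{\sigma,N_+}$ and $\mathfrak{l}$ suitably large,
\[
\|\mathfrak{L}_{s,\lambda,\kappa}^{-1}f\|_{H^{N_++1}_{\sigma,R_\infty,2}}\le C_{\lambda,\sigma,R_\infty}\|f\|_{H^{N_+}_{\sigma,R_\infty}}
\]
with $C$ independent of $\kappa\in[0,\infty)$, starting from the $\kappa>0$ solvability of $\mathfrak{L}_{s,\lambda,\kappa}$ on Sobolev spaces borrowed from \cite{warn15}. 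This is assembled from the four families of fixed-frequency estimates described in \S\ref{sec:ideastechniques}: red-shift estimates (giving control near $\mathcal{H}^+$ and through the ergoregion, using $\kappa_+>0$), degenerate elliptic estimates in the intermediate region (requiring $p$ in the weight $w_p$ chosen correctly, and $\lambda$ large), higher-order Gevrey estimates in $\{\varrho\ge 2R_\infty\}$ summed with the weights $\sum_{n\ge\mathfrak{l}}|2s|^2\tilde\sigma^{2n}/(n!^2(n+1)!^2)$ (this is where the Newman--Penrose conservation law \eqref{eq:consvlaw} closes the sum and where the sector $|\arg z|<2\pi/3$ is forced), and lower-order Gevrey estimates absorbing the $n<\mathfrak{l}$ terms via elliptic-type estimates plus boundary control at $\varrho=2R_\infty$. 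One takes $\lambda$ large enough and $p$ correctly to couple all but the (self-contained) higher-order Gevrey estimates. Then $\mathfrak{L}_{s,\lambda}^{-1}:=\lim_{\kappa\downarrow 0}\mathfrak{L}_{s,\lambda,\kappa}^{-1}$ exists in operator norm $H^{N_+}_{\sigma,R_\infty}\to H^{N_++1}_{\sigma,R_\infty,2}$, and one checks separately that $B_s$ is bounded $H^{N_++1}_{\sigma,R_\infty,2}\to H^{N_+}_{\sigma,R_\infty}$ and $K_{s,\lambda}$ is bounded into $H^{N_++1}_{\sigma,R_\infty,2}$, both uniformly in $s\in\Omega_{\sigma,N_+}$, using the coefficient bounds from Lemma \ref{lm:operatorcoeffb}. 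Choosing $\epsilon_0$ small so $\|\epsilon_0 B_s\mathfrak{L}_{s,\lambda}^{-1}\|<1$ gives $(\mathfrak{L}_{s,\lambda}+\epsilon_0 B_s)^{-1}$ bounded by Neumann series; then $L_{s,\Lambda}^{-1}=(\mathbf{1}+K_{s,\lambda}(\mathfrak{L}_{s,\lambda}+\epsilon_0 B_s)^{-1})^{-1}(\mathfrak{L}_{s,\lambda}+\epsilon_0 B_s)^{-1}$, and meromorphicity follows from the analytic Fredholm theorem since $K_{s,\lambda}(\mathfrak{L}_{s,\lambda}+\epsilon_0 B_s)^{-1}$ is compact $H^{N_+}_{\sigma,R_\infty}\to H^{N_+}_{\sigma,R_\infty}$ (using the compact embedding $H^{N_++1}_{\sigma,R_\infty,2}\ssubset H^{N_+}_{\sigma,R_\infty}$) and holomorphic in $\Omega_{\sigma,N_+}$; one needs invertibility at one point, which comes from the energy-type estimate at $\re s=C_0$ large, yielding also the holomorphy on $\{\re z>C_0\}$. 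For $\Lambda>0$ one uses the $\kappa_c$-dependent estimate \eqref{eq:intromainest2} to see $\kappa_c^2 B_{s,\Lambda}^{\mathcal{C}^+}\mathfrak{L}_{s,\lambda,2\kappa_c}^{-1}$ is a small bounded operator when $M\kappa_c\ll 1$, so the same argument runs uniformly. Independence of $\mathscr{Q}_{\sigma,R_\infty,N_+}$ from $\mathbbm{h}$ is seen by comparing the operators $L_s$ for two foliations: they differ by a conjugation by a function smooth on $\widehat\Sigma$ (since the two $\tau$-functions differ by such a function in the relevant region), which does not change the kernel.

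\textbf{Parts (ii) and (iii).} For (ii), the key point is that $\ker L_{s,\Lambda}$, viewed inside a restriction of $L_{s,\Lambda}$ to $H^{N_+}_{\sigma,R_\infty}$, is independent of $\sigma, N_+, R_\infty$: given $\hpsi\in\ker L_{s,\Lambda}$, one feeds the equation $L_{s,\Lambda}\hpsi=0$, i.e.\ $\mathfrak{L}_{s,\lambda}\hpsi=-\epsilon_0 B_s\hpsi-K_{s,\lambda}\hpsi$, back into the estimate \eqref{eq:intromainest} (bootstrapping across the different $N_+$ and $\sigma$, and $R_\infty$ via the explicit forms \eqref{eq:operatorbounded}--\eqref{eq:operatorfaraway}) to conclude $\hpsi\in C^\infty(\widehat\Sigma)$, and in particular membership does not depend on the parameters. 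An approximation by $\ker L_{s,\kappa}$ elements as $\kappa\downarrow 0$, justified by Hurwitz's theorem applied to the meromorphic families $\{L_{s,\kappa}^{-1}\}$, upgrades this. From this parameter-independence the equality of the pole sets on overlapping $\Omega$'s follows (a pole is exactly a point where $\ker L_{s,\Lambda}\ne\{0\}$, by the analytic Fredholm dichotomy), and the absence of accumulation points in $\Omega$ follows from the analytic Fredholm theorem applied on each $\Omega_{\sigma,N_+}$ together with the exhaustion $\Omega=\bigcup\Omega_{\sigma,N_+}$. For (iii), $\dim\ker L_{s,\Lambda}<\infty$ and $\dim\ker L_{s,\Lambda}=\dim\mathrm{coker}\,L_{s,\Lambda}$ is the Fredholm-index-zero statement: writing $L_{s,\Lambda}=(\mathbf{1}+K_{s,\lambda}(\mathfrak{L}_{s,\lambda}+\epsilon_0 B_s)^{-1})(\mathfrak{L}_{s,\lambda}+\epsilon_0 B_s)$ exhibits $L_{s,\Lambda}$ as (invertible)$\circ$(identity plus compact), which is Fredholm of index zero on the appropriate pair of Hilbert spaces. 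The identification $\mathrm{ran}\,L_{s,\Lambda}^*=(\ker L_{s,\Lambda})^\perp$ and $\mathrm{coker}\,L_{s,\Lambda}^*\cong\ker L_{s,\Lambda}$ is then the standard closed-range/Fredholm-adjoint bookkeeping, using that $L_{s,\Lambda}$ has closed range (immediate from index zero and finite-dimensional kernel) and the Hilbert-space identity $(\mathrm{ran}\,L_{s,\Lambda})^\perp=\ker L_{s,\Lambda}^*$.

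\textbf{Main obstacle.} The serious difficulty is entirely concentrated in the uniform-in-$\kappa$ estimate \eqref{eq:intromainest}, and within it the \emph{coupling} of the red-shift, elliptic, and lower-order Gevrey estimates across the three spatial regions: each individual family of estimates produces boundary terms (at the interfaces $\varrho\sim\frac{R_\infty}{2}$, $\varrho\sim R_\infty$, $\varrho\sim 2R_\infty$, and at the ergoregion boundary) that must be absorbed by the neighbouring estimate with the correct sign and weight, and this forces a delicate ordering and a hierarchy of large/small parameters ($\lambda=\mathfrak{l}(\mathfrak{l}+1)$ large, $p=4(N_++1)+2$, $R_\infty$ large, $\epsilon_0$ small, and the Gevrey exponents $\sigma,\tilde\sigma$ in prescribed ranges). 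The higher-order Gevrey estimate, though closeable in isolation thanks to the Newman--Penrose conservation law, is precisely what pins the admissible sector to $\{|\arg z|<2\pi/3\}$ and thereby determines $\Omega_{\sigma,N_+}$; getting the weighted sum over $n\ge\mathfrak{l}$ to close while tracking the $s$-dependence uniformly on $\Omega_{\sigma,N_+}$ is the technical heart. Everything downstream — meromorphicity, finite-dimensionality, index zero, $\mathbbm{h}$- and parameter-independence — is then formal.
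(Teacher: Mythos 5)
Your proposal follows essentially the same route as the paper: the splitting $L_{s,\Lambda}=\mathfrak{L}_{s,\lambda,\kappa}+\epsilon_0 B_s+K_{s,\lambda}$ (with the extra $\kappa_c^2 B_{s,\Lambda}^{\mathcal{C}^+}$ term for $\Lambda>0$), the uniform-in-$\kappa$ estimate \eqref{eq:intromainest} and the limit $\kappa\downarrow 0$, Neumann series plus the analytic Fredholm theorem using the compact embedding $H^{N_++1}_{\sigma,R_{\infty},2}\ssubset H^{N_+}_{\sigma,R_{\infty}}$, a Gr\"onwall/energy argument for holomorphy at large $\re s$, Hurwitz and bootstrap arguments for parameter-independence and absence of accumulation points, and index-zero Fredholm bookkeeping for (iii), which is exactly how the paper assembles Propositions \ref{prop:mainprop} and \ref{prop:accumulation} and Corollary \ref{cor:Kdsmainresult}. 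One small caveat: for the $\mathbbm{h}$-independence the conjugating multiplier $e^{s\int_{\varrho}^{\infty}(\mathbbm{h}_0-\mathbbm{h})\,d\varrho'}$ must be analytic in $x=1/\varrho$ at $x=0$ (as guaranteed by the paper's assumptions on $\mathbbm{h}$) for the conjugation to be bounded on $H^{N_+}_{\sigma,R_{\infty}}$ via Proposition \ref{prop:producttotalnorm} — mere smoothness on $\widehat{\Sigma}$, as you state, would not suffice on these Gevrey-type spaces.
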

\begin{proof}
By standard, local energy estimates for $\square_{g_{M,a}}\phi=0$ and a Gr\"onwall inequality, one can obtain uniform constants $C,C_0>0$ such that:
\begin{equation}
\label{eq:gronw}
\|(\phi,\partial_{\tau}\phi)\|_{H^{1}(\Sigma)\times L^2(\Sigma)}(\tau)\leq C e^{C_0\tau}\|(\phi,\partial_{\tau}\phi)\|_{H^{1}(\Sigma)\times L^2(\Sigma)}(0).
\end{equation}
Suppose $\{L_{s,\Lambda}^{-1}\}$ had a pole in $s_0\in \{\re z> C_0\}$. Then the solution $\phi$ corresponding to initial data $(\phi_{s_0},s\phi_{s_0})$ with $\psi_{s_0}=\varrho\phi_{s_0}\in \ker L_{s_0,\Lambda}$ takes the form $(\phi,\partial_{\tau}\phi)(\tau)=(e^{s_0\tau }\phi_{s_0},s\phi_{s_0})$ and hence
\begin{equation*}
\|(\phi,\partial_{\tau}\phi)\|_{H^{1}(\Sigma)\times L^2(\Sigma)}(\tau)=e^{(\re s_0) \tau}\|(\phi,\partial_{\tau}\phi)\|_{H^{1}(\Sigma)\times L^2(\Sigma)}(0)
\end{equation*}
for all $\tau\geq 0$, which is in contradiction with \eqref{eq:gronw}. Hence, $\{L_{s,\Lambda}^{-1}\}$ is holomorphic when restricted to $\{\re z> C_0\}$. In the $\Lambda=0$ case, the remaining statement in (i) as well as (iii) follow from Proposition \ref{prop:mainprop} below. Part (ii) follows from Proposition \ref{prop:accumulation} below. This is then generalized to $\Lambda\geq 0$ in Corollary \ref{cor:Kdsmainresult}.
\end{proof}

\begin{remark}
\label{rm:modestab}
Uniform boundedness (and decay) of solutions to the wave equation $\square_{g_{M,a}}\phi=0$, established by Dafermos--Rodnianski--Shlapentokh-Rothman in \cite{part3} implies that in the $\Lambda=0$ case, the set $\mathscr{Q}_{\rm reg}$ is in fact entirely contained in the left-half plane $\{\re z<0\}$ ($C_0$ in the proof of Theorem \ref{thm:mainthmLs}(i) can be set to $0$). The validity of Theorem \ref{thm:mainthmLs}, however, is independent of a boundedness statement and applies to settings where such a boundedness statement is false, like the class of wave equations with potentials on Kerr studied in \cite{mos17}. 

Closely related to uniform boundedness for $\square_{g_{M,a}}\phi=0$, is the statement of ``mode stability'' \cite{whiting,sr15}, which implies the absence of regularity quasinormal modes of the form \eqref{eq:separabilityqnm} below. Since regularity quasinormal modes need not satisfy \eqref{eq:separabilityqnm}, mode stability alone is not enough to ensure that $\mathscr{Q}_{\rm reg}$ is contained in the left-half plane when $\Lambda=0$; see Remark \ref{rmk:radialode} below.

For $\Lambda>0$ with $M^2\Lambda\ll 1$, the results of \cite{hin24} imply that $\mathscr{Q}_{\rm reg}$ is contained in $\{\re z<0\}$. Note that in the general $\Lambda>0$ case however, not even mode stability is known; see \cite{cate22} for partial results in this direction.
\end{remark}

\begin{remark}
\label{rmk:radialode}
Consider the following angular ODE:
\begin{equation}
\label{eq:angode}
-\frac{1}{\sin \theta} \partial_{\theta}(\sin \theta \partial_{\theta}S_{m\ell})+\left[\frac{1}{\sin^2\theta}m^2+a^2\omega^2\sin^2\theta\right]S_{m\ell}=\Lambda_{m\ell}(a\omega)S_{m\ell},
\end{equation}
with $S_{m\ell}(\theta;a\omega)$ in $L^2(\sin \theta d\theta)$ the \emph{oblate spheroidal harmonics}.  When $a=0$, \eqref{eq:angode} is simply the Laplace equation on $\s^2$ restricted to functions with azimuthal angular dependence $e^{im\varphi}$ and the solutions are spherical harmonics.

When $\omega\in \R$, \eqref{eq:angode} admits solutions for a discrete set of $\Lambda_{m\ell}(a\omega)\in \R$, labelled by $\ell\in \N_{\geq |m|}$ such that $\Lambda_{m\ell}(a\omega)\geq \Lambda_{m\ell'}(a\omega)$ if $\ell\geq \ell'$. Furthermore, when $\omega\in \R$, the functions $\{ S_{m\ell}(\theta;a\omega)\}_{\substack{m\in \Z, l\geq |m|}}$ form an orthonormal basis of $L^2(\sin \theta d\theta)$, the self-adjointness of the differential operator on the left-hand side of \eqref{eq:angode} and the compactness of its inverse.

The functions $\omega\mapsto \Lambda_{m\ell}(a\omega)$ can be analytically continued to the rest of the complex plane as complex functions, with a finite number of branch points, for which the corresponding \eqref{eq:angode} admits solutions in $L^2(\sin \theta d\theta)$; see for example \cite{ms54}[Proposition 5 of Chapter 1.5, Proposition 2 of Chapter 3.22] for a proof in the context of general ODE and \cite{costa20}[Proposition 2.1] for a discussion on this matter in the context of \eqref{eq:angode}. However, when $\omega\notin \R$, $\{ S_{m\ell}(\theta;a\omega)\}_{\substack{m\in \Z, l\geq |m|}}$ no longer need to form an orthonormal basis.

In much of the physics literature, see for example \cite{leav85}, quasinormal mode solutions $\phi$ to the wave equation \eqref{eq:waveeq} are required to take the separated form
\begin{equation}
\label{eq:separabilityqnm}
\phi(t,r_*,\theta,\varphi)=e^{-i\omega t}\frac{u_{\ell m}(r_*;\omega)}{\sqrt{r^2+a^2}}S_{m\ell}(\theta;a\omega)e^{im\varphi},
\end{equation}
with $S_{m\ell}(\theta;a\omega)$ a solution to \eqref{eq:angode} and $u_{\ell m}(r;\omega)$ a solution to the ODE:
\begin{equation}
\label{eq:radialode}
u_{\ell m}''+(\omega^2-V_{m \ell}^{(a\omega)})u_{\ell m}=0,
\end{equation}
where $V_{m \ell}^{(a\omega)}(r_*)$ is a potential function that depends on $\Lambda_{m\ell}(a\omega)$.

In the present paper, we do \underline{not} impose the restriction \eqref{eq:separabilityqnm} and we instead allow for more general solutions of the form:
\begin{equation*}
\phi(t,r,\theta,\varphi)=e^{-i\omega t}\sum_{m\in \Z}\hat{\phi}_m(r,\theta)e^{im\varphi}.
\end{equation*}

In view of the finite dimensionality of the space of regularity quasinormal modes that follows from Theorem \ref{thm:mainthmA}(iii), we can infer that quasinormal modes corresponding to a single quasinormal frequency $s=-i\omega$ must be supported on a finite number of $m$. If $\{ S_{m\ell}(\theta;a\omega)\}_{\substack{m\in \Z, l\geq |m|}}$ formed an orthonormal basis for $L^2(\s^2)$, then we would also be able to conclude that any regularity quasinormal mode can be written as a finite linear combination of regularity quasinormal modes satisfying \eqref{eq:separabilityqnm}. This is the case when $a=0$. 

As mentioned above, we are not, however, guaranteed an orthonormal basis of functions $S_{m\ell}$ when $a\neq 0$, so we cannot draw such a conclusion. The set of regularity quasinormal frequencies could therefore in principle be \underline{strictly larger} than the set of frequencies corresponding to quasinormal modes that can be separated as in \eqref{eq:separabilityqnm}.
\end{remark}

\subsection{Application 1: stability of the quasinormal spectrum and convergence of Kerr--de Sitter quasinormal frequencies}
We first give a definition of \emph{stability} of the spectrum of a family of linear operators that is suitable for the setting of the present paper.
\begin{definition}[Stability of the spectrum for a family of restricted operators]
\label{def:stabspect}
Let $O =\bigcup_{\alpha\in I} O_{\alpha}\subseteq \C$, with $O_{\alpha}\subset \C$ open and $I$ an index set. Let $ \mathbf{H}$ be a Hilbert space and let $A: {\rm dom}(A)\subseteq \mathbf{H}\to \mathbf{H}$ be a densely defined linear operator. Let $ \mathbf{H}'_{\alpha}\subseteq \mathbf{H}_{\alpha}\subseteq \mathbf{H}$ be families of closed subspaces, such that the following restrictions of $A$ are densely defined operators: $A: \mathbf{H}'_{\alpha}\supseteq {\rm dom}(A_{\alpha})\to \mathbf{H}_{\alpha}$. Denote the corresponding restricted operators $A_{\alpha}$.

Let $B$ be a linear operator, such that its restrictions to $ \mathbf{H}'_{\alpha}$ results in operators $B_{\alpha}: \mathbf{H}'_{\alpha}\to \mathbf{H}_{\alpha}$ that are bounded with respect to the operator norms $\|\cdot\|_{\mathbf{H}_{\alpha}'\to \mathbf{H}_{\alpha}}$ for all $\alpha\in I$. 

Then we say that $A$ has a \emph{stable spectrum in $O$ in the direction $B$, with respect to the the families of subspaces $\{\mathbf{H}_{\alpha}, \mathbf{H}'_{\alpha}\}_{\alpha\in I}$}, if the following holds: for any compact domain $K\subset O$ and for all $\epsilon>0$, there exists a $\delta_0>0$ such that for all $\alpha\in I$ and for all $0<\delta\leq \delta_0$:
\begin{enumerate}
\item ${\rm Spect}(A_{\alpha}+\delta \cdot B_{\alpha})\cap O_{\alpha}\cap K \subseteq \bigcup_{s\in {\rm Spect}(A_{\alpha})\cap O_{\alpha}} D_{\epsilon}(s)$, with $D_{\epsilon}(s)$ the open disc of radius $\epsilon$ around $s\in \C$.
\item For all $s\in {\rm Spect}(A_{\alpha})\cap O_{\alpha}\cap K$,  
\begin{equation*}
{\rm Spect}(A_{\alpha}+\delta\cdot  B_{\alpha})\cap K\cap D_{\epsilon}(s)\neq \emptyset.
\end{equation*}
\end{enumerate}
\end{definition}

\begin{figure}[H]
	\begin{center}
\includegraphics[scale=0.75]{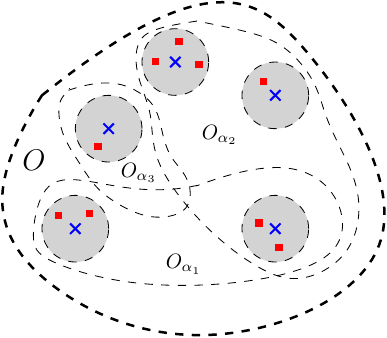}
\end{center}
\caption{A pictorial representation of stability of the spectrum in the case of a spectrum of isolated eigenvalues of $A_{\alpha}$ represented by crosses. The eigenvalues of $A_{\alpha}+\delta\cdot  B_{\alpha}$ are represented by squares and the shaded disks represent $D_{\epsilon}(s)$ with $s\in  {\rm Spect}(A_{\alpha})$.}
	\label{fig:stabQNF}
\end{figure}
Using the above definition, the following theorem gives a precise formulation of Theorem \ref{thm:rough1b}(i):
\begin{theorem}
\label{thm:stab}
Let $\mathcal{A}_{\delta \cdot Q}$ denote the infinitesimal generator of time translations corresponding to the following perturbed wave operator:
\begin{equation*}
\square_{g_{M,a}}+\delta \cdot \varrho^{-3}Q(\varrho \cdot)
\end{equation*}
with $Q$ the following second-order differential operator:
\begin{equation*}
Q=Q^{\tau \tau}(r,\vartheta,\varphi_*)\partial_{\tau}^2+Q^{\tau x}(r,\vartheta,\varphi_*)\partial_{\tau}\partial_x+Q^{\tau \vartheta}(r,\vartheta,\varphi_*)\partial_{\tau}\partial_{\vartheta}+Q^{\tau \varphi}(r,\vartheta,\varphi_*)\partial_{\tau}\partial_{\varphi_*}+Q^{\tau}(r,\vartheta,\varphi_*)\partial_{\tau}+\tilde{Q},
\end{equation*}
where $Q^{\tau}, Q^{\tau \tau},Q^{\tau x},Q^{\tau \vartheta}, Q^{\tau \varphi_*}$ are bounded and $\tilde{Q}$ is a differential operator of order $\leq 2$ acting functions on $\Sigma$, such that $\tilde{Q}_s=\tilde{Q}: H_{\sigma,R_{\infty},2}^{N_+}\to H_{\sigma,R_{\infty}}^{N_+}$ is bounded for all $|\sigma|<\sigma_0$, with $\sigma_0>0$ or $\sigma_0=\infty$. Let $B$ be the operator satisfying
\begin{equation*}
\mathcal{A}+\delta \cdot B=\mathcal{A}_{\delta\cdot Q}.
\end{equation*}
Let $s_0\in \C$ be such that $L_{s_0}^{-1}: H_{\sigma,R_{\infty}}^{N_+}\to H_{\sigma,R_{\infty},2}^{N_++1}$ is well-defined. Then the spectrum of $\mathcal{A}: {\rm dom}(\mathcal{A})\supset \mathbf{H}^1\to \mathbf{H}^1$ is stable in $\{\re s>-(\frac{1}{2}+N_+)\kappa_+\}\cap B_{\sigma_0^2}\cap\Omega=B_{\sigma_0^2}\cap\bigcup_{\sigma\in \R}\Omega_{\sigma,N_+} \subset \C$ with respect to the families of subspaces $\{\mathbf{H}_{\sigma}^{N_+}, D_{\sigma,R_{\infty}}^{N_+}(L_{s_0})\oplus H_{\sigma,R_{\infty},1}^{N_++1})\}_{\sigma\in (-\sigma_0,\sigma_0)}$ in the direction $B$.

In particular, if $Q$ vanishes on $\Sigma \cap\{r\geq R_{\infty}\}$ and has smooth coefficients, then the above stability result applies in the full $\Omega$.
\end{theorem}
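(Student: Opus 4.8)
The strategy is to reduce the stability statement for $\mathcal{A}$ to a statement about the family $\{L_s^{-1}\}$ via Proposition \ref{prop:relALs} and Corollary \ref{cor:relLA}, and then to run the two-part abstract argument sketched in \S\ref{sec:ideastechniques}. First I would spell out how the perturbation $\delta\cdot Q$ at the level of the wave operator translates into a perturbation at the level of $L_s$: writing $L_{s,\delta Q}$ for the Laplace transform of $\varrho^3\circ(\square_{g_{M,a}}+\delta\varrho^{-3}Q(\varrho\cdot))\circ\varrho^{-1}$, one has $L_{s,\delta Q}=L_s+\delta\cdot Q_s$, where $Q_s$ is obtained from $Q$ by replacing each $\partial_\tau$ by $s\mathbf{1}$. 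The hypotheses on $Q$ (boundedness of $Q^\tau,Q^{\tau\tau},\dots$ and of $\tilde Q_s\colon H_{\sigma,R_\infty,2}^{N_+}\to H_{\sigma,R_\infty}^{N_+}$) are exactly what is needed to guarantee that $Q_s\colon H_{\sigma,R_\infty,1}^{N_++1}\to H_{\sigma,R_\infty}^{N_+}$ is bounded, uniformly for $s$ in any compact subset of $\Omega_{\sigma,N_+}$ and for $|\sigma|<\sigma_0$; the terms containing a $\partial_\tau$ each carry a factor $s$, which is bounded on compacta, and the $\vartheta,\varphi_*$-derivatives are controlled by $\snabla_{\s^2}$, so they are absorbed into the $H_{\sigma,R_\infty,1}^{N_++1}$ norm. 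Then the operator $B$ with $\mathcal{A}+\delta B=\mathcal{A}_{\delta Q}$ restricts, via the block form \eqref{eq:relAandLs}, to a bounded operator $D_{\sigma,R_\infty}^{N_+}(L_{s_0})\oplus H_{\sigma,R_\infty,1}^{N_++1}\to\mathbf{H}_{\sigma,R_\infty}^{N_+}$, which is the setting of Definition \ref{def:stabspect}.

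Next, for the first clause of Definition \ref{def:stabspect} I would argue as follows. Fix a compact $K\subset B_{\sigma_0^2}\cap\bigcup_\sigma\Omega_{\sigma,N_+}$ and $\epsilon>0$; by compactness $K$ is covered by finitely many $\Omega_{\sigma_j,N_+}$, so it suffices to treat $K\cap\Omega_{\sigma,N_+}$ for a single $\sigma$. Outside the $\epsilon$-neighbourhood of $\mathscr{Q}_{\sigma,R_\infty,N_+}$, the set $K_\epsilon:=(K\cap\Omega_{\sigma,N_+})\setminus\bigcup_{s_*\in\mathscr{Q}}D_\epsilon(s_*)$ is compact and $L_s^{-1}\colon H_{\sigma,R_\infty}^{N_+}\to H_{\sigma,R_\infty,2}^{N_++1}$ is holomorphic there by Theorem \ref{thm:mainthmLs}(i), hence $\sup_{s\in K_\epsilon}\|L_s^{-1}\|$ is finite; call it $C_K$. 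For $\delta\|Q_s\|\,C_K<1$, a Neumann series shows $L_s+\delta Q_s=(\mathbf{1}+\delta Q_s L_s^{-1})L_s$ is invertible with bounded inverse into $H_{\sigma,R_\infty,2}^{N_++1}$, so by Corollary \ref{cor:relLA} $s\notin{\rm Spect}(\mathcal{A}_{\delta Q})$. Since $\sup_{s\in K_\epsilon}\|Q_s\|$ is bounded uniformly in $\sigma$ (as $|\sigma|<\sigma_0$), choosing $\delta_0$ small enough gives clause (1) uniformly in $\alpha=\sigma$. For the second clause, I would use Hurwitz's theorem exactly as in the sketch: for $s_*\in{\rm Spect}(\mathcal{A})\cap O_\alpha\cap K$, i.e. $\ker L_{s_*}\ne\{0\}$, consider on the disc $D_\epsilon(s_*)$ (shrunk so its closure avoids the other poles) the family of finite-rank-perturbation determinants or, more directly, the fact that meromorphicity of $s\mapsto(L_s+\delta Q_s)^{-1}$ together with $(L_s+\delta Q_s)^{-1}\to L_s^{-1}$ in operator norm as $\delta\downarrow0$ uniformly on the boundary circle $\partial D_\epsilon(s_*)$ forces, by Hurwitz applied to the holomorphic scalar function $s\mapsto\langle g_{s_*},(\,(L_s+\delta Q_s)^{-1}-\text{holo.\ part})f\rangle$ built from a nontrivial $g_{s_*}\in\ker L_{s_*}^*$ and suitable $f$ (equivalently, stability of the number of poles counted with multiplicity inside a small circle under uniform convergence of meromorphic families), that $(L_s+\delta Q_s)^{-1}$ has a pole in $D_\epsilon(s_*)$ for all small $\delta$, hence ${\rm Spect}(\mathcal{A}_{\delta Q})\cap D_\epsilon(s_*)\ne\emptyset$.

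The last assertion — that when $Q$ vanishes on $\Sigma\cap\{r\ge R_\infty\}$ and has smooth coefficients the stability holds in all of $\Omega$ — follows because in that case $Q_s$ is supported in $\{\varrho\le R_\infty\}$, where, as noted in the sketch of Theorem \ref{thm:rough2}(iii), only the norm (a) (ordinary Sobolev norms in $\{\varrho\le 3R_\infty\}$) is relevant; smoothness and compact support of the coefficients make $Q_s\colon H^{N_++1}(\Sigma\cap\{\varrho\le 3R_\infty\})\to H^{N_+}(\Sigma\cap\{\varrho\le 3R_\infty\})$ bounded for every $N_+$ with no constraint on $\sigma$, i.e. one may take $\sigma_0=\infty$, and then $B_{\sigma_0^2}\cap\Omega=\Omega$. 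The main obstacle in the whole argument is the uniformity: one must check that $\|Q_s\|_{H_{\sigma,R_\infty,1}^{N_++1}\to H_{\sigma,R_\infty}^{N_+}}$ is bounded uniformly in $\sigma$ over $|\sigma|<\sigma_0$ and uniformly in $s$ over the fixed compact $K$, because $\delta_0$ must be chosen before $\alpha$; this is where the precise hypothesis ``$\tilde Q_s\colon H_{\sigma,R_\infty,2}^{N_+}\to H_{\sigma,R_\infty}^{N_+}$ bounded for all $|\sigma|<\sigma_0$'' is used, and it is the one place where one has to look carefully at how the Gevrey components of the norms in (c)–(d) interact with a generic second-order operator with merely bounded coefficients. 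Once that uniform bound is in hand, the Neumann series and Hurwitz steps are routine.
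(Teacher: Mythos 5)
Your overall architecture (reduce to the family $\{L_s\}$, Neumann series off the poles, Hurwitz near the poles, and the observation that compactly supported smooth $Q$ removes the $\sigma_0$ restriction) matches the paper's, and those parts are fine. The genuine gap is in your reduction step. You set $L_{s,\delta Q}=L_s+\delta Q_s$ with $Q_s$ obtained from $Q$ by replacing $\partial_\tau$ by $s\mathbf{1}$, and you claim the hypotheses are ``exactly what is needed'' to make $Q_s\colon H^{N_++1}_{\sigma,R_\infty,1}\to H^{N_+}_{\sigma,R_\infty}$ bounded. They are not: the theorem only assumes that the coefficients $Q^{\tau},Q^{\tau\tau},Q^{\tau x},Q^{\tau\vartheta},Q^{\tau\varphi_*}$ are bounded \emph{functions}, and multiplication by a merely bounded function is not a bounded operator on $H^{N_+}_{\sigma,R_\infty}$ — that space carries $N_+$ Sobolev derivatives in the bounded region and Gevrey-type norms at infinity, and Proposition \ref{prop:producttotalnorm} shows that bounded multiplication there requires Gevrey-regular coefficients. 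So the terms $s^2Q^{\tau\tau}$, $sQ^{\tau x}\partial_x$, etc.\ in your $Q_s$ need not map $H^{N_++1}_{\sigma,R_\infty,2}$ into $H^{N_+}_{\sigma,R_\infty}$ at all, and your Neumann series $(\mathbf{1}+\delta Q_sL_s^{-1})^{-1}$ does not get off the ground under the stated hypotheses. You flag this issue yourself at the end but do not resolve it, and with your reduction it cannot be resolved.

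The paper's proof circumvents exactly this point by redoing the block factorization of Proposition \ref{prop:relALs} for $\mathcal{A}_{\delta Q}$: the $\partial_\tau^2$-coefficient $Q^{\tau\tau}$ is absorbed into the scalar prefactor $\frac{1}{2M^2h(0)-Mxb_{\tau\tau}-\delta Q^{\tau\tau}}$, and the remaining $\partial_\tau$-terms are absorbed into a modified first-order operator $P_{\delta Q}$ sitting in the lower-triangular outer factor. Those outer factors are invertible for every $\delta$ (one only needs the qualitative mapping property $P_{\delta Q}\colon H^{N_++1}_{\sigma,R_\infty,1}\to H^{N_+}_{\sigma,R_\infty}$, which motivates the domain restriction $D^{N_+}_{\sigma,R_\infty}(L_{s_0})\oplus H^{N_++1}_{\sigma,R_\infty,1}$), so invertibility of $\mathcal{A}_{\delta Q}-s\mathbf{1}$ is equivalent to invertibility of $L_s+\delta\tilde{Q}$ alone — only the $\tau$-derivative-free part $\tilde{Q}$, which is bounded $H^{N_+}_{\sigma,R_\infty,2}\to H^{N_+}_{\sigma,R_\infty}$ by hypothesis, enters the smallness requirement. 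Proposition \ref{prop:pertLs} then supplies both clauses of Definition \ref{def:stabspect}. To repair your argument you should replace the substitution $\partial_\tau\mapsto s$ by this factorization, and keep the Neumann series and Hurwitz steps only for the perturbation $\delta\tilde{Q}$ of $L_s$.
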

\begin{proof}
By a slight modification of the proof of Proposition \ref{prop:relALs}, we obtain the following relation:
\begin{equation*}
\mathcal{A}_{\delta\cdot Q}-s\mathbf{1}=\begin{pmatrix}
0 & \mathbf{1}\\
\mathbf{1} & P_{\delta\cdot  Q}-s\mathbf{1}
\end{pmatrix}\begin{pmatrix}
\frac{1}{2M^2h(0)-Mxb_{\tau \tau}-\delta\cdot Q^{\tau \tau} } (L_s+\delta\cdot  \tilde{Q})& 0\\
0&\mathbf{1}
\end{pmatrix}\begin{pmatrix}
\mathbf{1}& 0\\
-s \mathbf{1} & \mathbf{1}
\end{pmatrix},
\end{equation*}
with $P_{\delta\cdot Q}=\frac{2M^2h(0)-Mxb_{\tau \tau} }{2M^2h(0)-Mxb_{\tau \tau}-\delta\cdot Q^{\tau \tau}} P+\frac{\delta }{2M^2h(0)-Mxb_{\tau \tau}-\delta\cdot Q^{\tau \tau}} \left(Q^{\tau x}\partial_x+Q^{\tau \vartheta}\partial_{\vartheta}+Q^{\tau \varphi_*}\partial_{\varphi_*}+Q^{\tau}\mathbf{1}\right)$.

Note finally, that for all $\delta>0$ $P_{\delta\cdot Q}f\in H_{\sigma,R_{\infty}}^{N_+}$ if $f\in  H_{\sigma,R_{\infty},1}^{N_++1}$, so we can restrict the domain of $\mathcal{A}_{\delta\cdot Q}$ to $D_{\sigma,R_{\infty}}^{N_+}(L_{s_0})\oplus H_{\sigma,R_{\infty},1}^{N_++1}$.

By the above expression for $\mathcal{A}_{\delta\cdot Q}-s\mathbf{1}$ and Corollary \ref{cor:relLA}, invertibility of $\mathcal{A}_{\delta \cdot Q}-s\mathbf{1}$ is equivalent to invertibility of $L_s+\delta \cdot \tilde{Q}:  H_{\sigma,R_{\infty}}^{N_+}\to  H_{\sigma,R_{\infty},2}^{N_++1}$, so we can apply Proposition \ref{prop:pertLs} below to obtain the desired stability statement.
\end{proof}

The following theorem is a precise version of Theorem \ref{thm:rough2b}(ii) (which leads to Theorem \ref{thm:rough1b}(ii)).
\begin{theorem}
\label{lm:kappaconvergence}
Let $\{\Lambda_n\}$ be a sequence of cosmological constants satisfying $M^2\Lambda_n\ll 1$ and $\Lambda_n\downarrow 0$. Suppose that there exists an $s_*\in \Omega$ with $\ker L_{s_*}^{-1}\neq \{0\}$ and consider $\sigma\in \R$ and $N_+\in \N$ such that $s_*\in \Omega_{\sigma, N_+}$.

Then there exists a sequence $s_n\in \Omega_{\sigma, N_+}$, such that $\ker L_{s_n,\Lambda_n}^{-1}\neq \{0\}$ and $s_n\to s_*$ as $n\to \infty$. Furthermore, for all $\hpsi\in \ker L_{s_*}^{-1}$, there exists a sequence $\{\hpsi_n\}$ in $H^{N_+}_{\sigma,R_{\infty}}$ with $\hpsi_n\in  \ker L_{s_n,\Lambda_n}^{-1}$ and
\begin{equation*}
	\|\hpsi-\hpsi_n\|_{H^{N_+}_{\sigma, R_{\infty}}}\to 0\quad (n\to \infty).
\end{equation*}
Conversely, let $\{s_n\}$ be a convergent sequence of elements in $\ker L_{s,\Lambda_n}$ and $\Lambda_n\to 0$ as $n\to \infty$, such that the limit $s_*$ satisfies: $s_*\in \{|{\rm arg} z|<\frac{2\pi}{3}\}$.  Then $s_*\in \ker L_{s,0}$.
\end{theorem}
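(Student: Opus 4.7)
The plan is to reduce the question to a finite-dimensional holomorphic matrix problem via a Grushin (Schur complement) construction, and to apply Hurwitz's theorem on a small disk around $s_*$, exploiting the uniform-in-$\Lambda$ structure of the operators $L_{s,\Lambda}$. By Theorem~\ref{thm:mainthmLs}(iii), the operator $L_{s_*,0}: H^{N_++1}_{\sigma,R_{\infty},2} \to H^{N_+}_{\sigma,R_{\infty}}$ is Fredholm of index zero with $k := \dim \ker L_{s_*,0} = \dim \textnormal{coker}\, L_{s_*,0} < \infty$. Fix finite-rank operators $R_+: \C^k \to H^{N_+}_{\sigma,R_{\infty}}$ and $R_-: H^{N_++1}_{\sigma,R_{\infty},2} \to \C^k$ so that $\textnormal{ran}\, R_+$ is complementary to $\textnormal{ran}\, L_{s_*,0}$ and $\ker R_-$ is complementary to $\ker L_{s_*,0}$. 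The standard Grushin construction then gives invertibility at $(s,\Lambda) = (s_*,0)$ of
\[
\mathcal{P}(s,\Lambda) := \begin{pmatrix} L_{s,\Lambda} & R_+ \\ R_- & 0 \end{pmatrix}: H^{N_++1}_{\sigma,R_{\infty},2}\oplus \C^k \longrightarrow H^{N_+}_{\sigma,R_{\infty}}\oplus \C^k,
\]
and writing the block-inverse with lower-right entry $E_{-+}(s,\Lambda)$ and upper-right entry $E_+(s,\Lambda)$, one has the standard equivalence that $L_{s,\Lambda}$ is non-invertible precisely when $\det E_{-+}(s,\Lambda)=0$, while $E_+(s,\Lambda)$ maps $\ker E_{-+}(s,\Lambda)$ isomorphically onto $\ker L_{s,\Lambda}$.

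Next, I would establish uniform-in-$\Lambda$ invertibility of $\mathcal{P}(s,\Lambda)$ on a closed disk $\overline{D_r(s_*)} \subset \Omega_{\sigma,N_+}$ shrunk so that $s_*$ is the unique pole of $L_{s,0}^{-1}$ inside it. The decomposition
\[
L_{s,\Lambda} = \mathfrak{L}_{s,\lambda,2\kappa_c} + \epsilon_0\, B_{s,\Lambda} + K_{s,\lambda,\Lambda} + \kappa_c^2 B_{s,\Lambda}^{\mathcal{C}^+}
\]
together with the uniform estimate \eqref{eq:intromainest2} shows that $L_{s,\Lambda}: H^{N_++1}_{\sigma,R_{\infty},2}\to H^{N_+}_{\sigma,R_{\infty}}$ is jointly continuous in $(s,\Lambda)$ in operator norm and holomorphic in $s$, and that $\|L_{s,\Lambda}-L_{s,0}\|\to 0$ as $\Lambda \downarrow 0$ uniformly for $s$ in compact sets. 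Since $L_{s,0}^{-1}$ is uniformly bounded on $\partial D_r(s_*)$, a Neumann-series argument extends invertibility of $\mathcal{P}(s,\Lambda)$ to the whole of $\overline{D_r(s_*)}$ for $\Lambda < \Lambda_0(r)$ sufficiently small, with jointly continuous inverse. Consequently, $E_{-+}(s,\Lambda)$ is a holomorphic $k\times k$ matrix function of $s$ on $\overline{D_r(s_*)}$ that is jointly continuous in $(s,\Lambda)$, so in particular $f_{\Lambda_n}(s):=\det E_{-+}(s,\Lambda_n)$ converges uniformly on $\overline{D_r(s_*)}$ to $f_0(s):=\det E_{-+}(s,0)$.

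The holomorphic function $f_0$ is not identically zero (otherwise $L_{s,0}^{-1}$ would fail to exist throughout $D_r(s_*)$, contradicting meromorphicity from Theorem~\ref{thm:mainthmLs}(i)) and vanishes at $s_*$ to some order $m\geq 1$. Hurwitz's theorem then supplies exactly $m$ zeros of $f_{\Lambda_n}$ in $D_r(s_*)$ for all $n$ large enough; these are the desired quasinormal frequencies for $L_{s,\Lambda_n}$, and taking a sequence $r_j\downarrow 0$ with a diagonal extraction produces $s_n\to s_*$. For the eigenfunction approximation, given $\hpsi\in \ker L_{s_*,0}$ write $\hpsi=E_+(s_*,0)v$ for a unique $v\in \ker E_{-+}(s_*,0)$; a finite-dimensional continuity/Hurwitz argument applied to the matrix-valued function $E_{-+}(\cdot,\Lambda_n)$ near its zero $s_n$ produces vectors $v_n\in \ker E_{-+}(s_n,\Lambda_n)$ with $v_n\to v$, and setting $\hpsi_n:=E_+(s_n,\Lambda_n)v_n$ gives $\hpsi_n\to \hpsi$ in $H^{N_+}_{\sigma,R_{\infty}}$ by joint continuity of $E_+$ as an operator into $H^{N_+}_{\sigma,R_{\infty}}$.

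The converse direction is then essentially automatic: if $s_n\to s_*\in \Omega$ with $\ker L_{s_n,\Lambda_n}\neq \{0\}$, pick $\sigma,N_+$ with $s_*\in \Omega_{\sigma,N_+}$ and run the Grushin construction on a small disk around $s_*$. The hypothesis forces $f_{\Lambda_n}(s_n)=0$ for large $n$, and the uniform convergence $f_{\Lambda_n}\to f_0$ on the closed disk combined with $s_n\to s_*$ gives $f_0(s_*)=0$, hence $\ker L_{s_*,0}\neq \{0\}$. The principal obstacle in the whole argument is ensuring that the Grushin inverse $\mathcal{P}(s,\Lambda)^{-1}$ remains bounded uniformly down to $\Lambda=0$ despite the fact that the cosmological horizon collapses onto future null infinity in this limit; this is precisely where the perturbative decomposition above and the uniform estimate \eqref{eq:intromainest2}, which controls $\kappa_c^2 B^{\mathcal{C}^+}_{s,\Lambda}$ via the second-order derivative gain on its left-hand side, are essential.
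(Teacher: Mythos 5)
Your overall architecture (Grushin reduction to a finite holomorphic matrix $E_{-+}$, Hurwitz on $\det E_{-+}$, and the isomorphism $E_+:\ker E_{-+}\to\ker L_{s,\Lambda}$ for the eigenfunction convergence) is a legitimate repackaging of what the paper does via the compact family $\mathbf{1}-K_{s,\lambda}T_{s,n}$ and Hurwitz's theorem, and for the converse the paper instead extracts convergent subsequences of normalized kernel elements using the compact embeddings $H^{N_++1}_{\sigma,R_{\infty},2}\ssubset H^{N_++1}_{\sigma',R_{\infty},2}$. However, there is a genuine gap at the single step on which everything else rests: the claim that $\|L_{s,\Lambda}-L_{s,0}\|_{H^{N_++1}_{\sigma,R_{\infty},2}\to H^{N_+}_{\sigma,R_{\infty}}}\to 0$ as $\Lambda\downarrow 0$, uniformly on compact $s$-sets. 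This is false. The difference contains the term $2\kappa_c\,\partial_x(x\partial_x\cdot)=2\kappa_c(x\partial_x^2+\partial_x)$, and the piece $x\partial_x^2$ is not a bounded operator from $H^{N_++1}_{\sigma,R_{\infty},2}$ to $H^{N_+}_{\sigma,R_{\infty}}$ at all: the $G_{\sigma,R_{\infty},2}$ part of the domain norm only controls $\int_0^{x_0}x^4|\partial_x^{n+2}f_\ell|^2\,dx$, whereas $\|\partial_x^n(x\partial_x^2f)\|$ requires control of $\int_0^{x_0}x^2|\partial_x^{n+2}f_\ell|^2\,dx$, which is strictly stronger near $x=0$. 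Consequently the Neumann-series perturbation of $\mathcal{P}(s,0)$ by $\mathcal{P}(s,\Lambda)-\mathcal{P}(s,0)$ cannot be run, and the joint continuity of $E_{-+}(s,\Lambda)$ down to $\Lambda=0$ — which you need both for Hurwitz and for the converse — is not established. Citing \eqref{eq:intromainest2} does not close this: that estimate controls $\kappa_c\|x\partial_x^2\mathfrak{L}_{s,\lambda,2\kappa_c}^{-1}f\|_{G_{\sigma,R_{\infty}}}$, i.e.\ it is an estimate on the \emph{output of the inverse}, not a bound on the singular term as a map between the two fixed Hilbert spaces.

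The repair is exactly what the paper does in Proposition \ref{prop:invcalL} (and its Kerr--de Sitter analogue): one proves convergence at the level of the inverses, $\mathfrak{L}_{s,\lambda,2\kappa_c}^{-1}\to\mathfrak{L}_{s,\lambda}^{-1}$ in operator norm, via the commutator identity \eqref{eq:keyeqdiffkappa}, which writes the difference of inverses as $(\kappa_n-\kappa_m)$ times terms that are bounded \emph{only because} the extra $\kappa$-weighted second-derivative control \eqref{eq:addestposkappa} is available for functions in the range of $\mathfrak{L}_{s,\lambda,\kappa}^{-1}$. If you feed this resolvent-level convergence into your Grushin problem — i.e.\ establish $\mathcal{P}(s,\Lambda)^{-1}\to\mathcal{P}(s,0)^{-1}$ by perturbing off the invertible operator $\mathfrak{L}_{s,\lambda,2\kappa_c}+\epsilon_0B_{s,\Lambda}+\kappa_c^2B^{\mathcal{C}^+}_{s,\Lambda}$ rather than off $L_{s,0}$ directly — the remainder of your argument (Hurwitz, diagonal extraction, $E_+$-continuity, and the converse via $f_0(s_*)=0$) goes through. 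As written, though, the key continuity input is unjustified, and the converse direction inherits the same defect; note also that the paper's converse avoids the determinant entirely by using the compact embeddings between Gevrey spaces with different $\sigma$, which is worth keeping in mind as a more robust alternative.
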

\begin{proof}
	In the context of the $\kappa$-regularized operator $L_{s,\kappa}$, this proposition is proved in Lemma \ref{lm:kappaconvergence}. This is then generalized to the true Kerr--de Sitter operator $L_{s,\Lambda}$ in \S \ref{sec:KdS}.
\end{proof}

\subsection{Application 2: a characterization of scattering resonances}
In order to justify the use of the term ``regularity quasinormal frequencies'', we need to make a connection with other definitions of ``quasinormal frequencies'' that have previously appeared in the literature.

It is common to define ``quasinormal frequencies'' as scattering resonances associated to the meromorphic continuation of the cut-off resolvent operator corresponding to a ``standard'' time function whose level sets are asymptotically flat hypersurfaces. In the case of Kerr, this is typically take to be the Boyer--Lindquist time function $t$. One first considers the operator
\begin{equation*}
A(\omega)=e^{i\omega t}\square_{g_{M,a}}(e^{-i\omega t}\cdot (\cdot)),
\end{equation*}
which can be inverted on suitable spaces of functions on $\Sigma'=\{t=0\}$ with so-called outgoing boundary conditions and where $\omega\in \C$ with suitably large $\im \omega>0$. The inverse, denoted $R(\omega)$, is also called the \emph{resolvent operator}.

While $R(\omega)$ fails to make sense in a region intersecting both $\{\im z>0\}$ and $\{\im z<0\}\subset \C$, one can conjugate $R(\omega)$ with a multiplication operator $\chi \mathbf{1}$, where $\chi\in C_c^{\infty}(\Sigma')$ and try to meromorphically continue the \emph{cut-off resolvent}:
\begin{equation*}
\chi R(\omega)\chi
\end{equation*}
as an operator on suitable function spaces. The poles of the meromorphic continuation of $\chi R(\omega)\chi$ are called \emph{scattering resonances}.

The theorem below provides a more precise version of Theorem \ref{thm:rough1}(iii). This theorem demonstrates that scattering resonances form a \emph{subset} of the set of regularity quasinormal frequences, thereby justifying the nomenclature.

\begin{theorem}
\label{thm:cutoffresolvmain}
Let $N_+\in \N_0$. The cut-off resolvent operator:
\begin{equation*}
\chi R(\omega) \chi : H^{N_+}_{\snabla}(\Sigma')\to H^{N_++1}_{\snabla}(\Sigma')
\end{equation*}
is well-defined for any cut-off function $\chi\in C_c^{\infty}(\Sigma')$ and for all $\omega\in \C$ with $\im \omega>\omega_0>0$ suitably large and $\omega\mapsto \chi R(\omega) \chi $ can be meromorphically continued to $i \Omega\cap\{\im \omega>-\left(\frac{1}{2}+N_+\right)\kappa_+\}$. Let $i \mathscr{Q}_{\rm res} $ denote the union over the set of poles of $\omega\mapsto \chi R(\omega) \chi $ over $N_+$. Then
\begin{equation*}
\mathscr{Q}_{\rm res} \subseteq \mathscr{Q}_{\rm reg}.
\end{equation*}
Furthermore, for $\frac{|a|}{M}$ suitably small, or under the restriction to functions with a bounded azimuthal number $m$,
\begin{equation*}
\chi R(\omega) \chi : L^{2}(\Sigma')\to H^2(\Sigma')
\end{equation*}
is well-defined and can be meromorphically continued to $i \Omega$, with poles in the set $\mathscr{Q}_{\rm res} $.
\end{theorem}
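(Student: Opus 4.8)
The plan is to realize the cut-off resolvent $\chi R(\omega)\chi$ as a conjugation of the inverse of (a rescaling of) $L_{s}$ by the bounded "change-of-foliation" operators $Q_s$ introduced in the sketch of Theorem~\ref{thm:rough2}(iii) in \S\ref{sec:ideastechniques}, and then to transport the meromorphicity and pole-location statements of Theorem~\ref{thm:mainthmLs} across this conjugation. First I would fix $\chi\in C_c^\infty(\Sigma')$ and choose $R_\infty$ large enough that $\supp\chi\subset\{\varrho\le \tfrac12 R_\infty\}$; this is legitimate since $\mathscr{Q}_{\mathrm{reg}}$ and hence the relevant poles are independent of $R_\infty$ by Theorem~\ref{thm:mainthmLs}(ii). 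Next I would make precise the map $Q_s$: given $f$ on $\Sigma'=\{t=0\}$ compactly supported in $\{\varrho\le\tfrac12 R_\infty\}$ (where $\tau$ and $t$ differ only through the $\varrho$-dependent shift and, crucially, $v_0>0$ ensures $\Sigma\cap\{\varrho\le\tfrac12R_\infty\}$ lies to the future of $\Sigma'\cap\{\varrho\le\tfrac12R_\infty\}$ inside $D^+(\Sigma')$), let $Q_sf$ be the trivial extension of $f$ along the integral curves of $T=\partial_\tau=\partial_t$ followed by restriction to $\Sigma$; because $T$ is Killing and tangent to neither hypersurface's characteristic cone in this compact region, $Q_s$ and $Q_s^{-1}$ are bounded on $H^k$ for every $k$ (uniformly on compacta in $s$, and holomorphically in $s$ when a multiplication by $e^{s(\tilde v - v)}$-type factor is inserted — one should track that the $\omega$-dependence enters only through smooth, entire-in-$\omega$ multipliers). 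Then the identity
\begin{equation*}
A(\omega)=Q_s^{-1}\circ\bigl(\varrho^{-3}L_s(\varrho\,\cdot)\bigr)\circ Q_s,\qquad s=-i\omega,
\end{equation*}
valid on functions supported in $\{\varrho\le\tfrac12R_\infty\}$, gives
\begin{equation*}
\chi R(\omega)\chi=(\chi Q_s^{-1})\circ\bigl(\varrho^{-1}L_s^{-1}(\varrho^3\,\cdot)\bigr)\circ(Q_s\chi),
\end{equation*}
and since only the norm~(a) of \S\ref{sec:ideastechniques} is active on functions supported in $\{\varrho\le\tfrac12 R_\infty\}$, the inner factor is $\{L_s^{-1}\colon H^{N_+}_{\sigma,R_\infty}\to H^{N_++1}_{\sigma,R_\infty,2}\}$ restricted to (and projected onto) such functions, hence meromorphic in $s\in\Omega_{\sigma,N_+}$ with poles in $\mathscr{Q}_{\sigma,R_\infty,N_+}\subset\mathscr{Q}_{\mathrm{reg}}$ by Theorem~\ref{thm:mainthmLs}(i). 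Taking the union over $N_+$ (hence over $\sigma$) yields meromorphic continuation to $i\Omega\cap\{\im\omega>-(\tfrac12+N_+)\kappa_+\}$ with $\mathscr{Q}_{\mathrm{res}}\subseteq\mathscr{Q}_{\mathrm{reg}}$; the mapping property $H^{N_+}_{\snabla}(\Sigma')\to H^{N_++1}_{\snabla}(\Sigma')$ follows because $Q_s\chi$ and $\chi Q_s^{-1}$ are bounded on the $H^k$-with-one-angular-derivative spaces and $H^{N_++1}_{\sigma,R_\infty,2}$ controls $H^{N_++1}_{\snabla}$ on the compact set $\{\varrho\le\tfrac12 R_\infty\}$.

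For the existence of $R(\omega)$ itself for $\im\omega$ large, I would not reprove it from scratch but instead deduce it from the semigroup estimate \eqref{eq:gronw}: for $\re s=\im\omega>C_0$ the operator $\mathcal A-s\mathbf 1$ is invertible on $\mathbf H^1$ (standard $C_0$-semigroup resolvent theory applied to the generator $\mathcal A$), and via Proposition~\ref{prop:relALs} this is equivalent to invertibility of $L_s$ on the relevant Hilbert spaces; transporting back through $Q_s$ as above, and noting that on the Boyer--Lindquist slice the outgoing boundary condition at $\mathcal I^+$ and regularity at $\mathcal H^+$ for $\im\omega>0$ amount exactly to the finite-Sobolev conditions encoded in $H^{N_+}_{\snabla}(\Sigma')$ (the "ingoing at the horizon" and "outgoing at infinity" conditions being automatically satisfied for $\im\omega>\omega_0$ by exponential decay, as recalled in \S\ref{sec:scatres}), gives the well-definedness of $\chi R(\omega)\chi$ for $\im\omega>\omega_0$. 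The second, sharper mapping statement ($L^2(\Sigma')\to H^2(\Sigma')$, continuation to all of $i\Omega$ with poles in $\mathscr{Q}_{\mathrm{res}}$) is where I would invoke the $\frac{|a|}{M}\ll1$ or bounded-$m$ hypothesis: in that regime one can take $N_+=0$ and $\sigma$ free in Theorem~\ref{thm:mainthmLs}, so that $\bigcup_\sigma\Omega_{\sigma,0}$ already exhausts $\Omega$ (by Theorem~\ref{thm:mainthmA}(ii)) and the inner operator is $\{L_s^{-1}\colon H^0_{\sigma,R_\infty}\to H^1_{\sigma,R_\infty,2}\}$, whose restriction to functions supported in $\{\varrho\le\tfrac12 R_\infty\}$ is bounded $L^2\to H^2$; the smallness/bounded-$m$ condition is what makes the angular-frequency-shift argument of \S\ref{sec:invtLs} work without needing the red-shift weight $N_+$ (equivalently, it is the condition under which \cite{warn15}-type estimates for $\mathfrak L_{s,\lambda,\kappa}^{-1}$ hold on plain Sobolev spaces restricted to finite or bounded azimuthal modes).

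The main obstacle I anticipate is not the abstract conjugation but the careful bookkeeping that $Q_s$ and $Q_s^{-1}$ depend \emph{holomorphically} (not just continuously) on $s$ as operators between the fixed Hilbert spaces — one must insert the correct exponential weight $e^{s(\tilde v(\varrho)-v_0-\,\cdot\,)}$ arising from the relation $\tau=\tilde v-\int_{r_+}^\varrho\mathbbm h-v_0$ versus $t=\tilde v-r_*(\varrho)$, check that on the compact support of $\chi$ this weight and all its derivatives are entire in $s$ with locally uniform bounds, and verify that composition with such a multiplier preserves meromorphy (a standard fact, but it must be stated). A secondary technical point is matching the precise outgoing/ingoing boundary conditions defining $R(\omega)$ on $\Sigma'$ with the implicit boundary conditions in the domain $D^{N_+}_{\sigma,R_\infty}(L_s)$; here I would argue that for $\im\omega>\omega_0$ both coincide with "$L^2$ on $\Sigma'$ near $\mathcal H^+$ and near $i^0$" so the two inverses agree on the overlap of their domains of definition, and then meromorphic continuation is unique, so the continued $\chi R(\omega)\chi$ is forced to be the conjugated $L_s^{-1}$ everywhere it is defined — giving the pole inclusion $\mathscr{Q}_{\mathrm{res}}\subseteq\mathscr{Q}_{\mathrm{reg}}$ with no further work.
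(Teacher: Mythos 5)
For the first statement your route coincides with the paper's: well-definedness of $R(\omega)$ for large $\im\omega$ via the semigroup bound \eqref{eq:gronw} and Proposition \ref{prop:relALs}, then the conjugation $\chi R(\omega)\chi=(\chi Q_s^{-1})\circ(\varrho^{-1}L_s^{-1}(\varrho^3\cdot))\circ(Q_s\chi)$ with $\supp\chi$ inside the region where only the Sobolev part of the $H^{N_+}_{\sigma,R_\infty}$-norm is active, so that meromorphy and the pole inclusion $\mathscr{Q}_{\rm res}\subseteq\mathscr{Q}_{\rm reg}$ are inherited from Theorem \ref{thm:mainthmLs}. The bookkeeping points you flag ($s$-holomorphy of the multiplier $e^{s(\cdot)}$ in $Q_s$, uniqueness of the continuation) are exactly the ones the paper handles in Corollaries \ref{cor:infcutoffresest} and \ref{cor:resonances}, so this part is fine.

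There is, however, a genuine gap in your treatment of the final statement. You claim that under the small-$\frac{|a|}{M}$ or bounded-$m$ hypothesis one may ``take $N_+=0$ and $\sigma$ free\ldots so that $\bigcup_\sigma\Omega_{\sigma,0}$ already exhausts $\Omega$''. This is false: by definition $\Omega_{\sigma,N_+}=\Omega_\sigma\cap\{\re z>-(\tfrac12+N_+)\kappa_+\}$, so $\bigcup_\sigma\Omega_{\sigma,0}=\Omega\cap\{\re z>-\tfrac12\kappa_+\}$; Theorem \ref{thm:mainthmA}(ii) exhausts $\Omega$ only after taking the union over $N_+$ as well, and each increase of $N_+$ costs one more derivative of the data through the commuted red-shift estimate. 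Your argument therefore only yields $\chi R(\omega)\chi:L^2\to H^2$ in the strip $\{\im\omega>-\tfrac12\kappa_+\}$, not on all of $i\Omega$. The paper closes this gap with two additional ingredients you are missing: (a) the improved near-horizon elliptic estimate of Proposition \ref{prop:redshifthopluselliptic} --- this is where the small-$a$/bounded-$m$ hypothesis actually enters, to absorb the $a^2|\partial_r\partial_{\varphi_*}\hphi|^2$ cross term and gain two derivatives with weights degenerating like $\Delta$ at $r=r_+$; and (b) an iteration on the resolvent identity $L_s^{-1}=L_{s'}^{-1}+L_s^{-1}(L_{s'}-L_s)L_{s'}^{-1}$ with $\re s'>-\tfrac12\kappa_+$, which exploits that $\chi_{\infty,+}f$ vanishes identically near the horizon so that the higher-order red-shift norms $\|\partial_r^{N_++1}(\chi_{\infty,+}f)\|_{L^2(\{r\le r_{\rm min}\})}$ required at deeper values of $\re s$ are trivially controlled by $\|f\|_{L^2}$; descending in steps of width $\kappa_+$ then reaches all of $\Omega$ with only $L^2$ data. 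Without (a) and (b) the $L^2\to H^2$ continuation to the full sector does not follow from Theorem \ref{thm:mainthmLs} alone.
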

\begin{proof}
Consider the extension of $\Sigma'$ to the bifurcation sphere, denoted $\overline{\Sigma'}$. By standard local-in-time energy estimates (see also the proof of Theorem \ref{thm:mainthmLs}), it follows that the time evolution map $\mathcal{S}(\tau): H^1(\overline{\Sigma'})\times L^2(\overline{\Sigma'})\to H^1(\overline{\Sigma'})\times L^2(\overline{\Sigma'})$ corresponding to initial data $(\phi,\partial_t\phi)$ on $\overline{\Sigma'}$ is a bounded linear operator for each $\tau$, with a norm that can be bounded by $e^{s_0\tau}$, for suitably large $s_0>0$. By standard results of functional analysis, see for example \cite{evans}[\S 7.4, Theorem 2] and \cite{hil48}[Theorem 11.6.1], the corresponding infinitesimal generator $\mathcal{A}: H^1(\overline{\Sigma'})\times L^2(\overline{\Sigma'})\supset {\rm dom}(\mathcal{A})\to H^1(\overline{\Sigma'})\times L^2(\overline{\Sigma'})$ is closed and densely defined with $(\mathcal{A}-s\mathbf{1})^{-1}: H^1(\overline{\Sigma'})\times L^2(\overline{\Sigma'})\to {\rm dom}(\mathcal{A})$ a well-defined bounded operator for $\re s>s_0$.

In this case, we can apply \eqref{eq:relAandLs}, with $\frac{L_s}{ (2Mh(0)-Mx b_{\tau\tau})}$ replaced with $\frac{r^2 A(\omega)}{\mathbbm{h}(2(r^2 +a^2)-\Delta \mathbbm{h})-a^2\sin^2\theta}$, where $\mathbbm{h}=\frac{r^2+a^2}{\Delta}$ with $\omega=i\cdot s$ and $P$ replaced with an appropriate first order operator to conclude via \eqref{eq:rangeexpr} that:
\begin{equation*}
R(\omega): \frac{\Delta}{r^2}L^2(\overline{\Sigma'})\to H^1(\overline{\Sigma'})
\end{equation*}
is well-defined for suitably large $\im \omega>0$ and we can consider $\chi R(\omega) \chi: L^2(\Sigma')\to H^1(\Sigma')$.

In Corollary \ref{cor:resonances} below, we prove the existence of the meromorphic continuation of $\chi R(\omega) \chi$, defined on appropriate subspaces, and its relation to $ \mathscr{Q}_{\rm reg}$.
\end{proof}

\begin{remark}
	The results of Theorem \ref{thm:cutoffresolvmain} in fact apply also when considering cut-off resolvents with respect to time functions whose level sets intersect $\mathcal{H}^+$ to the future of the bifurcation sphere and are asymptotically flat. We denote the level set corresponding to ``time zero'' in this case by $\widetilde{\Sigma}$. Now, the cut-off functions $\chi$ are allowed to be supported near $\mathcal{H}^+$ but the support remains compact. In the domains and codomains of the maps in Theorem \ref{thm:cutoffresolvmain}, $\Sigma'$ is then replaced with $\widetilde{\Sigma}$.
\end{remark}

\section{Guide to the rest of the paper}
\label{sec:guide}
We outline the structure of the rest of the paper and relate each section to the steps sketched in \S \ref{sec:ideastechniques}, which concern the operator $L_s$. We first focus on the $\Lambda=0$ setting.\\

\paragraph{Theorem \ref{thm:rough2}(i):}
\begin{itemize}
\item In \S \ref{sec:redshiftelliptic}, we derive red-shift estimates and elliptic estimates (featuring a finite number of derivatives) and couple them to obtain a single estimate in the bounded region $\varrho\leq \eta^{-1}R_{\infty}$.
\item In \S \ref{sec:gevrey}, we derive Gevrey-type estimates (featuring an infinite number of derivatives) in the region $\varrho\geq 2R_{\infty}$: estimates with higher-order radial derivatives are derived in \S \ref{sec:gevreyho} and estimates with lower-order radial derivatives are derived in \S \ref{sec:gevreylo}. 
\item The Gevrey estimates are coupled with the red-shift and elliptic estimates in \S \ref{sec:fullycoupledestmates} to obtain a single, global estimate.
\item In \S \ref{sec:invertingmathfrakL}, we establish the invertibility of $\mathfrak{L}_{s,\lambda,\kappa}$ with $\kappa>0$ and construct $\mathfrak{L}_{s,\lambda}^{-1}$ as a $\kappa \downarrow 0$ limit.
\item In \S \ref {sec:invtLs}, we establish the meromorphicity and Fredholm properties of $L_s^{-1}$.
\end{itemize}
Theorem \ref{thm:rough2}(ii) and Theorem \ref{thm:rough2b}(ii):
\begin{itemize}
\item In \S \ref{sec:invtLs}, we also establish the independence of the regularity quasinormal spectrum with respect to a change in Hilbert spaces, corresponding to different $N_+$, $R_{\infty}$ and $\sigma$. We also establish Theorem \ref{thm:rough2b}(ii) for the $\kappa$-regularized operator.
\end{itemize}

We subsequently prove Theorem \ref{thm:rough2b}(i):
\begin{itemize}
\item In \S \ref{sec:stabqnf}, we establish stability of the quasinormal spectrum under suitably small perturbations at the level of $L_s$.
\end{itemize}

Since the analysis in the $\Lambda>0$ case actually follows, with minor modifications, from the analysis that is already required for the $\Lambda=0$ case, we only take $\Lambda>0$ in \S \ref{sec:KdS}, where we focus primarily on the differences with the $\Lambda=0$ case. Theorem \ref{thm:rough2}(i) for $\Lambda>0$ and Theorem \ref{thm:rough2b}(ii) for the true Kerr--de Sitter operator follow from \S \ref{sec:KdS}.

Finally, we establish the relation between regularity quasinormal frequencies and scattering resonances:
\begin{itemize}
\item Theorem \ref{thm:rough2}(iii): In \S \ref{sec:resonances}, we construct the meromorphic continuation of the cut-off resolvent operator and relate scattering resonances to regularity quasinormal frequencies.
\end{itemize}

\section{Red-shift and elliptic estimates}
\label{sec:redshiftelliptic}
We will consider different types of estimates in the following two regions:
\begin{enumerate}[label=(\Alph*)]
\item $r_+ \leq r\leq \frac{1}{2}R_{\infty}$,
\item $ \frac{1}{4}R_{\infty} \leq  \varrho\leq \eta^{-1}R_{\infty}$,
\end{enumerate}
with $\R \ni \eta<2$ and where $R_{\infty}M^{-1}$ will be chosen suitably large. We will later also need to restrict to suitably small $\eta$.

The estimates in (A) are the spacelike analogues of the spacetime \emph{red-shift estimates} of Dafermos--Rodnianski  \cite{redshift} and they establish, in particular, control of the non-degenerate $H^1$ norm of $\hat{\psi}\in C^{\infty}(\widehat{\Sigma})$ in terms of the $L^2$ norm of $\mathfrak{L}_{s,\lambda}\hat{\psi}$. These estimates rely crucially on the positivity of the surface gravity $\kappa_+$ associated to the event horizon at $r=r_+$. In contrast with the spacetime red-shift estimates, which apply only to regions where $r$ is sufficiently close to $r_+$, they hold in regions of arbitrarily large $r$.

Note that for $M^{-1}R_{\infty}$ sufficiently large, the region (A) covers the ergoregion of Kerr,  where the vector field $T$ is not timelike, and therefore ``standard'' (degenerate) elliptic estimates do not hold.

The estimates in (B) are more similar to standard elliptic estimates, in the sense that they establish control of a (weighted) $H^2$ norm of $\hat{\psi}\in C^{\infty}(\widehat{\Sigma})$ in terms of a (weighted) $L^2$ norm of $\mathfrak{L}_{s,\lambda}\hat{\psi}$ and rely on the fact that the spacetime metric is sufficiently close to the Minkowski metric in appropriate coordinates, by the asymptotic flatness property of the Kerr metric. The main difference with standard elliptic estimates, is the addition of a degenerate weight function in the norms, which will play an important role in the coupling with the Gevrey estimates of \S \ref{sec:gevrey}.

\subsection{Red-shift estimate in (A)}
\label{sec:redshift}
We first consider the region with:
\begin{equation*}
r_+ \leq \varrho\leq \frac{1}{2}R_{\infty}.
\end{equation*}
Recall from \eqref{eq:operatorbounded} that in this region $\varrho=r$, $\vartheta=\theta$ and we can write:
\begin{equation*}
	r^{-2}\rho^{2}\mathfrak{L}_{s,\lambda,\kappa}=r\Delta \partial_r^2(r^{-1}\cdot)+r\frac{d\Delta}{dr}\partial_r(r^{-1}\cdot)+2ar\partial_r\partial_{\varphi_*}(r^{-1}\cdot)+\slashed{\Delta}_{\s^2}(\cdot)+sr(4 Mr-2(1-\mathbbm{h})\Delta)\partial_r(r^{-1}\cdot)-\lambda \pi_{\leq l}.
\end{equation*}
\begin{proposition}
\label{prop:redshift}
Let $\re s>-\frac{\kappa_+}{2}$ and $\delta=1+2\kappa_+^{-1} \min\{\re s,0\}$. Then there exists a constant $C_s=C_s(\kappa_+,s)>0$ and a numerical constant $C>0$ such that
\begin{multline}
\label{eq:redshift}
\int_{r_+}^{\frac{1}{3}R_{\infty}}\int_{\s^2} \frac{1}{2}(r-M)(1+2\kappa_+^{-1} \min\{\re s,0\})r^{\delta}|\partial_r(r^{-1}\hpsi)|^2\\
+\frac{1}{2}(1+2\kappa_+^{-1} \min\{\re s,0\})r^{-3+\delta}\left[|\snabla_{\s^2}\hpsi|^2+\lambda |\pi_{\leq l}\hpsi|^2\right]\,d\upsigma dr\\
+\int_{\s^2}  r^{\delta-2}\left[|\snabla_{\s^2}\hpsi|^2+\lambda|\pi_{\leq l}\hpsi|^2\right]\Big|_{r=r_+}\,d\upsigma\\
\leq \boxed{C\int_{\frac{1}{3}R_{\infty}}^{\frac{1}{2}R_{\infty}}\int_{\s^2}  r^{-3+\delta}\left[|\snabla_{\s^2}\hpsi|^2+\lambda|\pi_{\leq l}\hpsi|^2\right]\,d\upsigma dr}+C_s\int_{r_+}^{\frac{1}{2}R_{\infty}}\int_{\s^2}r^{-3+\delta}|\mathfrak{L}_{s,\lambda,\kappa}\hpsi|^2\,d\upsigma dr.
\end{multline}
\end{proposition}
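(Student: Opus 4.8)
The plan is to derive \eqref{eq:redshift} by a frequency-space (zero-frequency, or more precisely fixed-$s$) analogue of the Dafermos--Rodnianski red-shift multiplier argument, where the multiplier is a weighted radial vector field. First I would rewrite the operator $\mathfrak{L}_{s,\lambda,\kappa}$ acting on $\psi = r^{-1}\hpsi$ (equivalently, work directly with $\hpsi$ after conjugation by $r^{-1}$) in the region $r_+\le r\le \tfrac12 R_\infty$ using \eqref{eq:operatorbounded}, noting that $\Delta = (r-r_+)(r-r_-)$ vanishes to first order at $r=r_+$. The key multiplier is $\re\big(r^{\delta}\,\overline{\partial_r(r^{-1}\hpsi)}\cdot \mathfrak{L}_{s,\lambda,\kappa}\hpsi\big)$, integrated over $\{r_+\le r\le \tfrac13 R_\infty\}\times\s^2$ against $d\upsigma\, dr$. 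The choice $\delta = 1 + 2\kappa_+^{-1}\min\{\re s,0\}$ is dictated by the requirement that, after integrating by parts in $r$, the bulk coefficient of $|\partial_r(r^{-1}\hpsi)|^2$ near $r=r_+$ be strictly positive: expanding $\Delta \approx 2\kappa_+ M r_+ (r-r_+)$ near the horizon and collecting the first-order term from $\partial_r(\Delta\partial_r)$ together with the $\re s$-contribution from the $sr\cdot 4Mr\,\partial_r$ term produces a coefficient proportional to $\big(\tfrac{d\Delta}{dr}\big|_{r_+} \delta + 4M r_+ \re s\big) = 2M r_+(\kappa_+\delta + 2\re s)$, which is $\ge \kappa_+ M r_+>0$ precisely when $\delta\ge 1 + 2\kappa_+^{-1}\min\{\re s,0\}$ (and $\re s> -\kappa_+/2$ guarantees $\delta>0$, so all the $r^{\delta}$ weights are harmless). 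This is the mechanism that lets the positive surface gravity beat the potentially bad sign of $\re s$, and it works for all $r\ge r_+$, not just near the horizon, once $\mathbbm{h}$ is arranged (as assumed in \S\ref{sec:asymphypfol}) so that $1\le \mathbbm{h}\le \tfrac{r^2+a^2}{\Delta}$ in $\varrho\le R_\infty$.

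Next I would carry out the integration by parts carefully, tracking (a) the bulk terms, (b) the boundary terms at $r=r_+$ and at $r=\tfrac13 R_\infty$, and (c) the cross terms coming from $2ar\,\partial_r\partial_{\varphi_*}$, $\slashed\Delta_{\s^2}$, and $-\lambda\pi_{\le l}$. The angular Laplacian term, integrated against $r^{\delta}\overline{\partial_r(r^{-1}\hpsi)}$ and then by parts on $\s^2$, yields $-\tfrac12\partial_r(r^{\delta-2})\big(|\snabla_{\s^2}\hpsi|^2 + \lambda|\pi_{\le l}\hpsi|^2\big)$ up to a total $r$-derivative; since $\delta-2<0$, the coefficient $-\tfrac12\partial_r(r^{\delta-2}) = \tfrac12(2-\delta)r^{\delta-3}>0$ gives exactly the positive bulk angular term on the left of \eqref{eq:redshift} (with the factor $(1+2\kappa_+^{-1}\min\{\re s,0\})$ appearing because $2-\delta$ equals that quantity plus $1$ — one should double-check the precise constant, but the structure is clear), while the total-derivative part produces the $r=r_+$ boundary term $\int_{\s^2} r^{\delta-2}(|\snabla_{\s^2}\hpsi|^2 + \lambda|\pi_{\le l}\hpsi|^2)|_{r=r_+}$ and a boundary term at $r=\tfrac13 R_\infty$ which is absorbed into the boxed term on the right-hand side. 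The $2ar\partial_r\partial_{\varphi_*}$ term is handled by integrating by parts in $\varphi_*$ to convert it into a term of the schematic form $\re(\,\overline{\partial_{\varphi_*}(r^{-1}\hpsi)}\,\partial_r(\cdots))$ which, being lower order in $r$-derivatives relative to the angular control, can be absorbed using Cauchy--Schwarz into the positive bulk terms (at the cost of enlarging $C$ and $C_s$); since $\snabla_{\s^2}$ controls $\partial_{\varphi_*}$, no loss of derivatives occurs. Finally, the right-hand side $\mathfrak{L}_{s,\lambda,\kappa}\hpsi$ is moved over by Cauchy--Schwarz: $|\re(r^{\delta}\overline{\partial_r(r^{-1}\hpsi)}\,\mathfrak{L}_{s,\lambda,\kappa}\hpsi)| \le \epsilon r^{\delta}|\partial_r(r^{-1}\hpsi)|^2 + C_\epsilon r^{\delta-?}|\mathfrak{L}_{s,\lambda,\kappa}\hpsi|^2$, absorbing the $\epsilon$-term into the positive bulk and keeping the weight as $r^{-3+\delta}$ on the right by a further use of the (now-established) lower bound on the $|\partial_r(r^{-1}\hpsi)|^2$ coefficient, $\gtrsim (r-M)r^{\delta}\gtrsim r^{\delta}$ for $r$ bounded away from $M$ — note $r_+>M$ in the sub-extremal case, so $r-M\ge r_+-M>0$ throughout.

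I expect the main obstacle to be bookkeeping the many boundary and cross terms so that everything lands on the correct side with the stated weights, in particular verifying that the boundary contributions at the outer edge $r=\tfrac13 R_\infty$ (both from the $\partial_r(\Delta\partial_r)$ integration by parts and from the $sr(4Mr - 2(1-\mathbbm{h})\Delta)\partial_r$ term) are controlled purely by $r^{-3+\delta}(|\snabla_{\s^2}\hpsi|^2+\lambda|\pi_{\le l}\hpsi|^2)$ on $\tfrac13 R_\infty\le r\le\tfrac12 R_\infty$ — i.e.\ that one does not pick up an uncontrolled $|\partial_r\hpsi|^2$ boundary term there. The standard fix is to insert a smooth cutoff $\chi(r)$ supported in $[r_+,\tfrac12 R_\infty]$, equal to $1$ on $[r_+,\tfrac13 R_\infty]$, into the multiplier, so that the only surviving outer boundary contributions come from $\chi'$ and are supported in $[\tfrac13 R_\infty,\tfrac12 R_\infty]$; there one integrates by parts once more in $r$ on the dangerous $\partial_r\hpsi$-terms to trade them, at the cost of a $\slashed\Delta_{\s^2}$ (hence $|\snabla_{\s^2}\hpsi|^2$) term plus an $|\mathfrak{L}_{s,\lambda,\kappa}\hpsi|^2$ term, yielding exactly the boxed term and the $C_s$-term on the right of \eqref{eq:redshift}. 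One also needs to confirm that $\delta>0$ (from $\re s>-\kappa_+/2$) so that $r^{\delta}$, $r^{\delta-2}$, $r^{\delta-3}$ are all bounded above and below by positive constants times powers of $r$ on the compact-in-$r$ region $[r_+,\tfrac12 R_\infty]$, which is immediate. Once the signs near $r=r_+$ are pinned down by the choice of $\delta$, the rest is a (lengthy but routine) weighted energy computation.
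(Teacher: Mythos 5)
Your strategy coincides with the paper's: the paper multiplies $r^{-2}\rho^{2}\mathfrak{L}_{s,\lambda,\kappa}\hpsi$ by $\chi_{r\leq \frac{1}{3}R_{\infty}}\,r^{\delta-1}\overline{\partial_r\hphi}$ with $\hphi=r^{-1}\hpsi$, integrates by parts, and uses $1\leq \mathbbm{h}\leq \frac{r^2+a^2}{\Delta}$ together with $\kappa_+>0$ exactly as you outline, with the cutoff derivative producing the boxed term. Three execution details, however, differ from what actually works. First, the constraint on $\delta$ runs the other way: at $r=r_+$ the bulk coefficient of $|\partial_r\hphi|^2$ is $\tfrac12 r_+^{\delta}\bigl(\tfrac{d\Delta}{dr}\big|_{r_+}+8Mr_+\re s\bigr)=r_+^{\delta}(r_+-M)\bigl(1+2\kappa_+^{-1}\re s\bigr)$ — the $\delta$-dependent piece multiplies $r^{-1}\Delta$ and vanishes there — and positivity on all of $[r_+,\tfrac12 R_\infty]$, after paying the Young's-inequality cost $\tfrac{\delta}{2}(r-M)r^{\delta}$, requires $\delta$ to be bounded \emph{above} in terms of $1+2\kappa_+^{-1}\min\{\re s,0\}$, not below; your claimed equivalence at the horizon is not the one that governs the choice of $\delta$. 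Second, the $2ar\,\partial_r\partial_{\varphi_*}$ cross term is exactly $a r^{\delta}\partial_{\varphi_*}\bigl(|\partial_r\hphi|^2\bigr)$, a total $\varphi_*$-derivative that vanishes upon integration over $\s^2$; the Cauchy--Schwarz absorption you propose would instead generate $|\partial_r\partial_{\varphi_*}\hphi|^2$, which the left-hand side of \eqref{eq:redshift} does not control. Third, your worry about an uncontrolled $|\partial_r\hpsi|^2$ boundary contribution at the outer edge resolves by a sign rather than by a further integration by parts: the cutoff-derivative term $-\tfrac12\chi'\bigl(\Delta r^{\delta}\chi|\partial_r\hphi|^2-r^{\delta}\chi|\snabla_{\s^2}\hphi|^2-\lambda\chi|\hphi|^2\bigr)$ has $\chi'\leq 0$, so its $|\partial_r\hphi|^2$ part is nonnegative and may be dropped, leaving only the angular and $\lambda$ terms for the boxed right-hand side; the extra integration by parts you suggest is unnecessary and would not convert a $|\partial_r\hphi|^2$ term into angular terms anyway.
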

\begin{proof}
Let $\hphi=r^{-1}\hpsi$. Without loss of generality, we will assume $\hphi=\pi_{\leq l}\hphi$ in the derivations below. The $\pi_{>l}\hphi$ part of general $\hphi$ can be estimated in exactly the same way by simply taking $\lambda=0$.

We multiply $r^{-2}\rho^{2}\mathfrak{L}_{s,\lambda}\hpsi$ with $\chi_{r\leq \frac{1}{3}R_{\infty}}r^{\delta-1}\overline{\partial_r\hphi}$, where $\delta>0$ will be fixed later, and where $\chi_{r\leq \frac{1}{3}R_{\infty}}$ is a smooth cut-off function that is equal to 1 for $r\leq \frac{1}{3}R_{\infty}$ and vanishes when $r\geq \frac{1}{2}R_{\infty}$. We then apply the Leibniz rule to obtain:
\begin{multline*}
\re(\chi_{r\leq \frac{1}{3}R_{\infty}}r^{\delta-1}\overline{\partial_r\hphi}\cdot r^{-2}\rho^{2}\mathfrak{L}_{s,\lambda,\kappa}\hpsi)=\Delta r^{\delta}\chi_{r\leq \frac{1}{3}R_{\infty}}\re(\overline{\partial_r\hphi} \partial_r^2\hphi)+\frac{d\Delta}{dr}\chi_{r\leq \frac{1}{3}R_{\infty}}r^{\delta}|\partial_r\hphi|^2\\
+2ar^{\delta}\chi_{r\leq \frac{1}{3}R_{\infty}}\re(\overline{\partial_r\hphi}\partial_r\partial_{\varphi_*}\hphi)+r^{\delta}\re(\overline{\partial_r\hphi}\slashed{\Delta}_{\s^2}\hphi)+(4Mr- 2(\mathbbm{h}-1)\Delta))r^{\delta}\chi_{r\leq \frac{1}{3}R_{\infty}} {\re s}|\partial_r\hphi|^2-\lambda r^{\delta}\re(\overline{\partial_r \hphi}\hphi)\\
=\frac{1}{2}\partial_r\left(\Delta r^{\delta}\chi_{r\leq \frac{1}{3}R_{\infty}}|\partial_r\hphi|^2-r^{\delta}\chi_{r\leq \frac{1}{3}R_{\infty}}|\snabla_{\s^2}\hphi|^2-\lambda\chi_{r\leq \frac{1}{3}R_{\infty}}|\hphi|^2\right)+\partial_{\varphi_*}(\ldots)+\textnormal{div}_{\s^2}(\ldots)\\
+\frac{1}{2}r^{\delta}\chi_{r\leq \frac{1}{3}R_{\infty}}\left[\frac{d \Delta}{dr}-\delta  r^{-1}\Delta +(8 M r-4(\mathbbm{h}-1)\Delta)) {\re s}\right]|\partial_r\hphi|^2+\frac{\delta}{2}r^{\delta-1}\chi_{r\leq \frac{1}{3}R_{\infty}}\left[|\snabla_{\s^2}\hphi|^2+\lambda |\hphi|^2\right]\\
-\frac{1}{2}\frac{d\chi_{r\leq \frac{1}{3}R_{\infty}}}{dr}\left(\Delta r^{\delta}\chi_{r\leq \frac{1}{3}R_{\infty}}|\partial_r\hphi|^2-r^{\delta}\chi_{r\leq \frac{1}{3}R_{\infty}}|\snabla_{\s^2}\hphi|^2-\lambda\chi_{r\leq \frac{1}{3}R_{\infty}}|\hphi|^2\right),
\end{multline*}
where we employed the schematic notation $\partial_{\varphi_*}(\ldots)+\textnormal{div}_{\s^2}(\ldots)$ for total derivatives and divergences on $\s^2$, as these terms vanish after integration over $\s^2$.

We first estimate via Young's inequality:
\begin{equation*}
\left|\re\left[r^{\delta-1}\overline{\partial_r\hphi}\cdot r^{-2}\rho^{2}\mathfrak{L}_{s,\lambda,\kappa}\hpsi\right]\right|\leq \frac{\delta}{2}r^{\delta}(r-M)|\partial_r\hphi|^2+\delta^{-1}r^{-2+\delta}(r-M)^{-1}|\mathfrak{L}_{s,\lambda,\kappa}\hpsi|^2.
\end{equation*}

Furthermore, using that $\mathbbm{h}\leq \frac{r^2+a^2}{\Delta}$, we obtain
\begin{equation*}
8 M r-4(\mathbbm{h}-1)\Delta\geq 8Mr-4(r^2+a^2)+4\Delta=0
\end{equation*}
so for $\re s\geq 0$, we have that $(8 M r-4(\mathbbm{h}-1)\Delta)) {\re s}\geq 0$.

For $\re s<0$, we use that $\mathbbm{h}\geq 1$ to obtain
\begin{equation*}
(8 M r-4(\mathbbm{h}-1)\Delta) \re s\geq 8Mr\re s
\end{equation*}
and we have that
\begin{equation}
\label{eq:redshiftfactor}
	\frac{d \Delta}{dr}-\delta  r^{-1}\Delta +8M r{\re s}-\frac{\delta}{2}(r-M)> 2(r-M)\left(1-\frac{3}{4}\delta +2 \frac{r_+-M}{r-M}\kappa_+^{-1} {\re s}\right),
\end{equation}
where we recall the following expression for the surface gravity $\kappa_+$:
\begin{equation*}
	\kappa_+=\frac{r_+-M}{2Mr_+}.
\end{equation*}

Hence, for ${\re s}>-\frac{\kappa_+}{2}$ and $0<\frac{3}{4}\delta<1+\min\{2\kappa_+^{-1}{\re s},0\}$, the right-hand side of \eqref{eq:redshiftfactor} is non-negative definite.

Since $\frac{d\chi_{r\leq \frac{1}{3}R_{\infty}}}{dr}>0$ and there exists a numerical constant $C>0$ such that $-\frac{d\chi_{r\leq \frac{1}{3}R_{\infty}}}{dr}\leq C R_{\infty}^{-1}$, we can estimate:
\begin{equation*}
-\frac{1}{2}\frac{d\chi_{r\leq \frac{1}{3}R_{\infty}}}{dr}\left(\Delta r^{\delta}\chi_{r\leq \frac{1}{3}R_{\infty}}|\partial_r\hphi|^2-r^{\delta}\chi_{r\leq \frac{1}{3}R_{\infty}}|\snabla_{\s^2}\hphi|^2-\lambda\chi_{r\leq \frac{1}{3}R_{\infty}}|\hphi|^2\right)\geq -CR_{\infty}^{-1}r^{\delta}\left(|\snabla_{\s^2}\hphi|^2+\lambda |\hphi|^2\right).
\end{equation*}

After setting $\delta=1+2\kappa_+^{-1} \min\{{\re s},0\}$ and integrating over $[r_+,\frac{1}{2}R_{\infty}]\times \s^2$, we obtain the following estimate: for ${\re s}>-\frac{\kappa_+}{2}$, there exists a $C_s=C(\kappa_+,{\re s},\delta)>0$, such that
\begin{multline*}
\int_{r_+}^{\frac{1}{3}R_{\infty}}\int_{\s^2} \frac{1}{2}(r-M)(1+2\kappa_+^{-1} \min\{{\re s},0\})r^{\delta}|\partial_r\hphi|^2\,d\upsigma dr\\
+\frac{1}{2}(1+2\kappa_+^{-1}  \min\{{\re s},0\})r^{-1+\delta}\left[|\snabla_{\s^2}\hphi|^2+\lambda |\hphi|^2\right]+\int_{\s^2}  r^{\delta}\left[|\snabla_{\s^2}\hphi|^2+\lambda|\hphi|^2\right]\Big|_{r=r_+}\,d\upsigma\\
\leq C\int_{\frac{1}{3}R_{\infty}}^{\frac{1}{2}R_{\infty}}\int_{\s^2}  r^{-1+\delta}\left[|\snabla_{\s^2}\hphi|^2+\lambda|\hphi|^2\right]\,d\upsigma dr+C_s\int_{r_+}^{\frac{1}{2}R_{\infty}}\int_{\s^2}r^{-3+\delta}|\mathfrak{L}_{s,\lambda,\kappa}\hpsi|^2\,d\upsigma dr.
\end{multline*}
\end{proof}

In order to improve the range of allowed values of ${\re s}$, we need to consider higher-order $r$-derivatives of $\hpsi$. In addition, we will consider higher angular derivatives.
\begin{lemma}
Let $k,m\in \N_0$ and $r_+ \leq \varrho\leq \frac{1}{2}R_{\infty}$. Then, for $\hphi=r^{-1}\hpsi$:
	\begin{multline}
	\label{eq:hoeqnearhor}
		\partial_r^k\slashed{\Delta}_{\s^2}^m(r^{-3}\rho^2 \mathfrak{L}_{s,\lambda,\kappa}\hpsi)=\Delta \partial_r^{2+k}\slashed{\Delta}_{\s^2}^m\hphi+(1+k)\frac{d\Delta}{dr}\partial_r^{k+1}\slashed{\Delta}_{\s^2}^m\hphi+2a\partial_{\varphi_*}\partial_r^{k+1}\slashed{\Delta}_{\s^2}^m\hphi+\slashed{\Delta}_{\s^2}^{1+m}\partial_r^k\hphi\\
		+4M rs \partial_r^{k+1}\slashed{\Delta}_{\s^2}^m\hphi-(\lambda \pi_{\leq l}-k(k+1)- 4 k M  s)\partial_r^k\slashed{\Delta}_{\s^2}^m\hphi-2\sum_{j=0}^{k}\frac{k!}{j!(k-j)!}\partial_r^{k-j}((\mathbbm{h}-1)\Delta)\partial_r^j\slashed{\Delta}_{\s^2}^m\hphi.
	\end{multline}
\end{lemma}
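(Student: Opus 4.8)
### Proof proposal

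The plan is to establish \eqref{eq:hoeqnearhor} by a direct induction on the number of radial derivatives $k$, after first reducing to the case of pure angular commutation. First I would recall from \eqref{eq:operatorbounded} the explicit form of $r^{-2}\rho^2\mathfrak{L}_{s,\lambda,\kappa}$ in the region $r_+\leq \varrho\leq \tfrac12 R_\infty$, namely
\begin{equation*}
r^{-3}\rho^2\mathfrak{L}_{s,\lambda,\kappa}\hpsi = \Delta\partial_r^2\hphi + \frac{d\Delta}{dr}\partial_r\hphi + 2a\partial_{\varphi_*}\partial_r\hphi + \slashed{\Delta}_{\s^2}\hphi + (4Mr - 2(\mathbbm{h}-1)\Delta)s\partial_r\hphi - \lambda\pi_{\leq l}\hphi,
\end{equation*}
where $\hphi = r^{-1}\hpsi$ and I have used $r^{-2}\rho^2\mathfrak{L}\hpsi = r\cdot(\textnormal{RHS with }r^{-1}\hpsi) $ rewritten with $\hphi$; this is the $k=m=0$ base case (with the convention that empty sums vanish and $k(k+1)=0$). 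The operator $\slashed{\Delta}_{\s^2}^m$ commutes with every coefficient appearing above, since all coefficients ($\Delta$, $\tfrac{d\Delta}{dr}$, $\mathbbm{h}$, $\lambda\pi_{\leq l}$) depend only on $r$ (and $\pi_{\leq l}$ is a spherical-harmonic projector commuting with $\slashed{\Delta}_{\s^2}$), and $\partial_r,\partial_{\varphi_*}$ commute with $\slashed{\Delta}_{\s^2}$; so applying $\slashed{\Delta}_{\s^2}^m$ to the base identity immediately yields the $k=0$ case of \eqref{eq:hoeqnearhor} for all $m$. It therefore suffices to commute with $\partial_r^k$.

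Next I would carry out the induction on $k$, commuting $\partial_r$ through the $k=0$ identity (with $\slashed{\Delta}_{\s^2}^m$ already applied). The only terms requiring care are those with $r$-dependent coefficients: $\partial_r(\Delta\partial_r^2\slashed{\Delta}_{\s^2}^m\hphi) = \Delta\partial_r^3\slashed{\Delta}_{\s^2}^m\hphi + \tfrac{d\Delta}{dr}\partial_r^2\slashed{\Delta}_{\s^2}^m\hphi$, and similarly $\partial_r(\tfrac{d\Delta}{dr}\partial_r\slashed{\Delta}_{\s^2}^m\hphi) = \tfrac{d\Delta}{dr}\partial_r^2\slashed{\Delta}_{\s^2}^m\hphi + \tfrac{d^2\Delta}{dr^2}\partial_r\slashed{\Delta}_{\s^2}^m\hphi = \tfrac{d\Delta}{dr}\partial_r^2\slashed{\Delta}_{\s^2}^m\hphi + 2\partial_r\slashed{\Delta}_{\s^2}^m\hphi$ using $\Delta = r^2-2Mr+a^2$ so $\tfrac{d^2\Delta}{dr^2}=2$. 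Collecting the $\Delta\partial_r^{k+2}$-type term and the $\tfrac{d\Delta}{dr}\partial_r^{k+1}$-type term shows that after $k$ derivatives the coefficient of $\tfrac{d\Delta}{dr}\partial_r^{k+1}\slashed{\Delta}_{\s^2}^m\hphi$ is $(1+k)$ (one from differentiating $\Delta$, $k$ more from repeatedly differentiating the lower-order $\tfrac{d\Delta}{dr}$ terms and from the constant-$2$ contributions), and the constant contributions from $\tfrac{d^2\Delta}{dr^2}=2$ accumulate to the $k(k+1)$ in the zeroth-order coefficient — precisely the combinatorial identity $\sum_{j=1}^k 2j = k(k+1)$. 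The term $4Mrs\,\partial_r\hphi$ contributes, upon $k$-fold differentiation, the main term $4Mrs\,\partial_r^{k+1}\slashed{\Delta}_{\s^2}^m\hphi$ together with $4kMs\,\partial_r^k\slashed{\Delta}_{\s^2}^m\hphi$ (from differentiating the linear coefficient $r$ once, picking up the multiplicity $k$), accounting for the $-(-4kMs)$ term. The term $-2(\mathbbm{h}-1)\Delta s\,\partial_r\hphi$ is handled by the Leibniz rule without simplification, since $\mathbbm{h}$ is a general function of $r$, producing exactly the sum $-2\sum_{j=0}^k\binom{k}{j}\partial_r^{k-j}((\mathbbm{h}-1)\Delta)\,\partial_r^j\slashed{\Delta}_{\s^2}^m\hphi$; note that the $j=k$ term here is $-2(\mathbbm{h}-1)\Delta\,\partial_r^{k+1}\slashed{\Delta}_{\s^2}^m\hphi$, which is why this coefficient does not appear explicitly attached to $s\,\partial_r^{k+1}$ in the displayed statement. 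Finally $\partial_{\varphi_*}$, $\slashed{\Delta}_{\s^2}$, and $\lambda\pi_{\leq l}$ commute freely with $\partial_r$, giving the remaining stated terms.

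The only genuinely delicate bookkeeping is verifying that the three separate sources of zeroth-order ($\partial_r^k$) contributions — the accumulated $\tfrac{d^2\Delta}{dr^2}=2$ terms, the differentiated $4Mr$ coefficient, and the $\binom{k}{j}$ Leibniz expansion — combine into exactly the coefficient $-(\lambda\pi_{\leq l} - k(k+1) - 4kMs)$ displayed, with no leftover terms; I would make this precise by tracking, at each induction step $k\mapsto k+1$, how the coefficient of $\partial_r^{k+1}\slashed{\Delta}_{\s^2}^m\hphi$ evolves, using that the coefficient of $\partial_r^{k+2}$ is always $\Delta$ and the coefficient of $\partial_r^{k+1}$ picks up precisely $\tfrac{d\Delta}{dr}$ from that term plus the previous $\tfrac{d\Delta}{dr}$ coefficient's derivative contribution. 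I expect the main obstacle to be purely this combinatorial verification of coefficients (in particular the identity $\sum_{j=1}^{k}2j = k(k+1)$ and the multiplicities attached to the first-order coefficients $r$ and $\mathbbm{h}\Delta$); there is no analytic content. This lemma is then used in the sequel to extend the red-shift estimate of Proposition \ref{prop:redshift} to the full range $\re s > -\tfrac{\kappa_+}{2}(1+2N_+)$ by pairing \eqref{eq:hoeqnearhor} with an appropriate weighted multiplier $r^{\delta-1}\overline{\partial_r^{k+1}\slashed{\Delta}_{\s^2}^m\hphi}$ and summing over $k\leq N_+$.
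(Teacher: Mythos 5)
Your strategy coincides with the paper's: the $m$-commutation is trivial because all coefficients in \eqref{eq:operatorbounded} depend only on $r$, and the real content is the induction on $k$; your bookkeeping of the $\Delta\partial_r^{k+2}$, $(1+k)\frac{d\Delta}{dr}\partial_r^{k+1}$, $4Mrs\,\partial_r^{k+1}$ and $4kMs\,\partial_r^{k}$ contributions, and of the accumulation $k(k+1)$ via $\frac{d^2\Delta}{dr^2}=2$ (equivalently $K(K+1)+2(K+1)=(K+1)(K+2)$ at each step), is exactly the computation in the paper's proof.

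The one step that does not hold up is your treatment of the $(\mathbbm{h}-1)\Delta$ term. Starting from your (correct) base case, the term to be commuted is $-2s(\mathbbm{h}-1)\Delta\,\partial_r\hphi$, and the Leibniz rule gives $\partial_r^k\bigl(-2s(\mathbbm{h}-1)\Delta\,\partial_r\hphi\bigr)=-2s\sum_{j=0}^{k}\binom{k}{j}\partial_r^{k-j}\bigl((\mathbbm{h}-1)\Delta\bigr)\partial_r^{j+1}\hphi$: every term carries a factor of $s$ and one more radial derivative than the sum displayed in \eqref{eq:hoeqnearhor}. So it is not true that the expansion ``produces exactly'' the displayed sum, and your parenthetical that the $j=k$ term of that sum is $-2(\mathbbm{h}-1)\Delta\,\partial_r^{k+1}\slashed{\Delta}_{\s^2}^m\hphi$ contradicts the sum as you (and the lemma) wrote it, whose $j=k$ term is $-2(\mathbbm{h}-1)\Delta\,\partial_r^{k}\slashed{\Delta}_{\s^2}^m\hphi$; you cannot have both. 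The mismatch is in fact already present in the statement itself: its $k=0$ case does not reduce to \eqref{eq:operatorbounded}, whose $\mathbbm{h}$-dependent term is first order in $\partial_r$ and carries an $s$ (there is also a sign discrepancy between \eqref{eq:operatorbounded} and the form of $L_s$ used in the red-shift estimate; the sign you chose is the one actually used there). The honest fix is to prove the identity with the last sum replaced by $-2s\sum_{j=0}^{k}\binom{k}{j}\partial_r^{k-j}((\mathbbm{h}-1)\Delta)\,\partial_r^{j+1}\slashed{\Delta}_{\s^2}^m\hphi$ — your induction then closes verbatim, and this corrected form is precisely what the higher-order red-shift estimate of Corollary \ref{cor:redshiftho} consumes, since it generates the $M^2|s|^2$-weighted lower-order terms appearing there — rather than to assert agreement with the formula as printed by a self-contradictory identification.
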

\begin{proof}
We will first prove \eqref{eq:hoeqnearhor} with $m=0$ via induction.  The $k=0$ case follows from \eqref{eq:operatorbounded}. Assume \eqref{eq:hoeqnearhor} holds for $k=K$. Then
\begin{multline*}
\partial_r^{K+1}(r^{-3}\rho^2 \mathfrak{L}_{s,\lambda,\kappa}\hpsi)=\partial_r\Big(\Delta \partial_r^{2+K}\hphi+(1+K)\frac{d\Delta}{dr}\partial_r^{K+1}\hphi+2a\partial_{\varphi_*}\partial_r^{K+1}\hphi+\slashed{\Delta}_{\s^2}\partial_r^K\hphi+4M r s \partial_r^{K+1}\hphi\\
-(\lambda \pi_{\leq l}-K(K+1)- 4 K M s)\partial_r^K\hphi-2\sum_{j=0}^{K}\frac{K!}{j!(K-j)!}\partial_r^{K-j}((\mathbbm{h}-1)\Delta)\partial_r^j\slashed{\Delta}_{\s^2}^m\hphi\Big)\\
=\partial_r^{3+K}\hphi+(K+2)\frac{d\Delta}{dr}\partial_r^{K+2}\hphi+2a\partial_{\varphi_*}\partial_r^{K+2}\hphi+\slashed{\Delta}_{\s^2}\partial_r^{K+1}\hphi+4M r s \partial_r^{K+2}\hphi\\
-\left(\lambda\pi_{\leq l}-K(K+1)-\frac{d^2\Delta}{dr^2}(K+1)- 4M (K+1) s\right)\partial_r^{K+1}\hphi\\
-2\sum_{j=0}^{K+1}\frac{(K+1)!}{j!(K+1-j)!}\partial_r^{K+1-j}((\mathbbm{h}-1)\Delta)\partial_r^j\slashed{\Delta}_{\s^2}^m\hphi.
\end{multline*}
Using that $K(K+1)+\frac{d^2\Delta}{dr^2}(K+1)=(K+2)(K+1)$, it therefore follows that \eqref{eq:hoeqnearhor} with $m=0$ holds for $k=K+1$.

The case $m>0$ follows immediately from the fact that $\slashed{\Delta}_{\s^2}$ and $\partial_r^{K+1}(r^{-3}\rho^2 \mathfrak{L}_{s,\lambda,\kappa}(\cdot))$ commute.
\end{proof}

We state now a higher-order version of Proposition \ref{prop:redshift}.
\begin{corollary}
\label{cor:redshiftho}
Let $N_+,m\in \N_0$, ${\re s}>-\kappa_+\left(\frac{1}{2}+N_+\right)$ and $\delta=1+2N_++2\kappa_+^{-1}\min\{{\re s},0\}$. Assume that
\begin{equation*}
\lambda-N_+(N_++1)\gg \frac{N_+^2M^2|s|^2}{1+2N_++2\kappa_+^{-1}\min\{{\re s},0\}}.
\end{equation*}

Then there exists a constant $C_s=C_s(\kappa_+,s)>0$ and $C=C(N_+)>0$ such that:
\begin{multline}
\label{eq:redshiftho}
\sum_{k=0}^{N_+}(1+2N_++2\kappa_+^{-1} \min\{{\re s},0\})R_{\infty}^{2k}\int_{r_+}^{\frac{1}{3}R_{\infty}}\int_{\s^2} (r-M)r^{\delta}|\snabla_{\s^2}^m\partial_r^{k+1}(r^{-1}\hpsi)|^2\\
+r^{-3+\delta}\left[|\snabla_{\s^2}^{m+1}\partial_r^k\hpsi|^2+(\lambda \pi_{\leq l}-k(k+1)) |\snabla_{\s^2}^m\partial_r^k\hpsi|^2\right]\,d\upsigma dr\\
\leq \boxed{C\sum_{k=0}^{N_+}R_{\infty}^{2k-2}\int_{\frac{1}{3}R_{\infty}}^{\frac{1}{2}R_{\infty}}\int_{\s^2}  r^{-1+\delta}\left[|\snabla_{\s^2}^{m+1}\partial_r^k\hpsi|^2+(\lambda \pi_{\leq l}-k(k+1))|\snabla_{\s^2}^{m}\partial_r^k\hpsi|^2\right]\,d\upsigma dr}\\
+C_sR_{\infty}^{2N_+}\int_{r_+}^{\frac{1}{2}R_{\infty}}\int_{\s^2}r^{-1+\delta}|\snabla_{\s^2}^m\partial_r^{N_+}(r^{-3}\rho^2\mathfrak{L}_{s,\lambda,\kappa}\hpsi)|^2\,d\upsigma dr.
\end{multline}
\end{corollary}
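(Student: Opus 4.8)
The plan is to mimic the $k=0$ case of Proposition~\ref{prop:redshift}, but now starting from the commuted equation \eqref{eq:hoeqnearhor} rather than from \eqref{eq:operatorbounded}, and to carry out the energy argument simultaneously for all $0\le k\le N_+$, treating the angular derivatives $\snabla_{\s^2}^m$ as an inert passenger. First I would fix $m$ (the identity \eqref{eq:hoeqnearhor} shows that $\slashed{\Delta}_{\s^2}^m$ commutes through everything, so applying $\snabla_{\s^2}^m$ — or equivalently working with $\slashed{\Delta}_{\s^2}^m$ and then interpolating — costs nothing beyond bookkeeping). For each $k$, multiply $\partial_r^k\slashed{\Delta}_{\s^2}^m(r^{-3}\rho^2\mathfrak{L}_{s,\lambda,\kappa}\hpsi)$, as given by \eqref{eq:hoeqnearhor}, by the multiplier $\chi_{r\le \frac13 R_\infty}\, r^{\delta-1}\,\overline{\partial_r^{k+1}\slashed{\Delta}_{\s^2}^m\hphi}$ with $\hphi=r^{-1}\hpsi$ and $\delta=1+2N_++2\kappa_+^{-1}\min\{\re s,0\}$, take real parts, and apply the Leibniz rule exactly as in the proof of Proposition~\ref{prop:redshift}. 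The $\Delta\,\partial_r^{2+k}$, the $2a\partial_{\varphi_*}\partial_r^{k+1}$ and the $\slashed{\Delta}_{\s^2}^{1+m}\partial_r^k$ terms again organize into a total $\partial_r$-derivative (producing the boundary term at $r=r_+$ and the bulk cutoff error localized to $[\tfrac13 R_\infty,\tfrac12 R_\infty]$), plus $\partial_{\varphi_*}(\ldots)+\mathrm{div}_{\s^2}(\ldots)$ which integrate away, plus a bulk coercive term. The key arithmetic input is the identity $K(K+1)+\frac{d^2\Delta}{dr^2}(K+1)=(K+2)(K+1)$ already used in proving \eqref{eq:hoeqnearhor}, together with $\frac{d^2\Delta}{dr^2}=2$, so that the coefficient of $|\partial_r^{k+1}\slashed{\Delta}_{\s^2}^m\hphi|^2$ in the bulk becomes, after integration by parts in the $(k+1)\frac{d\Delta}{dr}$ term,
\begin{equation*}
\tfrac12 r^{\delta}\Big[(k+1)\tfrac{d\Delta}{dr}-\delta r^{-1}\Delta+(8Mr-4(\mathbbm{h}-1)\Delta)\re s\Big] - \big(\text{Young slack}\big),
\end{equation*}
and the $k$-dependence of the red-shift threshold comes from replacing the factor $\frac{d\Delta}{dr}=2(r-M)$ in \eqref{eq:redshiftfactor} by $(k+1)\frac{d\Delta}{dr}=2(k+1)(r-M)$: the analogue of \eqref{eq:redshiftfactor} becomes $2(r-M)\big((k+1)-\tfrac34\delta + 2\tfrac{r_+-M}{r-M}\kappa_+^{-1}\re s\big)$, which is non-negative precisely when $\re s>-\kappa_+(\tfrac12+k)\ge -\kappa_+(\tfrac12+N_+)$ and $0<\tfrac34\delta\le k+1+\min\{2\kappa_+^{-1}\re s,0\}$; with the stated choice of $\delta$ this holds for every $k\le N_+$. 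This is the enhanced red-shift: higher radial derivatives see a better weight, which is exactly why $\delta$ may be taken as large as $1+2N_+$ on the left while only $N_+$ derivatives of $\mathfrak{L}_{s,\lambda,\kappa}\hpsi$ appear on the right.

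Next I would control the error terms generated by $k>0$. The terms $-2\sum_{j=0}^{k}\binom{k}{j}\partial_r^{k-j}((\mathbbm{h}-1)\Delta)\,\partial_r^j\slashed{\Delta}_{\s^2}^m\hphi$ for $j\le k$ involve strictly lower-order radial derivatives (or, for $j=k$, the factor $(\mathbbm{h}-1)\Delta$ which is bounded and whose $r$-derivatives are bounded on $[r_+,\tfrac12 R_\infty]$), so after Young's inequality they are absorbed: the top-order part $j=k$ contributes a term comparable to $C\,|\partial_r^k\slashed{\Delta}_{\s^2}^m\hphi|^2$ which is harmless because $\delta\ge 1$ and the corresponding coercive term on the left carries the large factor $\lambda\pi_{\le l}-k(k+1)$; the lower-order parts $j<k$ are absorbed into the already-established estimates for $k'<k$ by an induction/iteration on $k$ (proving \eqref{eq:redshiftho} with the sum running up to $K$, then up to $K+1$, each new level's lower-order errors being controlled by the previous levels after multiplying by suitable powers of $R_\infty^{-2}$ — note the $R_\infty^{2k}$ weights are designed so that one radial derivative near $r\sim R_\infty$ costs a factor $R_\infty$). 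The mixed terms $4Mrs\,\partial_r^{k+1}\slashed{\Delta}_{\s^2}^m\hphi$ and $4kMs\,\partial_r^k\slashed{\Delta}_{\s^2}^m\hphi$ are handled exactly as in Proposition~\ref{prop:redshift} for the first, and for the second by Young against the coercive angular term, which is where the hypothesis $\lambda-N_+(N_++1)\gg N_+^2M^2|s|^2/(1+2N_++2\kappa_+^{-1}\min\{\re s,0\})$ enters: it guarantees $\lambda\pi_{\le l}-k(k+1)-C k^2 M^2|s|^2\delta^{-1}\ge \tfrac12(\lambda\pi_{\le l}-k(k+1))$ so the coercivity of the $|\snabla_{\s^2}^m\partial_r^k\hpsi|^2$ term survives. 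The Cauchy--Schwarz/Young step against $\mathfrak{L}_{s,\lambda,\kappa}\hpsi$ is the same as before, producing the boxed-free right-hand side term $C_s R_\infty^{2N_+}\int r^{-1+\delta}|\snabla_{\s^2}^m\partial_r^{N_+}(r^{-3}\rho^2\mathfrak{L}_{s,\lambda,\kappa}\hpsi)|^2$ — only the top level $k=N_+$ needs the inhomogeneity directly, since for $k<N_+$ one expresses $\partial_r^{k+2}\hphi$ from \eqref{eq:hoeqnearhor} in terms of lower derivatives and the source, but it is cleaner simply to bound every $\|\partial_r^k(\cdots)\|$ by $\|\partial_r^{N_+}(\cdots)\|$ up to $R_\infty$ powers on the bounded region $[r_+,\tfrac12 R_\infty]$.

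The cutoff error $-\tfrac12 \tfrac{d\chi_{r\le \frac13 R_\infty}}{dr}(\ldots)$ is supported in $[\tfrac13 R_\infty,\tfrac12 R_\infty]$ and satisfies $|\tfrac{d\chi}{dr}|\le CR_\infty^{-1}$, so it produces precisely the boxed right-hand side of \eqref{eq:redshiftho} (with the appropriate $R_\infty^{2k-2}$ weights, one factor $R_\infty^{-1}$ from $\tfrac{d\chi}{dr}$ and one from converting $r^\delta$ to $r^{-1+\delta}$ times the length of the interval $\sim R_\infty$; the $\Delta|\partial_r^{k+1}\hphi|^2$ part of that term is dropped since it has the favourable sign, exactly as in Proposition~\ref{prop:redshift}); these radial derivatives in the boxed term are harmless because they sit in a bounded $r$-region and will later be absorbed by the elliptic estimates of \S\ref{sec:redshiftelliptic} in region (B). I expect the main obstacle to be purely combinatorial: keeping track, through the iteration on $k$, of the precise powers of $R_\infty$ attached to each lower-order error term so that they genuinely close (i.e. that the constants $C=C(N_+)$ do not secretly depend on $R_\infty$), and verifying that the smallness hypothesis on $\lambda$ is exactly what is needed to absorb all the $|s|$-dependent cross terms at every level $k\le N_+$ simultaneously; the differential-geometric content is already entirely contained in the commutation identity \eqref{eq:hoeqnearhor} and the sign computation \eqref{eq:redshiftfactor}, both of which generalize verbatim.
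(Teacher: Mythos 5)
Your top-level argument (commuting via \eqref{eq:hoeqnearhor}, using the same multiplier $\chi r^{\delta-1}\overline{\partial_r^{k+1}\slashed{\Delta}_{\s^2}^m\hphi}$, absorbing the $4kMs$ cross term into the angular coercive term via the largeness of $\lambda$, and reading off the boxed contribution from $\chi'$) is exactly the paper's route to its intermediate estimate \eqref{eq:redshiftk}. The genuine gap is in your plan to run this estimate \emph{simultaneously for all} $0\le k\le N_+$ and then consolidate. The level-$k$ red-shift bulk term near $r=r_+$ has coefficient proportional to $(r-M)\bigl(1+2k+2\kappa_+^{-1}\min\{\re s,0\}\bigr)$ — you state the coercivity condition $\re s>-\kappa_+(\tfrac12+k)$ yourself, but this is \emph{stronger} than the corollary's hypothesis $\re s>-\kappa_+(\tfrac12+N_+)$ when $k<N_+$ (your inequality chain "$\re s>-\kappa_+(\tfrac12+k)\ge-\kappa_+(\tfrac12+N_+)$" runs the implication the wrong way). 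For $\re s$ in the range $(-\kappa_+(\tfrac12+N_+),-\kappa_+(\tfrac12+k)]$ the level-$k$ estimate is simply not coercive near the horizon, and the offending term is $(r-M)r^{\delta}|\partial_r^{k+1}\hphi|^2$, so it cannot be rescued by taking $\lambda$ large. This is precisely why the paper only uses the multiplier identity at the top level $k=N_+$ and then recovers every $k<N_+$ term on the left of \eqref{eq:redshiftho} from the $k=N_+$ coercive terms by applying a Hardy inequality $N_+-k$ times: each application costs a factor $R_\infty^2$ (whence the $R_\infty^{2k}$ weights) and a boundary-region integral over $[\tfrac13R_\infty,\tfrac12R_\infty]$, and these boundary integrals are exactly what the boxed term is. (A related arithmetic slip: transplanting the $\tfrac34\delta$-criterion of \eqref{eq:redshiftfactor} verbatim with $\delta=1+2N_++2\kappa_+^{-1}\min\{\re s,0\}$ fails already at $k=N_+$ once $N_+\ge1$; one must take the Young parameter in the source estimate small, putting the large constant into $C_s$, rather than the $\tfrac{\delta}{2}(r-M)$ slack of Proposition \ref{prop:redshift}.)

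A second, related defect is your treatment of the inhomogeneity. Running the estimate at each $k$ produces $\int r^{-1+\delta}|\snabla_{\s^2}^m\partial_r^{k}(r^{-3}\rho^2\mathfrak{L}_{s,\lambda,\kappa}\hpsi)|^2$ for every $k$, whereas \eqref{eq:redshiftho} only allows the $k=N_+$ derivative of the source. Your proposed fix — "bound every $\|\partial_r^k(\cdots)\|$ by $\|\partial_r^{N_+}(\cdots)\|$ up to $R_\infty$ powers on the bounded region" — is false: on a bounded interval one cannot control lower $L^2$ derivatives by higher ones without boundary terms (a nonzero constant with vanishing $(k+1)$-st derivative is a counterexample), and the boundary terms you would need involve the source in $[\tfrac13R_\infty,\tfrac12R_\infty]$, which is absent from the right-hand side of \eqref{eq:redshiftho}. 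In the paper's scheme this problem never arises, because the source is only ever invoked at $k=N_+$; the lower-order left-hand side terms come from Hardy applied to the solution, not from re-running the energy identity. (Your weaker variant with all $\partial_r^k$ of the source on the right would in fact suffice for the later Corollary \ref{cor:fullellipticest}, but it is not the stated estimate, and in any case the coercivity failure at low $k$ described above must still be repaired by the Hardy mechanism.)
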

\begin{proof}
	We repeat the proof of Proposition \ref{prop:redshift}, but with \eqref{eq:operatorbounded} replaced by the higher-order equation \eqref{eq:hoeqnearhor} and, for $m>0$, we additionally integrate by parts in the $\s^2$ direction. This gives for $k\in \N_0$:
	\begin{multline}
\label{eq:redshiftk}
\int_{r_+}^{\frac{1}{2}R_{\infty}}\int_{\s^2} (r-M)(1+2k+2\kappa_+^{-1} \min\{{\re s},0\})r^{\delta}\chi_{r\leq \frac{1}{3}R_{\infty}}|\snabla_{\s^2}^m\partial_r^{k+1}\hphi|^2\\
+\frac{2}{3}(1+2k+2\kappa_+^{-1}\min\{{\re s},0\})r^{-1+\delta}\chi_{r\leq \frac{1}{3}R_{\infty}}\left[|\snabla_{\s^2}^{m+1}\partial_r^k\hphi|^2+(\lambda \pi_{\leq l}-k(k+1)) |\snabla_{\s^2}^m\partial_r^k\hphi|^2\right]\,d\upsigma dr\\
+\int_{\s^2}  r^{\delta}\left[|\snabla_{\s^2}^{m+1}\partial_r^k\hphi|^2+(\lambda \pi_{\leq l}-k(k+1))|\snabla_{\s^2}^m\partial_r^k\hphi|^2\right]\Big|_{r=r_+}\,d\upsigma\\
\leq C\int_{\frac{1}{3}R_{\infty}}^{\frac{1}{2}R_{\infty}}\int_{\s^2}  r^{-1+\delta}\left[|\snabla_{\s^2}^{m+1}\partial_r^k\hphi|^2+(\lambda \pi_{\leq l}-k(k+1))|\snabla_{\s^2}^{m}\partial_r^k\hphi|^2\right]\,d\upsigma dr\\
+C k^2M^2|s|^2\int_{r_+}^{\frac{1}{2}R_{\infty}}\int_{\s^2} r^{\delta}\chi_{r\leq \frac{1}{3}R_{\infty}} |\snabla_{\s^2}^m\partial_r^k\hphi|^2\,d\upsigma dr+C_kM^2|s|^2\sum_{j=0}^{k-1}\int_{r_+}^{\frac{1}{2}R_{\infty}}\int_{\s^2} r^{\delta}\chi_{r\leq \frac{1}{3}R_{\infty}} |\snabla_{\s^2}^m\partial_r^j\hphi|^2\,d\upsigma dr\\
+C_s\int_{r_+}^{\frac{1}{2}R_{\infty}}\int_{\s^2}r^{-1+\delta}|\snabla_{\s^2}^m\partial_r^k(r^{-3}\rho^2\mathfrak{L}_{s,\lambda,\kappa}\hpsi)|^2\,d\upsigma dr.
\end{multline}

Note that we can absorb the term with a factor $Ck^2M^2|s|^2$ into
\begin{equation*}
+\frac{2}{3}(1+2k+2\kappa_+^{-1}\min\{{\re s},0\})r^{-1+\delta}\chi_{r\leq \frac{1}{3}R_{\infty}}\left[|\snabla_{\s^2}^{m+1}\partial_r^k\hphi|^2+(\lambda \pi_{\leq l}-k(k+1)) |\snabla_{\s^2}^m\partial_r^k\hphi|^2\right]\,d\upsigma dr,
\end{equation*}
if we take $\lambda-k(k+1)\gg \frac{k^2|s|^2}{1+2k+2\kappa_+^{-1}\min\{{\re s},0\}}$. This gives the desired control for $k=N_+$ in the sum on the left-hand side of \eqref{eq:redshiftk}.

In order to control the $k<N_+$ terms in the sum on the left-hand side of \eqref{eq:redshiftk} as well as bound the $k<N+1$ terms on the right-hand side of \eqref{eq:redshiftk}, we will use that we can absorb these lower-order terms into the $k=N_+$ terms by applying a standard Hardy inequality $N_+-k$ times, see for example \cite{aagkerr}[Lemma 2.6, eq. (2.21)], and by taking $\lambda$ to be appropriately large. For example,
\begin{multline*}
\int_{r_+}^{\frac{1}{2}R_{\infty}}\int_{\s^2} \chi_{r\leq \frac{1}{3}R_{\infty}}r^{-1+\delta}|\snabla_{\s^2}\partial_r^{k-1}\hphi|^2\,d\upsigma dr\leq C \int_{r_+}^{\frac{1}{2}R_{\infty}}\int_{\s^2} \chi_{r\leq \frac{1}{3}R_{\infty}}r^{1+\delta}|\snabla_{\s^2}\partial_r^{k}\hphi|^2\,d\upsigma dr\\
+C\int_{\frac{1}{3}R_{\infty}}^{\frac{1}{2}R_{\infty}}\int_{\s^2}r^{-1+\delta}|\snabla_{\s^2}\partial_r^{k-1}\hphi|^2\,d\upsigma dr\\
\leq  C R_{\infty}^2\int_{r_+}^{\frac{1}{2}R_{\infty}}\int_{\s^2} \chi_{r\leq \frac{1}{3}R_{\infty}}r^{-1+\delta}|\snabla_{\s^2}\partial_r^{k}\hphi|^2\,d\upsigma dr+C\int_{\frac{1}{3}R_{\infty}}^{\frac{1}{2}R_{\infty}}\int_{\s^2}r^{-1+\delta}|\snabla_{\s^2}\partial_r^{k-1}\hphi|^2\,d\upsigma dr. \qedhere
\end{multline*}

\end{proof}

In the sections below, we will consider a coupled system of estimates consisting of \eqref{eq:redshiftho} with $k+l\leq N_+$ and $l\geq 1$, and appropriate elliptic estimates in the region $\{r\geq \frac{1}{4}R_{\infty}\}$. In this way, we will be able to absorb the integral over the region $\{\frac{1}{3}R_{\infty}\leq r \leq \frac{1}{2}R_{\infty}\}$ in the boxed term on the right-hand side of \eqref{eq:redshiftho}.

We can in fact improve the estimate in \eqref{cor:redshiftho} in the case of $|a|\ll M$ or under the restriction to fixed azimuthal modes by applying a standard elliptic estimate in $r\leq \frac{1}{2}R_{\infty}$ in order to include higher-order derivative terms on the right-hand side, with weights that degenerate at $r=r_+$. This will be relevant when considering the cut-off resolvent operator.
\begin{proposition}
\label{prop:redshifthopluselliptic}
Let $N_+,m\in \N_0$, ${\re s}>-\kappa_+\left(\frac{1}{2}+N_+\right)$ and $\delta=\frac{2}{3}(1+2N_++2\kappa_+^{-1}\min\{{\re s},0\})$. Assume that
\begin{equation*}
\lambda-N_+(N_++1)\gg \frac{N_+^2M^2|s|^2}{1+2N_++2\kappa_+^{-1}\min\{{\re s},0\}}.
\end{equation*}
Then for $|a|\ll M$ or $\hpsi=\sum_{m=-m_0}^{m_0}\hpsi_m$, with $m_0>0$ and $\Phi\psi_m=im \psi_m$, there exists a constant $C_s=C_s(a, m_0,\kappa_+,s)>0$ and $C=C(N_+,m_0)>0$ such that:
\begin{multline}
\label{eq:redshifthopluselliptic}
\sum_{k=0}^{N_+}(1+2N_++2\kappa_+^{-1} \min\{{\re s},0\})R_{\infty}^{2k}\int_{r_+}^{\frac{1}{3}R_{\infty}}\int_{\s^2} (r-M)r^{\delta}|\snabla_{\s^2}^m\partial_r^{k+1}(r^{-1}\hpsi)|^2+\Delta^2r^{\delta-1}|\snabla_{\s^2}^m\partial_r^{k+2}(r^{-1}\hpsi)|^2\\
+r^{-3+\delta}\left[|\snabla_{\s^2}^{m+2}\partial_r^k\hpsi|^2+\Delta |\snabla_{\s^2}^{m+1}\partial_r^{k+1}\hpsi|^2+(\lambda \pi_{\leq l}-k(k+1))^2 |\snabla_{\s^2}^m\partial_r^k\hpsi|^2\right]\,d\upsigma dr\\
\leq \boxed{C\sum_{k=0}^{N_+}R_{\infty}^{2k}\int_{\frac{1}{3}R_{\infty}}^{\frac{1}{2}R_{\infty}}\int_{\s^2}  r^{-3+\delta}\left[|\snabla_{\s^2}^{m+1}\partial_r^k\hpsi|^2+(\lambda \pi_{\leq l}-k(k+1))|\snabla_{\s^2}^{m}\partial_r^k\hpsi|^2\right]\,d\upsigma dr}\\
+C_sR_{\infty}^{2N_+}\int_{r_+}^{\frac{1}{2}R_{\infty}}\int_{\s^2}r^{-1+\delta}|\snabla_{\s^2}^m\partial_r^{N_+}(r^{-3}\rho^2\mathfrak{L}_{s,\lambda,\kappa}\hpsi)|^2\,d\upsigma dr.
\end{multline}
\end{proposition}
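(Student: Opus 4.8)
The plan is to superimpose a \emph{degenerate elliptic estimate} on the higher-order red-shift estimate that underlies Corollary~\ref{cor:redshiftho}. The structural point is that on $r_+\le r\le\tfrac12 R_{\infty}$ the operator $r^{-3}\rho^2\mathfrak{L}_{s,\lambda,\kappa}$ has principal part (in $\partial_r$ and $\snabla_{\s^2}$) equal to $\Delta\partial_r^2+\slashed{\Delta}_{\s^2}+2a\partial_{\varphi_*}\partial_r$, whose symbol is negative definite precisely when the cross term $2a\partial_{\varphi_*}\partial_r$ may be treated as lower order: either when $\Delta>a^2$ throughout the region (which holds once $|a|/M$ is small, since then $r_+\approx 2M$ and $\Delta>a^2\iff r>2M$), or when $\partial_{\varphi_*}$ reduces to a bounded multiplier (the fixed-$m$ case). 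In either case $\Delta\partial_r^2+\slashed{\Delta}_{\s^2}$ is elliptic with degeneracy confined to $\{r=r_+\}$, where $\Delta$ vanishes; this degeneracy accounts for the $\Delta$- and $\Delta^2$-weights on the left of \eqref{eq:redshifthopluselliptic}.

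First I would re-run the multiplier argument of Proposition~\ref{prop:redshift} and Corollary~\ref{cor:redshiftho}, now with the exponent $\delta=\tfrac23(1+2N_++2\kappa_+^{-1}\min\{\re s,0\})$ and starting from the higher-order near-horizon identity \eqref{eq:hoeqnearhor}. Because $r-M\ge r_+-M=\sqrt{M^2-a^2}>0$ on the whole region, this produces \emph{non-degenerate} control of $\sum_{k\le N_+}R_{\infty}^{2k}\int(r-M)r^\delta|\snabla_{\s^2}^m\partial_r^{k+1}(r^{-1}\hpsi)|^2$ and of the first-power-of-$\lambda$ terms $\int r^{-3+\delta}\big[|\snabla_{\s^2}^{m+1}\partial_r^k\hpsi|^2+(\lambda\pi_{\leq l}-k(k+1))|\snabla_{\s^2}^m\partial_r^k\hpsi|^2\big]$ by the boxed region-$[\tfrac13R_{\infty},\tfrac12R_{\infty}]$ term and the right-hand $\mathfrak{L}_{s,\lambda,\kappa}$ term; this step is essentially Corollary~\ref{cor:redshiftho} verbatim.

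Then I would fix $k\le N_+$, set $\hphi=r^{-1}\hpsi$ and $g:=\slashed{\Delta}_{\s^2}^m\partial_r^k\hphi$, and rewrite \eqref{eq:hoeqnearhor} as
\[
\partial_r^k\slashed{\Delta}_{\s^2}^m(r^{-3}\rho^2\mathfrak{L}_{s,\lambda,\kappa}\hpsi)=\Delta\partial_r^2 g+\big(\slashed{\Delta}_{\s^2}-(\lambda\pi_{\leq l}-k(k+1))\big)g+\mathcal{E}_k,
\]
where $\mathcal{E}_k$ gathers all terms of the form (bounded coefficient)$\cdot\partial_r^j g$ with $j\le1$ --- including the cross term $2a\partial_{\varphi_*}\partial_r g$ --- together with the lower-order pieces $\partial_r^j(\cdots)$, $j<k$. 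Multiplying by $\chi_{r\le\frac13R_{\infty}}r^{\delta-1}$, pairing in $L^2(\s^2)$ with $\overline{T_k}$, $T_k:=\big(\slashed{\Delta}_{\s^2}-(\lambda\pi_{\leq l}-k(k+1))\big)g$, and integrating by parts once in $r$ (and then in $\s^2$) yields a good term $\int\chi_{r\le\frac13R_{\infty}}r^{\delta-1}\Delta|\snabla_{\s^2}\partial_r g|^2$, a boundary contribution at $r=r_+$ that vanishes since $\Delta(r_+)=0$, $\tfrac{d\chi}{dr}$-supported terms landing in the boxed region, and the square $\|T_k\|_{L^2(\s^2)}^2$. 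Decomposing $g$ into spherical-harmonic modes gives $T_k=\sum_\ell(-\mu_{k,\ell})g_\ell$ with $\mu_{k,\ell}\gtrsim\max\{\lambda,\ell(\ell+1)\}>0$ once $l$ (hence $\lambda$) is large relative to $N_+(N_++1)$, so $\|T_k\|^2$ controls both the two-extra-angular-derivative term $\int r^{-3+\delta}|\snabla_{\s^2}^{m+2}\partial_r^k\hpsi|^2$ and the squared-$\lambda$ term $\int r^{-3+\delta}(\lambda\pi_{\leq l}-k(k+1))^2|\snabla_{\s^2}^m\partial_r^k\hpsi|^2$ on the left of \eqref{eq:redshifthopluselliptic}. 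Re-solving the displayed identity for $\Delta\partial_r^2 g$ and squaring against $r^{\delta-1}$ then produces the remaining $\Delta^2 r^{\delta-1}|\snabla_{\s^2}^m\partial_r^{k+2}(r^{-1}\hpsi)|^2$ term from quantities already controlled. Summing over $k\le N_+$ with weights $R_{\infty}^{2k}$, adding the red-shift estimate above, and converting between $\hphi$ and $\hpsi=r\hphi$ (which only generates lower-order-in-$\partial_r$ terms with favourable weights) gives \eqref{eq:redshifthopluselliptic}.

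The hard part will be the error pairing $\langle\mathcal{E}_k,\,\chi_{r\le\frac13R_{\infty}}r^{\delta-1}T_k\rangle$. The second-order good term that would naturally absorb error terms containing $\partial_r g$ is $\int\Delta|\snabla_{\s^2}\partial_r g|^2$, weighted by the \emph{degenerate} factor $\Delta$ and hence useless near $r=r_+$; the absorption must instead pass through the non-degenerate quantity $\int(r-M)r^\delta|\partial_r g|^2$ from the red-shift step, by Young's inequality ($\epsilon\|T_k\|^2$ into the good term, $C_\epsilon\int|\partial_r g|^2$ into the red-shift term, after taking $\lambda$ large), with finitely many Hardy inequalities handling the $j<k$ pieces exactly as in Corollary~\ref{cor:redshiftho}. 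The genuinely dangerous term is $2a\partial_{\varphi_*}\partial_r g$: $\partial_{\varphi_*}$ commutes with $\slashed{\Delta}_{\s^2}$ and $\pi_{\leq l}$ and can be moved onto the other factor, but in the general-$a$ case this still leaves an unabsorbable cross term --- this is exactly the ergoregion obstruction to an elliptic estimate on $r_+<r<2M$. The hypothesis $|a|\ll M$ (so that the non-elliptic interval $r_+<r<2M$ is empty) or the restriction to $|m|\le m_0$ (so that $\partial_{\varphi_*}$ is a bounded multiplier) is precisely what defeats it.
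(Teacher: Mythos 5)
Your strategy coincides with the paper's: the paper proves \eqref{eq:redshifthopluselliptic} by complementing the higher-order red-shift estimates of Corollary \ref{cor:redshiftho} with a standard degenerate elliptic estimate, obtained by isolating the principal part $r\Delta\partial_r^2\hphi+\slashed{\Delta}_{\s^2}\hphi$ in \eqref{eq:operatorbounded} (moving the first-order, cross and frequency-shift terms to the other side), squaring against a cut-off weight and integrating by parts the mixed term; your version, which pairs the commuted identity \eqref{eq:hoeqnearhor} against the angular factor $T_k$ and then recovers $\Delta^2|\partial_r^{k+2}(r^{-1}\hpsi)|^2$ by re-solving, is the same estimate in slightly different bookkeeping, and your treatment of the remaining errors (Young's inequality, largeness of $\lambda$, Hardy inequalities, absorption of the boxed transition-region terms) mirrors the paper. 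You also correctly identify the cross term $2ar\partial_r\partial_{\varphi_*}$ as the sole obstruction and the two hypotheses of the proposition as the way past it, exactly as in the paper, which simply absorbs $\chi r^2a^2|\partial_r\partial_{\varphi_*}\hphi|^2$ into the remaining terms for $|a|/M$ small or for bounded azimuthal number.

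One concrete point in your write-up is wrong and should be repaired: you claim that for $|a|\ll M$ one has $\Delta>a^2$ throughout the region, equivalently that ``the non-elliptic interval $r_+<r<2M$ is empty''. Since $\Delta=r^2-2Mr+a^2$, the inequality $\Delta>a^2$ holds precisely for $r>2M$, while $r_+=M+\sqrt{M^2-a^2}<2M$ for every $a\neq 0$; the interval $(r_+,2M)$ is therefore never empty, only of width $M-\sqrt{M^2-a^2}\approx a^2/(2M)$. Consequently the small-$a$ hypothesis does not restore pointwise negativity of the principal symbol up to the horizon, and an absorption of the cross term that invokes the $\Delta$-weighted good term $\int \Delta\,|\snabla_{\s^2}\partial_r g|^2$ alone would fail on this collar around $r=r_+$ (note that in the red-shift step of Proposition \ref{prop:redshift} the cross term is harmless for a different reason: paired with $\overline{\partial_r\hphi}$ it is a total $\varphi_*$-derivative). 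The operative mechanism in the small-$|a|$ case is smallness of the coefficient $a^2$ relative to the non-degenerate terms already present in the combined red-shift/elliptic estimate (or boundedness of $\partial_{\varphi_*}$ in the fixed-$m$ case), which is how the paper phrases the absorption; your argument goes through once the appeal to an ``empty'' non-elliptic region is replaced by this smallness absorption.
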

\begin{proof}
We complement the estimates in the proof of Proposition \ref{cor:redshiftho} with a standard elliptic estimate. In the $N_+=0$ case, we integrate
\begin{equation*}
\chi |r^{-2}\rho^{2}\mathfrak{L}_{s,\lambda,\kappa}\hphi+\lambda \pi_{\leq l}\hphi-sr(4 Mr-2(1-\mathbbm{h})\Delta)\partial_r\hphi-r\frac{d\Delta}{dr}\partial_r\hphi-2ar\partial_r\partial_{\varphi_*} \hphi|^2=\chi |r\Delta \partial_r^2\hphi+\slashed{\Delta}_{\s^2}\hphi|^2
\end{equation*}
and integrate by parts the mixed term arising from the square norm on the right-hand side.

We moreover use that the integral of $\chi r^2 a^2|\partial_r\partial_{\varphi_*} \hphi|^2$ can be absorbed into the remaining terms of $|a|$ sufficiently small or for $\hphi$ supported on bounded set of azimuthal modes.
\end{proof}

\subsection{Weighted elliptic estimates in (B)}
Here we will use that
\begin{equation*}
R_{\infty}\gg M,
\end{equation*}
so in the region $\varrho\geq \frac{1}{4}R_{\infty}$, the operator $\mathfrak{L}_{s,\lambda,\kappa}$ is suitably close to the analogous operator in Minkowski (i.e. the $M=0$ case).

To make the elliptic estimates compatible with the Gevrey estimates in \S \ref{sec:gevreylo}, we will need to add a sufficiently degenerate weight to the elliptic estimates: $(\lf+1)^{-w_p}$, where $w_p: [0,r_+^{-1}]\to \R$ was defined in \eqref{eq:defw}. Recall, 
\begin{equation*}
w_p(x)=\begin{cases}
 p  &x\leq  \frac{1 }{ 4R_{\infty}},\\
0  & x\geq \frac{1}{3R_{\infty}}.
\end{cases}
\end{equation*}

We make the following observations regarding $w$:
\begin{itemize}
\item We have that $w\geq 0$ and $w\equiv 0$ in the region $\varrho\leq 3R_{\infty}$ where $\mathfrak{L}_{s,\lambda,\kappa}$ does not agree with the corresponding operator in Minkowski. 
\item By employing standard cut-off functions, we can define $w$ such that
\begin{equation}
\label{eq:derw}
\left|\left|\frac{dw_p}{dx}\right|\right|_{L^{\infty}[0,r_+^{-1}]}\leq C p R_{\infty},
\end{equation}
for some numerical constant $C>0$.
\end{itemize}
From \eqref{eq:operatorgeneral}, it follows that we can split
\begin{align*}
	\mathfrak{L}_{s,\lambda,\kappa}\hpsi=&\:\partial_x(x(x+\kappa \chi_{\varrho \geq R_{\infty}})\partial_x \hpsi)+(\slashed{\Delta}_{\s^2,\vartheta}-\lambda \pi_{\leq l}-\chi_{\varrho \geq R_{\infty}}S_{\lambda} \pi_{\leq l}) \hpsi+\mathfrak{R}_{s,\lambda},\ \textnormal{with}\\
\mathfrak{R}_{s,\lambda}\hpsi:=&\:	2s\partial_x \hpsi+\chi_{\varrho\leq R_{\infty}}(\mathcal{K}_s^{\infty}\hpsi+M x \mathcal{B}_s^{\infty}\hpsi)-\chi_{\varrho\leq \frac{1}{2}R_{\infty}}\mathcal{K}_s^{+}\hpsi.
\end{align*}
We will treat $\mathfrak{R}_{s,\lambda}\hpsi$ as an inhomogeneous term that can be absorbed, because it will come with additional factors of $x$ compared to the terms on the left-hand side of the elliptic estimates and $x$ will be small in the regions under consideration. Note that $\mathfrak{R}_{s,\lambda}\hpsi-2s\partial_x$ contains all deviations from corresponding operator in Minkowski. This will follow from the following lemma:
\begin{lemma}
\label{lm:estresttermelliptic}
Let $k\in \N_0$. Then there exists a numerical constant $C=C(k)>0$, such that
\begin{multline*}
	\int_{\s^2}|\partial_x^k\mathfrak{R}_{s,\lambda}\hpsi|^2\,d\upsigma\leq C\int_{\s^2}\Big[|s|^2|\partial_x^{k+1}\hpsi|^2+M^2\chi_{\varrho\leq R_{\infty}}\Big(x^6|\partial_x^{k+2}\hpsi|^2+x^4|\snabla_{\s^2} \partial_x^{k+1}\hpsi|^2+x^2|\slashed{\Delta}_{\s^2}\partial_x^k\hpsi|^2\\
	+x^2|\partial_{x}^{k+1}\hpsi|^2+|s|^2|\snabla_{\s^2}\partial_x^k\hpsi|^2+(|s|^2+x^2)|\partial_{x}^{k}\hpsi|^2\Big)\Big]\,d\upsigma.
\end{multline*}
\end{lemma}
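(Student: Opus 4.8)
The plan is to estimate $\partial_x^k\mathfrak{R}_{s,\lambda}\hpsi$ term by term, starting from the explicit expression for $\mathfrak{R}_{s,\lambda}$ coming from \eqref{eq:operatorgeneral}. Write $\mathfrak{R}_{s,\lambda}\hpsi=2s\,\partial_x\hpsi+\mathcal{E}_s\hpsi$ with $\mathcal{E}_s\hpsi:=\chi_{\varrho< R_\infty}(\mathcal{K}_s^\infty\hpsi+Mx\mathcal{B}_s^\infty\hpsi)-\chi_{\varrho\le\frac12 R_\infty}\mathcal{K}_s^+\hpsi$, which is supported in $\{\varrho\le R_\infty\}$. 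The first summand is harmless: $\partial_x^k(2s\,\partial_x\hpsi)=2s\,\partial_x^{k+1}\hpsi$, which contributes exactly the term $|s|^2|\partial_x^{k+1}\hpsi|^2$ on the right-hand side, with no cutoff. It therefore remains to control $\partial_x^k\mathcal{E}_s\hpsi$.

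Next I would record the structure of $\mathcal{E}_s$. Inspecting $\mathcal{B}_s^\infty$, $\mathcal{K}_s^\infty$ and $\mathcal{K}_s^+$ (see \S\ref{sec:laplacetransfop}), $\mathcal{E}_s$ is a sum of finitely many terms of the schematic form $c(\varrho,\vartheta)\,s^{p}\,\mathcal{D}\hpsi$ with $0\le p\le 2$ and $\mathcal{D}\in\{\mathbf{1},\ \sin\vartheta\,\partial_\vartheta,\ (\sin\vartheta\,\partial_\vartheta)^2,\ \partial_x,\ \partial_x^2,\ \sin\vartheta\,\partial_\vartheta\partial_x\}$, where on the support of $\mathcal{E}_s$ each coefficient $c$ is smooth and carries the appropriate nonnegative powers of $Mx$ dictated by \eqref{eq:waveeqxcoord} and by the $\{\varrho\le\frac12 R_\infty\}$-form of $L_s$ (for instance $Mx\cdot\mathcal{B}_s^\infty$ produces, at top order, the combinations $Mx^3 b_{xx}\,\partial_x^2$, $Mx^2 b_x\,\partial_x$, $Ma^4x^5 b_{\vartheta\vartheta}(\sin\vartheta\partial_\vartheta)^2$, etc., matching after squaring the six families $x^6|\partial_x^{k+2}\hpsi|^2$, $x^4|\snabla_{\s^2}\partial_x^{k+1}\hpsi|^2$, $x^2|\slashed{\Delta}_{\s^2}\partial_x^k\hpsi|^2$, $x^2|\partial_x^{k+1}\hpsi|^2$, $|s|^2|\snabla_{\s^2}\partial_x^k\hpsi|^2$, $(|s|^2+x^2)|\partial_x^k\hpsi|^2$). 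The essential analytic input is that all coefficients appearing in $\mathcal{E}_s$ are smooth on the bounded region $\{\varrho\le R_\infty\}$ with $x$- and $\vartheta$-derivatives of every order bounded by a constant depending only on $k$ (and the fixed parameters $M,a,R_\infty$): for the $b_{\star\star},b_\star,b$ this is Lemma~\ref{lm:operatorcoeffb}, and for the coefficients of $\mathcal{K}_s^+$ it follows from smoothness of $\mathbbm{h}$ and of the coordinate change \eqref{eq:coordtrafo3}--\eqref{eq:coordtrafo4}.

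Finally I would apply the Leibniz rule to $\partial_x^k\bigl(\chi\, c\, s^{p}\,\mathcal{D}\hpsi\bigr)$: the term in which all $k$ derivatives land on $\mathcal{D}\hpsi$ is bounded, using $|c|\le C$, $Mx\le C$, $x\le C$, $|a|\le C$ to redistribute powers, together with the elementary $\mathbb{S}^2$-elliptic bounds $\|\sin\vartheta\,\partial_\vartheta f\|_{L^2(\s^2)}\lesssim\|\snabla_{\s^2}f\|_{L^2(\s^2)}$ and $\|(\sin\vartheta\,\partial_\vartheta)^2 f\|_{L^2(\s^2)}\lesssim\|\slashed{\Delta}_{\s^2}f\|_{L^2(\s^2)}$, by the corresponding right-hand side term; terms in which at least one derivative falls on $\chi\,c\, x^{j}$ are of the same form with the $x$-power possibly lowered and the $\mathcal{D}$-order unchanged, hence — since $x$ is bounded on the support, so a higher power of $x$ may always be traded for a lower one up to a constant — are again dominated by the listed families. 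Summing the finitely many terms, taking $\|\cdot\|_{L^2(\s^2)}$ and using the triangle inequality yields the claim, with $C=C(k)$ absorbing all constants. The only real work is the bookkeeping in matching the powers of $x$, $s$ and the derivative orders produced by $\mathcal{B}_s^\infty$, $\mathcal{K}_s^\infty$, $\mathcal{K}_s^+$ against the six families on the right-hand side; there is no analytic obstacle, since nothing beyond boundedness of finitely many coefficient derivatives is used.
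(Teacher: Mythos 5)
Your proposal is correct and follows essentially the same route as the paper's own (two-sentence) proof: split off the $2s\partial_x$ part, note that the remainder is supported in $\{\varrho\le R_\infty\}$ and consists of finitely many terms whose top-order derivatives carry extra powers of $x$ relative to the principal part of $\mathfrak{L}_{s,\lambda,\kappa}$, and then do the Leibniz/coefficient-boundedness bookkeeping. You in fact supply more of the term-by-term matching than the paper does, so there is nothing further to add.
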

\begin{proof}
When $\varrho> R_{\infty}$, we simply have that
\begin{equation*}
	\mathfrak{R}_{s,\lambda}\hpsi=2s\partial_x\hpsi,
\end{equation*}
so the estimate is immediate. When $\varrho\leq R_{\infty}$, we observe that the top-order derivative terms in $x \mathcal{B}_s^{\infty}$ and $\mathcal{K}_s^{+}$ have additional powers of $x$ compared to the top-order derivative terms in the remaining terms of $\mathfrak{L}_{s,\lambda,\kappa}$.
\end{proof}
\begin{proposition}
\label{prop:ellipticaway}
Let $0<\delta\leq 1$ and assume the following relations:
\begin{align*}
0<\eta<& \: \frac{1}{4},\\
\kappa\ll &\: \eta R_{\infty}^{-1},\\
M^{-1}R_{\infty}\gg &\: 1,\\
\lambda\gg &\: 1+|s|^2\eta^{-2}R_{\infty}^{2}+p.
\end{align*}
Then, there exists a numerical constant $C>0$ such that:
\begin{multline}
\label{eq:ellipticaway}
\sum_{\ell\in \N_0}\int_{\eta R_{\infty}^{-1}}^{3R_{\infty}^{-1}} x^{1-\delta}(\lf+1)^{-w_p}\Bigg(|\partial_x(x(x+\kappa \chi_{\varrho \geq R_{\infty}})\partial_x\hpsi_{\ell})|^2+|(\slashed{\Delta}_{\s^2}-\lambda\pi_{\leq l}-\chi_{\varrho \geq R_{\infty}}S_{\lambda}\pi_{\leq l})\hpsi_{\ell}|^2\\
+2x(x+\kappa \chi_{\varrho \geq R_{\infty}})\left[|\snabla_{\s^2}\partial_x \hpsi_{\ell}|^2+\lambda \pi_{\leq l}|\partial_x\hpsi_{\ell}|^2+\chi_{\varrho \geq R_{\infty}}|\sqrt{S_{\lambda}} \pi_{\leq l}\partial_x\hpsi_{\ell}|^2\right]\Bigg)\,d\upsigma dx\\
\leq \boxed{C\int_{\frac{1}{4}R_{\infty}}^{\frac{1}{3}R_{\infty}}\int_{\s^2} r^{-2+\delta}\left |(\slashed{\Delta}_{\s^2}-\lambda  \pi_{\leq l})\hpsi\right|\cdot\left|\partial_{r}\hpsi\right|\,d\upsigma dr}\\
+\boxed{C(\lf+1)^{-p} \sum_{\ell\in \N_0}\int_{\s^2}x^{2-\delta}(x+\kappa) \lf(\lf+1)|\hpsi_{\ell}|\cdot|\partial_x\hpsi_{\ell}|\,d\upsigma\Big|_{x=\eta R_{\infty}^{-1}}}\\
+C\sum_{\ell\in \N_0}\int_{\eta R_{\infty}^{-1}}^{4 R_{\infty}^{-1}} x^{1-\delta} (\lf+1)^{-w_p}|(\mathfrak{L}_{s,\lambda,\kappa}\hpsi)_{\ell}|^2\,d\upsigma dx.
\end{multline}
\end{proposition}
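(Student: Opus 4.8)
The plan is to derive \eqref{eq:ellipticaway} as a genuinely \emph{elliptic} (coercive) estimate in the region (B), exploiting that there $\mathfrak{L}_{s,\lambda,\kappa}$ differs from the model Minkowski operator only by terms carrying extra powers of $x$ (controlled by Lemma \ref{lm:estresttermelliptic}), and that for $\lambda\gg 1+|s|^2\eta^{-2}R_\infty^2+p$ the angular term dominates. First I would decompose $\mathfrak{L}_{s,\lambda,\kappa}\hpsi=\partial_x(x(x+\kappa\chi_{\varrho\geq R_\infty})\partial_x\hpsi)+(\slashed{\Delta}_{\s^2}-\lambda\pi_{\leq l}-\chi_{\varrho\geq R_\infty}S_\lambda\pi_{\leq l})\hpsi+\mathfrak{R}_{s,\lambda}\hpsi$ as in the text, project onto spherical harmonics $\hpsi_\ell$, and write the ``principal'' part as $\mathcal{P}\hpsi_\ell:=\partial_x(x(x+\kappa\chi_{\varrho\geq R_\infty})\partial_x\hpsi_\ell)+(\slashed{\Delta}_{\s^2}-\lambda\pi_{\leq l}-\chi_{\varrho\geq R_\infty}S_\lambda\pi_{\leq l})\hpsi_\ell$, so that $\mathcal{P}\hpsi_\ell=(\mathfrak{L}_{s,\lambda,\kappa}\hpsi)_\ell-(\mathfrak{R}_{s,\lambda}\hpsi)_\ell$.

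The core computation is to integrate $x^{1-\delta}(\lf+1)^{-w_p}|\mathcal{P}\hpsi_\ell|^2$ over $[\eta R_\infty^{-1},3R_\infty^{-1}]\times\s^2$ (or up to $4R_\infty^{-1}$ where needed) and expand the square. The diagonal terms reproduce the first line of the left-hand side of \eqref{eq:ellipticaway}. The cross term $2\,\re\big[x^{1-\delta}(\lf+1)^{-w_p}\overline{\partial_x(x(x+\kappa\chi_{\varrho\geq R_\infty})\partial_x\hpsi_\ell)}\cdot(\slashed{\Delta}_{\s^2}-\lambda\pi_{\leq l}-\chi_{\varrho\geq R_\infty}S_\lambda\pi_{\leq l})\hpsi_\ell\big]$ is integrated by parts in $x$: moving the $\partial_x$ off the first factor produces the ``gradient-squared'' terms $2x(x+\kappa\chi_{\varrho\geq R_\infty})\big(|\snabla_{\s^2}\partial_x\hpsi_\ell|^2+\lambda\pi_{\leq l}|\partial_x\hpsi_\ell|^2+\chi_{\varrho\geq R_\infty}|\sqrt{S_\lambda}\pi_{\leq l}\partial_x\hpsi_\ell|^2\big)$ with a good sign (after also integrating by parts $\slashed{\Delta}_{\s^2}$ and $S_\lambda$ on $\s^2$), a boundary term at $x=\eta R_\infty^{-1}$ and at $x=3R_\infty^{-1}$, plus lower-order terms generated by $\partial_x$ hitting $x^{1-\delta}(\lf+1)^{-w_p}$ and the coefficient $x(x+\kappa\chi_{\varrho\geq R_\infty})$. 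The boundary term at $x=\eta R_\infty^{-1}$ is the second boxed term on the right (converting to the $r$-variable and using $x=1/\varrho$, $\partial_x=-\varrho^2\partial_\varrho$, and that $w_p=p$ there so $(\lf+1)^{-w_p}=(\lf+1)^{-p}$); the boundary term at $x=3R_\infty^{-1}$ lies in the transition region $\frac14 R_\infty\leq r\leq\frac13 R_\infty$ and, after an integration by parts bringing it into the bulk of $[\tfrac14 R_\infty^{-1},\tfrac13 R_\infty^{-1}]$ where $w_p$ is supported away from its endpoints, is absorbed into the first boxed term (using $\lambda\gg p$ so that the $(\lf+1)^{-w_p}$ weight and its derivative, bounded via \eqref{eq:derw}, cost at most a constant). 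The $|\mathcal{P}\hpsi_\ell|^2=|(\mathfrak{L}_{s,\lambda,\kappa}\hpsi)_\ell-(\mathfrak{R}_{s,\lambda}\hpsi)_\ell|^2$ splits by Young into the last term on the right plus $C\int x^{1-\delta}(\lf+1)^{-w_p}|(\mathfrak{R}_{s,\lambda}\hpsi)_\ell|^2$, which by Lemma \ref{lm:estresttermelliptic} is bounded by $C\int x^{1-\delta}(\lf+1)^{-w_p}\big(|s|^2|\partial_x\hpsi_\ell|^2+M^2\chi_{\varrho\leq R_\infty}(x^2|\partial_x\hpsi_\ell|^2+x^2|\slashed{\Delta}_{\s^2}\hpsi_\ell|^2+\dots)\big)$; here the crucial point is that every such term carries either a factor $|s|^2$ (small against $\lambda\eta^2 R_\infty^{-2}$ since $\lambda\gg|s|^2\eta^{-2}R_\infty^2$ and $x\geq\eta R_\infty^{-1}$) or a factor $M^2x^2$ with $M^{-1}R_\infty\gg1$, so all of them are absorbed into the left-hand side.

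The remaining lower-order terms from the integration by parts have the schematic form $x^{-\delta}(x+\kappa\chi_{\varrho\geq R_\infty})(\lf+1)^{-w_p}\big(|\snabla_{\s^2}\hpsi_\ell|^2+\lambda\pi_{\leq l}|\hpsi_\ell|^2+\chi_{\varrho\geq R_\infty}|\sqrt{S_\lambda}\pi_{\leq l}\hpsi_\ell|^2\big)$ together with $\frac{dw_p}{dx}$-terms; using \eqref{eq:derw}, $\kappa\ll\eta R_\infty^{-1}$, $0<\delta\leq1$, and a Hardy inequality on $[\eta R_\infty^{-1},4R_\infty^{-1}]$ (picking up only a boundary contribution at $x=\eta R_\infty^{-1}$, which merges with the second boxed term, and at the far end in the transition region, which merges with the first boxed term), these are controlled by a small multiple of $\int x(x+\kappa\chi_{\varrho\geq R_\infty})(\lf+1)^{-w_p}\lambda\pi_{\leq l}|\partial_x\hpsi_\ell|^2$ plus the boxed terms — absorbable for $\lambda$ large. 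Finally one sums over $\ell\in\N_0$; orthogonality of the spherical harmonics makes the sum exact and the constants $\ell$-independent. The main obstacle I anticipate is bookkeeping the transition region $3R_\infty^{-1}\leq x\leq 4R_\infty^{-1}$ (i.e. $\tfrac14 R_\infty\leq r\leq\tfrac13 R_\infty$), where the cut-off $w_p$ and the cut-offs $\chi_{\varrho\geq R_\infty}$, $\chi_{\varrho\leq\frac12 R_\infty}$ vary: one must integrate by parts there without generating uncontrolled $\frac{dw_p}{dx}$ or $\frac{d\chi}{dx}$ contributions, which is exactly why the right-hand side carries the last (un-boxed) integral extending to $4R_\infty^{-1}$ and the first boxed integral over $[\tfrac14 R_\infty,\tfrac13 R_\infty]$ with a full angular-Laplacian weight — the estimate is engineered so that those regions are treated as inhomogeneities to be coupled later with the red-shift estimate of Corollary \ref{cor:redshiftho}.
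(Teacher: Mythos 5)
Your proposal follows essentially the same route as the paper's proof: the same splitting into the principal part plus $\mathfrak{R}_{s,\lambda}$, the same weighted square with integration by parts of the cross term producing the signed gradient terms, absorption of the weight-derivative and $\mathfrak{R}_{s,\lambda}$ contributions via $\lambda\gg 1+|s|^2\eta^{-2}R_{\infty}^{2}+p$ together with Lemma \ref{lm:estresttermelliptic}, and identification of the two boxed terms with the boundary contribution at $x=\eta R_{\infty}^{-1}$ and the transition-region contribution. The only cosmetic difference is that the paper implements the transition region with a smooth cut-off $\chi_{\varrho\geq \frac{1}{3}R_{\infty}}$ supported away from $\varrho=\frac14 R_{\infty}$, so the outer boundary term never appears and the first boxed term arises directly as the bulk term $J_5$, whereas you integrate over a sharp interval and convert the outer boundary term afterwards; both are fine.
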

\begin{proof}
We split:
\begin{equation*}
\partial_x(x(x+\kappa \chi_{\varrho\geq R_{\infty}})\partial_x\hpsi)+(\slashed{\Delta}_{\s^2}-\lambda\pi_{\leq l}-\chi_{\varrho \geq R_{\infty}}S_{\lambda}\pi_{\leq l})\hpsi=\mathfrak{L}_{s,\lambda,\kappa}\hpsi-\mathfrak{R}_{s,\lambda}\hpsi	
\end{equation*}
Then we consider the $\ell$-th spherical harmonics $\hpsi_{\ell}$, take the square norm of both sides, multiply both sides with the weight factor $x^q (\lf+1)^{-w_p}\chi_{\varrho \geq \frac{1}{3}R_{\infty}}$, with $q\in \R$ and $\chi_{\varrho \geq \frac{1}{3}R_{\infty}}$ a smooth cut-off function which equals 1 when $\varrho\geq \frac{1}{3}R_{\infty}$ and vanishes for $\varrho\leq \frac{1}{4}R_{\infty}$,  and we integrate over $\s^2$ to obtain:
\begin{multline*}
	\overbrace{\int_{\s^2}x^q (\lf+1)^{-w_p}\chi_{\varrho \geq \frac{1}{3}R_{\infty}}\left(|\partial_x(x(x+\kappa \chi_{\varrho \geq R_{\infty}})\partial_x\hpsi_{\ell})|^2+|(\slashed{\Delta}_{\s^2}-\lambda\pi_{\leq l}-\chi_{\varrho \geq R_{\infty}}S_{\lambda}\pi_{\leq l})\hpsi_{\ell}|^2\right)\,d\upsigma}^{=:J_0}\\
	+\int_{\s^2}2x^q (\lf+1)^{-w_p}\chi_{\varrho \geq \frac{1}{3}R_{\infty}}\re\left((\slashed{\Delta}_{\s^2}-\lambda \pi_{\leq l}-\chi_{\varrho \geq R_{\infty}}S_{\lambda} \pi_{\leq l})\overline{\hpsi}_{\ell}\cdot \partial_x(x(x+\kappa \chi_{\varrho \geq R_{\infty}})\partial_x\hpsi_{\ell})\right)\,d\upsigma\\
	=\int_{\s^2}x^q (\lf+1)^{-w_p}\chi_{\varrho \geq \frac{1}{3}R_{\infty}}|(\mathfrak{L}_{s,\lambda,\kappa}\hpsi)_{\ell}-(\mathfrak{R}_{s,\lambda}\hpsi	)_{\ell}|^2\,d\upsigma.
\end{multline*}
For the sake of convenience, we will restrict to $\ell\leq l$. The $\ell>l$ case proceeds entirely analogously, but without the $\lambda$ and $S_{\lambda}$ terms.

We integrate by parts on $\s^2$ to obtain:
\begin{multline*}
	\int_{\s^2}2x^q (\lf+1)^{-w_p} \chi_{\varrho\geq \frac{1}{3} R_{\infty}} \re\left((\slashed{\Delta}_{\s^2}-\lambda-\chi_{\varrho \geq R_{\infty}}S_{\lambda} )\overline{\hpsi}_{\ell}\cdot \partial_x(x(x+\kappa \chi_{\varrho \geq R_{\infty}})\partial_x\hpsi_{\ell}\right)\,d\upsigma\\
	=\overbrace{2\int_{\s^2}\partial_x\left[x^q(\lf+1)^{-w_p} \chi_{\varrho\geq \frac{1}{3} R_{\infty}} \re\left((\slashed{\Delta}_{\s^2}-\lambda-\chi_{\varrho \geq R_{\infty}}S_{\lambda})\overline{\hpsi}_{\ell}\cdot x(x+\kappa \chi_{\varrho \geq R_{\infty}})\partial_x\hpsi_{\ell}\right)\right]\,d\upsigma}^{=:J_1}\\
+\overbrace{2\int_{\s^2}x^q (\lf+1)^{-w_p}x(x+\kappa \chi_{\varrho \geq R_{\infty}}) \chi_{\varrho\geq \frac{1}{3} R_{\infty}} \left[|\snabla_{\s^2}\partial_x \hpsi_{\ell}|^2+\lambda|\partial_x\hpsi_{\ell}|^2+\chi_{\varrho \geq R_{\infty}}|\sqrt{S_{\lambda}}\partial_x\hpsi_{\ell}|^2\right]\,d\upsigma}^{=:J_2}\\
\overbrace{-2\int_{\s^2}\partial_x(x^q (\lf+1)^{-w_p}) x(x+\kappa \chi_{\varrho\geq R_{\infty}}) \chi_{\varrho\geq \frac{1}{3} R_{\infty}} \re\left((\slashed{\Delta}_{\s^2}-\lambda-\chi_{\varrho \geq R_{\infty}}S_{\lambda})\overline{\hpsi}_{\ell}\cdot\partial_x\hpsi_{\ell}\right)\,d\upsigma}^{=:J_3}\\
+\overbrace{2\int_{\s^2}x^q (\lf+1)^{-w_p} x(x+\kappa \chi_{\varrho\geq R_{\infty}})\partial_x(\chi_{\varrho\geq R_{\infty}}) \chi_{\varrho\geq \frac{1}{3} R_{\infty}} \re(S_{\lambda}\overline{\hpsi}_{\ell}\cdot\partial_x\hpsi_{\ell})\,d\upsigma}^{=:J_4}\\
-\overbrace{2\int_{\s^2}x^q (\lf+1)^{-w_p}x(x+\kappa \chi_{\varrho\geq R_{\infty}})\partial_x(\chi_{\varrho\geq \frac{1}{3} R_{\infty}})\re\left((\slashed{\Delta}_{\s^2}-\lambda-\chi_{\varrho \geq R_{\infty}}S_{\lambda})\overline{\hpsi}_{\ell}\cdot\partial_x\hpsi_{\ell}\right)\,d\upsigma}^{=:J_5}.
\end{multline*}
Note first that $J_2$ is non-negative definite. We will show that we can absorb $J_3$ and $J_4$ into $J_0$ and $J_2$ by taking $\lambda$ suitably large.

We apply Young's inequality together with \eqref{eq:derw} to obtain for $x\in [\eta R_{\infty}^{-1},4 R^{-1}_{\infty}]$:
\begin{multline*}
|J_3|\leq\\
 \int_{\s^2} \left(|q|x^{-1}+  \left|\frac{dw_p}{dx}\right|(\lf+1)^{-1}\right)x^{q+1}(x+\kappa\chi_{\varrho \geq R_{\infty}} ) (\lf+1)^{-w_p}(\ell(\ell+1)+\lambda+2l\ell\chi_{\varrho \geq R_{\infty}} )|\hpsi_{\ell}\|\partial_x\hpsi_{\ell}|\,d\upsigma\\
\leq \left(\frac{|q| }{ \lf x}+C pR_{\infty} (\lf+1)^{-1}\lf^{-1}\right) \lf  \int_{\s^2}x^{q+1}(x+\kappa\chi_{\varrho \geq R_{\infty}} )    (\lf+1)^{-w_p} (\ell(\ell+1)+\lambda+2l\ell\chi_{\varrho \geq R_{\infty}} )|\hpsi_{\ell}\|\partial_x\hpsi_{\ell}|\,d\upsigma\\
\leq \frac{1}{4} \int_{\s^2} x^q (\lf+1)^{-w_p}|(\slashed{\Delta}_{\s^2}-\lambda-\chi_{\varrho \geq R_{\infty}}S_{\lambda})\hpsi_{\ell}|^2\,d\upsigma\\
+  C\left(\frac{|q| }{ \lf x}+C pR_{\infty} (\lf+1)^{-1}\lf^{-1}\right)^2x(x+\kappa \chi_{\varrho \geq R_{\infty}}) \int_{\s^2}x^{q}(\lf+1)^{-w_p}x(x+\kappa \chi_{\varrho \geq R_{\infty}})   \lf^2 |\partial_x\hpsi_{\ell}|^2\,d\upsigma.
\end{multline*}
The first term on the very  right-hand side above can be absorbed into $J_0$. The second term on the right-hand side can be absorbed into $J_2$ by taking $ \kappa\ll \eta R_{\infty}^{-1}$ and $\lambda\gg q^2+p$ to obtain:
\begin{equation*}
 C\left(\frac{|q| }{ \lf x}+C pR_{\infty} \lf^{-2}\right)^2x(x+\kappa \chi_{\varrho \geq R_{\infty}})=C\left(|q|  \lf^{-1}+C px R_{\infty} \lf^{-2}\right)^2(1+\kappa x^{-1}\chi_{\varrho \geq R_{\infty}})\leq \frac{1}{4}.
\end{equation*}
and noting that
\begin{equation*}
\lf(\lf+1)=\ell(\ell+1)+l(l+1)+2l\ell\leq 2(\ell(\ell+1)+2l(l+1)).
\end{equation*}
We conclude that $|J_3|\leq \frac{1}{4}J_0+\frac{1}{4}J_2$.

To estimate $|J_4|$, we first note that there exists a constant $C>0$ such that $\left|\frac{d\chi_{\varrho \geq R_{\infty}}}{d\varrho}\right|\leq C R_{\infty}^{-1}$, so that
\begin{equation*}
\left|\frac{d\chi_{\varrho \geq R_{\infty}}}{dx}\right|\leq C R_{\infty}.
\end{equation*}
By repeating the estimates for $|J_3|$ and taking $\lambda\gg M^{-1}R_{\infty}$, we conclude that  $|J_4|\leq \frac{1}{4}J_0+\frac{1}{4}J_2$. We conclude that
\begin{equation*}
\sum_{j=0}^4J_j\geq \frac{1}{2}(J_0+J_2).
\end{equation*}

We can estimate $J_5$ by observing that $\partial_x\chi_{\varrho\geq \frac{1}{3}R_{\infty}}$ is supported in $\frac{1}{4}R_{\infty}\leq \varrho\leq \frac{1}{3}R_{\infty}$, where $\varrho=r$, $w(x)=0$ and there exists a numerical constant $C>0$ such that $|\partial_x\chi_{\varrho\geq \frac{1}{3}R_{\infty}}|\leq CR_{\infty}$. We obtain:
\begin{equation*}
|J_5|\leq \int_{\s^2} r^{-q+1}|(\slashed{\Delta}_{\s^2}-\lambda) \hpsi_{\ell}|\cdot|\partial_r\hpsi_{\ell}|\,d\upsigma
\end{equation*}
and $\supp J_5\subset \{\frac{1}{4}R_{\infty}\leq r\leq \frac{1}{3}R_{\infty}\}$.

We apply Young's inequality to estimate:
\begin{multline*}
\int_{\s^2}x^q (\lf+1)^{-w_p}|(\mathfrak{L}_{s,\lambda,\kappa}\hpsi)_{\ell}-(\mathfrak{R}_{s,\lambda}\hpsi)_{\ell}|^2\,d\upsigma\\
\leq 2\int_{\s^2}x^q (\lf+1)^{-w_p}|(\mathfrak{L}_{s,\lambda,\kappa}\hpsi)_{\ell}|^2\,d\upsigma+2\int_{\s^2}x^q(\lf+1)^{-w_p}|(\mathfrak{R}_{s,\lambda}\hpsi)_{\ell}|^2\,d\upsigma.
\end{multline*}

Note that $(\mathfrak{R}_{s,\lambda}\hpsi)_{\ell}$ does not need to agree with $(\mathfrak{R}_{s,\lambda}(\hpsi_{\ell}))$. However, since $w\equiv 0$ in $\{\varrho\leq  2R_{\infty}\}$ and $\mathfrak{R}_{s,\lambda}\hpsi=2s\partial_x \hpsi$ in $r\geq R_{\infty}$, we can conclude that:
\begin{equation*}
\sum_{\ell\in \N_0}\int_{\s^2}(\lf+1)^{-w_p}|(\mathfrak{R}_{s,\lambda}\hpsi)_{\ell}|^2\,d\upsigma=\begin{cases}\int_{\s^2}|\mathfrak{R}_{s,\lambda}\hpsi|^2\,d\upsigma\quad &\textnormal{if}\quad x\geq \frac{1}{2R_{\infty}},\\
\sum_{\ell\in \N_0}\int_{\s^2}(\lf+1)^{-w_p}|\mathfrak{R}_{s,\lambda}(\hpsi_{\ell})|^2\,d\upsigma\quad &\textnormal{if}\quad x<\frac{1}{2R_{\infty}}.
\end{cases}
\end{equation*}
By applying Lemma \ref{lm:estresttermelliptic}, we can take $MR_{\infty}^{-1}\ll 1$ and $\lambda\gg |s|^2\eta^{-2}R_{\infty}^2$ to obtain:
\begin{equation*}
2\sum_{\ell\in \N_0}\int_{\s^2}x^q (\lf+1)^{-w_p}|(\mathfrak{R}_{s,\lambda}\hpsi)_{\ell}|^2\,d\upsigma\leq \frac{1}{4}\sum_{\ell\in \N_0}(J_0+J_2).
\end{equation*}
We now integrate over $x$ to obtain:
\begin{multline*}
\frac{1}{4}\sum_{\ell\in \N_0}\int_{x(\eta^{-1})}^{x(\frac{1}{4}R_{\infty})} x^q (\lf+1)^{-w_p}\Bigg(|\partial_x(x(x+\kappa \chi_{\varrho \geq R_{\infty}})\partial_x\hpsi_{\ell})|^2+|(\slashed{\Delta}_{\s^2}-\lambda-\chi_{\varrho \geq R_{\infty}}S_{\lambda})\hpsi_{\ell}|^2\\
+2x(x+\kappa \chi_{\varrho \geq R_{\infty}})\left[|\snabla_{\s^2}\partial_x \hpsi_{\ell}|^2+\lambda|\partial_x\hpsi_{\ell}|^2+\chi_{\varrho \geq R_{\infty}}|\sqrt{S_{\lambda}}\partial_x\hpsi_{\ell}|^2\right]\Bigg)\,d\upsigma dx\\
\leq C\int_{\frac{1}{4}R_{\infty}}^{\frac{1}{3}R_{\infty}}\int_{\s^2} r^{-q-1}\left |(\slashed{\Delta}_{\s^2}-\lambda)\hpsi\right|\cdot\left|\partial_{r}\hpsi\right|\,d\upsigma dr\\
+2(\lf+1)^{-p} \sum_{\ell\in \N_0}\int_{\s^2}x^{q+1}(x+\kappa) \lf(\lf+1)|\hpsi_{\ell}|\cdot|\partial_x\hpsi_{\ell}|\,d\upsigma\Big|_{x=\eta R_{\infty}^{-1}}\\
+2\sum_{\ell\in \N_0}\int_{x(\eta^{-1})}^{x(\frac{1}{4}R_{\infty})} x^q (\lf+1)^{-w_p}|(\mathfrak{L}_{s,\lambda,\kappa}\hpsi)_{\ell}|^2\,d\upsigma dx,
\end{multline*}
and we conclude that \eqref{eq:ellipticaway} holds after taking $q=1-\delta$.
\end{proof}

We will also consider higher-order analogues of \eqref{eq:ellipticaway}. For this it is useful to define:
\begin{equation*}
\hpsi^{(n)}:=\partial_x^n\hpsi.
\end{equation*}
Then it can be easily verified that:
\begin{multline*}
\partial_x^n(\partial_x(x(x+\kappa \chi_{\varrho \geq R_{\infty}})\partial_x\hpsi)+(\slashed{\Delta}_{\s^2}-\lambda  \pi_{\leq l}-\chi_{\varrho \geq R_{\infty}}S_{\lambda}  \pi_{\leq l})\hpsi)=\partial_x(x(x+\kappa  \chi_{\varrho \geq R_{\infty}})\partial_x\hpsi^{(n)}\\
+(\slashed{\Delta}_{\s^2}-\lambda  \pi_{\leq l}-\chi_{\varrho \geq R_{\infty}}S_{\lambda}  \pi_{\leq l})\hpsi^{(n)}+2(k+1)\left(x+\frac{1}{2}\kappa  \chi_{\varrho \geq R_{\infty}}\right)\partial_x\hpsi^{(n)} \\
+\kappa \sum_{k=0}^{n+1}{n+1\choose k}\partial_x^{n+1-k}(x\chi_{\varrho \geq R_{\infty}})\partial_x\hpsi^{(k)}-\sum_{k=0}^{n}{n\choose k}\partial_x^{n-k}(\chi_{\varrho \geq R_{\infty}})S_{\lambda}\hpsi^{(k)}.
\end{multline*}
Define
\begin{multline*}
\mathfrak{R}^{(n)}_{s,\lambda}:=\partial_x^n	\mathfrak{R}_{s,\lambda}-2(k+1)\left(x+\frac{1}{2}\kappa  \chi_{\varrho \geq R_{\infty}}\right)\partial_x\hpsi^{(n)} -\kappa \sum_{k=0}^{n+1}{n+1\choose k}\partial_x^{n+1-k}(x\chi_{\varrho \geq R_{\infty}})\partial_x\hpsi^{(k)}\\
+\sum_{k=0}^{n}{n\choose k}\partial_x^{n-k}(\chi_{\varrho \geq R_{\infty}})\pi_{\leq l}S_{\lambda}\hpsi^{(k)}.
\end{multline*}
Then
\begin{equation*}
\partial_x^n(\mathfrak{L}_{s,\lambda,\kappa}	\hpsi)=\partial_x(x(x+\kappa  \chi_{\varrho \geq R_{\infty}})\partial_x\hpsi^{(n)})+(\slashed{\Delta}_{\s^2}-\lambda\pi_{\leq l}-\chi_{\varrho \geq R_{\infty}}S_{\lambda}\pi_{\leq l})\hpsi^{(n)}+\mathfrak{R}^{(n)}_{s,\lambda}.
\end{equation*}

We obtain the following analogue of Proposition \ref{eq:ellipticaway}, including additional commutation with vector fields $(x^2\partial_x)^k$.
\begin{corollary}
\label{cor:ellipticawayho}
	Let $N\in \N_0$, $0<\delta<1$ and $p\geq 0$ and assume the following smallness relations:
\begin{align*}
0<\eta<& \: \frac{1}{4},\\
\kappa\ll &\: \eta R_{\infty}^{-1},\\
M^{-1}R_{\infty}\gg &\: 1,\\
\lambda-N(N+1)\gg &\: 1+|s|^2\eta^{-2}R_{\infty}^{2}+p.
\end{align*}
Then, there exists a numerical constant $C=C(N)>0$ such that:
\begin{multline}
\label{eq:ellipticawayho}
\sum_{k=0}^N\sum_{\ell\in \N_0}\int_{x(\eta^{-1})}^{3R_{\infty}^{-1}} x^{1-\delta} (\lf+1)^{-w_p}\Bigg(|\partial_x(x(x+\kappa \chi_{\varrho \geq R_{\infty}})\partial_x(x^2\partial_x)^k\hpsi_{\ell})|^2+|(\slashed{\Delta}_{\s^2}-\lambda \pi_{\leq l}-\chi_{\varrho \geq R_{\infty}}S_{\lambda}\pi_{\leq l})(x^2\partial_x)^k\hpsi_{\ell}|^2\\
+2x(x+\kappa \chi_{\varrho \geq R_{\infty}})\left[|\snabla_{\s^2}\partial_x (x^2\partial_x)^k\hpsi_{\ell}|^2+\lambda \pi_{\leq l}|\partial_x(x^2\partial_x)^k\hpsi_{\ell}|^2+\chi_{\varrho \geq R_{\infty}}|\sqrt{S_{\lambda} \pi_{\leq l}}\partial_x(x^2\partial_x)^k\hpsi_{\ell}|^2\right]\Bigg)\,d\upsigma dx\\
\leq\sum_{k=0}^N \boxed{C\int_{\frac{1}{4}R_{\infty}}^{\frac{1}{3}R_{\infty}}\int_{\s^2} r^{-2+\delta}\left |(\slashed{\Delta}_{\s^2}-\lambda \pi_{\leq l})\partial_r^k\hpsi\right|\cdot\left|\partial_{r}^{k+1}\hpsi\right|\,d\upsigma dr}\\
+\boxed{C(\lf+1)^{-p} \sum_{\ell\in \N_0}\int_{\s^2}x^{2+4k-\delta}(x+\kappa) \lf(\lf+1)|\hpsi_{\ell}^{(k)}|\cdot|\partial_x\hpsi_{\ell}^{(k)}|\,d\upsigma\Big|_{x=\eta R_{\infty}^{-1}}}\\
+C\sum_{\ell\in \N_0}\int_{x(\eta^{-1})}^{4R_{\infty}^{-1}} x^{1-\delta} (\lf+1)^{-w_p}|(x^2\partial_x)^k(\mathfrak{L}_{s,\lambda,\kappa}\hpsi)_{\ell}|^2\,d\upsigma dx.
\end{multline}
\end{corollary}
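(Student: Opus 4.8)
The plan is to bootstrap from Proposition \ref{prop:ellipticaway}, which is exactly the $N=0$, zero-commutator case of \eqref{eq:ellipticawayho}, to the general statement by applying the fixed-frequency elliptic argument to the commuted quantities $(x^2\partial_x)^k\hpsi_\ell$. The key structural observation is the identity recorded just before the corollary: after applying $\partial_x^n$ (and, by the same token, the near-Minkowski vector field $x^2\partial_x=-\partial_\varrho$), the operator $\mathfrak{L}_{s,\lambda,\kappa}$ acting on the commuted function reproduces its own principal part $\partial_x(x(x+\kappa\chi_{\varrho\geq R_\infty})\partial_x(\cdot))+(\slashed{\Delta}_{\s^2}-\lambda\pi_{\leq l}-\chi_{\varrho\geq R_\infty}S_\lambda\pi_{\leq l})(\cdot)$ modulo the remainder $\mathfrak{R}^{(n)}_{s,\lambda}$, which by construction collects only lower-order-in-derivative terms together with the commutator terms involving $\partial_x$ of the cut-offs (supported in the transition region) and the extra first-order term $2(k+1)(x+\tfrac12\kappa\chi_{\varrho\geq R_\infty})\partial_x\hpsi^{(n)}$. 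So the scheme is: for each $k\leq N$, set $\Psi_k=(x^2\partial_x)^k\hpsi$, multiply the split equation $\mathfrak{L}_{s,\lambda,\kappa}$-image of $\Psi_k$ by the weight $x^q(\lf+1)^{-w_p}\chi_{\varrho\geq \frac13 R_\infty}$ with $q=1-\delta$, take the square norm on $\s^2$, integrate by parts on $\s^2$ exactly as in the proof of Proposition \ref{prop:ellipticaway}, and then integrate in $x$ over $[\eta R_\infty^{-1},4R_\infty^{-1}]$.

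The heart of the argument is the integration-by-parts identity producing the terms $J_0,\dots,J_5$ with $J_0,J_2$ non-negative and $J_3,J_4,J_5$ error terms, which I would run verbatim with $\hpsi_\ell$ replaced by $(\Psi_k)_\ell$. The new ingredients relative to the base case are: (i) the extra transport term $2(k+1)(x+\tfrac12\kappa\chi_{\varrho\geq R_\infty})\partial_x\hpsi^{(n)}$ inside $\mathfrak{R}^{(n)}_{s,\lambda}$, which is of the same type (one $\partial_x$) as the $2s\partial_x$ term already handled in Lemma \ref{lm:estresttermelliptic} and which, after Young's inequality, is absorbed into $J_0+J_2$ provided $\lambda-N(N+1)\gg 1$ so that the constant $(k+1)^2$ is dominated; (ii) the commutator terms $\kappa\sum{n+1\choose k}\partial_x^{n+1-k}(x\chi_{\varrho\geq R_\infty})\partial_x\hpsi^{(k)}$ and $\sum{n\choose k}\partial_x^{n-k}(\chi_{\varrho\geq R_\infty})S_\lambda\hpsi^{(k)}$, both supported in $\{\tfrac12 R_\infty\leq\varrho\leq R_\infty\}$ where $w_p\equiv 0$, hence controlled by the boxed transition-region integral; and (iii) the fact that commuting with $x^2\partial_x$ rather than $\partial_x$ changes the powers of $x$ in the boundary term at $x=\eta R_\infty^{-1}$, which is the origin of the $x^{2+4k-\delta}$ weight in the second boxed term — this is a routine bookkeeping exercise in chasing $(x^2\partial_x)^k$ through the weight $x^{1-\delta}$. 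The lower-order (in $k$) commuted quantities appearing in $\mathfrak{R}^{(n)}_{s,\lambda}$ are absorbed by an induction on $k$: when estimating the $k$-th quantity one has already closed the estimate for all $k'<k$, so those terms sit on the controlled side, with $\lambda$ taken large enough (the hypothesis $\lambda-N(N+1)\gg 1+|s|^2\eta^{-2}R_\infty^2+p$) to beat the $\binom{n}{k}$-type combinatorial constants.

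The main obstacle I anticipate is the careful treatment of the remainder $\mathfrak{R}^{(n)}_{s,\lambda}$ near $\varrho\sim R_\infty$: one must check that after applying $(x^2\partial_x)^k$ the deviations of $\mathfrak{L}_{s,\lambda,\kappa}$ from its Minkowski principal part — namely the $\chi_{\varrho\leq R_\infty}(\mathcal{K}_s^\infty+Mx\mathcal{B}_s^\infty)$ and $\chi_{\varrho\leq\frac12 R_\infty}\mathcal{K}_s^+$ pieces — still carry enough extra powers of $x$ (or are supported where $x\lesssim R_\infty^{-1}$) to be absorbed once $M^{-1}R_\infty\gg1$; here one invokes Lemma \ref{lm:operatorcoeffb}, which provides Gevrey-type bounds $\|\snabla_{\s^2}^m\partial_x^n b_{\star\star}\|_{L^\infty}\lesssim C B_1^n(a^2x_0^2 B_2)^m (n+m)!$ on the coefficients, so that the $k$ extra $\partial_x$'s falling on coefficients only cost a $k$-independent-in-structure but $\lambda$-beatable constant. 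A secondary subtlety is keeping track of which terms land in the first boxed term (transition-region integral at $\tfrac14 R_\infty\leq r\leq\tfrac13 R_\infty$, coming from $J_5$ and from the $\chi_{\varrho\geq\frac13R_\infty}$ cut-off derivative) versus the third (full $|(x^2\partial_x)^k(\mathfrak{L}_{s,\lambda,\kappa}\hpsi)_\ell|^2$ term on $[\eta R_\infty^{-1},4R_\infty^{-1}]$); since we only claim an estimate with these boxed terms on the right, no delicate absorption across regions is needed at this stage — the coupling to the red-shift estimates of \S\ref{sec:redshift} and the Gevrey estimates of \S\ref{sec:gevrey} is deferred to \S\ref{sec:fullycoupledestmates}. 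Finally, I would note that the $\ell>l$ harmonics are handled identically with $\lambda$, $S_\lambda$ set to zero, as in the proof of Proposition \ref{prop:ellipticaway}.
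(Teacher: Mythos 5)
Your proposal is correct and follows essentially the same route as the paper: the paper's proof is precisely to repeat the proof of Proposition \ref{prop:ellipticaway} for the commuted quantities (it uses $\hpsi^{(k)}=\partial_x^k\hpsi$ with the shifted weight $q=1-\delta+4k$, which is the same as your $(x^2\partial_x)^k\hpsi$ at fixed weight $q=1-\delta$ up to the lower-order commutators you already account for), summing over $k$ and absorbing everything coming from $\mathfrak{R}^{(n)}_{s,\lambda}$ into $J_0$ and $J_2$. One small correction to your item (ii): the commutator terms involving $\partial_x^{j}(\chi_{\varrho\geq R_{\infty}})$ are supported where that cut-off transitions, i.e.\ in $\{\tfrac12 R_{\infty}\leq\varrho\leq R_{\infty}\}$, which is \emph{not} covered by the first boxed term (that lives on $[\tfrac14 R_{\infty},\tfrac13 R_{\infty}]$, coming from $\partial_x\chi_{\varrho\geq\frac13 R_{\infty}}$); these terms are instead absorbed into $J_0+J_2$ by taking $\lambda$ large, exactly as $J_4$ is handled in the base case and as you already do for the transport term.
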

\begin{proof}
	We repeat the proof of Proposition \ref{prop:ellipticaway} with $\hpsi^{(k)}$ replacing $\hpsi$ and with $q=1-\delta+4k$, sum over $k$, and use that all the terms coming from $\mathfrak{R}^{(n)}_{s,\lambda}$ can be absorbed in the terms $J_0$ and $J_2$.
\end{proof}

\subsection{Combining the estimates in (A) and (B)}
In this section we combine the estimates from Corollary \ref{cor:redshiftho} and Corollary \ref{cor:ellipticawayho} by taking $\lambda$ suitably large.

\begin{corollary}
\label{cor:fullellipticest}
	Let $N_+\in \N_0$, $0<\delta<1$, $p\geq0$, and ${\re s}>-\kappa_+\left(\frac{1}{2}+N_+\right)$, and assume the following smallness relations:
\begin{align*}
0<\eta<& \: \frac{1}{4},\\
\kappa\ll &\: \eta R_{\infty}^{-1},\\
M^{-1}R_{\infty}\gg &\: 1,\\
\lambda-N_+(N_++1)\gg &\: p+N_+|s|^2(1+2N_++2\kappa_+^{-1}{\min\{\re s,0\}})^{-1}+(1+2N_++2\kappa_+^{-1}{\min\{\re s,0\}})^{-2}\\
&+|s|^2\eta^{-2}R_{\infty}^{2}+(M^{-1}R_{\infty})^{2(1+N_+)}.
\end{align*}
Then, there exists a constant $C=C(N_+)>0$ such that:
\begin{multline}
\label{eq:fullellipticest}
\sum_{k=0}^{N_+} \int_{r_+}^{\frac{1}{2}R_{\infty}} |\partial_r^{1+k}\snabla_{\s^2}\hpsi|^2+\lambda|\partial_r^{1+k}\hpsi|^2+|(\slashed{\Delta}_{\s^2}-\lambda  \pi_{\leq l})\hpsi|^2+\chi_{r\geq \frac{1}{3}R_{\infty}}|\partial_r^{2+k}\psi|^2\,d\upsigma dr\\
+\sum_{k=0}^{N_+}\sum_{\ell\in \N_0}\int_{\eta R_{\infty}^{-1}}^{x(\frac{1}{2}R_{\infty})} x^{1-\delta} (\lf+1)^{-w_p}\Bigg(|\partial_x(x(x+\kappa \chi_{\varrho \geq R_{\infty}})\partial_x(x^2\partial_x)^k\hpsi_{\ell})|^2+|(\slashed{\Delta}_{\s^2}-\lambda  \pi_{\leq l}-\chi_{\varrho \geq R_{\infty}}S_{\lambda} \pi_{\leq l})(x^2\partial_x)^k\hpsi_{\ell}|^2\\
+2x(x+\kappa \chi_{\varrho \geq R_{\infty}})\left[|\snabla_{\s^2}\partial_x (x^2\partial_x)^k\hpsi_{\ell}|^2+\lambda  \pi_{\leq l}|\partial_x(x^2\partial_x)^k\hpsi_{\ell}|^2+\chi_{\varrho \geq R_{\infty}}|\sqrt{S_{\lambda}  \pi_{\leq l}}\partial_x(x^2\partial_x)^k\hpsi_{\ell}|^2\right]\Bigg)\,d\upsigma dx\\
\leq \boxed{C(\lf+1)^{-p}\sum_{k=0}^{N_+} \sum_{\ell\in \N_0}\int_{\s^2}x^{2+4k-\delta}(x+\kappa) \lf(\lf+1)|\hpsi_{\ell}^{(k)}|\cdot|\partial_x\hpsi_{\ell}^{(k)}|\,d\upsigma\Big|_{x=\eta R_{\infty}^{-1}}}\\
+C_s \sum_{k=0}^{N_+} \int_{r_+}^{\frac{1}{2}R_{\infty}}  |\partial_r^k\mathfrak{L}_{s,\lambda,\kappa}\hpsi|^2+\chi_{r\leq \frac{1}{3}R_{\infty}}(|\snabla_{\s^2}\partial_r^k\mathfrak{L}_{s,\lambda,\kappa}\hpsi|^2+\lambda  \pi_{\leq l}|\partial_r^k\mathfrak{L}_{s,\lambda,\kappa}\hpsi|^2)\,d\upsigma dr\\
+C_s\sum_{k=0}^{N_+} \sum_{\ell\in \N_0}\int_{\eta R_{\infty}^{-1}}^{x(\frac{1}{2}R_{\infty})} x^{1-\delta}(\lf+1)^{-w_p}|(x^2\partial_x)^k(\mathfrak{L}_{s,\lambda,\kappa}\hpsi)_{\ell}|^2\,d\upsigma dx.
\end{multline}
Furthermore, for suitably small $\frac{|a|}{M}$ or for $\hphi=\sum_{m=-m_0}^{m_0}\hphi_m$, with $m_0>0$, we can omit the $|\snabla_{\s^2}\partial_r^k\mathfrak{L}_{s,\lambda,\kappa}\hpsi|^2$ term on the right-hand side.
\end{corollary}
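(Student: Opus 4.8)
The plan is to derive \eqref{eq:fullellipticest} by coupling the higher-order red-shift estimate of Corollary \ref{cor:redshiftho} (and its elliptic-enhanced version, Proposition \ref{prop:redshifthopluselliptic}, in the $|a|/M$ small or bounded-$m$ case) with the higher-order weighted elliptic estimate of Corollary \ref{cor:ellipticawayho}, and absorbing the boxed ``overlap'' terms in each into the good left-hand side of the other. First I would fix $\delta'=\tfrac{2}{3}(1+2N_++2\kappa_+^{-1}\min\{\re s,0\})$ as the exponent in Proposition \ref{prop:redshifthopluselliptic}, note that on the overlap region $\{\tfrac14 R_\infty\le r\le \tfrac13 R_\infty\}$ we have $\varrho=r$, $w_p\equiv p$ is constant, $\check a=0$, and $x\sim R_\infty^{-1}$, so the two families of weights $r^{-3+\delta}$ and $x^{1-\delta}(\lf+1)^{-w_p}$ differ only by fixed powers of $R_\infty$ and a factor $(\lf+1)^{-p}$; this lets me translate freely between the boxed term on the right of \eqref{eq:redshifthopluselliptic} and the integrand appearing in the first boxed term on the right of \eqref{eq:ellipticawayho}, and vice versa. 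The boxed term $\int_{\frac13 R_\infty}^{\frac12 R_\infty}(\cdots)$ on the right of the red-shift estimate is controlled by the left-hand side of the elliptic estimate restricted to $\{\tfrac14 R_\infty\le \varrho\le \tfrac13 R_\infty\}$ (which dominates an $H^1$-type quantity with the $(\slashed\Delta_{\s^2}-\lambda\pi_{\le l})$ and $\snabla_{\s^2}\partial_r$ and $\lambda\partial_r$ terms), while the boxed term $\int_{\frac14 R_\infty}^{\frac13 R_\infty}r^{-2+\delta}|(\slashed\Delta_{\s^2}-\lambda\pi_{\le l})\partial_r^k\hpsi|\cdot|\partial_r^{k+1}\hpsi|$ on the right of the elliptic estimate is, after Young's inequality, controlled by the boxed term on the right of \eqref{eq:redshifthopluselliptic} (note the red-shift estimate's left-hand side, in the enhanced form, already controls $(\lambda\pi_{\le l}-k(k+1))^2|\snabla_{\s^2}^m\partial_r^k\hpsi|^2$ and $\Delta^2|\partial_r^{k+2}|^2$ with non-degenerate weight on $\{r\le \tfrac13 R_\infty\}\supset \{\tfrac14 R_\infty \le r\le \tfrac13 R_\infty\}$).

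The mechanics of the coupling are then: add a large multiple $A>0$ of the red-shift estimate to the elliptic estimate, choose $A$ so that the elliptic left-hand side absorbs the boxed $\int_{\frac13 R_\infty}^{\frac12 R_\infty}$ term produced by $A$ times the red-shift estimate, then use Young's inequality with a small constant $\epsilon>0$ to split the boxed $\int_{\frac14 R_\infty}^{\frac13 R_\infty}r^{-2+\delta}|\cdot|\cdot|\cdot|$ term of the elliptic estimate into $\epsilon\cdot(\text{elliptic LHS})$ plus $\epsilon^{-1}\cdot(\text{red-shift LHS quantities})$, the latter being absorbed by $A$ times the red-shift left-hand side once $A$ is taken large relative to $\epsilon^{-1}$. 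The only remaining boxed term surviving on the right is the boundary term at $x=\eta R_\infty^{-1}$ from \eqref{eq:ellipticawayho}, which is exactly the boxed boundary term in \eqref{eq:fullellipticest}; all $\mathfrak L_{s,\lambda,\kappa}\hpsi$ terms on the right of both estimates combine into the stated inhomogeneous terms (using that the $(x^2\partial_x)^k$ and $\partial_r^k$ commuted forms of $\mathfrak L_{s,\lambda,\kappa}\hpsi$ in the overlap region agree up to lower-order terms with small coefficients, since $\varrho=r$ there). The left-hand side of \eqref{eq:fullellipticest} is just the sum of the left-hand sides of the two input estimates, after replacing $r^\delta$ by $1$ on the bounded region $\{r_+\le r\le \tfrac12 R_\infty\}$ (legitimate since $\delta$ is bounded and $r$ ranges over a fixed compact interval, so the weights are comparable with constants depending on $N_+$ and $\kappa_+$) and recombining the $\int_{\eta R_\infty^{-1}}^{x(\frac12 R_\infty)}$ pieces with the $\int_{r_+}^{\frac12 R_\infty}$ pieces.

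The constraint on $\lambda$ in the statement is the union of the constraints in Corollary \ref{cor:redshiftho} / Proposition \ref{prop:redshifthopluselliptic} (namely $\lambda-N_+(N_++1)\gg N_+^2 M^2|s|^2(1+2N_++2\kappa_+^{-1}\min\{\re s,0\})^{-1}$, which I absorb into the stated $N_+|s|^2(1+2N_++2\kappa_+^{-1}\min\{\re s,0\})^{-1}+(1+\cdots)^{-2}$) and Corollary \ref{cor:ellipticawayho} (namely $\lambda-N_+(N_++1)\gg 1+|s|^2\eta^{-2}R_\infty^2+p$), together with the extra largeness needed to absorb the cross terms: the coupling constant $A$ must be chosen before $\lambda$, and the number of Hardy inequalities and the powers of $R_\infty$ they generate force a condition $\lambda-N_+(N_++1)\gg (M^{-1}R_\infty)^{2(1+N_+)}$ (each Hardy step trades a derivative for a factor $R_\infty^2$, iterated $N_+$ times, plus one from the $\mathfrak R_{s,\lambda}$ absorption in the region $r\le R_\infty$ where coefficients are only bounded by $M R_\infty$-type constants). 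I expect the main obstacle to be bookkeeping: verifying that every error term generated by Young's inequality, by the cutoff derivatives $d\chi_{r\le R_\infty/3}/dr$, and by passing between the $(x^2\partial_x)^k$ and $\partial_r^k$ commutators in the overlap region, carries either a small constant, a large power of $\lambda$ in the denominator, or a factor that the opposing estimate's left-hand side controls — and doing so \emph{uniformly in $\kappa\in[0,\infty)$} (using $\kappa\ll\eta R_\infty^{-1}$ so that $x+\kappa\chi_{\varrho\ge R_\infty}\sim x$ throughout the elliptic region). The final statement (dropping $|\snabla_{\s^2}\partial_r^k\mathfrak L_{s,\lambda,\kappa}\hpsi|^2$ for $|a|/M\ll 1$ or bounded $m$) follows because in exactly those cases one uses the enhanced red-shift estimate \eqref{eq:redshifthopluselliptic} rather than \eqref{eq:redshiftho}, and the former already incorporates the elliptic gain in $r\le \tfrac12 R_\infty$ without needing an angular derivative of the source, as recorded in Proposition \ref{prop:redshifthopluselliptic}.
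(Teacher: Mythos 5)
Your proposal follows essentially the same route as the paper: couple Corollary \ref{cor:redshiftho} (angular-commuted, i.e.\ applied with $m\in\{0,1\}$ and the $m=0$ copy weighted by $\lambda$) --- or Proposition \ref{prop:redshifthopluselliptic} in the small-$|a|/M$ or bounded-$m$ case --- with Corollary \ref{cor:ellipticawayho}, absorbing the two boxed cross-terms into the opposing left-hand sides via Young's inequality together with the largeness of $\lambda$ and the $R_\infty$-power bookkeeping, so that only the boundary term at $x=\eta R_{\infty}^{-1}$ survives. One small correction: since the left-hand side of \eqref{eq:ellipticawayho} only reaches down to $\varrho=\tfrac13 R_{\infty}$, \emph{both} halves of the Young split of the overlap term supported on $[\tfrac14 R_{\infty},\tfrac13 R_{\infty}]$ must be absorbed by the red-shift left-hand side (this is precisely why the paper runs the red-shift estimate with $m=0$ and $m=1$ and multiplies the $m=0$ estimate by $\lambda$), rather than sending the $\epsilon$-half back into the elliptic left-hand side as you wrote.
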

\begin{proof}
We apply Corollary \ref{cor:redshiftho} with arbitrary $N_+\in \N_0$ and $m\in \{0,1\}$ and we also apply Corollary \ref{cor:ellipticawayho} with the same value of $N_+$. We estimate the first term on the right-hand side of \eqref{eq:ellipticawayho} as follows: let $\epsilon>0$ be arbitrarily small, then we apply Young's inequality and use that $\hpsi=r\phi$ in the region under consideration to obtain:
\begin{equation*}
C\sum_{k=0}^{N_+}\int_{\s^2}	 r^{-2+\delta}\left |(\slashed{\Delta}_{\s^2}-\lambda  \pi_{\leq l})\partial_r^k\hpsi\right|\cdot\left|\partial_{r}^{k+1}\hpsi\right|\,d\upsigma \leq \sum_{k=0}^{N_+} \int_{\s^2}	 \epsilon r^{-1+\delta}\left |(\slashed{\Delta}_{\s^2}-\lambda  \pi_{\leq l})\partial_r^k\hphi\right|^2+\frac{C^2}{4\epsilon \lambda}\lambda r^{1+\delta}\left|\partial_{r}^{k+1}\hphi\right|^2\,d\upsigma.
\end{equation*}
We take $\lambda$ suitably large so that $\frac{ C^2}{4\epsilon \lambda}<\epsilon$ and $\epsilon\ll (1+2N_++2\kappa_+^{-1}{\re s})$. Then we can absorb the right-hand side above into the left-hand side of \eqref{eq:redshiftho} with $m=0$, and $m=1$, after multiplying the $m=0$ estimates by a factor $\lambda$.

By taking $\epsilon$ suitably small compared to the constant $C R_{\infty}^{2N_+}$ appearing on the right-hand side of \eqref{eq:redshiftho}, we can also absorb the first integral on the right-hand side of \eqref{eq:redshiftho} into the left-hand side of \eqref{eq:ellipticawayho}.
\end{proof}

\begin{remark}
	Note that the estimate \eqref{eq:fullellipticest} still has $\hpsi$ and its derivatives appearing in the boxed boundary term at $x=\eta R_{\infty}^{-1}$. In the next section we will show that we can absorb this term into the left-hand side of appropriate Gevrey-type estimates.
\end{remark}

\section{Gevrey estimates}
\label{sec:gevrey}
In this section, we consider estimates in the region $x\leq x_0$, where we define:
\begin{equation*}
x_0:=\frac{1}{2R_{\infty}}.
\end{equation*}
 By \eqref{eq:operatorfaraway}, we can express in the region $x\leq x_{0}$:
\begin{equation*}
(\mathfrak{L}_{s,\lambda,\kappa}\hpsi)_{\ell}=\partial_{x}(x(\kappa+x)\partial_{x}\hpsi_{\ell})+2 s\partial_{x}\hpsi_{\ell}-\mathfrak{l}(\mathfrak{l}+1)\hpsi_{\ell}.
\end{equation*}
Hence, \textbf{all the estimates in this section are estimates for the rescaled Laplace-transformed wave operator in Minkowski}, restricted to fixed spherical harmonic modes with angular frequencies $\geq \mathfrak{l}$. We will moreover require these estimates to be uniform in $\lf$.

Denote $\hpsi^{(n)}=\partial_x^n\hpsi$. Then it can be easily verified that:
\begin{equation}
\label{eq:commeq}
(\partial_x^n\mathfrak{L}_{s,\lambda,\kappa}\hpsi)_{\ell}=x(\kappa+x)\partial_{x}\hpsi^{(n+1)}_{\ell}+2(n+1) \left(x+\frac{\kappa}{2}\right)\partial_x \hpsi^{(n)}+2 s\partial_{x}\hpsi^{(n)}_{\ell}+(n(n+1)-\mathfrak{l}(\mathfrak{l}+1))\hpsi_{\ell}^{(n)}.
\end{equation}
We will estimate higher-order energy norms of $\hpsi$ in terms of similar energy norms of $\mathfrak{L}_{s,\lambda,\kappa}\hpsi$ in two steps.
\begin{itemize}
\item[\bf Step 1:]	We consider $\hpsi^{(n)}_{\ell}=\partial_x^n\hpsi_{\ell}$ and derive $\dot{H}^1$ estimates for $\hpsi^{(n)}_{\ell}$ with $n\geq \lf$. We then sum the estimates for different values of $n$ to obtain an $L^2$-based Gevrey-type estimate. \emph{Here it is important that we will not see any terms $\hpsi^{(n)}_{\ell}$ with $n<\lf$.}
\item[\bf Step 2:] We estimate weighted $\dot{H}^1$ norms of $\hpsi^{(n)}_{\ell}$ with $n<\lf$ in terms of a weighted $\dot{H}^1$ norm of $\hpsi^{(n)}_{\ell}$. To be able to sum these lower-order estimates, we apply elliptic-type estimates in the region $x\leq x_0$.
\end{itemize}

\subsection{Step 1: higher-order derivatives}
\label{sec:gevreyho}
We will first derive weighted $L^2$ estimates for $\hpsi^{(n)}$ with $n\geq \lf$. We obtain these estimates by simply rearranging the terms in \eqref{eq:commeq} to obtain:
\begin{equation}
\label{eq:mainidentityforgevrey}
x(\kappa+x)\partial_{x}\hpsi^{(n+1)}_{\ell}+2(n+1) \left(x+\frac{\kappa}{2}\right)\partial_x \hpsi_{\ell}^{(n)}+2 s\partial_{x}\hpsi^{(n)}_{\ell}=(\mathfrak{l}(\mathfrak{l}+1)-n(n+1))\hpsi_{\ell}^{(n)}+(\partial_x^n\mathfrak{L}_{s,\lambda,\kappa}\hpsi)_{\ell}.
\end{equation}
and by taking the square norm on both sides, treating the right-hand side as an inhomogeneous term. By considering appropriate weighted sums in $n$ of these $L^2$ estimates, we  will then be able to absorb the terms arising from $(\mathfrak{l}(\mathfrak{l}+1)-n(n+1))\hpsi_{\ell}^{(n)}$ on the right-hand side of \eqref{eq:mainidentityforgevrey}.
\begin{proposition}
\label{prop:pregev}
Let $\kappa\geq 0$, $M=1$, $\ell,l\in \N_0$, $N_{\infty}\in \N_0$, with $N_{\infty}>\lf$, and let $\alpha,\beta,\nu,\tilde{\sigma},\epsilon >0$, with $\mu\in (0,1)$. Then there exists a constant $C>0$ such that:
\begin{multline}
\label{eq:pregev}
\sum_{n=\lf}^{N_{\infty}}\frac{|2s|^2\tilde{\sigma}^{2n}}{n!^2(n+1)!^2}\int_0^{x_0}(1-\mu \nu^{-1}) x^2(x+\kappa)^2 |\hat{\psi}_{\ell}^{(n+2)}|^2\,dx\\
+\sum_{n=\lf}^{N_{\infty}}\frac{|2s|^2\tilde{\sigma}^{2n}}{n!^2(n+1)!^2}\int_0^{x_0} \left[4-\beta-(1-\mu)(\alpha+\tilde{\sigma}^{-2}\alpha^{-1})-6(n+1)^{-1}\right]\\
\times(n+1)^2 \left(x+\frac{1}{2}\kappa\right)^2|\hat{\psi}_{\ell}^{(n+1)}|^2\,dx\\
+\sum_{n=\lf}^{N_{\infty}}\frac{|2s|^2\tilde{\sigma}^{2n}}{n!^2(n+1)!^2}\int_0^{x_0} \left(1-\mu\nu-4\beta^{-1}\left(\frac{\max\{-\re s,0\}}{|s|}\right)^2-(1+\epsilon)\tilde{\sigma}^2\left(1+2(n-1)^{-1}\right)^2\right)\\
\times |2s|^2|\hat{\psi}_{\ell}^{(n+1)}|^2\,dx\\
\boxed{-\frac{|2s|^{2(N_{\infty}+1)}\tilde{\sigma}^{2(N_{\infty}+1)}}{(N_{\infty}+1)!^2(N_{\infty}+2)!^2}\int_0^{x_0}(1-\mu)\alpha^{-1}\tilde{\sigma}^{-2} (N_{\infty}+2)^2\left(x+\frac{1}{x^{-1}+2\kappa^{-1}}\right)^2|\hat{\psi}_{\ell}^{(N_{\infty}+2)}|^2\,dx}\\
+\sum_{n=\lf}^{N_{\infty}}\frac{|2s|^2\tilde{\sigma}^{2n}}{n!^2(n+1)!^2}x_0^3(n+1) |\hpsi^{(n+1)}|^2\big|_{x=x_0}\\
\leq C(1+\epsilon^{-1}) \sum_{n=\lf}^{N_{\infty}}\frac{|2s|^2\tilde{\sigma}^{2n}}{n!^2(n+1)!^2}\int_0^{x_0}|\partial_x^n(\mathfrak{L}_{s,\lambda,\kappa}\hpsi)_{\ell}|^2\,dx.
\end{multline}
\end{proposition}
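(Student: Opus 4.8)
The plan is to multiply the rearranged commuted equation \eqref{eq:mainidentityforgevrey} by an appropriate complex conjugate, take real parts, integrate by parts on $[0,x_0]$, and then sum the resulting fixed-$n$ estimates against the weight $\frac{|2s|^2\tilde\sigma^{2n}}{n!^2(n+1)!^2}$. More precisely, for each fixed $n\geq \lf$, I would test \eqref{eq:mainidentityforgevrey} against $\overline{2s\,\partial_x\hpsi_\ell^{(n)}} = \overline{2s}\,\overline{\hpsi_\ell^{(n+1)}}$, which is the natural multiplier that turns the first-order transport term $2s\partial_x\hpsi^{(n)}_\ell$ into a positive definite $|2s|^2|\hpsi^{(n+1)}_\ell|^2$ term. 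The term $x(\kappa+x)\partial_x\hpsi^{(n+1)}_\ell$ paired against $\overline{\hpsi^{(n+1)}_\ell}$ produces, after integration by parts, a bulk term $-\frac12\partial_x(x(\kappa+x))|\hpsi^{(n+1)}_\ell|^2$ and a boundary term at $x=x_0$ (the term vanishes at $x=0$ because of the $x(\kappa+x)$ prefactor and the finite regularity assumed via the norms $G_{\sigma,R_\infty}$); note $\partial_x(x(\kappa+x)) = \kappa + 2x = 2(x+\tfrac\kappa2)$, which is why the $(n+1)(x+\tfrac\kappa2)^2$-type structure appears. The term $2(n+1)(x+\tfrac\kappa2)\partial_x\hpsi^{(n)}_\ell$ paired against $\overline{2s}\,\overline{\hpsi^{(n+1)}_\ell}$ — wait, here one should instead pair the $2(n+1)(x+\tfrac\kappa2)\partial_x\hpsi^{(n)}_\ell = 2(n+1)(x+\tfrac\kappa2)\hpsi^{(n+1)}_\ell$ directly, so it is already proportional to $|\hpsi^{(n+1)}_\ell|^2$ with the right sign (since $\re s$ can be negative, one carefully isolates $\re(s)$ versus $|s|$, which is the origin of the $\bigl(\tfrac{\max\{-\re s,0\}}{|s|}\bigr)^2$ factor). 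The algebraically delicate term is $(\mathfrak{l}(\mathfrak{l}+1)-n(n+1))\hpsi^{(n)}_\ell$ on the right: this is $O(n^2)$ in size but comes with $\hpsi^{(n)}_\ell$ rather than $\hpsi^{(n+1)}_\ell$, so it must be converted into a derivative via integration by parts, producing $x$-boundary terms plus a $\hpsi^{(n+1)}_\ell$ term whose coefficient telescopes against the adjacent $n$ in the sum.

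The key mechanism — and this is why one sums over $n$ rather than estimating termwise — is that the dangerous contribution of the $(\mathfrak{l}(\mathfrak{l}+1)-n(n+1))\hpsi^{(n)}_\ell$ term at level $n$, after integration by parts, yields a term of the form (schematically) $c_n\,\tilde\sigma^2\,|\hpsi^{(n+1)}_\ell|^2 \cdot \frac{|2s|^2\tilde\sigma^{2n}}{n!^2(n+1)!^2}$ whose coefficient $c_n$ must be absorbed into the positive $|2s|^2|\hpsi^{(n+1)}_\ell|^2$ term coming at level $n+1$, i.e. $\frac{|2s|^2\tilde\sigma^{2(n+1)}}{(n+1)!^2(n+2)!^2}|2s|^2|\hpsi^{(n+1)}_\ell|^2$. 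The ratio of the two weights is $\frac{\tilde\sigma^2}{(n+1)^2(n+2)^2}$ against a coefficient growing like $(\mathfrak{l}(\mathfrak{l}+1)-n(n+1))^2 \lesssim n^4$ — so the ratio is bounded by a constant times $\tilde\sigma^2$, uniformly in $\lf$, precisely because we start the sum at $n=\lf$ and $n(n+1)-\lf(\lf+1)\leq n(n+1)$. This is exactly the Newman–Penrose-type conservation-law structure flagged in \S\ref{sec:ideastechniques}: the quadratic cancellation $n(n+1)-\lf(\lf+1)$ is what makes $n^4/((n+1)(n+2))^2 = O(1)$. The telescoping then forces the appearance of a single leftover boundary-in-$n$ term, which is the boxed negative term at $n=N_\infty+1$ in \eqref{eq:pregev}, and the summation closes provided the numerical coefficients $(1-\mu\nu^{-1})$, $(4-\beta-(1-\mu)(\alpha+\tilde\sigma^{-2}\alpha^{-1})-6(n+1)^{-1})$, $(1-\mu\nu-\dots-(1+\epsilon)\tilde\sigma^2(1+2(n-1)^{-1})^2)$ are arranged to be non-negative — these are just the bookkeeping constants from repeated Young's inequalities with parameters $\alpha,\beta,\nu,\mu,\epsilon$, and the factor $4$ traces back to $\partial_x(x(\kappa+x))$ paired twice.

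Concretely, the steps I would carry out are: (1) from \eqref{eq:commeq} record the fixed-$n$ identity \eqref{eq:mainidentityforgevrey}; (2) multiply by $\overline{2s\,\hpsi^{(n+1)}_\ell}$ (equivalently $\overline{2s\,\partial_x\hpsi^{(n)}_\ell}$), take real parts, and integrate over $[0,x_0]$, integrating by parts the terms $x(\kappa+x)\partial_x\hpsi^{(n+1)}_\ell \cdot \overline{\hpsi^{(n+1)}_\ell}$ (wait — this pairing needs $\overline{\hpsi^{(n+1)}_\ell}$, consistent with the $2s$ multiplier since $|2s|^2$ is real) and $(\mathfrak{l}(\mathfrak{l}+1)-n(n+1))\hpsi^{(n)}_\ell \cdot \overline{2s\,\hpsi^{(n+1)}_\ell} = \tfrac12(\mathfrak{l}(\mathfrak{l}+1)-n(n+1))\overline{2s}\,\partial_x|\hpsi^{(n)}_\ell|^2$ — no, that last one is the cross term needing care: $\re(\overline{2s}\,\hpsi^{(n)}_\ell\overline{\hpsi^{(n+1)}_\ell})$ is not a perfect derivative unless $s$ is real, so one splits $2s = 2\re s + 2i\,\im s$ and only the real part gives $\re s\cdot\partial_x|\hpsi^{(n)}_\ell|^2$; the imaginary part must be controlled by Young against $x(\kappa+x)|\hpsi^{(n+1)}_\ell|^2$ and $|\hpsi^{(n)}_\ell|^2$, and this is where the mechanism of absorbing into level $n\!-\!1$ or $n\!+\!1$ is needed — actually, re-examining, the cleaner route is to pair against $\overline{2s\,\partial_x\hpsi^{(n)}_\ell}$ and handle $(\mathfrak{l}(\mathfrak{l}+1)-n(n+1))\hpsi^{(n)}_\ell$ by writing $\hpsi^{(n)}_\ell$ and using the equation at level $n-1$ to trade it for $\hpsi^{(n+1)}_{\ell}$-type quantities, but the cleanest is simply Young's inequality with a weight chosen so the sum telescopes; (3) apply Young's inequality with the stated free parameters to every cross term, producing the coefficients displayed; (4) multiply by $\frac{|2s|^2\tilde\sigma^{2n}}{n!^2(n+1)!^2}$ and sum over $\lf\leq n\leq N_\infty$, using $\sum_n$-reindexing to combine each "$\hpsi^{(n+1)}_\ell$ absorbed at the wrong level" term with its neighbor; (5) collect the single surviving interior-boundary-in-$n$ term (the boxed one at $N_\infty+1$) and the $x=x_0$ spatial boundary terms; (6) bound the right side $\partial_x^n(\mathfrak{L}_{s,\lambda,\kappa}\hpsi)_\ell = \partial_x^n(\mathfrak{L}_{s,\lambda,\kappa}\hpsi)_\ell$ contributions by Young with parameter $\epsilon$, giving the $C(1+\epsilon^{-1})$ factor.

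\textbf{Main obstacle.} The hard part is the uniformity in $\lf$ of the telescoping: one must check that the coefficient of $|\hpsi^{(n+1)}_\ell|^2$ arising from the "bad" right-hand term $(\mathfrak{l}(\mathfrak{l}+1)-n(n+1))\hpsi^{(n)}_\ell$ at level $n$, once transferred to level $n+1$ via the weight ratio $\tilde\sigma^2/((n+1)(n+2))^2$ and the squared coefficient $\lesssim (n(n+1))^2$, stays below a fixed fraction of the good coefficient $(1-\mu\nu-\dots)|2s|^2$ — and crucially that this survives the $\kappa\geq 0$ degeneration (when $\kappa=0$ the weight $x(\kappa+x) = x^2$ degenerates at $x=0$, which is why the argument only yields $\dot H^1$-type control there, and why the $x^2$, $(x+\tfrac\kappa2)^2$, $(x+\tfrac1{x^{-1}+2\kappa^{-1}})^2$ weights all appear with their specific forms). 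One must also track that no $\hpsi^{(m)}_\ell$ with $m<\lf$ ever appears — this is automatic from the structure of \eqref{eq:commeq} since commuting never lowers the derivative count — but it must be stated, because it is the whole point of Step 1 and what makes Step 2 (the separate $n<\lf$ estimate) a genuinely independent problem. The constraint linking $\alpha$, $\beta$, $\tilde\sigma$, $\mu$, $\nu$, $\epsilon$ so that all three bracketed coefficients are simultaneously positive for all $n\geq\lf\geq 0$ is what ultimately pins down the admissible range of $\tilde\sigma^2$ and is the source of the $\{|\arg z|<\tfrac{2\pi}{3}\}$ restriction (the $\tfrac12\kappa$ and the "$4$" combine to give the $3/2$ that inverts to $2\pi/3$); verifying this numerically-constrained region is non-empty is the delicate bookkeeping core of the proof.
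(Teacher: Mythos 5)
Your outline gets the summation machinery right, but the fixed-$n$ core of your argument is not the paper's and, as written, it would not close. The paper does not test \eqref{eq:mainidentityforgevrey} against a single multiplier: it takes the modulus squared of \emph{both sides} of \eqref{eq:mainidentityforgevrey} (equivalently, uses the entire left-hand side as its own multiplier). This is what produces all three diagonal terms $x^2(x+\kappa)^2|\hpsi_\ell^{(n+2)}|^2$, $4(n+1)^2(x+\tfrac{\kappa}{2})^2|\hpsi_\ell^{(n+1)}|^2$ and $4|s|^2|\hpsi_\ell^{(n+1)}|^2$ with good signs, leaving only three cross terms: $T_1=2(n+1)x(x+\tfrac{\kappa}{2})(x+\kappa)\partial_x(|\hpsi^{(n+1)}|^2)$, integrated by parts to give the $x=x_0$ boundary term and the $-6(n+1)^{-1}$ correction; $T_2=4\re(2s)(n+1)(x+\tfrac{\kappa}{2})|\hpsi^{(n+1)}|^2$, split by Young with $\beta$ when $\re s<0$; and $T_3$, split with $\mu$, estimated with $\nu$ and with $\alpha$ together with the weight-shifted factor whose leftover at $n=N_\infty$ is exactly the boxed term. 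Your multiplier $\overline{2s\,\partial_x\hpsi_\ell^{(n)}}$ generates only the diagonal $|2s|^2|\hpsi_\ell^{(n+1)}|^2$: it cannot produce the $x^2(x+\kappa)^2|\hpsi_\ell^{(n+2)}|^2$ or $(n+1)^2(x+\tfrac{\kappa}{2})^2|\hpsi_\ell^{(n+1)}|^2$ terms claimed on the left of \eqref{eq:pregev}, and, worse, the pairing of $2(n+1)(x+\tfrac{\kappa}{2})\hpsi_\ell^{(n+1)}$ with it gives $4(n+1)(x+\tfrac{\kappa}{2})\re(s)\,|\hpsi_\ell^{(n+1)}|^2$, a term growing linearly in $n$ with the wrong sign when $\re s<0$; since your identity contains no $(n+1)^2(x+\tfrac{\kappa}{2})^2$ diagonal, a Young splitting leaves an unabsorbable $(n+1)^2$-sized loss and the fixed-$n$ estimate fails. (Relatedly, the ``$4$'' in $4-\beta-(1-\mu)(\alpha+\tilde\sigma^{-2}\alpha^{-1})-6(n+1)^{-1}$ is the square of the coefficient $2(n+1)$, i.e.\ it comes from the diagonal $|2(n+1)(x+\tfrac{\kappa}{2})\hpsi^{(n+1)}|^2$, not from ``$\partial_x(x(\kappa+x))$ paired twice''.) Note also that with $s\notin\R$ the top-order pairing $x(\kappa+x)\partial_x\hpsi^{(n+1)}\cdot\overline{2s\,\hpsi^{(n+1)}}$ is not a perfect derivative either — you flagged this defect for the zeroth-order term but it equally undermines the integration by parts you rely on at top order; squaring avoids the issue entirely because $|{\cdot}|^2$ is real.

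The remainder of your plan does match the paper: commuting via \eqref{eq:commeq} never produces $\hpsi^{(m)}$ with $m<\lf$; the right-hand side is estimated by $(1+\epsilon)(n(n+1)-\lf(\lf+1))^2|\hpsi^{(n)}|^2+(1+\epsilon^{-1})|\partial_x^n(\mathfrak{L}_{s,\lambda,\kappa}\hpsi)_\ell|^2$; the weights $\frac{|2s|^2\tilde\sigma^{2n}}{n!^2(n+1)!^2}$ transfer this term to the neighbouring level with ratio bounded by a constant times $\tilde\sigma^2(1+2(n-1)^{-1})^2$ precisely because the sum starts at $n=\lf$ (so $n(n+1)-\lf(\lf+1)\le n(n+1)$), and the only leftovers are the $x=x_0$ boundary term from $T_1$ and the boxed deficit at $N_\infty+1$, with $\tfrac{x^2(x+\kappa)^2}{(x+\kappa/2)^2}=\bigl(x+\tfrac{1}{x^{-1}+2\kappa^{-1}}\bigr)^2$ explaining that weight. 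So the repair is localized: replace your multiplier identity by squaring \eqref{eq:mainidentityforgevrey} at each fixed $n$, and then your steps (3)–(6) go through essentially as you wrote them.
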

\begin{proof}
We will omit the subscript $\ell$ in $\hpsi_{\ell}$ for the sake of convenience. Rearranging \eqref{eq:commeq} and taking the square norm on both sides, we obtain:
\begin{multline*}
x^2(x+\kappa)^2|\hat{\psi}^{(n+2)}|^2+4(n+1)^2\left(x+\frac{1}{2}\kappa\right)^2|\hat{\psi}^{(n+1)}|^2+4|s|^2|\hat{\psi}^{(n+1)}|^2\\
+T_1+T_2+T_3=\left|[\lf(\lf+1)-n(n+1)]\hat{\psi}^{(n)}+\partial_x^n (\mathfrak{L}_{s,\lambda,\kappa}\hpsi)_{\ell}\right|^2,
\end{multline*}
with
\begin{align*}
T_1:=&\:2(n+1) x\left(x+\frac{1}{2}\kappa\right)(x+\kappa)\partial_x(|\hat{\psi}^{(n+1)}|^2),\\
T_2:=&\:4 \re(2s)(n+1)\left(x+\frac{1}{2}\kappa\right)|\hat{\psi}^{(n+1)}|^2,\\
T_3:=&\:2x(x+\kappa)\re\left(\hat{\psi}^{(n+2)}2\overline{s}\overline{\hat{\psi}^{(n+1)}} \right).\\
\end{align*}

We integrate $T_1$ by parts to obtain:
\begin{multline*}
\int_0^{x_0}T_1\,dx\geq 2(n+1) x_0\left(x_0+\frac{1}{2}\kappa\right)(x_0+\kappa)|\hat{\psi}^{(n+1)}|^2|_{x=x_0}\\
 -\int_{0}^{x_0}(n+1)(6x^2+6\kappa x+\kappa^2)|\hat{\psi}^{(n+1)}|^2\,dx\\
\geq 2(n+1)x_0^3 |\hat{\psi}^{(n+1)}|^2|_{x=x_0}-\int_{0}^{x_0}6(n+1)\left(x+\frac{1}{2}\kappa\right)^2|\hat{\psi}^{(n+1)}|^2\,dx.
\end{multline*}
If ${\re s}\geq0$, then $T_2\geq 0$. If ${\re s}<0$, then we introduce a constant $\beta>0$ and apply Young's inequality to split:
\begin{equation*}
|T_2|\leq \beta (n+1)^2\left(x+\frac{1}{2}\kappa\right)^2 |\hat{\psi}^{(n+1)}|^2+ 4\beta^{-1}\left(\frac{|\re s|}{|s|}\right)^2|2s|^2 |\hat{\psi}^{(n+1)}|^2.
\end{equation*}
We introduce additionally $0\leq \mu\leq 1$ and split
\begin{equation*}
\begin{split}
|T_3|\leq 2x(x+\kappa)|2s| |\hat{\psi}^{(n+1)}\|\hat{\psi}^{(n+2)}|=&\:\mu \cdot 2x(x+\kappa)|2s| |\hat{\psi}^{(n+1)}\|\hat{\psi}^{(n+2)}|\\
&+(1-\mu)\cdot 2x(x+\kappa)|2s| |\hat{\psi}^{(n+1)}\|\hat{\psi}^{(n+2)}|.
\end{split}
\end{equation*}
We further estimate via Young's inequality with an additional weight $\nu>0$:
\begin{equation*}
\mu \cdot 2x(x+\kappa)|2s| |\hat{\psi}^{(n+1)}\|\hat{\psi}^{(n+2)}|\leq \mu \nu |2s|^2 |\hat{\psi}^{(n+1)}|^2+\mu\nu^{-1} x^2(x+\kappa)^2|\hat{\psi}^{(n+2)}|^2,
\end{equation*}
and we estimate the remaining term with a slightly different Young's inequality, involving the weight $\alpha$ and an additional factor $ \frac{|2s|^2\tilde{\sigma}^2}{(n+1)^2(n+2)^2}$ that appears in the summation below:
\begin{equation*}
\begin{split}
(1-\mu)\cdot &2x(x+\kappa)|2s\|\hat{\psi}^{(n+1)}\|\hat{\psi}^{(n+2)}|\leq  \alpha (1-\mu) (n+1)^2 \left(x+\frac{1}{2}\kappa\right)^2|\hat{\psi}^{(n+1)}|^2\\
&+(1-\mu)\alpha^{-1}\tilde{\sigma}^{-2}\cdot \frac{|2s|^2\tilde{\sigma}^2}{(n+1)^2(n+2)^2} (n+1+1)^2 \frac{x^2(x+\kappa )^2}{\left(x+\frac{1}{2}\kappa\right)^2}|\hat{\psi}^{(n+1+1)}|^2.
\end{split}
\end{equation*}
We moreover have that $\frac{x^2(x+\kappa )^2}{\left(x+\frac{1}{2}\kappa\right)^2}=(x+\frac{1}{x^{-1}+2\kappa^{-1}})^2\leq (x+\frac{1}{2}\kappa)^2$
Finally, we estimate
\begin{multline*}
|[\lf(\lf+1)-n(n+1)]\hat{\psi}^{(n)}+\partial_x^n((\mathfrak{L}_{s,\lambda,\kappa}\hpsi)_{\ell})|^2\leq (1+\epsilon)(n(n+1)-\lf(\lf+1))^2|\hat{\psi}^{(n+1-1)}|^2\\
+(1+\epsilon^{-1})|(\mathfrak{L}_{s,\lambda,\kappa}\hpsi)_{\ell}|^2.
\end{multline*}

Now we integrate the above estimates over the interval $[0,x_0]$ to obtain:
\begin{multline*}
\int_0^{x_0}(1-\mu \nu^{-1}) x^2(x+\kappa)^2  |\hat{\psi}^{(n+2)}|^2\,dx+2(n+1)x_0^3|\hat{\psi}^{(n+1)}|^2|_{x=x_0}\\
+\int_0^{x_0} \left[4-\beta-\alpha(1-\mu)-6(n+1)^{-1}\right](n+1)^2 \left(x+\frac{1}{2}\kappa\right)^2|\hat{\psi}^{(n+1)}|^2\,dx\\
\boxed{-\int_0^{x_0}(1-\mu)\alpha^{-1}\tilde{\sigma}^{-2}\cdot \frac{|2s|^2\tilde{\sigma}^2}{(n+1)^2(n+2)^2} (n+1+1)^2 \left(x+\frac{1}{x^{-1}+2\kappa^{-1}}\right)^2|\hat{\psi}^{(n+1+1)}|^2\,dx}\\
+\int_0^{x_0} \left(1-\mu\nu-4\beta^{-1}\left(\frac{\max\{-\re s,0\}}{|s|}\right)^2\right) |2s|^2 |\hat{\psi}^{(n+1)}|^2\,dx\\
\boxed{-\int_0^{x_0} (1+\epsilon)|2s|^2\tilde{\sigma}^2\left(\frac{n(n+1)-\lf(\lf+1)}{n(n-1)}\right)^2\frac{n^2(n-1)^2}{|2s|^2\tilde{\sigma}^2} |\hat{\psi}^{(n+1-1)}|^2\,dx}\\
\leq (1+\epsilon^{-1})\int_0^{x_0}|\partial_x^n(\mathfrak{L}_{s,\lambda,\kappa}\hpsi)_{\ell}|^2\,dx.
\end{multline*}
We can sum the above estimate over $n$ with the following weights:
\begin{equation*}
\sum_{n=\lf}^{N_{\infty}} \frac{|2s|^{2n}\tilde{\sigma}^{2n}}{n!^2(n+1)!^2}\left[\cdot\right],
\end{equation*}
and absorbing all but the top order term with a bad sign into the terms with good signs, we obtain \eqref{eq:pregev}.
\end{proof}
In order for the unboxed terms on the left-hand side of \eqref{eq:pregev} to be non-negative definite, the coefficients $\tilde{\sigma},\alpha,\beta,\mu,\nu$ need to satisfy the following compatibility conditions: when ${\re s}<0$
\begin{align}
\label{eq:compatibility1}
0<\mu<&\:\nu,\\
\label{eq:compatibility2}
	4-\beta-(1-\mu)(\alpha+\tilde{\sigma}^{-2}\alpha^{-1})>&\: 0,\\
	\label{eq:compatibility3}
	1-\mu\nu-4\beta^{-1}\left(\frac{|{\re s}|}{|s|}\right)^2-\tilde{\sigma}^2>&\: 0.
\end{align}
For ${\re s}\geq 0$, we can take $\mu=0$, $\beta=1$ and $\alpha=2$, so we instead need 
\begin{align}
\label{eq:compatibility4}
\frac{1}{4}<\tilde{\sigma}^2<1.
\end{align}
Indeed, if \eqref{eq:compatibility1}--\eqref{eq:compatibility4} are satisfied, then we can take $\lambda$ sufficiently large and $\epsilon$ sufficiently small, to guarantee non-negative definiteness of the unboxed terms on the left-hand side of \eqref{eq:pregev}.

\begin{lemma}
	\label{lm:compatibility}
There exist $\tilde{\sigma}$, $\alpha,\beta,\mu,\nu$ satisfying \eqref{eq:compatibility1}--\eqref{eq:compatibility4} if and only if:
	\begin{enumerate}[label=\emph{(\roman*)}]
		\item  $-{\re s}\leq \frac{1}{2}|s|$,
		\item For fixed $2|s|\tilde{\sigma}=\sigma$, with $\sigma\in \R$, we have $s\in \Omega_{\sigma}$, where
	\begin{multline*}
	\Omega_{\sigma}:=\left\{x+iy\in \C\,|\, x\in \R_-,\,y\in \R,\, x^2+y^2<\sigma^2,\,3y^2-5x^2>\sigma^2\right\}\\
	\cup \left\{	x+iy\in \C\,|\, x\in[0,\infty),\, \frac{\sigma^2}{4}<x^2+y^2<\sigma^2\right\}.
	\end{multline*}
	\item For $s\in \{-\frac{1}{2}|s|\leq \re s<0\}$.
\begin{equation*}
\frac{1}{4}<\tilde{\sigma}^2< \frac{3}{4}-2\left(\frac{\re s}{|s|}\right)^2.	
\end{equation*}
Furthermore, 
\begin{equation*}
	\Omega:=\bigcup_{\sigma\in \R}\Omega_{\sigma}=\left\{z\in \C\,,\, |\arg(z)|<\frac{2\pi}{3}\right\}.
\end{equation*}
	\end{enumerate}
\end{lemma}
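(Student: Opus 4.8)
The statement is a purely algebraic/analytic compatibility lemma: we must decide exactly for which $s\in\C$ (equivalently, for which values of the ratio $\re s/|s|$ and of $\tilde\sigma$) the system of inequalities \eqref{eq:compatibility1}--\eqref{eq:compatibility4} admits a solution in the auxiliary parameters $\alpha,\beta,\mu,\nu$, and then rewrite the resulting condition on $s$ in terms of the sets $\Omega_\sigma$ and $\Omega$. The plan is to split into the two cases $\re s\ge 0$ and $\re s<0$, treating $\re s\ge 0$ first since there \eqref{eq:compatibility4} is the only constraint and it is trivially of the asserted form (and corresponds to the second, $x\ge0$ piece of $\Omega_\sigma$, once we substitute $\tilde\sigma^2=\sigma^2/(4|s|^2)$, giving $\sigma^2/4<|s|^2<\sigma^2$, i.e. $\frac{\sigma^2}{4}<x^2+y^2<\sigma^2$). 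The substantive case is $\re s<0$.

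\textbf{The $\re s<0$ case: optimizing over $\alpha,\beta,\mu,\nu$.} Here I would proceed by eliminating the free parameters one at a time to reduce \eqref{eq:compatibility1}--\eqref{eq:compatibility3} to a single inequality in $\tilde\sigma$ and $t:=|\re s|/|s|\in(0,1]$. First, in \eqref{eq:compatibility2} the quantity $\alpha+\tilde\sigma^{-2}\alpha^{-1}$ is minimized over $\alpha>0$ at $\alpha=\tilde\sigma^{-1}$, with minimum $2\tilde\sigma^{-1}$; so (taking $\mu\downarrow 0$ as far as \eqref{eq:compatibility2} is concerned) the best possible version of \eqref{eq:compatibility2} is $4-\beta-(1-\mu)\,2\tilde\sigma^{-1}>0$. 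Wait — one must be careful that $\mu$ also appears in \eqref{eq:compatibility1} and \eqref{eq:compatibility3}; the point is that all three inequalities improve as $\mu$ decreases and as $\nu$ decreases subject to $\mu<\nu$, so we may send $\mu,\nu\downarrow0$ simultaneously, at which point \eqref{eq:compatibility1} is free and we are left with
\begin{align*}
4-\beta-2\tilde\sigma^{-1}&>0,\\
1-4\beta^{-1}t^2-\tilde\sigma^2&>0.
\end{align*}
Hmm, but this seems to reproduce the claimed threshold only loosely; the cleaner route is \emph{not} to take $\alpha$ optimal in isolation. Instead I would keep $\alpha$ and $\beta$ coupled: from \eqref{eq:compatibility2} with $\mu\to0$ we need $\beta<4-\alpha-\tilde\sigma^{-2}\alpha^{-1}$, and from \eqref{eq:compatibility3} with $\mu\nu\to0$ we need $\beta>4t^2/(1-\tilde\sigma^2)$; these are compatible for some $\beta>0$ iff $4t^2/(1-\tilde\sigma^2)<4-\alpha-\tilde\sigma^{-2}\alpha^{-1}$ for \emph{some} $\alpha>0$, i.e. iff $4t^2/(1-\tilde\sigma^2)<4-2\tilde\sigma^{-1}$, using the minimum $2\tilde\sigma^{-1}$ of $\alpha+\tilde\sigma^{-2}\alpha^{-1}$. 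Also we need $1-\tilde\sigma^2>0$ and $4-2\tilde\sigma^{-1}>0$, i.e. $\tilde\sigma>\frac12$; together with \eqref{eq:compatibility4}'s lower bound this gives $\tilde\sigma^2>\tfrac14$. So the reduced condition is
\begin{equation*}
\tfrac14<\tilde\sigma^2<1\quad\text{and}\quad \frac{4t^2}{1-\tilde\sigma^2}<4-\frac{2}{\tilde\sigma}.
\end{equation*}
I would then simplify the last inequality. Writing it as $t^2<(1-\tilde\sigma^2)\bigl(1-\tfrac1{2\tilde\sigma}\bigr)$ and maximizing the right side over $\tilde\sigma\in(\tfrac12,1)$ should yield the threshold $t\le\tfrac12$ of part (i) (with the maximum attained in the interior, forcing $-\re s\le\tfrac12|s|$), while for fixed $t<\tfrac12$ the admissible $\tilde\sigma$ form an interval whose upper endpoint I expect to simplify to $\tfrac34-2(\re s/|s|)^2$ after algebra — this is part (iii). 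One should double-check the endpoint $t=\tfrac12$ and the non-strict inequality in (i) carefully, as well as whether the lower cutoff $\tfrac14$ on $\tilde\sigma^2$ is ever the binding one.

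\textbf{From the $\tilde\sigma$-condition to $\Omega_\sigma$ and $\Omega$.} Once parts (i) and (iii) are established, part (ii) is a change of variables: fix $\sigma=2|s|\tilde\sigma$, so $\tilde\sigma^2=\sigma^2/(4|s|^2)=\sigma^2/(4(x^2+y^2))$ with $s=x+iy$, $x=\re s$. Substituting into $\tfrac14<\tilde\sigma^2$ gives $x^2+y^2<\sigma^2$; substituting into $\tilde\sigma^2<\tfrac34-2x^2/(x^2+y^2)$ and clearing denominators gives $\sigma^2<3(x^2+y^2)-8x^2=3y^2-5x^2$, i.e. $3y^2-5x^2>\sigma^2$. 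Together with $x<0$ this is exactly the first defining region of $\Omega_\sigma$; the $x\ge0$ region was handled in the first case. Finally, for $\Omega=\bigcup_\sigma\Omega_\sigma$: a point $s=x+iy\ne0$ lies in some $\Omega_\sigma$ iff we can choose $\sigma$ with (for $x<0$) $\sigma^2<\min\{x^2+y^2,\,3y^2-5x^2\}$ and $\sigma^2>0$, which is possible iff $3y^2-5x^2>0$, i.e. $y^2/x^2>5/3$, i.e. $|\arg s|>\arctan\sqrt{5/3}$; one checks $\tan(2\pi/3)=-\sqrt3$ does \emph{not} match $\sqrt{5/3}$, so I would recompute: the half-angle giving $|\arg z|<2\pi/3$ corresponds to $\arg z$ in $(-2\pi/3,2\pi/3)$, whose complement near the negative axis is the sector $|\arg z|\ge2\pi/3$, i.e. $x<0$ with $y^2\le 3x^2$ — so the excluded set should be $3y^2\le 5x^2$ versus $y^2\le3x^2$; these differ, so I expect the correct reconciliation is that the union over \emph{both} pieces of $\Omega_\sigma$ (the $x<0$ piece giving $3y^2-5x^2>0$ and, crucially, the $x\ge0$ piece giving all of the right half-plane minus the origin) produces $\{|\arg z|<2\pi/3\}$ only after noticing that $\arctan\sqrt{5/3}$... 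I would verify this numeric identity explicitly as the last step. The main obstacle I anticipate is precisely this final bookkeeping — making sure the boundary cases and the exact angle $2\pi/3$ come out right — together with the algebraic simplification of $\max_{\tilde\sigma}(1-\tilde\sigma^2)(1-\tfrac1{2\tilde\sigma})$ to the clean threshold $\tfrac12$; the optimization over $\alpha,\beta,\mu,\nu$ itself is routine Young's-inequality calculus.
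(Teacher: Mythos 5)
Your reduction for $\re s<0$ contains a sign error that is fatal to recovering the stated threshold. The claim that ``all three inequalities improve as $\mu$ decreases'' is false for \eqref{eq:compatibility2}: that inequality reads $4-\beta-(1-\mu)(\alpha+\tilde\sigma^{-2}\alpha^{-1})>0$, so decreasing $\mu$ \emph{increases} the factor $(1-\mu)$ and makes the inequality harder, while decreasing $\mu$ (with $\nu\downarrow\mu$) makes \eqref{eq:compatibility3} easier. The two constraints pull $\mu$ in opposite directions, so $\mu$ must be optimized at an interior value, not sent to $0$. Quantitatively, your reduced condition $t^2<(1-\tilde\sigma^2)\bigl(1-\tfrac{1}{2\tilde\sigma}\bigr)$ with $t=|\re s|/|s|$ has maximum over $\tilde\sigma\in(\tfrac12,1)$ at the root of $4\tilde\sigma^3-\tilde\sigma^2-1=0$ (about $\tilde\sigma\approx 0.72$), where the value is $\approx 0.147$, not $\tfrac14$; so this route only reaches $t\lesssim 0.38$ rather than $t\le\tfrac12$, and it does not reproduce the interval $\tfrac14<\tilde\sigma^2<\tfrac34-2t^2$ of part (iii) either. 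The correct argument (as in the paper's Appendix A) keeps $\mu$ free, reduces existence of $\tilde\sigma$ to $t^2<\tfrac{\beta}{4}\bigl(1-\mu^2-\tfrac{4(1-\mu)^2}{(4-\beta)^2}\bigr)$, maximizes over $\mu$ to get $\mu=\tfrac{1}{1+b^2}$ with $b=2-\tfrac{\beta}{2}$, and then over $b$ to get $b=1$, i.e.\ $\beta=2$, $\mu=\tfrac12$, $\alpha=2$; this yields exactly $t^2<\tfrac14$ and the range $\tfrac14<\tilde\sigma^2<\tfrac34-2t^2$.

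The final step also has a set-up error you did not resolve. For $x=\re s<0$, membership of $s$ in $\Omega_\sigma$ requires $\sigma^2$ to be \emph{sandwiched}: $x^2+y^2<\sigma^2<3y^2-5x^2$. You cannot take $\sigma$ small, because $x^2+y^2<\sigma^2$ is a lower bound on $\sigma^2$; the condition for $s$ to lie in some $\Omega_\sigma$ is therefore $x^2+y^2<3y^2-5x^2$, i.e.\ $y^2>3x^2$, which for $x<0$ is precisely $|\arg s|<\pi-\arctan\sqrt3=\tfrac{2\pi}{3}$. Your condition $3y^2-5x^2>0$ comes from dropping the lower constraint, and this is exactly the source of the $\sqrt{5/3}$ versus $\sqrt3$ discrepancy you flagged but left unverified. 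The $\re s\ge 0$ case and the change of variables $\sigma=2|s|\tilde\sigma$ in part (ii) are handled correctly.
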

\begin{proof}
	Appendix \ref{app:pfoptparam}.
\end{proof}
By combining Proposition \ref{prop:pregev} with Lemma \ref{lm:compatibility}, we immediately obtain:
\begin{corollary}
\label{cor:gevcompat}
Let $\kappa\geq 0$, $M=1$, $\ell,l\in \N_0$, $N_{\infty}\in \N_0$, with $N_{\infty}>\lf$. Let $\sigma\in \R$ and let $s\in \Omega_{\sigma}$. Then, there exists a constant $C_{s,\sigma}=C(s,\sigma)>0$, such that:
\begin{multline}
\label{eq:gevcompat}
\sum_{n=\lf}^{N_{\infty}}\frac{\sigma^{2n}}{n!^2(n+1)!^2}\int_0^{x_0} x^2(x+\kappa)^2  |\hat{\psi}_{\ell}^{(n+2)}|^2\,dx+\sum_{n=\lf}^{N_{\infty}}\frac{\sigma^{2n}}{n!^2(n+1)!^2}(n+1)x_0^3 |\hpsi^{(n+1)}|^2\big|_{x=x_0}\\
+\sum_{n=\lf}^{N_{\infty}}\frac{\sigma^{2n}}{n!^2(n+1)!^2}\int_0^{x_0} \left[|2s|^2+(n+1)^2 \left(x+\frac{1}{2}\kappa\right)^2\right]|\hat{\psi}_{\ell}^{(n+1)}|^2\,dx\\
\boxed{-C_{s,\sigma}\frac{|2s|^{2(N_{\infty}+1)}\tilde{\sigma}^{2(N_{\infty}+1)}}{(N_{\infty}+1)!^2(N_{\infty}+2)!^2}\int_0^{x_0} (N_{\infty}+2)^2 \left(x+\frac{1}{x^{-1}+2\kappa^{-1}}\right)^2|\hat{\psi}_{\ell}^{(N_{\infty}+2)}|^2\,dx}\\
\leq C_{s,\sigma}\sum_{n=\lf}^{N_{\infty}}\frac{\sigma^{2n}}{n!^2(n+1)!^2}\int_0^{x_0}|\partial_x^n(\mathfrak{L}_{s,\lambda,\kappa}\hpsi)_{\ell}|^2\,dx.
\end{multline}
\end{corollary}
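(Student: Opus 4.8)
\textbf{Proof proposal for Corollary \ref{cor:gevcompat}.}

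The plan is to obtain \eqref{eq:gevcompat} as a direct consequence of Proposition \ref{prop:pregev} combined with Lemma \ref{lm:compatibility}, so the argument is essentially a matter of choosing the free parameters $\tilde\sigma,\alpha,\beta,\mu,\nu$ in the estimate \eqref{eq:pregev} appropriately. First I would fix $\sigma\in\R$ and $s\in\Omega_\sigma$, and invoke Lemma \ref{lm:compatibility}: part (ii) guarantees that there exist $\tilde\sigma,\alpha,\beta,\mu,\nu>0$ (with $\mu\in(0,1)$) satisfying the compatibility conditions \eqref{eq:compatibility1}--\eqref{eq:compatibility4}, where the relation between $\tilde\sigma$ and $\sigma$ is exactly $2|s|\tilde\sigma=\sigma$. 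Note that this uses $s\ne 0$ (which holds on $\Omega_\sigma$), so that $\tilde\sigma$ is determined from $\sigma$ and $|s|$, and also uses part (i) of the lemma, $-\re s\le\tfrac12|s|$, which is automatic on $\Omega_\sigma$ since $\Omega_\sigma\subset\{|\arg z|<\tfrac{2\pi}{3}\}=\{-\re z<\tfrac12|z|\cdot\tfrac{1}{\,}\}$ — more precisely $|\arg z|<\tfrac{2\pi}{3}$ forces $\re z>-\tfrac12|z|$ when one also imposes the defining inequalities of $\Omega_\sigma$; in any case the claimed equivalence in Lemma \ref{lm:compatibility} gives the required parameters.

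With such a choice made, the conditions \eqref{eq:compatibility2} and \eqref{eq:compatibility3} (respectively \eqref{eq:compatibility4} when $\re s\ge 0$, with $\mu=0$) ensure that the bracketed prefactors
\[
4-\beta-(1-\mu)(\alpha+\tilde\sigma^{-2}\alpha^{-1})-6(n+1)^{-1}
\]
and
\[
1-\mu\nu-4\beta^{-1}\Big(\tfrac{\max\{-\re s,0\}}{|s|}\Big)^2-(1+\epsilon)\tilde\sigma^2\big(1+2(n-1)^{-1}\big)^2
\]
appearing on the left-hand side of \eqref{eq:pregev} are bounded below by a strictly positive constant, \emph{uniformly in $n\ge\lf$}, provided $\lf$ (hence $n$) is taken suitably large and $\epsilon>0$ is taken suitably small. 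Indeed, \eqref{eq:compatibility2} gives a strict inequality $4-\beta-(1-\mu)(\alpha+\tilde\sigma^{-2}\alpha^{-1})>0$, and the $6(n+1)^{-1}$ correction tends to $0$; likewise \eqref{eq:compatibility3} gives $1-\mu\nu-4\beta^{-1}(|\re s|/|s|)^2-\tilde\sigma^2>0$ with room to spare, and since $(1+\epsilon)(1+2(n-1)^{-1})^2\to 1$ as $\epsilon\downarrow0$, $n\to\infty$, the second bracket is also bounded below. Similarly $1-\mu\nu^{-1}$ is a fixed positive constant by \eqref{eq:compatibility1}. I would then absorb these positive lower bounds into the constant and, after substituting $2|s|\tilde\sigma=\sigma$ so that $|2s|^2\tilde\sigma^{2n}=\sigma^{2n}/(2|s|)^{2n-2}\cdot(2|s|)^{2}\cdot$(correction) — more cleanly, $\tfrac{|2s|^2\tilde\sigma^{2n}}{n!^2(n+1)!^2}=\tfrac{\sigma^{2n}}{(2|s|)^{2n-2}n!^2(n+1)!^2}$, which we re-index, noting that the extra powers of $(2|s|)^{\pm}$ can be absorbed into $C_{s,\sigma}$ — we arrive at the summed estimate with the weights $\tfrac{\sigma^{2n}}{n!^2(n+1)!^2}$ displayed in \eqref{eq:gevcompat}. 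The $|2s|^2$ and $(n+1)^2(x+\tfrac12\kappa)^2$ coefficients in the third line of \eqref{eq:gevcompat} then come from recombining the corresponding good terms of \eqref{eq:pregev}, and the boxed top-order error term and the boundary term at $x=x_0$ carry over verbatim (up to the constant $C_{s,\sigma}$).

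The only genuinely nontrivial input is Lemma \ref{lm:compatibility}, whose proof is deferred to Appendix \ref{app:pfoptparam}; granting it, everything here is bookkeeping. The one point requiring a little care is the \emph{uniformity in $\lf$ and in $n\ge\lf$} of the lower bounds on the bracketed prefactors: one must check that the threshold on $n$ beyond which $6(n+1)^{-1}$ and $(1+\epsilon)(1+2(n-1)^{-1})^2-1$ are small enough depends only on $s$ and $\sigma$ (through the fixed gap in \eqref{eq:compatibility2}--\eqref{eq:compatibility3}) and not on $\ell,l$ separately; since $\lf\ge l$ for all $\ell$, it suffices to take $l$ large, which is permitted. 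If $\lf$ is not already large one simply enlarges $C_{s,\sigma}$ to absorb the finitely many small-$n$ terms (where the brackets may fail to be positive) into the right-hand side via the inhomogeneous term and the already-controlled top-order term. This completes the proof.
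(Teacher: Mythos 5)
Your proposal follows exactly the route the paper takes: Corollary \ref{cor:gevcompat} is obtained by choosing the parameters $\tilde\sigma,\alpha,\beta,\mu,\nu$ via Lemma \ref{lm:compatibility} (with $2|s|\tilde\sigma=\sigma$) so that the unboxed brackets in \eqref{eq:pregev} are positive, and then reading off the summed estimate; this is precisely the paper's one-line proof, so in substance you have reproduced it.

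Two caveats on the bookkeeping, though. First, your remark that the mismatch between the weights $|2s|^{2}\tilde\sigma^{2n}$ and $\sigma^{2n}$ can be fixed by ``absorbing the extra powers of $(2|s|)^{\pm}$ into $C_{s,\sigma}$'' is not legitimate as stated: the discrepancy is an $n$-dependent factor $(2|s|)^{2n-2}$, and since the proof of Proposition \ref{prop:pregev} couples adjacent values of $n$ (the boxed top-order term and the $\hat\psi^{(n)}$ term are absorbed into the $(n\pm1)$-summands), one cannot rescale the weights summand by summand. The resolution is that the summation in the proof of Proposition \ref{prop:pregev} is actually performed with the weights $(|2s|\tilde\sigma)^{2n}/(n!^2(n+1)!^2)=\sigma^{2n}/(n!^2(n+1)!^2)$ (the exponent in the displayed statement is a typo, as the boxed term with $|2s|^{2(N_\infty+1)}\tilde\sigma^{2(N_\infty+1)}$ confirms), so no conversion is needed. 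Second, your fallback for the case of small $\lf$ --- ``enlarge $C_{s,\sigma}$ to absorb the finitely many small-$n$ terms into the right-hand side'' --- is not a valid step: for those $n$ the bracketed coefficients on the left of \eqref{eq:pregev} may be negative, and the quantities $\int(x+\tfrac12\kappa)^2(n+1)^2|\hat\psi^{(n+1)}|^2$, $\int|2s|^2|\hat\psi^{(n+1)}|^2$ that the corollary claims to control are then simply not controlled by \eqref{eq:pregev}; enlarging a multiplicative constant on the right cannot produce that control. The paper's argument (and its use of the corollary downstream) relies on taking $\lambda$, hence $\lf\geq l$, sufficiently large depending on $s,\sigma$ and $\epsilon$ sufficiently small, exactly as you describe in your main line; the corollary should be read with that implicit largeness assumption rather than patched in the way you suggest.
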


We now restrict to the case $\kappa>0$ to absorb the boxed boundary term in \eqref{eq:gevcompat} and arrive at a closed, uniform estimate in $\kappa$:
\begin{corollary}
\label{cor:maingev}
Let $\kappa> 0$, $M=1$, $\ell,l\in \N_0$, $N_{\infty}\in \N_0$, with $N_{\infty}>\lf$. Let $\sigma\in \R$ and let $s\in \Omega_{\sigma}$. Then there exists a constant $C_{s,\sigma}=C(s,\sigma)>0$ and a $N_{\kappa}>\lf$, such that for all $N_{\infty}\geq N_{\kappa}$:
\begin{multline}
\label{eq:gevcompat}
\sum_{n=\lf}^{N_{\infty}}\frac{\sigma^{2n}}{n!^2(n+1)!^2}\int_0^{x_0} x^2(x+\kappa)^2  |\hat{\psi}_{\ell}^{(n+2)}|^2\,dx+\sum_{n=\lf}^{N_{\infty}}\frac{\sigma^{2n}}{n!^2(n+1)!^2}(n+1)x_0^3 |\hpsi^{(n+1)}|^2\big|_{x=x_0}\\
+\sum_{n=\lf}^{N_{\infty}}\frac{\sigma^{2n}}{n!^2(n+1)!^2}\int_0^{x_0} \left[|2s|^2+(n+1)^2 \left(x+\frac{1}{2}\kappa\right)^2\right]|\hat{\psi}_{\ell}^{(n+1)}|^2\,dx\\
\leq C_{s,\sigma}\sum_{n=\lf}^{N_{\infty}}\frac{\sigma^{2n}}{n!^2(n+1)!^2}\int_0^{x_0}|\partial_x^n(\mathfrak{L}_{s,\lambda,\kappa}\hpsi)_{\ell}|^2\,dx.
\end{multline}
\end{corollary}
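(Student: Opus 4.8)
\textbf{Proof proposal for Corollary \ref{cor:maingev}.}

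The plan is to start from the estimate \eqref{eq:gevcompat} of Corollary \ref{cor:gevcompat}, which already gives all the desired terms on the left-hand side, but also carries the harmful boxed boundary-type term
\[
C_{s,\sigma}\frac{|2s|^{2(N_{\infty}+1)}\tilde{\sigma}^{2(N_{\infty}+1)}}{(N_{\infty}+1)!^2(N_{\infty}+2)!^2}\int_0^{x_0} (N_{\infty}+2)^2 \left(x+\frac{1}{x^{-1}+2\kappa^{-1}}\right)^2|\hat{\psi}_{\ell}^{(N_{\infty}+2)}|^2\,dx
\]
appearing with a bad sign on the left. The key point is that, since we are now assuming $\kappa>0$, the quantity $x+\frac{1}{x^{-1}+2\kappa^{-1}}$ is comparable to $x(x+\kappa)\cdot\kappa^{-1}$ near $x=0$, so this integral is controlled (up to a $\kappa$-dependent constant) by $\int_0^{x_0}x^2(x+\kappa)^2|\hat{\psi}_{\ell}^{(N_{\infty}+2)}|^2\,dx$, i.e.\ by one of the \emph{good} terms on the left-hand side of \eqref{eq:gevcompat} but with $n=N_{\infty}$. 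So the strategy is: apply \eqref{eq:gevcompat} with cutoff parameter $N_{\infty}$, then also apply \eqref{eq:gevcompat} (or rather the underlying per-$n$ estimate) one step higher, with cutoff $N_{\infty}+1$, and use the latter to absorb the boxed term of the former. Iterating this, or equivalently observing that the boxed term is weighted by the rapidly decaying factor $\frac{\sigma^{2(N_{\infty}+1)}}{(N_{\infty}+1)!^2(N_{\infty}+2)!^2}$, one sees that for $N_{\infty}$ larger than some threshold $N_{\kappa}$ (depending on $\kappa$ through the comparison constant, as well as on $s$ and $\sigma$) the coefficient of the boxed term can be made strictly smaller than the coefficient of the corresponding good term, and the absorption closes.

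In more detail, I would first record the elementary inequality
\[
\left(x+\frac{1}{x^{-1}+2\kappa^{-1}}\right)^2 = \left(\frac{x^2(x+\kappa)}{\ldots}\right)^2 \le \kappa^{-2} x^2(x+\kappa)^2 \quad\text{for } 0\le x\le x_0,
\]
which identifies the boxed term as bounded by $C_{s,\sigma}\kappa^{-2}\frac{|2s|^{2(N_{\infty}+1)}\tilde{\sigma}^{2(N_{\infty}+1)}}{(N_{\infty}+1)!^2(N_{\infty}+2)!^2}(N_{\infty}+2)^2\int_0^{x_0}x^2(x+\kappa)^2|\hat{\psi}_{\ell}^{(N_{\infty}+2)}|^2\,dx$. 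Meanwhile the corresponding good term in \eqref{eq:gevcompat} for the cutoff $N_{\infty}+1$ is $\frac{\sigma^{2(N_{\infty}+1)}}{(N_{\infty}+1)!^2(N_{\infty}+2)!^2}\int_0^{x_0}x^2(x+\kappa)^2|\hat{\psi}_{\ell}^{(N_{\infty}+2)}|^2\,dx$, which differs from the bound on the boxed term only by the factor $C_{s,\sigma}\kappa^{-2}(N_{\infty}+2)^2$ (using $2|s|\tilde\sigma=\sigma$). This factor is \emph{polynomial} in $N_{\infty}$ and \emph{fixed} in $\kappa$, so it cannot be absorbed for a single $N_{\infty}$; but applying \eqref{eq:gevcompat} at level $N_{\infty}+1$ produces in turn its own boxed term at level $N_{\infty}+2$, again smaller by the rapidly decaying weight ratio $\frac{\sigma^2}{(N_{\infty}+3)^2(N_{\infty}+2)^2}$ times $(N_{\infty}+3)^2$. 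Chaining these, the contributions of the tail of boxed terms form a convergent geometric-type series provided $N_{\kappa}$ is chosen so that $C_{s,\sigma}\kappa^{-2}\frac{\sigma^2 (N+3)^2}{(N+3)^2(N+2)^2}=C_{s,\sigma}\kappa^{-2}\frac{\sigma^2}{(N+2)^2}<\tfrac12$ for all $N\ge N_{\kappa}$; this is exactly the role of the hypothesis $N_{\infty}\ge N_{\kappa}>\lf$. Summing the telescoped inequalities and discarding the (now absorbed) boxed terms yields \eqref{eq:gevcompat} as stated.

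The main obstacle, and the only place requiring care, is making the absorption argument uniform: one must check that the finitely many ``seed'' applications of Corollary \ref{cor:gevcompat} at the largest cutoff do not themselves leave an uncontrolled top-order boundary term, and that the constant $C_{s,\sigma}$ produced after absorption does not blow up. Both follow because the weight ratios $\frac{\sigma^{2(n+1)}/((n+1)!^2(n+2)!^2)}{\sigma^{2n}/(n!^2(n+1)!^2)}=\frac{\sigma^2}{(n+1)^2(n+2)^2}$ are summable against the polynomial factors, so the chained series converges with a limit bounded by a constant depending only on $s$, $\sigma$, and $N_{\kappa}$ (equivalently on $\kappa$) — not on $N_{\infty}$. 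One subtlety worth flagging: the threshold $N_{\kappa}$ grows like $\kappa^{-1}$ (since we need $(N_{\kappa}+2)^2\gtrsim C_{s,\sigma}\kappa^{-2}\sigma^2$), so the resulting estimate is \emph{not} uniform in $\kappa$ at fixed $N_{\infty}$; it is only the \emph{structure} \eqref{eq:gevcompat} that holds for each $\kappa>0$, which is all that is needed here since uniformity in $\kappa$ is recovered later when these Gevrey estimates are coupled to the red-shift and elliptic estimates and the $\kappa\downarrow 0$ limit is taken.
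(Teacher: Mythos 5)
Your key inequality is the right one: for $\kappa>0$ one has $x+\frac{1}{x^{-1}+2\kappa^{-1}}=\frac{x(x+\kappa)}{x+\kappa/2}\leq 2\kappa^{-1}\cdot\kappa x\leq 2\kappa^{-1}x(x+\kappa)$, so the boxed term of Corollary \ref{cor:gevcompat} is controlled by $\kappa^{-2}$ times a ``good'' quantity, and your final threshold $N_\kappa\sim\sqrt{C_{s,\sigma}}\,\sigma\kappa^{-1}$ is also the correct one. But the absorption is aimed at the wrong term, and the detour you take to compensate does not prove the statement as written. The boxed term involves $\hat\psi_\ell^{(N_\infty+2)}$ with weight $\frac{\sigma^{2(N_\infty+1)}}{(N_\infty+1)!^2(N_\infty+2)!^2}$; the term you should absorb it into is the $n=N_\infty$ summand of the \emph{first sum on the left-hand side of \eqref{eq:gevcompat} itself}, namely $\frac{\sigma^{2N_\infty}}{N_\infty!^2(N_\infty+1)!^2}\int_0^{x_0}x^2(x+\kappa)^2|\hat\psi_\ell^{(N_\infty+2)}|^2\,dx$, which carries the \emph{same} derivative $\hat\psi^{(N_\infty+2)}$ but a weight larger by the factor $\frac{(N_\infty+1)^2(N_\infty+2)^2}{\sigma^2}$. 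The resulting ratio is $\frac{4C_{s,\sigma}\sigma^2\kappa^{-2}}{(N_\infty+1)^2}<1$ once $N_\infty\geq N_\kappa$, and a \emph{single} absorption closes the proof — this is exactly what the paper does. In your write-up you instead compare against a term with weight $\frac{\sigma^{2(N_\infty+1)}}{(N_\infty+1)!^2(N_\infty+2)!^2}$ and derivative $\hat\psi^{(N_\infty+2)}$; no such term appears in any of the sums (at index $n$ the first sum pairs the weight $\frac{\sigma^{2n}}{n!^2(n+1)!^2}$ with $\hat\psi^{(n+2)}$), and this weight/derivative mismatch is what makes you conclude, incorrectly, that one-step absorption fails.

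The chaining you then set up to repair this is genuinely problematic, not merely inelegant. Each application of Corollary \ref{cor:gevcompat} at cutoff $N_\infty+k$ adds the inhomogeneous terms $\int_0^{x_0}|\partial_x^n(\mathfrak{L}_{s,\lambda,\kappa}\hpsi)_\ell|^2\,dx$ for $n$ up to $N_\infty+k$ to the right-hand side, so the limit of your telescoped inequality has the full infinite sum over $n$ on the right — but the statement you are proving truncates the right-hand side at $n=N_\infty$, and this truncation is used downstream (e.g.\ in the proof of Proposition \ref{prop:accumulation}, where the finite-$N_\infty$ estimates for the $\kappa$-regularized operator are applied before passing to the limit). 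The infinite chain also tacitly assumes that the weighted sums of $\hat\psi^{(n)}$ at all orders $n$ are finite, which is not given. Replacing the chain by the single absorption described above removes both issues and yields the corollary exactly as stated.
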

\begin{proof}
Note that
\begin{equation*}
	C_{s,\sigma}\frac{|2s|^{2}\tilde{\sigma}^{2}}{(N_{\infty}+1)^2(N_{\infty}+2)^2}(N_{\infty}+2)^2 \left(x+\frac{1}{x^{-1}+2\kappa^{-1}}\right)^2< 4\frac{C_{s,\sigma}|2s|^{2}\tilde{\sigma}^{2} \kappa^{-2}}{(N_{\infty}+1)^2}\kappa^2x^2,
\end{equation*}
so, for $N_{\infty}+1> 4 \sqrt{C_{s,\sigma}}|2s\|\tilde{\sigma}|\kappa^{-1}=:N_{\kappa}+1$, we can absorb the boxed term in \eqref{eq:gevcompat} into the first term on the right-hand side of \eqref{eq:gevcompat}.
\end{proof}

\begin{remark}
Note that the estimate \eqref{eq:gevcompat} is \emph{closed}, i.e.\ there are only terms involving $(\mathfrak{L}_{s,\lambda,\kappa}\hpsi)_{\ell}$ on the right-hand side. This means that we can control weighted $L^2$-norms of $\hpsi^{(n)}_{\ell}$ with $n\geq \lf+1$ in $x\leq x_0$ \underline{independently} from $\hpsi^{(n)}_{\ell}$ with $n\leq \lf$. 
\end{remark}

\subsection{Step 2: lower-order derivatives}
\label{sec:gevreylo}
In this section, we will establish estimates for appropriately weighted $\dot{H}^1$ estimates in the region $\{0<x\leq x_0\}$ for $\hpsi^{(n)}$ with $n\leq \lf$ in terms of a $\dot{H^1}$ norm for $\hpsi^{(\lf+1)}$ and lower-order boundary terms.

The estimates in this section rely on a different rearranging of \eqref{eq:commeq}, compared to Step 1:
\begin{equation*}
x(\kappa+x)\partial_{x}\hpsi^{(n+1)}_{\ell}+2(n+1) \left(x+\frac{\kappa}{2}\right)\partial_x \hpsi_{\ell}^{(n)}-(\mathfrak{l}(\mathfrak{l}+1)-n(n+1))\hpsi_{\ell}^{(n)}=-2 s\partial_{x}\hpsi^{(n)}_{\ell}+(\partial_x^n\mathfrak{L}_{s,\lambda,\kappa}\hpsi)_{\ell}.
\end{equation*}
The main estimates are obtained by considering the square norm of both sides of the equation and summing over $n$ with a appropriately weighted sum.

\begin{proposition}
\label{prop:mainestloder}
Let $\kappa\leq x_0$ and $\gamma>1$. Then there exists a constant $C=C(\gamma)>0$ such that
\begin{multline}
\label{eq:mainestloder}
\sum_{n=0}^{\lf-1}\frac{\gamma^{2n} |2s|^{2n}(\lf-(n+1))!^2}{(\lf+n+1)!^2}\int_0^{x_0}\int_{\s^2}(x^4+\kappa^2x^2)|\hat{\psi}^{(n+2)}|^2+(x^2(\lf+1)^2+\kappa^2n(n+1)+|2s|^2)|\hat{\psi}^{(n+1)}|^2\\
+(\lf-n)^2(\lf+n+1)^2|\hat{\psi}^{(n)}|^2\,d\upsigma dx\\
\leq C\frac{|2s|^{2(\lf-1)}\gamma^{2(\lf-1)}}{(2\lf!)^2} \int_0^{x_0}\int_{\s^2} |\hat{\psi}^{(\lf+1)}|^2\,dx+C\frac{|2s|^{2(\lf-1)}\gamma^{2(\lf-1)}}{(2\lf!)^2}\int_{\s^2}|\hat{\psi}^{(\lf)}|^2|_{x=x_0}\,d\upsigma\\
+C\sum_{n=0}^{\lf-1}\frac{\gamma^{2n} |2s|^{2n}(\lf-(n+1))!^2}{(\lf+n+1)!^2}(\lf-n)(\lf+n+1)(\lf+1)^2 x_0^{-1}\int_{\s^2}|\hpsi^{(n)}|^2|_{x=x_0}d\upsigma\\
+C\sum_{n=0}^{\lf-1}\frac{\gamma^{2n} |2s|^{2n}(\lf-(n+1))!^2}{(\lf+n+1)!^2}\int_0^{x_0}\int_{\s^2}|\partial_x^n\mathfrak{L}_{s,\lambda,\kappa}\hpsi|^2\,d\upsigma dx.
\end{multline}

Furthermore, there exists a constant $C=C(\gamma,|s|)>0$ such that
\begin{multline}
\label{eq:boundestetay0}
(\lf+1)^{-(4N_++1)}  \sum_{n=0}^{N_++1} \int_{\s^2}x |\hat{\psi}^{(n)}|^2|_{x=\eta x_0}d\upsigma\leq C \frac{|2s|^{2(\lf-1)}\gamma^{2(\lf-1)}}{(2\lf!)^2} \int_0^{x_0}\int_{\s^2} |\hat{\psi}^{(\lf+1)}|^2\,dx\\
+C\frac{|2s|^{2(\lf-1)}\gamma^{2(\lf-1)}}{(2\lf!)^2}\int_{\s^2}|\hat{\psi}^{(\lf)}|^2|_{x=x_0}\,d\upsigma\\
+C\sum_{n=0}^{\lf-1}\frac{\gamma^{2n} |2s|^{2n}(\lf-(n+1))!^2}{(\lf+n+1)!^2}(\lf-n)(\lf+n+1)(\lf+1)^2 x_0^{-1}\int_{\s^2}|\hpsi^{(n)}|^2|_{x=x_0}d\upsigma\\
+C\sum_{n=0}^{\lf-1}\frac{\gamma^{2n} |2s|^{2n}(\lf-(n+1))!^2}{(\lf+n+1)!^2}\int_0^{x_0}\int_{\s^2}|\partial_x^n\mathfrak{L}_{s,\lambda,\kappa}\hpsi|^2\,d\upsigma dx.
\end{multline}
\end{proposition}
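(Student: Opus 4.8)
\textbf{Proof proposal for Proposition \ref{prop:mainestloder}.}

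The plan is to exploit the rearrangement
\begin{equation*}
x(\kappa+x)\partial_{x}\hpsi^{(n+1)}_{\ell}+2(n+1) \left(x+\tfrac{\kappa}{2}\right)\partial_x \hpsi_{\ell}^{(n)}-(\lf(\lf+1)-n(n+1))\hpsi_{\ell}^{(n)}=-2 s\partial_{x}\hpsi^{(n)}_{\ell}+(\partial_x^n\mathfrak{L}_{s,\lambda,\kappa}\hpsi)_{\ell},
\end{equation*}
valid in $\{0<x\leq x_0\}$ by \eqref{eq:commeq}. First I would fix $\ell,l$, work at fixed spherical harmonic mode (dropping the subscript $\ell$), take the square norm of both sides, and integrate over $[0,x_0]\times\s^2$. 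The left-hand side produces the three ``diagonal'' positive terms $x^2(\kappa+x)^2|\hpsi^{(n+1)}|^2$, $4(n+1)^2(x+\tfrac\kappa2)^2|\hpsi^{(n)}|^2$ (which I will mostly discard or keep as a lower-order gain) and $(\lf-n)^2(\lf+n+1)^2|\hpsi^{(n)}|^2$ coming from the factor $\lf(\lf+1)-n(n+1)=(\lf-n)(\lf+n+1)$, together with cross terms. The cross term between the first and third left-hand terms is the key one: $-2(n+1)(\lf-n)(\lf+n+1)(x+\tfrac\kappa2)\,\re(\overline{\hpsi^{(n)}}\,x(\kappa+x)\partial_x\hpsi^{(n+1)})$ — no, more cleanly, the cross term between $x(\kappa+x)\partial_x\hpsi^{(n+1)}$ and $-(\lf-n)(\lf+n+1)\hpsi^{(n)}$ is $-2(\lf-n)(\lf+n+1)\,\re\big(x(\kappa+x)\partial_x\hpsi^{(n+1)}\,\overline{\hpsi^{(n)}}\big)$; since $\partial_x\hpsi^{(n)}=\hpsi^{(n+1)}$ this is $-2(\lf-n)(\lf+n+1)\,\re\big(x(\kappa+x)\partial_x(\tfrac12|\hpsi^{(n)}|^2)'\big)$-type — it should be integrated by parts in $x$, moving a derivative off $\hpsi^{(n+1)}$, producing a boundary term at $x=x_0$ (and $x=0$, which vanishes because of the $x(\kappa+x)$ weight) and a bulk term with one fewer $x$-derivative that is absorbable. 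Similarly the cross term $-2(n+1)(x+\tfrac\kappa2)(\lf-n)(\lf+n+1)\,\re(\overline{\hpsi^{(n)}}\hpsi^{(n)})$ — which is a sign-definite multiple of $|\hpsi^{(n)}|^2$ — is $-2(n+1)(x+\tfrac\kappa2)(\lf-n)(\lf+n+1)|\hpsi^{(n)}|^2$; this one has a bad sign but can be controlled via Young against the $(\lf-n)^2(\lf+n+1)^2|\hpsi^{(n)}|^2$ and the $(n+1)^2(x+\tfrac\kappa2)^2|\hpsi^{(n)}|^2$-wait, it is $|\hpsi^{(n)}|^2$ to the first power, so Young splits it as $\epsilon(\lf-n)^2(\lf+n+1)^2|\hpsi^{(n)}|^2+\epsilon^{-1}(n+1)^2(x+\tfrac\kappa2)^2|\hpsi^{(n)}|^2$ — but we have just discarded the latter, so I should keep the $4(n+1)^2(x+\tfrac\kappa2)^2|\hpsi^{(n)}|^2$ term and use it here. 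On the right-hand side the term $2s\partial_x\hpsi^{(n)}=2s\hpsi^{(n+1)}$ must be moved to the left when we sum in $n$: it is exactly the telescoping mechanism — the $|2s|^2|\hpsi^{(n+1)}|^2$ contribution from level $n$ matches (up to the summation weights) the $|2s|^2|\hpsi^{(n+1)}|^2$ ``diagonal'' term we want at level $n+1$.

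Second, I would perform the weighted sum $\sum_{n=0}^{\lf-1}\frac{\gamma^{2n}|2s|^{2n}(\lf-(n+1))!^2}{(\lf+n+1)!^2}[\,\cdot\,]$. The weights are engineered so that (i) the $|2s|^2|\hpsi^{(n+1)}|^2$ term produced at level $n$ as an inhomogeneity has smaller weight than the $|\hpsi^{(n+1)}|^2$ diagonal term wanted at level $n+1$, i.e.\ $\frac{\gamma^{2n}|2s|^{2n}(\lf-n-1)!^2}{(\lf+n+1)!^2}\cdot|2s|^2 \ll \frac{\gamma^{2(n+1)}|2s|^{2(n+1)}(\lf-n-2)!^2}{(\lf+n+2)!^2}\cdot(\text{coefficient})$ fails in the naive direction, so in fact the chain must be run the other way: the diagonal $|\hpsi^{(n+1)}|^2$ at level $n$ absorbs the inhomogeneous $2s\hpsi^{(n+1)}$ produced at level $n+1$, which requires $\frac{\gamma^{2(n+1)}|2s|^{2(n+1)}(\lf-n-2)!^2}{(\lf+n+2)!^2}\lesssim \gamma^{-1}\frac{\gamma^{2n}|2s|^{2n}(\lf-n-1)!^2}{(\lf+n+1)!^2}\cdot(\lf-n-1)^2(\lf+n+1)^2$; the ratio of consecutive factorial weights is $\frac{(\lf-n-2)!^2(\lf+n+1)!^2}{(\lf-n-1)!^2(\lf+n+2)!^2}=\frac{1}{(\lf-n-1)^2(\lf+n+2)^2}$, so indeed the factor $(\lf-n-1)^2(\lf+n+1)^2$ from the diagonal cancels the factorial drop, leaving $\gamma^2|2s|^2\cdot\frac{(\lf+n+1)^2}{(\lf+n+2)^2}\lesssim\gamma^2|2s|^2$; choosing $\gamma>1$ arbitrarily small — wait, $\gamma>1$ is required, so this is where the constant $C=C(\gamma,|s|)$ enters and one needs $\gamma$ close enough to $1$ but the geometric series in $\gamma^2|2s|^2\cdot(\text{bounded})$ still converges because the factorial ratio gives the genuine smallness. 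The net effect: after summation, all inhomogeneities feeding from lower levels are absorbed, the chain terminates at $n=\lf-1$ producing the single top term $\frac{|2s|^{2(\lf-1)}\gamma^{2(\lf-1)}}{(2\lf!)^2}\int|\hpsi^{(\lf+1)}|^2$ plus $\int_{\s^2}|\hpsi^{(\lf)}|^2|_{x=x_0}$, and the accumulated boundary terms at $x=x_0$ (with the stated $(\lf-n)(\lf+n+1)(\lf+1)^2x_0^{-1}$ weights coming from the by-parts steps and a Hardy/trace inequality on $[0,x_0]$) and the $\mathfrak{L}_{s,\lambda,\kappa}$ terms. The angular $\s^2$ integration is harmless throughout since $\partial_x$ commutes with $\slashed\Delta_{\s^2}$ and no angular by-parts is needed here.

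Third, for the boundary estimate \eqref{eq:boundestetay0}: having controlled $\sum_n w_n\int_0^{x_0}\int_{\s^2}|\hpsi^{(n+1)}|^2$ and $\sum_n w_n\int_0^{x_0}\int_{\s^2}(\lf-n)^2(\lf+n+1)^2|\hpsi^{(n)}|^2$ (with $w_n$ the above weights) by the left side of \eqref{eq:mainestloder}, I would bound the trace $\int_{\s^2}x|\hpsi^{(n)}|^2|_{x=\eta x_0}$ for $n\le N_++1$ by the fundamental theorem of calculus in $x$: $x|\hpsi^{(n)}|^2(\eta x_0)\le x_0|\hpsi^{(n)}|^2(x_0)+\int_0^{x_0}|\partial_x(x|\hpsi^{(n)}|^2)|\,dx \lesssim x_0|\hpsi^{(n)}|^2(x_0)+\int_0^{x_0}(|\hpsi^{(n)}|^2+x^2|\hpsi^{(n+1)}|^2)\,dx$, then convert the $|\hpsi^{(n)}|^2$ bulk integral for $n\le N_++1\le\lf$ into the controlled quantities by paying powers of $\lf$ (each factor $(\lf-n)^{-1}(\lf+n+1)^{-1}\lesssim \lf^{-1}$ for $n$ bounded is being \emph{removed}, hence the $(\lf+1)^{-(4N_++1)}$ prefactor absorbs the worst accumulation of weight ratios $w_0/w_n$). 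The $x_0|\hpsi^{(n)}|^2(x_0)$ and remaining $x=x_0$ boundary contributions feed into the boxed boundary terms on the right of \eqref{eq:boundestetay0}. I expect the main obstacle to be bookkeeping the exact powers of $\lf$ and $x_0$ in the summation: verifying that the factorial weights $\frac{(\lf-n-1)!^2}{(\lf+n+1)!^2}$ really do beat the growth $(n+1)^2$, $(\lf-n)^2(\lf+n+1)^2$ and $\gamma^2$ at every step uniformly in $n\le\lf-1$ and uniformly in $\lf$, and confirming that the terminal weight $\frac{\gamma^{2(\lf-1)}|2s|^{2(\lf-1)}}{(2\lf!)^2}$ is precisely what appears (the match with Step 1's weight $\frac{\sigma^{2\lf}}{\lf!^2(\lf+1)!^2}$ modulo $\gamma$ vs $\tilde\sigma$ is what eventually glues the lower- and higher-order regimes together in \S\ref{sec:fullycoupledestmates}).
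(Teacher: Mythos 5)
You follow the paper's own route: square the rearranged identity \eqref{eq:commeq}, integrate the cross terms by parts, and sum with the weights $w_n=\frac{\gamma^{2n}|2s|^{2n}(\lf-(n+1))!^2}{(\lf+n+1)!^2}$, which satisfy $w_{n+1}(\lf-(n+1))^2(\lf+n+2)^2=\gamma^2|2s|^2\,w_n$, so that the $(1+\epsilon)|2s|^2|\hpsi^{(n+1)}|^2$ inhomogeneity produced at level $n$ is absorbed by the zeroth-order diagonal at level $n+1$ precisely when $\gamma^2>1+\epsilon$; the chain terminates at $n=\lf-1$, where a Hardy inequality converts $\int|\hpsi^{(\lf)}|^2$ into $\int|\hpsi^{(\lf+1)}|^2$ plus a trace at $x=x_0$, and \eqref{eq:boundestetay0} follows by redoing one integration by parts on $[\eta x_0,x_0]$. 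This is exactly the paper's argument. (Your description of which level absorbs which is reversed — the inhomogeneity $2s\hpsi^{(n+1)}$ arises at level $n$ and is absorbed at level $n+1$ — but the quantitative condition you extract, namely that the factorial drop exactly cancels the diagonal coefficient leaving the factor $\gamma^2$, is the right one.)

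One step as written would fail, and your off-by-one labelling of the diagonal terms is what hides it. The squares of the three factors are $x^2(x+\kappa)^2|\hpsi^{(n+2)}|^2$, $4(n+1)^2(x+\kappa/2)^2|\hpsi^{(n+1)}|^2$ and $(\lf-n)^2(\lf+n+1)^2|\hpsi^{(n)}|^2$ — each one derivative higher than you list. Consequently the cross term between the second and third factors is $-2(n+1)(\lf-n)(\lf+n+1)(x+\kappa/2)\,\partial_x\bigl(|\hpsi^{(n)}|^2\bigr)$, not a multiple of $|\hpsi^{(n)}|^2$, and your proposed Young's inequality cannot close: the $(n+1)^2(x+\kappa/2)^2$ diagonal you want to spend sits on $|\hpsi^{(n+1)}|^2$, not on $|\hpsi^{(n)}|^2$. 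With the $\partial_x$ restored no Young is needed — a single integration by parts turns this term into the good-signed bulk term $+2(n+1)(\lf-n)(\lf+n+1)\int|\hpsi^{(n)}|^2$ plus a trace at $x=x_0$ (one source of the $(\lf-n)(\lf+n+1)(\lf+1)^2x_0^{-1}$-weighted boundary terms in \eqref{eq:mainestloder}), and performing the same integration by parts over $[\eta x_0,x_0]$ is precisely what produces the $x=\eta x_0$ trace in \eqref{eq:boundestetay0}, so the fix also completes your third step. Finally, you never treat the cross term between the first two factors, $2(n+1)x(x+\kappa/2)(x+\kappa)\partial_x\bigl(|\hpsi^{(n+1)}|^2\bigr)$; its by-parts bulk contribution $-(n+1)(6x^2+6\kappa x+\kappa^2)|\hpsi^{(n+1)}|^2$ exceeds the $4(n+1)^2(x+\kappa/2)^2|\hpsi^{(n+1)}|^2$ diagonal at $n=0$ and must be absorbed using the good-signed $2(\lf(\lf+1)-n(n+1))\,x(x+\kappa)|\hpsi^{(n+1)}|^2$ produced by the double integration by parts of the first-versus-third cross term.
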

\begin{proof}
Consider \eqref{eq:commeq}, suppress the $\ell$ subscript and rearrange the terms to obtain:
\begin{multline}
\label{eq:sqnormloworder}
|(\partial_x^n\mathfrak{L}_{s,\lambda,\kappa}\hpsi)-2s\hpsi^{(n+1)}|^2=|x(\kappa+x)\partial_{x}^2\hpsi^{(n)}+2(n+1) \left(x+\frac{\kappa}{2}\right)\partial_x\hpsi^{(n)}-(\mathfrak{l}(\mathfrak{l}+1)-n(n+1))\hpsi^{(n)}|^2\\
=x^2(x+\kappa)^2|\hat{\psi}^{(n+2)}|^2+4(n+1)^2\left(x+\frac{1}{2}\kappa\right)^2|\hat{\psi}^{(n+1)}|^2+(\lf(\lf+1)-n(n+1))^2|\hat{\psi}^{(n)}|^2\\
+T_1+T_2+T_3,
\end{multline}
with
\begin{align*}
T_1:=&\:2(n+1) x\left(x+\frac{1}{2}\kappa\right)(x+\kappa)\partial_x(|\hat{\psi}^{(n+1)}|^2),\\
T_2:=&\:-2(\mathfrak{l}(\mathfrak{l}+1)-n(n+1))x(\kappa+x)\re(\partial_{x}^2\hpsi^{(n)}\cdot \overline{\hpsi^{(n)}}),\\
T_3:=&\: -2(n+1)(\lf (\mathfrak{l}+1)-n(n+1))\left(x+\frac{\kappa}{2}\right)\partial_x(|\hpsi^{(n)}|^2).
\end{align*}
We now integrate both sides of \eqref{eq:sqnormloworder} over $[0,x_0]\times \s^2$, but we suppress the integration over $\s^2$ in the notation. 

We first estimate the integral of $T_1$ by integrating by parts:
\begin{multline*}
\int_0^{x_0}T_1\,dx=-\int_0^{x_0}(n+1)(6x^2+6\kappa x+\kappa^2)|\hpsi^{(n+1)}|^2\,dx\\
+2(n+1) x_0\left(x_0+\frac{1}{2}\kappa\right)(x_0+\kappa)|\hat{\psi}^{(n+1)}|^2|_{x=x_0}\\
\geq -\int_0^{x_0}6(n+1)(x(x+\kappa)+\frac{1}{6}\kappa^2) |\hat{\psi}^{(n+1)}|^2\,dx+2(n+1) x_0\left(x_0+\frac{1}{2}\kappa\right)(x_0+\kappa)|\hat{\psi}^{(n+1)}|^2|_{x=x_0}
\end{multline*}

To estimate $T_2$, we integrate by parts twice:
\begin{multline*}
\int_0^{x_0}T_2\,dx=\int_0^{x_0}2(\mathfrak{l}(\mathfrak{l}+1)-n(n+1))x(\kappa+x) |\hpsi^{(n+1)}|^2\,dx-\int_0^{x_0}(\mathfrak{l}(\mathfrak{l}+1)-n(n+1))(\kappa+2x)\partial_x(|\hpsi^{(n)}|^2)\,dx\\
-2(\mathfrak{l}(\mathfrak{l}+1)-n(n+1))x_0(\kappa+x_0)\re(\hpsi^{(n+1)}\cdot  \overline{\hpsi^{(n)}})|_{x=x_0}\\
\geq \int_0^{x_0}2(\mathfrak{l}(\mathfrak{l}+1)-n(n+1))x(\kappa+x) |\hpsi^{(n+1)}|^2\,dx+2\int_0^{x_0}(\mathfrak{l}(\mathfrak{l}+1)-n(n+1))|\hpsi^{(n)}|^2\,dx\\
-2(\mathfrak{l}(\mathfrak{l}+1)-n(n+1))x_0(\kappa+x_0)\re(\hpsi^{(n+1)}\cdot  \overline{\hpsi^{(n)}})|_{x=x_0}-(\mathfrak{l}(\mathfrak{l}+1)-n(n+1))(\kappa+2x_0)|\hpsi^{(n)}|^2|_{x=x_0}
\end{multline*}

We also integrate $T_3$ by parts to obtain:
\begin{equation*}
\int_0^{x_0}T_3\,dx\geq \int_0^{x_0}2(n+1)(\lf (\mathfrak{l}+1)-n(n+1))|\hpsi^{(n)}|^2\,dx-2(n+1)(\lf (\mathfrak{l}+1)-n(n+1))\left(x_0+\frac{\kappa}{2}\right)|\hpsi^{(n)}|^2|_{x=x_0}.
\end{equation*}

Finally, we apply Young's inequality to estimate for $\epsilon>0$:
\begin{equation*}
|(\partial_x^n\mathfrak{L}_{s,\lambda,\kappa}\hpsi)-2s\hpsi^{(n+1)}|^2\leq (1+\epsilon)|2s|^2|\hpsi^{(n+1)}|^2+(1+\epsilon^{-1})|\partial_x^n\mathfrak{L}_{s,\lambda,\kappa}\hpsi|^2.
\end{equation*}

We combine the above estimates to obtain for $\gamma>1$:
\begin{multline*}
\int_0^{x_0}x^2(x+\kappa)^2|\hat{\psi}^{(n+2)}|^2\,dx\\
+\int_0^{x_0}\left[(2(n+1)(n-1)+2\mathfrak{l}(\mathfrak{l}+1))x(\kappa+x)+(n+1)^2\kappa^2 \right]|\hat{\psi}^{(n+1)}|^2\,dx\\
+\int_0^{x_0}(\lf(\lf+1)-n(n+1))(\lf(\lf+1)-n(n+1)+2(n+2))|\hat{\psi}^{(n)}|^2\,dx\\
\boxed{-\int_0^{x_0}\frac{|2s|^2\gamma^{2}}{(\lf(\lf+1)-(n+1)(n+2))^2} \gamma^{-2}(1+\epsilon)(\lf(\lf+1)-(n+1)(n+2))^2|\hat{\psi}^{(n+1)}|^2\,dx}\\
\leq C(\lf(\lf+1)-n(n+1))(\lf+1)^2 x_0^{-1}|\hpsi^{(n)}|^2|_{x=x_0}+(1+\epsilon^{-1})\int_0^{x_0}|\partial_x^n\mathfrak{L}_{s,\lambda,\kappa}\hpsi|^2\,dx.
\end{multline*}

In order to absorb the boxed term above, we will sum over $n$ with the following weights:
\begin{equation*}
\sum_{n=0}^{\lf-1}\frac{\gamma^{2n} |2s|^{2n}}{(\lf(\lf+1))^2(\lf(\lf+1)-1(1+1))^2\ldots (\lf(\lf+1)-n(n+1))^2}.
\end{equation*}

Using that $\lf(\lf+1)-n(n+1)=(\lf-n)(\lf+n+1)$. We can simplify for $n\leq \lf-1$:
\begin{equation*}
(\lf(\lf+1))^2(\lf(\lf+1)-1(1+1))^2\ldots (\lf(\lf+1)-n(n+1))^2=\frac{(\lf+n+1)!^2}{(\lf-(n+1))!^2},
\end{equation*}
so that the summation becomes
\begin{equation*}
\sum_{n=0}^{\lf-1}\frac{\gamma^{2n} |2s|^{2n}(\lf-(n+1))!^2}{(\lf+n+1)!^2}.
\end{equation*}
We obtain:
\begin{multline*}
\sum_{n=0}^{\lf-1}\frac{\gamma^{2n} |2s|^{2n}(\lf-(n+1))!^2}{(\lf+n+1)!^2}\int_0^{x_0}x^4|\hat{\psi}^{(n+2)}|^2\,dx\\
+\int_0^{x_0}\left[((n+1)^2+\mathfrak{l}(\mathfrak{l}+1))x^2 +|2s|^2\right]|\hat{\psi}^{(n+1)}|^2\,dx\\
+\int_0^{x_0}(1-(1+\epsilon)\gamma^{-2})(\lf(\lf+1)-n(n+1))^2|\hat{\psi}^{(n)}|^2\,dx\\
\leq C\frac{|2s|^{2(\lf-1)}\gamma^{2(\lf-1)}}{(2\lf!)^2} \int_0^{x_0} |\hat{\psi}^{(\lf)}|^2\,dx\\
+C\sum_{n=0}^{\lf-1}\frac{\gamma^{2n} |2s|^{2n}(\lf-(n+1))!^2}{(\lf+n+1)!^2}(\lf-n)(\lf+n+1)(\lf+1)^2 x_0^{-1}|\hpsi^{(n)}|^2|_{x=x_0}\\
+C(1+\epsilon^{-1})\sum_{n=0}^{\lf-1}\int_0^{x_0}\frac{\gamma^{2n} |2s|^{2n}(\lf-(n+1))!^2}{(\lf+n+1)!^2}|\partial_x^n\mathfrak{L}_{s,\lambda,\kappa}\hpsi|^2\,dx.
\end{multline*}

By applying a Hardy inequality we can further estimate:
\begin{equation*}
\frac{|2s|^{2(\lf-1)}\gamma^{2(\lf-1)}}{(2\lf!)^2} \int_0^{x_0} |\hat{\psi}^{(\lf)}|^2\,dx\leq 4x_0^2\frac{|2s|^{2(\lf-1)}\gamma^{2(\lf-1)}}{(2\lf!)^2} \int_0^{x_0} |\hat{\psi}^{(\lf+1)}|^2\,dx+2\frac{|2s|^{2(\lf-1)}\gamma^{2(\lf-1)}}{(2\lf!)^2} x_0 |\hpsi^{(\lf)}|^2|_{x=x_0}.
\end{equation*}
We conclude \eqref{eq:mainestloder}.

By integrating by parts $T_3$ in the region $[\eta x_0,x_0]\times \s^2$ in the argument above and using the integral estimates above, we can moreover estimate boundary terms at $x=\eta x_0$. We obtain in particular also:
\begin{multline*}
(\lf+1)^{-4(N_++1)} \sum_{n=0}^{N_++1}x |\hat{\psi}^{(n)}|^2|_{x=\eta x_0}\leq C\frac{|2s|^{2(\lf-1)}\gamma^{2(\lf-1)}}{(2\lf!)^2} \int_0^{x_0} |\hat{\psi}^{(\lf)}|^2\,dx\\
+C\sum_{n=0}^{\lf-1}\frac{\gamma^{2n} |2s|^{2n}(\lf-(n+1))!^2}{(\lf+n+1)!^2}(\lf-n)(\lf+n+1)(\lf+1)^2 x_0^{-1}|\hpsi^{(n)}|^2|_{x=x_0}\\
+C(1+\epsilon^{-1})\sum_{n=0}^{\lf-1}\int_0^{x_0}\frac{\gamma^{2n} |2s|^{2n}(\lf-(n+1))!^2}{(\lf+n+1)!^2}|\partial_x^n\mathfrak{L}_{s,\lambda,\kappa}\hpsi|^2\,dx.
\end{multline*}
\end{proof}

We now estimate the boundary terms at $x=x_0$ that appear on the right-hand sides of the estimates in Proposition \ref{prop:mainestloder}.

\begin{lemma}
\label{lm:boundterms}
Let $\kappa\leq x_0$ and $\lf^2\gg |s|^2x_0^{-2}$. Then there exists constants $C>0$ suitably large and $d_0>0$ suitably small, with $d_0$ independent of $|s|$, $\lf$ and $x_0$, such that for all $n\in \N_0$:
\begin{multline}
\label{eq:sumboundary}
\sum_{n=0}^{\infty}\frac{(d_0x_0)^{2n} }{(\max\{\lf+1,n+1\})^{2n}}|\hat{\psi}^{(n+2)}_{\ell}|^2\big|_{x=x_0}\leq Cx_0^{-4} \left[(\lf+1)^4|\hat{\psi}_{\ell}|^2+(x_0^2+|2s|^2)|\hat{\psi}^{(1)}_{\ell}|^2\right]\big|_{x=x_0}\\
+Cx_0^{-4}\sum_{n=0}^{\infty}\frac{(d_0x_0)^{2n}}{(\max\{\lf+1,n+1\})^{2n}}|\partial_x^{n}(\mathfrak{L}_{s,\lambda,\kappa}\hpsi)_{\ell}|^2\big|_{x=x_0}.
\end{multline}
Furthermore, for any $\gamma>0$, with $\lf\gg d_0^{-1}x_0^{-1}\gamma |s|$, we can estimate
\begin{multline}
\label{eq:sumboundarygamma}
\sum_{n=0}^{\lf-1}\frac{\gamma^{2n}|2s|^{2n} (\lf-(n+1))!^2}{(\lf+n+1)!^2}|\hat{\psi}^{(n+2)}_{\ell}|^2\big|_{x=x_0}\leq Cx_0^{-4}\left[|\hat{\psi}_{\ell}|^2+(x_0^2+|2s|^2)(\lf+1)^{-4}|\hat{\psi}^{(1)}_{\ell}|^2\right]\big|_{x=x_0}\\
+C(\lf+1)^{-4}x_0^{-4}\sum_{n=0}^{\lf-1}\frac{(d_0x_0)^{2n}}{(\max\{\lf+1,n+1\})^{2n}}|\partial_x^{n}(\mathfrak{L}_{s,\lambda,\kappa}\hpsi)_{\ell}|^2\big|_{x=x_0}.
\end{multline}
\end{lemma}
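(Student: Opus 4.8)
The plan is to use the recursion \eqref{eq:commeq} evaluated at $x=x_0$ to express the boundary value of $\hpsi_\ell^{(n+2)}$ in terms of lower-order boundary values of $\hpsi_\ell$ and the boundary value of $\partial_x^n(\mathfrak{L}_{s,\lambda,\kappa}\hpsi)_\ell$, and then to iterate this to bound all high-order derivatives at $x=x_0$ by $\hpsi_\ell|_{x=x_0}$, $\hpsi_\ell^{(1)}|_{x=x_0}$ and a Gevrey-weighted sum of $\partial_x^n(\mathfrak{L}_{s,\lambda,\kappa}\hpsi)_\ell|_{x=x_0}$. Concretely, since $\kappa\le x_0$ we have $x_0(\kappa+x_0)\in[x_0^2,2x_0^2]$, so from \eqref{eq:commeq} we can solve
\begin{equation*}
\hpsi_\ell^{(n+2)}\big|_{x=x_0}=\frac{1}{x_0(\kappa+x_0)}\Big[(\partial_x^n\mathfrak{L}_{s,\lambda,\kappa}\hpsi)_\ell-\big(2(n+1)(x_0+\tfrac{\kappa}{2})+2s\big)\hpsi_\ell^{(n+1)}-(n(n+1)-\lf(\lf+1))\hpsi_\ell^{(n)}\Big]\Big|_{x=x_0}.
\end{equation*}
First I would establish, by induction on $n$, a pointwise bound at $x=x_0$ of the schematic form $|\hpsi_\ell^{(n+2)}|\le A_n x_0^{-2}\big[(\lf+1)^2|\hpsi_\ell|+(x_0+|s|)|\hpsi_\ell^{(1)}|\big]+\text{(weighted }\mathfrak{L}\text{-terms)}$, tracking carefully the combinatorial growth of the coefficients $A_n$. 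The key quantitative point is that each application of the recursion costs a factor $\lesssim x_0^{-2}\max\{(\lf+1)^2,(n+1)^2\}$ coming from the coefficients $n(n+1)-\lf(\lf+1)$ and $(n+1)(x_0+\kappa/2)$, together with the $s$-dependent term which under the hypothesis $\lf^2\gg|s|^2x_0^{-2}$ (resp.\ $\lf\gg d_0^{-1}x_0^{-1}\gamma|s|$) is dominated by $\max\{(\lf+1)^2,(n+1)^2\}$.

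The summation is then a bookkeeping exercise: multiplying the $n$-th bound by the weight $(d_0x_0)^{2n}/(\max\{\lf+1,n+1\})^{2n}$ (resp.\ $\gamma^{2n}|2s|^{2n}(\lf-(n+1))!^2/(\lf+n+1)!^2$) and summing, the growth factors $\max\{(\lf+1)^2,(n+1)^2\}^{n}$ accumulated through the recursion are exactly cancelled by the denominators $(\max\{\lf+1,n+1\})^{2n}$ in the weights, provided $d_0$ is chosen small enough — this is where the smallness of $d_0$, independent of $s,\lf,x_0$, is used, and it is the same mechanism as in the definition of the boundary norm $\|\cdot\|_{B_{R_\infty}}$. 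For \eqref{eq:sumboundarygamma} one additionally uses Stirling's approximation in the form already invoked after \eqref{eq:defw} to compare $\gamma^{2n}|2s|^{2n}(\lf-(n+1))!^2/(\lf+n+1)!^2$ with $(d_0x_0)^{2n}/(\max\{\lf+1,n+1\})^{2n}$: for $n\le\lf-1$ one has $(\lf-(n+1))!^2/(\lf+n+1)!^2\lesssim (\lf+1)^{-4}\cdot$(a super-exponentially small factor), which both produces the overall $(\lf+1)^{-4}$ prefactor on the right-hand side of \eqref{eq:sumboundarygamma} and gives the extra room needed to absorb the $\gamma^{2n}|2s|^{2n}$ against $(2\lf)!^2/((\lf-(n+1))!(\lf+n+1)!)^2$-type quotients. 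The hypothesis $\lf\gg d_0^{-1}x_0^{-1}\gamma|s|$ ensures the $s$-term at each step stays subdominant throughout the range $n\le\lf-1$.

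The main obstacle I anticipate is controlling the $s$-dependent term $2(n+1)(x_0+\tfrac\kappa2)\hpsi^{(n+1)}+2s\hpsi^{(n+1)}$ in the recursion uniformly across all $n$ up to the relevant cutoff: when $n$ is comparable to $\lf$ the coefficient $n(n+1)-\lf(\lf+1)$ can be small (or zero when $n=\lf$), so at those indices the recursion does not gain the expected factor and one must instead use the $(n+1)^2$ and $|2s|^2$ coefficients multiplying $\hpsi^{(n+1)}$ — hence the need to carry both $(\lf+1)^2$ and $(n+1)^2$ in $\max\{(\lf+1)^2,(n+1)^2\}$. A clean way to handle this is to prove the induction with the two-term bound (a multiple of $|\hpsi_\ell|$ plus a multiple of $|\hpsi_\ell^{(1)}|$) rather than a single-term bound, so that the degenerate indices simply propagate the $\hpsi_\ell^{(1)}$ contribution without loss; once the pointwise recursion estimate is set up in this form, the weighted sums converge by the choice of $d_0$ and $\lf$ large, and \eqref{eq:sumboundary}--\eqref{eq:sumboundarygamma} follow.
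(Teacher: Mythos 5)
Your proposal is correct and follows essentially the same route as the paper: the paper likewise iterates \eqref{eq:commeq} at $x=x_0$, proving by induction a weighted pointwise bound (their \eqref{eq:singletermboundary}, with an auxiliary slightly larger constant $d_0'>d_0$ so that the final summation converges geometrically), using $|2s|^2x_0^{-2}\leq \lf^2$ to absorb the $s$-term into $\max\{(\lf+1)^2,(n+1)^2\}$, and then deduces \eqref{eq:sumboundarygamma} from the same Stirling comparison of the $\gamma$-weights with the $(d_0x_0)^{2n}(\max\{\lf+1,n+1\})^{-2n}$ weights. The only cosmetic difference is that you first prove an unweighted pointwise recursion estimate and weight afterwards, while the paper carries the weights through the induction; your worry about the indices $n\approx\lf$ is not actually an obstruction (the coefficient $n(n+1)-\lf(\lf+1)$ multiplies a term being estimated, not divided by), but your two-term bookkeeping handles it in any case.
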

\begin{proof}
We will first establish the following estimate inductively: for $n\in \N_0$
\begin{multline}
\label{eq:singletermboundary}
\frac{(d_0'x_0)^{2n}}{(\max\{\lf+1,n+1\})^{2n}}|\hat{\psi}^{(n+2)}_{\ell}|^2\big|_{x=x_0}\leq Cx_0^{-4} \left[(\lf+1)^4|\hat{\psi}_{\ell}|^2+(x_0^2+|2s|^2)|\hat{\psi}^{(1)}_{\ell}|^2\right]\\
+Cx_0^{-4}\sum_{k=0}^{n}\frac{(d_0'x_0)^{2n}}{(\max\{\lf+1,k+1\})^{2k}}|\partial_x^{k}(\mathfrak{L}_{s,\lambda,\kappa}\hpsi)_{\ell}|^2\big|_{x=x_0}.
\end{multline}
with $d_0<d_0'$. Note that the $n=0$ case follows immediately from \eqref{eq:commeq} with $n=0$. Now suppose \eqref{eq:singletermboundary} holds for $n\leq N$. By applying \eqref{eq:commeq} with $n=N+1$, we obtain
\begin{multline*}
\frac{(d_0'x_0)^{2(N+1)} }{(\max\{\lf+1,N+2\})^{2(N+1)}}|\hat{\psi}^{(N+3)}|^2\big|_{x=x_0}\leq  C\frac{(d_0'x_0)^{2(N+1)}}{(\max\{\lf+1,N+1\})^{2N}}\\
\times \Bigg[ ((N+2)^2x_0^{-2}+|2s|^2)|\hat{\psi}^{(N+2)}|^2\big|_{x=x_0}+x_0^{-4}(\lf(\lf+1)-(N+1)(N+2))^2|\hat{\psi}^{(N+1)}|^2\big|_{x=x_0}\\
+x_0^{-4} |\partial_x^{N+1}(\mathfrak{L}_{s,\lambda,\kappa}\hpsi)_{\ell}|^2\big|_{x=x_0}\Bigg]\\
\leq C\frac{(d_0'x_0)^{2N}}{(\max\{\lf+1,N+1\})^{2N}}\frac{d_0'^2((N+2)^2+|2s|^2x_0^{-2})}{(\max\{\lf+1,N+1\})^{2}}|\hat{\psi}^{(N+2)}|^2\big|_{x=x_0}\\
+C\frac{(d_0'x_0)^{2(N-1)}}{(\max\{\lf+1,N\})^{2(N-1)}}\frac{d_0'^4(\lf(\lf+1)-(N+1)(N+2))^2}{(\max\{\lf+1,N+1\})^{2}(\max\{\lf+1,N+2\})^{2}}|\hat{\psi}^{(N+1)}|^2\big|_{x=x_0}\\
+C\frac{(d_0'x_0)^{2(N+1)} }{(\max\{\lf+1,N+2\})^{2(N+1)}}x_0^{-4} |\partial_x^{N+1}(\mathfrak{L}_{s,\lambda,\kappa}\hpsi)_{\ell}|^2\\
\leq \frac{C}{2}\frac{(d_0'x_0)^{2N}}{(\max\{\lf+1,N+1\})^{2N}}|\hat{\psi}^{(N+2)}|^2\big|_{x=x_0}+\frac{C}{2}\frac{(d_0'x_0)^{2(N-1)} }{(\max\{\lf+1,N\})^{2(N-1)}}|\hat{\psi}^{(N+1)}|^2\big|_{x=x_0}\\
+C\frac{(d_0'x_0)^{2(N+1)} }{(\max\{\lf+1,N+2\})^{2(N+1)}}x_0^{-4} |\partial_x^{N+1}(\mathfrak{L}_{s,\lambda,\kappa}\hpsi)_{\ell}|^2,
\end{multline*}
where we arrived at the final inequality by taking $d_0'$ suitably small,  $|2s|^2x_0^{-2}\leq \lf^2$. We now conclude \eqref{eq:singletermboundary} with $n=N+1$ by applying \eqref{eq:singletermboundary} with $n=N$ and $n=N-1$.

To obtain \eqref{eq:sumboundary}, we let $0<{d_0}<{d_0}'$ and we sum both sides of \eqref{eq:singletermboundary} to obtain:
\begin{multline*}
\sum_{n=0}^{\infty}\frac{(d_0x_0)^{2n} }{(\max\{\lf+1,n+1\})^{2n}}|\hat{\psi}^{(n+2)}_{\ell}|^2\big|_{x=x_0}\leq C\sum_{n=0}^{\infty}\left(\frac{d_0}{d_0'}\right)^nx_0^{-4} \left[|\hat{\psi}_{\ell}|^2+(x_0^2+|2s|^2)(\lf(\lf+1))^{-2}|\hat{\psi}^{(1)}_{\ell}|^2\right]\\
+Cx_0^{-4}\sum_{n=0}^{\infty}\left(\frac{d_0}{d_0'}\right)^{2(n-k)}\sum_{k=0}^{\infty}\frac{(d_0x_0)^{2k} }{(\max\{\lf+1,k+1\})^{2k}}|\partial_x^{k}(\mathfrak{L}_{s,\lambda,\kappa}\hpsi)_{\ell}|^2\big|_{x=x_0}\\
\leq Cx_0^{-4} \left[|\hat{\psi}_{\ell}|^2+(x_0^2+|2s|^2)(\lf(\lf+1))^{-2}|\hat{\psi}^{(1)}_{\ell}|^2\right]\\
+Cx_0^{-4}\sum_{n=0}^{\infty}\frac{(d_0x_0)^{2n} }{(\max\{\lf+1,n+1\})^{2n}}|\partial_x^{n}(\mathfrak{L}_{s,\lambda,\kappa}\hpsi)_{\ell}|^2,
\end{multline*}
where we used that $\sum_{j=0}^{\infty}\left(\frac{d_0}{d_0'}\right)^{2j}<\infty$. We now conclude \eqref{eq:sumboundary}.

By applying Stirling's approximation in the following form: for any $k\in \N$
\begin{equation*}
k!\sim k^{k+\frac{1}{2}}e^{-k},
\end{equation*}
we can estimate for $n\leq \lf-1$:
\begin{multline*}
\frac{\gamma^{2n}|2s|^{2n}(\lf-(n+1))!^2}{(\lf+(n+1))!^2}\leq C \gamma^{2n}|2s|^{2n}\frac{(\lf-(n+1))^{2\lf-2(n+1)+1}}{(\lf+(n+1))^{2\lf+2(n+1)+1}}e^{4(n+1)}\\
=C  \gamma^{2n}|2s|^{2n}e^{4(n+1)}\left(1-2\frac{n+1}{\lf+(n+1)}\right)^{4(n+1)}(\lf+(n+1))^{-4(n+1)}\\
\leq  C\left(e^{-2}d_0x_0|2s|^{-1}\gamma^{-1}(\lf+1)\right)^{-2(n+1)} (d_0x_0)^{2n}(\lf+1)^{-2n-4}\leq (\lf+1)^{-4}(d_0x_0)^{2n}(\lf+1)^{-2n}
\end{multline*}
for $\lf\gg d_0^{-1}x_0^{-1}\gamma |2s|$. Hence, \eqref{eq:sumboundarygamma} follows.
\end{proof}

\subsection{Combining all derivatives}
We combine Corollary \ref{cor:gevcompat} with Proposition \ref{prop:mainestloder} and Lemma \ref{lm:boundterms} to obtain an estimate in the region $\{x\leq x_0\}$, with on the right-hand side only the terms $\partial_x^{n}(\mathfrak{L}_{s,\lambda,\kappa}\hpsi)_{\ell}$ and boundary terms at $x=x_0$ involving only the lowest two derivatives $\hpsi$ and $\hpsi^{(1)}$:
\begin{corollary}
\label{cor:fullgevrey}
	Let $M=1$, $N_+\in \N_0$, $\lf>|s|x_0^{-1}$ and $0<\kappa<x_0$ and let $d_0$ be the constant from Lemma \ref{lm:boundterms}. Let $s\in \Omega_{\sigma}$. Then, for $\lambda$ suitably large, there exists a constant $C_{s,\sigma}=C(s,\sigma)>0$, an numerical constant $C>0$, such that for all $\N_0\ni N_{\infty}\geq N_{\kappa}$:
	\begin{multline}
	\label{eq:fullgevrey}
\sum_{n=0}^{\lf-1}\frac{(4\sigma)^{2n}(\lf-(n+1))!^2}{(\lf+n+1)!^2}\int_0^{x_0}\int_{\s^2}(x^4+\kappa^2x^2)|\hat{\psi}^{(n+2)}|^2+(x^2(\lf+1)^2+(n+1)^2\kappa^2+|2s|^2)|\hat{\psi}^{(n+1)}|^2\\
+(\lf-n)^2(\lf+n+1)^2|\hat{\psi}^{(n)}|^2\,d\upsigma dx\\
+(\lf+1)^3\sum_{n=\lf}^{N_{\infty}}\frac{\sigma^{2n}}{n!^2(n+1)!^2}\int_0^{x_0}\int_{\s^2}x^4|s|^{-2}|\hpsi^{(n+2)}|^2+ \left[1+(n+1)^2x^2|s|^{-2}\right]|\hat{\psi}_{\ell}^{(n+1)}|^2\,d\upsigma dx\\
+\sum_{n=0}^{\infty}\frac{(d_0x_0)^{2n} }{(\max\{\lf+1,n+1\})^{2n}}\int_{\s^2}|\hpsi^{(n+2)}|^2|_{x=x_0}d\upsigma+ (\lf+1)^{-4(N_++1)}\sum_{n=0}^{N_++1} \int_{\s^2}x|\hat{\psi}^{(n)}|^2|_{x=\eta x_0}d\upsigma\\
\leq \boxed{C\int_{\s^2}|\hat{\psi}_{\ell}|^2+(\lf+1)^{-4}|\hat{\psi}^{(1)}_{\ell}|^2\big|_{x=x_0}\,d\upsigma}\\
+C\sum_{n=0}^{\lf-1}\frac{(4\sigma)^{2n}(\lf-(n+1))!^2}{(\lf+n+1)!^2}\int_0^{x_0}\int_{\s^2}|\partial_x^{n}(\mathfrak{L}_{s,\lambda,\kappa}\hpsi)_{\ell}|^2\,d\upsigma dx\\
+C(\lf+1)^3\sum_{n=\lf}^{N_{\infty}}\frac{\sigma^{2n}}{n!^2(n+1)!^2}\int_0^{x_0}\int_{\s^2}|\partial_x^{n}(\mathfrak{L}_{s,\lambda,\kappa}\hpsi)_{\ell}|^2\,d\upsigma dx\\
+C\sum_{n=0}^{\infty}\frac{(d_0x_0)^{2n}}{(\max\{\lf+1,n+1\})^{2n}}\int_{\s^2}|\partial_x^{n}(\mathfrak{L}_{s,\lambda,\kappa}\hpsi)_{\ell}|^2|_{x=x_0}d\upsigma.
	\end{multline}
\begin{proof}
Note first that the boundary terms at $x=x_0$ on the right-hand sides of \eqref{eq:mainestloder} and \eqref{eq:boundestetay0} can be estimated directly via \eqref{eq:sumboundarygamma}.

Furthermore, the $x$-integral of $\hpsi^{(\lf+1)}$ on the right-hand sides of \eqref{eq:mainestloder} and \eqref{eq:boundestetay0} can be controlled by the $x$-integral of $\hpsi^{(\lf+1)}$ on the left-hand side of \eqref{eq:gevcompat}, provided we multiply both sides of \eqref{eq:gevcompat} with the factor
\begin{equation*}
|s|^{-2}\frac{\gamma^{2\lf}(\lf+1)!^2\lf!^2}{\tilde{\sigma}^{2\lf}(2\lf)!^2}=|s|^{-2}(\lf+1)^2\frac{\gamma^{2\lf}}{\tilde{\sigma}^{2\lf}}\frac{\lf!^4}{(2\lf)!^2}
\end{equation*}
By applying Stirling's approximation, we have that
\begin{equation*}
\frac{\lf!^4}{(2\lf)!^2}\sim \frac{\lf^{4\lf +2} e^{-4\lf}}{(2\lf)^{4\lf+1}e^{-4\lf}}=2^{-4\lf}\lf.
\end{equation*}
By taking $\gamma=4\tilde{\sigma}$, we therefore obtain:
\begin{equation*}
|s|^2\frac{\gamma^{2\lf}(\lf+1)!^2\lf!^2}{\tilde{\sigma}^{2\lf}(2\lf)!^2}\sim |s|^{-2}(\lf+1)^3. \qedhere
\end{equation*}
\end{proof}

\end{corollary}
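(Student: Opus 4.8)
The plan is to assemble the three previously established families of estimates — the higher-order Gevrey estimate \eqref{eq:gevcompat} from Corollary \ref{cor:gevcompat} (Step 1), the lower-order estimates \eqref{eq:mainestloder} and \eqref{eq:boundestetay0} from Proposition \ref{prop:mainestloder} (Step 2), and the boundary estimates \eqref{eq:sumboundary}, \eqref{eq:sumboundarygamma} from Lemma \ref{lm:boundterms} — into one closed inequality on $\{x\leq x_0\}$ whose right-hand side involves only $\partial_x^n(\mathfrak{L}_{s,\lambda,\kappa}\hpsi)_\ell$ terms and the lowest two derivatives $\hpsi,\hpsi^{(1)}$ of $\hpsi$ evaluated at $x=x_0$. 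First I would fix $\gamma=4\tilde\sigma$ (so that the $4\sigma$-weighted sums appearing in the statement match the $\gamma|2s|=4\tilde\sigma|2s|=2\sigma$-weighted sums produced by Proposition \ref{prop:mainestloder} after recalling $2|s|\tilde\sigma=\sigma$) and verify that $\lf>|s|x_0^{-1}$ together with $\lambda$ suitably large puts us in the regime where all three input results apply; in particular $\lf\gg d_0^{-1}x_0^{-1}\gamma|s|$ is guaranteed for $\lf$ large, which is forced by taking $\lambda$ (hence $l$, hence $\lf\geq l$) large.

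The core manipulation is the bookkeeping of weights so that the $x$-integral of $\hpsi^{(\lf+1)}$ appearing on the right-hand sides of \eqref{eq:mainestloder} and \eqref{eq:boundestetay0} — which is the only place the two steps genuinely couple — can be absorbed into the left-hand side of \eqref{eq:gevcompat}. Concretely, I would multiply \eqref{eq:gevcompat} through by the factor
\begin{equation*}
|s|^{-2}\frac{\gamma^{2\lf}(\lf+1)!^2\lf!^2}{\tilde\sigma^{2\lf}(2\lf)!^2},
\end{equation*}
and then apply Stirling's approximation in the form $k!\sim k^{k+\frac12}e^{-k}$ to evaluate $\lf!^4/(2\lf)!^2\sim 2^{-4\lf}\lf$; with $\gamma=4\tilde\sigma$ this yields $|s|^{-2}\gamma^{2\lf}(\lf+1)!^2\lf!^2\tilde\sigma^{-2\lf}(2\lf)!^{-2}\sim |s|^{-2}(\lf+1)^3$, which is exactly the weight $(\lf+1)^3|s|^{-2}$ attached to the $n\geq\lf$ sum in the statement. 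The $n=\lf$ term of the rescaled \eqref{eq:gevcompat} then dominates the $\int_0^{x_0}\int_{\s^2}|\hpsi^{(\lf+1)}|^2$ term on the right of \eqref{eq:mainestloder}–\eqref{eq:boundestetay0} (after the Hardy inequality already performed inside Proposition \ref{prop:mainestloder} reduces $\int|\hpsi^{(\lf)}|^2$ to $\int|\hpsi^{(\lf+1)}|^2$ plus a boundary term at $x=x_0$), so those terms move to the left with a small constant. Simultaneously, all boundary terms $|\hpsi^{(n+2)}|^2|_{x=x_0}$ with $n\leq\lf-1$ or $n$ arbitrary that appear on the right of the Step-2 estimates are converted, via \eqref{eq:sumboundarygamma} and \eqref{eq:sumboundary} respectively, into the boxed term $\int_{\s^2}(|\hpsi_\ell|^2+(\lf+1)^{-4}|\hpsi^{(1)}_\ell|^2)|_{x=x_0}$ plus $\mathfrak{L}_{s,\lambda,\kappa}$-boundary terms; here one uses that the weights $(d_0x_0)^{2n}(\max\{\lf+1,n+1\})^{-2n}$ in \eqref{eq:sumboundary} and the $\gamma$-weights in \eqref{eq:sumboundarygamma} are exactly the ones produced.

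The main obstacle I anticipate is purely combinatorial rather than analytic: keeping the factorial/power weights consistent across the three estimates, since Step 1 is summed with weight $\sigma^{2n}/(n!^2(n+1)!^2)$ over $n\geq\lf$ while Step 2 is summed with weight $\gamma^{2n}|2s|^{2n}(\lf-(n+1))!^2/(\lf+n+1)!^2$ over $n\leq\lf-1$, and these must be shown to "meet" at $n=\lf$ with compatible constants — this is precisely what the Stirling computation above settles, and it is why the choice $\gamma=4\tilde\sigma$ (equivalently the $4\sigma$ appearing in the statement) is forced. A secondary point requiring care is that $(\mathfrak{R}_{s,\lambda}\hpsi)_\ell$ need not equal $\mathfrak{R}_{s,\lambda}(\hpsi_\ell)$, but this issue does not arise here because in $\{x\leq x_0\}$ we have $\mathfrak{L}_{s,\lambda,\kappa}$ acting diagonally on spherical harmonics via \eqref{eq:operatorfaraway}, so no such mismatch occurs; one only needs $\lambda$ large enough that the error terms hidden inside Corollary \ref{cor:gevcompat} (through the compatibility conditions of Lemma \ref{lm:compatibility}) and the constant absorption in Proposition \ref{prop:mainestloder} are controlled uniformly. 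Finally, summing the resulting per-$\ell$ inequality over $\ell\in\N_0$ and using orthogonality of spherical harmonics gives \eqref{eq:fullgevrey}; the restriction $N_\infty\geq N_\kappa$ is inherited directly from Corollary \ref{cor:maingev}.
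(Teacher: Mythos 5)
Your proposal is correct and follows essentially the same route as the paper's proof: the boundary terms are handled via Lemma \ref{lm:boundterms}, and the coupling term $\int_0^{x_0}|\hpsi^{(\lf+1)}|^2\,dx$ is absorbed by rescaling \eqref{eq:gevcompat} with the factor $|s|^{-2}\gamma^{2\lf}(\lf+1)!^2\lf!^2\tilde\sigma^{-2\lf}(2\lf)!^{-2}$, evaluated via Stirling with the choice $\gamma=4\tilde\sigma$ to produce the weight $|s|^{-2}(\lf+1)^3$. Your additional remarks on the diagonal action of $\mathfrak{L}_{s,\lambda,\kappa}$ in $\{x\leq x_0\}$ and the matching of weights at $n=\lf$ are consistent with, and slightly more explicit than, the paper's argument.
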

\section{A global estimate for $\mathfrak{L}_{s,\lambda,\kappa}$}
\label{sec:fullycoupledestmates}
We combine Corollaries \ref{cor:fullellipticest} and \ref{cor:fullgevrey}, to obtain a global estimate for an $\hpsi$ in terms of $\mathfrak{L}_{s,\lambda,\kappa}\hpsi$. Recall the Hilbert spaces $H^{N_+}_{\sigma,\delta}$ and $H^{N_++1}_{\sigma, R_{\infty},2}$ introduced in \S \ref{sec:hilbertspaces}.
\begin{proposition}
\label{prop:fullmainest}
Let $M=1$, $N_+\in \N_0$, $0<\kappa\ll \eta x_0$. Take $s\in \Omega_{\sigma}\cap\{{\re s}>(\frac{1}{2}+N_+)\kappa_+\}$ and $p=4(N_++1)+2$. Then, for $M^{-1}R_{\infty}$ and $\lambda$ suitably large, there exists a constant $C_{s,\lambda,\delta, R_{\infty}}>0$ and a constant $N_{\kappa}>\lf$, such that for $\mathfrak{L}_{s,\lambda,\kappa}\hpsi\in H^{N_+}_{\sigma,\delta}$:
\begin{equation}
\label{eq:fullmainestcpt}
	\|\hpsi\|_{H^{N_++1}_{\sigma,R_{\infty},2}}^2+\kappa^2(\|x\partial_x^2\hpsi\|_{G_{\sigma,R_{\infty}}}^2+\|\slashed{\Delta}_{\s^2}\hpsi\|_{G_{\sigma,R_{\infty}}}^2) \leq C_{s,\sigma}\|\mathfrak{L}_{s,\lambda,\kappa}\hpsi\|_{H^{N_+}_{\sigma,R_{\infty}}}^2.
\end{equation}
\end{proposition}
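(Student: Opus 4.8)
\textbf{Proof proposal for Proposition \ref{prop:fullmainest}.}

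The plan is to assemble the global estimate \eqref{eq:fullmainestcpt} by combining the coupled red-shift/elliptic estimate of Corollary \ref{cor:fullellipticest} (valid in $\{r_+\leq \varrho\leq \eta^{-1}R_\infty\}$, in particular covering $\{\varrho\leq 3R_\infty\}$) with the full Gevrey estimate of Corollary \ref{cor:fullgevrey} (valid in $\{x\leq x_0=\tfrac{1}{2R_\infty}\}$, i.e.\ $\{\varrho\geq 2R_\infty\}$), exploiting the overlap region $\{2R_\infty \leq \varrho \leq \eta^{-1}R_\infty\}$ (equivalently $\{\eta R_\infty^{-1}\leq x\leq x_0\}$) to cancel the boxed boundary/overlap terms appearing on the right-hand sides of both families of estimates. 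First I would fix $p=4(N_++1)+2$ as in the definition of the $H^{N_+}_{\sigma,R_\infty}$ norm, take $\lambda$ large enough that all the smallness hypotheses of Corollaries \ref{cor:fullellipticest} and \ref{cor:fullgevrey} hold (this requires $\lambda-N_+(N_++1)$ to dominate $p$, $|s|^2\eta^{-2}R_\infty^2$, $(M^{-1}R_\infty)^{2(1+N_+)}$, etc., which is consistent since $|s|$ is fixed and $s\in\Omega_\sigma$ so $|s|\sim \sigma$), and take $\mathfrak{l}$ (hence $\lambda=\mathfrak{l}(\mathfrak{l}+1)$) additionally large enough that $\lf > |s| x_0^{-1}$ and $\lf \gg d_0^{-1}x_0^{-1}\gamma|s|$ as needed for Lemma \ref{lm:boundterms}; the restriction $\re s > -(\tfrac12+N_+)\kappa_+$ is exactly what makes the red-shift weight exponent $\delta$ positive in Corollary \ref{cor:redshiftho}. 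Finally I would recognize that, by the definitions in \S\ref{sec:hilbertspaces}, the left-hand sides of \eqref{eq:fullellipticest} and \eqref{eq:fullgevrey} together control precisely $\|\hpsi\|^2_{H^{N_++1}_{\sigma,R_\infty,2}}$: the bounded-region Sobolev pieces and the microlocal angular piece come from \eqref{eq:fullellipticest}, the boundary norm $B_{R_\infty,2}$ at $x=x_0$ comes from the $x=x_0$ boundary terms in \eqref{eq:fullgevrey}, and the Gevrey norms $G_{\sigma,R_\infty,2}$ (both the $n\leq\lf-1$ and $n\geq\lf$ pieces) come from the two Gevrey sums in \eqref{eq:fullgevrey}; the extra degenerate $\kappa^2$ second-order terms on the left of \eqref{eq:fullmainestcpt} are exactly the $\kappa^2 x^2|\hpsi^{(n+2)}|^2$ and $\kappa^2 n(n+1)|\hpsi^{(n+1)}|^2$ contributions, the latter recovering $\|\slashed\Delta_{\s^2}\hpsi\|^2_{G_{\sigma,R_\infty}}$ via $\slashed\Delta_{\s^2}\hpsi_\ell = -\ell(\ell+1)\hpsi_\ell$ and $\lf\gtrsim\ell$.

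The coupling step is the crux. On the right-hand side of the elliptic estimate \eqref{eq:fullellipticest} there is a boxed boundary term at $x=\eta R_\infty^{-1}$ of the schematic form $(\lf+1)^{-p}\sum_k x^{2+4k-\delta}(x+\kappa)\lf(\lf+1)|\hpsi^{(k)}_\ell||\partial_x\hpsi^{(k)}_\ell|$; since $p=4(N_++1)+2 > 4(N_++1)+1$, this is bounded (after Young's inequality, pulling out $x=\eta R_\infty^{-1}$ small) by $\epsilon^{-1}(\lf+1)^{-4(N_++1)}\sum_{n\leq N_++1}\int_{\s^2}x|\hpsi^{(n)}|^2|_{x=\eta x_0}\,d\upsigma$ plus an $\epsilon$-small multiple of elliptic-type quantities, and the first of these is precisely the boundary term at $x=\eta x_0$ that appears with a \emph{good} sign on the left-hand side of \eqref{eq:fullgevrey}. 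Conversely, the boxed term on the right-hand side of the Gevrey estimate \eqref{eq:fullgevrey} is $C\int_{\s^2}(|\hpsi_\ell|^2 + (\lf+1)^{-4}|\hpsi^{(1)}_\ell|^2)|_{x=x_0}\,d\upsigma$; this lives at $x=x_0$, i.e.\ in the interior of the region $\{r_+\leq\varrho\leq\eta^{-1}R_\infty\}$ covered by \eqref{eq:fullellipticest}, and by a one-dimensional trace/Sobolev inequality in $x$ (or simply by the fundamental theorem of calculus from a nearby interval, using $x_0\sim R_\infty^{-1}$) it is controlled by $\int_{\eta R_\infty^{-1}}^{x(\frac12 R_\infty/\eta')}(|\hpsi_\ell|^2 + |\partial_x\hpsi_\ell|^2 + \dots)\,dx$ over a small $x$-interval contained in the elliptic region, which is absorbed into the left-hand side of \eqref{eq:fullellipticest} (here it is crucial that the weight $(\lf+1)^{-w_p}$ on the left of \eqref{eq:fullellipticest} is harmless because $w_p\equiv 0$ for $\varrho\geq 3R_\infty$, whereas the $(\lf+1)^{-4}$ factor on the $|\hpsi^{(1)}_\ell|^2$ term gives room to spare). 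Adding $\eta'\times$\eqref{eq:fullgevrey} to \eqref{eq:fullellipticest} for a suitably small constant $\eta'>0$ — so that the $\epsilon$-small elliptic leftovers from absorbing the elliptic boundary term, and the small-$x$ trace leftovers from absorbing the Gevrey boundary term, can each be re-absorbed into the \emph{other} estimate's left-hand side — yields a single estimate with only $\mathfrak{L}_{s,\lambda,\kappa}\hpsi$-terms on the right.

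It then remains to bound the various right-hand side $\mathfrak{L}_{s,\lambda,\kappa}\hpsi$-norms by $C_{s,\sigma}\|\mathfrak{L}_{s,\lambda,\kappa}\hpsi\|^2_{H^{N_+}_{\sigma,R_\infty}}$. This is bookkeeping: the bounded-region terms $\sum_{k\leq N_+}\int_{r_+}^{R_\infty/2}(|\partial_r^k\mathfrak{L}_{s,\lambda,\kappa}\hpsi|^2 + \chi_{r\leq R_\infty/3}(|\snabla_{\s^2}\partial_r^k\mathfrak{L}_{s,\lambda,\kappa}\hpsi|^2 + \lambda\pi_{\leq l}|\dots|^2))\,d\upsigma dr$ are dominated by the $H^{N_+}(\Sigma\cap\{\varrho\leq 3R_\infty\})$ and $H^{N_+}_{\snabla}(\Sigma\cap\{\varrho\leq \tfrac13 R_\infty\})$ pieces of $\|\cdot\|_{H^{N_+}_{\sigma,R_\infty}}$ (the factor $\lambda$ is a fixed constant), the microlocal-weighted elliptic remainder over $\{3R_\infty\leq\varrho\leq 4R_\infty\}$ matches the corresponding $(\mathbf 1-\slashed\Delta_{\s^2})^{-w_p/4}$-weighted term, the $(4\sigma)^{2n}(\lf-(n+1))!^2/(\lf+n+1)!^2$- and $\sigma^{2n}/(n!^2(n+1)!^2)$-weighted $x$-integrals of $\partial_x^n(\mathfrak{L}_{s,\lambda,\kappa}\hpsi)_\ell$ are exactly the $G_{\sigma,R_\infty}$ norm of $\mathfrak{L}_{s,\lambda,\kappa}\hpsi$ (after summing over $\ell$ and using orthogonality of spherical harmonics, noting $(\lf+1)^3\geq 1$), and the $x=x_0$ boundary sum $\sum_n \frac{(d_0x_0)^{2n}}{(\max\{\lf+1,n+1\})^{2n}}|\partial_x^n(\mathfrak{L}_{s,\lambda,\kappa}\hpsi)_\ell|^2|_{x=x_0}$ is exactly $\|\mathfrak{L}_{s,\lambda,\kappa}\hpsi\|^2_{B_{R_\infty}}$. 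Summing over all spherical harmonic modes $\ell$ throughout (every estimate above was either already summed or mode-by-mode with $\ell$-uniform constants) and writing $C_{s,\sigma}$ for the resulting constant — which depends on $s$, $\sigma$, $\lambda$, $R_\infty$ but, by Corollary \ref{cor:maingev} and Lemma \ref{lm:boundterms}, \emph{not} on $\kappa\in(0,\kappa_0]$ once $N_\infty\geq N_\kappa$ and we let $N_\infty\to\infty$ — completes the proof. The main obstacle I anticipate is the first coupling step: making the two boxed terms cancel requires checking that the weight powers of $x$ and of $(\lf+1)$ line up with room to spare (the precise choice $p=4(N_++1)+2$ versus the $(\lf+1)^{-4(N_++1)}$ appearing in the Gevrey boundary term, and the fact that $w_p$ vanishes in the overlap region so no angular weight obstructs the trace argument), and ensuring the mutual absorption of leftover small terms closes rather than circling; the Gevrey and elliptic estimates were designed with exactly these matching weights, so this should go through, but it is where all the care resides.
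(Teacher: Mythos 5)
Your proposal follows essentially the same route as the paper's own proof: there, too, the estimate is obtained by adding Corollary \ref{cor:fullellipticest} to Corollary \ref{cor:fullgevrey}, absorbing the boxed $x=x_0$ boundary term of \eqref{eq:fullgevrey} into the left-hand side of \eqref{eq:fullellipticest} (this is where $M^{-1}R_{\infty}\ll \sqrt{\lambda}$ is used, which is the quantitative content of your trace-inequality step), absorbing the boxed $x=\eta R_{\infty}^{-1}$ boundary term of \eqref{eq:fullellipticest} into the good-sign boundary term of \eqref{eq:fullgevrey} via the choice $p=4(N_++1)+2$, and then identifying the inhomogeneous terms with $\|\mathfrak{L}_{s,\lambda,\kappa}\hpsi\|^2_{H^{N_+}_{\sigma,R_{\infty}}}$.

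Two small points. First, your claim that the left-hand sides of \eqref{eq:fullellipticest} and \eqref{eq:fullgevrey} ``control precisely'' $\|\hpsi\|^2_{H^{N_++1}_{\sigma,R_{\infty},2}}$ is not quite accurate as stated: in the bounded region those estimates only control one angular derivative on top of the radial ones (plus two angular derivatives through $(\slashed{\Delta}_{\s^2}-\lambda\pi_{\leq l})\hpsi$), whereas the Sobolev norms entering $H^{N_++1}_{\sigma,R_{\infty},2}$ contain arbitrary mixed angular--radial derivatives. The paper closes this by additionally commuting with $\slashed{\Delta}_{\s^2}$ (legitimate, since Corollary \ref{cor:redshiftho} is stated for arbitrary $m$), and it also records the elementary two-sided comparison between the $(\lf+1)^{-w_p}$-weighted mode sums of the estimates and the $(\mathbf{1}-\slashed{\Delta}_{\s^2})^{-\frac{w_p}{4}}$-weighted norms appearing in the definition of $H^{N_+}_{\sigma,R_{\infty}}$, together with the observation that $\lf$ and $\ell$ are comparable ($\lf=\ell$ for $\ell>l$, and $\ell$ is bounded for $\ell\leq l$); your proof should include these steps. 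Second, a harmless slip: $w_p$ vanishes for $x\geq \frac{1}{3R_{\infty}}$, i.e.\ for $\varrho\leq 3R_{\infty}$, not for $\varrho\geq 3R_{\infty}$; the fact you actually need --- that the angular weight is trivial near $x=x_0$, i.e.\ at $\varrho=2R_{\infty}$ --- is still true.
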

\begin{proof}
We need to show that:
\begin{multline}
\label{eq:fullmainest}
\sum_{k=0}^1\|\snabla_{\s^2}^{k }\hpsi\|_{H^{N_++1}(\Sigma\cap \{\varrho\leq \frac{1}{3}R_{\infty}\})}^2+\|\hpsi\|_{H^{N_++2}(\Sigma\cap \{\frac{1}{3}R_{\infty}\leq \varrho \leq 3 R_{\infty}\})}^2\\
+\left|\left\|(-\slashed{\Delta}_{\s^2}+\mathbf{1})^{-\frac{w_p}{4}}\hpsi\right|\right|_{H^{N_++2}(\Sigma\cap\{3R_{\infty}\leq \varrho\leq 4R_{\infty}\})}^2+\|\hpsi\|_{B_{R_{\infty},2}}^2+\|\hpsi\|_{G_{\sigma,R_{\infty}, 2}}^2	+\kappa^2\|x\partial_x^2\hpsi\|_{G_{\sigma,R_{\infty}}}^2\\
\leq C_{s,\lambda, R_{\infty}}\Biggl[\|\mathfrak{L}_{s,\lambda,\kappa}\hpsi\|_{H^{N_+}(\Sigma\cap\{\varrho\leq 3R_{\infty}\})}^2+\|\snabla_{\s^2}\mathfrak{L}_{s,\lambda,\kappa}\hpsi\|_{H^{N_+}(\Sigma\cap\{\varrho\leq \frac{1}{3}R_{\infty}\})}^2\\
+\|(-\slashed{\Delta}_{\s^2}+\mathbf{1})^{-\frac{w_p}{4}}(\mathfrak{L}_{s,\lambda,\kappa}\hpsi)\|_{H^{N_+}(\Sigma\cap\{3R_{\infty}\leq\varrho\leq 4R_{\infty}\})}^2+\|\mathfrak{L}_{s,\lambda,\kappa}\hpsi\|_{B_{R_{\infty}}}^2+\|\mathfrak{L}_{s,\lambda,\kappa}\hpsi\|_{G_{\sigma,R_{\infty}}}^2\Biggr].\end{multline}

We consider \eqref{eq:fullellipticest} and we commute additionally with $\slashed{\Delta}_{\s^2}$ to control additional angular derivatives on the left-hand side and obtain control of the desired $H^{N_+}$, $H^{N_++1}$ and $H^{N_++2}$ norms. 

By using that $M^{-1}R_{\infty}\ll \sqrt{\lambda}$, we can absorb the boxed term on the right-hand side of \eqref{eq:fullgevrey} into the left-hand side of \eqref{eq:fullellipticest}. In order to subsequently absorb the boxed terms that appear on the right-hand side of \eqref{eq:fullellipticest} into the left-hand side of  \eqref{eq:fullgevrey}, we take $p= 4(N_++1)+2$. Finally, we note that
\begin{equation*}
\sum_{\ell=\in \N_0}\int_{\s^2}(\lf+1)^{-w_p}|f_{\ell}|^2\,d\upsigma \leq  \sum_{\ell\in \N_0}\int_{\s^2}2^{\frac{w_p}{2}}(\ell(\ell+1)+\lambda)^{-\frac{p}{2}}|f_{\ell}|^2\,d\upsigma\leq 2^{\frac{w_p}{2}}(1+\lambda^{-\frac{w_p}{2}})\int_{\s^2}|(-\slashed{\Delta}_{\s^2}+\mathbf{1})^{-\frac{w_p}{4}}f|^2\,d\upsigma.
\end{equation*}
We also have that
\begin{multline*}
\sum_{\ell=\in \N_0}\int_{\s^2}(\lf+1)^{-w_p}|f_{\ell}|^2\,d\upsigma = \sum_{\ell=0}^{l}\int_{\s^2}(\ell+l+1)^{-w_p}|f_{\ell}|^2\,d\upsigma+ \sum_{\ell=l+1}^{\infty }(\ell+1)^{-w_p}|f_{\ell}|^2\,d\upsigma\\
\geq ((2l+1)^{w_p}+1) \sum_{\ell=0}^{\infty}(1+\ell(\ell+1))^{\frac{-w_p}{2}}|f_{\ell}|^2\,d\upsigma\geq ((2l+1)^{w_p}+1)\int_{\s^2}|(-\slashed{\Delta}_{\s^2}+\mathbf{1})^{-\frac{w_p}{4}}f|^2\,d\upsigma.
\end{multline*}
Finally, we observe that in the case $\ell>l$, we have that $\lf=\ell$, whereas in the case $\ell\leq l$, we use boundedness of $\ell$ to ensure that the $\lf$-dependent norms in \eqref{eq:fullellipticest} are comparable to the norms appearing in \eqref{eq:fullmainest}.
\end{proof}

\section{Inverting $\mathfrak{L}_{s,\lambda,\kappa}$}
\label{sec:invertingmathfrakL}
In this section, we will construct $\mathfrak{L}_{s,\lambda,\kappa}^{-1}$ for $\kappa\geq 0$ with an appropriate Hilbert space domain and with a suitably large $\lambda$. 

We first make use of the regularizing effect of $\kappa>0$ to invert $\mathfrak{L}_{s,\lambda,\kappa}$ as an operator on suitable Sobolev spaces. We then appeal to the uniform-in-$\kappa$ estimate from Proposition \ref{prop:fullmainest} to construct $\mathfrak{L}_{s,\lambda,\kappa=0}^{-1}$ as a limit of a sequence of operators $\mathfrak{L}_{s,\lambda,\kappa_n}^{-1}$, with $\kappa_n>0$, $\kappa_n\to 0$ as $n\to \infty$.

Define $H^k(\mathfrak{L}_{s,\lambda,\kappa})$ as the closure of $C^{\infty}([0,r_+^{-1}]_x\times \s^2)$ with respect to the graph norm:
\begin{equation*}
	\|\cdot\|_{H^k([0,r_+^{-1}]_x\times \s^2)}+ \|\mathfrak{L}_{s,\lambda,\kappa}(\cdot)\|_{H^k([0,r_+^{-1}]_x\times \s^2)}.
\end{equation*}
Then the map:
\begin{equation*}
	\mathfrak{L}_{s,\lambda,\kappa}: H^k(\mathfrak{L}_{s,\lambda,\kappa})\to H^k([0,r_+^{-1}]_x\times \s^2)
\end{equation*}
is a well-defined bounded linear operator for all $k\in \N_0$.

\begin{theorem}
\label{tm:invertkappapos}
	Let $N\geq N_+\in \N_0$ and ${\re s}>-(N+\frac{1}{2})\min\{\kappa_+,\kappa\}$. Then, there exists a $\lambda>0$ suitably large, such that the inverse map:
	\begin{multline*}
		\mathfrak{L}_{s,\lambda,\kappa}^{-1}: H^{N}([0,r_+^{-1}]_x\times \s^2)\cap\left\{f\in L^2([0,r_+^{-1}]_x\times \s^2)\,|\, \exists m\in \Z\,:\,\partial_{\varphi}f=imf\right\}\\
		\to H^N(\mathfrak{L}_{s,\lambda,\kappa})\subseteq H^{N+1}([0,r_+^{-1}]_x\times \s^2)
	\end{multline*}
	is a bounded linear operator.
\end{theorem}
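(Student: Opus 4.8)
The statement asserts invertibility of the $\kappa$-regularized operator $\mathfrak{L}_{s,\lambda,\kappa}$ with $\kappa>0$, restricted to a fixed azimuthal mode $m$, as a map between Sobolev spaces. The key point is that for $\kappa>0$ the singular point at $x=0$ (future null infinity) becomes a \emph{regular} singular point, with effective ``surface gravity'' $\kappa$, so that $\mathfrak{L}_{s,\lambda,\kappa}$ is genuinely of the Kerr--de Sitter type treated in \cite{warn15}. The plan is therefore to reduce the claim to the resolvent construction of \cite{warn15} (and \cite{vasy1, peva21} for the Kerr--de Sitter resolvent), using the explicit form of $\mathfrak{L}_{s,\lambda,\kappa}$ recorded in \eqref{eq:operatorgeneral}, and to combine it with the red-shift estimates near $x=r_+^{-1}$ (the event horizon) from \S\ref{sec:redshiftelliptic} and analogous red-shift-type estimates near $x=0$ for the regularized operator.

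\emph{Step 1: set-up and the regularized geometric picture.} First I would recall from \S\ref{sec:KdS} (the geometric interpretation of $\kappa$ given in \S\ref{sec:ideastechniques}) that $\mathfrak{L}_{s,\lambda,\kappa}$ agrees, up to terms supported in a bounded region and up to an additive shift $-(\lambda\pi_{\le l}+\chi_{\varrho\ge R_\infty}S_\lambda\pi_{\le l}+\ldots)$, with the Laplace transform of the conformally covariant wave operator on a Kerr--de Sitter spacetime whose cosmological horizon has surface gravity $2\kappa_c = \kappa$. The spacelike hypersurface $\widehat\Sigma \cong [0,r_+^{-1}]_x\times\s^2$ is horizon-to-horizon (event horizon at $x = r_+^{-1}$, cosmological horizon at $x=0$), both with strictly positive surface gravity $\kappa_+$ and $\kappa$ respectively. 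On this compact-with-boundary slice the Laplace-transformed operator, at fixed azimuthal number $m$, is uniformly elliptic in the interior and of the model form $\partial_x(x(x+\kappa)\partial_x\cdot)+2s\partial_x\cdot + \slashed\Delta_{\s^2}\cdot + \ldots$ near $x=0$ and $\Delta$-degenerate near $x = r_+^{-1}$. This is exactly the structure for which \cite{warn15} (asymptotically de Sitter / horizon case) constructs the resolvent $R(s): H^N\to H^{N+1}$ for $\re s > -\tfrac{1}{2}(1+2N)\min\{\kappa_+,\kappa\}$, restricted to fixed $m$; for the full Kerr--de Sitter statement one invokes \cite{vasy1, peva21}. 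So I would state this as the input and verify that the hypotheses (non-degenerate horizons, correct sign of $g^{-1}(d\tau,d\tau)$ by \eqref{eq:condh2}, analyticity/regularity of coefficients) are met by our $\mathfrak{L}_{s,\lambda,\kappa}$.

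\emph{Step 2: Fredholm theory and removing the kernel via $\lambda$.} The cited resolvent construction gives, for $\re s$ in the stated half-space, that $\mathfrak{L}_{s,\lambda,\kappa}: H^N(\mathfrak{L}_{s,\lambda,\kappa})\to H^N$ (fixed $m$) is Fredholm of index $0$: this follows from the interior ellipticity together with the hyperbolic/red-shift estimates at the two horizons, which give an estimate of the form $\|\hpsi\|_{H^{N+1}}\le C(\|\mathfrak{L}_{s,\lambda,\kappa}\hpsi\|_{H^N}+\|\hpsi\|_{H^N})$ and the analogous estimate for the adjoint, hence compactness of the resolvent of the associated closed operator. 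To upgrade Fredholm-index-$0$ to genuine invertibility I would exploit the angular frequency shift: the term $-\lambda\pi_{\le l}-\chi_{\varrho\ge R_\infty}S_\lambda\pi_{\le l}$ adds a large negative potential $-\lf(\lf+1)$ with $\lf\ge l$ on every spherical-harmonic mode (in the region $\varrho\ge R_\infty$, and a comparable large shift $-\lambda\pi_{\le l}$ in $\varrho\le \tfrac12 R_\infty$). For $\lambda$ large enough this is a coercive perturbation: integrating $\overline{\hpsi}\cdot\mathfrak{L}_{s,\lambda,\kappa}\hpsi$, the large $\lambda$ term dominates the (bounded, $s$-dependent) zeroth- and first-order terms of $B_s^\infty$, $K_s^\infty$, $K_s^+$ and the cross terms, forcing $\ker\mathfrak{L}_{s,\lambda,\kappa}=\{0\}$ — this is essentially the same mechanism by which $\lambda$ is chosen large in Corollary~\ref{cor:fullellipticest} and Proposition~\ref{prop:fullmainest}. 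Combined with index $0$ this yields bijectivity, hence a bounded inverse by the open mapping theorem, and the gain of one derivative $H^N\to H^{N+1}$ is read off from the elliptic+red-shift estimate.

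\emph{Main obstacle.} The genuinely delicate point is the behaviour at $x=0$: one must ensure that the regularized operator really does enjoy a red-shift-type estimate there \emph{uniformly down to the boundary}, i.e.\ that imposing $\hpsi\in H^{N+1}$ (finite Sobolev regularity, no outgoing condition) is the correct boundary condition selecting a unique solution, and that the threshold on $\re s$ is exactly $-(N+\tfrac12)\min\{\kappa_+,\kappa\}$. This is precisely where the positivity of $\kappa$ is used and where the quoted results of \cite{warn15, vasy1, peva21} do the real work; my task is only to match conventions (the factor of $2$ between $\kappa$ and $2\kappa_c$, the rescaling $\psi = \varrho\phi$, the coordinate $x=\varrho^{-1}$) and to check that the bounded ``error'' operators $Mx B_s^\infty$, $K_s^\infty$, $\chi K_s^+$ and the $S_\lambda$ commutator terms are lower-order relative to the model operator near each horizon, so that they do not affect the Fredholm property or the regularity gain. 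Once these bookkeeping checks are done, the theorem follows; the construction of $\mathfrak{L}_{s,\lambda,0}^{-1}$ as the $\kappa\downarrow 0$ limit is then carried out separately using the uniform-in-$\kappa$ estimate \eqref{eq:fullmainestcpt} of Proposition~\ref{prop:fullmainest}.
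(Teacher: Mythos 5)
Your proposal follows essentially the same route as the paper: the paper's proof of this theorem is a one-line reduction to a straightforward modification of \cite{warn15}[Theorem 5.7], made possible precisely by the restriction to fixed azimuthal modes, which is the content of your Step 1 and of your closing remarks deferring the $x=0$ analysis and the $\re s$ threshold to \cite{warn15,vasy1,peva21}. One caution: the injectivity mechanism you sketch via the naive pairing of $\overline{\hpsi}$ with $\mathfrak{L}_{s,\lambda,\kappa}\hpsi$ cannot by itself cover the full range $\re s>-(N+\tfrac{1}{2})\min\{\kappa_+,\kappa\}$ (for $\re s<0$ the horizon boundary term and the large first-order term $\sim 4Msr^2\partial_r$ have the wrong sign, which is exactly why the paper uses commuted red-shift multiplier estimates with that threshold), so the kernel-removal for large $\lambda$ must be run through those multiplier estimates as encapsulated in the cited theorem of \cite{warn15} rather than through an $L^2$ pairing alone — but since you explicitly delegate this step to that reference, the overall argument matches the paper's.
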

\begin{proof}
Since we are considering a domain restricted to finite azimuthal modes (with $\varphi$-dependence $e^{im\varphi}$), we can apply a straightforward modification of \cite{warn15}[Theorem 5.7] to obtain the existence of the desired inverse.
\end{proof}
\begin{remark}
The ability to restrict to fixed azimuthal modes on Kerr is not crucial. One can in principle derive an analogue of Theorem \ref{tm:invertkappapos} without making use of this assumption and instead make use of red-shift estimates depending on $\kappa$ in the large-$r$-region, instead of Gevrey-type estimates. See also the related results in \cite{vasy1}.
\end{remark}

By applying Proposition \ref{prop:fullmainest}, which is uniform in $\kappa$ and $m$, we can obtain the existence of $\mathfrak{L}_{s,\lambda,\kappa}^{-1}$ with domain $H_{ \sigma,R_{\infty}}^{N_+}$.
\begin{corollary}
\label{cor:unifboundinvkappapos}
	Let $\sigma\in \R$ and take $s\in \Omega_{\sigma}\cap\{{\re s}>(\frac{1}{2}+N_+)\kappa_+\}$. Then, there exists a suitably small $\kappa>0$ and a suitably large $R_{\infty}$ and $\lambda \in \N_0$, such that:
	\begin{equation*}
		\mathfrak{L}_{s,\lambda,\kappa}^{-1}: H_{ \sigma,R_{\infty}}^{N_+}\to  H_{\sigma, R_{\infty},2}^{N_++1}
			\end{equation*}
	is a bounded linear operator and there exists a constant $C_{s,\sigma, R_{\infty}}>0$ such that:
	\begin{equation}
	\label{eq:opboundkappapos}
	\|\mathfrak{L}_{s,\lambda,\kappa}^{-1}\|_{ H_{ \sigma, R_{\infty}}^{N_+}\to  H_{ \sigma, R_{\infty},2}^{N_++1}}\leq C_{s,\sigma, R_{\infty}}.
	\end{equation}
	Furthermore, for all $f\in H_{ \sigma, R_{\infty}}^{N_+}$,
	\begin{equation}
	\label{eq:addestposkappa}
	\kappa \|x\partial_x^2\mathfrak{L}_{s,\lambda,\kappa}^{-1}(f)\|_{G_{\sigma,R_{\infty}}}+\kappa^2 \|\slashed{\Delta}_{\s^2 }\mathfrak{L}_{s,\lambda,\kappa}^{-1}(f)\|_{G_{\sigma,R_{\infty}}}\leq C_{s,\sigma}\|f\|_{H_{ \sigma, R_{\infty}}^{N_+}}.
	\end{equation}
\end{corollary}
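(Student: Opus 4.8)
\textbf{Proof plan for Corollary \ref{cor:unifboundinvkappapos}.}

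The plan is to combine the existence statement of Theorem \ref{tm:invertkappapos}, which already produces $\mathfrak{L}_{s,\lambda,\kappa}^{-1}$ as a bounded operator between \emph{Sobolev} spaces on $[0,r_+^{-1}]_x\times\s^2$ restricted to fixed azimuthal modes, with the uniform-in-$\kappa$ a priori estimate of Proposition \ref{prop:fullmainest}. First I would fix $\sigma\in\R$ and $s\in\Omega_\sigma\cap\{\re s>(\tfrac12+N_+)\kappa_+\}$, and then choose the parameters in the order forced by the hypotheses of Proposition \ref{prop:fullmainest}: pick $\eta$ small, then $R_\infty$ with $M^{-1}R_\infty$ large, then $\lambda$ large (meeting the smallness/largeness relations in Corollary \ref{cor:fullellipticest} together with $M^{-1}R_\infty\ll\sqrt\lambda$), and finally $\kappa>0$ with $\kappa\ll\eta x_0$. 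With these choices, Theorem \ref{tm:invertkappapos} applies with, say, $N=N_+$ (the condition $\re s>-(N+\tfrac12)\min\{\kappa_+,\kappa\}$ holds since $\re s>-(N_++\tfrac12)\kappa_+$ and we may further shrink $\kappa$ so that $\min\{\kappa_+,\kappa\}=\kappa$ is irrelevant, noting $\re s>0$ in the relevant part of $\Omega_\sigma$, or simply use that $\re s>(\tfrac12+N_+)\kappa_+>-(N_++\tfrac12)\kappa$); this gives, on each fixed azimuthal mode, a bounded inverse $\mathfrak{L}_{s,\lambda,\kappa}^{-1}:H^{N_+}\to H^{N_++1}$.

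Next I would promote this to a bound with respect to the weighted norms $H^{N_+}_{\sigma,R_\infty}$ and $H^{N_++1}_{\sigma,R_\infty,2}$. Given $f\in H^{N_+}_{\sigma,R_\infty}$, decompose $f=\sum_m f_m$ into azimuthal modes; for each $m$, $f_m$ lies in the Sobolev space $H^{N_+}([0,r_+^{-1}]_x\times\s^2)$ restricted to the $e^{im\varphi}$-subspace (the weighted norms are, on any bounded-in-$\varrho$ region, comparable to ordinary Sobolev norms, and the Gevrey/boundary pieces only strengthen the statement), so $\hpsi_m:=\mathfrak{L}_{s,\lambda,\kappa}^{-1}f_m$ is well-defined and smooth in $x$ up to $x=0$ with $\mathfrak{L}_{s,\lambda,\kappa}\hpsi_m=f_m\in H^{N_+}_{\sigma,R_\infty}$. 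Apply Proposition \ref{prop:fullmainest} to each $\hpsi_m$ to get
\begin{equation*}
\|\hpsi_m\|^2_{H^{N_++1}_{\sigma,R_\infty,2}}+\kappa^2\bigl(\|x\partial_x^2\hpsi_m\|^2_{G_{\sigma,R_\infty}}+\|\slashed{\Delta}_{\s^2}\hpsi_m\|^2_{G_{\sigma,R_\infty}}\bigr)\leq C_{s,\sigma}\|f_m\|^2_{H^{N_+}_{\sigma,R_\infty}},
\end{equation*}
with $C_{s,\sigma}$ independent of $m$ and of $\kappa$ (this is the whole point of the uniformity in Proposition \ref{prop:fullmainest}). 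Then sum over $m\in\Z$, using orthogonality of the azimuthal modes in all the inner products defining $H^{N_+}_{\sigma,R_\infty}$, $H^{N_++1}_{\sigma,R_\infty,2}$ and $G_{\sigma,R_\infty}$ (each of these norms is built from $L^2(\s^2)$-pairings and hence respects the $\varphi$-Fourier decomposition), to conclude $\hpsi:=\sum_m\hpsi_m\in H^{N_++1}_{\sigma,R_\infty,2}$ with $\mathfrak{L}_{s,\lambda,\kappa}\hpsi=f$ and the bound \eqref{eq:opboundkappapos}, together with \eqref{eq:addestposkappa} after moving one power of $\kappa$ from the $\slashed{\Delta}_{\s^2}$ term (the statement in the Corollary has $\kappa$ rather than $\kappa^2$ in front of $\|\slashed{\Delta}_{\s^2}\hpsi\|_{G_{\sigma,R_\infty}}$, which is implied by the $\kappa^2$-bound once $\kappa\le1$). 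Injectivity of $\mathfrak{L}_{s,\lambda,\kappa}$ on $H^{N_++1}_{\sigma,R_\infty,2}$ — and hence that this $\hpsi$ is the unique preimage, so the map $f\mapsto\hpsi$ is genuinely the inverse and is linear — follows directly from \eqref{eq:fullmainestcpt}: if $\mathfrak{L}_{s,\lambda,\kappa}\hpsi=0$ then the left-hand side vanishes, forcing $\hpsi=0$.

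The main obstacle I anticipate is \emph{matching the function-space hypotheses of Theorem \ref{tm:invertkappapos} with the approximation argument}: Proposition \ref{prop:fullmainest} is stated for $\hpsi$ with $\mathfrak{L}_{s,\lambda,\kappa}\hpsi\in H^{N_+}_{\sigma,\delta}$, and one must check that the Sobolev-space preimage produced by Theorem \ref{tm:invertkappapos} is regular enough at $x=0$ to be an admissible test function in that estimate — i.e.\ that it lies in (the closure of) $C^\infty(\widehat\Sigma)$ in the graph norm, so that the integrations by parts underlying the red-shift, elliptic and Gevrey estimates are justified. This is handled by a standard density/elliptic-regularity argument: since $\kappa>0$ the point $x=0$ is a regular singular point of the radial ODE on each spherical mode, so $\hpsi_m$ is automatically smooth up to $x=0$ when $f_m$ is smooth, and for general $f_m\in H^{N_+}$ one approximates $f_m$ by smooth functions and passes to the limit using \eqref{eq:fullmainestcpt} itself (which is closed) to control the limit. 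A secondary, purely bookkeeping point is to verify that the interchange of the $m$-sum with the $x$- and $\s^2$-integrals, and with the Gevrey-type infinite sums over $n$, is legitimate; this is immediate by monotone convergence since every term is non-negative.
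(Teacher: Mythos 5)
Your proposal is correct and follows essentially the same route as the paper: apply Theorem \ref{tm:invertkappapos} on fixed azimuthal modes to obtain a (smooth) preimage, promote it to the weighted spaces via the uniform-in-$\kappa$, uniform-in-$m$ estimate of Proposition \ref{prop:fullmainest}, sum over $m$, and conclude by density of smooth functions in $H^{N_+}_{\sigma,R_{\infty}}$. The only adjustment is that the paper invokes Theorem \ref{tm:invertkappapos} with a $\kappa$-dependent order $N_{\kappa}>N_+$ (so that $\re s>-(N_{\kappa}+\tfrac{1}{2})\kappa$ also covers the intended range where $\re s$ may be negative), rather than $N=N_+$; since the theorem is used only for solvability on each azimuthal mode, this changes nothing else in your argument.
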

\begin{proof}
	Let $\kappa_+>\kappa>0$ and $s\in \Omega_{\sigma}\cap\{{\re z}>(\frac{1}{2}+N_+)\kappa_+\}$. Then there exists a $N_{\kappa}>N_+$ such that ${\re s}>-(N_{\kappa}+\frac{1}{2})\kappa$. Let $f_m\in C^{\infty}([0,r_+^{-1}]_x\times \s^2))\cap H_{ \sigma,R_{\infty}}^{N_+}$, such that $\partial_{\varphi}f_m=im f_m$ for some $m\in \Z$. Then we can apply Theorem \ref{tm:invertkappapos} to conclude that $\mathfrak{L}_{s,\lambda,\kappa}^{-1}(f_m)\in  C^{\infty}([0,r_+^{-1}]_x\times \s^2)$ is well-defined for suitably large $\lambda$.
	
	By Proposition \ref{prop:fullmainest}, it then follows that for $\lambda$ suitably large and $\kappa$ suitably small, we can estimate
	\begin{equation*}
		\|\mathfrak{L}_{s,\lambda,\kappa}^{-1}(f_m)\|_{H^{N_++1}_{\sigma, R_{\infty},2}}^2\leq C_{s,\sigma,\lambda}\|f_m\|_{H^{N_+}_{ \sigma,R_{\infty}}}^2.
	\end{equation*}
	Let  $f\in C^{\infty}([0,r_+^{-1}]_x\times \s^2))\cap H_{ \sigma,R_{\infty}}^{N_+}$. Then we can decompose $f=\sum_{m\in \Z}f_m$, with $\partial_{\varphi}f_m=imf_m$.
	Since the constant in the above estimate does not depend on $m$, we can sum over $m$ to conclude that for general $f\in C^{\infty}([0,r_+^{-1}]_x\times \s^2))\cap H_{ \sigma,R_{\infty}}^{N_+}$, $\mathfrak{L}_{s,\lambda,\kappa}^{-1}(f)=\sum_{m\in \Z}\mathfrak{L}_{s,\lambda,\kappa}^{-1}(f_m)$ is well defined.
	
	Hence, $\mathfrak{L}_{s,\lambda,\kappa}^{-1}(C^{\infty}([0,r_+^{-1}]_x\times \s^2)\cap H_{ \sigma,R_{\infty}}^{N_+})\subseteq H_{ \sigma,R_{\infty},2}^{N_++1}$. Since $C^{\infty}([0,r_+^{-1}]_x\times \s^2)\cap H_{ \sigma,R_{\infty}}^{N_+}$ is dense in  $H_{ \sigma,R_{\infty}}^{N_+}$, we can extend $\mathfrak{L}_{s,\lambda,\kappa}^{-1}$ to act on $H_{ \sigma,R_{\infty}}^{N_+}$ and we must have that $\mathfrak{L}_{s,\lambda,\kappa}^{-1}( H_{ \sigma,R_{\infty}}^{N_+})\subseteq H_{ \sigma,R_{\infty},2}^{N_++1}$.
\end{proof}

\begin{proposition}
\label{prop:invcalL}
	Let $\sigma\in \R$ and take $s\in \Omega_{\sigma}\cap\{{\re z}>-(\frac{1}{2}+N_+)\kappa_+\}$. Then, there exists a suitably large $\lambda \in \N_0$, such that:
	\begin{equation}
	\label{eq:resolventwelldef}
		\mathfrak{L}_{s,\lambda}^{-1}=\mathfrak{L}_{s,\lambda,0}^{-1}: H_{ \sigma,R_{\infty}}^{N_+}\to  H_{ \sigma,R_{\infty},2}^{N_++1}
			\end{equation}
	is a bounded linear operator and there exists a constant $ C_{s,\sigma}>0$ such that:
	\begin{equation*}
	\|\mathfrak{L}_{s,\lambda}^{-1}\|_{ H_{ \sigma,R_{\infty}}^{N_+}\to  H_{ \sigma,R_{\infty},2}^{N_++1}}\leq  C_{s,\sigma}.
	\end{equation*}
Furthermore, for any sequence $\{\kappa_n\}$, with $\kappa_n\to 0$ as $n\to \infty$
\begin{equation*}
\mathfrak{L}_{s,\lambda,\kappa_n}^{-1}\to \mathfrak{L}_{s,\lambda}^{-1}
\end{equation*}
as $n\to \infty$, with respect to the operator norm on $H_{ \sigma,R_{\infty}}^{N_+}$.
\end{proposition}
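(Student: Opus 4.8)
The plan is to deduce Proposition~\ref{prop:invcalL} from the uniform-in-$\kappa$ bounds of Corollary~\ref{cor:unifboundinvkappapos} by a compactness and Cauchy-sequence argument. First I would fix $\sigma$, $N_+$ and $s\in \Omega_\sigma\cap\{\re z>-(\tfrac12+N_+)\kappa_+\}$, and choose $\lambda$ large enough and $R_\infty$ large enough so that both Corollary~\ref{cor:unifboundinvkappapos} and Proposition~\ref{prop:fullmainest} apply for all sufficiently small $\kappa>0$. Note that the estimate \eqref{eq:fullmainestcpt} holds with a constant $C_{s,\sigma}$ that is \emph{independent of $\kappa$}; this is the crucial input. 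So for any sequence $\kappa_n\downarrow 0$ we have a family of bounded operators $\mathfrak{L}_{s,\lambda,\kappa_n}^{-1}\colon H^{N_+}_{\sigma,R_\infty}\to H^{N_++1}_{\sigma,R_\infty,2}$ with $\|\mathfrak{L}_{s,\lambda,\kappa_n}^{-1}\|\le C_{s,\sigma}$ uniformly in $n$.

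Next I would show that $\{\mathfrak{L}_{s,\lambda,\kappa_n}^{-1}\}$ is Cauchy in the operator norm on $H^{N_+}_{\sigma,R_\infty}$. For $f\in H^{N_+}_{\sigma,R_\infty}$, write $\hpsi_n=\mathfrak{L}_{s,\lambda,\kappa_n}^{-1}f$ and $\hpsi_m=\mathfrak{L}_{s,\lambda,\kappa_m}^{-1}f$. Then, using \eqref{eq:defLcheck}--\eqref{eq:operatorgeneral}, the difference of the two operators is the first-order differential operator
\begin{equation*}
\mathfrak{L}_{s,\lambda,\kappa_n}-\mathfrak{L}_{s,\lambda,\kappa_m}=(\kappa_n-\kappa_m)\,\partial_x(\chi_{\varrho\geq R_\infty}\, x\,\partial_x(\cdot)),
\end{equation*}
so that
\begin{equation*}
\mathfrak{L}_{s,\lambda,\kappa_m}(\hpsi_n-\hpsi_m)=-(\kappa_n-\kappa_m)\,\partial_x(\chi_{\varrho\geq R_\infty}\, x\,\partial_x\hpsi_n).
\end{equation*}
Applying Proposition~\ref{prop:fullmainest} to $\hpsi_n-\hpsi_m$ gives
\begin{equation*}
\|\hpsi_n-\hpsi_m\|_{H^{N_++1}_{\sigma,R_\infty,2}}^2\leq C_{s,\sigma}|\kappa_n-\kappa_m|^2\,\|\partial_x(\chi_{\varrho\geq R_\infty}\, x\,\partial_x\hpsi_n)\|_{H^{N_+}_{\sigma,R_\infty}}^2,
\end{equation*}
and the right-hand side is controlled using the extra term on the left of \eqref{eq:fullmainestcpt}: indeed $\|\partial_x(\chi_{\varrho\geq R_\infty} x\partial_x\hpsi_n)\|_{H^{N_+}_{\sigma,R_\infty}}$ is bounded, up to lower-order terms already controlled by $\|\hpsi_n\|_{H^{N_++1}_{\sigma,R_\infty,2}}$, by $\|x\partial_x^2\hpsi_n\|_{G_{\sigma,R_\infty}}+\|\slashed{\Delta}_{\s^2}\hpsi_n\|_{G_{\sigma,R_\infty}}$ in the far region where $\chi_{\varrho\geq R_\infty}\ne 0$, and by the finite-regularity Sobolev norms elsewhere. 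Hence
\begin{equation*}
\|\partial_x(\chi_{\varrho\geq R_\infty} x\partial_x\hpsi_n)\|_{H^{N_+}_{\sigma,R_\infty}}\leq C_{s,\sigma}\kappa_n^{-1}\|f\|_{H^{N_+}_{\sigma,R_\infty}}
\end{equation*}
from \eqref{eq:addestposkappa}, so $\|\hpsi_n-\hpsi_m\|\le C_{s,\sigma}|\kappa_n-\kappa_m|\max\{\kappa_n^{-1},\kappa_m^{-1}\}\|f\|$; this is not yet obviously Cauchy, so the argument must be arranged more carefully — either by working with a geometric sequence $\kappa_n=2^{-n}\kappa_0$ and exploiting the gain of one degenerate factor of $x$ in the far-region term (which effectively converts the $\kappa^{-1}$ loss into a $\kappa^{-1/2}$ loss, giving a summable series), or, cleaner, by using \eqref{eq:addestposkappa} in the sharper form $\kappa_n\|x\partial_x^2\hpsi_n\|_{G_{\sigma,R_\infty}}\le C\|f\|$ together with the fact that the $\kappa$-difference term carries a compensating power: $|\kappa_n-\kappa_m|\cdot\kappa_n\cdot\|x\partial_x^2\hpsi_n\|/\kappa_n = |\kappa_n-\kappa_m|\|x\partial_x^2\hpsi_n\|$, and then one bounds $\|x\partial_x^2\hpsi_n\|$ by $C\kappa_n^{-1}\|f\|$ but pairs it against the \emph{other} small factor. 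I expect the bookkeeping here — extracting genuine Cauchy-ness from \eqref{eq:addestposkappa} rather than merely boundedness — to be the main obstacle, and the resolution is to observe that the relevant term in $\mathfrak{L}_{s,\lambda,\kappa_m}(\hpsi_n-\hpsi_m)$ is $(\kappa_n-\kappa_m)\partial_x(\chi_{\varrho\geq R_\infty}x\partial_x\hpsi_n)$, which I rewrite as $\frac{\kappa_n-\kappa_m}{\kappa_n}\cdot\kappa_n\partial_x(\chi x\partial_x\hpsi_n)$, and $\kappa_n\partial_x(\chi x\partial_x\hpsi_n)$ is uniformly bounded in $H^{N_+}_{\sigma,R_\infty}$ by \eqref{eq:addestposkappa} (since $\kappa_n\|x\partial_x^2\hpsi_n\|_{G}+\kappa_n\|\partial_x\hpsi_n\|_G\lesssim\|f\|$, noting $\chi$ is supported where the $G$-norm controls things); thus $\|\hpsi_n-\hpsi_m\|\le C_{s,\sigma}\frac{|\kappa_n-\kappa_m|}{\kappa_n}\|f\|$, which for $\kappa_n=2^{-n}\kappa_0$ gives $\frac{|\kappa_n-\kappa_m|}{\kappa_n}\le 2^{|n-m|}$ — still not summable, so one genuinely needs the extra degenerate $x$-factor mentioned above, i.e.\ that $\chi_{\varrho\geq R_\infty}x\partial_x\hpsi_n$ has one more power of $x$ than $\hpsi_n$, combined with the improved weight it carries in $G_{\sigma,R_\infty,2}$ versus $G_{\sigma,R_\infty}$, making the relevant norm $o(1)$ as $\kappa_n\to0$.

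Granting the Cauchy property, I would then let $\mathfrak{L}_{s,\lambda}^{-1}$ be the operator-norm limit, a bounded operator $H^{N_+}_{\sigma,R_\infty}\to H^{N_++1}_{\sigma,R_\infty,2}$ with $\|\mathfrak{L}_{s,\lambda}^{-1}\|\le C_{s,\sigma}$ by lower semicontinuity of the norm under this limit. It remains to identify this limit as a genuine two-sided inverse of $\mathfrak{L}_{s,\lambda}=\mathfrak{L}_{s,\lambda,0}$. For $f\in C^\infty(\widehat\Sigma)\cap H^{N_+}_{\sigma,R_\infty}$, we have $\mathfrak{L}_{s,\lambda,\kappa_n}\hpsi_n=f$; since $\hpsi_n\to\hpsi_\infty:=\mathfrak{L}_{s,\lambda}^{-1}f$ in $H^{N_++1}_{\sigma,R_\infty,2}$ and $\mathfrak{L}_{s,\lambda,\kappa_n}-\mathfrak{L}_{s,\lambda}=\kappa_n\partial_x(\chi_{\varrho\geq R_\infty}x\partial_x(\cdot))\to 0$ as a map $H^{N_++1}_{\sigma,R_\infty,2}\to H^{N_+}_{\sigma,R_\infty}$ (again using the degenerate far-region weights, or simply that $\kappa_n\to0$ and this operator is bounded uniformly up to the $\kappa_n$ factor), we pass to the limit to get $\mathfrak{L}_{s,\lambda}\hpsi_\infty=f$, so $\mathfrak{L}_{s,\lambda}\circ\mathfrak{L}_{s,\lambda}^{-1}=\mathbf{1}$ on a dense subset hence everywhere. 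For the reverse identity, given $\hpsi$ with $\mathfrak{L}_{s,\lambda}\hpsi=:g\in H^{N_+}_{\sigma,R_\infty}$, Proposition~\ref{prop:fullmainest} with $\kappa=0$ shows $\hpsi\in H^{N_++1}_{\sigma,R_\infty,2}$ and $\|\hpsi\|\le C_{s,\sigma}\|g\|$, so $\mathfrak{L}_{s,\lambda}$ is injective with closed range; combined with surjectivity this yields $\mathfrak{L}_{s,\lambda}^{-1}\circ\mathfrak{L}_{s,\lambda}=\mathbf{1}$ on the domain, establishing \eqref{eq:resolventwelldef}. Finally, since the argument shows $\mathfrak{L}_{s,\lambda,\kappa_n}^{-1}\to\mathfrak{L}_{s,\lambda}^{-1}$ for \emph{every} sequence $\kappa_n\to0$ (the limit being the same, characterised intrinsically as the inverse of $\mathfrak{L}_{s,\lambda}$), the last claim of the proposition follows immediately.
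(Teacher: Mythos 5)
There is a genuine gap, and you have in fact located it yourself without closing it: the naive Cauchy estimate
\begin{equation*}
\|\hpsi_n-\hpsi_m\|\leq C_{s,\sigma}\,|\kappa_n-\kappa_m|\,\|\partial_x(\chi_{\varrho\geq R_{\infty}}x\partial_x\hpsi_n)\|_{H^{N_+}_{\sigma,R_{\infty}}}
\end{equation*}
loses a factor $\kappa_n^{-1}$ because the far-region term contains $x\partial_x^2\hpsi_n$, whereas the uniform bound \eqref{eq:fullmainestcpt} only controls $x^2\partial_x^2\hpsi_n$ (the $G_{\sigma,R_{\infty},2}$ norm carries the weight $x^4|\partial_x^{n+2}f_\ell|^2$) and \eqref{eq:addestposkappa} controls $x\partial_x^2\hpsi_n$ only with a prefactor $\kappa_n$. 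Your proposed repairs (geometric sequences $\kappa_n=2^{-n}\kappa_0$, ``the improved weight in $G_{\sigma,R_{\infty},2}$ versus $G_{\sigma,R_{\infty}}$ making the relevant norm $o(1)$'') do not work as stated and you concede as much; the one missing power of $x$ is exactly the obstruction, and no amount of rearranging $|\kappa_n-\kappa_m|/\kappa_n$ removes it. The same issue infects your identification of the limit: $\partial_x(\chi_{\varrho\geq R_{\infty}}x\partial_x(\cdot))$ is \emph{not} bounded from $H^{N_++1}_{\sigma,R_{\infty},2}$ to $H^{N_+}_{\sigma,R_{\infty}}$ with the same Gevrey parameter $\sigma$, so ``this operator is bounded uniformly up to the $\kappa_n$ factor'' is false as written.

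The paper resolves this with two separate devices, neither of which appears in your proposal. For \emph{existence} of the limit it does not use a Cauchy argument at all: it fixes $\sigma<\sigma_1<\sigma_2$ with $s\in\Omega_{\sigma_1}\cap\Omega_{\sigma_2}$, obtains uniform boundedness of $\{\hpsi_n\}$ in $H^{N_++1}_{\sigma_2,R_{\infty},2}$, extracts a strongly convergent subsequence via the compact embedding $H^{N_++1}_{\sigma_2,R_{\infty},2}\ssubset H^{N_++1}_{\sigma_1,R_{\infty},2}$ of \eqref{eq:compact2}, and pays for the extra derivative in $(\mathfrak{L}_{s,\lambda,\kappa_{n_j}}-\mathfrak{L}_{s,\lambda,0})\hpsi$ by trading Gevrey indices: $\|x\hpsi^{(2)}\|_{G_{\sigma,R_{\infty}}}\leq C\|\hpsi^{(1)}\|_{G_{\sigma,R_{\infty},2}}\leq C\|\hpsi\|_{G_{\sigma_1,R_{\infty},2}}$, which is only possible because $\sigma_1>\sigma$. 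For the \emph{operator-norm convergence} it uses the resolvent identity $\mathfrak{L}_{s,\lambda,\kappa_m}^{-1}-\mathfrak{L}_{s,\lambda,\kappa_n}^{-1}=\mathfrak{L}_{s,\lambda,\kappa_m}^{-1}(\mathfrak{L}_{s,\lambda,\kappa_n}-\mathfrak{L}_{s,\lambda,\kappa_m})\mathfrak{L}_{s,\lambda,\kappa_n}^{-1}$ together with the commutation trick \eqref{eq:keyeqdiffkappa}: writing $\hpsi'=\chi_{\varrho\geq R_{\infty}}\partial_x\hpsi$ and inserting $\mathfrak{L}_{s,\lambda,\kappa_m}\mathfrak{L}_{s,\lambda,\kappa_m}^{-1}$, the dangerous second-order term lands in the commutator $[\partial_x\circ x,\mathfrak{L}_{s,\lambda,\kappa_m}]=-\kappa_m x\partial_x^2-(2s+\kappa_m)\partial_x$, where $x\partial_x^2$ now comes \emph{multiplied by $\kappa_m$} and applied to $\mathfrak{L}_{s,\lambda,\kappa_m}^{-1}(\hpsi')$, so that \eqref{eq:addestposkappa} yields a bound $C|\kappa_n-\kappa_m|\,\|f\|$ with no inverse power of $\kappa$. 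Without this insertion-and-commutation step your Cauchy argument does not close.
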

\begin{proof}
Note that there exist $\sigma_2^2>\sigma_1^2>\sigma^2$, such that $s\in \Omega_{\sigma_1}\cap\Omega_{\sigma_2}$. Let $f\in C^{\infty}([0,r_+^{-1}]_x\times \s^2)\cap H_{ \sigma_2, R_{\infty}}^{N_++1}$.

Let $\kappa_n$ be a monotonically non-increasing sequence with $\kappa_n>0$ and $\kappa_n\to 0$ as $n\to \infty$. By Corollary \ref{cor:unifboundinvkappapos}, $\hpsi_n:=\mathfrak{L}_{s,\lambda,\kappa_n}^{-1}(f)$ is a well-defined, uniformly bounded sequence in $H_{\sigma_2, R_{\infty},2}^{N_++1}$. By \eqref{eq:compact2}, $H_{ \sigma_2,R_{\infty},2}^{N_++1}\ssubset H_{\sigma_1, R_{\infty},2}^{N_++1}$, so $\hpsi_n$ admits a convergent subsequence $\{\hpsi_{n_j}\}$ with respect to $H_{\sigma_1, R_{\infty},2}^{N_++1}$, with limit $\hpsi\in  H_{\sigma_1, R_{\infty},2}^{N_++1}$. In fact, since $\{\hpsi_{n_j}\}$ is a bounded sequence in $H_{\sigma_2,R_{\infty},2}^{N_++1}$, it admits a further subsequence with a weak limit in $H_{\sigma_2,R_{\infty},2}^{N_++1}$. By the uniqueness of limits, this limit must also be $\hpsi$, so in fact $\hpsi\in  H_{\sigma_2, R_{\infty},2}^{N_++1}$.

We will now show that $\mathfrak{L}_{s,\lambda,0}(\hpsi)=f$. We have that
\begin{equation}
\label{eq:splittingkappalimit}
	\mathfrak{L}_{s,\lambda,0}(\hpsi)=f-(\mathfrak{L}_{s,\lambda,\kappa_{n_j}}-\mathfrak{L}_{s,\lambda,0})(\hpsi)+\mathfrak{L}_{s,\lambda,\kappa_n}(\hpsi-\hpsi_{n_j}).
\end{equation}
Note that:
\begin{equation*}
	(\mathfrak{L}_{s,\lambda,\kappa_{n_j}}-\mathfrak{L}_{s,\lambda,0})(\hpsi)=\kappa_{n_j}\chi_{\varrho \geq R_{\infty}}( x \partial_x^2\hpsi+\partial_x\hpsi).
\end{equation*}
Hence, by Proposition \ref{prop:producttotalnorm} 
\begin{multline*}
	\|(\mathfrak{L}_{s,\lambda,\kappa_{n_j}}-\mathfrak{L}_{s,\lambda,0})(\hpsi)\|_{H^{N_+}_{ \sigma,R_{\infty}}}\leq 2\kappa_{n_j} (\|\chi_{\varrho \geq R_{\infty}}\hpsi^{(1)}\|_{H^{N_+}_{ \sigma,R_{\infty}}}+\|x \chi_{\varrho \geq R_{\infty}}\hpsi^{(2)}\|_{H^{N_+}_{ \sigma,R_{\infty}}})\\
	\leq C\kappa_{n_j} (\|\hpsi\|_{H^{N_++2}_{\sigma,R_{\infty},2}}+\|x\hpsi^{(2)}\|_{G_{\sigma,R_{\infty}}}).
\end{multline*}

To bound $\|x\hpsi^{(2)}\|_{G_{\sigma, R_{\infty}}}$, we first apply Lemma \ref{lm:productGnorm}, from which it follows that
\begin{equation*}
\|x\hpsi^{(2)}\|_{G_{\sigma,R_{\infty}}}\leq C\|\hpsi^{(2)}\|_{G_{\sigma,R_{\infty}}}\leq C\|\hpsi^{(1)}\|_{G_{\sigma,R_{\infty},2}}.
\end{equation*}
Then we observe that we can use that $\sigma_1>\sigma$ to further estimate:
\begin{equation*}
\|\hpsi^{(1)}\|_{G_{\sigma,R_{\infty},2}}\leq C \|\hpsi \|_{G_{\sigma_1,R_{\infty},2}}.
\end{equation*}

We can similarly bound:
\begin{equation*}
	\|\mathfrak{L}_{s,\lambda,\kappa_{n_j}}(\hpsi-\hpsi_{n_j})\|_{H^{N_+}_{ \sigma,R_{\infty}}}\leq C \|\hpsi-\hpsi_{n_j}\|_{H^{N_++2}_{\sigma_1,R_{\infty},2}}\to 0\quad \textnormal{as $j\to \infty$}.
\end{equation*}

Taking the limit $j\to \infty$ in \eqref{eq:splittingkappalimit} therefore gives $\mathfrak{L}_{s,\lambda,0}(\hpsi)=f$. Hence, $\mathfrak{L}_{s,\lambda,0}^{-1}: C^{\infty}([0,r_+^{-1}]_x\times \s^2))\cap H_{\sigma_2, R_{\infty}}^{N_+}\to H_{\sigma_2, R_{\infty},2}^{N_++1}$ is well-defined. Using density of $ C^{\infty}([0,r_+^{-1}]_x\times \s^2))\cap H_{\sigma_2, \delta_2}^{N_+}$ in $H_{\sigma_2, \delta_2}^{N_+}$, we conclude  \eqref{eq:resolventwelldef}.

To conclude uniform operator convergence, we first write:
\begin{equation*}
\mathfrak{L}_{s,\lambda,\kappa_{m}}^{-1}-\mathfrak{L}_{s,\lambda,\kappa_{n}}^{-1}=\mathfrak{L}_{s,\lambda,\kappa_{m}}^{-1}\circ (\mathfrak{L}_{s,\lambda,\kappa_{n}}-\mathfrak{L}_{s,\lambda,\kappa_{m}})\circ \mathfrak{L}_{s,\lambda,\kappa_{n}}^{-1}
\end{equation*}
Let $f\in H_{\sigma,R_{\infty}}^{N_+}$, then $\hpsi:=\mathfrak{L}_{s,\lambda,\kappa_{n}}^{-1}(f)\in H_{\sigma,R_{\infty},2}^{N_++1}$. Furthermore,
\begin{equation*}
(\mathfrak{L}_{s,\lambda,\kappa_{n}}-\mathfrak{L}_{s,\lambda,\kappa_{m}})(\hpsi)=(\kappa_n-\kappa_m)\partial_x(\chi_{\varrho \geq R_{\infty}} x \partial_x\hpsi),
\end{equation*}
so $(\mathfrak{L}_{s,\lambda,\kappa_{n}}-\mathfrak{L}_{s,\lambda,\kappa_{m}})(\hpsi)$ vanishes in $\varrho\leq \frac{1}{2}R_{\infty}$. Define
\begin{equation*}
\hpsi':=\chi_{\varrho \geq R_{\infty}}  \partial_x\hpsi\in H_{\sigma,R_{\infty}}^{N_+},
\end{equation*}
then $\|\mathfrak{L}_{s,\lambda,\kappa_{m}}^{-1}(\hpsi')\|_{H_{\sigma, R_{\infty},2}^{N_++1}}\leq C_{s,\sigma}\|f\|_{ H_{\sigma,R_{\infty}}^{N_+}}$ by \eqref{eq:opboundkappapos}. We now express:
\begin{multline}
\label{eq:keyeqdiffkappa}
(\mathfrak{L}_{s,\lambda,\kappa_{n}}-\mathfrak{L}_{s,\lambda,\kappa_{m}})(\hpsi)=(\kappa_n-\kappa_m) \partial_x(x\hpsi')=(\kappa_n-\kappa_m) \partial_x(x\mathfrak{L}_{s,\lambda,\kappa_{m}}\mathfrak{L}_{s,\lambda,\kappa_{m}}^{-1}(\hpsi'))\\
=(\kappa_n-\kappa_m)\mathfrak{L}_{s,\lambda,\kappa_{m}}( \partial_x\circ  x\cdot )\mathfrak{L}_{s,\lambda,\kappa_{m}}^{-1}(\hpsi')+(\kappa_n-\kappa_m)[\partial_x\circ x,\mathfrak{L}_{s,\lambda,\kappa_{m}}\cdot] \mathfrak{L}_{s,\lambda,\kappa_{m}}^{-1}(\hpsi')
\end{multline}
and hence
\begin{equation*}
(\mathfrak{L}_{s,\lambda,\kappa_{m}}^{-1}-\mathfrak{L}_{s,\lambda,\kappa_{n}}^{-1})(f)=(\kappa_n-\kappa_m) \left(\partial_x(x\mathfrak{L}_{s,\lambda,\kappa_{m}}^{-1}(\hpsi'))+\mathfrak{L}_{s,\lambda,\kappa_{m}}^{-1} [\partial_x\circ x\cdot,\mathfrak{L}_{s,\lambda,\kappa_{m}}] \mathfrak{L}_{s,\lambda,\kappa_{m}}^{-1}(\hpsi')\right).
\end{equation*}
For $x\leq x_0$, we have that
\begin{equation*}
[\partial_x\circ x\cdot,\mathfrak{L}_{s,\lambda,\kappa_{m}}] =-\kappa_m x\partial_x^2-(2s+\kappa_m)\partial_x.
\end{equation*}
Using that $\|\mathfrak{L}_{s,\lambda,\kappa_{m}}^{-1}(\hpsi')\|_{H_{\sigma, R_{\infty},2}^{N_++1}}\leq C_{s,\sigma}\|f\|_{ H_{\sigma,R_{\infty}}^{N_+}}$ and additionally that $\kappa_m \|x\partial_x^2\mathfrak{L}_{s,\lambda,\kappa_{m}}^{-1}(\hpsi')\|_{G_{\sigma,\delta}}\leq C_{s,\sigma}\|f\|_{ H_{\sigma,R_{\infty}}^{N_+}}$ by \eqref{eq:addestposkappa}, we conclude that
\begin{equation*}
\|(\mathfrak{L}_{s,\lambda,\kappa_{m}}^{-1}-\mathfrak{L}_{s,\lambda,\kappa_{n}}^{-1})(f)\|_{H_{\sigma,R_{\infty}}^{N_+}}\leq C\cdot C_{s,\sigma}|\kappa_n-\kappa_m\||f\|_{ H_{\sigma,R_{\infty}}^{N_+}}.
\end{equation*}
Since $\{\kappa_n\}$ is converges to 0, it is Cauchy, and by the above estimate, $\{\mathfrak{L}_{s,\lambda,\kappa_{n}}^{-1}\}$ must be Cauchy in the Banach space of bounded linear operators $B(H_{\sigma,R_{\infty}}^{N_+},H_{\sigma,R_{\infty}}^{N_+})$ with respect to the operator norm, so it converges. By uniqueness of limits, the limit must be equal to the operator $\mathfrak{L}_{s,\lambda,0}^{-1}$ defined above.
\end{proof}
\section{Inverting $L_s$: boundedness, compactness and Fredholm theory}
\label{sec:invtLs}
In this section, we will establish the meromorphicity of the maps
\begin{equation*}
s\mapsto L_{s,\kappa}^{-1}
\end{equation*}
with $\kappa\geq 0$.

By \eqref{eq:Lsaspert} and \eqref{eq:defLskappa}, we have the following relation between $L_{s,\kappa}$ and $\mathfrak{L}_{s,\lambda,\kappa}$.
\begin{equation*}
L_{s,\kappa}=\mathfrak{L}_{s,\lambda,\kappa}+\epsilon_0 \cdot B_{s}+K_{s,\lambda},
\end{equation*}
where $\epsilon_0>0$ can be taken arbitrarily small.

\begin{proposition}
\label{prop:boundandcompact}
Let $\lambda\in \R$. Assume that $\mathbbm{h}\equiv 1$ in $\{\varrho\leq \frac{1}{2}R_{\infty}\}$. Then the linear operators
\begin{align*}
B_s: H_{\sigma, R_{\infty},2}^{N_++1}\to H_{\sigma,R_{\infty}}^{N_+},\\
K_{s,\lambda}: H_{\sigma, R_{\infty},2}^{N_++1}\to   H_{\sigma, R_{\infty},2}^{N_++1}
\end{align*}
are bounded independently of $\epsilon_0$. Let $\mathcal{R}:H_{\sigma,R_{\infty}}^{N_+} \to H_{\sigma, R_{\infty},2}^{N_++1}$ be bounded. Then $K_{s,\lambda}\circ \mathcal{R}: H_{\sigma,R_{\infty}}^{N_+}\to H_{\sigma, R_{\infty},2}^{N_++1}\subset H_{\sigma,R_{\infty}}^{N_+}$ is compact.
\end{proposition}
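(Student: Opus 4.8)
Proof proposal for Proposition~\ref{prop:boundandcompact}.

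\textbf{Structure of the argument.}
The plan is to split the proof into three parts: (a) boundedness of $B_s$, (b) boundedness of $K_{s,\lambda}$, (c) compactness of $K_{s,\lambda}\circ\mathcal{R}$. For (a) and (b) I would recall from \S\ref{sec:laplacetransfop} the explicit formulas
\begin{align*}
B_s&=M\epsilon_0^{-1}x\chi_{\varrho\geq R_{\infty}}{B}^{\infty}_s,\\
K_{s,\lambda}&=\lambda\pi_{\leq l}+\chi_{\varrho\geq R_{\infty}}S_{\lambda}\pi_{\leq l}+{K}^{\infty}_s+\chi_{\varrho\leq \frac{1}{2}R_{\infty}}{K}_{s}^+,
\end{align*}
and examine each summand on each of the regions defining the $H^{N_+}_{\sigma,R_{\infty}}$ and $H^{N_++1}_{\sigma,R_{\infty},2}$ norms (bounded Sobolev region $\varrho\le 3R_\infty$; the microlocal-weighted region $3R_\infty\le\varrho\le 4R_\infty$; the boundary term at $\varrho=2R_\infty$; the Gevrey regions $x\le x_0$). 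The key point for $B_s$ is that $x\chi_{\varrho\geq R_{\infty}}{B}^{\infty}_s$ is supported in $\{\varrho\geq R_{\infty}\}$ where, by Lemma~\ref{lm:operatorcoeffb}, the coefficients $b_{\star\star},b_\star,b$ are analytic and satisfy the stated uniform Gevrey bounds, so $x{B}^{\infty}_s$ is a second-order operator whose coefficients carry extra powers of $x$; the extra factor of $x$ and the Gevrey coefficient bounds are exactly what is needed to see that applying $x{B}^{\infty}_s$ loses at most two derivatives, i.e.\ it maps $H^{N_++1}_{\sigma,R_{\infty},2}$ (which controls two extra derivatives in the Gevrey regions, with the appropriate degenerate $x$-weights) boundedly into $H^{N_+}_{\sigma,R_{\infty}}$. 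Here I would invoke the product estimates for the Gevrey norms (the analogues of Proposition~\ref{prop:producttotalnorm}, Proposition~\ref{prop:producttotalnorm}, and Lemma~\ref{lm:productGnorm} referenced later in the excerpt) to handle multiplication by the analytic coefficient functions. The $\epsilon_0^{-1}$ prefactor in the definition of $B_s$ is cancelled by the $\epsilon_0$ appearing in $L_{s,\kappa}=\mathfrak{L}_{s,\lambda,\kappa}+\epsilon_0\cdot B_s+K_{s,\lambda}$, which is why the bound is $\epsilon_0$-independent: the $M\epsilon_0^{-1}$ in $B_s$ and the $\epsilon_0$ in the splitting are inverse, so the statement "$B_s$ is bounded independently of $\epsilon_0$" is simply the statement that $\|x\chi_{\varrho\ge R_\infty}{B}^\infty_s\|_{H^{N_++1}_{\sigma,R_{\infty},2}\to H^{N_+}_{\sigma,R_{\infty}}}<\infty$, with no $\epsilon_0$ entering at all.

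\textbf{Boundedness of $K_{s,\lambda}$.}
For (b) I would argue term by term. The operators $\lambda\pi_{\leq l}$ and $\chi_{\varrho\geq R_{\infty}}S_{\lambda}\pi_{\leq l}$ are zeroth-order (in $x$) and act only on the finitely many spherical harmonic modes with $\ell\le l$, so they are manifestly bounded on every constituent of the $H^{N_++1}_{\sigma,R_{\infty},2}$ norm (on such a finite-dimensional angular subspace all the $\ell$-weights and $\lf$-weights are comparable constants). The term ${K}^{\infty}_s=-2M^2h(0)s^2\,\mathbf{1}$ is a bounded multiple of the identity. The term $\chi_{\varrho\leq\frac{1}{2}R_{\infty}}{K}_{s}^+$ is, by its definition in \S\ref{sec:laplacetransfop}, a \emph{first-order} differential operator with smooth bounded coefficients supported in the compact region $\{\varrho\leq\frac{1}{2}R_{\infty}\}$, where the relevant norms reduce to ordinary Sobolev norms on a bounded set; so it maps $H^{N_++2}$ there into $H^{N_++1}$, hence $H^{N_++1}_{\sigma,R_{\infty},2}$ into itself, boundedly. (This is where the hypothesis $\mathbbm{h}\equiv1$ in $\{\varrho\leq\frac12R_\infty\}$ is used: it guarantees ${K}_{s}^+$ has the stated simple form; cf.\ \eqref{eq:operatorbounded}, where the zeroth-order $\lambda$-term is the only surviving modification.) Summing the four contributions gives the claimed $\epsilon_0$-independent bound.

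\textbf{Compactness of $K_{s,\lambda}\circ\mathcal{R}$.}
For (c), the composition $\mathcal{R}:H_{\sigma,R_{\infty}}^{N_+}\to H_{\sigma, R_{\infty},2}^{N_++1}$ lands in $H_{\sigma, R_{\infty},2}^{N_++1}$, and by part (b) $K_{s,\lambda}$ maps $H_{\sigma, R_{\infty},2}^{N_++1}$ boundedly into itself. It therefore suffices to show that the inclusion $H_{\sigma, R_{\infty},2}^{N_++1}\hookrightarrow H_{\sigma,R_{\infty}}^{N_+}$ is compact and to write $K_{s,\lambda}\circ\mathcal{R}=\iota\circ(K_{s,\lambda}|_{H^{N_++1}_{\sigma,R_{\infty},2}})\circ\mathcal{R}$ as a bounded operator followed by a compact one. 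The compact embedding $H_{\sigma, R_{\infty},2}^{N_++1}\ssubset H_{\sigma,R_{\infty}}^{N_+}$ is exactly the content flagged as "key to the proof" in \S\ref{sec:ideastechniques} and is the statement \eqref{eq:compact2} invoked in Proposition~\ref{prop:invcalL}; it follows region by region from Rellich--Kondrachov in the compact $\varrho$-regions (where $H^{N_++2}\ssubset H^{N_+}$ and $H^{N_++1}\ssubset H^{N_+}$), from the gain of one angular derivative in the microlocal-weighted region, and from the standard compactness of the embedding of Gevrey-type spaces with strictly larger exponent/radius into smaller ones (the boxed exploitation of $\sigma_2^2>\sigma_1^2>\sigma^2$ in Proposition~\ref{prop:invcalL}). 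I expect the main obstacle to be the bookkeeping in part (a): verifying that every term of $x{B}^\infty_s$, including the angular derivative terms with their $a^2x$, $a^4x^4$ weights and the analytic coefficients, is controlled by the two-extra-derivative, $x$-weighted pieces of the $H^{N_++1}_{\sigma,R_{\infty},2}$ norm \emph{uniformly across all four norm-regions and all spherical harmonic modes}, which requires careful use of the Gevrey product estimates and of the degenerate $x$-weights built into $\|\cdot\|_{G_{\sigma,R_{\infty},2}}$ and $\|\cdot\|_{B_{R_\infty,2}}$.
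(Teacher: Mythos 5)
Your overall strategy coincides with the paper's: term-by-term boundedness of $B_s$ and $K_{s,\lambda}$ from the explicit formulas together with the coefficient bounds of Lemma~\ref{lm:operatorcoeffb} and the product estimates (Proposition~\ref{prop:producttotalnorm}), and compactness of $K_{s,\lambda}\circ\mathcal{R}$ by composing with the compact embedding of $H^{N_++1}_{\sigma,R_\infty,2}$ into $H^{N_+}_{\sigma,R_\infty}$ (the paper invokes \eqref{eq:compact3}, which holds at fixed $\sigma$ because one full derivative is gained; your citation of \eqref{eq:compact2}, which changes $\sigma$, is a mislabel but not a substantive issue). However, two of your justifications contain genuine errors.

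First, your explanation of the $\epsilon_0$-independence of $\|B_s\|$ is wrong. The proposition asserts boundedness of $B_s=M\epsilon_0^{-1}x\chi_{\varrho\geq R_\infty}B_s^\infty$ itself, not of $\epsilon_0 B_s$; there is no cancellation with the $\epsilon_0$ in the splitting $L_{s}=\mathfrak{L}_{s,\lambda}+\epsilon_0 B_s+K_{s,\lambda}$, and if one only knew $\|x\chi_{\varrho\ge R_\infty}B_s^\infty\|<\infty$ then $\|B_s\|$ would scale like $\epsilon_0^{-1}$, destroying the subsequent smallness argument $\|\epsilon_0 B_s\circ\mathfrak{L}_{s,\lambda}^{-1}\|<1$. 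The paper's mechanism is different and essential: one takes $M^{-1}R_\infty$ large enough (depending on $\epsilon_0$) that $\epsilon_0^{-1}x\leq 1$ on $\supp\chi_{\varrho\geq R_\infty}$, so the prefactor $\epsilon_0^{-1}x$ is absorbed pointwise before any operator norm is taken. Second, your treatment of $\chi_{\varrho\leq\frac12 R_\infty}K_s^+$ mislocates the role of the hypothesis $\mathbbm{h}\equiv 1$. Under that hypothesis $K_s^+$ is \emph{zeroth} order (the only derivative term in its definition is $-2a(\mathbbm{h}-1)s\partial_{\varphi_*}$, which vanishes); this matters because in the region $\varrho\leq\frac13 R_\infty$ the $H^{N_++1}_{\sigma,R_\infty,2}$ norm controls only $N_++1$ radial derivatives plus one extra angular derivative, not $H^{N_++2}$, so your argument that a first-order operator "maps $H^{N_++2}$ there into $H^{N_++1}$, hence $H^{N_++1}_{\sigma,R_\infty,2}$ into itself" does not go through near the horizon: a genuinely first-order radial operator would land only in $H^{N_+}$ there and would fail to map the codomain space $H^{N_++1}_{\sigma,R_\infty,2}$ into itself. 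Both points are repairable, but as written they are gaps rather than stylistic differences.
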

\begin{proof}
Note first that the compactness of $K_{s,\lambda}\circ \mathcal{R}$ follows directly from the boundedness of $K_{s,\lambda}$ by the fact that $H_{\sigma, R_{\infty},2}^{N_++1}\ssubset H_{\sigma,R_{\infty}}^{N_+}$ by \eqref{eq:compact3}.

Recall that
\begin{align*}
B_s=&\:M\epsilon_0^{-1} x\chi_{\varrho\geq R_{\infty}} {B}^{\infty}_s,\\
K_{s,\lambda}=&\: \lambda\pi_{\leq l}+\chi_{\varrho\geq R_{\infty}}S_{\lambda} \pi_{\leq l}+{K}^{\infty}_s+\chi_{\varrho\leq \frac{1}{2}R_{\infty}}{K}_{s}^+.
\end{align*}
We first consider $B_s$. Take $M^{-1}R_{\infty}$ to be suitably large, so that $\epsilon_0^{-1} x\leq 1$ in $\supp \chi_{\varrho\geq R_{\infty}}$. We need to establish boundedness of $\chi_{\varrho\geq R_{\infty}} {B}^{\infty}_s$.

Recall moreover that
\begin{align*}
B^{\infty}_sf=&\:b_{xx}\cdot x^2\partial_x^2f+s b_{x\tau }\cdot \partial_x  f+s^2b_{\tau \tau}f+a^2( xb_{x \vartheta}\cdot x^2\partial_x+s x b_{\tau \vartheta})  \sin\vartheta \partial_{\vartheta}f \\
+&\:a^4x^{4}\cdot b_{\vartheta \vartheta}(\sin\vartheta\partial_{\vartheta})^2f+ b_{x} \cdot x \partial_xf+ a^2x^2b_{\vartheta}\cdot \sin \vartheta\partial_{\vartheta} f+sb_{\tau}f+ b\cdot  f.
\end{align*}

We apply Proposition \ref{prop:producttotalnorm} to estimate $\mathcal{B}^{\infty}_sf$:
\begin{multline*}
\|{B}_{s}^{\infty}f\|_{H^{N_+}_{\sigma,\delta}}\leq C_s \left[\sum_{\star\in \{x,\tau,\vartheta\}} \|b_{\star \star}\|_{H^{N_+}_{ \sigma,R_{\infty};\textnormal{coeff}}}+\|b_{\star}\|_{H^{N_+}_{ \sigma,R_{\infty};\textnormal{coeff}}}+\|b\|_{H^{N_+}_{ \sigma,R_{\infty};\textnormal{coeff}}}\right]\\
\times \left(\|f\|_{H^{N_+}_{\sigma,R_{\infty}}}+\|\partial_xf\|_{H^{N_+}_{\sigma,R_{\infty}}}+a^2\|x^2\sin \vartheta \partial_{\vartheta}f\|_{H^{N_+}_{\sigma,R_{\infty}}}+\| \chi_{\varrho\geq R_{\infty}} x^2\partial_x^2f\|_{H^{N_+}_{\sigma,R_{\infty}}}\right)\\
+a^2\| \chi_{\varrho\geq R_{\infty}} x^4(\sin \vartheta \partial_{\vartheta})^2f\|_{H^{N_+}_{\sigma,R_{\infty}}})\\
\leq C_s \left[\sum_{\star\in \{x,\tau,\vartheta\}} \|b_{\star \star}\|_{H^{N_+}_{ \sigma,R_{\infty};\textnormal{coeff}}}+\|b_{\star}\|_{H^{N_+}_{ \sigma,R_{\infty};\textnormal{coeff}}}+\|b\|_{H^{N_+}_{ \sigma,R_{\infty};\textnormal{coeff}}}\right] \|f\|_{H^{N_++1}_{\sigma, R_{\infty},2}}\\
\leq C_{{\rm coeff},s}\cdot \|f\|_{H^{N_++1}_{\sigma, R_{\infty},2}},
\end{multline*}
where the final inequality follows from the analyticity properties of $b, b_{\star}$ and $b_{\star \star}$ that were established in Lemma \ref{lm:operatorcoeffb}. Note moreover that we used here that the $H^{N_++1}_{\sigma, R_{\infty},2}$ norm controls two more derivatives (with degenerate factors) than the $H^{N_+}_{\sigma,R_{\infty}}$ norm in the region $\varrho\geq \frac{1}{2}R_{\infty}$ and one more derivative globally. Boundedness of $\|B_s\|$ follows by combining the above estimates.

We now turn to $K_{s,\lambda}$. Recall that
\begin{align*}
K^{\infty}_sf=&\:-2M^2h(0) s^2 f,\\
r^{-2}\rho^{2}{K}_{s}^+f=&\:\left[2r-\frac{d}{dr}(\Delta \mathbbm{h})\right]s (r^{-1}f)+[\mathbbm{h}^2\Delta-2\mathbbm{h}(r^2+a^2)+a^2\sin^2\theta] s^2 f\\
&-2a(\mathbbm{h}-1) s\partial_{\varphi_*} f.
\end{align*}
In order for all terms in $K_s$ to be zeroth-order terms, we use the assumption $\mathbbm{h}\equiv 1$ in $\varrho\leq R_{\infty}$.

Hence, applying Proposition \ref{prop:producttotalnorm}, it is straightforward to estimate
\begin{equation*}
\|\chi_{\varrho\leq \frac{1}{2}R_{\infty}}{K}_{s}^+f\|_{H^{N_++1}_{\sigma, R_{\infty},2}}\leq C_{{\rm coeff},s}\cdot \|f\|_{H^{N_++1}_{\sigma, R_{\infty},2}}.
\end{equation*}

Boundedness of $K^{\infty}_s, \chi_{\varrho\leq \frac{1}{2}R_{\infty}}{K}_{s}^+: H_{s,R_{\infty},2}^{N_++1}\to  H_{s,R_{\infty},2}^{N_++1}$ is therefore immediate. Consider $\lambda\pi_{\leq l}$ and $\chi_{\varrho\geq R_{\infty}}S_{\lambda} \pi_{\leq l}$. The operator $\lambda\pi_{\leq l}:  H_{s,R_{\infty},2}^{N_++1}\to  H_{s,R_{\infty},2}^{N_++1}$ can easily be seen to be bounded. Finally, since $\int_{\s^2}|S_{\lambda} \pi_{\leq l}f|^2\,d\upsigma\leq \lambda^2 \int_{\s^2} |\snabla_{\s^2}f_{\leq l}|^2\leq \lambda^4\int_{\s^2}|f_{\leq l}|^2\,d\upsigma$, we have that
\begin{equation*}
\chi_{\varrho\geq R_{\infty}}S_{\lambda} \pi_{\leq l}:  H_{\sigma, R_{\infty},2}^{N_++1}\to  H_{\sigma, R_{\infty},2}^{N_++1}.
\end{equation*}
is bounded.
\end{proof}

Recall the definitions of a meromorphic family of operators in Definition \ref{def:meromfamily} and the definition of the cokernel of an operator in Definition \ref{eq:comode}. In the proposition below, we establish meromorphicity of the inverses $L_{s,\kappa}^{-1}$ with $\kappa\geq0$, and we relate the poles to the cokernel of $L_{s,\kappa}$.

\begin{proposition}
\label{prop:mainprop}
Let $\sigma\in \R$ and $N_+\in \N_0$.  For $\kappa\geq 0$ sufficiently small and $M^{-1}R_{\infty}$ sufficiently large:
\begin{enumerate}[label=\emph{(\roman*)}]
\item The family of operators $\{L_{s,\kappa}^{-1}: H_{\sigma,R_{\infty}}^{N_+}\to H_{\sigma, R_{\infty},2}^{N_++1}\}$ with $s\in \Omega_{\sigma}\cap\{{\re s}>-(\frac{1}{2}+N_+)\kappa_+\}$ is meromorphic. We denote with $\mathscr{Q}_{\sigma,R_{\infty},N_+,\kappa}$ the set of poles of $\{L_{s,\kappa}^{-1}\}$. The set $\mathscr{Q}_{\sigma,R_{\infty},N_+,\kappa}$ is independent of the precise choice of $\mathbbm{h}$.
\item Let $s\in \Omega_{\sigma}\cap\{{\re s}>-(\frac{1}{2}+N_+)\kappa_+\}$. Then the space $ \ker L_{s}$ is finite dimensional with $\dim \ker L_{s,\kappa}=\dim \textnormal{coker}\,  L_{s,\kappa}$ and $\textnormal{ran}\,L_{s,\kappa}^*=(\ker L_{s,\kappa})^{\perp}\subset H_{\sigma, R_{\infty},2}^{N_++1}$, with $L_{s,\kappa}^*$ the Hermitian adjoint and $\textnormal{coker}\, L_{s,\kappa}^*$ is isomorphic to $\ker L_{s,\kappa}$. 
	\end{enumerate}
\end{proposition}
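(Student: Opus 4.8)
The plan is to leverage the splitting $L_{s,\kappa}=\mathfrak{L}_{s,\lambda,\kappa}+\epsilon_0\cdot B_s+K_{s,\lambda}$ from \eqref{eq:Lsaspert}/\eqref{eq:defLskappa} together with Proposition \ref{prop:invcalL} (boundedness of $\mathfrak{L}_{s,\lambda,\kappa}^{-1}:H_{\sigma,R_{\infty}}^{N_+}\to H_{\sigma,R_{\infty},2}^{N_++1}$ for $\lambda$ large and $\kappa\geq 0$ small) and Proposition \ref{prop:boundandcompact} (boundedness of $B_s,K_{s,\lambda}$ on the relevant spaces, and compactness of $K_{s,\lambda}\circ\mathcal{R}$). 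First I would fix $\lambda$ large enough that $\mathfrak{L}_{s,\lambda,\kappa}^{-1}$ exists and is holomorphic in $s$ on $\Omega_{\sigma}\cap\{\re s>-(\tfrac12+N_+)\kappa_+\}$; holomorphy follows because the $s$-dependence of $\mathfrak{L}_{s,\lambda,\kappa}$ is polynomial (affine in $s$ in the far region, with $K_s^{\infty},K_s^+,B_s^{\infty}$ polynomial in $s$), so $s\mapsto\mathfrak{L}_{s,\lambda,\kappa}$ is operator-holomorphic and the uniform bound on its inverse upgrades this to holomorphy of $s\mapsto\mathfrak{L}_{s,\lambda,\kappa}^{-1}$ via the Neumann series. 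Then choose $\epsilon_0$ small enough that $\|\epsilon_0 B_s\circ\mathfrak{L}_{s,\lambda,\kappa}^{-1}\|<\tfrac12$ on compact subsets (using the uniform-in-$s$ bound on compacts and $\epsilon_0$-independence of $\|B_s\|$), so that $(\mathfrak{L}_{s,\lambda,\kappa}+\epsilon_0 B_s)^{-1}=\mathfrak{L}_{s,\lambda,\kappa}^{-1}(\mathbf{1}+\epsilon_0 B_s\mathfrak{L}_{s,\lambda,\kappa}^{-1})^{-1}$ is a holomorphic family of bounded operators $H_{\sigma,R_{\infty}}^{N_+}\to H_{\sigma,R_{\infty},2}^{N_++1}$.

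\textbf{Meromorphicity (i).} Write $L_{s,\kappa}=(\mathfrak{L}_{s,\lambda,\kappa}+\epsilon_0 B_s)\big(\mathbf{1}+(\mathfrak{L}_{s,\lambda,\kappa}+\epsilon_0 B_s)^{-1}K_{s,\lambda}\big)$, viewing all operators as maps $H_{\sigma,R_{\infty},2}^{N_++1}\to H_{\sigma,R_{\infty},2}^{N_++1}$ via the continuous embedding into $H_{\sigma,R_{\infty}}^{N_+}$. The operator $T(s):=(\mathfrak{L}_{s,\lambda,\kappa}+\epsilon_0 B_s)^{-1}K_{s,\lambda}$ is a holomorphic family of \emph{compact} operators on $H_{\sigma,R_{\infty},2}^{N_++1}$, by Proposition \ref{prop:boundandcompact} (composing $K_{s,\lambda}$, bounded into $H_{\sigma,R_{\infty},2}^{N_++1}$, with the bounded operator $(\mathfrak{L}_{s,\lambda,\kappa}+\epsilon_0 B_s)^{-1}$ and using the compact embedding $H_{\sigma,R_{\infty},2}^{N_++1}\ssubset H_{\sigma,R_{\infty}}^{N_+}$, or directly the stated compactness of $K_{s,\lambda}\circ\mathcal{R}$). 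Then I would invoke the analytic Fredholm theorem: either $\mathbf{1}+T(s)$ is nowhere invertible on $\Omega_{\sigma}\cap\{\re s>-(\tfrac12+N_+)\kappa_+\}$, or $(\mathbf{1}+T(s))^{-1}$ is meromorphic there with poles on a discrete set. The first alternative is excluded by Theorem \ref{thm:mainthmLs}'s holomorphy argument (the Grönwall estimate \eqref{eq:gronw}): $L_{s,\kappa}^{-1}$ exists for $\re s>C_0$, so $\mathbf{1}+T(s)$ is invertible there. Hence $L_{s,\kappa}^{-1}=(\mathbf{1}+T(s))^{-1}(\mathfrak{L}_{s,\lambda,\kappa}+\epsilon_0 B_s)^{-1}$ is meromorphic as a family into $H_{\sigma,R_{\infty},2}^{N_++1}$, and since the embedding into $H_{\sigma,R_{\infty}}^{N_+}$ is continuous it is also meromorphic as $\{L_{s,\kappa}^{-1}:H_{\sigma,R_{\infty}}^{N_+}\to H_{\sigma,R_{\infty},2}^{N_++1}\}$. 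The set of poles $\mathscr{Q}_{\sigma,R_{\infty},N_+,\kappa}$ is independent of $\mathbbm{h}$: two admissible foliation functions are related by an $s$-dependent conjugation $Q_s$ (as in the discussion preceding Theorem \ref{thm:cutoffresolvmain}) that is a holomorphic family of invertible bounded operators on the relevant Sobolev spaces, hence preserves poles. (The decomposition requires $\mathbbm{h}\equiv 1$ on $\{\varrho\leq\tfrac12 R_{\infty}\}$ as in Proposition \ref{prop:boundandcompact}; one first establishes the result under this assumption and then transfers to general $\mathbbm{h}$ by the conjugation.)

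\textbf{Fredholm structure (ii).} From $L_{s,\kappa}=(\mathfrak{L}_{s,\lambda,\kappa}+\epsilon_0 B_s)(\mathbf{1}+T(s))$ with $(\mathfrak{L}_{s,\lambda,\kappa}+\epsilon_0 B_s)^{-1}$ an isomorphism $H_{\sigma,R_{\infty}}^{N_+}\to H_{\sigma,R_{\infty},2}^{N_++1}$, the Fredholm index and kernel/cokernel dimensions of $L_{s,\kappa}$ (as an operator $H_{\sigma,R_{\infty},2}^{N_++1}\to H_{\sigma,R_{\infty}}^{N_+}$, or with codomain $H_{\sigma,R_{\infty},2}^{N_++1}$ after composing with the isomorphism) coincide with those of $\mathbf{1}+T(s)$. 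By the Fredholm alternative for $\mathbf{1}+(\text{compact})$, $\ker(\mathbf{1}+T(s))$ is finite dimensional and $\dim\ker(\mathbf{1}+T(s))=\dim\operatorname{coker}(\mathbf{1}+T(s))$, giving $\dim\ker L_{s,\kappa}=\dim\operatorname{coker}L_{s,\kappa}<\infty$. The identity $\operatorname{ran}L_{s,\kappa}^{*}=(\ker L_{s,\kappa})^{\perp}$ is the standard closed-range orthogonality: $L_{s,\kappa}$ has closed range (being Fredholm), and for a densely defined closed operator with closed range between Hilbert spaces, $\operatorname{ran}L_{s,\kappa}^{*}=(\ker L_{s,\kappa})^{\perp}$ and $\overline{\operatorname{ran}L_{s,\kappa}}=(\ker L_{s,\kappa}^{*})^{\perp}$; combined with $\dim\ker=\dim\operatorname{coker}$ this yields $\operatorname{coker}L_{s,\kappa}^{*}\cong\ker L_{s,\kappa}$.

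\textbf{Main obstacle.} I expect the principal subtlety to be bookkeeping the \emph{two} function spaces consistently: $L_{s,\kappa}$ naturally maps $H_{\sigma,R_{\infty},2}^{N_++1}\to H_{\sigma,R_{\infty}}^{N_+}$, but to apply the analytic Fredholm theorem one wants an operator of the form $\mathbf{1}+\text{compact}$ on a \emph{single} space, which forces composing with the isomorphism $\mathfrak{L}_{s,\lambda,\kappa}^{-1}$ and carefully checking that compactness of $K_{s,\lambda}\circ(\mathfrak{L}_{s,\lambda,\kappa}+\epsilon_0 B_s)^{-1}$ holds on $H_{\sigma,R_{\infty},2}^{N_++1}$ (via the compact embedding \eqref{eq:compact3}) and not merely on $H_{\sigma,R_{\infty}}^{N_+}$. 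A secondary point requiring care is the domain issue: $L_{s,\kappa}$ is only densely defined on $H_{\sigma,R_{\infty},2}^{N_++1}$ with domain $D_{\sigma,R_{\infty}}^{N_+}(L_s)$, so the Fredholm and adjoint statements must be phrased for this closed operator, and closedness of the range together with the graph-norm completeness of $D_{\sigma,R_{\infty}}^{N_+}(L_s)$ must be tracked. These are routine but must be done explicitly.
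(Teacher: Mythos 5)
Your proposal is correct and follows essentially the same route as the paper: the splitting $L_{s,\kappa}=\mathfrak{L}_{s,\lambda,\kappa}+\epsilon_0 B_s+K_{s,\lambda}$, a Neumann series to absorb $\epsilon_0 B_s$, the analytic Fredholm theorem applied to the compact remainder, the Fredholm alternative for (ii), and the conjugation by $Q_s$ for the $\mathbbm{h}$-independence. The only (immaterial) difference is that you factor the identity-plus-compact operator on the domain side, i.e.\ $\mathbf{1}+(\mathfrak{L}_{s,\lambda,\kappa}+\epsilon_0 B_s)^{-1}K_{s,\lambda}$ acting on $H^{N_++1}_{\sigma,R_{\infty},2}$, whereas the paper writes $L_{s,\kappa}=-(\mathbf{1}-K_{s,\lambda}T_s)T_s^{-1}$ and runs the Fredholm argument with $K_{s,\lambda}T_s$ compact on $H^{N_+}_{\sigma,R_{\infty}}$.
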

\begin{proof}
Without loss of generality, we consider the $\kappa=0$ case. The argument proceeds in exactly the same way for $\kappa>0$. We first suppose that $\mathbbm{h}=1$ in $\{\varrho\leq \frac{1}{2}R_{\infty}\}$. By Proposition \ref{prop:invcalL}, $\mathfrak{L}_{s,\lambda}^{-1}: H_{\sigma,R_{\infty}}^{N_+}\to H_{\sigma, R_{\infty},2}^{N_++1}$ is well-defined for suitably large $\lambda$. We can express:
\begin{equation*}
\mathfrak{L}_{s,\lambda}+\epsilon_0 B_s=(\mathbf{1}+\epsilon_0 B_s\mathfrak{L}_{s,\lambda}^{-1})\mathfrak{L}_{s,\lambda}.
\end{equation*}
By Proposition \ref{prop:boundandcompact}, $B_s: H_{\sigma, R_{\infty},2}^{N_++1}\to H_{\sigma,R_{\infty}}^{N_+}$ is bounded, so if we take $\epsilon_0>\frac{1}{\|B_s\|\cdot  \|\mathfrak{L}_{s,\lambda}^{-1}\|}$, we have that $\epsilon_0 B_s\mathfrak{L}_{s,\lambda}^{-1}:  H_{\sigma,R_{\infty}}^{N_+}\to  H_{\sigma,R_{\infty}}^{N_+}$ is bounded with operator norm less than 1. Hence, $(\mathbf{1}+\epsilon_0 B_s\mathfrak{L}_{s,\lambda}^{-1})^{-1}: H_{\sigma,R_{\infty}}^{N_+}\to  H_{\sigma,R_{\infty}}^{N_+}$ is a bounded operator that we can express as follows in terms of a Neumann series:
\begin{equation*}
(\mathbf{1}+\epsilon_0 B_s\mathfrak{L}_{s,\lambda}^{-1})^{-1}=\sum_{k=0}^{\infty}(-1)^k(\epsilon_0 B_s\mathfrak{L}_{s,\lambda}^{-1})^k.
\end{equation*}
We now conclude that
\begin{equation*}
-T_s:=(\mathfrak{L}_{s,\lambda}+\epsilon_0 B_s)^{-1}=\mathfrak{L}_{s,\lambda}^{-1}(\mathbf{1}+\epsilon_0 B_s\mathfrak{L}_{s,\lambda}^{-1})^{-1}: H_{\sigma,R_{\infty}}^{N_+}\to H_{\sigma, R_{\infty},2}^{N_++1}
\end{equation*}
is a well-defined bounded linear operator and we can express:
\begin{equation*}
L_s=-(\mathbf{1}-K_{s,\lambda} T_s)T_s^{-1}
\end{equation*}
on $H_{\sigma, R_{\infty},2}^{N_++1}$.

Formally, $L_s^{-1}=-T_s(\mathbf{1}-K_{s,\lambda} T_s)^{-1}$. By the analytic Fredholm theorem (see for example \cite{rero06}[Theorem 8.92]), $(\mathbf{1}-K_{s,\lambda} T_s)^{-1}: H_{\sigma,R_{\infty}}^{N_+}\to H_{\sigma,R_{\infty}}^{N_+}$ defines a meromorphic family of bounded operators if $\{K_{s,\lambda}T_s\}$ is holomorphic and compact as an operator from $H_{\sigma,R_{\infty}}^{N_+}$ to $H_{\sigma,R_{\infty}}^{N_+}$. Compactness follows directly from Proposition \ref{prop:boundandcompact}. Consider the following difference quotient: let 
\begin{multline*}
\frac{K_{s,\lambda}T_s-K_{s_0,\lambda}T_{s_0}}{s-s_0}(f)=\frac{K_{s,\lambda}-K_{s_0,\lambda}}{s-s_0}T_s+K_{s_0,\lambda}\frac{T_s-T_{s_0}}{s-s_0}\\
=\frac{K_{s,\lambda}-K_{s_0,\lambda}}{s-s_0}T_s-K_{s_0,\lambda}T_s\frac{\mathfrak{L}_{s,\lambda}-\mathfrak{L}_{s_0,\lambda}+\epsilon_0 (B_s-B_{s_0})}{s-s_0}T_{s_0}\\
=\frac{K_{s,\lambda}-K_{s_0,\lambda}}{s-s_0}T_s-K_{s_0,\lambda}T_s\frac{L_{s}-{L}_{s_0}-(K_{s,\lambda}-K_{s_0,\lambda})}{s-s_0}T_{s_0}
\end{multline*}
Note that
\begin{align*}
\frac{L_{s}-{L}_{s_0}}{s-s_0}(f)=&\:2 \partial_{x}f+M x\left[b_{x\tau }\partial_{x}f+(s+s_0)b_{\tau \tau}f+x b_{\tau \vartheta})  \sin\vartheta \partial_{\vartheta}f+b_{\tau}f\right]-2M^2(s+s_0)h(0)f,\\
\frac{K_{s,\lambda}-K_{s_0,\lambda}}{s-s_0}=&\:-2M^2(s+s_0)h(0)f
\end{align*}
By the above expressions, it is straightforward to see that $\lim_{s\to s_0}\frac{K_{s,\lambda}T_s-K_{s_0,\lambda}T_{s_0}}{s-s_0}$ is a well-defined operator from $H_{\sigma,R_{\infty}}^{N_+}$ to $H_{\sigma,R_{\infty}}^{N_+}$.

We conclude that $L_s^{-1}: H_{\sigma,R_{\infty}}^{N_+}\to H_{\sigma,\delta,2}^{N_++2}$ constitute a meromorphic family of operators, so (i) holds when $\mathbbm{h}=1$ in $\{\varrho\leq \frac{1}{2}R_{\infty}\}$.

Since $K_{s,\lambda}T_{s}$ is compact, we have by the Fredholm alternative, that
\begin{equation*}
\dim \ker (T_sL_s)^*=\dim \ker (\mathbf{1}-(K_{s,\lambda}T_s)^*)=\dim  \ker (\mathbf{1}-(K_{s,\lambda}T_s)^*)=\dim \ker (T_sL_s)<\infty
\end{equation*}
and
\begin{equation*}
\textnormal{ran}((T_sL_s)^*)=\textnormal{ran}\,(\mathbf{1}-(K_{s,\lambda}T_{s})^*)=[\ker (\mathbf{1}-K_{s,\lambda}T_s)]^{\perp}=(\ker (T_sL_s))^{\perp},
\end{equation*}
with $T_s^*$ and $L_s^*$ the Hilbert space adjoints of $T_s$ and $L_s$, respectively. By the boundedness of $T_s$ and $T_s L_s$, we have that $(T_sL_s)^*=L_s^*T_s^*: H_{\sigma, R_{\infty},2}^{N_++1}\to H_{\sigma, R_{\infty},2}^{N_++1}$.

Suppose $\hpsi\in \ker L_s$. Then $L_s\hpsi=0$, so $T_sL_s\hpsi=0$, so $\hpsi\in \ker (T_sL_s)$. Suppose now that $\hpsi\in \ker (T_sL_s)$, then $T_sL_s\hpsi=0$. Since $\ker T_s=\{0\}$, we must have that $L_s\hpsi=0$, so $\ker L_s= \ker (T_sL_s)$. Similarly, since $\ker T_s=\{0\}$, we have that $\textnormal{ran}\,T_s^*=H_{\sigma,R_{\infty}}^{N_+}$, so $\textnormal{ran}\, L_s^*=\textnormal{ran}\, (L_s^* T_s^*)$.

Finally, observe that 
\begin{equation*}
\textnormal{coker}\,(L_s^*)=H_{\sigma,R_{\infty}}^{N_+}/\textnormal{ran}\,(L_s^*)\cong(\textnormal{ran}\,(L_s^*))^{\perp}=(\textnormal{ran}\,((T_sL_s)^*))^{\perp}=\ker (T_sL_s)=\ker L_s
\end{equation*}
and we conclude (ii) when $\mathbbm{h}=1$ in $\{\varrho\leq \frac{1}{2}R_{\infty}\}$.

Now, consider the case of more general $\mathbbm{h}$, which do not need to satisfy $\mathbbm{h}\equiv 1$ in $\{\varrho\leq \frac{R_{\infty}}{2}\}$. Denote with $\Sigma^{\mathbbm{h}}$ the hypersurface $\Sigma$ corresponding to the more general function $\mathbbm{h}$. Similarly, denote with $L_{s,\kappa}^{\mathbbm{h}}$ and $\tau^{\mathbbm{h}}$ the operator $L_s$ and the time function $\tau$ with respect to $\mathbbm{h}$, respectively. Let $\mathbbm{h}_0$ satisfy $\mathbbm{h}_0=1$ in $\{\varrho\leq \frac{1}{2}R_{\infty}\}$. 

By assumption on the hyperboloidal time functions, we must have that $\mathbbm{h}_0-\mathbbm{h}$ is integrable in $r$, so
\begin{equation*}
\tau^{\mathbbm{h}}-\tau^{\mathbbm{h}_0}=\int_{\varrho}^{\infty} (\mathbbm{h}_0-\mathbbm{h})(\varrho')\,d\varrho'
\end{equation*}
is well-defined.

We will show that invertibility of $L_{s,\kappa}^{\mathbbm{h}_0}$ is equivalent to invertibility of $L_{s,\kappa}^{\mathbbm{h}}$.

Let $Q_s: L^2(\Sigma^{\mathbbm{h}})\to L^2(\Sigma^{\mathbbm{h}_0})$ be defined as follows: let $\overline{f}: D^+(\Sigma^{\mathbbm{h}})\to \C$ denote the trivial extension of $f\in L^2(\Sigma^{\mathbbm{h}})$ to $D^+(\Sigma^{\mathbbm{h}})$, i.e.\ $\overline{f}(\tau^{\mathbbm{h}},\varrho,\vartheta,\varphi_*)=\overline{f}(\varrho,\vartheta,\varphi_*)$. Then we define
\begin{equation*}
Q_s(f)(\varrho,\vartheta,\varphi_*)=(e^{s\tau^{\mathbbm{h}}}\overline{f})|_{\Sigma^{\tau^{\mathbbm{h}_0}}}(\varrho,\vartheta,\varphi_*)=e^{s \int_{\varrho}^{\infty} (\mathbbm{h}_0-\mathbbm{h})(\varrho')\,d\varrho'}f(\varrho,\vartheta,\varphi_*)
\end{equation*}
to obtain
\begin{equation*}
L_s^{\mathbbm{h}}=Q_s^{-1}\circ L_s^{\mathbbm{h}_0}\circ Q_s,
\end{equation*}
with $Q_s^{-1}(f)=e^{-s \int_{\varrho}^{R_{\infty}} (\mathbbm{h}_0-\mathbbm{h})(\varrho')\,d\varrho'}f$. Since the factor $e^{s \int_{\varrho}^{R_{\infty}} (\mathbbm{h}_0-\mathbbm{h})(\varrho')\,d\varrho'}$ is analytic in $x=\frac{1}{\varrho}$ near $x=0$ by assumption on $\mathbbm{h}$, we can apply Proposition \ref{prop:producttotalnorm} to conclude that there exists a constant $B>0$ such that
\begin{align*}
\|Q_s(f)\|_{H^{N_+}_{\sigma,R_{\infty}}}\leq&\: B \|f\|_{H^{N_+}_{\sigma,R_{\infty}}}\quad \textnormal{and}\quad \|Q_s(f)\|_{H^{N_++1}_{\sigma,R_{\infty},2}}\leq B \|f\|_{H^{N_++1}_{\sigma,R_{\infty},2}},\\
\|Q_s^{-1}(f)\|_{H^{N_+}_{\sigma,R_{\infty}}}\leq &\:B \|f\|_{H^{N_+}_{\sigma,R_{\infty}}}\quad \textnormal{and}\quad \|Q_s^{-1}(f)\|_{H^{N_++1}_{\sigma,R_{\infty},2}}\leq B \|f\|_{H^{N_++1}_{\sigma,R_{\infty},2}}.
\end{align*}
The restrictions $Q_s,Q_s^{-1}: H^{N_+}_{\sigma,R_{\infty}}\to H^{N_+}_{\sigma,R_{\infty}}$ and  $Q_s,Q_s^{-1}: H^{N_++1}_{\sigma,R_{\infty},2}\to H^{N_++1}_{\sigma,R_{\infty},2}$ are therefore bounded operators, so
\begin{equation*}
(L_s^{\mathbbm{h}})^{-1}=Q_s^{-1}\circ (L_s^{\mathbbm{h}_0})^{-1}\circ Q_s: H^{N_+}_{\sigma,R_{\infty}}\to H^{N_++1}_{\sigma,R_{\infty},2}
\end{equation*}
is well-defined and bounded if and only if $(L_s^{\mathbbm{h}_0})^{-1}: H^{N_+}_{\sigma,R_{\infty}}\to H^{N_++1}_{\sigma,R_{\infty},2}$ is well-defined and bounded. This moreover implies that the poles of $(L_s^{\mathbbm{h}})^{-1}$are independent of the choice of $\mathbbm{h}$, provided that $\mathbbm{h}$ satisfies the assumptions stated in \S \ref{sec:asymphypfol}.
\end{proof}

We denote $\mathscr{Q}_{\sigma,R_{\infty},N_+}:=\mathscr{Q}_{\sigma,R_{\infty},N_+,0}$. In the next proposition, we will show that sets $\mathscr{Q}_{\sigma,R_{\infty},N_+}$ do not depend on the choice of $R_{\infty}$, $N_+$ and $\sigma$, provided $R_{\infty}$ is taken to be suitably large and $\sigma$ is restricted to an appropriate interval. We will also show that the union of $\mathscr{Q}_{\rm reg}:=\bigcup_{\sigma\in \R, N_+\in \N_0}\mathscr{Q}_{\sigma,R_{\infty},N_+}$ has no accumulation points in $\Omega =\left\{z\in \C\setminus\{0\}\,,\,|{\rm arg}(z)|<\frac{2}{3}\pi\right\}$.

We first need to establish suitably convergence properties of the elements in $\mathscr{Q}_{\sigma,R_{\infty},N_+,\kappa}$ as $\kappa\downarrow 0$ and with respect to small perturbations $L_{s,\kappa}+\delta \widetilde{Q}$:
\begin{lemma}
\label{lm:kappaconvergence}
Let $\{\kappa_n\}$ and $\{\delta_n\}$ be a sequences in $[0,\infty)$ such that $\kappa_n\to 0$ and $\delta_n\to 0$ as $n\to \infty$. Define $T_{s,n}:=-(\mathfrak{L}_{s,\lambda,\kappa_n}+\delta_n\cdot \widetilde{Q}+\epsilon_0 B_s)^{-1}$, with $\widetilde{Q}\in B(H_{\sigma, R_{\infty},2}^{N_++1},H_{\sigma,R_{\infty}}^{N_+})$ and let $T_{s}:=-(\mathfrak{L}_{s,\lambda}+\epsilon_0 B_s)^{-1}$. 
\begin{enumerate}[label=\emph{(\roman*)}]
\item Suppose that there exists a $s_*\in  \Omega_{\sigma}\cap\{\re(z)>-(\frac{1}{2}+N_+)\kappa_+\}$ such that $\ker (\mathbf{1}-K_{s_*,\lambda}T_{s_*})\neq \{0\}$. Then there exist a sequence $\{s_n\}$ in $ \Omega_{\sigma}\cap\{\re(z)>-(\frac{1}{2}+N_+)\kappa_+\}$ such that $s_n\to s_*$ as $n\to \infty$ and
\begin{equation*}
\ker (\mathbf{1}-K_{s_n,\lambda}T_{s_n,n})\neq \{0\}.
\end{equation*}
Furthermore, for all $\hpsi_{s_*}\in \ker L_{s_*}$, there exists a sequence $\{\hpsi_{s_n}\}$ in $\ker (\mathbf{1}-K_{s_n,\lambda}T_{s_n,n})$, such that
\begin{equation*}
\|\hpsi_{s_n}-\hpsi_{s_*}\|_{H_{\sigma,R_{\infty}}^{N_+}}\to 0
\end{equation*}
as $n\to \infty$.
\item Conversely, suppose that there exists a sequence $\{s_n\}$ in $\Omega_{\sigma}\cap\{\re(z)>-(\frac{1}{2}+N_+)\kappa_+\}$, $\hpsi_{s_n}$ with $||\hpsi_{s_n}||_{H^{N_+}_{\sigma,R_{\infty}}}=1$ and $\{\kappa_n\}$, such that $s_n\to s_*\in \Omega_{\sigma}\cap\{\re(z)>-(\frac{1}{2}+N_+)\kappa_+\}$, $\kappa_n\downarrow 0$ and $\hpsi_n\in \ker (\mathbf{1}-K_{s_n,\lambda}T_{s_n,n})$. Then there exists a $0\neq \hpsi\in \ker L_{s_*}$.
\end{enumerate}
\end{lemma}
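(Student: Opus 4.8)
\textbf{Proof proposal for Lemma \ref{lm:kappaconvergence}.}

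The plan is to exploit the structure $L_{s,\kappa_n} + \delta_n\widetilde{Q} = -(\mathbf{1}-K_{s,\lambda}T_{s,n})\,T_{s,n}^{-1}$ together with the uniform-in-$\kappa$ estimates of Proposition \ref{prop:fullmainest}, Corollary \ref{cor:unifboundinvkappapos}, and the operator-norm convergence $\mathfrak{L}_{s,\lambda,\kappa_n}^{-1}\to\mathfrak{L}_{s,\lambda}^{-1}$ from Proposition \ref{prop:invcalL}. The key preliminary observation is that $T_{s,n}\to T_s$ in operator norm, locally uniformly in $s$: writing $T_{s,n} = T_s\circ\big(\mathbf{1} + (\delta_n\widetilde{Q}+\kappa_n\partial_x(\chi_{\varrho\geq R_\infty}x\partial_x(\cdot)))\circ T_s\big)^{-1}$ and using that the correction has small operator norm (for $\delta_n,\kappa_n$ small, again invoking the extra control of $\kappa\|x\partial_x^2 \mathfrak{L}_{s,\lambda,\kappa}^{-1}\|$ from \eqref{eq:addestposkappa} to handle the $\kappa_n$-term exactly as in the proof of Proposition \ref{prop:invcalL}), one gets $\|T_{s,n}-T_s\|\to 0$. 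Consequently $K_{s,\lambda}T_{s,n}\to K_{s,\lambda}T_s$ in operator norm, and this family of compact operators depends holomorphically on $s$ with a limit that is also holomorphic by Proposition \ref{prop:mainprop}.

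For part (i): since $\ker(\mathbf{1}-K_{s_*,\lambda}T_{s_*})\neq\{0\}$, by the analytic Fredholm theorem $s\mapsto(\mathbf{1}-K_{s,\lambda}T_s)^{-1}$ has a pole at $s_*$, i.e.\ the scalar holomorphic function $s\mapsto\det(\mathbf{1}-K_{s,\lambda}T_s)$ — made sense of via the Fredholm determinant, or more elementarily via a Grushin/Schur reduction to a finite block on a small disc $D_r(s_*)$ — vanishes at $s_*$ and is not identically zero. The functions $d_n(s):=\det(\mathbf{1}-K_{s,\lambda}T_{s,n})$ converge to $d(s):=\det(\mathbf{1}-K_{s,\lambda}T_s)$ uniformly on $\overline{D_r(s_*)}$. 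By Hurwitz's theorem, for $n$ large $d_n$ has a zero $s_n\in D_r(s_*)$, and $s_n\to s_*$; hence $\ker(\mathbf{1}-K_{s_n,\lambda}T_{s_n,n})\neq\{0\}$. For the eigenvector convergence: let $\hpsi_{s_*}\in\ker L_{s_*}=\ker(\mathbf{1}-K_{s_*,\lambda}T_{s_*})$ with $\|\hpsi_{s_*}\|_{H^{N_+}_{\sigma,R_\infty}}=1$. Using the Grushin reduction on $D_r(s_*)$, the finite-dimensional kernel vectors of $\mathbf{1}-K_{s_n,\lambda}T_{s_n,n}$ can be chosen to depend continuously (after extracting a subsequence) on $n$ and converge to a kernel vector of $\mathbf{1}-K_{s_*,\lambda}T_{s_*}$; after a further normalization one extracts $\hpsi_{s_n}\in\ker(\mathbf{1}-K_{s_n,\lambda}T_{s_n,n})$ with $\|\hpsi_{s_n}-\hpsi_{s_*}\|_{H^{N_+}_{\sigma,R_\infty}}\to0$. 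Alternatively, one can argue directly: pick any $\tilde\psi_n\in\ker(\mathbf 1-K_{s_n,\lambda}T_{s_n,n})$ with unit norm; by compactness of $K_{s_n,\lambda}T_{s_n,n}$ and $\tilde\psi_n=K_{s_n,\lambda}T_{s_n,n}\tilde\psi_n$, a subsequence converges strongly to some $\tilde\psi$ with $\|\tilde\psi\|=1$, and passing to the limit gives $\tilde\psi=K_{s_*,\lambda}T_{s_*}\tilde\psi$, i.e.\ $\tilde\psi\in\ker L_{s_*}$; if $\ker L_{s_*}$ is one-dimensional this already gives the claim, and in general one runs this argument within the spectral projection onto the generalized kernel to match a prescribed $\hpsi_{s_*}$.

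For part (ii): here $s_n\to s_*$, $\kappa_n\downarrow0$, $\delta_n\to0$, and $\hpsi_n$ has unit $H^{N_+}_{\sigma,R_\infty}$-norm with $\hpsi_n=K_{s_n,\lambda}T_{s_n,n}\hpsi_n$. From $\hpsi_n\in\ker(L_{s_n,\kappa_n}+\delta_n\widetilde{Q})$ and the global estimate Proposition \ref{prop:fullmainest} (applied with the small perturbations $\delta_n\widetilde Q$ absorbed into the right-hand side, which is $0$), $\hpsi_n$ is bounded in $H^{N_++1}_{\sigma,R_\infty,2}$ uniformly in $n$; by the compact embedding $H^{N_++1}_{\sigma,R_\infty,2}\ssubset H^{N_+}_{\sigma,R_\infty}$ a subsequence converges strongly in $H^{N_+}_{\sigma,R_\infty}$ to some $\hpsi$ with $\|\hpsi\|_{H^{N_+}_{\sigma,R_\infty}}=1$, in particular $\hpsi\neq0$. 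Using $\hpsi_n=K_{s_n,\lambda}T_{s_n,n}\hpsi_n$ and the operator-norm convergence $K_{s_n,\lambda}T_{s_n,n}\to K_{s_*,\lambda}T_{s_*}$ (continuity of $s\mapsto K_{s,\lambda}$, plus $T_{s_n,n}\to T_{s_*}$ as above, together with the uniform boundedness of the $T_{s_n,n}$ on a neighbourhood of $s_*$), one passes to the limit to get $\hpsi=K_{s_*,\lambda}T_{s_*}\hpsi$, i.e.\ $0\neq\hpsi\in\ker(\mathbf{1}-K_{s_*,\lambda}T_{s_*})=\ker L_{s_*}$. Note $s_*\in\Omega_\sigma\cap\{\re z>-(\tfrac12+N_+)\kappa_+\}$ is exactly the hypothesis that makes all of Proposition \ref{prop:fullmainest}, \ref{prop:invcalL}, \ref{prop:mainprop} applicable, so no boundary case arises.

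The main obstacle is establishing $\|T_{s,n}-T_s\|\to0$ locally uniformly in $s$ — in particular controlling the $\kappa_n$-term $\kappa_n\partial_x(\chi_{\varrho\geq R_\infty}x\partial_x(\cdot))\circ T_s$, which is a second-order operator and is \emph{not} bounded on $H^{N_+}_{\sigma,R_\infty}$ uniformly in $\kappa_n$ without the extra Gevrey control; this is precisely where \eqref{eq:addestposkappa} and the commutator identity \eqref{eq:keyeqdiffkappa} from the proof of Proposition \ref{prop:invcalL} must be re-used, with the bounded perturbation $\delta_n\widetilde Q$ carried along as a harmless additional term. Everything else is a routine combination of Hurwitz's theorem, the Fredholm alternative, and the compact embedding.
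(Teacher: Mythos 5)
Your overall strategy coincides with the paper's: prove operator-norm convergence of the regularized resolvents, apply Hurwitz's theorem for (i), and use uniform bounds plus a compact embedding and a splitting identity to pass to the limit in the fixed-point equation for (ii); the paper defers the Hurwitz bookkeeping to \cite{gajwar19a}, which your Fredholm-determinant/Grushin reduction simply makes explicit.

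The one step that would fail as literally written is your factorization $T_{s,n}=T_s\circ\bigl(\mathbf{1}+(\delta_n\widetilde Q+\kappa_n\partial_x(\chi_{\varrho\geq R_{\infty}}x\partial_x(\cdot)))\circ T_s\bigr)^{-1}$: the composition $\kappa_n\,x\partial_x^2\circ T_s$ is \emph{not} small in operator norm as $\kappa_n\to0$, because \eqref{eq:addestposkappa} supplies the extra control $\kappa\|x\partial_x^2\mathfrak{L}_{s,\lambda,\kappa}^{-1}(\cdot)\|_{G_{\sigma,R_{\infty}}}$ only when the second derivative acts on the output of a $\kappa>0$ resolvent; for the limiting operator $T_s$ (i.e.\ $\kappa=0$) that estimate is vacuous, and the $H^{N_++1}_{\sigma,R_{\infty},2}$ norm only controls $x^2\partial_x^2$, one power of $x$ short of what you need. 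The repair is to factor the other way, $T_{s,n}=-\mathfrak{L}_{s,\lambda,\kappa_n}^{-1}\bigl(\mathbf{1}+(\epsilon_0 B_s+\delta_n\widetilde Q)\,\mathfrak{L}_{s,\lambda,\kappa_n}^{-1}\bigr)^{-1}$, so that the delicate $\kappa_n$-comparison is entirely contained in the operator-norm convergence $\mathfrak{L}_{s,\lambda,\kappa_n}^{-1}\to\mathfrak{L}_{s,\lambda}^{-1}$ already established in Proposition \ref{prop:invcalL}, and the $\delta_n\widetilde Q$, $\epsilon_0 B_s$ terms enter only through bounded compositions with the uniformly bounded $\mathfrak{L}_{s,\lambda,\kappa_n}^{-1}$; this is exactly the route the paper takes, and one should also record that the constants are locally uniform in $s$ so that the convergence needed for Hurwitz holds on a disc about $s_*$. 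In part (ii) you diverge mildly from the paper: you should obtain the uniform $H^{N_++1}_{\sigma,R_{\infty},2}$ bound from $\hpsi_n=K_{s_n,\lambda}T_{s_n,n}\hpsi_n$ and the uniform bound on $T_{s_n,n}$ (your ``right-hand side is $0$'' absorption is loose, since the kernel equation for $\mathfrak{L}_{s_n,\lambda,\kappa_n}$ has the inhomogeneity $-(\epsilon_0B_s+K_{s_n,\lambda}+\delta_n\widetilde Q)\hpsi_n$ and the $K$-term is only bounded, not small), and you then use the embedding $H^{N_++1}_{\sigma,R_{\infty},2}\ssubset H^{N_+}_{\sigma,R_{\infty}}$ at fixed $\sigma$, whereas the paper passes to a smaller Gevrey parameter $\sigma'<\sigma$ via \eqref{eq:compact2} and recovers membership in the original space by weak convergence. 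Your route is admissible (the paper itself invokes this embedding in Proposition \ref{prop:boundandcompact}) and has the advantage that nontriviality of the limit is immediate, since the strong convergence takes place in the very norm in which $\hpsi_n$ is normalized; both arguments then conclude with the same splitting identity.
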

\begin{proof}
 The key ingredients of the proof of (i) are the convergence property $\mathfrak{L}_{s,\lambda,\kappa_n}^{-1}\to \mathfrak{L}_{s,\lambda}^{-1}$ with respect to the operator norm, which we established in \eqref{prop:invcalL}, together with the fact that  $(\mathfrak{L}_{s,\lambda,\kappa}+\delta_n\cdot \widetilde{Q})^{-1}$ is a well-defined bounded operator for $\delta_n< \frac{1}{\|\widetilde{Q}\|\cdot \|\mathfrak{L}_{s,\lambda,\kappa}^{-1}\|}$, which follows from the proof of Proposition \ref{prop:mainprop}, and $(\mathfrak{L}_{s,\lambda,\kappa}+\delta_n\cdot \widetilde{Q})^{-1}$ converges to $\mathfrak{L}_{s,\lambda,\kappa}^{-1}$ with respect to the operator norm. We also use that  $\ker L_s=\ker (\mathbf{1}-K_{s_*,\lambda}T_{s_*})$. Then Hurwitz's theorem is applied. We refer the reader to \cite{gajwar19a}[Proposition 11.6 and 11.7] for the details, where $K_{s_n,\lambda}T_{s_n,n}$ replaces the operator $A_{\kappa_n,s}$ from \cite{gajwar19a}.
 
We prove (ii) by first proceeding as in the proof of Proposition \ref{prop:invcalL}. Let $\sigma'<\sigma$. Then $\hpsi_{s_n}\in H^{N_++1}_{\sigma,R_{\infty},2} \ssubset H^{N_++2}_{\sigma',R_{\infty},2}$, so $\{\hpsi_n\}$ admits a convergent subsequence with non-trivial limit $\hpsi_{s_*} \in H^{N_++1}_{\sigma',R_{\infty},2}$. Since $\{\hpsi_n\}$ is also weakly convergent in $H^{N_++1}_{\sigma,R_{\infty},2}$ and there is uniqueness of limits, we must have that $\hpsi_{s_*} \in H^{N_++1}_{\sigma,R_{\infty},2}$.

We will now show that $\hpsi_{s_*}\in \ker L_{s_*}$. By splitting
\begin{equation*}
	\mathbf{1}-K_{s_*,\lambda}T_{s_*}=(\mathbf{1}-K_{s_n,\lambda}T_{s_n,n})+(K_{s_n,\lambda}T_{s_n,n}-K_{s_*,\lambda}T_{s_n}),
\end{equation*}
we obtain
\begin{equation*}
	(\mathbf{1}-K_{s_*,\lambda}T_{s_*})\hpsi_{s_*}=(\mathbf{1}-K_{s_*,\lambda}T_{s_*})(\hpsi_{s_*}-\hpsi_{s_n})+(K_{s_n,\lambda}T_{s_n,n}-K_{s_*,\lambda}T_{s_n})\hpsi_{s_*}.
\end{equation*}
The terms on the right-hand side go to zero in $H^{N_++1}_{\sigma',R_{\infty},2}$ as $n\to \infty$, so we conclude that $\hpsi_{s_*}\in \ker L_{s_*}$.
\end{proof}

\begin{proposition}
\label{prop:accumulation}
Let $N+\in \N_0$ and let $R_{\infty}>0$ be suitably large, such that Proposition \ref{prop:mainprop} applies. 
\begin{enumerate}[label=\emph{(\roman*)}]
\item Let $\tilde{N}_+\geq N_+$. Then $\mathscr{Q}_{\sigma,R_{\infty},\tilde{N}_+}=\mathscr{Q}_{\sigma, R_{\infty},N_+}$, $\ker L_s\subset C^{\infty}(\hat{\Sigma})$ and $\ker (\mathfrak{L}_{s,\lambda,\kappa}+\epsilon_0 B_s+K_{s,\lambda})\subset C^{\infty}(\hat{\Sigma})$ for $s\in \Omega_{\sigma}\cap\{\re(z)>-(\frac{1}{2}+N_+)\kappa_+\}$ with $\sigma\in \R$.
\item Let $\widetilde{R}_{\infty}\geq R_{\infty}$. Then $\mathscr{Q}_{\sigma,\widetilde{R}_{\infty},N_+}=\mathscr{Q}_{\sigma, R_{\infty},N_+}$.
\item Let $|\sigma'|>|\sigma|$ and $\Omega_{\sigma'}\cap \Omega_{\sigma}\neq \emptyset$. Then
\begin{equation*}
\mathscr{Q}_{\sigma,R_{\infty},N_+}\cap (\Omega_{\tilde{\sigma}}\cap \Omega_{\sigma})=\mathscr{Q}_{\sigma',R_{\infty},N_+}\cap (\Omega_{\sigma'}\cap \Omega_{\sigma}).
\end{equation*}
In particular, $\mathscr{Q}_{\rm reg}=\bigcup_{\sigma\in \R, N_+\in \N_0}\mathscr{Q}_{\sigma,R_{\infty},N_+}$ has no accumulation points in\\ $\Omega =\left\{z\in \C\setminus\{0\}\,,\,|{\rm arg}(z)|<\frac{2}{3}\pi\right\}$.
\end{enumerate}
\end{proposition}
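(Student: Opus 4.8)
The plan is to prove the three parts of Proposition \ref{prop:accumulation} in order, each time reducing the statement to a fact that is already available: the uniform-in-$\kappa$ estimate of Proposition \ref{prop:fullmainest}, the construction of $\mathfrak{L}_{s,\lambda}^{-1}$ in Proposition \ref{prop:invcalL}, the Fredholm/meromorphy package of Proposition \ref{prop:mainprop}, and the convergence/Hurwitz machinery of Lemma \ref{lm:kappaconvergence}. Throughout I would work, without loss of generality, with a foliation function $\mathbbm{h}$ satisfying $\mathbbm{h}\equiv 1$ on $\{\varrho\le\frac12 R_\infty\}$ (the general case being reduced to this one by the conjugation $L_s^{\mathbbm{h}}=Q_s^{-1}\circ L_s^{\mathbbm{h}_0}\circ Q_s$ already established in the proof of Proposition \ref{prop:mainprop}, which shows the pole sets are $\mathbbm{h}$-independent).

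For part (i), the regularity statement: fix $s\in\Omega_\sigma\cap\{\re z>-(\frac12+N_+)\kappa_+\}$ and let $\hpsi\in\ker L_s$ as an element of $H^{N_+}_{\sigma,R_\infty}$ (equivalently of the domain of the $N_+$-version of $L_s$). I would bootstrap: write $L_s\hpsi=0$ as $\mathfrak{L}_{s,\lambda}\hpsi=-(\epsilon_0 B_s+K_{s,\lambda})\hpsi$; since $B_s$ and $K_{s,\lambda}$ map $H^{N'_++1}_{\sigma,R_\infty,2}$ into $H^{N'_+}_{\sigma,R_\infty}$ (Proposition \ref{prop:boundandcompact}) and $\mathfrak{L}_{s,\lambda}^{-1}:H^{N'_+}_{\sigma,R_\infty}\to H^{N'_++1}_{\sigma,R_\infty,2}$ is bounded for \emph{every} $N'_+$ with $\re s>-(\frac12+N'_+)\kappa_+$ (Proposition \ref{prop:invcalL}), one gains one degree of regularity at a time: $\hpsi\in H^{N_++1}_{\sigma,R_\infty,2}\Rightarrow \hpsi=\mathfrak{L}_{s,\lambda}^{-1}(\dots)\in H^{N_++2}_{\sigma,R_\infty,2}$, and iterating over all $N'_+>N_+$ (the condition on $\re s$ is automatically satisfied once $N'_+$ is large) gives $\hpsi\in\bigcap_k H^k_{\sigma,R_\infty,2}$, hence $\hpsi\in C^\infty(\widehat\Sigma)$ by Sobolev embedding in the bounded region together with the Gevrey/boundary norms controlling behaviour near $x=0$. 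The same argument applied to $\mathfrak{L}_{s,\lambda,\kappa}+\epsilon_0 B_s+K_{s,\lambda}$ gives smoothness of its kernel. Smoothness being independent of $N_+$, a smooth element of $\ker L_s$ lies in $\ker$ of the $\tilde N_+$-version for any $\tilde N_+\ge N_+$ and conversely (the domains only differ by how many finite derivatives are imposed in the bounded region, and a smooth function satisfies all of them), so $\mathscr{Q}_{\sigma,R_\infty,\tilde N_+}\cap\Omega_{\sigma,N_+}=\mathscr{Q}_{\sigma,R_\infty,N_+}$; since $\Omega_{\sigma,\tilde N_+}\supseteq\Omega_{\sigma,N_+}$ and the domains only matter inside $\Omega_{\sigma,N_+}$ via the pole set, (i) follows. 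For part (ii), given $\widetilde R_\infty\ge R_\infty$, I would note that all norms $H^{N_+}_{\sigma,R_\infty}$ and $H^{N_+}_{\sigma,\widetilde R_\infty}$ are comparable on $C^\infty(\widehat\Sigma)$ after the $\lambda$-shift is adjusted (the only genuine change is the location $x_0=1/(2R_\infty)$ of the boundary norms and the split radius, but each can be bounded by the other using elliptic interior estimates and the fact that Gevrey norms at different base points and with different $\sigma$-exponents are comparable up to enlarging $\sigma$), so an element of $\ker L_s$ for one choice is a (smooth, by (i)) element for the other; the pole set, being the locus where the kernel is nontrivial, is unchanged.

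Part (iii) is where the real content sits. Given $|\sigma'|>|\sigma|$ with $\Omega_{\sigma'}\cap\Omega_\sigma\ne\emptyset$ and $s$ in this intersection: by (i) any $\hpsi\in\ker L_s$ is smooth, hence lies in both $H^{N_+}_{\sigma,R_\infty}$ and $H^{N_+}_{\sigma',R_\infty}$ (indeed in all Gevrey classes, since it is a finite sum of spherical modes each of which is actually analytic at $x=0$ once it solves the regular-singular ODE \eqref{eq:operatorfaraway} — this uses the fact established in part (i) together with the ODE structure at the cosmological-horizon-like point), so $\ker L_s$ is the same subspace whether computed in the $\sigma$- or $\sigma'$-norms, giving the claimed equality of pole sets on the overlap. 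For the no-accumulation statement: each $\mathscr{Q}_{\sigma,R_\infty,N_+}$ is a discrete subset of $\Omega_{\sigma,N_+}$ by the meromorphy from Proposition \ref{prop:mainprop}(i); since $\Omega=\bigcup_{\sigma,N_+}\Omega_{\sigma,N_+}$ (Theorem \ref{thm:mainthmA}(ii)/Lemma \ref{lm:compatibility}) and the family $\{\Omega_{\sigma,N_+}\}$ is an increasing exhaustion of $\Omega$ by relatively compact subsets (as $|\sigma|\to\infty$ and $N_+\to\infty$), any putative accumulation point $z_*\in\Omega$ lies in some $\Omega_{\sigma_0,N_0}$; a neighbourhood $U$ of $z_*$ with $\overline U\subset\Omega_{\sigma_0,N_0}$ meets only finitely many points of $\mathscr{Q}_{\sigma_0,R_\infty,N_0}$ by discreteness, and by the compatibility relations just proved (parts (i) and (iii) first half) $\mathscr{Q}_{\rm reg}\cap U = \mathscr{Q}_{\sigma_0,R_\infty,N_0}\cap U$, so $\mathscr{Q}_{\rm reg}$ cannot accumulate at $z_*$. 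The main obstacle is the smoothness bootstrap of part (i): one must check that each application of $\mathfrak{L}_{s,\lambda}^{-1}$ is legitimate, i.e. that the right-hand side $-(\epsilon_0 B_s+K_{s,\lambda})\hpsi$ genuinely lies in the higher space $H^{N'_+}_{\sigma,R_\infty}$ and not merely in a space with the same number of derivatives — this is exactly why Proposition \ref{prop:boundandcompact} is phrased with the "$+1$" gain built into $H^{N_++1}_{\sigma,R_\infty,2}\subset H^{N_+}_{\sigma,R_\infty}$, and one must also track that the condition $\re s>-(\frac12+N'_+)\kappa_+$ needed to invert $\mathfrak{L}_{s,\lambda}$ at level $N'_+$ is automatically met for $N'_+$ large, which it is since $\re s$ is fixed. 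A secondary subtlety is the passage from "smooth" to "analytic at $x=0$ on each mode", needed for the $\sigma$-independence in (iii); this follows from the theory of the regular-singular point $x=0$ of \eqref{eq:operatorfaraway} when $\re s<0$ (resp. a direct analysis when $\re s\ge 0$ where $x=0$ is a regular point of the rescaled equation), combined with the finite-dimensionality of $\ker L_s$ from Proposition \ref{prop:mainprop}(ii) ensuring only finitely many modes are present.
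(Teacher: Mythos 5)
Your part (i) is essentially the paper's argument (bootstrap via $\hpsi_s=-(\mathfrak{L}_{s,\lambda}+\epsilon_0B_s)^{-1}K_{s,\lambda}\hpsi_s$, using that $K_{s,\lambda}$ gains a derivative, plus commutation with $\snabla_{\s^2}$), and your treatment of the no-accumulation statement at the end is structurally right. But parts (ii) and (iii) contain genuine gaps, in both cases for the \emph{hard} inclusion. In (ii) you assert that $\|\cdot\|_{H^{N_+}_{\sigma,R_\infty}}$ and $\|\cdot\|_{H^{N_+}_{\sigma,\widetilde R_\infty}}$ are comparable on $C^\infty(\widehat\Sigma)$ "using elliptic interior estimates". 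They are not: the $G_{\sigma,R_\infty}$ part of the norm demands $2$-Gevrey control of $\partial_x^n\hpsi$ on the interval $[0,\tfrac{1}{2R_\infty}]$, which strictly contains $[0,\tfrac{1}{2\widetilde R_\infty}]$ when $\widetilde R_\infty>R_\infty$, and smoothness on the intermediate annulus gives no bound whatsoever on the growth of $\|\partial_x^n\hpsi\|$ in $n$ there, so the infinite sum defining the norm can diverge. The content of (ii) is precisely that kernel elements, a priori Gevrey only near $x=0$ in the $\widetilde R_\infty$ sense, satisfy the quantitative Gevrey bounds on the larger interval; this uses the equation. The paper obtains it by approximating $\hpsi_s$ by kernel elements $\hpsi_{s_n}$ of the $\kappa_n$-regularized operator (Lemma \ref{lm:kappaconvergence}), running the estimates with the sums truncated at finite $N_\infty$ (where they close for $\kappa_n>0$), letting $N_\infty\to\infty$ to get bounds uniform in $n$ in the stronger norm, and extracting a weak limit. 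Nothing in your proposal substitutes for this step.

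In (iii) the same issue recurs and your proposed fix is incorrect on two counts. First, you invoke "the theory of the regular-singular point $x=0$ of \eqref{eq:operatorfaraway}" to claim analyticity of each mode at $x=0$; but for $\kappa=0$ the point $x=0$ is an \emph{irregular} singular point (the paper states this explicitly in \S\ref{sec:ideastechniques}), and the entire Gevrey framework exists because kernel elements are generically \emph{not} analytic at future null infinity — only $2$-Gevrey. Second, finite-dimensionality of $\ker L_s$ does not imply that an individual kernel element is supported on finitely many spherical harmonics $Y_{\ell m}$ (the paper only deduces finiteness in the azimuthal number $m$, see Remark \ref{rmk:radialode}); since the $G_{\sigma,R_\infty}$ weights depend on $\ell$, one needs control uniform in $\ell$, which a mode-by-mode analyticity argument cannot supply. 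The correct route for the inclusion $\ker L_s|_{H^{N_+}_{\sigma}}\subseteq\ker L_s|_{H^{N_+}_{\sigma'}}$ with $|\sigma'|>|\sigma|$ is again the $\kappa_n$-approximation plus truncated-sum estimates in the $\sigma'$-norm and weak compactness, exactly as in (ii). I would also flag, as a minor point, that the sets $\Omega_{\sigma,N_+}$ are not nested in $\sigma$, so "increasing exhaustion" should be replaced by: any compact $K\subset\Omega$ is covered by finitely many $\Omega_{\sigma,N_+}$, on each of which the pole set is discrete and, by (i)--(iii), agrees with $\mathscr{Q}_{\rm reg}\cap\Omega_{\sigma,N_+}$.
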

\begin{proof}
We will first prove (i). Let $\hpsi_s\in \ker L_s\subset H^{N_++1}_{\sigma,R_{\infty},2}$. Then
\begin{equation*}
(\mathfrak{L}_{s,\lambda}+\epsilon_0 B_s)\hpsi_s=-K_{s,\lambda}\hpsi_s.
\end{equation*}
In the  proof of Proposition \ref{prop:mainprop}, we showed that for $\epsilon_0$ suitably small, $(\mathfrak{L}_{s,\lambda}+\epsilon_0 B_s)^{-1}: H_{\sigma, R_{\infty}}^{N_+}\to H_{\sigma, R_{\infty},2}^{N_++1}$ is a well-defined bounded operator and hence, there exists a constant $C>0$, such that
\begin{equation}
\label{eq:estkerLs}
\|\hpsi_s\|_{H^{N_++1}_{\sigma,R_{\infty},2}}=\|(\mathfrak{L}_{s,\lambda}+\epsilon_0 B_s)^{-1}K_{s,\lambda}\hpsi_s\|_{H^{N_++1}_{\sigma,R_{\infty},2}}\leq C \|K_{s,\lambda}\hpsi_s\|_{H^{N_+}_{\sigma,R_{\infty}}}.
\end{equation}
By Proposition \ref{prop:boundandcompact}, we have in fact that $\|K_{s,\lambda}\hpsi_s\|_{H^{N_++1}_{\sigma,R_{\infty},2}}<\infty$.

Now let $\tilde{N}_+=N_++1$, then \eqref{eq:estkerLs} applies with $\tilde{N}_+$ replacing $N_+$:
\begin{equation*}
\|\hpsi_s\|_{H^{\tilde{N}_++1}_{\sigma,R_{\infty},2}}\leq C \|K_{s,\lambda}\psi_s\|_{H^{\tilde{N}_+}_{\sigma,R_{\infty}}}\leq C \|K_{s,\lambda}\hpsi_s\|_{H^{N_++1}_{\sigma,R_{\infty},2}}<\infty.
\end{equation*}
Therefore $\hpsi_s\in H^{N_++2}_{\sigma,R_{\infty},2}$.

Conversely, suppose that $\hpsi\in \ker L_s\subset H^{\tilde{N}_++1}_{\sigma,R_{\infty},2}$ with $\tilde{N}_+=N_++1$. By \eqref{eq:estkerLs}, we have that
\begin{equation*}
\|\hpsi_s\|_{H^{N_++1}_{\sigma,R_{\infty},2}}\leq  \|K_{s,\lambda}\hpsi_s\|_{H^{N_+}_{\sigma,R_{\infty}}}\leq \|K_{s,\lambda}\hpsi_s\|_{H^{\tilde{N}_++1}_{\sigma,R_{\infty},2}}<\infty,
\end{equation*}
where we used in particular that for $w_{\tilde{p}}$ defined with $\tilde{p}= 4(\tilde{N}_++1)+2=4(N_++1)+2+4$ and $w_p$ defined with $p= 4(N_++1)+2$:
\begin{equation*}
\|(\mathbf{1}-\slashed{\Delta}_{\s^2})^{-\frac{w_{p}}{4}}K_{s,\lambda}\hpsi_s\|_{H^{N_+}_{\sigma,R_{\infty}}}\leq \|\snabla_{\s^2}(\mathbf{1}-\slashed{\Delta}_{\s^2})^{-\frac{w_{\tilde{p}}}{4}}K_{s,\lambda}\hpsi_s\|_{H^{\tilde{N}_+}_{\sigma,R_{\infty}}} \leq \|(\mathbf{1}-\slashed{\Delta}_{\s^2})^{-\frac{w_{\tilde{p}}}{4}}\hpsi_s\|_{H^{\tilde{N}_++1}_{\sigma,R_{\infty},2}}.
\end{equation*}
Hence,  $\hpsi\in \ker L_s\subset H^{{N}_++1}_{\sigma,R_{\infty},2}$.

We can repeat this procedure inductively to conclude that the kernel of $L_s$ with respect to $H^{N_++1}_{\sigma,R_{\infty},2}$ agrees with the kernel of $L_s$ with respect to $H^{\tilde{N}_++1}_{\sigma,R_{\infty},2}$ for arbitrary $\tilde{N}_+\geq N_+$. Therefore, we must have that $\mathscr{Q}_{\sigma,R_{\infty},\tilde{N}_+}=\mathscr{Q}_{\sigma, R_{\infty},N_+}$.

Note that $(\mathfrak{L}_{s,\lambda}+\epsilon_0 B_s)(\snabla_{\s^2}\hpsi_s)=-\snabla_{\s^2}K_{s,\lambda}\hpsi_s+[\snabla_{\s^2},\mathfrak{L}_{s,\lambda}+\epsilon_0 B_s]\hpsi_s$.

It is straightforward to show that
\begin{equation*}
\|[\snabla_{\s^2},\mathfrak{L}_{s,\lambda}+\epsilon_0 B_s]\hpsi_s\|_{H^{N_+}_{\sigma,R_{\infty}}}\leq C\|\hpsi_s\|_{H^{N_++1}_{\sigma,R_{\infty},2}}.
\end{equation*}
Hence, we can proceed as above to conclude that
\begin{equation*}
\|\snabla_{\s^2}\hpsi_s\|_{H^{N_++1}_{\sigma,R_{\infty},2}}\leq \|\snabla_{\s^2}K_{s,\lambda}\psi_s\|_{H^{N_+}_{\sigma,R_{\infty}}}+\|[\snabla_{\s^2},\mathfrak{L}_{s,\lambda}+\epsilon_0 B_s]\psi_s\|_{H^{N_+}_{\sigma,R_{\infty}}}\leq C\|\psi_s\|_{H^{N_++1}_{\sigma,R_{\infty},2}}<\infty.
\end{equation*}
Hence, $\snabla_{\s^2}\hpsi_s\in H^{N_++1}_{\sigma,R_{\infty},2}$. By repeating this procedure for $\snabla_{\s^2}^k\hpsi_s$, we conclude in particular that 
\begin{equation}
\label{eq:estHN}
\|\hpsi_s\|_{H^{N_++1}(\hat{\Sigma})}+\|\psi_s\|_{H^{N_++1}_{\sigma,R_{\infty},2}}\leq C\|\psi_s\|_{H^{N_++1}_{\sigma,R_{\infty},2}}
\end{equation}
so $\hpsi_s\in H^{N_++1}_{\sigma,R_{\infty},2}\cap H^{N_++1}(\hat{\Sigma})$, and in fact  $\hpsi_s\in H^{\tilde{N}_++1}_{\sigma,R_{\infty},2}\cap H^{\tilde{N}_++1}(\hat{\Sigma})$ for all $\tilde{N}_+\geq N_+$, so $\hpsi_s\in C^{\infty}(\hat{\Sigma})\cap H^{N_++1}_{\sigma,R_{\infty},2}$. To conclude that $\ker (\mathfrak{L}_{s,\lambda,\kappa}+\epsilon_0 B_s+K_{s,\lambda})\subset C^{\infty}(\hat{\Sigma})$, we simply repeat the above arguments with $\mathfrak{L}_{s,\lambda,\kappa}$ replacing $\mathfrak{L}_{s,\lambda}$.

We now turn to (ii). Since $\hpsi_s\in C^{\infty}(\hat{\Sigma})$, it follows immediately that for $\hpsi_s\in  \ker L_s\subseteq H^{N_++1}_{\sigma,R_{\infty},2}$ with $\tilde{R}_{\infty}\geq R_{\infty}$, $\hpsi_s\in H^{N_++1}_{\sigma,\tilde{R}_{\infty},2}$. It remains to show that $\hpsi_s\in  \ker L_s\subseteq H^{N_++1}_{\sigma,\tilde{R}_{\infty},2}$ implies that $\hpsi_s\in H^{N_++1}_{\sigma,R_{\infty},2}$. For this, we will apply Lemma \ref{lm:kappaconvergence} with $Q\equiv 0$, which says that there exists a sequence $\{\kappa_n\}$ in $(0,\infty)$ and $s_n\in \Omega_{\sigma}\cap\{\re(z)>-(\frac{1}{2}+N_+)\kappa_+\}$ with $\kappa_n\downarrow 0$ as $n\to \infty$, and there exists a sequence $\psi_{s_n}\in \ker (\mathfrak{L}_{s,\lambda,\kappa_n}+\epsilon B_s+K_{s,\lambda})$ such that
\begin{equation*}
\|\hpsi_{s_n}-\hpsi_{s}\|_{H_{\sigma,\tilde{R}_{\infty}}^{N_+}}\to 0
\end{equation*}
as $n\to \infty$.

Observe that
\begin{equation*}
(\mathfrak{L}_{s,\lambda,\kappa_n}+\epsilon_0 B_s)\hpsi_{s_n}=-K_{s,\lambda}\hpsi_{s_n}.
\end{equation*}

Since any estimates involving $\mathfrak{L}_{s,\lambda,\kappa_n}$ apply with $H^{N_++1}_{\sigma,R_{\infty}}$ and $H^{N_++1}_{\sigma,R_{\infty},2}$ norms where the infinite sums $\sum_{n=\ldots}^{\infty}$ are replaced with sums $\sum_{n=\ldots}^{N_{\infty}}$ with $N_{\infty}<\infty$ suitably large depending on $\kappa_n$, we can first apply the estimates with finite $N_{\infty}$, using that $\hpsi_{s_n},\hpsi_{s_m}\in C^{\infty}(\hat{\Sigma})$, and subsequently take the limit $N_{\infty}\to \infty$ to obtain: there exists an $N>0$ (independent of $n$) such that
\begin{equation*}
\|\hpsi_{s_n}\|_{H^{N_++1}_{\sigma,R_{\infty},2}}\leq C\|\hpsi_{s_n}\|_{H^{N}(\hat{\Sigma})},
\end{equation*}
where we moreover used that all sufficiently higher-order derivatives can appearing on the right-hand side of the $N_{\infty}<\infty$ estimates can be absorbed to the left-hand side, leaving only lower-order derivatives on the right-hand side. For $N_+$ suitably large depending on $N$, we obtain from \eqref{eq:estHN} applied to the $\kappa>0$ setting with $\tilde{R}_{\infty}$ replacing $R_{\infty}$:
\begin{equation*}
\|\hpsi_{s_n}\|_{H^{N_++1}_{\sigma,R_{\infty},2}}+\|\hpsi_{s_n}\|_{H^{N_++1}(\hat{\Sigma})}\leq C\|\hpsi_{s_n}\|_{H^{N_++1}_{\sigma,\tilde{R}_{\infty},2}}.
\end{equation*}
In particular, $\{\hpsi_{s_n}\}$ is a bounded sequence in $H^{N_++1}_{\sigma,R_{\infty},2}\cap H^{N_++1}(\hat{\Sigma})$, with respect to the norm $\|\cdot \|_{H^{N_++1}_{\sigma,R_{\infty},2}}+\|\cdot \|_{H^{N_++1}(\hat{\Sigma})}$ so it admits a weakly convergent subsequence. Since $H^{N_++1}_{\sigma,R_{\infty},2}\cap H^{N_++1}(\hat{\Sigma})\subset H^{N_++1}_{\sigma,\tilde{R}_{\infty},2}$ and $\hpsi_{s_n}$  converges strongly to $\psi_s$ with respect to $\|\cdot\|_{H^{N_++1}_{\sigma,\tilde{R}_{\infty},2}}$, we can conclude by uniqueness of limits that $\hpsi_{s}\in H^{N_++1}_{\sigma,\tilde{R}_{\infty},2}$.

We therefore conclude that  $\mathscr{Q}_{\sigma,\widetilde{R}_{\infty},N_+}=\mathscr{Q}_{\sigma, R_{\infty},N_+}$.

Finally, consider (iii). First of all, we have that $H^{N_++1}_{\tilde{\sigma},R_{\infty},2}\subset H^{N_++1}_{\sigma,R_{\infty},2}$, so if $\psi_s\in \ker L_s$ with respect to $H^{N_+}_{\tilde{\sigma},R_{\infty}}$, then $\psi_s\in \ker L_s$ with respect to $H^{N_+}_{\sigma,R_{\infty}}$.

Conversely, suppose that $\psi_s\in \ker L_s$ with respect to $H^{N_+}_{\sigma,R_{\infty}}$. Suppose also that $s\in \Omega_{\sigma'}$. As above, we can find a sequence $\{\kappa_n\}$ in $(0,\infty)$, $s_n\in \Omega_{\sigma}\cap  \Omega_{\sigma'}\cap\{\re(z)>-(\frac{1}{2}+N_+)\kappa_+\}$ with $\kappa_n\downarrow 0$ as $n\to \infty$ and $\psi_{s_n}\in \ker (\mathfrak{L}_{s,\lambda,\kappa_n}+\epsilon B_s+K_{s,\lambda})$ with
\begin{equation*}
\|\hpsi_{s_n}-\hpsi_{s}\|_{H_{\sigma,R_{\infty}}^{N_+}}\to 0.
\end{equation*}
We use the same strategy as before: we apply the $N_{\infty}<\infty$ estimates with respect to $H^{N_++1}_{\sigma',R_{\infty}}$ to $(\mathfrak{L}_{s,\lambda,\kappa_n}+\epsilon_0 B_s)\hpsi_{s_n}$ and use that any higher-order derivatives can be absorbed into the left-hand side, leaving
\begin{equation*}
\|\hpsi_{s_n}\|_{H^{N_++1}_{\sigma,R_{\infty},2}}\leq C\|\hpsi_{s_n}\|_{H^{N}(\hat{\Sigma}\cap \{x\leq x_0\})}\leq \|\hpsi_{s_n}\|_{H^{N_++1}_{\sigma,R_{\infty},2}}.
\end{equation*}
Hence, $\{\hpsi_{s_n}\}$ admits a weakly convergent subsequence with limit in $H^{N_++1}_{\sigma',R_{\infty},2}$. Since $H^{N_++1}_{\sigma',R_{\infty},2}\subset H^{N_++1}_{\sigma,R_{\infty},2}$ and uniqueness of limits holds, we conclude that $\psi_s\in H^{N_++1}_{\sigma',R_{\infty},2}$. Hence, $s$ is an element of $\ker L_s$ with respect to $H^{N_++1}_{\sigma',R_{\infty},2}$, from which (iii) follows.
\end{proof}

\section{Stability of the quasinormal spectrum}
\label{sec:stabqnf}
Recall the definition of the stability of the spectrum of a family of operators in Definition \ref{def:stabspect}. The following proposition forms the main ingredient for the stability results in Theorem \ref{thm:stab}.

\begin{proposition}
\label{prop:pertLs}
Consider the operator $L_s + \delta \cdot \widetilde{Q}: H^{N_+}_{\sigma,R_{\infty}}(L_s)\to H^{N_+}_{\sigma,R_{\infty}}$, with $\widetilde{Q}: H^{N_+}_{\sigma,R_{\infty},2}\to H^{N_+}_{\sigma,R_{\infty}}$ bounded for all $|\sigma|<\sigma_0$, with $\sigma_0>0$ or $\sigma_0=\infty$. Let $\epsilon>0$ and $D\subset \Omega_{\sigma}$. Then for all $\epsilon>0$ and $K\subset \Omega_{\sigma}$ compact, there exists a $\delta_0>0$ such that for all $0<\delta\leq \delta_0$:
\begin{enumerate}[label=\emph{(\roman*)}]
\item For $|\sigma|<\sigma_0$,  the family $s\mapsto (L_s+\delta \cdot \widetilde{Q})^{-1}:  H^{N_+}_{\sigma,R_{\infty}}\to  H^{N_+}_{\sigma,R_{\infty},2}$ is meromorphic on $K$ and holomorphic on $K\setminus \bigcup_{z\in \mathscr{Q}_{\rm reg}}B_{\delta}(z)\cap K$.
\item For all $z\in \mathscr{Q}_{\rm reg}\cap B_{\sigma_0^2}$, there exists an $s\in B_{\delta}(z)$, such that $s$ is a pole of $s\mapsto (L_s+\delta \cdot \widetilde{Q})^{-1}$.
\end{enumerate}
\end{proposition}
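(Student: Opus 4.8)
\textbf{Proof proposal for Proposition \ref{prop:pertLs}.}

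The plan is to reduce everything to a perturbative argument about the Neumann series for $\mathbf 1 + \delta\,\widetilde Q\,L_s^{-1}$, combined with the meromorphicity already established in Proposition \ref{prop:mainprop} and the Hurwitz-type convergence of kernels from Lemma \ref{lm:kappaconvergence}.

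For part (i): Fix $\sigma$ with $|\sigma|<\sigma_0$ and $K\subset\Omega_\sigma$ compact. By Proposition \ref{prop:mainprop}(i), $s\mapsto L_s^{-1}\colon H^{N_+}_{\sigma,R_\infty}\to H^{N_+}_{\sigma,R_\infty,2}$ is meromorphic on $\Omega_\sigma\cap\{\re z>-(\tfrac12+N_+)\kappa_+\}$ with pole set $\mathscr Q_{\sigma,R_\infty,N_+}\subseteq\mathscr Q_{\rm reg}$. The first step is to note that on the compact set $K\setminus\bigcup_{z\in\mathscr Q_{\rm reg}}B_\delta(z)$ the operator $L_s^{-1}$ is holomorphic and its operator norm is bounded by some $C(K,\delta)<\infty$; this uses that $K$ is compact, that poles are isolated (no accumulation in $\Omega$, Proposition \ref{prop:accumulation}(iii)), and the open mapping / uniform boundedness of a holomorphic operator family on a compact set. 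Since $\widetilde Q\colon H^{N_+}_{\sigma,R_\infty,2}\to H^{N_+}_{\sigma,R_\infty}$ is bounded with norm $\|\widetilde Q\|$ independent of $s$, choosing $\delta_0 := \min\{\epsilon_{\text{aux}},\ (2C(K,\delta)\|\widetilde Q\|)^{-1}\}$ (one applies this with $\delta$ playing simultaneously the roles of the radius of the excised balls and the perturbation size, which is legitimate since both constraints are of the form ``$\delta$ small'') gives $\|\delta\,\widetilde Q\,L_s^{-1}\|_{H^{N_+}_{\sigma,R_\infty}\to H^{N_+}_{\sigma,R_\infty}}<\tfrac12$ on $K\setminus\bigcup_z B_\delta(z)$. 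Then
\begin{equation*}
(L_s+\delta\,\widetilde Q)^{-1}=L_s^{-1}\bigl(\mathbf 1+\delta\,\widetilde Q\,L_s^{-1}\bigr)^{-1}
=L_s^{-1}\sum_{k=0}^\infty(-\delta)^k(\widetilde Q\,L_s^{-1})^k
\end{equation*}
converges in operator norm and defines a bounded operator $H^{N_+}_{\sigma,R_\infty}\to H^{N_+}_{\sigma,R_\infty,2}$, holomorphic in $s$ on that set because $L_s^{-1}$ is. For meromorphicity on all of $K$: near a pole $s_0\in\mathscr Q_{\rm reg}$, write $L_s=-(\mathbf 1-K_{s,\lambda}T_s)T_s^{-1}$ as in the proof of Proposition \ref{prop:mainprop}, so that $L_s+\delta\widetilde Q = -\bigl(\mathbf 1-K_{s,\lambda}T_s+\delta\widetilde Q T_s\bigr)T_s^{-1} = -\bigl(\mathbf 1-(K_{s,\lambda}-\delta\widetilde Q)T_s\bigr)T_s^{-1}$; since $(K_{s,\lambda}-\delta\widetilde Q)T_s$ is still holomorphic and compact (by Proposition \ref{prop:boundandcompact} and boundedness of $\widetilde Q T_s$, noting compactness of $K_{s,\lambda}T_s$ survives the bounded perturbation composed with the compact $T_s$ only if one is careful — in fact $\widetilde Q T_s$ need not be compact, so instead absorb it: $\mathbf 1 - K_{s,\lambda}T_s + \delta\widetilde Q T_s = (\mathbf 1+\delta\widetilde Q T_s)(\mathbf 1 - (\mathbf 1+\delta\widetilde Q T_s)^{-1}K_{s,\lambda}T_s)$, and for $\delta$ small $\mathbf 1+\delta\widetilde Q T_s$ is invertible while $(\mathbf 1+\delta\widetilde Q T_s)^{-1}K_{s,\lambda}T_s$ is holomorphic and compact), the analytic Fredholm theorem applies and gives meromorphicity of $(L_s+\delta\widetilde Q)^{-1}$ on $K$.

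For part (ii): Fix $z\in\mathscr Q_{\rm reg}\cap B_{\sigma_0^2}$. Choose $\sigma,N_+$ with $z\in\Omega_{\sigma,N_+}$, so $\ker L_z\neq\{0\}$. The goal is to produce, for each small $\delta$, a pole of $(L_s+\delta\widetilde Q)^{-1}$ inside $B_\delta(z)$. This is exactly the situation handled by Hurwitz's theorem in Lemma \ref{lm:kappaconvergence}(i): take $\kappa_n\equiv 0$ (no regularization needed) and $\delta_n\downarrow 0$, set $T_{s,n}:=-(\mathfrak L_{s,\lambda}+\delta_n\widetilde Q+\epsilon_0 B_s)^{-1}$ — well-defined and bounded for $\delta_n<(\|\widetilde Q\|\,\|(\mathfrak L_{s,\lambda}+\epsilon_0 B_s)^{-1}\|)^{-1}$ and converging in operator norm to $T_s$ — and apply the conclusion of Lemma \ref{lm:kappaconvergence}(i): there is a sequence $s_n\to z$ with $\ker(\mathbf 1-K_{s_n,\lambda}T_{s_n,n})\neq\{0\}$, i.e.\ $s_n$ is a pole of $(L_{s_n}+\delta_n\widetilde Q)^{-1}$. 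For $n$ large enough that $\delta_n\leq\delta_0$ and $|s_n-z|<\delta$, this $s_n\in B_\delta(z)$ is the required pole. One must only check the hypotheses of Lemma \ref{lm:kappaconvergence} hold with $\widetilde Q$ in place of the generic bounded operator, which is immediate since it is stated for $\widetilde Q\in B(H^{N_++1}_{\sigma,R_\infty,2},H^{N_+}_{\sigma,R_\infty})$.

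The main obstacle is the uniformity in $\sigma$ (i.e.\ in the index $\alpha$) required by Definition \ref{def:stabspect}: the constant $C(K,\delta)$ bounding $\|L_s^{-1}\|$ on $K\setminus\bigcup_z B_\delta(z)$ must be controlled uniformly over all $|\sigma|<\sigma_0$ with $K\subset\Omega_\sigma$. This follows because, by Proposition \ref{prop:accumulation}(iii), the pole set and the kernels of $L_s$ are independent of $\sigma$ on overlapping regions $\Omega_\sigma\cap\Omega_{\sigma'}$, and the norm $\|\cdot\|_{H^{N_+}_{\sigma,R_\infty}}$ is monotone in $|\sigma|$ in the relevant direction, so $\|L_s^{-1}\|_{H^{N_+}_{\sigma,R_\infty}\to H^{N_+}_{\sigma,R_\infty,2}}$ can be bounded on $K$ in terms of its value for a single fixed $\sigma$ with $K\subset\Omega_\sigma$; one then takes $\delta_0$ depending only on that fixed $\sigma$, $K$, $\epsilon$ and $\|\widetilde Q\|$. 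Assembling these pieces, and invoking the reduction in the proof of Theorem \ref{thm:stab} that invertibility of $\mathcal A_{\delta\cdot Q}-s\mathbf 1$ is equivalent to invertibility of $L_s+\delta\widetilde Q$, yields both conditions (1) and (2) of Definition \ref{def:stabspect}, proving the proposition.
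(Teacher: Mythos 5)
Your proposal follows the paper's own argument in both parts: the Neumann series factorization $L_s+\delta\widetilde Q=(\mathbf 1+\delta\,\widetilde Q L_s^{-1})L_s$ on the compact set with the pole-neighbourhoods excised for part (i), and Lemma \ref{lm:kappaconvergence} with $\kappa_n\equiv 0$ (Hurwitz) for part (ii). Two remarks are worth making.

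First, your analytic Fredholm step — rewriting $\mathbf 1-K_{s,\lambda}T_s+\delta\widetilde Q T_s=(\mathbf 1+\delta\widetilde Q T_s)\bigl(\mathbf 1-(\mathbf 1+\delta\widetilde Q T_s)^{-1}K_{s,\lambda}T_s\bigr)$ to get meromorphicity of $(L_s+\delta\widetilde Q)^{-1}$ on all of $K$, including near the poles — is a genuine addition: the paper's proof only establishes holomorphy away from the excised balls and leaves the meromorphicity claim implicit. Your factorization is the right way to fill this in, and your self-correction about $\widetilde Q T_s$ not being compact is apt. The uniformity-in-$\sigma$ discussion is likewise more explicit than the paper's "provided $|\sigma|<\sigma_0$".

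Second, your choice $\delta_0:=\min\{\epsilon_{\text{aux}},(2C(K,\delta)\|\widetilde Q\|)^{-1}\}$ is circular, and the parenthetical justification that "both constraints are of the form $\delta$ small" does not hold: $C(K,\delta)=\sup\|L_s^{-1}\|$ over $K\setminus\bigcup_{z}B_\delta(z)$ blows up at least like $\delta^{-1}$ as $\delta\downarrow 0$ (like $\delta^{-m}$ for a pole of order $m$), so the product $\delta\, C(K,\delta)$ need not become small. This defect is inherited from the statement itself, which excises balls of radius $\delta$ rather than of the radius $\epsilon$ that is introduced in the hypothesis and then never used; the consistent reading — matching Definition \ref{def:stabspect}, where the disc radius $\epsilon$ is fixed first and $\delta_0=\delta_0(\epsilon,K)$ — is to excise $B_\epsilon(z)$, bound $\|L_s^{-1}\|\le C(K,\epsilon)$ there by compactness and Proposition \ref{prop:accumulation}, and take $\delta_0<(C(K,\epsilon)\|\widetilde Q\|)^{-1}$. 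You should state the argument that way rather than letting $\delta$ play both roles.
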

\begin{proof}
Part (i) follows from the fact that for $s\in K\setminus ( \bigcup_{z\in \mathscr{Q}_{\rm reg}}B_{\delta}(z)\cap K)$, we can write.
\begin{equation*}
L_s+\delta \cdot \widetilde{Q}=(1+\delta \cdot \widetilde{Q} L_s^{-1})L_s
\end{equation*}
and for suitably small $\delta$, $\|\widetilde{Q} L_s^{-1}\|_{H^{N_+}_{\sigma,R_{\infty}}\to H^{N_+}_{\sigma,R_{\infty}}}<1$ for all $s\in K\setminus (\bigcup_{z\in \mathscr{Q}_{\rm reg}}B_{\delta}(z)\cap K)$, provided $|\sigma|<\sigma_0$, so
\begin{equation*}
(L_s+\delta \cdot \widetilde{Q})^{-1}=L_s^{-1}(1+\delta \cdot Q L_s^{-1})^{-1}
\end{equation*}
is well-defined.

Part (ii) follows directly from Lemma \ref{lm:kappaconvergence} with $\kappa_n=0$ and the observation that for $|\sigma|<\sigma_0$ we have that
\begin{equation*}
	\bigcup_{\sigma\in (-\sigma_0,\sigma_0)}\Omega_{\sigma}=\Omega\cap B_{\sigma_0^2}. \qedhere
\end{equation*}
\end{proof}

\section{Kerr--de Sitter quasinormal modes}
\label{sec:KdS}
Let $(\mathcal{M}_{M,a,\Lambda},g_{M,a,\Lambda})$ be Kerr de Sitter spacetimes with mass $M$, angular momentum/mass $a$ and cosmological constant $\Lambda$, such that $\kappa_+$ remains uniformly bounded away from zero as $\Lambda\downarrow 0$. We denote by $r_+$ the radius of the event horizon, by $r_c$ the radius of the cosmological horizon and by $\kappa_c$ the surface gravity of the cosmological horizon. Then it can easily be verified that:
\begin{equation*}
	\kappa_c\sim \sqrt{\Lambda}\sim r_c^{-1}.
\end{equation*}

The wave operator $\square_{g_{M,a,\Lambda}}$ takes the following form with respect to Boyer--Linquist coordinates $(t,r,\theta,\varphi)$:
\begin{multline*}
	\rho^2 \square_{g_{M,a,\Lambda}}\phi=\partial_r(\Delta_r \partial_r\phi) -\Xi^2\left[\frac{(r^2+a^2)^2}{\Delta_r}-\frac{a^2\sin^2\theta}{\Delta_{\theta}}\right]\partial_t^2\phi-2a\frac{\Xi^2}{\Delta_r}\left[(r^2+a^2)-\frac{\Delta_r}{\Delta_{\theta}}\right]\partial_t\partial_{\varphi}\phi\\
	+\frac{1}{\sin\theta}\partial_{\theta}(\Delta_{\theta}\sin\theta \partial_{\theta}\phi)+\Xi^2\left(\frac{1}{\Delta_{\theta} \sin^2\theta}-\frac{a^2}{\Delta_r}\right)\partial_{\varphi}^2\phi,
\end{multline*}
with
\begin{align*}
	\rho^2=&\:r^2+a^2\cos^2\theta,\\
	\Delta_r=&\: (r^2+a^2)\left(1-\frac{\Lambda}{3}r^2\right)-2Mr,\\
	\Delta_{\theta}=&\:1+\frac{\Lambda a^2}{3}\cos^2\theta,\\
	\Xi=&\: 1+\frac{\Lambda a^2}{3}.
\end{align*}
Furthermore, $\kappa_c=-\frac{\Xi}{2(r_c^2+a^2)}\frac{d}{dr}\Delta_r|_{r=r_c}$. By using that $\Delta(r_c)=0$, we moreover obtain:
\begin{equation}
\label{eq:estLambdarc}
	0\leq 1-\frac{\Lambda}{3r_c^2}=\frac{2M}{1+a^2r_c^{-2}}r_c^{-1}\leq 2M r_c^{-1}.
\end{equation}

With respect to the vector fields
\begin{align*}
	T=&\:\Xi \partial_t,\\
	\Phi=&\:\Xi \partial_{\varphi},\\
	Y=&\partial_r-\frac{r^2+a^2}{\Delta_r}\Xi\partial_t-\frac{a}{\Delta_r}\Xi\partial_{\varphi},
\end{align*}
we can express:
\begin{multline*}
	\rho^2 \square_{g_{M,a,\Lambda}}\phi=Y(\Delta_r Y\phi) +2a  Y\Phi \phi+\frac{1}{\sin\theta}\partial_{\theta}(\Delta_{\theta}\sin\theta \partial_{\theta}\phi)+\frac{1}{\Delta_{\theta} \sin^2\theta}\Phi^2\phi\\
	+2(r^2+a^2) YT\phi +2r T\phi+\frac{1}{\Delta_{\theta}}a^2\sin^2\theta T^2\phi+2a\frac{1}{\Delta_{\theta}}T\Phi\phi.
\end{multline*}
Define $v=\Xi^{-1} t+r_*$ with $\frac{dr_*}{dr}(r)=\frac{r^2+a^2}{\Delta_r}$ and $\varphi_*=\Xi^{-1} \varphi+\int_{r_0}^{r}\frac{a}{\Delta_r}\mod 2\pi$. With respect to $(v,r,\theta,\varphi_*)$ coordinates, we have that $T=\partial_v$, $\Phi=\partial_{\varphi_*}$ and $Y=\partial_r$.

In a region where $r-r_c$ is suitably small, we will view $\square_{g_{M,a,\Lambda}}\phi$ as a perturbation of $\square_{g_{0,a,\Lambda_c}}\phi$, which is the de Sitter wave operator with cosmological constant $\Lambda_c=\frac{3}{r_c^2}$. We denote $\mathring{\Delta}_{\theta}=\frac{\Lambda_c a^2}{3}\cos^2\theta$ and $\mathring{\Delta}_{r}= (r^2+a^2)\left(1-\frac{\Lambda_c}{3}r^2\right)$, which correspond to the expressions for $\Delta_r$ and $\Delta_{\theta}$, respectively, in the case $M=0$ and $\Lambda=\Lambda_c$. 

Note that we can estimate
\begin{equation*}
	\left|\Delta_{\theta}-\mathring{\Delta}_{\theta}\right|=a^2\left|\frac{\Lambda}{3}-r_c^{-2}\right|\cos^2\theta\leq C a^2\kappa_c^2.
\end{equation*}

Then
\begin{multline*}
	\rho^2 \square_{g_{M,a,\Lambda}}\phi=\rho^2 \square_{g_{0,a,\Lambda_c}}\phi+ Y((\Delta_r-\mathring{\Delta}_{r})Y\phi)+\frac{1}{\sin\theta}\partial_{\theta}((\Delta_{\theta}-\mathring{\Delta}_{\theta})\sin\theta \partial_{\theta}\phi)\\
	+(\Delta_{\theta}^{-1}-\mathring{\Delta}_{\theta}^{-1})\left(\frac{1}{ \sin^2\theta}\Phi^2\phi+a^2\sin^2\theta T^2\phi+2aT\Phi\phi \right).
\end{multline*}

Consider the following coordinate transformations in the region $\{r\geq \frac{R_{\infty}}{\sqrt{2}}\}$ with $r_+<R_{\infty}<r_c$:
\begin{align*}
	\varrho(r,\theta)=&\:\sqrt{\frac{1}{1+a^2r_c^{-2}}(r^2 \mathring{\Delta}_{\theta}+a^2\sin^2\theta)},\\
\tan \vartheta=&\: \sqrt{\frac{1+a^2r^{-2}}{1+a^2r_c^{-2}}}\tan \theta,\\
\tilde{\varphi}=&\:\varphi-\frac{a}{r_c^2}t.
\end{align*}
Note that
\begin{align}
	\label{eq:varrhods}
	\varrho(r,\theta)=\:&r+\frac{a^2}{r_c^2+a^2}\sin^2\theta(r_c-r)+r^{-1} O_1((r_c-r)^2),\\
	\label{eq:varthetads}
	\vartheta(r,\theta)=\:&\theta+\frac{a^2}{r^2}O_1(r_c-r).
\end{align}

The coordinates $(t,\varrho,\vartheta,\tilde{\varphi})$ correspond to standard static coordinates on de Sitter with cosmological constant $\Lambda_c$ (see for example \cite{givo17}[\S V]). That is to say, we can write:
\begin{equation*}
	g_{0,a,\Lambda_c}=-\left(1-\Lambda_c \varrho^2\right)dt^2+\left(1-\Lambda_c \varrho^2\right)^{-1}d\varrho^2+\varrho^2(d\vartheta^2+\sin^2\vartheta d\tilde{\varphi}^2).
\end{equation*}

Let $\tilde{u}(t,\varrho)=t-\int_{R_{\infty}}^{\varrho}\frac{r_c^2}{r_c^2-\rho'^2}\,d\varrho'$. Let $T=\partial_{\tilde{u}}$ and $X=\partial_{\varrho}$. Then we can express:
\begin{equation*}
	\square_{g_{0,a,\Lambda_c}}\phi=r_c^{-2}\varrho^{-2}X(\varrho^2(r_c^2-\varrho^2) X\phi)+\varrho^{-2}\slashed{\Delta}_{\s^2,\vartheta}\phi-2\varrho^{-1}XT(\varrho \phi).
\end{equation*}
Let $\psi=\varrho \phi$. Then
\begin{equation*}
	\varrho^3\left(\square_{g_{0,a,\Lambda_c}}\phi-\frac{2}{r_c^2}\phi\right)=\varrho^2 X((1-r_c^{-2}\varrho^2) X\psi)+\slashed{\Delta}_{\s^2,\vartheta}\psi-2\varrho^2XT\psi.
\end{equation*}

Let $x=\varrho^{-1}-r_c^{-1}$, so that $\partial_x=-\rho^2X$.

We then obtain:
\begin{equation*}
	\varrho^3\left(\square_{g_{0,a,\Lambda_c}}\phi-\frac{2}{r_c^2}\phi\right)=\partial_x(x(x+2r_c^{-1})\partial_x\psi)+\slashed{\Delta}_{\s^2,\vartheta}\psi+2\partial_xT\psi.
\end{equation*}

We now extend the functions $\varrho$, $\vartheta$, $\tilde{\varphi}$ to $\{r\geq r_+\}$, such that $\varrho=r$, $\vartheta=\theta$ and $\tilde{\varphi}=\varphi$ for $r\leq \frac{R_{\infty}}{2}$. We consider the following time function:
\begin{equation*}
	\tau(t,\varrho)=\Xi^{-1} t+\int_{r_0}^r\frac{(\varrho^2+a^2)(1+\mathbbm{h}(\varrho'))}{\Delta_r (\varrho')}\,d\varrho'.
\end{equation*}
We consider $\mathbbm{h}$ such that $\tau$ are smooth,  spacelike hypersurfaces that intersect $\mathcal{H}^+$ and $\mathcal{C}^+$ to the future of the bifurcation spheres. We moreover demand that the $\tau$-level sets are analytic in a neighbourhood of $\mathcal{C}^+$ and that $\mathbbm{h}$ and its derivatives can be uniformly bounded as $\Lambda\downarrow 0$.

We then consider the coordinate chart $(\tau,x=\varrho^{-1}-r_c^{-1},\vartheta,\tilde{\varphi})$.

\begin{proposition}
The conformal wave operator on Kerr--de Sitter can be decomposed as follows:
\begin{equation*}
	\varrho^3\left(\square_{g_{M,a,\Lambda}}\phi-\frac{2}{3}\Lambda \phi\right)= \partial_x(x(2\kappa_c+x) \partial_x\psi)+\slashed{\Delta}_{\s^2,\vartheta}\psi-2\partial_x T\psi+\epsilon_0 \mathcal{B}_{\Lambda}^{\infty}+\mathcal{K}_{\Lambda}^{\infty}+ \kappa_c^2  \mathcal{B}_{\Lambda}^{\mathcal{C}^+},
\end{equation*}
where the coefficients of the differential operators $\mathcal{B}_{\Lambda}^{\infty}$, $\mathcal{K}_{\Lambda}^{\infty}$ and $\mathcal{B}_{\Lambda}^{\mathcal{C}^+}$ stay bounded as $\Lambda\downarrow 0$. Furthermore, $\mathcal{B}_{\Lambda}^{\infty}$ and $\mathcal{K}_{\Lambda}^{\infty}$ have the same form as the operators $\mathcal{B}^{\infty}$ and $\mathcal{K}^{\infty}$ from Lemma \ref{lm:operatorcoeffb} and $\mathcal{B}_{\Lambda}^{\mathcal{C}^+}$ can contain up to second-order angular derivatives and $\tau$-derivatives, up to first order $x$-derivatives and second-order $x$-derivatives of the form $\kappa_c x\partial_x^2$. The constant $\epsilon_0>0$ can be chosen arbitrarily small for $M^{-1}R_{\infty}$ suitably large.
\end{proposition}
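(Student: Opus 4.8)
The plan is to proceed by direct computation, decomposing the conformal wave operator $\varrho^3(\square_{g_{M,a,\Lambda}}-\tfrac23\Lambda)$ region by region, exactly mirroring the structure of the proof of Lemma \ref{lm:operatorcoeffb} in the $\Lambda=0$ case but keeping careful track of the $\kappa_c$- and $\Lambda$-dependence. The key inputs are the expression for $\rho^2\square_{g_{M,a,\Lambda}}$ in terms of the vector fields $(T,Y,\partial_\theta,\Phi)$, the decomposition $\square_{g_{M,a,\Lambda}}=\square_{g_{0,a,\Lambda_c}}+\text{(perturbation)}$ valid near $\mathcal{C}^+$, and the clean form of $\varrho^3(\square_{g_{0,a,\Lambda_c}}-\tfrac2{r_c^2})$ in the $(\tau,x,\vartheta,\tilde\varphi)$ chart derived just above, namely $\partial_x(x(x+2r_c^{-1})\partial_x\psi)+\slashed\Delta_{\s^2,\vartheta}\psi+2\partial_x T\psi$. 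Since $\kappa_c=-\frac{\Xi}{2(r_c^2+a^2)}\frac{d}{dr}\Delta_r|_{r=r_c}$ and $r_c^{-1}\sim\kappa_c$, the leading-order coefficient $x+2r_c^{-1}$ is $x+2\kappa_c$ up to an $O(\kappa_c^2)$ correction, which is absorbed into $\kappa_c^2\mathcal{B}_\Lambda^{\mathcal{C}^+}$; the sign in front of $T\psi$ is fixed by the convention $\tau$ decreasing towards $\mathcal C^+$, consistent with the claimed $-2\partial_x T\psi$.

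First I would split $\{r\ge r_+\}$ into $\{\varrho\le \tfrac12 R_\infty\}$, the transition region $\{\tfrac12 R_\infty\le \varrho\le R_\infty\}$, and $\{\varrho\ge R_\infty\}$ (equivalently $r$ close to $r_c$). In the innermost region $\varrho=r$, $\vartheta=\theta$, $\tilde\varphi=\varphi$, and one simply expands $Y=X+\mathbbm{h}T$ as in Lemma \ref{lm:operatorcoeffb}, producing terms of the form $\mathcal{B}^\infty$, $\mathcal{K}^\infty$ (the zeroth-order $\partial_\tau^2$ term) plus a $\mathcal{K}^+$-type contribution; here nothing new happens and the $\Lambda$-dependence enters only through $\Delta_r$, $\Delta_\theta$, $\Xi$, whose $\Lambda\downarrow 0$ limits are manifestly smooth, so boundedness as $\Lambda\downarrow 0$ is immediate. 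In the outermost region I would start from the already-computed $\varrho^3(\square_{g_{0,a,\Lambda_c}}-\tfrac2{r_c^2})$, then add the three perturbative correction terms $Y((\Delta_r-\mathring\Delta_r)Y\phi)$, $\frac1{\sin\theta}\partial_\theta((\Delta_\theta-\mathring\Delta_\theta)\sin\theta\partial_\theta\phi)$, $(\Delta_\theta^{-1}-\mathring\Delta_\theta^{-1})(\dots)$, using $|\Delta_\theta-\mathring\Delta_\theta|\le Ca^2\kappa_c^2$ and the analogous bound on $\Delta_r-\mathring\Delta_r$ from $1-\tfrac\Lambda3 r_c^2\le 2Mr_c^{-1}$ and $r\ge R_\infty/\sqrt2$; after rewriting $Y$, $\partial_\theta$ in terms of $X=\partial_\varrho$, $\Theta=\partial_\vartheta$, $T=\partial_\tau$ via \eqref{eq:varrhods}--\eqref{eq:varthetads}, the correction terms that are $O(\kappa_c^2)$ in size get collected into $\kappa_c^2\mathcal{B}_\Lambda^{\mathcal{C}^+}$ (allowing the listed derivative structure: up to second-order angular, up to first-order $x$, second-order $x$ only in the tame combination $\kappa_c x\partial_x^2$), the remaining large-$r$/asymptotic-flatness corrections of relative size $M\varrho^{-1}$ into $\epsilon_0\mathcal{B}_\Lambda^\infty$ (with $\epsilon_0$ small for $M^{-1}R_\infty$ large, exactly as in the $\Lambda=0$ case), and the zeroth-order pieces into $\mathcal{K}_\Lambda^\infty$. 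One then checks the difference $\tfrac23\Lambda-\tfrac2{r_c^2}=2(\tfrac\Lambda3-r_c^{-2})=O(\kappa_c^2\cdot\kappa_c^2 M)$ by \eqref{eq:estLambdarc}, so replacing $\tfrac2{r_c^2}$ by $\tfrac23\Lambda$ costs only a term absorbable into $\kappa_c^2\mathcal{B}_\Lambda^{\mathcal{C}^+}$ (or even $\mathcal{K}_\Lambda^\infty$). Finally the transition region is handled with cutoff functions, matching the two expressions; the cutoff derivatives are supported where $\varrho\sim R_\infty$ and contribute bounded lower-order terms.

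The main obstacle I expect is bookkeeping rather than conceptual: verifying that every correction term genuinely lands in one of the three allowed buckets with the claimed derivative structure, and in particular that the only second-order $x$-derivative appearing with a non-$O(\kappa_c^2)$ coefficient is the designed $\partial_x(x(2\kappa_c+x)\partial_x\psi)$ — i.e., that no stray $x^0\partial_x^2$ or $\kappa_c^0\cdot x\partial_x^2$ survives outside it. This requires being careful with the chain rule $\partial_x=-\varrho^2 X$ and with the precise $\varrho$-expansions \eqref{eq:varrhods}--\eqref{eq:varthetads}, whose remainders carry the right powers of $(r_c-r)$ and hence (since $r_c-r\lesssim r_c\sim\kappa_c^{-1}$ is \emph{not} small, but the coefficient $a^2/(r_c^2+a^2)=O(1)$ is what multiplies it, and the combination $\tfrac{a^2}{r_c^2+a^2}(r_c-r)$ reorganizes into the $\kappa_c x$ structure after passing to $x=\varrho^{-1}-r_c^{-1}$) of $x$ and $\kappa_c$. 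I would present the region-by-region computation schematically, stating that the leading terms combine to $\partial_x(x(2\kappa_c+x)\partial_x\psi)+\slashed\Delta_{\s^2,\vartheta}\psi-2\partial_x T\psi$ and that all remaining terms are of the three stated types with coefficients bounded uniformly as $\Lambda\downarrow 0$, and leave the elementary verification of the individual coefficient bounds to the reader, exactly as Lemma \ref{lm:operatorcoeffb} does in the asymptotically flat case.
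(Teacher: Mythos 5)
Your overall strategy coincides with the paper's: away from $\mathcal{C}^+$ invoke Lemma \ref{lm:operatorcoeffb}, near $\mathcal{C}^+$ compare with $\square_{g_{0,a,\Lambda_c}}$ in the $(\tau,x,\vartheta,\tilde\varphi)$ chart via \eqref{eq:varrhods}--\eqref{eq:varthetads}, and use \eqref{eq:estLambdarc} to trade $\tfrac{2}{r_c^2}$ for $\tfrac23\Lambda$. The step that does not work as you state it is the treatment of the surface-gravity coefficient. You keep the comparison operator's principal part $\partial_x(x(x+2r_c^{-1})\partial_x\psi)$ and propose to absorb the discrepancy with $\partial_x(x(x+2\kappa_c)\partial_x\psi)$ into $\kappa_c^2\mathcal{B}^{\mathcal{C}^+}_\Lambda$ because $\kappa_c-r_c^{-1}=O(M\kappa_c^2)$. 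But that discrepancy is $2(\kappa_c-r_c^{-1})\big(\partial_x\psi+x\,\partial_x^2\psi\big)$, and its second-order part has coefficient of size $M\kappa_c^2\,x$: it is not of the form $\kappa_c^2\cdot\kappa_c x\partial_x^2$ permitted inside $\kappa_c^2\mathcal{B}^{\mathcal{C}^+}_\Lambda$ (the needed coefficient would be of size $M\kappa_c^{-1}$, unbounded as $\Lambda\downarrow0$), nor of the form $\epsilon_0\, b_{xx}\,x^2\partial_x^2$ allowed in $\epsilon_0\mathcal{B}^\infty_\Lambda$ (the needed coefficient $\sim M\kappa_c^2/(\epsilon_0 x)$ blows up at $x=0$, i.e.\ at $\mathcal{C}^+$). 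Your own consistency check (``no stray $x^0\partial_x^2$ or $\kappa_c^0\,x\partial_x^2$'') is too coarse to detect this: the stray term carries a $\kappa_c^2$ factor but lacks the extra $\kappa_c x$ or $x^2$ weight the proposition requires. As written, your decomposition therefore proves a weaker statement (your $\mathcal{B}^{\mathcal{C}^+}_\Lambda$ contains a plain bounded-coefficient $x\partial_x^2$); it would likely still serve the application in \S\ref{sec:KdS}, but it is not the proposition as stated.

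The repair is exactly the one computation the paper's proof displays: the shift from $r_c^{-1}$ to $\kappa_c$ is not an error to be absorbed but is supplied by the perturbation. Since $\Delta_r-\mathring{\Delta}_r$ vanishes at $r=r_c$ and $\frac{d}{dr}(\Delta_r-\mathring{\Delta}_r)\big|_{r=r_c}$ is proportional to $\kappa_c-r_c^{-1}$, the term $\varrho^3\rho^{-2}Y\big((\Delta_r-\mathring{\Delta}_r)Y\phi\big)$ produces, to leading order near $\mathcal{C}^+$, both $2(\kappa_c-r_c^{-1})\partial_x\psi$ and the matching $2(\kappa_c-r_c^{-1})x\partial_x^2\psi$ from its second-derivative piece, i.e.\ the full divergence-form difference $2(\kappa_c-r_c^{-1})\partial_x(x\partial_x\psi)$; recombined with $\partial_x(x(x+2r_c^{-1})\partial_x\psi)$ this yields exactly the stated principal part, and the genuine leftovers are of the admissible types ($Mx^3\partial_x^2$, $M\kappa_c x^2\partial_x^2$, $\kappa_c^3 x\partial_x^2$, first- and zeroth-order terms with the stated weights). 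Equivalently, bypass the comparison and compute the $\partial_x^2\psi$-coefficient exactly: using $\Delta_r(r_c)=0$ and the definition of $\kappa_c$, its linear-in-$x$ slope is exactly $2\kappa_c$ (for $a=0$ one finds $\Delta_r/r^4=x(x+2\kappa_c)-6Mr_c^{-1}x^2-2Mx^3$). With this correction your region-by-region argument matches the paper's proof, which is otherwise equally schematic about the remaining bookkeeping.
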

\begin{proof}
The decomposition follows by applying Lemma \ref{lm:operatorcoeffb}	(the $\Lambda=0$) case, combined with the observation that additional terms that vanish when $\Lambda=0$ contain a factor of $\Lambda$ and hence can be written as follows $\kappa_c^2  \mathcal{B}_{\Lambda}^{\mathcal{C}^+}$, where $\mathcal{B}_{\Lambda}^{\mathcal{C}^+}$ is a second-order operator with coefficients that stay bounded as $\Lambda\downarrow 0$.

We will not derive the precise form of $\mathcal{B}_{\Lambda}^{\infty}$, $\mathcal{K}_{\Lambda}^{\infty}$ and $\mathcal{B}_{\Lambda}^{\mathcal{C}^+}$ in detail, but will will include a schematic version of the key relevant computation.

By applying \eqref{eq:varrhods} and \eqref{eq:varthetads}, we can write:
\begin{multline*}
	\varrho^3\rho^{-2}Y((\Delta_r-\mathring{\Delta}_r)Y\phi)=\varrho^3\rho^{-2}Y((\Delta_r-\mathring{\Delta}_r)Y(\varrho^{-1}\psi))\\
	=\varrho Y(\varrho^{-1}(\Delta_r-\mathring{\Delta}_r)Y\psi)-\varrho Y(\varrho^{-2}(\Delta_r-\mathring{\Delta}_r)\psi)\\
	=\frac{d}{dr}(\Delta_r-\mathring{\Delta}_r)\left(X\psi-\varrho^{-1}\psi\right)+\ldots\\
	=-2(\kappa_c-r_c^{-1}) r_c^2X\psi+\frac{d}{dr}\left(\frac{\Lambda-\Lambda_c}{3}r^2(r^2+a^2)+2M r\right)\varrho^{-1}\psi+\ldots\\
	=+2(\kappa_c-r_c^{-1}) \partial_x\psi+\frac{4}{3}(\Lambda-3r_c^{-2})r_c^2\psi+\ldots,
\end{multline*}
where the terms in $\ldots$ are part of $\epsilon_0 \mathcal{B}_{\Lambda}^{\infty}$, $\mathcal{K}_{\Lambda}^{\infty}$ and $\kappa_c^2 \mathcal{B}_{\Lambda}^{\mathcal{C}^+}$, for $M^2\Lambda$ suitably small depending on $\epsilon_0$.

Note moreover that by \eqref{eq:estLambdarc}:
\begin{align*}
	\frac{4}{3}\left(\Lambda-3r_c^{-2}\right)=r_c^{-2}(\Lambda r_c^{-2}-3)=M O(r_c^{-3}),\\
	\frac{2}{3}\Lambda_c r_c^2\psi=\frac{2}{3}\Lambda r_c^2\psi+M O(r_c^{-1})\psi.
\end{align*}
\end{proof}

Denote $L_{s,\Lambda}:=\varrho^3\left(\square_{g_{M,a,\Lambda}}-\frac{2}{3}\Lambda \right)(\varrho^{-1}\cdot)$, $B_{s,\Lambda}^{\infty}= e^{-s\tau}\mathcal{B}_{\Lambda}^{\infty}(e^{s\tau}(\cdot))$, $K_{s,\Lambda}^{\infty}= e^{-s\tau}\mathcal{K}_{\Lambda}^{\infty}(e^{s\tau}(\cdot))$ and $B_{s,\Lambda}^{\mathcal{C}^+}= e^{-s\tau}\mathcal{B}_{\Lambda}^{\mathcal{C}^+}(e^{s\tau}(\cdot))$. Then
\begin{equation*}
	L_{s,\Lambda}=\partial_x(x(x+2\kappa_c) \partial_x\psi)+\slashed{\Delta}_{\s^2,\vartheta}\psi-2 s\partial_x \psi+M x B_{s,\Lambda}^{\infty}+K_{s,\Lambda}^{\infty}+ \kappa_c^2  B_{s,\Lambda}^{\mathcal{C}^+}.
\end{equation*}

\begin{proposition}
We can decompose:
\begin{equation*}
	L_{s,\Lambda}=\mathfrak{L}_{s,\lambda,2\kappa_c}+\epsilon_0\cdot B_{s,\Lambda}+\kappa_c^2 B_{s,\Lambda}^{\mathcal{C}^+}+K_{s,\lambda,\Lambda},
\end{equation*}
where 
\begin{align}
\label{eq:operatorboundedLambda}
	r^{-2}\rho^{2}\mathfrak{L}_{s,\lambda,\kappa_c}=&\:r\partial_r(\Delta_r \partial_r(r^{-1}\cdot))+2ar\partial_r\partial_{\varphi_*}(r^{-1}\cdot)+\slashed{\Delta}_{\s^2}(\cdot)\\ \nonumber
	&\:+sr(4 Mr-2(1-\mathbbm{h})\Delta_r)\partial_r(r^{-1}\cdot)-\lambda\pi_{\leq l}\quad \textnormal{for}\quad  r=\varrho\leq \frac{R_{\infty}}{2},\\
	\label{eq:operatorfarawayLambda}
	(\mathfrak{L}_{s,\lambda,2\kappa_c}f)_{\ell}=&\:\partial_{x}(x(2\kappa_c+x)\partial_{x}f_{\ell})+2 s\partial_{x}f_{\ell}-\mathfrak{l}(\mathfrak{l}+1)f_{\ell}\quad \textnormal{for}\quad\varrho\geq R_{\infty}.
	\end{align}
Furthermore:
\begin{enumerate}[label=\emph{(\roman*)}]
\item There exists a constant $C_{s,\sigma, R_{\infty}}>0$ such that:
	\begin{equation*}
	\|\mathfrak{L}_{s,\lambda,2\kappa_c}^{-1}\|_{ H_{ \sigma, R_{\infty}}^{N_+}\to  H_{ \sigma, R_{\infty},2}^{N_++1}}\leq C_{s,\sigma, R_{\infty}}.
	\end{equation*}
	For all $f\in H_{ \sigma, R_{\infty}}^{N_+}$ we have additionally that
	\begin{equation*}
	\kappa_c \|x\partial_x^2\mathfrak{L}_{s,\lambda,2\kappa_c}^{-1}(f)\|_{G_{\sigma,R_{\infty}}}+\kappa_c \|\slashed{\Delta}_{\s^2 }\mathfrak{L}_{s,\lambda,\kappa}^{-1}(f)\|_{G_{\sigma,R_{\infty}}}\leq C_{s,\sigma}\|f\|_{H_{ \sigma, R_{\infty}}^{N_+}}.
	\end{equation*}
\item For $M^2\Lambda$ suitably small,
$B_{s,\Lambda}\circ \mathfrak{L}_{s,2\kappa_c,\lambda}^{-1}: H^{N_+}_{\sigma,R_{\infty}}\to H^{N_+}_{\sigma,R_{\infty}}$ and $\kappa_c B_s^{\mathcal{C}^+}\circ \mathfrak{L}_{s,2\kappa_c,\lambda}^{-1}: H^{N_+}_{\sigma,R_{\infty}}\to H^{N_+}_{\sigma,R_{\infty}}$ are bounded operators and $K_{s,\lambda,\Lambda}\circ \mathfrak{L}_{s,2\kappa_c,\lambda}^{-1}: H^{N_+}_{\sigma,R_{\infty}}\to H^{N_+}_{\sigma,R_{\infty}}$ is compact.
\end{enumerate}
\end{proposition}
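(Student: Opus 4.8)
The plan is to reduce the Kerr--de Sitter operator, for $M^2\Lambda$ small, to the $\Lambda=0$ analysis carried out in \S\S\ref{sec:redshiftelliptic}--\ref{sec:invtLs}, treating the cosmological constant as a perturbation parameter that only affects the structure of the operator near the cosmological horizon. The decomposition $L_{s,\Lambda}=\mathfrak{L}_{s,\lambda,2\kappa_c}+\epsilon_0\cdot B_{s,\Lambda}+\kappa_c^2 B_{s,\Lambda}^{\mathcal{C}^+}+K_{s,\lambda,\Lambda}$, together with the explicit forms \eqref{eq:operatorboundedLambda}--\eqref{eq:operatorfarawayLambda}, has already been established by the preceding proposition; so the remaining content is (i) the uniform-in-$\Lambda$ resolvent bounds for the model operator $\mathfrak{L}_{s,\lambda,2\kappa_c}$, and (ii) the boundedness/compactness statements for the three composed error operators.

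For (i): observe that $\mathfrak{L}_{s,\lambda,2\kappa_c}$ coincides, up to terms supported in $\{r\le \tfrac{R_\infty}{2}\}$ that do not alter the red-shift and elliptic estimates, with the $\kappa$-regularized operator $\mathfrak{L}_{s,\lambda,\kappa}$ of \S\ref{sec:invtLs} with the identification $\kappa=2\kappa_c$. Indeed \eqref{eq:operatorfarawayLambda} is precisely \eqref{eq:operatorfaraway} with $\kappa=2\kappa_c$, and \eqref{eq:operatorboundedLambda} differs from \eqref{eq:operatorbounded} only through $\Delta_r$ replacing $\Delta$, a change that is $O(\Lambda)$-close and confined to the region where the red-shift estimate of Corollary \ref{cor:redshiftho} is driven by the positivity of $\kappa_+$ (which by hypothesis stays bounded away from $0$ as $\Lambda\downarrow 0$). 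Hence the global estimate of Proposition \ref{prop:fullmainest}, whose constant $C_{s,\lambda,\delta,R_\infty}$ is uniform in $\kappa$, applies verbatim with $\kappa=2\kappa_c$. Invoking Theorem \ref{tm:invertkappapos} and Corollary \ref{cor:unifboundinvkappapos} with $\kappa=2\kappa_c>0$ (valid since ${\re s}>-(N_+{+}\tfrac12)\kappa_+$ and $2\kappa_c$ is small) then yields the bounded inverse $\mathfrak{L}_{s,\lambda,2\kappa_c}^{-1}\colon H^{N_+}_{\sigma,R_\infty}\to H^{N_++1}_{\sigma,R_\infty,2}$ with the stated $\kappa_c$-independent bound and the auxiliary estimate \eqref{eq:addestposkappa}, which is exactly the displayed second inequality with the $\kappa$ replaced by $\kappa_c$.

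For (ii): both $B_{s,\Lambda}^\infty$ and $K_{s,\Lambda}^\infty$ are of the same differential-operator type as $\mathcal{B}^\infty$, $\mathcal{K}^\infty$ in Lemma \ref{lm:operatorcoeffb}, with coefficients bounded uniformly as $\Lambda\downarrow 0$; hence the boundedness of $B_{s,\Lambda}\colon H^{N_++1}_{\sigma,R_\infty,2}\to H^{N_+}_{\sigma,R_\infty}$ and the compactness of $K_{s,\lambda,\Lambda}\circ\mathcal R$ follow exactly as in Proposition \ref{prop:boundandcompact}, using the compact embedding $H^{N_++1}_{\sigma,R_\infty,2}\ssubset H^{N_+}_{\sigma,R_\infty}$; composing with the bounded operator $\mathfrak{L}_{s,\lambda,2\kappa_c}^{-1}$ of part (i) then gives boundedness of $B_{s,\Lambda}\circ\mathfrak{L}_{s,\lambda,2\kappa_c}^{-1}$ and compactness of $K_{s,\lambda,\Lambda}\circ\mathfrak{L}_{s,\lambda,2\kappa_c}^{-1}$ on $H^{N_+}_{\sigma,R_\infty}$. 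The one genuinely new point is the operator $\kappa_c^2 B_{s,\Lambda}^{\mathcal{C}^+}$: it contains up to two angular and $\tau$-derivatives, one $x$-derivative, and a second-order $x$-derivative only in the degenerate combination $\kappa_c x\partial_x^2$. The control of the first three types costs one factor of $\kappa_c$ (absorbed against the $H^{N_++1}_{\sigma,R_\infty,2}$-bound on $\mathfrak{L}_{s,\lambda,2\kappa_c}^{-1}$, which dominates $\kappa_c\|\slashed\Delta_{\s^2}\mathfrak{L}_{s,\lambda,2\kappa_c}^{-1}(\cdot)\|_{G_{\sigma,R_\infty}}$), while the $\kappa_c x\partial_x^2$ term is handled precisely by the auxiliary estimate \eqref{eq:addestposkappa}, i.e.\ $\kappa_c\|x\partial_x^2\mathfrak{L}_{s,\lambda,2\kappa_c}^{-1}(f)\|_{G_{\sigma,R_\infty}}\lesssim\|f\|_{H^{N_+}_{\sigma,R_\infty}}$; together these give $\|\kappa_c^2 B_{s,\Lambda}^{\mathcal{C}^+}\circ\mathfrak{L}_{s,\lambda,2\kappa_c}^{-1}f\|_{H^{N_+}_{\sigma,R_\infty}}\le C_{s,\sigma}\,\kappa_c\,\|f\|_{H^{N_+}_{\sigma,R_\infty}}$, which for $M\kappa_c\sim M\sqrt\Lambda$ small is a bounded (indeed small) operator.

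The main obstacle I anticipate is bookkeeping the region near the cosmological horizon carefully: one must verify that the coordinate change $(r,\theta)\mapsto(\varrho,\vartheta)$ adapted to the $M=0$, $\Lambda=\Lambda_c$ de Sitter metric brings $L_{s,\Lambda}$ into exactly the form for which the Gevrey machinery of \S\ref{sec:gevrey} applies with $\kappa=2\kappa_c$, including that the $\tau$-level sets are analytic in $x$ near $\mathcal C^+$ and that $\mathbbm h$ and its derivatives stay uniformly bounded as $\Lambda\downarrow 0$ — this is what guarantees the coefficient bounds in $B_{s,\Lambda}^\infty$, $K_{s,\Lambda}^\infty$, $B_{s,\Lambda}^{\mathcal C^+}$ are $\Lambda$-uniform. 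Once that is in place, the estimates transcribe mechanically from the $\Lambda=0$ case, since the surface gravity of $\mathcal C^+$ plays exactly the role of the regularizing parameter $\kappa$ that was artificially introduced in the asymptotically flat analysis.
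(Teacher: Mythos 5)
Your proposal is correct and follows essentially the same route as the paper: reduce to the $\kappa=2\kappa_c$ regularized operator, re-verify the red-shift estimates of Corollary \ref{cor:redshiftho} with $\Delta_r$ in place of $\Delta$ (using that $\kappa_+$ stays bounded away from zero and $\Delta_r-\Delta$ is small in $\{r\le\tfrac12 R_\infty\}$ for $M^2\Lambda\ll1$), then repeat Proposition \ref{prop:boundandcompact} and absorb $\kappa_c^2 B_{s,\Lambda}^{\mathcal{C}^+}$ via the $\kappa_c$-weighted estimates of part (i). Your explicit accounting of how each derivative type in $B_{s,\Lambda}^{\mathcal{C}^+}$ is paid for by a factor of $\kappa_c$ is a slightly more detailed rendering of exactly the argument the paper gives.
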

\begin{proof}
	The statement in (i) follows from an analogue of Corollary \ref{cor:unifboundinvkappapos}. Due to the presence of the factor $\Delta_r$ instead of $\Delta$ in \eqref{eq:operatorboundedLambda}, we cannot directly apply Corollary \ref{cor:unifboundinvkappapos}. Instead we need to prove an analogue of Corollary \ref{cor:redshiftho}	for $\Delta$ replaced with $\Delta_r$. Since the surface gravity $\kappa_+$ of $\mathcal{H}^+$ is still positive, by assumption, and away from $r=r_+$ the difference $\Delta_r-\Delta$ can be made arbitrarily small in $\{r\leq \frac{1}{2}R_{\infty}\}$ for $M^2\Lambda$ suitably small, the red-shift estimates in Corollary \ref{cor:redshiftho} still hold.
	
	The statement in (ii) then follows straightforwardly by repeating the proof of Proposition \ref{prop:boundandcompact}. The only difference is the presence of the operator $\kappa_c B_s^{\mathcal{C}^+}$, whose boundedness follows from the estimates in (i) that feature $\kappa_c$-dependent terms on the left-hand side.
	\end{proof}
	
	We then directly obtain the following corollary:
	
	\begin{corollary}
	\label{cor:Kdsmainresult}
		The conclusions of Propositions \ref{prop:mainprop}--\ref{prop:accumulation} hold also for $L_{s,2\kappa_c}$ replacing $L_{s,\kappa}$.
	\end{corollary}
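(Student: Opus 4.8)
The plan is to re-run the proofs of Propositions~\ref{prop:mainprop} and~\ref{prop:accumulation} (together with the auxiliary Lemma~\ref{lm:kappaconvergence}) with $\mathfrak{L}_{s,\lambda,\kappa}$ replaced by $\mathfrak{L}_{s,\lambda,2\kappa_c}$ and $L_{s,\kappa}$ replaced by $L_{s,\Lambda}$, using the two preceding propositions of this section in place of the corresponding $\Lambda=0$ ingredients. The decomposition $L_{s,\Lambda}=\mathfrak{L}_{s,\lambda,2\kappa_c}+\epsilon_0\cdot B_{s,\Lambda}+\kappa_c^2\,B_{s,\Lambda}^{\mathcal{C}^+}+K_{s,\lambda,\Lambda}$ plays exactly the role of \eqref{eq:Lsaspert}; the only structural novelty is that there are now \emph{two} small perturbation terms, $\epsilon_0\, B_{s,\Lambda}$ and $\kappa_c^2\, B_{s,\Lambda}^{\mathcal{C}^+}$, rather than one. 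By part~(ii) of the preceding proposition, smallness of $\epsilon_0$, and the bound $\kappa_c\lesssim\sqrt{\Lambda}$, we have $\|\epsilon_0\, B_{s,\Lambda}\,\mathfrak{L}_{s,\lambda,2\kappa_c}^{-1}\|+\|\kappa_c^2\, B_{s,\Lambda}^{\mathcal{C}^+}\,\mathfrak{L}_{s,\lambda,2\kappa_c}^{-1}\|<1$ on $H^{N_+}_{\sigma,R_{\infty}}$ once $\epsilon_0$ and $M^2\Lambda$ are sufficiently small, so a Neumann series gives that
\[
T_{s,\Lambda}:=-\bigl(\mathfrak{L}_{s,\lambda,2\kappa_c}+\epsilon_0\,B_{s,\Lambda}+\kappa_c^2\,B_{s,\Lambda}^{\mathcal{C}^+}\bigr)^{-1}
= -\mathfrak{L}_{s,\lambda,2\kappa_c}^{-1}\bigl(\mathbf{1}+(\epsilon_0\,B_{s,\Lambda}+\kappa_c^2\,B_{s,\Lambda}^{\mathcal{C}^+})\mathfrak{L}_{s,\lambda,2\kappa_c}^{-1}\bigr)^{-1}
\]
is a well-defined bounded operator from $H^{N_+}_{\sigma,R_{\infty}}$ to $H^{N_++1}_{\sigma,R_{\infty},2}$.

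With $T_{s,\Lambda}$ in hand one writes $L_{s,\Lambda}=-(\mathbf{1}-K_{s,\lambda,\Lambda}T_{s,\Lambda})\,T_{s,\Lambda}^{-1}$, so that formally $L_{s,\Lambda}^{-1}=-T_{s,\Lambda}(\mathbf{1}-K_{s,\lambda,\Lambda}T_{s,\Lambda})^{-1}$. The operator $K_{s,\lambda,\Lambda}T_{s,\Lambda}\colon H^{N_+}_{\sigma,R_{\infty}}\to H^{N_+}_{\sigma,R_{\infty}}$ is compact, by part~(ii) of the preceding proposition combined with the compact embedding $H^{N_++1}_{\sigma,R_{\infty},2}\ssubset H^{N_+}_{\sigma,R_{\infty}}$, and it depends holomorphically on $s$, because all $s$-dependence enters polynomially (through the Laplace transforms, in $s$ and $s^2$), so the difference-quotient computation in the proof of Proposition~\ref{prop:mainprop} applies verbatim. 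The analytic Fredholm theorem then gives meromorphicity of $s\mapsto(\mathbf{1}-K_{s,\lambda,\Lambda}T_{s,\Lambda})^{-1}$, hence of $s\mapsto L_{s,\Lambda}^{-1}\colon H^{N_+}_{\sigma,R_{\infty}}\to H^{N_++1}_{\sigma,R_{\infty},2}$ in the sense of Definition~\ref{def:meromfamily}, and the Fredholm-alternative identities $\dim\ker L_{s,\Lambda}=\dim\textnormal{coker}\,L_{s,\Lambda}<\infty$, $\textnormal{ran}\,L_{s,\Lambda}^{*}=(\ker L_{s,\Lambda})^{\perp}$, $\textnormal{coker}\,L_{s,\Lambda}^{*}\cong\ker L_{s,\Lambda}$ follow exactly as in the proof of Proposition~\ref{prop:mainprop}(ii). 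Independence of the pole set from the foliation-defining function $\mathbbm{h}$ follows from the same conjugation argument $L_{s,\Lambda}^{\mathbbm{h}}=Q_s^{-1}\circ L_{s,\Lambda}^{\mathbbm{h}_0}\circ Q_s$ with $Q_s$ the multiplication by $e^{s\int_{\varrho}^{\infty}(\mathbbm{h}_0-\mathbbm{h})(\varrho')\,d\varrho'}$, which is bounded on both $H^{N_+}_{\sigma,R_{\infty}}$ and $H^{N_++1}_{\sigma,R_{\infty},2}$ because the conjugating factor is analytic near $\mathcal{C}^+$, as guaranteed by the requirement on the $\tau$-level sets near $\mathcal{C}^+$ imposed above.

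For the analogue of Proposition~\ref{prop:accumulation} — independence of the pole set from $N_+$, from $\widetilde{R}_{\infty}\ge R_{\infty}$, and from $\sigma$ on overlaps $\Omega_{\sigma}\cap\Omega_{\sigma'}$, smoothness of $\ker L_{s,\Lambda}$ on $\widehat{\Sigma}$, and absence of accumulation points of $\mathscr{Q}_{\rm reg}$ in $\Omega$ — the regularity bootstrap of Proposition~\ref{prop:accumulation}(i) carries over directly: from boundedness of $(\mathfrak{L}_{s,\lambda,2\kappa_c}+\epsilon_0\,B_{s,\Lambda}+\kappa_c^2\,B_{s,\Lambda}^{\mathcal{C}^+})^{-1}$ into $H^{N_++1}_{\sigma,R_{\infty},2}$ and the fact that $K_{s,\lambda,\Lambda}$ maps into spaces of one higher order, any $\hpsi\in\ker L_{s,\Lambda}$ solves $(\mathfrak{L}_{s,\lambda,2\kappa_c}+\epsilon_0\,B_{s,\Lambda}+\kappa_c^2\,B_{s,\Lambda}^{\mathcal{C}^+})\hpsi=-K_{s,\lambda,\Lambda}\hpsi$, and iterating this identity, also after commutation with $\snabla_{\s^2}$, yields $\hpsi\in C^{\infty}(\widehat{\Sigma})$ and the claimed $N_+$-independence. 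Where the $\Lambda=0$ proof of Proposition~\ref{prop:accumulation}(ii)--(iii) invoked an approximating sequence $\kappa_n\downarrow 0$ via Lemma~\ref{lm:kappaconvergence}, no such limit is needed now: since $\kappa_c>0$ is fixed, $L_{s,\Lambda}$ is itself of the regularized type, $L_{s,\Lambda}^{-1}$ exists already on ordinary Sobolev spaces by the obvious analogue of Theorem~\ref{tm:invertkappapos}, and every Gevrey-type estimate may be run with a finite truncation $N_{\infty}\ge N_{\kappa_c}$ directly on $L_{s,\Lambda}$; the weak-compactness and uniqueness-of-limits bookkeeping of Proposition~\ref{prop:accumulation}(ii)--(iii) then applies to $L_{s,\Lambda}$ itself, using smoothness of its kernel on $\widehat{\Sigma}$.

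The only genuinely new work — and the step I expect to be the main, if mild, obstacle — is the uniformity in $\Lambda$ that legitimises the hypothesis ``$M^2\Lambda$ suitably small'': one must verify that the bound on $\|\mathfrak{L}_{s,\lambda,2\kappa_c}^{-1}\|_{H^{N_+}_{\sigma,R_{\infty}}\to H^{N_++1}_{\sigma,R_{\infty},2}}$ and the operator norms of $B_{s,\Lambda}\circ\mathfrak{L}_{s,\lambda,2\kappa_c}^{-1}$ and $\kappa_c\,B_{s,\Lambda}^{\mathcal{C}^+}\circ\mathfrak{L}_{s,\lambda,2\kappa_c}^{-1}$ stay controlled as $\Lambda\downarrow0$. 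This is precisely the content of parts~(i)--(ii) of the preceding proposition, and it reduces in turn to checking that the red-shift estimate of Corollary~\ref{cor:redshiftho} survives replacing $\Delta$ by $\Delta_r$ on $\{r\le\tfrac12 R_{\infty}\}$, which holds because $\kappa_+$ is bounded away from $0$ uniformly in $\Lambda$ and $\Delta_r-\Delta$ is uniformly small there once $M^2\Lambda\ll1$. Everything else is a transcription of the $\Lambda=0$ arguments of \S\S\ref{sec:invertingmathfrakL}--\ref{sec:invtLs}.
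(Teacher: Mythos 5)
Your proposal is correct and matches the paper's intent: the corollary is stated there as an immediate consequence of the preceding Kerr--de Sitter decomposition proposition, and your transcription of the Neumann-series/analytic-Fredholm/regularity-bootstrap arguments with the extra small term $\kappa_c^2 B_{s,\Lambda}^{\mathcal{C}^+}$ is exactly the intended route. Your observation that the $\kappa_n\downarrow 0$ approximation of Lemma~\ref{lm:kappaconvergence} can be bypassed for fixed $\Lambda>0$, since $L_{s,\Lambda}$ is already of regularized type with finite-truncation Gevrey estimates available directly, is a valid minor streamlining rather than a deviation.
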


\section{Scattering resonances}
\label{sec:resonances}
In this section, we will establish meromorphicity of cut-off resolvents associated to the Boyer--Lindquist time coordinate $t$.

Let $R_{\infty}>0$ be arbitrarily large. Let $\chi_{\infty}\in C^{\infty}(\Sigma)$, such that $\textnormal{supp}\, \chi_{\infty}\subseteq [r_+,r_{\rm max}]$ for some arbitrary $r_{\rm max }<\frac{1}{2} R_{\infty}$.

We first show that we can simplify the domain of $L_s^{-1}$ by considering $\chi_{\infty}L_s^{-1} \chi_{\infty}$.
\begin{corollary}
\label{cor:infcutoffresest}
The cut-off operators $L_s^{-1} \chi_{\infty}$ and $\chi_{\infty}L_s^{-1} \chi_{\infty}$ satisfy the following properties:
\begin{enumerate}[label=\emph{(\roman*)}]
\item Let $N_+\in \N_+$ and $\delta_0>0$. Then there exists a $\sigma_0(\delta_0)\in \R$ such that $L_s^{-1} \chi_{\infty}: H^{N_+}_{\snabla}(\Sigma)\to H^{N_++1}_{\sigma_0, R_{\infty},2}$ defines a meromorphic family of operators in $\{|\arg z|<\frac{2}{3}\pi\}\cap \{\re z>-(\frac{1}{2}+N_+)\kappa_+\}\cap \{|z|\geq \delta_0\}$, with poles that lie in $\mathscr{Q}_{\rm reg}$.
\item $\chi_{\infty}L_s^{-1} \chi_{\infty}: H^{N_+}_{\snabla}(\Sigma)\to H^{N_++1}_{\snabla}(\Sigma)$ defines a meromorphic family of operators for $s\in \{|\arg z|<\frac{2}{3}\pi\}\cap\{\re s>-(\frac{1}{2}+N_+)\kappa_+\}$, with poles that lie in $\mathscr{Q}_{\rm reg}$.
	\item $\chi_{\infty} L_s^{-1} \chi_{\infty}: C^{\infty}(\Sigma)\to C^{\infty}(\Sigma)$ is well-defined for all $s\in \Omega\setminus \mathscr{Q}_{\rm reg}$.
	 \end{enumerate}
\end{corollary}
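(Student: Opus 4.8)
\textbf{Plan for the proof of Corollary \ref{cor:infcutoffresest}.}
The three statements are arranged so that each is a strengthening-then-specialization of the one before, so I would prove (i) first and deduce (ii) and (iii) from it. The starting point is Proposition \ref{prop:mainprop}(i), which gives meromorphicity of $L_s^{-1}: H^{N_+}_{\sigma,R_{\infty}}\to H^{N_++1}_{\sigma,R_{\infty},2}$ on $\Omega_{\sigma}\cap\{\re z>-(\tfrac12+N_+)\kappa_+\}$, together with Proposition \ref{prop:accumulation} which identifies the pole set $\mathscr{Q}_{\sigma,R_{\infty},N_+}$ with $\mathscr{Q}_{\rm reg}$ independently of $\sigma$, $R_{\infty}$, $N_+$ (for $R_{\infty}$ large and $\sigma$ in a suitable range). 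The only new content in (i) is the replacement of the input space $H^{N_+}_{\sigma,R_{\infty}}$ by the pre-composition $(\cdot)\circ \chi_{\infty}$ acting on the plain Sobolev space $H^{N_+}_{\snabla}(\Sigma)$.

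\textbf{Step 1 (the multiplication operator $\chi_\infty$ is bounded into $H^{N_+}_{\sigma,R_\infty}$).}
Since $\supp\chi_\infty\subseteq[r_+,r_{\max}]$ with $r_{\max}<\tfrac12 R_\infty$, for any $f\in C^\infty(\Sigma)$ the function $\chi_\infty f$ is supported in $\{\varrho\le\tfrac12 R_\infty\}$. In that region, inspecting the definition of the $H^{N_+}_{\sigma,R_\infty}$ inner product in \S\ref{sec:hilbertspaces}, the only contributing pieces are the Sobolev norms in (a): namely $\|\cdot\|_{H^{N_+}(\Sigma\cap\{\varrho\le 3R_\infty\})}$ and $\|\snabla_{\s^2}(\cdot)\|_{H^{N_+}(\Sigma\cap\{\varrho\le\tfrac13 R_\infty\})}$, because the microlocal-angular piece (b), the boundary norm $B_{R_\infty}$ in (c), and the Gevrey norm $G_{\sigma,R_\infty}$ in (d) are all evaluated at $\varrho\ge 2R_\infty$ or on the annulus $\{3R_\infty\le\varrho\le4R_\infty\}$, where $\chi_\infty f\equiv 0$. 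Hence there is a constant $C(\chi_\infty,R_\infty,N_+)$, \emph{independent of $\sigma$}, with $\|\chi_\infty f\|_{H^{N_+}_{\sigma,R_\infty}}\le C\,\|f\|_{H^{N_+}_{\snabla}(\Sigma)}$, so $\chi_\infty: H^{N_+}_{\snabla}(\Sigma)\to H^{N_+}_{\sigma,R_\infty}$ extends to a bounded operator by density of $C^\infty(\Sigma)$. Composing with $L_s^{-1}$ and observing that $H^{N_++1}_{\sigma,R_\infty,2}$ sits inside $H^{N_++1}_{\snabla}(\Sigma)$ only after a further multiplication by $\chi_\infty$ — which is exactly what (ii) postcomposes — gives the claimed mapping properties. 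For (i) we only need the target $H^{N_++1}_{\sigma,R_\infty,2}$, so meromorphicity and the location of poles are inherited verbatim from Proposition \ref{prop:mainprop}(i) and Proposition \ref{prop:accumulation}; the role of $\sigma_0(\delta_0)$ is that, by Proposition \ref{prop:accumulation}(iii), the region $\Omega=\{|\arg z|<\tfrac{2\pi}{3}\}$ is exhausted by the $\Omega_\sigma$, and a compact set $\{|\arg z|<\tfrac{2}{3}\pi,\ \re z>-(\tfrac12+N_+)\kappa_+,\ |z|\ge\delta_0\}$ restricted away from $0$ lies inside a single $\Omega_{\sigma_0}$ for $\sigma_0$ chosen large enough depending on $\delta_0$. (One sees directly from the definition of $\Omega_\sigma$ that $\{|z|\ge\delta_0\}\cap\Omega\subset\Omega_{\sigma_0}$ once $\sigma_0^2$ exceeds the relevant quadratic bound in $\delta_0$.)

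\textbf{Step 2 (postcomposition and (ii), (iii)).}
For (ii), postcompose the map from (i) with $\chi_\infty: H^{N_++1}_{\sigma_0,R_\infty,2}\to H^{N_++1}_{\snabla}(\Sigma)$. As in Step 1, multiplication by $\chi_\infty$ restricts everything to $\{\varrho\le\tfrac12R_\infty\}$ where the $H^{N_++1}_{\sigma_0,R_\infty,2}$ norm reduces to the Sobolev pieces, so $\chi_\infty$ is bounded from $H^{N_++1}_{\sigma_0,R_\infty,2}$ into $H^{N_++1}_{\snabla}(\Sigma)$; composition of bounded operators with a meromorphic family stays meromorphic with the same (or fewer) poles, all lying in $\mathscr{Q}_{\rm reg}$. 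The only subtlety is the neighbourhood of $z=0$: (i) was stated on $\{|z|\ge\delta_0\}$, but since $\delta_0$ is arbitrary and the corresponding bounded maps $\chi_\infty$ in Step 1 are uniform in $\sigma_0$ and the pole set is independent of $\sigma_0$, letting $\delta_0\downarrow 0$ patches the families together into a single meromorphic family on all of $\{|\arg z|<\tfrac23\pi\}\cap\{\re z>-(\tfrac12+N_+)\kappa_+\}$; meromorphicity is a local notion, so one just checks agreement on overlaps, which is automatic because $L_s^{-1}$ is intrinsically defined. Finally (iii) follows by taking the union over $N_+\in\N_0$: for each fixed $s\in\Omega\setminus\mathscr{Q}_{\rm reg}$ pick $N_+$ large enough that $\re s>-(\tfrac12+N_+)\kappa_+$, apply (ii), and then bootstrap regularity — a function in $\ker$-free range that is in every $H^{N_++1}_{\snabla}(\Sigma)$ (by varying $N_+$) is smooth; more directly, Proposition \ref{prop:accumulation}(i) already tells us $\ker L_s\subset C^\infty(\hat\Sigma)$ and the same argument applied to $\chi_\infty L_s^{-1}\chi_\infty f$ with $f\in C^\infty(\Sigma)$ — feeding the output back through the elliptic/Gevrey estimates of \S\ref{sec:fullycoupledestmates} iteratively — upgrades it to $C^\infty(\Sigma)$.

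\textbf{Main obstacle.}
The genuinely delicate point is the matching near $z=0$ and the $\sigma$-dependence: one must be sure that the meromorphic continuations produced for different $\delta_0$ (equivalently different $\sigma_0$) are restrictions of one another, which rests on Proposition \ref{prop:accumulation}(iii) (the pole set and the kernel of $L_s$ do not depend on $\sigma$ once $\Omega_\sigma$ contains $s$) and on the $\sigma$-uniformity of the boundedness of $\chi_\infty$ established in Step 1. Everything else is a routine composition of a known meromorphic family with fixed bounded multiplication operators, plus the elliptic bootstrap for smoothness; the $\sigma$-independence bookkeeping and the behaviour at the origin are where care is needed.
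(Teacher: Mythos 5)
Your overall route is the same as the paper's: observe that $\chi_\infty f$ is supported in $\{\varrho\le\tfrac12 R_\infty\}$, so only the finite-order Sobolev pieces of the $H^{N_+}_{\sigma,R_\infty}$ norm see it and $\chi_\infty:H^{N_+}_{\snabla}(\Sigma)\to H^{N_+}_{\sigma,R_\infty}$ is bounded uniformly in $\sigma$; then quote Propositions \ref{prop:mainprop} and \ref{prop:accumulation}, postcompose with $\chi_\infty$ for (ii), and take $N_+$ arbitrarily large for (iii). That part is fine.

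There is, however, one step that is false as written: the claim that $\{|\arg z|<\tfrac23\pi\}\cap\{\re z>-(\tfrac12+N_+)\kappa_+\}\cap\{|z|\ge\delta_0\}$ ``lies inside a single $\Omega_{\sigma_0}$ for $\sigma_0$ chosen large enough depending on $\delta_0$.'' Each $\Omega_\sigma$ is contained in the disc $\{|z|<|\sigma|\}$, so it is bounded, while the region above is unbounded; moreover even two points $s_1,s_2$ with $\re s_i\ge0$ and $|s_1|=\delta_0$, $|s_2|=100\,\delta_0$ cannot lie in a common $\Omega_\sigma$, since the first forces $\sigma<2\delta_0$ and the second forces $\sigma>100\,\delta_0$. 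What actually makes (i) work — and what the paper does — is the combination of (a) the region being covered by the \emph{union} $\bigcup_{|\sigma|\ge|\sigma_0|}\Omega_\sigma$ for a suitable $\sigma_0=\sigma_0(\delta_0)$, and (b) the nesting of the \emph{target} spaces, $H^{N_++1}_{\sigma,R_\infty,2}\subseteq H^{N_++1}_{\sigma_0,R_\infty,2}$ for $|\sigma|\ge|\sigma_0|$, so that all the locally defined families $L_s^{-1}\chi_\infty$ take values in the single fixed space $H^{N_++1}_{\sigma_0,R_\infty,2}$, while the pole sets agree on overlaps by Proposition \ref{prop:accumulation}(iii). Note also that the dependence goes the opposite way from what you wrote: points with $|s|$ close to $\delta_0$ in the right half-plane force $|\sigma|<2|s|$, so $\sigma_0$ must be taken \emph{small} (comparable to $\delta_0$), which is precisely why the target space cannot be fixed uniformly down to $z=0$ and the restriction $|z|\ge\delta_0$ appears in (i). Your treatment of (ii) — that after postcomposition with $\chi_\infty$ the target becomes $\sigma$-independent, so the families for different $\delta_0$ glue — and of (iii) is consistent with the paper.
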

\begin{proof}
Let $s\in \Omega\setminus \mathscr{Q}_{\rm reg}$. By Proposition \ref{prop:mainprop}, there exists an $N_+$ such that $\re s>-(\frac{1}{2}+N_+)\kappa_+$, so $L_s^{-1}: H_{\sigma,R_{\infty}}^{N}\to H_{\sigma,R_{\infty},2}^{N+1}$ is well-defined for all $N\geq N_+$ and for an appropriate choice of $\sigma$. Let $f\in H^{N_+}_{\snabla}(\Sigma)$. Then $\chi_{\infty} f\in H_{\sigma,R_{\infty}}^{N_+}$ for all $\sigma\in \R$ and $L_s^{-1} (\chi_{\infty}f)\in H_{\sigma,R_{\infty},2}^{N+1}$ for all $\sigma$ such that $s\in \Omega_{\sigma}$. Since $H_{\sigma,R_{\infty},2}^{N+1}\subset H_{\sigma',R_{\infty},2}^{N+1}$, for all $\sigma\geq \sigma'$, we obtain (i) by taking $\sigma_0$ to be the minimum of $\sigma$, such that $\Omega^2_{\sigma}$ cover the region $\{|\arg z|<\frac{2}{3}\pi\}\cap \{\re z>-(\frac{1}{2}+N_+)\kappa_+\}\cap \{|z|\geq \delta_0\}$. The above implies moreover that $\chi_{\infty}L_s^{-1} (\chi_{\infty}f)\in H^{N_++1}_{\snabla}(\Sigma)$, so (ii) also follows.

In particular, if $f\in C^{\infty}(\Sigma)$, then we can take $N_+$ arbitrarily large in (ii) to conclude (iii).
\end{proof}

Now let $\chi_{\infty,+}\in C^{\infty}(\Sigma)$, such that $\textnormal{supp}\, \chi_{\infty,+}\subseteq [r_{\rm min },r_{\rm max}]$, for some arbitrary $r_{\min}>r_+$. Then can moreover obtain further meromorphicity results in the full set $\Omega$, in the case $\frac{|a|}{M}\ll 1$ or for bounded azimuthal frequencies $m$.

\begin{proposition}
\label{prop:mainpropcutoff}
The operator $\chi_{\infty} L_s^{-1}\chi_{\infty,+} :\pi_{|m|\leq m_0}(L^2(\Sigma))\to \pi_{|m|\leq m_0}(H^2(\Sigma))$ is well-defined for all $s\in \Omega\setminus \mathscr{Q}_{\rm reg}$ and defines a meromorphic family of operators on $\Omega$ with poles that lie in $\mathscr{Q}_{\rm reg}$. Furthermore, for $\frac{|a|}{|M|}$ suitably small, $\chi_{\infty}    L_s^{-1}\chi_{\infty,+} :L^2(\Sigma)\to H^2(\Sigma)$ is well-defined and defines a meromorphic family of operators on $\Omega$ with poles that lie in $\mathscr{Q}_{\rm reg}$. In both cases, there exists a constant $C>0$, with $C=C(a,r_{\rm min}, r_{\rm max}, R_{\infty}, s)>0$ or $C=C(m_0,r_{\rm min}, r_{\rm max},R_{\infty}, s)>0$ such that
\begin{equation*}
\|\chi_{\infty}L_s^{-1}(\chi_{\infty,+} f)\|_{H^2(\Sigma)}\leq C_s \|f\|_{L^2(\Sigma)}.
\end{equation*}
\end{proposition}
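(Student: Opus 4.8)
The plan is to reduce the statement about the cut-off resolvent $\chi_{\infty}L_s^{-1}\chi_{\infty,+}$ to the already-established invertibility and meromorphicity properties of $L_s$, by combining the degenerate elliptic/red-shift estimate of Proposition~\ref{prop:redshifthopluselliptic} (in its azimuthally-restricted or small-$a$ form) with the boundedness of $L_s^{-1}$ on the Hilbert spaces $H^{N_+}_{\sigma,R_{\infty}}$. The point is that the cut-off $\chi_{\infty,+}$ is supported away from both $r=r_+$ and $r=\infty$, so the degenerate weight $\Delta^2$ appearing in Proposition~\ref{prop:redshifthopluselliptic} is bounded above and below on $\supp\chi_{\infty,+}$; this lets us upgrade from control of one derivative (what one gets from a generic source in $L^2(\Sigma)$, or from $H^{N_+}_{\snabla}(\Sigma)$) to control of two derivatives on a fixed compact $r$-range, giving the mapping into $H^2(\Sigma)$ rather than merely $H^1(\Sigma)$.

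First I would note, via Corollary~\ref{cor:infcutoffresest}(i), that for $f\in L^2(\Sigma)$ supported in $[r_{\min},r_{\max}]$ (so $\chi_{\infty,+}f\in H^{0}_{\sigma,R_{\infty}}$ for every $\sigma\in\R$) and for $s\in\Omega\setminus\mathscr{Q}_{\rm reg}$, the element $\hpsi:=L_s^{-1}(\chi_{\infty,+}f)$ lies in $H^{1}_{\sigma,R_{\infty},2}$ for suitable $\sigma$, with a bound $\|\hpsi\|_{H^1_{\sigma,R_{\infty},2}}\le C_s\|f\|_{L^2(\Sigma)}$; meromorphicity of $s\mapsto L_s^{-1}$ from Proposition~\ref{prop:mainprop} then transfers directly to $s\mapsto\chi_{\infty}L_s^{-1}\chi_{\infty,+}$, since $\chi_{\infty}$ and $\chi_{\infty,+}$ are fixed smooth multiplications. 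The second step is the local elliptic gain: on $\{r\le\tfrac12 R_{\infty}\}$ the operator $L_s$ has the explicit second-order form \eqref{eq:operatorbounded} with leading part $r\partial_r(\Delta\partial_r(r^{-1}\cdot))+\slashed\Delta_{\s^2}$, and since $L_s\hpsi=\chi_{\infty,+}f$ there, I would apply Proposition~\ref{prop:redshifthopluselliptic} (with $N_+=0$, in the small-$|a|/M$ case or restricted to $|m|\le m_0$, which is exactly the hypothesis needed to absorb the $r^2a^2|\partial_r\partial_{\varphi_*}\hphi|^2$ term) to obtain $\int \Delta^2 r^{\delta-1}|\partial_r^2(r^{-1}\hpsi)|^2 + r^{-3+\delta}|\slashed\Delta_{\s^2}\hpsi|^2 \lesssim_s \|f\|_{L^2}^2 + (\text{boundary terms on }\{\tfrac13 R_\infty\le r\le\tfrac12 R_\infty\})$. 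The boundary terms on the right are controlled by $\|\hpsi\|_{H^1_{\sigma,R_\infty,2}}$, which is already bounded by $\|f\|_{L^2(\Sigma)}$ by the first step, and on $\supp\chi_{\infty}\subseteq[r_+,r_{\max}]$ with $r_{\max}<\tfrac12 R_\infty$ we have $\Delta\ge\Delta(r_{\max})>0$, so $\chi_{\infty}\hpsi\in H^2(\Sigma)$ with the asserted bound. The $\pi_{|m|\le m_0}$ case is identical, restricting all estimates to the finite band of azimuthal modes throughout.

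The main obstacle I anticipate is bookkeeping at the interface between the two kinds of norms: the bound coming from Corollary~\ref{cor:infcutoffresest} is stated in the weighted Gevrey-type spaces $H^{N_+}_{\sigma,R_\infty,2}$, whereas the target $H^2(\Sigma)$ is an ordinary Sobolev space, and one must check that on the fixed compact range $[r_+,r_{\max}]$ the former dominates the latter — this is essentially the observation already used in Proposition~\ref{prop:accumulation}(i) and in the proof of Proposition~\ref{prop:fullmainest} that, away from $x=0$, the $H^{N_+}_{\sigma,R_\infty}$ norm is comparable to the local Sobolev norm. A secondary point is ensuring the elliptic estimate of Proposition~\ref{prop:redshifthopluselliptic} is available for $L_s$ itself and not just for the modified operator $\mathfrak{L}_{s,\lambda,\kappa}$: since on $\{r\le\tfrac12 R_\infty\}$ with $\mathbbm h\equiv 1$ we have $L_s=\mathfrak{L}_{s,\lambda,\kappa}+\lambda\pi_{\le l}+\chi_{\varrho\le\frac12 R_\infty}K_s^+$ up to the angular-frequency shift, the difference consists of a fixed zeroth-order (in $r$) term plus the $K_s^+$ term, both of which can be moved to the right-hand side and absorbed using the already-controlled $H^1$ norm of $\hpsi$; taking $\lambda$ large, as elsewhere in the paper, makes this harmless. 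With these two points dispatched the remaining arguments are routine, so I would keep the written proof short, citing Corollary~\ref{cor:infcutoffresest} and Proposition~\ref{prop:redshifthopluselliptic} and indicating the absorption of boundary terms.

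\begin{proof}
By Corollary~\ref{cor:infcutoffresest}, for $s\in\Omega\setminus\mathscr{Q}_{\rm reg}$ and $f\in L^2(\Sigma)$ (respectively $f\in\pi_{|m|\le m_0}(L^2(\Sigma))$), writing $\hpsi:=L_s^{-1}(\chi_{\infty,+}f)$, there exist $\sigma\in\R$ and a constant $C_s>0$ with
\begin{equation*}
\|\hpsi\|_{H^1_{\sigma,R_{\infty},2}}\le C_s\|\chi_{\infty,+}f\|_{H^0_{\sigma,R_{\infty}}}\le C_s\|f\|_{L^2(\Sigma)},
\end{equation*}
and $s\mapsto \chi_{\infty}L_s^{-1}\chi_{\infty,+}$ is meromorphic on $\Omega$ with poles in $\mathscr{Q}_{\rm reg}$, since $\chi_{\infty}$ and $\chi_{\infty,+}$ are fixed smooth multiplication operators.

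It remains to improve the regularity of $\chi_{\infty}\hpsi$ to $H^2$. On $\{r\le\tfrac12 R_{\infty}\}$, where we may take $\mathbbm h\equiv 1$, the identity $L_s\hpsi=\chi_{\infty,+}f$ combined with \eqref{eq:operatorbounded} shows that $\hpsi$ satisfies, up to the angular-frequency shift $\lambda\pi_{\le l}$ and the zeroth-order operator $\chi_{\varrho\le\frac12 R_\infty}K_s^+$, an equation of the form $\mathfrak{L}_{s,\lambda,\kappa}\hpsi=\chi_{\infty,+}f+\lambda\pi_{\le l}\hpsi-\chi_{\varrho\le\frac12 R_\infty}K_s^+\hpsi$, whose right-hand side is bounded in $L^2(\{r\le\tfrac12 R_\infty\})$ by $C_s\|f\|_{L^2(\Sigma)}$ using the $H^1_{\sigma,R_\infty,2}$ bound on $\hpsi$. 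Applying Proposition~\ref{prop:redshifthopluselliptic} with $N_+=0$ and $m\in\{0,1\}$ — which is permitted precisely under the hypothesis that $\tfrac{|a|}{M}$ is small or that $\hpsi$ has azimuthal modes $|m|\le m_0$, allowing the term $r^2a^2|\partial_r\partial_{\varphi_*}\hphi|^2$ to be absorbed — yields
\begin{equation*}
\int_{r_+}^{\frac13 R_{\infty}}\!\!\int_{\s^2}\Big(\Delta^2 r^{\delta-1}|\partial_r^2(r^{-1}\hpsi)|^2+r^{-3+\delta}\big[|\snabla_{\s^2}^2\hpsi|^2+\Delta|\snabla_{\s^2}\partial_r\hpsi|^2\big]\Big)\,d\upsigma dr\le C_s\|f\|_{L^2(\Sigma)}^2,
\end{equation*}
where the boxed boundary term over $\{\tfrac13 R_\infty\le r\le\tfrac12 R_\infty\}$ on the right-hand side of \eqref{eq:redshifthopluselliptic} has been estimated by $\|\hpsi\|_{H^1_{\sigma,R_\infty,2}}\le C_s\|f\|_{L^2(\Sigma)}$, and the source term is bounded as above.

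Since $\textnormal{supp}\,\chi_{\infty}\subseteq[r_+,r_{\max}]$ with $r_{\max}<\tfrac12 R_{\infty}$, on $\textnormal{supp}\,\chi_{\infty}$ the weight $\Delta$ is bounded below by $\Delta(r_{\max})>0$, so the left-hand side above controls all second-order derivatives of $\chi_{\infty}\hpsi$ in $L^2(\Sigma)$. Combined with the $H^1_{\sigma,R_\infty,2}$ bound, which controls $\chi_{\infty}\hpsi$ in $H^1(\Sigma)$, we conclude $\chi_{\infty}\hpsi\in H^2(\Sigma)$ with
\begin{equation*}
\|\chi_{\infty}L_s^{-1}(\chi_{\infty,+}f)\|_{H^2(\Sigma)}\le C_s\|f\|_{L^2(\Sigma)},
\end{equation*}
with $C_s=C(a,r_{\rm min},r_{\rm max},R_{\infty},s)>0$ (respectively $C_s=C(m_0,r_{\rm min},r_{\rm max},R_{\infty},s)>0$). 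Meromorphicity of $s\mapsto\chi_{\infty}L_s^{-1}\chi_{\infty,+}$ as a family of bounded operators $L^2(\Sigma)\to H^2(\Sigma)$ (respectively on the azimuthally restricted spaces) with poles in $\mathscr{Q}_{\rm reg}$ now follows from the meromorphicity of $s\mapsto L_s^{-1}$ established in Proposition~\ref{prop:mainprop}, together with the uniformity of the above estimates on compact subsets of $\Omega\setminus\mathscr{Q}_{\rm reg}$.
\end{proof}
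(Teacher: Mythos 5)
Your argument only establishes the proposition on the part of $\Omega$ with $\re s>-\tfrac12\kappa_+$, so the main content of the statement — that $\chi_{\infty}L_s^{-1}\chi_{\infty,+}$ is bounded $L^2(\Sigma)\to H^2(\Sigma)$ and meromorphic on \emph{all} of $\Omega$, where $\re s$ may be arbitrarily negative — is not reached. Both ingredients you use at the $L^2$ level are tied to $N_+=0$: Corollary \ref{cor:infcutoffresest}(i) with $N_+=0$ is only available in $\{\re z>-\tfrac12\kappa_+\}$, and Proposition \ref{prop:redshifthopluselliptic} with $N_+=0$ likewise requires $\re s>-\tfrac12\kappa_+$. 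For more negative $\re s$ one is forced to take $N_+$ large, which pushes the source into $H^{N_+}$ rather than $L^2$. The paper closes precisely this gap by an iteration in strips of width $\kappa_+$ based on the resolvent identity $L_s^{-1}=L_{s'}^{-1}+L_s^{-1}(L_{s'}-L_s)L_{s'}^{-1}$ with $\re s'>-\tfrac12\kappa_+$, combined with the improved estimate \eqref{eq:auxcutoffest2}, in which the top-order derivative of the source near the horizon (the price of very negative $\re s$ in the red-shift estimates) drops out because $\chi_{\infty,+}f\equiv 0$ on $\{r\le r_{\rm min}\}$. That step — which is where the support of $\chi_{\infty,+}$ away from $\mathcal{H}^+$ is actually exploited — is absent from your proof and cannot be bypassed by the local elliptic improvement alone.

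There is also a flaw in your first step: the inequality $\|\chi_{\infty,+}f\|_{H^0_{\sigma,R_{\infty}}}\le C\|f\|_{L^2(\Sigma)}$ is false, since the $H^{N_+}_{\sigma,R_{\infty}}$ norm contains the term $\|\snabla_{\s^2}(\cdot)\|_{H^{N_+}(\Sigma\cap\{\varrho\le\frac13 R_{\infty}\})}$ and $\supp\chi_{\infty,+}$ meets that region, so an angular derivative of $f$ appears which $\|f\|_{L^2}$ does not control (not even after restriction to finitely many azimuthal modes, since $\partial_{\vartheta}$ is unconstrained there). To get a bound by $\|f\|_{L^2}$ alone one must eliminate the angular derivative of the source from the resolvent estimate itself; this is exactly how the paper uses Proposition \ref{prop:redshifthopluselliptic} — to upgrade \eqref{eq:fullellipticest} to the improved cut-off estimate \eqref{eq:auxcutoffest1} under the small-$|a|/M$ or bounded-$m$ hypothesis — rather than, as in your write-up, invoking it only a posteriori as local elliptic regularity for $\hpsi$ after an estimate whose right-hand side already requires $\snabla_{\s^2}f\in L^2$.
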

\begin{proof}
For $\frac{|a|}{|M|}$ suitably small or when restricting to the span of azimuthal modes with $|m|\leq m_0$, we can omit the angular derivatives $|\snabla_{\s^2}\partial_r^k\mathfrak{L}_{s,\lambda,\kappa}\hpsi|^2$ on the right-hand side of \eqref{eq:fullellipticest} by applying Proposition \ref{prop:redshifthopluselliptic} to obtain the following improved version of Corollary \ref{cor:infcutoffresest}(i): for $s\notin \mathscr{Q}_{\rm reg}$
\begin{equation}
\label{eq:auxcutoffest1}
\|\chi_{\infty}L_s^{-1}(\chi_{\infty,+} f)\|_{H^{N+1}(\Sigma)}+\left|\left|\left(1-\frac{r_+}{r}\right) \chi_{\infty}L_s^{-1}(\chi_{\infty,+} f)\right|\right|_{H^{N+2}(\Sigma)}\leq C_s \|f\|_{H^N(\Sigma)}.
\end{equation}

It is straightforward to improve \eqref{eq:auxcutoffest1} in the region $r\leq r_{\rm min}$ by applying the estimates in the proof of Corollary \ref{cor:redshiftho}, but restricted to $r\in [r_+,r_{\rm min}]$. It then follows that
\begin{equation}
\label{eq:auxcutoffest2}
\|\chi_{\infty}L_s^{-1}(\chi_{\infty,+} f)\|_{H^{N_++2}(\Sigma)}\leq C \|f\|_{H^N_+(\Sigma)}+C\|\partial_r^{N_++1}(\chi_{\infty,+} f)\|_{L^2(\Sigma\cap\{r\leq r_{\rm min}\})}\leq C \| f\|_{H^{N_+}(\Sigma)},
\end{equation}
where the final inequality follows from the fact that $\chi_{\infty,+}f=0$ in $\{r\leq r_{\rm min} \}$.

Suppose that $s\in \Omega$ satisfies $\re s>-\frac{1}{2}\kappa_+$, then the proposition holds. Suppose now instead that $-\frac{3}{2}\kappa_+< \re s\leq -\frac{1}{2}\kappa_+$ and let $s'\in \Omega$ be such that $\re s'>-\frac{1}{2}\kappa_+$. 

Let $f\in C^{\infty}(\Sigma)$ and define $\hpsi=L_s^{-1}(\chi_{\infty,+} f)$. We can express:
\begin{equation*}
L_{s}^{-1}=L_{s}^{-1}L_{s'}L_{s'}^{-1}=L_{s}^{-1}(L_{s}+(L_{s'}-L_s))L_{s'}^{-1}=L_{s'}^{-1}+L_{s}^{-1}(L_{s'}-L_s)L_{s'}^{-1}.
\end{equation*}
We can now estimate
\begin{multline*}
\|\chi_{\infty} \hpsi\|_{H^2(\Sigma)}=\|\chi_{\infty}L_{s}^{-1}(\chi_{\infty,+} f)\|_{H^2(\Sigma)}\\
\leq \|\chi_{\infty}L_{s'}^{-1}(\chi_{\infty,+} f)\|_{H^2(\Sigma)}+ \|\chi_{\infty}L_{s}^{-1}(L_{s'}-L_s)L_{s'}^{-1}(\chi_{\infty,+} f)\|_{H^2(\Sigma)}\\
\leq C_{s',r_{\rm min}}\|f\|_{L^2(\Sigma)}+C_s\|\chi_{\infty}(L_{s'}-L_s)L_{s'}^{-1}(\chi_{\infty,+} f)\||_{H^1(\Sigma)}\\
\leq C_{s',r_{\rm min}}\|f\|_{L^2(\Sigma)}+C_s|s-s'\||\chi_{\infty}L_{s'}^{-1}(\chi_{\infty,+} f)\|_{H^2(\Sigma)}\\
\leq C_{s',r_{\rm min}}(1+C_s|s-s'|)\|f\|_{L^2(\Sigma)},
\end{multline*}
where we applied \eqref{eq:auxcutoffest2} with $N=0$ and Corollary \ref{cor:infcutoffresest}(i) with $N_+=1$ to arrive at the second line, and then we applied \eqref{eq:auxcutoffest2} with $N_+=0$ once again to arrive at the last line. By a density argument, we therefore obtain $\|\chi_{\infty}\hpsi\|_{H^2(\Sigma)}\leq C \|f \|_{L^2(\Sigma)}$ for all $f\in L^2(\Sigma)$.

We can repeat this procedure to conclude that the desired inequality holds for all $s\in \Omega$.
\end{proof}

We consider now the differential operator
\begin{equation*}
A(\omega):=e^{i\omega t}\square_{g_{M,a}}(e^{-i\omega  t} (\cdot)): C^{\infty}(\R_r\times \s^2_{(\theta,\varphi)})\to C^{\infty}(\R_r\times \s^2_{(\theta,\varphi)}).
\end{equation*}
We will construct the corresponding resolvent operator $R(\omega)=A(\omega)^{-1}$ and the associated cut-off resolvent operator $ \chi_{\infty,+}R(\omega) \chi_{\infty,+}$ by relating $A(\omega)$ to $L_s$. Denote $\Sigma':=\{t=0\}$.

Let $Q_s: L^2(\Sigma')\to L^2(\Sigma\setminus\{r=r_+\})$ be defined as follows: let $\overline{f}: D^+(\Sigma')\to \C$ denote the trivial extension of $f\in L^2(\Sigma')$ to $D^+(\Sigma')$, i.e.\ $\overline{f}(t,r,\theta,\varphi_*)=\overline{f}(r,\theta,\varphi_*)$. Then
\begin{equation*}
Q_s(f)(r,\theta,\varphi_*)=(e^{st}\overline{f})|_{\Sigma}(r,\theta,\varphi^*).
\end{equation*}
Furthermore, $Q_s\chi: L^2(\Sigma')\to L^2(\Sigma)$ is well-defined and bounded. We consider also the inverse $Q_s^{-1}: L^2(\Sigma) \to L^2(\Sigma')$, to obtain the bounded operator:
\begin{equation*}
\chi_{\infty,+}Q_s^{-1} :  H^2(\Sigma)\to  H^2(\Sigma').
\end{equation*}

Let $s=-i\omega$, then we can express
\begin{equation*}
A(\omega)=Q_s^{-1}\circ (\varrho^{-3}L_s(\varrho \cdot))\circ Q_s.
\end{equation*}
so that
\begin{equation*}
\chi_{\infty,+} R(\omega) \chi_{\infty,+}=(\chi_{\infty,+} Q_s^{-1})\circ (\varrho^{-1}L_s(\varrho^{3} \cdot))^{-1}\circ (Q_s\chi_{\infty,+}).
\end{equation*}
\begin{corollary}
\label{cor:resonances}
The cut-off resolvent operator:
\begin{equation*}
\chi_{\infty,+} R(\omega) \chi_{\infty,+}: H^{N_+}_{\snabla}(\Sigma')\to H^{N_++1}_{\snabla}(\Sigma')
\end{equation*}
can be meromorphically continued to $\omega\in (i\Omega\setminus  i\mathscr{Q}_{\rm reg})\cap \{\im \omega>-(\frac{1}{2}+N_+)\kappa_+\}$ and $\omega\mapsto \chi_{\infty,+} R(\omega) \chi_{\infty,+} $ is meromorphic on $i\Omega\cap \{\im \omega>-(\frac{1}{2}+N_+)\kappa_+\}$. Let $i \mathscr{Q}_{\rm res}$ denote the set of poles of $\omega\mapsto \chi_{\infty,+} R(\omega)\chi_{\infty,+}$ in $i\Omega$. Then
\begin{equation*}
\mathscr{Q}_{\rm res} \subseteq \mathscr{Q}_{\rm reg}.
\end{equation*}

Furthermore, the above results hold also with the following domain changes:
\begin{align*}
\chi_{\infty,+}R(\omega) \chi_{\infty,+}:&\: \pi_{|m|\leq m_0}(L^2(\Sigma'))\to \pi_{|m|\leq m_0}(H^2(\Sigma'))\quad \textnormal{for any $m_0\in [0,\infty)$},\\
\chi_{\infty,+}R(\omega) \chi_{\infty,+}:&\: L^2(\Sigma')\to H^2(\Sigma')\quad \textnormal{for suitably small $\frac{|a|}{M}$}.
\end{align*}
\end{corollary}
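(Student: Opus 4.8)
The statement of Corollary~\ref{cor:resonances} is essentially a transcription of the results already obtained for $L_s^{-1}$ through the conjugation identity
\begin{equation*}
A(\omega) = Q_s^{-1} \circ (\varrho^{-3} L_s (\varrho \cdot)) \circ Q_s, \qquad s = -i\omega.
\end{equation*}
The plan is therefore to exploit this identity together with the boundedness of $Q_s, Q_s^{-1}$ on Sobolev spaces after cutting off, and then simply read off the meromorphic continuation and the inclusion $\mathscr{Q}_{\rm res}\subseteq \mathscr{Q}_{\rm reg}$ from the corresponding statements in Corollary~\ref{cor:infcutoffresest} and Proposition~\ref{prop:mainpropcutoff}. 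I would organize the argument as follows. First, verify that for $s$ with $\operatorname{Re} s > s_0$ (equivalently $\operatorname{Im}\omega$ large) the operator $A(\omega)$ is invertible on the relevant function spaces on $\Sigma'$, so that $R(\omega) = A(\omega)^{-1}$ is genuinely the resolvent operator: this follows from the local energy estimates and the standard semigroup argument already invoked in the proof of Theorem~\ref{thm:cutoffresolvmain}, which identifies $R(\omega)$ on $\tfrac{\Delta}{r^2}L^2(\overline{\Sigma'})\to H^1(\overline{\Sigma'})$ for $\operatorname{Im}\omega$ suitably large. Second, using the conjugation identity write
\begin{equation*}
\chi_{\infty,+} R(\omega) \chi_{\infty,+} = (\chi_{\infty,+} Q_s^{-1}) \circ (\varrho^{-1} L_s(\varrho^3 \cdot))^{-1} \circ (Q_s \chi_{\infty,+}),
\end{equation*}
and observe that $Q_s\chi_{\infty,+}\colon H^{N_+}_{\snabla}(\Sigma')\to H^{N_+}_{\snabla}(\Sigma)$ (indeed into $H^{N_+}_{\sigma,R_\infty}$ for every $\sigma$, since the support is bounded away from $x=0$ and $Q_s$ only multiplies by $e^{st}$ which is smooth there) and $\chi_{\infty,+} Q_s^{-1}\colon H^{N_++1}_{\snabla}(\Sigma)\to H^{N_++1}_{\snabla}(\Sigma')$ are bounded linear maps. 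The factor $\varrho^{-1}$, $\varrho^3$ are smooth and bounded (with bounded inverses) on the compact support of the cutoffs, so they do not affect boundedness; note $\varrho = r$ there since $r_{\rm max} < \tfrac{1}{2}R_\infty$.

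\textbf{Carrying out the continuation.} With these boundedness facts in hand, the meromorphic continuation of $\omega\mapsto \chi_{\infty,+}R(\omega)\chi_{\infty,+}$ on $H^{N_+}_{\snabla}(\Sigma')\to H^{N_++1}_{\snabla}(\Sigma')$ is inherited directly from that of $s\mapsto \chi_{\infty} L_s^{-1}\chi_{\infty}$ (with $\chi_\infty$ chosen so that $\chi_\infty\equiv 1$ on $\operatorname{supp}\chi_{\infty,+}$, hence $\chi_\infty L_s^{-1}\chi_{\infty,+} = \chi_\infty L_s^{-1}\chi_{\infty}\chi_{\infty,+}$), which is Corollary~\ref{cor:infcutoffresest}(ii): that gives a meromorphic family on $\{|\arg z|<\tfrac{2}{3}\pi\}\cap\{\operatorname{Re} s > -(\tfrac12+N_+)\kappa_+\}$ with poles in $\mathscr{Q}_{\rm reg}$. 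Translating $s = -i\omega$, one gets meromorphicity on $i\Omega\cap\{\operatorname{Im}\omega > -(\tfrac12+N_+)\kappa_+\}$ with poles in $i\mathscr{Q}_{\rm reg}$; defining $i\mathscr{Q}_{\rm res}$ as the set of these poles taken over $N_+$ gives $\mathscr{Q}_{\rm res}\subseteq\mathscr{Q}_{\rm reg}$. The two variants with $\pi_{|m|\le m_0}(L^2(\Sigma'))\to\pi_{|m|\le m_0}(H^2(\Sigma'))$ and (for small $|a|/M$) $L^2(\Sigma')\to H^2(\Sigma')$ follow the same way, now using Proposition~\ref{prop:mainpropcutoff} (which supplies exactly the $L^2\to H^2$ bound $\|\chi_\infty L_s^{-1}(\chi_{\infty,+} f)\|_{H^2}\le C_s\|f\|_{L^2}$, meromorphic on all of $\Omega$) in place of Corollary~\ref{cor:infcutoffresest}; note that $Q_s\chi_{\infty,+}$ and $\chi_{\infty,+}Q_s^{-1}$ also respect the azimuthal-mode decomposition, since $e^{st}$ is $\varphi_*$-independent, so the restriction to $\pi_{|m|\le m_0}$ is preserved.

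\textbf{Main obstacle.} The only genuinely non-routine point is the bookkeeping needed to make sure the two cutoff functions $\chi_\infty$ and $\chi_{\infty,+}$ can be arranged so that (i) $\operatorname{supp}\chi_{\infty,+}\subseteq[r_{\min},r_{\max}]$ is strictly contained in the region $\{r<\tfrac12 R_\infty\}$ where $\varrho = r$, $\vartheta=\theta$, and the Boyer--Lindquist-to-$\tau$ coordinate changes are trivial, and (ii) $\chi_\infty\equiv 1$ on $\operatorname{supp}\chi_{\infty,+}$ with $\operatorname{supp}\chi_\infty$ also in $\{r<\tfrac12 R_\infty\}$, so that the identity $\chi_{\infty,+}R(\omega)\chi_{\infty,+} = (\chi_{\infty,+}Q_s^{-1})(\varrho^{-1}L_s(\varrho^3\cdot))^{-1}(Q_s\chi_{\infty,+})$ is exactly the conjugate of the already-analyzed operator and the supports never see $x=0$; there the norms $H^{N_+}_{\sigma,R_\infty}$ restricted to functions supported in $\{\varrho\le\tfrac12 R_\infty\}$ reduce to ordinary Sobolev norms (only the piece (a) in the decomposition of the norm survives), so no Gevrey content enters. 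One also needs to check that $R_\infty$ — which is required large for the main estimates — can be taken arbitrarily large independently of the prescribed $r_{\min}, r_{\max}$, which is exactly the $R_\infty$-independence of $\mathscr{Q}_{\rm reg}$ from Proposition~\ref{prop:accumulation}(ii) (and Theorem~\ref{thm:mainthmLs}(ii)). With those two points settled, the statement is an immediate corollary and no further estimates are required.
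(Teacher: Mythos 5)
Your proposal is correct and follows essentially the same route as the paper: the conjugation identity $\chi_{\infty,+}R(\omega)\chi_{\infty,+}=(\chi_{\infty,+}Q_s^{-1})\circ r^{-1}L_s^{-1}\circ(r^3 Q_s\chi_{\infty,+})$ with $s=-i\omega$, combined with boundedness of the cut-off conjugation operators and an appeal to Corollary~\ref{cor:infcutoffresest} and Proposition~\ref{prop:mainpropcutoff}. The extra bookkeeping you supply on the cutoff supports, the azimuthal-mode compatibility of $Q_s$, and the $R_\infty$-independence is consistent with (and slightly more explicit than) the paper's two-line argument.
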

\begin{proof}
By taking $s=-i\omega$ and and using that $r=\varrho$ in $\{r\leq \frac{1}{2}R_{\infty}\}$, we obtain
\begin{equation*}
\chi_{\infty,+} R(\omega) \chi_{\infty,+}= (\chi_{\infty,+}  Q_s^{-1}) \cdot  r^{-1}L_s^{-1}\cdot  (r^{3}  Q_s \chi_{\infty,+}).
\end{equation*}

We conclude by applying Corollary \ref{cor:infcutoffresest} and Proposition \ref{prop:mainpropcutoff} and using boundedness of $\chi_{\infty,+}  Q_s^{-1}$ and $Q_s \chi_{\infty,+}$.
\end{proof}

\appendix

\section{Proof of Lemma \ref{lm:compatibility}}
\label{app:pfoptparam}
We may assume that $\re s<0$, since the $\re s\geq 0$ follows immediately from \eqref{eq:compatibility4}. 

In the case $\re s<0$, the statements in Lemma \ref{lm:compatibility} follow from the proof of \cite{gajwar19a}[Lemma 9.7]. For the sake of completeness, we carry out the proof below.

Recall \eqref{eq:compatibility1}, \eqref{eq:compatibility2} and \eqref{eq:compatibility3}:
\begin{align*}
0<\mu<&\:\nu,\\
	4-\beta-(1-\mu)(\alpha+\tilde{\sigma}^{-2}\alpha^{-1})>&\: 0,\\
	1-\mu\nu-4\beta^{-1}\left(\frac{|{\re s}|}{|s|}\right)^2-\tilde{\sigma}^2>&\: 0.
\end{align*}
We will first determine for what values of $\frac{|\Re s|}{|s|}$ \eqref{eq:compatibility1}--\eqref{eq:compatibility3} can hold simultaneously. First of all, by \eqref{eq:compatibility1}, it is sufficient to show that \eqref{eq:compatibility2} and \eqref{eq:compatibility3} hold with $\nu$ replaced $\mu$, which together imply that:
\begin{equation}
\label{eq:mainineqparam}
\frac{1}{\frac{4-\beta}{1-\mu}\alpha-\alpha^2}	<\tilde{\sigma}^2<1-\mu^2-4\beta^{-1}\left(\frac{|{\re s}|}{|s|}\right)^2
\end{equation}
We can minimize the term on the very left-hand side by taking $\alpha=\frac{1}{2}\frac{4-\beta}{1-\mu}$, so that \eqref{eq:mainineqparam} becomes:
\begin{equation*}
	\frac{4(1-\mu)^2}{(4-\beta)^2}	<\tilde{\sigma}^2<1-\mu^2-4\beta^{-1}\left(\frac{|{\re s}|}{|s|}\right)^2.
\end{equation*}
Hence, we can find a $\tilde{\sigma}$ satisfying the above inequality if and only if
\begin{equation*}
	\left(\frac{|{\re s}|}{|s|}\right)^2<\frac{\beta}{4}\left(1-\mu^2-\frac{4 (1-\mu)^2}{(4-\beta)^2}\right).
\end{equation*}
We will first pick $\mu\in [0,1]$ so that the above expression is maximized. Note the right-hand side vanishes for $\mu=1$ and is non-positive for $\mu=0$. Furthermore, a maximum is attained in $(0,1)$ if
\begin{equation*}
	\frac{d}{d\mu}\left(1-\mu^2-\frac{4 (1-\mu)^2}{(4-\beta)^2}\right)=-2\mu+\frac{8}{(4-\beta)^2}(1-\mu)
\end{equation*}
vanishes, which is the case for
\begin{equation*}
	\mu=\frac{4}{(4-\beta)^2}\frac{1}{\frac{4}{(4-\beta)^2}+1}=\frac{1}{1+\frac{(4-\beta)^2}{4}}=\frac{1}{1+b^2},
\end{equation*}
where $b=2-\frac{\beta}{2}$. Hence,
\begin{equation*}
	\left(\frac{|{\re s}|}{|s|}\right)^2<\left(1-\frac{b}{2}\right)\frac{b^2}{1+b^2}.
\end{equation*}
Finally, since the right-hand side vanishes for $b=0$ and $b=2$, the maximum will be obtained if
\begin{equation*}
	0=\frac{d}{db}\left(\left(1-\frac{b}{2}\right)\frac{b^2}{1+b^2}\right)=-\frac{b(b-1)(b^2+b+4)}{2(1+b^2)^2},
\end{equation*}
which is the case for $b=1$. We conclude that \eqref{eq:compatibility1}--\eqref{eq:compatibility3} can only hold if 
\begin{equation*}
	\left(\frac{|{\re s}|}{|s|}\right)^2<\frac{1}{4}.
\end{equation*}
or $\re s>-\frac{1}{2}|s|$. Note that this corresponds to the sector $\frac{\pi}{2}<\arg z<\theta_0$, with $\theta_0\in (\frac{\pi}{2},\pi)$ satisfying: $\cos \theta_0=-\frac{1}{2}$. That means that $\theta_0=\frac{2}{3}\pi$.

For $\beta=2$, $\mu=\frac{1}{2}$ and $\alpha=2$, \eqref{eq:mainineqparam} reduces to:
\begin{equation*}
\frac{1}{4}	<\tilde{\sigma}^2<\frac{3}{4}-2\left(\frac{|{\re s}|}{|s|}\right)^2.
\end{equation*}
If we fix $\sigma^2=|2s|^2\tilde{\sigma}^2|$, we therefore have that $|s|^2<\sigma^2$ and
\begin{equation*}
	\sigma^2<3|s|^2-8|\re s|^2=3 |\im s|^2-5 |\re s|^2.
\end{equation*}
This finishes the proof of Lemma \ref{lm:compatibility}.

\section{Compact embeddings}
The following Hilbert spaces are compactly embedded:
\begin{proposition}
Let $R_{\infty}\geq 3 r_+$ and $N_+\in \N_0$. Then
\begin{align}
\label{eq:compact1}
H_{\sigma_2,R_{\infty}}^{N_++1} \ssubset &\: H_{\sigma_1,R_{\infty}}^{N_+}\quad \textnormal{for all $0>|\sigma_2|>|\sigma_1|$},\\
\label{eq:compact2}
H_{\sigma_2,R_{\infty},2}^{N_++2} \ssubset &\: H_{\sigma_1,R_{\infty},2}^{N_++1}\quad \textnormal{for all $0>|\sigma_2|>|\sigma_1|$},\\
\label{eq:compact3}
H_{\sigma, R_{\infty},1}^{N_++1} \ssubset &\: H_{\sigma,R_{\infty}}^{N_+}\quad \textnormal{for all $\sigma\in \R$.}
\end{align}
\end{proposition}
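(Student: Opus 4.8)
\textbf{Proof strategy for the compact embeddings \eqref{eq:compact1}--\eqref{eq:compact3}.}

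The plan is to reduce all three statements to the Rellich--Kondrachov theorem on bounded pieces of $\widehat{\Sigma}$ together with an \emph{equi-smallness at infinity} (i.e.\ tail-smallness) argument for the Gevrey-type pieces of the norm. The general scheme in each case: given a bounded sequence $\{f_n\}$ in the stronger space, we must extract a subsequence converging in the weaker norm. I would split the relevant norm into (a) Sobolev pieces on compact sets $\{\varrho\leq 4R_\infty\}$, (b) the boundary-type piece $B_{R_\infty}$ at $\varrho = 2R_\infty$, and (c) the Gevrey pieces $G_{\sigma,R_\infty}$ (and $G_{\sigma,R_\infty,2}$) on $\{x\leq x_0\}$. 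For (a), since each $H^{k}(S)$ with $S$ a compact submanifold-with-boundary of $\widehat{\Sigma}$ embeds compactly in $H^{k-1}(S)$ (standard Rellich on a bounded Lipschitz domain), and the microlocal angular weight $(\mathbf 1-\slashed\Delta_{\s^2})^{-w_p/4}$ is an order-$\leq 0$ pseudodifferential operator on $\s^2$ hence bounded, a diagonal argument over the finitely many overlapping regions yields a subsequence convergent in the weaker Sobolev norms. For (b), the $B_{R_\infty}$ norm only involves traces at $\varrho = 2R_\infty$; once one has $H^{N_++1}$-convergence on a neighbourhood of $\{\varrho=2R_\infty\}$ (from step (a)) one controls \emph{finitely many} $\partial_x^n$ traces, and the remaining tail $\sum_{n\geq N}$ is small uniformly in $n$ because of the factor $(d_0 x_0)^{2n}/(\max\{\ell+1,n+1\})^{2n}$ against the boundedness of the $B_{R_\infty,2}$ norm — this is a straightforward Cauchy-tail estimate.

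The heart of the matter is (c), the Gevrey pieces, and this is what I expect to be the main obstacle. Here the gain going from $\sigma_2$ to $\sigma_1$ (with $|\sigma_2|>|\sigma_1|$) is exactly the mechanism that converts a \emph{bounded} sequence into a \emph{precompact} one: the ratio of the weight sequences is $(\sigma_1/\sigma_2)^{2n} \to 0$ geometrically in $n$, and likewise $(4\sigma_1)^{2n}/(4\sigma_2)^{2n}\to 0$ in the lower-order ($n\leq \ell-1$) block, while the spherical-harmonic index $\ell$ is controlled by the extra angular regularity built into $H^{N_++1}$ versus $H^{N_+}$ (one more $\snabla_{\s^2}$, and the $(\ell+1)$-power weights in $B_{R_\infty,2}$, $G_{\sigma,R_\infty,2}$). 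Concretely: given $\varepsilon>0$, choose $N$ and $L$ so large that the part of the $\|\cdot\|_{G_{\sigma_1,R_\infty}}$-norm with $n\geq N$ or $\ell\geq L$ is $\leq \varepsilon \|\cdot\|_{G_{\sigma_2,R_\infty}}$ for \emph{every} element (using $(\sigma_1/\sigma_2)^{2n}$-decay and that increasing $\ell$ costs only polynomially on the $G_{\sigma_1}$-side but the $G_{\sigma_2,R_\infty,2}$-side supplies the room); this requires a short but careful comparison of the two weight families, handled separately in the $n\geq\ell$ regime (clean, both are $\sigma^{2n}/(n!^2(n+1)!^2)$-type) and the $n\leq \ell-1$ regime (using the Stirling estimates already displayed in \S\ref{sec:ideastechniques} to bound $(\ell-(n+1))!^2/(\ell+n+1)!^2$). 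On the remaining finite set $\{n\leq N,\ \ell\leq L\}$ the $G$-norm is equivalent to an ordinary $H^{N+\text{const}}$-norm on $\{x\leq x_0\}$ restricted to finitely many spherical harmonics, which is again compactly embedded in its one-order-lower analogue by Rellich; extract a convergent subsequence there. Then a standard $3\varepsilon$-argument (uniform tail smallness $+$ convergence of the head) gives a Cauchy subsequence in $\|\cdot\|_{H_{\sigma_1,R_\infty}^{N_+}}$, proving \eqref{eq:compact1}; \eqref{eq:compact2} is identical with every norm replaced by its ``$,2$'' version (which carries one more derivative and the same $\sigma_2\to\sigma_1$ decay mechanism).

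For \eqref{eq:compact3}, where $\sigma$ is \emph{fixed}, the geometric-in-$n$ gain is absent, so the argument must instead be powered purely by the extra derivative in $H_{\sigma,R_\infty,1}^{N_++1}$ (one more $\partial_x$ and one more $\snabla_{\s^2}$, measured in the $H^{N_+}_{\sigma,R_\infty}$-norm). The point is that this is precisely the definition of $H_{\sigma,R_\infty,1}^{N_++1}$, and — crucially — on the Gevrey region $\{x\leq x_0\}$ having $\partial_x f, \snabla_{\s^2} f\in H^{N_+}_{\sigma,R_\infty}$ forces, via the weight bookkeeping in the $G_{\sigma,R_\infty}$-norm (shifting $n\mapsto n+1$ and comparing $\sigma^{2(n+1)}/((n+1)!^2(n+2)!^2)$ against $\sigma^{2n}/(n!^2(n+1)!^2)$, which carries an extra factor $\sigma^2/((n+1)^2(n+2)^2)\to 0$), that the $n$-tail of $\|f_n\|_{G_{\sigma,R_\infty}}$ is uniformly small; and similarly one extra $\snabla_{\s^2}$ makes the $\ell$-tail uniformly small. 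This is exactly the same computation the authors already perform in Proposition~\ref{prop:invcalL} (the step bounding $\|x\hpsi^{(2)}\|_{G_{\sigma,R_\infty}} \leq C\|\hpsi^{(1)}\|_{G_{\sigma,R_\infty,2}}$ and then passing to a larger $\sigma$), so I would cite that mechanism. With uniform tail-smallness in hand, the head (finitely many $n,\ell$) is again compact by Rellich on the compact pieces of $\widehat\Sigma$, and the $3\varepsilon$-argument closes. The only genuinely delicate bookkeeping — and hence the step I'd budget the most care for — is verifying the two-regime ($n\geq\ell$ versus $n<\ell$) weight comparisons in \eqref{eq:compact1}--\eqref{eq:compact2}, since the lower-order block's weights $(4\sigma)^{2n}(\ell-(n+1))!^2/(\ell+n+1)!^2$ interact non-trivially with the $\ell$-power weights and one must confirm the geometric gain survives uniformly in $\ell$; the Stirling bounds quoted in \S\ref{sec:ideastechniques} are exactly what makes this go through.
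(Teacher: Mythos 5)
Your overall scheme is the paper's: split the norm into the bounded-region Sobolev pieces, the trace piece $B_{R_{\infty}}$, and the Gevrey pieces; extract a convergent subsequence on the head $\{n\leq N,\ \ell\leq L\}$ by compactness on bounded domains (the paper uses Arzel\`a--Ascoli on $[0,x_0]$ together with Bolzano--Weierstrass on the finitely many trace coefficients, plus Rellich--Kondrachov on $\{\varrho\leq 4R_{\infty}\}$ — same idea as your Rellich step); and close with uniform tail-smallness and a $3\varepsilon$/diagonal argument. For \eqref{eq:compact3} your mechanism is exactly the one the paper implements: the shift $n\mapsto n+1$ in the weights produces the factors $(\max\{\ell+1,n+1\})^{-2}$ and $\sigma^2(n+1)^{-2}(n+2)^{-2}$ against $\|\partial_x(\cdot)\|_{B_{R_{\infty}}}$ and $\|\partial_x(\cdot)\|_{G_{\sigma,R_{\infty}}}$, and the extra $\snabla_{\s^2}$ gives the factor $(L(L+1))^{-1}$ for the $\ell$-tail.

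The concrete gap is in your treatment of the $\ell$-tail at fixed small $n$ in \eqref{eq:compact1}--\eqref{eq:compact2}. You claim this tail is controlled by ``the extra angular regularity built into $H^{N_++1}$ versus $H^{N_+}$,'' but in the region $\{\varrho\geq 4R_{\infty}\}$ the norms of $H^{N_++1}_{\sigma_2,R_{\infty}}$ and $H^{N_+}_{\sigma_1,R_{\infty}}$ reduce to $\la\cdot,\cdot\ra_{B_{R_{\infty}}}+\la\cdot,\cdot\ra_{G_{\sigma,R_{\infty}}}$, which contain no additional angular derivatives and depend on the superscript only through $\sigma$; and the ratio $(\sigma_1/\sigma_2)^{2n}$ equals $1$ at $n=0$, so it supplies no smallness as $\ell\to\infty$. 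Indeed, for $f_{\ell}=Y_{\ell 0}\,g(x)$ with $g$ fixed, Gevrey-regular and supported in $\{x<x_0/2\}$, the $n=0$ term $(\ell-1)!^2(\ell+1)!^{-2}\|g\|^2_{L^2}$ dominates both $\|f_{\ell}\|^2_{G_{\sigma_1,R_{\infty}}}$ and $\|f_{\ell}\|^2_{G_{\sigma_2,R_{\infty}}}$ for large $\ell$ (consecutive weights in $n$ decay like $(4\sigma)^2\ell^{-4}$), so the ratio of the two norms tends to $1$ and the normalized sequence converges weakly to zero while staying bounded away from zero in the target norm; your claimed tail estimate cannot hold as stated. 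To be fair, the paper itself dispatches \eqref{eq:compact1}--\eqref{eq:compact2} in a single sentence invoking only $(4\sigma_1)^{2n}=(\sigma_2/\sigma_1)^{-2n}(4\sigma_2)^{2n}$, which handles the $n$-tail and is equally silent on the $\ell$-tail at $n=0$; so the step you correctly flagged as the delicate one is a step that neither your argument nor the paper's actually closes, and any genuine proof of \eqref{eq:compact1}--\eqref{eq:compact2} must supply an additional source of $\ell$-decay (or an additional hypothesis on the sequences to which the embedding is applied) that is not visible in the stated definitions.
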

\begin{proof}
We first consider \eqref{eq:compact3}. Let $\{f_k\}$ be a sequence in $H_{\sigma, R_{\infty},1}^{N_++1}$ satisfying $\|f_k\|_{H_{\sigma, R_{\infty},1}^{N_++1}}=1$. To conclude \eqref{eq:compact3}, we need to show that it admits a Cauchy subsequence with respect to the $H_{\sigma, R_{\infty}}^{N_+}$-norm.

We consider first the $B_{R_{\infty}}$-part of the norm:
\begin{multline*}
\|f_k\|_{B_{R_{\infty}}}^2=\sum_{\ell\in \N_0}\sum_{n=0}^{\infty}\frac{(d_0x_0)^{2n}}{(\max\{\ell+1,n+1\})^{2n}}\||\partial_x^{n}(f_k)_{\ell}\|^2_{L^2(\s^2)}(x_0)\\
=\sum_{\ell\in \N_0}\|(f_k)_{\ell}\|^2_{L^2(\s^2)}(x_0)+\sum_{\ell\in \N_0}\sum_{n=0}^{\infty}\frac{1}{(\max\{\ell+1,n+1\})^2}\frac{(d_0x_0)^{2n}}{(\max\{\ell+1,n+1\})^{2n}}\||\partial_x^{n+1}(f_k)_{\ell}\|_{L^2(\s^2)}^2(x_0)\\
=\sum_{\ell\in \N_0}\sum_{|m|\leq \ell}|(f_k)_{\ell m}|^2(x_0)\\
+\sum_{\ell\in \N_0}\sum_{|m|\leq \ell}\sum_{n=0}^{\infty}\frac{1}{(\max\{\ell+1,n+1\})^2}\frac{(d_0x_0)^{2n}}{(\max\{\ell+1,n+1\})^{2n}}|\partial_x^{n+1}(f_k)_{\ell m}|^2(x_0)
\end{multline*}
Since $\|f_k\|_{B_{R_{\infty}}}^2\leq 1$, we have by Bolzano--Weierstrass that there exists a subsequence $f_{k_l}$ such that for fixed $L>0$
\begin{align*}
\sum_{\ell=0}^L\sum_{|m|\leq \ell}|(f_{k_l})_{\ell m}|^2(x_0)\to &\: a_{L,N},\\
\sum_{\ell=0}^L\sum_{|m|\leq \ell}\sum_{n=0}^{N}\frac{1}{(\max\{\ell+1,n+1\})^2}\frac{(d_0x_0)^{2n}}{(\max\{\ell+1,n+1\})^{2n}}|\partial_x^{n+1}(f_{k_l})_{\ell m}|^2(x_0)\to &\: b_{L,N}
\end{align*}
for fixed $L, N$. We moreover have that the ``high-frequency part'' of the norm of the difference $f_k-f_r$ can be estimated as follows:
\begin{align*}
\sum_{\ell\geq L}\|(f_k-f_r)_{\ell}\|^2_{L^2(\s^2)}(x_0)\leq  \frac{1}{L(L+1)}&\sum_{\ell\geq L}\|(\snabla_{\s^2}(f_k-f_r))_{\ell}\|^2_{L^2(\s^2)}(x_0)\leq \frac{C}{L(L+1)},\\
\sum_{\ell\in \N_0, n\in \N_0, \textnormal{$\ell\geq L$ or $n\geq N$}}\frac{1}{(\max\{\ell+1,n+1\})^2}&\frac{(d_0x_0)^{2n}}{(\max\{\ell+1,n+1\})^{2n}}\||\partial_x^{n+1}(f_k-f_r)_{\ell}\|_{L^2(\s^2)}^2(x_0)\\
\leq  \frac{1}{(\min\{N,L\}+1)^2}(\|\snabla_{\s^2}(f_k-f_r)\|_{B_{R_{\infty}}}^2&+\|\partial_x(f_k-f_r)\|_{B_{R_{\infty}}}^2)\leq\frac{C}{(\min\{N,L\}+1)^2}
\end{align*}
Hence, $\{f_{k_l}\}$ is Cauchy with respect to $\|\cdot\|_{B_{R_{\infty}}}$.

Consider now the $G_{\sigma,R_{\infty}}$-part of the norm:
\begin{multline*}
\|f_k\|_{G_{\sigma,R_{\infty}}}^2=\sum_{\ell\in \N_0}\sum_{n=0}^{\max\{\ell-1,0\}}\frac{(4\sigma)^{2n}(\ell-(n+1))!^2}{(\ell+n+1)!^2} \int_0^{x_0}\| \partial_x^{n}f_{\ell}\|_{L^2(\s^2)}(x)\, dx\\
	+(\ell+1)^3\sum_{n=\ell}^{\infty}\frac{\sigma^{2n}}{n!^2(n+1)!^2} \int_0^{x_0}\|\partial_x^{n}f_{\ell}\|_{L^2(\s^2)}(x)\, dx\\
	=\sum_{\ell\in \N_0}\sum_{|m|\leq \ell}\sum_{n=0}^{\max\{\ell-1,0\}}\frac{(4\sigma)^{2n}(\ell-(n+1))!^2}{(\ell+n+1)!^2} \int_0^{x_0}| \partial_x^{n}f_{m \ell}|^2(x)\, dx\\
	+(\ell+1)^3\sum_{n=\ell}^{\infty}\frac{\sigma^{2n}}{n!^2(n+1)!^2} \int_0^{x_0}|\partial_x^{n}f_{m\ell}|^2(x)\, dx.
\end{multline*}
Now we can proceed as above: we first split the sum into a part where $\ell\leq L$ and $n\leq N$ and, rather than applying Bolzano--Weierstrass, we apply Arzel\`a--Ascoli for functions on $[0,x_0]$ to obtain a convergent subsequence. Here we moreover use that we can consider the $C^N([0,x_0])$ norm rather than the $H^N([0,x_0])$ norm via a Sobolev inequality on $[0,x_0]$, since the norm $\|f_k\|_{G_{\sigma,R_{\infty}}}$ bounds arbitrarily many $x$-derivatives.

Subsequently, we bound the remaining high-frequency part part of $\|f_k-f_r\|_{G_{\sigma,R_{\infty}}}^2$, where $n>N$ or $\ell\geq L$,  with a constant $\frac{1}{(\min\{N,L\}+1)^2}$ by using the boundedness of $\|f_k-f_r\|_{G_{\sigma, R_{\infty}}}+\|\snabla_{\s^2}(f_k-f_r)\|_{G_{\sigma,R_{\infty}}}+\|\partial_x(f_k-f_r)\|_{G_{\sigma,R_{\infty}}}$.

To conclude the proof of \eqref{eq:compact3}, we consider $\|f\|_{H^{N_+}(\Sigma\cap\{\varrho\leq 2R_{\infty}\})}^2+\|\snabla_{\s^2}f\|_{H^{N_+}(\Sigma\cap\{\varrho \leq \frac{1}{3}R_{\infty}\})}^2+\sum_{\ell\in \N_0}|(\ell(\ell+1)+1)^{-\frac{w_p}{4}}f_{\ell}\|^2_{H^{N_+}(\Sigma\cap\{\frac{1}{3}R_{\infty}\leq \varrho \leq 3R_{\infty}\})}$. We proceed as before, but rather than applying Arzel\`a--Ascoli to estimate the bounded angular frequencies $\ell\leq L$, we apply Rellich--Kondrachov on $[r_+^{-1},\frac{1}{R_{\infty}}]$.

We can prove \eqref{eq:compact1} and \eqref{eq:compact2} in an analogous manner, where we use obtain an smallness factor in the high-frequency estimates by using that $(4\sigma_1)^{2n}=(\frac{\sigma_2}{\sigma_1})^{-2n}(4\sigma_2)^{2n}$.
\end{proof}

\section{Norms of pointwise products}
In this section, we consider norms of products of functions: $\|g\cdot f\|_{H^{N_+}_{ \sigma,R_{\infty}}}$, where $g$ will take the role of a coefficient in a differential operator and $f$ will take the role of a function in the domain of that operator. Since the $H^{N_+}_{ \sigma,R_{\infty}}$-norm features an infinite number of derivatives, one cannot immediately estimate $\|g\cdot f\|_{H^{N_+}_{ \sigma,R_{\infty}}}$ by a product of $\|f\|_{H^{N_+}_{ \sigma,R_{\infty}}}$ with an appropriate norm for $g$.

In this section, we will prove:
\begin{proposition}
\label{prop:producttotalnorm}
Let $\beta>2$. Then there exists a constant $C>0$ such that for all appropriately regular functions $f,g; [0,x_0]\times \s^2\to \C$, we can estimate:   
\begin{equation*}
\|g\cdot f\|_{H^{N_+}_{ \sigma,R_{\infty}}}\leq C\|g\|_{H^{N_+}_{ \sigma,R_{\infty};\textnormal{coeff}}}\cdot \|f\|_{H^{N_+}_{\sigma,R_{\infty}}},
\end{equation*}
with
\begin{multline*}
\|g\|^2_{H^{N_+}_{ \sigma,R_{\infty};\textnormal{coeff}}}:=\|g\|_{H^{N_++2}(\Sigma\cap\{\varrho\leq 3R_{\infty}\})}^2+\|\snabla_{\s^2}g\|_{H^{N_++2}(\Sigma\cap\{\varrho\leq \frac{1}{3}R_{\infty}\})}^2\\
+\sum_{\ell\in \N_0}\|(1+\slashed{\Delta}_{\s^2})^{\frac{w_p+5}{4}} g_{\ell}\|_{H^{N_++2}(\Sigma\cap\{3R_{\infty}\leq \varrho\leq 4R_{\infty}\})}^2+\sum_{j=0}^{\infty}\sum_{\ell\in \N_0}e^{2\ell}(\ell+1)^5\frac{(d_0x_0)^{2j}(j+1)^2}{j!^2 }\|\partial_x^jg_{\ell}|_{x=x_0}\|^2_{L^2(\s^2)}\\
+\sum_{j\in \N_0}\sum_{\ell\in \N_0}\frac{e^{5(\ell+j)}}{j!^4}\|\partial_x^jg_{\ell}\|^2_{L^{\infty}([0,x_0]\times \s^2)}.
\end{multline*}
\end{proposition}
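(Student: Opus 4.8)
The plan is to decompose the $H^{N_+}_{\sigma,R_\infty}$-norm of $g\cdot f$ according to the four (really five) regions that define the norm: the bounded Sobolev region $\{\varrho\le 3R_\infty\}$ (together with the extra angular derivative on $\{\varrho\le \frac13 R_\infty\}$), the microlocal-angular region $\{3R_\infty\le\varrho\le 4R_\infty\}$, the boundary norm $\|\cdot\|_{B_{R_\infty}}$ at $x=x_0$, and the Gevrey region $\{x\le x_0\}$ captured by $\|\cdot\|_{G_{\sigma,R_\infty}}$. In each region I would estimate $\|g f\|$ by a product of a norm of $g$ — which is exactly what the right-hand side $\|g\|_{H^{N_+}_{\sigma,R_\infty;\text{coeff}}}$ collects region by region — and $\|f\|_{H^{N_+}_{\sigma,R_\infty}}$. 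The guiding principle is that $g$ plays the role of a coefficient, so it should be measured in a \emph{stronger} topology (more derivatives, bigger weights) than $f$, and crucially an $L^\infty$-in-$x$ Gevrey-type norm rather than an $L^2$-in-$x$ one, so that $\|g f\|_{L^2_x}\le \|g\|_{L^\infty_x}\|f\|_{L^2_x}$ pointwise before summing the Gevrey series.

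\textbf{Key steps.} First, for the finite-Sobolev pieces: on $\{\varrho\le 3R_\infty\}$ use the standard algebra/Moser estimate $\|gf\|_{H^{N_+}}\lesssim \|g\|_{H^{N_+}}\|f\|_{H^{N_+}}$ once $N_+$ is large enough to embed in $L^\infty$, and for small $N_+$ use instead $\|g\|_{H^{N_++2}}\|f\|_{H^{N_+}}$ (this is why the coefficient norm carries $N_++2$ derivatives); the extra $\snabla_{\s^2}$ on $\{\varrho\le\frac13R_\infty\}$ is handled by Leibniz, spending one angular derivative on either factor, with the $g$-contribution absorbed into $\|\snabla_{\s^2}g\|_{H^{N_++2}}$. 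On $\{3R_\infty\le\varrho\le 4R_\infty\}$, I would control $\|(1-\slashed\Delta_{\s^2})^{-w_p/4}(gf)\|_{H^{N_+}}$ by expanding $gf$ in spherical harmonics: the point is that $(\ell(\ell+1)+1)^{-w_p/4}$ of a product $g_{\ell_1}f_{\ell_2}$, where $\ell\le \ell_1+\ell_2$, is bounded by $(\ell_1(\ell_1+1)+1)^{(w_p+5)/4}(\ell_2(\ell_2+1)+1)^{-w_p/4}$ times the product of the individual harmonic norms, and the exponent shift plus the extra $(\ell+1)^5$-type gain is exactly engineered into the $(1+\slashed\Delta_{\s^2})^{(w_p+5)/4}$ weight in $\|g\|_{H^{N_+}_{\sigma,R_\infty;\text{coeff}}}$ — this is the standard trick for turning a negative-order angular weight on a product into a product of weighted norms at a cost of a few powers. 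Second, for the boundary norm $\|\cdot\|_{B_{R_\infty}}$: by the Leibniz rule $\partial_x^n(gf)|_{x=x_0}=\sum_{j=0}^n\binom nj \partial_x^j g\,\partial_x^{n-j}f|_{x=x_0}$, and one estimates the $B_{R_\infty}$-weighted $\ell^2$-sum over $n$ of the $L^2(\s^2)$-norms of these, using Cauchy--Schwarz in the binomial convolution and the summability of $\sum_j (d_0x_0)^{2j}(j+1)^2/j!^2$, also convolving spherical-harmonic indices and paying the $e^{2\ell}(\ell+1)^5$ factors in the coefficient norm to handle the angular products. Third, and most involved, the Gevrey region: one writes $\partial_x^n(gf)_\ell = \sum_{\ell_1+\ell_2\ge \ell}\sum_{j=0}^n\binom nj \big(\partial_x^j g_{\ell_1}\partial_x^{n-j}f_{\ell_2}\big)_\ell$, bounds the $g$-factor in $L^\infty([0,x_0]\times\s^2)$ and the $f$-factor in $L^2_x L^2(\s^2)$, and then has to show the double (in $n$ and $\ell$) Gevrey-weighted sum of such products is bounded by the product of the two Gevrey norms. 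This requires combining the $(4\sigma)^{2n}$/factorial weights (for $n<\ell$) and $\sigma^{2n}/(n!^2(n+1)!^2)$ weights (for $n\ge\ell$) with the $1/j!^4$ and $e^{5(\ell+j)}$ factors in the $L^\infty$-Gevrey norm of $g$, via Stirling together with the convolution identities $\sum_j\binom nj^2 \cdots$ and the ratio estimates for $(\ell-(n+1))!^2/(\ell+n+1)!^2$ that already appear (in the discussion preceding Lemma~\ref{lm:operatorcoeffb}) in the analysis of the lower-order Gevrey norm.

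\textbf{Main obstacle.} The hard part will be the Gevrey region, specifically verifying that the combinatorial weights line up: one must show that multiplying the (generously weighted, $L^\infty$-in-$x$, exponentially-in-$\ell$ boosted) coefficient Gevrey norm of $g$ by the (ordinary $L^2$-in-$x$) Gevrey norm of $f$ dominates the Gevrey norm of the product, uniformly in $n$ and $\ell$. This is delicate precisely at the transition $n\sim\ell$ where the definition of $\|\cdot\|_{G_{\sigma,R_\infty}}$ changes form, and in the regime where the product index $\ell$ is much smaller than $\ell_1+\ell_2$ (so the angular weight must be ``moved'' entirely onto $g$). The factors $e^{5(\ell+j)}$, $(\ell+1)^5$, $(j+1)^2$, and $1/j!^4$ in the coefficient norm are not arbitrary: they are exactly the slack needed to absorb, via Stirling's formula, the binomial coefficients from Leibniz, the harmonic-index convolution, and the mismatch between $n!^2$ and $j!^2(n-j)!^2$. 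I would organize this by first proving an auxiliary lemma bounding $\|\partial_x^n(gf)_\ell\|_{L^2_xL^2(\s^2)}$ pointwise in $(n,\ell)$ by a convolution of the two norms' summands, and then summing; isolating this as a lemma keeps the Stirling bookkeeping contained. The remaining steps (the two finite-Sobolev regions and the boundary norm) are routine once the region decomposition and the Moser-type and weighted-harmonic-product inequalities are in place.
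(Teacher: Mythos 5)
Your proposal follows essentially the same route as the paper: the paper splits the estimate into two lemmas (one for the bounded-region Sobolev, microlocal-angular, and boundary pieces; one for the Gevrey piece), and in each it uses exactly your ingredients — spherical-harmonic convolution controlled via $\|Y_{\ell'm'}\|_{L^\infty(\s^2)}\lesssim(\ell'+1)^2$ and the Clebsch--Gordan support condition $|\tilde\ell-\ell'|\le\ell\le\tilde\ell+\ell'$, Leibniz with Cauchy--Schwarz on the binomial convolution, $L^\infty$-in-$x$ control of $g$ in the Gevrey region, and Stirling bookkeeping of the weight ratios (the paper's $J_1$--$J_4$) precisely at the $n\sim\ell$ transition and the $\tilde\ell\gg\ell$ regime you flag as the main obstacle. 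The factors $e^{2\ell}(\ell+1)^5$, $e^{5(\ell+j)}$, $(j+1)^2/j!^2$ and $1/j!^4$ in the coefficient norm are indeed generated exactly as you predict.
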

The proof follows from combining Lemmas \ref{lm:productBnorm} and \ref{lm:productGnorm} below, which each deal with a different part of the norm $\|g\cdot f\|_{H^{N_+}_{ \sigma,R_{\infty}}}$.
\begin{lemma}
\label{lm:productBnorm}
Let $\beta>2$. Then there exists a constant $C>0$ such that for all appropriately regular functions $f,g; [0,x_0]\times \s^2\to \C$, we can estimate:
	\begin{equation}
	\label{eq:spherenormproduct}
	\|gf\|_{B_{R_{\infty}}}^2\leq C\|f\|_{B_{R_{\infty}}}^2\left[\sum_{j=0}^{\infty}\sum_{\ell\in \N_0}e^{2\ell}(\ell+1)^5\frac{(d_0x_0)^{2j}(j+1)^2}{j!^2 }\|\partial_x^jg_{\ell}|_{x=x_0}\|^2_{L^2(\s^2)}\right].
\end{equation}
Furthermore,
\begin{multline}
\label{eq:spherenormproductw}
\|gf\|_{H^{N_+}(\Sigma\cap\{r\leq 3R_{\infty}\})}^2+\|\snabla_{\s^2}(gf)\|_{H^{N_+}(\Sigma\cap\{r\leq \frac{1}{3}R_{\infty}\})}^2+\|(-\slashed{\Delta}_{\s^2}-\mathbf{1})^{-\frac{w_p}{4}}(gf)_{\ell}\|_{H^{N_+}(\Sigma\cap\{3R_{\infty}\leq r\leq 4R_{\infty}\})}^2\\
\leq \|f\|_{H^{N_+}(\Sigma\cap\{r\leq 3R_{\infty}\})}^2\|g\|_{H^{N_++2}(\Sigma\cap\{r\leq 3R_{\infty}\})}^2+\|\snabla_{\s^2}f\|_{H^{N_+}(\Sigma\cap\{r\leq \frac{1}{3}R_{\infty}\})}^2\|\snabla_{\s^2}g\|_{H^{N_++2}(\Sigma\cap\{r\leq \frac{1}{3}R_{\infty}\})}^2\\
+\|(\mathbf{1}-\slashed{\Delta}_{\s^2})^{-\frac{w_p}{4}} f\|_{H^{N_+}(\Sigma\cap\{3R_{\infty}\leq r\leq 4R_{\infty}\})}^2 \|(\mathbf{1}-\slashed{\Delta}_{\s^2})^{\frac{w_p+5}{4}}  g\|_{H^{N_+}(\Sigma\cap\{3R_{\infty}\leq r\leq 4R_{\infty}\})}^2.
\end{multline}
\end{lemma}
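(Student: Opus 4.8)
\textbf{Proof plan for Lemma \ref{lm:productBnorm}.}

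The plan is to exploit the structure of the $B_{R_{\infty}}$ norm, which is a weighted sum over $\ell$ and $n$ of $\|\partial_x^n (gf)_\ell\|_{L^2(\s^2)}^2$ evaluated at the single point $x=x_0$, with weights $(d_0x_0)^{2n}(\max\{\ell+1,n+1\})^{-2n}$. The first step is the Leibniz rule: $\partial_x^n(gf) = \sum_{k=0}^n \binom{n}{k}\partial_x^k g\,\partial_x^{n-k}f$. Projecting onto the $\ell$-th spherical harmonic introduces a further convolution over angular frequencies, since the product of a frequency-$\ell_1$ mode of $g$ with a frequency-$\ell_2$ mode of $f$ has components at all $\ell$ with $|\ell_1-\ell_2|\le \ell \le \ell_1+\ell_2$; here one uses the standard bound that the triple product of spherical harmonics $\int_{\s^2} Y_{\ell m}\overline{Y_{\ell_1 m_1}}\overline{Y_{\ell_2 m_2}}$ is nonzero only in this range (triangle inequality), together with an $L^2$--$L^\infty$ (or $L^2$--$L^2$--$L^\infty$) Hölder split on $\s^2$ to separate the $g$ factor. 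This is where the $e^{2\ell}(\ell+1)^5$ weight on $g$ in the statement comes from: it must absorb the combinatorial loss from summing over the angular-frequency triangle and the Sobolev embedding $\|g_\ell\|_{L^\infty(\s^2)} \lesssim (\ell+1)^{?}\|g_\ell\|_{L^2(\s^2)}$ (for a single frequency mode on $\s^2$ the $L^\infty$ norm is comparable to $(\ell+1)^{1/2}$ times the $L^2$ norm, so a power like $(\ell+1)^5$ is comfortably enough after accounting for the convolution sum).

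The second, and more delicate, step is controlling the \emph{combinatorial} factor coming from the Leibniz binomial coefficients against the $n$-dependent weights. After expanding, one is led to compare
\[
\frac{(d_0x_0)^{2n}}{(\max\{\ell+1,n+1\})^{2n}}\binom{n}{k}^2
\quad\text{with}\quad
\frac{(d_0x_0)^{2(n-k)}}{(\max\{\ell_2+1,n-k+1\})^{2(n-k)}}\cdot \frac{(d_0x_0)^{2k}(k+1)^2}{k!^2}\cdot(\text{weight on }g),
\]
summed appropriately. The key inequality is $\binom{n}{k} \le \frac{n!}{k!(n-k)!}$ combined with $\frac{n!}{(n-k)!}\le n^k$ and the observation that $(\max\{\ell+1,n+1\})^{2k}$ in the denominator essentially cancels $n^{2k}$ when $n\gtrsim \ell$, while when $n\lesssim \ell$ the $e^{2\ell}$ factor on $g$ gives room; one must also track that the weight $\frac{(j+1)^2}{j!^2}(d_0x_0)^{2j}$ on $\partial_x^j g$ (with $j=k$) is exactly of Gevrey-$1$ (analytic) type, matching the analyticity of the coefficient functions established in Lemma \ref{lm:operatorcoeffb}. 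The upshot is a discrete Young-type convolution inequality $\|a * b\|_{\ell^2_{\text{wt}}} \le \|a\|_{\ell^1_{\text{wt}'}}\|b\|_{\ell^2_{\text{wt}}}$ in the pair of indices $(n,\ell)$, where the $g$-series sits in the $\ell^1$-type space (hence the extra $(j+1)^2$ and $(\ell+1)^5$ and $e^{2\ell}$ which make the sum absolutely convergent) and the $f$-series sits in the $\ell^2$-type space with the original $B_{R_{\infty}}$ weights.

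The estimate \eqref{eq:spherenormproductw} is by contrast routine: on the bounded regions $\{\varrho\le 3R_\infty\}$, $\{\varrho\le \tfrac13 R_\infty\}$ and $\{3R_\infty\le\varrho\le 4R_\infty\}$ the relevant norms are ordinary Sobolev norms $H^{N_+}$ of finite order, so the product estimate follows from the standard algebra-property argument: Leibniz expand $\nabla^{\le N_+}(gf)$, apply Hölder, and use the Sobolev embedding $H^{N_++2}\hookrightarrow L^\infty$ (with $N_++2$ derivatives on the coefficient $g$, which is why the coefficient norm $\|g\|_{H^{N_+}_{\sigma,R_\infty;\text{coeff}}}$ carries two extra derivatives). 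The only point requiring a little care is the microlocal-in-angle region $\{3R_\infty\le\varrho\le 4R_\infty\}$, where one conjugates by $(\mathbf 1-\slashed\Delta_{\s^2})^{-w_p/4}$: here one commutes this operator through the product, using that $(\mathbf 1-\slashed\Delta_{\s^2})^{-w_p/4}(gf)$ can be written as $g\cdot (\mathbf 1-\slashed\Delta_{\s^2})^{-w_p/4}f$ plus a commutator that is smoothing in the angular directions, and the commutator is absorbed by the $w_p+5$ exponent (the $+5$ over $w_p$ providing the margin for the angular-frequency convolution exactly as in the first part). The main obstacle, as flagged, is purely the first part: getting the combinatorial bookkeeping of the Leibniz coefficients against the Gevrey/analytic weights to close, and choosing the power of $(\ell+1)$ and the $e^{2\ell}$ factor on $g$ large enough to absorb the angular convolution while keeping the $g$-series summable.
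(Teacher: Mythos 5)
Your plan follows essentially the same route as the paper's proof of \eqref{eq:spherenormproduct}: Leibniz expansion in $x$, spherical-harmonic convolution controlled by the triangle-inequality support of the triple products $\la Y_{\ell'm'}Y_{\tilde\ell\tilde m},Y_{\ell m}\ra_{L^2(\s^2)}$, an $L^\infty(\s^2)$ bound on the $g$-harmonics absorbed into the $(\ell+1)^5$ weight, and a discrete Young-type convolution in $(n,\ell)$ in which the Gevrey-$1$ series for $g$ plays the $\ell^1$ role and the $e^{2\ell}$ factor absorbs the ratio of the $(\max\{\ell+1,n+1\})^{-2n}$ weights in the regime where the output frequency drops below the input frequency (the paper closes this via $(k+j+1)^{-2j}\binom{k+j}{k}^2\le j!^{-2}$ and $x^{-1}\log(1+x)\le 1$, exactly the room you point to). The one place you diverge is \eqref{eq:spherenormproductw} in the region $\{3R_{\infty}\le\varrho\le 4R_{\infty}\}$: the paper does not commute $(\mathbf{1}-\slashed{\Delta}_{\s^2})^{-w_p/4}$ through the product, but simply reruns the harmonic decomposition and compares the weights directly via $(\ell(\ell+1)+1)^{-w}\le(\tilde\ell(\tilde\ell+1)+1)^{-w}(1+\ell'(\ell'+1))^{w}$ for $\ell\ge|\tilde\ell-\ell'|$, which is what the $\frac{w_p+5}{4}$ exponent on $g$ is built to absorb; this is arguably cleaner than estimating a commutator with a fractional power of the Laplacian, though your variant should also close.
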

\begin{proof}
Consider the functions $g,f: [r_+,\infty)_{\varrho}\times \s^2_{\vartheta,\varphi*} \to \C$. We denote with $g^{(n)}=\partial_x^n g$ and $f^{(n)}=\partial_x^nf$. We can expand the product $g f$ in spherical harmonics:
\begin{equation*}
(gf)_{\ell m}(x)=\sum_{\tilde{\ell}\in \N_0}\sum_{\tilde{m}=-\tilde{\ell}}^{\tilde{\ell}}\sum_{\ell'\in \N_0}\sum_{m'=-\ell'}^{m=\ell'}g_{\ell' m'}(x)f_{\tilde{\ell}\tilde{m}}(x)\la Y_{\ell' m'}Y_{\tilde{\ell} \tilde{m}},Y_{\ell m}\ra_{L^2(\s^2)}.
\end{equation*}
Since $\la Y_{\ell' m'}Y_{\tilde{\ell} \tilde{m}},Y_{\ell m}\ra_{L^2(\s^2)}=0$ if $m\neq m'+\tilde{m}$, we can write
\begin{equation*}
	\int_{\s^2}|(gf)_{\ell}|^2\,d\upsigma=\sum_{\tilde{\ell}\in \N_0}\sum_{\tilde{m}=-\tilde{\ell}}^{\tilde{\ell}}\sum_{\ell'\in \N_0}\sum_{m'=-\ell'}^{m=\ell'}|g_{\ell' m'}(x)|^2|f_{\tilde{\ell}\tilde{m}}(x)|^2|\la Y_{\ell' m'}Y_{\tilde{\ell} \tilde{m}},Y_{\ell (m'+\tilde{m})}\ra_{L^2(\s^2)}|^2.
\end{equation*}
By applying Cauchy--Schwarz and using that spherical harmonics have unit norm on $\s^2$, we can further estimate:
\begin{multline*}
	|\la Y_{\ell' m'}Y_{\tilde{\ell} \tilde{m}},Y_{\ell (m'+\tilde{m})}\ra_{L^2(\s^2)}|^2\leq \|Y_{\ell' m'}Y_{\tilde{\ell}\tilde{m}}\|_{L^2(\s^2)}^2\|Y_{\ell (m'+\tilde{m})}\|_{L^2(\s^2)}^2\\
	\leq  \|Y_{\ell' m'}\|_{L^{\infty}(\s^2)}^2\|Y_{\tilde{\ell}\tilde{m}}\|_{L^2(\s^2)}^2\|Y_{\ell (m'+\tilde{m})}\|_{L^2(\s^2)}^2\leq \|Y_{\ell' m'}\|_{L^{\infty}(\s^2)}^2.
\end{multline*}
By a standard Sobolev inequality on $\s^2$, we can further estimate:
\begin{equation*}
	\|Y_{\ell' m'}\|_{L^{\infty}(\s^2)}^2\leq C \|Y_{\ell' m'}\|_{H^2(\s^2)}^2\leq C(\ell'+1)^4.
\end{equation*}
Furthermore, from the vanishing properties of Clebsch--Gordan coefficients (see for example \cite{sak94}[\S3.7]) $\la Y_{\ell' m'}Y_{\tilde{\ell} \tilde{m}},Y_{\ell (m'+\tilde{m})}\ra_{L^2(\s^2)}=0$ if $\ell'+\tilde{\ell}<\ell$ or $|\tilde{\ell}-\ell'|>\ell$.
Finally, observe that by Cauchy--Schwarz and $\sum_{j=1}j^{-2}=\frac{\pi^2}{6}$:
\begin{multline*}
	|(gf)_{\ell}^{(n)}|^2=\left|\sum_{k=0}^n \binom{n}{k}(g^{(n-k)}f^{(k)})_{\ell}\right|^2\leq\sum_{k=0}^n(n-k+1)^{-2}  \sum_{k=0}^n\binom{n}{k}^2 (n-k+1)^2 |(g^{(n-k)}f^{(k)})_{\ell}|^2\\
	\leq \frac{\pi^2}{6}  \sum_{k=0}^n\binom{n}{k}^2 (n-k+1)^2 |(g^{(n-k)}f^{(k)})_{\ell}|^2.
\end{multline*}
Combining the above observations, we obtain:
\begin{multline*}
	\sum_{\ell\in \N_0}  \sum_{n=0}^{\infty}\frac{(d_0x_0)^{2n}}{(\max\{\ell+1,n+1\})^{2n}}\int_{\s^2}|(gf)_{\ell}^{(n)}|^2\,d\upsigma\\
	\leq C\sum_{\ell\in \N_0} \sum_{\{\ell',\tilde{\ell}\in \N_0\,:\,|\tilde{\ell}-\ell'|\leq \ell\leq \ell+\ell'\}}  \sum_{n=0}^{\infty}\frac{(d_0x_0)^{2n}}{(\max\{\ell+1,n+1\})^{2n}}\\\
	\times\sum_{k=0}^n\binom{n}{k}^2 |f^{(k)}_{\tilde{\ell}\tilde{m}}|^2(\ell'+1)^4(n-k+1)^2|g^{(n-k)}_{\ell'm'}|^2.
\end{multline*}

We introduce the summation index $j=n-k$, we rearrange the summation and separate the cases $\tilde{\ell}\leq \ell$ and $\tilde{\ell}>\ell$ to obtain:
\begin{multline*}
	\sum_{n=0}^{\infty}\sum_{\ell\in \N_0} \frac{(d_0x_0)^{2n}}{(\max\{\ell+1,n+1\})^{2n}}\int_{\s^2}|(gf)_{\ell}^{(n)}|^2\,d\upsigma\\
	\leq C\sum_{k=0}^{\infty}\sum_{\tilde{\ell}\in \N_0}\sum_{\tilde{m}=-\tilde{\ell}}^{\tilde{\ell}}\frac{(d_0x_0)^{2k}}{(\max\{\tilde{\ell}+1,k+1\})^{2k}}|f_{\tilde{\ell}\tilde{m}}^{(k)}|^2\\
	\times\Biggl[\sum_{\ell'\in \N_0}\sum_{m'=-\ell'}^{\ell'}\sum_{\ell=|\tilde{\ell}-\ell'|}^{\tilde{\ell}+\ell'}\sum_{j=0}^{\infty}(d_0x_0)^{2j}\frac{(\max\{\tilde{\ell}+1,k+1\})^{2k}}{(\max\{\ell+1,k+j+1\})^{2(k+j)}}\binom{k+j}{k}^2\\
	\times(j+1)^2 (\ell'+1)^4|g_{\ell'm'}^{(j)}|^2\Biggr]
\end{multline*}
We will now bound the terms inside the square brackets uniformly in $k$ and $\tilde{\ell}$.

Note first of all that
\begin{equation*}
	(k+j+1)^{-2j}\binom{k+j}{k}^2=\frac{1}{j!^2}(k+j+1)^{-2j}(k+j)^2(k+j-1)^2\ldots (k+1)^2\leq \frac{1}{j!^2}.
\end{equation*}

Now suppose $\tilde{\ell}+1\leq k+1$. Then 
\begin{equation*}
	\left(\frac{\max\{\tilde{\ell}+1,k+1\}}{\max\{\ell+1,k+j+1\}}\right)^{2k}\leq \left(\frac{k+1}{k+j+1}\right)^{2k}\leq 1.
\end{equation*}
Suppose $\tilde{\ell}+1>k+1$. Then
\begin{equation*}
	\left(\frac{\max\{\tilde{\ell}+1,k+1\}}{\max\{\ell+1,k+j+1\}}\right)^{2k}\leq \left(\frac{\tilde{\ell}+1}{\max\{\ell+1,k+1\}}\right)^{2k}=e^{2k\log\left(\frac{\tilde{\ell}+1}{\max\{\ell+1,k+1\}}\right)}.
\end{equation*}
If $\tilde{\ell}\leq \ell$, then the right-hand side can be bounded by 1. Suppose therefore that $\tilde{\ell}>\ell$. We have that $\tilde{\ell}\leq \ell'+\ell$, so we can bound further:
\begin{equation*}
	e^{2k\log\left(\frac{\tilde{\ell}+1}{\max\{\ell+1,k+1\}}\right)}\leq e^{2k\log\left(1+\frac{\ell'}{k+1}\right)}\leq e^{2\ell'\frac{k+1}{\ell'}\log\left(1+\frac{\ell'}{k+1}\right)}\leq e^{2\ell'},
\end{equation*}
where we obtained the final inequality by using that $x^{-1}\log (1+x)\leq 1$ for $x\geq 0$.

We can therefore infer that:
\begin{multline*}
	\sum_{n=0}^{\infty}\sum_{\ell\in \N_0}\frac{(d_0x_0)^{2n}}{(\max\{\ell+1,n+1\})^{2n}}\int_{\s^2}|(gf)_{\ell}^{(n)}|^2\,d\upsigma\\
	\leq C\sum_{k=0}^{\infty}\sum_{\tilde{\ell}\in \N_0}\sum_{\tilde{m}=-\tilde{\ell}}^{\tilde{\ell}}\frac{(d_0x_0)^{2k}}{(\max\{\tilde{\ell}+1,k+1\})^{2k}}|f_{\tilde{\ell}\tilde{m}}^{(k)}|^2\\
	\times\Biggl[\sum_{\ell'\in \N_0}\sum_{m'=-\ell'}^{m=\ell'}\sum_{\ell=|\tilde{\ell}-\ell'|}^{\tilde{\ell}+\ell'}\sum_{j=0}^{\infty}(\ell'+1)^4e^{2 \ell'}\frac{(j+1)^2 (d_0x_0)^{2j}}{j!^2 }|g_{\ell'm'}^{(j)}|^2\Biggr]
\end{multline*}
Then \eqref{eq:spherenormproduct} follows from the observation that $\sum_{\ell=|\tilde{\ell}-\ell'|}^{\tilde{\ell}+\ell'}1=\tilde{\ell}+\ell'-|\tilde{\ell}-\ell'|\leq 2\ell'$.

The estimate \eqref{eq:spherenormproductw} follows in an analogous manner, where the above computations become simplified since the relevant norms depend only a finite number of derivatives, and we use that for $\tilde{\ell}\leq \ell$, $(\ell(\ell+1)+1)^{-w}\leq (\tilde{\ell}(\tilde{\ell}+1)+1)^{-w}$ and for $\tilde{\ell}>\ell$ (in which case $\tilde{\ell}<\ell+\ell'$):
\begin{multline}
(\ell(\ell+1)+1)^{-w}= (\tilde{\ell}(\tilde{\ell}+1)+1)^{-w}\left(\frac{\tilde{\ell}(\tilde{\ell}+1)+1}{\ell(\ell+1)+1}\right)^w\leq (\tilde{\ell}(\tilde{\ell}+1)+1)^{-w}\left(\frac{(\ell+\ell')(\ell+\ell'+1)}{\ell(\ell+1)+1}\right)^w\\
=(\tilde{\ell}(\tilde{\ell}+1)+1)^{-w}\left(1+\frac{\ell'(\ell'+1)}{\ell(\ell+1)+1}+\frac{2\ell \ell'}{\ell(\ell+1)+1}\right)^w\\
\leq (\tilde{\ell}(\tilde{\ell}+1)+1)^{-w}\left(1+\ell'(\ell'+1)\right)^w.
\end{multline}

\end{proof}

\begin{lemma}
\label{lm:productGnorm}
Ther exists a constants $C,\beta>0$ such that for all appropriately regular functions $f,g; [0,x_0]\times \s^2\to \C$, we can estimate:
\begin{equation}
\label{eq:productGnorm}
\|gf\|^2_{G_{\sigma,R_{\infty}}}\leq C\|f\|^2_{G_{\sigma,R_{\infty}}}\left[\sum_{j\in \N_0}\sum_{\ell'\in \N_0}\frac{e^{5(\ell'+j)}}{j!^4}\|\partial_x^jg_{\ell'}\|^2_{L^{\infty}([0,x_0]\times \s^2)}\right].
\end{equation}
\end{lemma}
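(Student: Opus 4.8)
The plan is to prove the product estimate \eqref{eq:productGnorm} for the $G_{\sigma,R_\infty}$ norm by the same philosophy as Lemma \ref{lm:productBnorm}: decompose $gf$ into spherical harmonics, use the triangle-type inequality coming from the Leibniz rule on $x$-derivatives, and then reduce to a uniform bound on a sum that plays the role of the bracketed factor. The essential difference from Lemma \ref{lm:productBnorm} is that the $G_{\sigma,R_\infty}$ norm is an $L^2$-in-$x$ norm with two genuinely different weight regimes — the ``low'' range $0\le n\le \max\{\ell-1,0\}$ with weight $(4\sigma)^{2n}(\ell-(n+1))!^2/(\ell+n+1)!^2$, and the ``high'' range $n\ge \ell$ with weight $(\ell+1)^3\sigma^{2n}/(n!^2(n+1)!^2)$ — and the combinatorics of the Leibniz rule has to be controlled against each of them separately. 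This is why the coefficient norm on the right-hand side is an $L^\infty$-based Gevrey norm with exponent growth $e^{5(\ell'+j)}/j!^4$: the extra $e^{5\ell'}$ handles the mismatch of angular frequencies (via Clebsch--Gordan vanishing $\langle Y_{\ell'm'}Y_{\tilde\ell\tilde m},Y_{\ell(m'+\tilde m)}\rangle=0$ unless $|\tilde\ell-\ell'|\le \ell\le \tilde\ell+\ell'$, exactly as in Lemma \ref{lm:productBnorm}), and the $1/j!^4$ together with $e^{5j}$ absorbs the binomial factors $\binom{n}{k}^2$ and the factorial weight mismatch when $j=n-k$ derivatives land on $g$.

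Concretely, first I would write $(gf)_\ell^{(n)}=\sum_{k=0}^n\binom{n}{k}(g^{(n-k)}f^{(k)})_\ell$ and, after a Cauchy--Schwarz step as in Lemma \ref{lm:productBnorm} (inserting a harmless $\sum(n-k+1)^{-2}$ or similar summable factor), expand each $(g^{(n-k)}f^{(k)})_\ell$ in spherical harmonics, bound the triple product $|\langle Y_{\ell'm'}Y_{\tilde\ell\tilde m},Y_{\ell(m'+\tilde m)}\rangle|^2\le \|Y_{\ell'm'}\|_{L^\infty(\s^2)}^2\le C(\ell'+1)^4$ exactly as before, and use the Clebsch--Gordan support restriction. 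This reduces the left side of \eqref{eq:productGnorm} to a sum over $\tilde\ell,\tilde m,k$ of $\int_0^{x_0}|f_{\tilde\ell\tilde m}^{(k)}|^2\,dx$ times a weight, multiplied by a bracketed factor that is a sum over $\ell',m',j$ and $\ell\in[|\tilde\ell-\ell'|,\tilde\ell+\ell']$ of $\|g^{(j)}_{\ell'm'}\|_{L^\infty}^2$ times a ratio of the $G$-weights at $(\ell,n=k+j)$ and at $(\tilde\ell,k)$. The content of the proof is then entirely in showing this ratio, summed over $j$ and $\ell$, is bounded by $Ce^{5(\ell'+j)}/j!^4$ uniformly in $k,\tilde\ell$. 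Here I would split into cases: (a) both $(\tilde\ell,k)$ and $(\ell,k+j)$ in the high range, (b) both in the low range, (c) mixed. In the high range the ratio is $\frac{(\ell+1)^3}{(\tilde\ell+1)^3}\cdot\frac{k!^2(k+1)!^2}{(k+j)!^2(k+j+1)!^2}\cdot\sigma^{2j}$, and $\frac{k!^2(k+1)!^2}{(k+j)!^2(k+j+1)!^2}\le \frac{1}{j!^4}$ while $\binom{k+j}{k}^2\le$ the reciprocal of this can be reabsorbed; the factor $(\ell+1)^3/(\tilde\ell+1)^3$ is $\le(1+\ell')^3$ when $\tilde\ell\le\ell$ and $\le 1$ otherwise, which is swallowed by $e^{5\ell'}$. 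In the low range one uses Stirling, $\frac{(\ell-(n+1))!^2}{(\ell+n+1)!^2}\sim$ the estimates already carried out in \S\ref{sec:ideastechniques} and \S\ref{sec:hilbertspaces} (the passage bounding $(\ell-(n+1))!^2/(\ell+(n+1))!^2$ by $(2,b)$-Gevrey quantities), combined with $|\tilde\ell-\ell'|\le\ell\le\tilde\ell+\ell'$ to compare $\ell$ with $\tilde\ell$ up to $\ell'$; the resulting geometric/factorial losses are again absorbed by $e^{5(\ell'+j)}/j!^4$. The mixed case is handled by noting the transition happens only on a band of $n$ of width $\lesssim\ell'+j$ near $n=\ell$, where one can crudely bound the ratio by the worse of the two regimes and still close with the allotted exponents.

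Finally I would combine this with Lemma \ref{lm:productBnorm}, the $L^\infty$-Sobolev embedding on $[0,x_0]$ (the $G$-norm controls arbitrarily many $x$-derivatives, so $\|g^{(j)}_{\ell'}\|_{L^\infty([0,x_0]\times\s^2)}$ is controlled by the stated coefficient norm after also commuting with $\snabla_{\s^2}$ to get the $L^\infty(\s^2)$ part via $H^2(\s^2)$), and the finite-derivative Sobolev pieces of $\|\cdot\|_{H^{N_+}_{\sigma,R_\infty}}$ handled by the classical product rule for Sobolev norms on bounded domains, to assemble Proposition \ref{prop:producttotalnorm}. The main obstacle I anticipate is purely bookkeeping-heavy rather than conceptual: carefully tracking the weight ratios across the low/high transition in the $G$-norm and verifying that the chosen exponential weights $e^{5(\ell'+j)}$ and the power $1/j!^4$ in $\|g\|_{H^{N_+}_{\sigma,R_\infty;\mathrm{coeff}}}$ are simultaneously large enough to absorb the binomial coefficients, the Stirling losses $e^{O(n)}$, the $\ell\leftrightarrow\tilde\ell$ frequency spread, and the $(\ell'+1)^4$ from the spherical-harmonic $L^\infty$ bound — i.e. checking that no case requires a stronger coefficient norm than the one stated. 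The Clebsch--Gordan support restriction is what makes all of this possible, since without it the $\ell$-sum would not be comparable to an $\ell'$-sum at fixed $\tilde\ell$.
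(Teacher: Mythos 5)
Your proposal follows essentially the same route as the paper's proof: Leibniz expansion in $x$ with a Cauchy--Schwarz summable factor, spherical harmonic decomposition using the Clebsch--Gordan support restriction and the bound $\|Y_{\ell'm'}\|_{L^\infty(\s^2)}^2\leq C(\ell'+1)^4$, followed by a case-by-case estimate of the weight ratios across the low/high ranges of the $G_{\sigma,R_\infty}$ norm (the paper's terms $J_1,\dots,J_4$), with the combinatorial and Stirling losses absorbed into $e^{5(\ell'+j)}/j!^4$. The only detail worth flagging is that in the mixed regimes the $k$-dependent losses (of size $2^{2k}$) are beaten not by the $e^{5(\ell'+j)}/j!^4$ budget but by the factor $4^{-2k}$ arising from the $(4\sigma)^{2n}$ versus $\sigma^{2n}$ discrepancy between the two weight ranges, which your sketch implicitly uses when taking the exact weight ratio.
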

\begin{proof}
We can express:
\begin{multline*}
\|gf\|^2_{G_{\sigma,R_{\infty}}}=\sum_{\ell\in \N_0}\sum_{n=0}^{\ell+l-1}\frac{(4\sigma)^{2n}(\ell-(n+1))!^2}{(\ell+n+1)!^2}\int_0^{x_0}\int_{\s^2}|\partial_x^{n}(gf)_{\ell}|^2\,d\upsigma dx\\
	+(\lf+1)^3\sum_{n=\ell+l}^{\infty}\frac{\sigma^{2n}}{n!^2(n+1)!^2}\int_0^{x_0}\int_{\s^2}|\partial_x^{n}(gf)_{\ell}|^2\,d\upsigma dx.
\end{multline*}

We then proceed as in the proof of Lemma \ref{lm:productBnorm} by decomposing $g$ and $f$ into spherical harmonics: $g=\sum_{\ell'\in \N_0} \sum_{m=-\ell'}^{\ell'} g_{\ell'm'}Y_{\ell' m'}$, $f=\sum_{\tilde{\ell}\in \N_0} \sum_{\tilde{m}=-\tilde{\ell}}^{\tilde{\ell}} f_{\tilde{\ell} \tilde{m}}Y_{\tilde{\ell} \tilde{m}}$ and using that
\begin{equation*}
|(gf)_{\ell}^{(n)}|^2\leq \frac{\pi^2}{6}  \sum_{k=0}^n\binom{n}{k}^2 (n-k+1)^2 |(g^{(n-k)}f^{(k)})_{\ell}|^2.
\end{equation*}

We then obtain:
\begin{multline*}
\|gf\|^2_{G_{\sigma,R_{\infty}}}\leq C \sum_{\ell\in \N_0}\sum_{\{\ell',\tilde{\ell}\in \N_0\,|\,|\tilde{\ell}-\ell'|\leq \ell\leq \ell+\ell'\}} \sum_{\tilde{m}=-\tilde{\ell}}^{\tilde{\ell}}  \sum_{m=-\ell'}^{\ell'} \sum_{n=0}^{\ell-1}\frac{(4\sigma)^{2n}(\ell-(n+1))!^2}{(\ell+n+1)!^2} \\
\times\int_0^{x_0}|f_{\tilde{\ell}\tilde{m}}^{(k)}|^2\binom{n}{k}^2(n-k+1)^2\|Y_{\ell' m'}\|_{L^{\infty}(\s^2)}^2|g_{\ell'm'}^{(n-k)}|^2\,d\upsigma dx\\
	+\left(\frac{4\sigma }{d_0x_0}\right)^{-2\ell}(\ell+1)^3\sum_{n=\ell}^{\infty}\frac{\sigma^{2n}}{n!^2(n+1)!^2}\int_0^{x_0}\int_{\s^2}|f_{\tilde{\ell}\tilde{m}}^{(k)}|^2\binom{n}{k}^2(n-k+1)^2\|Y_{\ell' m'}\|_{L^{\infty}(\s^2)}^2|g_{\ell'm'}^{(n-k)}|^2\,d\upsigma dx.
\end{multline*}

By introducing the summation index $j=n-k$ and taking $\sup_{x\in[0,x_0]}|g^{(n-k)}_{\ell'm'}|^2(x)$ outside of the $x$-integral, we can rewrite the right-hand side above to obtain:
\begin{multline*}
\|gf\|^2_{G_{\sigma,R_{\infty}}}\leq C \sum_{\tilde{\ell}\in \N_0}\sum_{\tilde{m}=-\tilde{\ell}}^{\tilde{\ell}}\sum_{k=0}^{\tilde{\ell}-1}\frac{(4\sigma)^{2k}(\tilde{\ell}-(k+1))!^2}{(\tilde{\ell}+k+1)!^2} \\
\times  \int_0^{x_0}|f^{(k)}_{\tilde{\ell}\tilde{m}}|^2\,dx   \\
\times\Biggl[  \sum_{\ell'\in \N_0}\sum_{\ell=|\tilde{\ell}-\ell'|}^{\tilde{\ell}+\ell'}\sum_{m=-\ell'}^{\ell'} \Biggl(\sum_{j=0}^{\ell-1-k}(4\sigma)^{2j} \underbrace{\frac{(\tilde{\ell}+k+1)!^2({\ell}-(k+j+1))!^2}{({\ell}+k+j+1)!^2(\tilde{\ell}-(k+1))!^2}\frac{(k+j)!^2}{k!^2j!^2}}_{=:J_1}(j+1)^2\|g^{(j)}_{\ell' m'}Y_{\ell'm'}\|_{L^{\infty}}^2\\
+\sum_{j=\ell-k}^{\infty}4^{-2k}(4\sigma)^{2j}(\ell+1)^3 \underbrace{\frac{(\tilde{\ell}+k+1)!^2}{(\tilde{\ell}-(k+1))!^2(k+j+1)!^2(k+j)!^2}\frac{(k+j)!^2}{k!^2j!^2}}_{=:J_2}(j+1)^2\|g^{(j)}_{\ell' m'}Y_{\ell'm'}\|_{L^{\infty}}^2\Biggr)\Biggr]\\
+\sum_{k=\tilde{\ell}+l}^{\infty}\frac{\sigma^{2k}(\tilde{\ell}+1)^3}{(k+1)!^2k!^2}\int_0^{x_0}|f^{(k)}_{\tilde{\ell}\tilde{m}}|^2\,dx\Biggl[  \sum_{\ell'\in \N_0}\sum_{\ell=|\tilde{\ell}-\ell'|}^{\tilde{\ell}+\ell'}\sum_{m=-\ell'}^{\ell'} \Biggl(\sum_{j=0}^{\min\{\ell-1-k,0\}}4^{2k}(4\sigma)^{2j}(\tilde{\ell}+1)^{-3}   \\
\times \underbrace{\frac{(k+1)!^2k!^2({\ell}-(k+j+1))!^2}{({\ell}+k+j+1)!^2}\frac{(k+j)!^2}{k!^2j!^2}}_{=:J_3}(j+1)^2\|g^{(j)}_{\ell' m'}Y_{\ell'm'}\|_{L^{\infty}}^2\\
+\sum_{j=\max\{\ell+l-k,0\}}^{\infty}\sigma^{j}\left(\frac{\ell+1}{\tilde{\ell}+1}\right)^3 \underbrace{\frac{(k+1)!^2k!^2}{(k+j+1)!^2(k+j)!^2}\frac{(k+j)!^2}{k!^2j!^2}}_{=:J_4}(j+1)^2\|g^{(j)}_{\ell' m'}Y_{\ell'm'}\|_{L^{\infty}}^2\Biggr)\Biggr].
\end{multline*}
We will simplify the terms inside square brackets.
We will first estimate $J_1$. We have that:
\begin{equation*}
J_1=\frac{(\tilde{\ell}+k+1)^2(\tilde{\ell}+k)^2\ldots (\tilde{\ell}-k)^2}{(\ell+k+j+1)^2\ldots (\ell+k+j)^2\ldots (\ell-k-j)^2}\frac{(k+j)!^2}{k!^2j!^2}
\end{equation*}
Suppose that $\tilde{\ell}\leq \ell$. Then, using that $\ell\geq k+j+1$, we estimate
\begin{multline*}
J_1\leq \frac{1}{(\ell+k+j+1)^2\ldots (\ell+k+1)^2(\ell-k-1)^2 \ldots (\ell-k-j)^2}\frac{(k+j)!^2}{k!^2j!^2}\\
\leq \frac{(k+j)^2(k+j-1)^2\ldots (k+1)^2}{(2k+j+2+j)^2\ldots (2k+j+2)^2}\frac{1}{j!^4}\leq \frac{1}{j!^4}.
\end{multline*}
Now suppose that $\tilde{\ell}\geq \ell+1$. Then we estimate using that $k+j\leq \ell-1$
\begin{multline*}
J_1\leq \frac{(\tilde{\ell}+k+1)^2\ldots (\tilde{\ell}-k)^2}{(\ell+k+1)^2\ldots (\ell-k)^2}\\
\times \frac{1}{(\ell+k+j+1)^2\ldots (\ell+k+2)^2(\ell-k-1)^2\ldots (\ell-k-j)^2}\frac{(k+j)!^2}{k!^2j!^2}\\
\leq  \frac{(\tilde{\ell}+k+1)^2\ldots(\ell+k+2)^2}{(\tilde{\ell}-k-1)^2\ldots (\ell-k)^2}\frac{1}{j!^4}\\
\leq\frac{(2\ell+1+(\tilde{\ell}-\ell-1))^2\ldots(2\ell+1)^2}{(\tilde{\ell}-\ell)!^2}\frac{1}{j!^4}\\
\leq Ce^{2(\tilde{\ell}-\ell)}(\tilde{\ell}-\ell)^{-2(\tilde{\ell}-\ell-1)-3}(2\ell+l+1+(\tilde{\ell}-\ell-1))^{2(\tilde{\ell}-\ell-1)}\frac{1}{j!^4}\\
\leq Ce^{2(\tilde{\ell}-\ell)}(\tilde{\ell}-\ell)^{-3}\left(1+\frac{2\ell+1}{\tilde{\ell}-\ell}\right)^{2(\tilde{\ell}-\ell-1)}\frac{1}{j!^4}
\end{multline*}
Suppose now that $\tilde{\ell}-\ell\geq \frac{1}{2}(2\ell+1)$. Then we can further estimate:
\begin{equation*}
J_1\leq Ce^{2(1+\log\left(\frac{3}{2}\right))(\tilde{\ell}-\ell)}(\tilde{\ell}-\ell)^{-3}\frac{1}{j!^4}\leq Ce^{2(1+\log\left(\frac{3}{2}\right))\ell'}\ell^{-3}\frac{1}{j!^4}.
\end{equation*}
If instead $\tilde{\ell}-\ell< \frac{1}{2}(2\ell+1)$, then $\tilde{\ell}\leq 2\ell+\frac{1}{2}$ and we estimate:
\begin{equation*}
J_1\leq  \frac{(\tilde{\ell}+k+1)^2\ldots (\tilde{\ell}-k)^2}{(\ell+k+1)^2\ldots (\ell-k)^2} \frac{1}{j!^4}\leq 4^{2k+1} \frac{1}{j!^4}.
\end{equation*}
We now turn to $J_2$.  

Note first that $k\geq \ell-j\geq \tilde{\ell}-(\ell'+j)$, so $\tilde{\ell}\leq k+(\ell'+j)$. Hence,
\begin{equation*}
\frac{(\tilde{\ell}+k+1)!^2}{(\tilde{\ell}-(k+1))!^2(k+1)!^4}\leq \frac{(\ell'+j-1+2(k+1))!^2}{(\ell'+j-1)!^2(k+1)!^4}
\end{equation*}
Via Stirling's approximation, we can further estimate
\begin{multline*}
 \frac{(\ell'+j-1+2(k+1))!^2}{(\ell'+j-1)!^2(k+1)!^4}\leq C \left(1+\frac{2(k+1)}{\ell'+j-1}\right)^{2(\ell'+j-1)+1}\left(2+\frac{\ell'+j-1}{k+1}\right)^{4(k+1)}\\
 \leq Ce^{2(k+1)\frac{\ell'+j-1}{2(k+1)}\log\left(1+\frac{2(k+1)}{\ell'+j-1}\right)}e^{4(\ell'+j-1) \frac{k+1}{\ell'+j-1} \log\left(2+\frac{\ell'+j-1}{k+1}\right)}\\
 \leq Ce^{2(k+1)\frac{\ell'+j-1}{2(k+1)}\log\left(1+\frac{2(k+1)}{\ell'+j-1}\right)}e^{4\log\left(1+\ell'+j\right)},
\end{multline*}
where we obtained the final inequality by using that $x^{-1}\log(2+x)$ is strictly decreasing. Suppose that $2(k+1)\geq \ell'+j-1$. Then, by the decreasing property of $x^{-1}\log(1+x)$, we have that
\begin{equation*}
e^{2(k+1)\frac{\ell'+j-1}{2(k+1)}\log\left(1+\frac{2(k+1)}{\ell'+j-1}\right)}\leq e^{2\log(2)(k+1)}\leq 2^{2(k+1)}.
\end{equation*}
Suppose now that $2(k+1)< \ell'+j-1$. Then we can estimate instead:
\begin{equation*}
e^{2(k+1)\frac{\ell'+j-1}{2(k+1)}\log\left(1+\frac{2(k+1)}{\ell'+j-1}\right)}= \left(1+\frac{2(k+1)}{\ell'+j-1}\right)^{2(\ell'+j-1)+1}\leq 2^{2(\ell'+j-1)+1}.
\end{equation*}
Hence, there exists constants $C,\beta>0$ such that
\begin{equation*}
\frac{(\tilde{\ell}+k+1)!^2}{(\tilde{\ell}-(k+1))!^2(k+1)!^4}\leq C2^{2\ell'+2j}2^{2k}.
\end{equation*}

From the above, we can therefore estimate $J_2$ as follows: 
\begin{equation*}
4^{-2k}J_2\leq 2^{-2k}2^{2\ell'+2j}\frac{(k+1)^2(k+1)!^2}{(k+j+1)!^2j!^2}\leq C 2^{-2k}k^2\frac{1}{j!^4}
\end{equation*}
Note also that by $\ell\leq k+j$
\begin{equation*}
(\ell+1)^3\leq (k+j+1)^3\leq (k+1)^3 \left(1+\frac{j}{k+1}\right)^3\leq (k+1)^3 (j+1)^3.
\end{equation*}

Consider now $J_3$. Note that by the Stirling approximation and the fact that $\ell\geq k+j$
\begin{equation*}
\frac{(\ell-(k+j+1))!^2(k+j+1)!^4}{(\ell+(k+j+1))!^2}\leq \frac{(k+j+1)!^4}{(2(k+j+1))!^2}\leq C 2^{-4(k+j+1)}(k+j+1),
\end{equation*}
so
\begin{equation*}
J_3\leq C 4^{-2(k+j)}\frac{(k+1)!^2}{(k+j+1)!^2(k+j+1)}\frac{1}{j!^2}\leq C 4^{-2(k+j)} (k+j+1)^{-1}\frac{1}{j!^4}.
\end{equation*}

We consider $J_4$. Note that
\begin{equation*}
J_4=\frac{(k+1)!^2}{(k+j+1)!^2}\frac{1}{j!^2}=\frac{1}{(k+j)^2\ldots (k+2)!^2}\frac{1}{j!^2}\leq \frac{1}{j!^4}.
\end{equation*}
Furthermore,
\begin{equation*}
\left(\frac{\ell+1}{\tilde{\ell}+1}\right)^3=\left(1+ \frac{\ell-\tilde{\ell}}{\tilde{\ell}+1}\right)^3\leq C(1+\ell')^3.
\end{equation*}
The inequality \eqref{eq:productGnorm} follows by combining the above estimates.
\end{proof}

\bibliographystyle{alpha}

\newcommand{\etalchar}[1]{$^{#1}$}

%\bibliography{../bibliography}
\end{document}